\providecommand{\tabularnewline}{\\}
\numberwithin{equation}{section}
\numberwithin{figure}{section}
\theoremstyle{plain}
\newtheorem{thm}{\protect\theoremname}[section]
  \theoremstyle{plain}
  \newtheorem{cor}[thm]{\protect\corollaryname}
  \theoremstyle{remark}
  \newtheorem{rem}[thm]{\protect\remarkname}
  \theoremstyle{definition}
  \newtheorem{defn}[thm]{\protect\definitionname}
  \theoremstyle{plain}
  \newtheorem{lem}[thm]{\protect\lemmaname}
  \theoremstyle{plain}
  \newtheorem{prop}[thm]{\protect\propositionname}
  \theoremstyle{definition}
  \newtheorem{example}[thm]{\protect\examplename}
\DeclareSymbolFont{extraup}{U}{zavm}{m}{n}
\DeclareMathSymbol{\varheart}{\mathalpha}{extraup}{86}
\DeclareMathSymbol{\vardiamond}{\mathalpha}{extraup}{87} 
\renewcommand{\textendash}{--}
\renewcommand{\textemdash}{---}
  \providecommand{\corollaryname}{Corollary}
  \providecommand{\definitionname}{Definition}
  \providecommand{\examplename}{Example}
  \providecommand{\lemmaname}{Lemma}
  \providecommand{\propositionname}{Proposition}
  \providecommand{\remarkname}{Remark}
\providecommand{\theoremname}{Theorem}
\begin{document}
\global\long\def\C{\mathbb{C}}
\global\long\def\Cd{\C^{\delta}}
\global\long\def\Cprim{\C^{\delta,\circ}}
\global\long\def\Cdual{\C^{\delta,\bullet}}
\global\long\def\Od{\Omega^{\delta}}
\global\long\def\Oprim{\Omega^{\delta,\circ}}
\global\long\def\Odual{\Omega^{\delta,\bullet}}

\global\long\def\en{\mathcal{\varepsilon}}
\global\long\def\wone{\widehat{w}}
\global\long\def\wtwo{w}
\global\long\def\wo{\wone}
 \global\long\def\wt{\wtwo}
\global\long\def\cZ{\mathcal{Z}}

\global\long\def\tfixed{wired}
\global\long\def\tfree{free}
\global\long\def\tplus{plus}
\global\long\def\tminus{minus}

\global\long\def\fixed{\{\mathtt{\tfixed}\}}
\global\long\def\free{\{\mathtt{\tfree}\}}
\global\long\def\plus{\{\mathtt{\tplus}\} }
\global\long\def\minus{\left\{  \mathtt{\tminus}\right\}  }

\global\long\def\cvr{\mathrm{\varpi}}
\global\long\def\Ocvr{\Od_{\cvr}}
\global\long\def\Ccvr{\Cd_{\cvr}}
\global\long\def\double#1{\cvr(#1)}

\global\long\def\Cgr{\mathcal{C}}
\global\long\def\sCgr{\mathit{c}}

\global\long\def\Pf{\mathrm{Pf}}

\global\long\def\vv{v_{1},\dots,v_{n}}
\global\long\def\uu{u_{1},\dots,u_{m}}
\global\long\def\svv{\sigma_{v_{1}}\ldots\sigma_{v_{n}}}
\global\long\def\muu{\mu_{u_{1}}\dots\mu_{u_{m}}}

\global\long\def\Cutz#1#2{\Cgr_{\{#2\}}(#1)}

\global\long\def\offdiag#1{\mathcal{\mathring{C}}\vphantom{\mathcal{C}}{}_{\{\mathrm{diag}\}}^{#1}(\Ocvr)}
\global\long\def\withdiag#1{\overline{\mathcal{C}}\vphantom{\mathcal{C}}{}_{\{\mathrm{diag}\}}^{#1}(\Ocvr)}

\global\long\def\pesm{\psi^{[\eta]}\!,\en,\mu,\sigma}
\global\long\def\bcdd{\mathcal{\mathcal{B}}_{\mathrm{mono}}^{\delta}}
\global\long\def\sfix#1{\sigma_{\mathrm{fix}}^{#1}}

\global\long\def\any{\diamond}
\global\long\def\anyother{\triangleright}

\global\long\def\Opunc{\Omega^{\boxcircle}}
\global\long\def\Onopunc{\Omega^{\boxempty}}
\global\long\def\Oother{\Omega'}
\global\long\def\cvrother{\cvr'}
\global\long\def\Sother{S'}
\global\long\def\etaother{\eta'}

\global\long\def\const{\mathrm{const}}

\global\long\def\normLoc#1{\beta_{#1}^{\delta}}

\global\long\def\P{\mathsf{\mathbb{P}}}
 \global\long\def\E{\mathsf{\mathbb{E}}}
 \global\long\def\sF{\mathcal{F}}
 \global\long\def\ind{\mathbb{I}}

\global\long\def\R{\mathbb{R}}
 \global\long\def\Z{\mathbb{Z}}
 \global\long\def\N{\mathbb{N}}
 \global\long\def\Q{\mathbb{Q}}

\global\long\def\C{\mathbb{C}}
 \global\long\def\Rsphere{\overline{\C}}
 \global\long\def\re{\Re\mathfrak{e}}
 \global\long\def\im{\Im\mathfrak{m}}
 \global\long\def\arg{\mathrm{arg}}
 \global\long\def\i{\mathfrak{i}}
\global\long\def\eps{\varepsilon}
\global\long\def\lamb{\lambda}
\global\long\def\lambb{\bar{\lambda}}

\global\long\def\D{\mathbb{D}}
 \global\long\def\H{\mathbb{H}}

\global\long\def\dist{\mathrm{dist}}
 \global\long\def\reg{\mathrm{reg}}

\global\long\def\half{\frac{1}{2}}
 \global\long\def\sgn{\mathrm{sgn}}

\global\long\def\bdry{\partial}
 \global\long\def\cl#1{\overline{#1}}

\global\long\def\diam{\mathrm{diam}}
\global\long\def\corr#1{\overline{#1}}
\global\long\def\Corr#1#2{\E_{#1}(#2)}

\global\long\def\corr#1#2#3{\langle#1\rangle_{#2,#3}}
\global\long\def\pa{\partial}
\global\long\def\tto#1{\stackrel{#1}{\longrightarrow}}

\global\long\def\res{\text{res}}

\global\long\def\u{u}
\global\long\def\v{v}
 \global\long\def\z{z}
\global\long\def\mod{\;\mathrm{mod\;}}
\global\long\def\wind{\mathrm{w}}
\global\long\def\vz{z^{\bullet}}
\global\long\def\fz{z^{\circ}}
\global\long\def\CF{\mathfrak{C}}
\global\long\def\RPF{\text{I}}

\global\long\def\FFS#1#2#3{F_{#2}^{#3}(#1)}
\global\long\def\ds#1{\eta_{#1}}
\global\long\def\dbar{\overline{\partial}}
\global\long\def\dual#1{\left(#1\right)^{*}}

\global\long\def\bcond{\mathcal{B}}
\global\long\def\bcd{\mathcal{B}^{\delta}}
\global\long\def\Eod{\Od_{+}}
\global\long\def\lapv{\Delta^{\circ}}
\global\long\def\lapf{\Delta^{\bullet}}
\global\long\def\GammaR{\Gamma_{\R}}
\global\long\def\GammaiR{\Gamma_{i\R}}

\global\long\def\ccor#1{\langle#1\rangle}
\global\long\def\bar#1{\overline{#1}}
\global\long\def\anypsi{\psi^{*}}
\global\long\def\Op{\mathcal{O}}
\global\long\def\crossing{c}
\global\long\def\Fdual{F^{\mathrm{dual}}}

\global\long\def\zz{z_{1},z_{2}}
\newcommandx\norm[1][usedefault, addprefix=\global, 1=]{n_{#1}}
\newcommandx\tang[1][usedefault, addprefix=\global, 1=]{\tau_{#1}}
\global\long\def\Ocvrc{\Omega_{\cvr}}
\global\long\def\crad{\text{crad}}
\global\long\def\feta{f^{[\eta]}}
\global\long\def\T{\mathbb{T}}
\global\long\def\reg{\sharp}
\global\long\def\regg{*}
\global\long\def\coefA{\mathcal{A}}

\global\long\def\pbar#1{#1^{\star}}
\global\long\def\fdag{f^{\star}}
\global\long\def\CorrO#1{\langle#1\rangle_{\Omega}}
\global\long\def\formL{\mathcal{L}}
\global\long\def\appe{\approx_{\eps}}
\global\long\def\jayhat{\hat{j}}
\global\long\def\Csigma{C_{\sigma}}
\global\long\def\Ceps{C_{\eps}}
\global\long\def\Cpsi{C_{\psi}}
\global\long\def\Cmu{C_{\mu}}
\global\long\def\Cmu{C_{\mu}}
\global\long\def\umax{u_{\text{max}}}
\global\long\def\psistar#1{\psi^{\star}{}_{\hskip-5pt #1}}
\global\long\def\sqr#1{(#1)^{\frac{1}{2}}}

\global\long\def\scvr{\sigma_{\cvr}}
\global\long\def\Ob{\mathcal{O}}
 \global\long\def\Obs#1{\mathcal{O}\left[#1\right]}
\global\long\def\wcgr{\Cgr^{\diamondsuit}}
\global\long\def\bcgr{\Cgr^{\vardiamond}}
\global\long\def\dzmone#1#2{P_{#2}\left(#1\right)}
\global\long\def\dmsqrt#1#2{Q_{#2}\left(#1\right)}
\global\long\def\dsqrt#1#2{R_{#2}\left(#1\right)}
 \global\long\def\proj#1#2{\mathrm{Pr}_{#1}\left(#2\right)}
\global\long\def\projj#1#2{#1\re\left(\bar{#1}#2\right)}
\global\long\def\placket{\mathcal{S}}
\global\long\def\sflat{s_{\flat}}
\global\long\def\rother#1#2{\tilde{R}_{#1}(#2)}

\global\long\def\ssharp{c^{\sharp}}
\global\long\def\sflat{c^{\flat}}
\global\long\def\sany{c^{\diamond}}
\global\long\def\Ann{\mathbb{A}}
\global\long\def\spIn{\sigma_{\mathrm{in}}}
\global\long\def\spOut{\sigma_{\mathrm{out}}}
\global\long\def\S{\text{\ensuremath{\mathbb{S}}}}
\global\long\def\FK{\text{\text{FK}}}
\global\long\def\SOd{\text{\ensuremath{\hat{\Omega}^{\delta}}}}
\global\long\def\ArcHat{\text{\ensuremath{\hat{\gamma}^{\delta}}}}
\global\long\def\CCHat{\text{\ensuremath{\hat{\gamma}_{\text{in}}}}}
\global\long\def\SOmega{\text{\ensuremath{\hat{\Omega}}}}
\global\long\def\AOC{\text{\ensuremath{\gamma_{1}}}}
\global\long\def\ANC{\text{\ensuremath{\gamma_{2}}}}
\global\long\def\Neigh{\text{\ensuremath{\tilde{\Omega}}}}
\global\long\def\ArcNeig{\text{\ensuremath{\tilde{\gamma}}}}
\global\long\def\CCNeig{\text{\ensuremath{\tilde{\gamma}_{\text{in}}}}}
\global\long\def\nextc#1{#1_{\text{next}}}
\global\long\def\CCHatd{\text{\ensuremath{\hat{\gamma}_{\text{in}}^{\delta}}}}
\global\long\def\ProbeF{F_{\text{Ref}}^{\delta}}
\global\long\def\ProbeFc{f_{\text{Ref}}}
\global\long\def\ProbeFct{\tilde{f}_{\text{FK}}}
\global\long\def\NormC{C}
\global\long\def\Square{Q}
\global\long\def\Side#1{l_{#1}}
\global\long\def\normAny{\beta_{\delta}}
\global\long\def\normPf#1{\beta_{\delta}^{(#1)}}
\global\long\def\intFsquare#1#2{(\delta)\int^{#2}\im\left(#1dz\right)}
\global\long\def\intFQ#1#2{(\delta)\int_{#2}\im\left(#1dz\right)}
\global\long\def\eventA{A}
\global\long\def\eventAh{\hat{A}}
\global\long\def\eventCut{\mathcal{E}}
\global\long\def\disSign{\mathfrak{S}}
\global\long\def\constCC{K}

\title{Correlations of primary fields in the critical Ising model}

\author{D. Chelkak, C. Hongler, and K. Izyurov}
\begin{abstract}
We prove convergence of renormalized correlations of primary fields
\textendash{} i. e., spins, disorders, fermions and energy densities
\textendash{} in the scaling limit of the critical Ising model in
arbitrary finitely connected domains, with fixed (plus or minus) or
free boundary conditions, or mixture thereof. We describe the limits
of correlations in terms of solutions of Riemann boundary value problems,
and prove their conformal covariance. Moreover, we prove fusion rules,
or operator product expansions, which describe asymptotics of the
scaling limits of the correlations as some of the points collide together.
We give explicit formulae for correlations in the case of simply-connected
and doubly-connected domains. Our presentation is self-contained,
and the proofs are simplified as compared to the previous work where
particular cases are treated.
\end{abstract}

\thanks{K. I. is partially supported by Academy of Finland via academy project
``Critical phenomena in dimension 2'' and CoE ``Randomness and
structure''. C. H. is supported by ERC Starting grant ``CONSTAMIS''. }

\maketitle
\tableofcontents{}

\section{Introduction}

\allowdisplaybreaks

\subsection{Background}

The two-dimensional Ising model plays a very special role in statistical
mechanics and in mathematical physics in general. It was introduced
almost a century ago by W. Lenz in an attempt to understand one of
the most basic properties of metals, namely, the transition from ferromagnetic
to paramagnetic behavior above certain critical temperature, known
as Curie point. Being the first successful model of a phase transition,
it has become a central object of study in statistical mechanics,
thanks to the following features:
\begin{itemize}
\item it possesses a great deal of exact solvability, in the sense that
many relevant quantities can be computed explicitly. Among such quantities
are the value of the critical temperature for various lattices, free
energy per lattice site, spontaneous magnetization in sub-critical
regime in the full plane, two-point correlation function;
\item on the physics side, not only it turns out to be an adequate model
for the Curie point phase transition, it also has features representative
enough for second order phase transitions in general;
\item it has an extremely reach mathematical structure, and served as a
motivation for important advances in combinatorics, probability, algebra
and representations theory, analysis, integrable systems, theory of
isomonodromic deformations of linear differential equations, and other
fields of mathematics.
\end{itemize}
Indeed, the Ising model is the simplest and the ``most solvable''
of many similar models of planar statistical mechanics. It has the
simplest possible state space (each lattice site has only two possible
states, $\pm1$), the simplest possible interaction (nearest-neighbor
interactions only), and the probability measure is given by the Gibbs-Boltzmann
distribution. 

On the physics side, statistical mechanics has seen major breakthrough
about thirty five years ago, with the development of Conformal Field
Theory (CFT) of A.~Belavin, A. Polyakov and A. Zamolodchikov \cite{BPZ,Yellow_book}.
The starting point of the theory was the postulate that at criticality,
many lattice models of 2D statistical mechanics have conformally invariant
scaling limits. To any such limit, a representation of an infinite-dimensional
Lie algebra, called the Virasoro algebra, was associated. Representation
theory then allowed one to classify possible conformally invariant
theories, and identify theories corresponding to particular lattice
models. Moreover, conformal field theories turned out to be exactly
solvable in quite a strong sense: it was possible to compute scaling
exponents and correlation functions. The impact of Conformal Field
Theory on both physics and mathematics is immense and by no means
limited by the applications to statistical mechanics. 

The critical scaling limit of the Ising model was one of the prime
examples in the new theory as well, being described by a minimal CFT
of central charge $\frac{1}{2}$. Already in \cite{BPZ}, as an example
of an application of the new theory, the four-point correlation function
of the critical Ising model in the full plane was computed. Thus,
the conformal covariance hypothesis leads to a much more detailed
picture even in such an ``exactly solvable'' case as that of the
Ising model.

On the mathematical level, understanding of the conformal invariance
of scaling limits in statistical mechanics requires working in general
domains, which was not tackled until early 2000's. In 2000, Schramm
\cite{Schramm_SLE} proposed the celebrated Stochastic Loewner evolution
(SLE) as a way to describe lattice models in terms of random geometric
shapes. More precisely, the SLE is a conformally invariant random
process on a planar domain whose trace is a simple, self-touching,
or space-filling curve. It was conjectured, and in a number of cases
proven, that these processes are scaling limits of random curves,
or interfaces, arising in lattice models. This is the case for the
Ising model, where SLE$_{3}$ and SLE$_{\frac{16}{3}}$ appear in
the scaling limit. In fact, Schramm's SLE describes a single random
curve, but it was later used to define more elaborate objects, Conformal
Loop Ensembles (CLE), that describe the scaling limits of all the
interfaces in the model at once, and thus are, in some sense, ``full
scaling limits''. In the case of the Ising model, CLE$_{3}$ is the
limit in distribution of the collection of all interfaces in the low-temperature
expansion of the model, that is, of all dual edges separating the
vertices carrying $+1$ spin from the vertices carrying $-1$ spin
\cite{BenoistHongler}.

About the same time, R. Kenyon, S. Smirnov and others put forth the
discrete complex analysis as a powerful tool to actually prove existence
and conformal invariance of the scaling limits. This led, in particular,
to proofs of conformal invariance of height function in the dimer
model \cite{Kenyon_conf_inv}, conformal invariance of critical percolation
\cite{Smirnov_percolation}, and of the Ising model \cite{Smirnov_Towards,Smirnov_Ising,ChelkakSmirnov2};
in the latter two cases, convergence of interfaces to SLE was established
\cite{Camia_Newman_Percolation,ChelkakSmirnov_et_al}. 

In the past decade, a number of results was obtained in the direction
of rigorously establishing the predictions of CFT regarding the scaling
limits of lattice fields. In \cite{hongler_thesis}, the case of energy
correlations with locally monochromatic boundary conditions was treated.
In \cite{ChelkakHonglerIzyurov}, the scaling limits of spin correlations
in simply-connected domains with plus boundary conditions were computed;
these results were later extended to a number of other boundary conditions
\cite{ChelkakIzyurov,IzyurovFree}. In \cite{gheissari2019ising},
probabilities to find general ``local patterns'' of spins were considered.
In a number of works \cite{HonglerViklundKytolaCFT,ameen2020slit,ChelkakGlazmanSmirnov},
various aspects of the Ising CFT were elucidated on the lattice level. 

In the present paper, we complement this program by proving a general
results concerning scaling limits of all possible correlations of
\emph{primary fields} in the Ising model: \emph{spins, energies, fermions
and disorders}. Primary fields are principal building blocks of Conformal
fields theories, and their correlations possess the simplest possible
conformal covariance rules. More general correlations can be treated
by combining our results with \cite{HonglerViklundKytolaCFT}.

Our main results, Theorems \ref{thm: intro_2} and \ref{thm: Intro_3},
establish that a general correlation of the four fields listed above,
when renormalized by a suitable power of the mesh size, has a conformally
covariant scaling limit. The scaling limit, the ``continuous correlation
function'', is described in terms of solutions to special boundary
value problems for analytic functions; large portion of our paper
is devoted to systematic description of the ``continuous correlation
functions'' arising in the scaling limit. In the end of the day,
we provide an algorithmic procedure that allows to obtain completely
explicit results for \emph{any} such correlation in a simply connected
domain (with boundary condition given by any number of $+$, $-$,
or free boundary arcs), and we obtain new exact results in doubly
connected domains. Since correlations in the Ising model with suitable
boundary conditions are SLE martingales, our convergence theorems
yield new results about convergence of discrete interfaces to SLE
curves.

As another contribution towards establishing rigorously the CFT structure,
we prove \emph{fusion rules} for continuous correlation functions.
This is a set of asymptotic formulae describing behavior of correlation
functions as marked points merge together. Our proof works in arbitrary
multiply connected domains and hence does not use explicit formulae
for the correlations; instead, we are using analytic characterization
of the continuous correlation functions and uniqueness properties
of the boundary value problems. 

The proofs, as in the previous previous work \cite{hongler_thesis,HonglerSmirnov,ChelkakHonglerIzyurov}
which treated particular cases, relies on systematically leveraging
the discrete holomorphicity properties of fermionic observables. Our
exposition, however, is self-contained, and contains many simplifications
compared to those papers. First, we simplify the combinatorial part
by avoiding the use of contour representations altogether, instead
defining fermionic operators by the order-disorder formalism on Kadanoff
and Ceva \cite{kadanoff1971determination}. Second, there are several
simplifications in the discrete analysis part. The main of those comes
from the companion paper \cite{ChelkakIzyurovMahfouf} devoted to
the generalization of the analysis on the square grid to Baxter's
$Z$-invariant model on isoradial graphs: the Cauchy-type formula
for spinor observables given in Lemma \ref{lem: Cauchy_spinors},
that allows us to derive the necessary results for convergence of
discrete logarithmic derivatives of spin correlations in a very simple
way. Thus, we only use two explicitly constructed full plane discrete
analytic functions, the analogs of $z^{-1}$ and $z^{-\frac{1}{2}}$,
and we only need the interplay between their asymptotic expansions
and their values near the singularities. We follow the methods of
\cite{Dubedat} to construct both. A similar idea is used to simplify
the analysis of the behavior of s-holomorphic function near rough
boundaries, see Lemma \ref{lem: Clements_clever_lemma}. 

Recently, there's been a lot of work towards extending the results
on conformal invariance in the Ising model to more general setups.
S. C.~Park \cite{park2018massive,Park2021Fermionic} partially extended
convergence of spin and fermionic correlations to the massive scaling
limit setup. Another recent development is universality, which for
fermionic observables was established in \cite{ChelkakSmirnov2} within
the class of isoradial graphs with critical weights. This result was
recently extended to spin and energy correlations \cite{ChelkakIzyurovMahfouf}
and the results and methods of this paper provede a further extension
to all mixed correlations. A further direction towards the full universality
was suggested in \cite{chelkak2017ICM} via so-called s-embeddings;
the frist results on the convergence for fermionic observables in
this setup appeared in \cite{chelkak2020ising}. In our exposition
we strove to isolate the key ``blocks'' in the proofs, so that the
results would be directly extendable once the corresponding blocks
are provided in the relevant setup. 

\subsection{Main results: convergence of correlations}

\label{subsec: Main-results}

In the present paper, we discuss general results about convergence
of correlations in the critical Ising model on planar domains. The
observables included in our setup are spins, boundary spins, energies,
disorder variables and fermions. We work with arbitrary finitely connected
domains, with boundaries divided into finitely many arcs carrying
free or fixed boundary conditions. 

Let $\Omega$ be a bounded finitely connected domain, with no single-point
boundary components. The domain $\Omega$ will be equipped with a
subdivision of its boundary into two subsets $\fixed$ and $\free$,
each comprised of finitely many arcs. We view this subdivision as
\emph{boundary conditions}. Assume that $\Omega^{\delta}$ is a sequence
approximations to $\Omega$ by discrete (square lattice) domains of
mesh size $\delta$, and $\pa\Od$ is also subdivided into subsets
$\fixed$ and $\free$ that approximate the corresponding subsets
of $\pa\Omega$. Consider the critical Ising model on $\Od$ with
free boundary conditions on $\free$, and \emph{monochromatic along
boundary components} on $\fixed$, meaning that if $\gamma$ is a
connected component of $\pa\Od$, then all the spins on the $\gamma\cap\fixed$
are conditioned to be the same. While arguably not the most natural,
these boundary conditions are the easiest to deal with in the multiply
connected setup; we call them the \emph{standard boundary conditions}.
As we discuss below, other boundary conditions can be reduced to the
standard ones. 

Our first result will concern the \emph{spin-disorder correlations}
in the critical Ising model with standard boundary conditions; see
Section \ref{subsec: Ising_definitions} for the definitions. The
\emph{spin} variable $\sigma_{v}$ is indexed by a vertex of the domain
$\Od$, and is equal simply to the value of the spin at that vertex.
The disorder variables are defined as follows. Given a subset $\gamma$
of interior edged of the dual domain $\Odual$, we define the disorder
variable with respect to that subset by
\[
\mu_{\gamma}=\exp\left[-2\beta\sum_{(\v\v')\cap\gamma\neq\emptyset}\sigma_{\v}\sigma_{\v'}\right],
\]
where the sum is over all the edges $(vv')$ of the original domain
that intersect edges of $\gamma$. The object we are interested in
is the correlation 
\[
\E(\mu_{\gamma}\sigma_{v_{1}},\ldots\sigma_{v_{n}})=\E_{\Od}(\mu_{\gamma}\sigma_{v_{1}},\ldots\sigma_{v_{n}}).
\]
It is explained in Section \ref{sec: combinatorics} that up to sign,
this quantity only depends on 
\[
\{u_{1},\dots,u_{m}\}:=\pa\gamma\mod2,
\]
which is the set of vertices of $\Odual$ (i. e., centers of faces
of $\Od$) incident to an odd number of edges in $\gamma$. The sign,
in its turn, depends only on the homology class of $\gamma$ in $\Od\setminus\{v_{1},\dots,v_{n}\}$
modulo 2. Thus, we will use the notation 
\[
\E_{\Od}(\mu_{u_{1}}\cdot\dots\cdot\mu_{u_{m}}\sigma_{v_{1}},\ldots\sigma_{v_{n}}):=\E_{\Od}(\mu_{\gamma}\sigma_{v_{1}}\cdot\ldots\cdot\sigma_{v_{n}})
\]
where $\gamma$ is such that $\{u_{1},\dots,u_{m}\}=\pa\gamma\mod2.$
As it stands, this expression depends on the choice of $\gamma$,
and thus it is only defined up to sign. However, once this sign is
chosen for one configuration $u_{1},\dots,u_{m},v_{1},\dots v_{n}$,
there is a natural way to extend this choice as the points move around
in the lattice. After this extension, the expression $\E_{\Od}(\mu_{u_{1}}\cdot\dots\cdot\mu_{u_{m}}\sigma_{v_{1}},\ldots\sigma_{v_{n}})$
becomes a function on the universal cover of $(\Od)^{n,m}:=(\Od)^{\times n}\times\left((\Od)^{*}\right)^{\times m}$
that picks a $-1$ sign every time any of $v_{i}$ winds around a
$u_{j}$; in other words, it becomes a well-defined function on $(\Od)^{n,m}$
after multiplication by $\prod_{i,j}(v_{i}-u_{j})^{\frac{1}{2}}.$
Such functions, defined on a double cover of a certain graph or manifold
(in this case $(\Od)^{n,m}$) and changing sign between sheets, will
be called \emph{spinors}. We refer the reader to Section \ref{sec: combinatorics}
for a detailed discussion. 

Our first convergence result states that a spin-disorder correlation,
when properly renormalized, has a scaling limit. Following the physics
literature tradition, we use the $\ccor{\dots}$ notation to denote
this limit; for the purpose of this paper, this is just a \emph{function}
or a \emph{spinor} in the corresponding variables.
\begin{thm}
\label{thm: Intro_1}One has, as $\delta\to0$,
\begin{equation}
\delta^{-\frac{n+m}{8}}\E_{\Od}(\sigma_{v_{1}}\cdot\ldots\cdot\sigma_{v_{n}}\mu_{u_{1}}\cdot\ldots\cdot\mu_{u_{m}})=C_{\sigma}^{n}\cdot C_{\mu}^{m}\cdot\ccor{\sigma_{v_{1}},\ldots,\sigma_{v_{n}}\mu_{u_{1}}\ldots\mu_{u_{m}}}_{\Omega}+o(1),\label{eq: thm_1_intro}
\end{equation}
where $o(1)$ is uniform with respect to all configuration of points
$v_{1},\dots,v_{n}$, $\u_{1},\dots,u_{m}$ at a definite distance
from $\pa\Omega$ and from each other. 
\end{thm}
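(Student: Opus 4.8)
The plan is to reduce the statement to convergence of \emph{logarithmic derivatives} of the correlation with respect to each marked point and then to integrate. The bridge between the probabilistic quantity and complex analysis is the fermionic (spinor) observable: to the configuration $\sigma_{v_{1}}\cdots\sigma_{v_{n}}\mu_{u_{1}}\cdots\mu_{u_{m}}$ I attach the discrete observable
\[
F^{\delta}(z)\;=\;\frac{\E_{\Od}\big(\text{fermion at }z\cdot\sigma_{v_{1}}\cdots\sigma_{v_{n}}\mu_{u_{1}}\cdots\mu_{u_{m}}\big)}{\E_{\Od}\big(\sigma_{v_{1}}\cdots\sigma_{v_{n}}\mu_{u_{1}}\cdots\mu_{u_{m}}\big)},
\]
defined through the order--disorder formalism of Kadanoff and Ceva, which is a spinor in $z$ on the double cover ramified at the $v_{i}$ and $u_{j}$. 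The two properties I need are, first, that $F^{\delta}$ is s-holomorphic away from its singularities and satisfies a Riemann-type boundary condition on $\free$ and on $\fixed$; and second, that the effect on the correlation of moving a spin $v_{i}$ (resp.\ a disorder $u_{j}$) by one lattice step is encoded in the subleading coefficient $\coefA$ of the local expansion of $F^{\delta}$, which behaves like $c\,(z-v_{i})^{-1/2}+\coefA\,(z-v_{i})^{1/2}+\dots$ near $v_{i}$, so that the discrete logarithmic derivative $\pa_{v_{i}}\log\E_{\Od}(\cdots)$ is read off from $\coefA/c$ (and analogously near each $u_{j}$).

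First I would prove convergence of the relevant expansion coefficients of $F^{\delta}$, uniformly on the class of configurations at a definite distance from $\pa\Omega$ and from each other. This is the analytic core, carried out exactly as announced: s-holomorphicity supplies compactness, and the Cauchy-type formula of Lemma~\ref{lem: Cauchy_spinors} extracts the coefficients by pairing $F^{\delta}$ against the two explicit full-plane discrete analytic functions, the analogs of $z^{-1}$ and $z^{-1/2}$, whose known asymptotics near the singularities turn convergence of $F^{\delta}$ into convergence of $\coefA$. The behaviour near $\pa\Omega$, including rough arcs, is controlled by the estimate of Lemma~\ref{lem: Clements_clever_lemma}. Identifying the limit with the continuous holomorphic spinor $f$ solving the corresponding boundary value problem then yields
\[
\pa_{v_{i}}\log\E_{\Od}(\cdots)\;\longrightarrow\;\pa_{v_{i}}\log\ccor{\sigma_{v_{1}},\dots\mu_{u_{m}}}_{\Omega},
\qquad
\pa_{u_{j}}\log\E_{\Od}(\cdots)\;\longrightarrow\;\pa_{u_{j}}\log\ccor{\cdots}_{\Omega},
\]
with the uniformity inherited directly from that of the observables.

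It then remains to integrate these gradient estimates, which I expect to be the main obstacle. Convergence of $\pa_{v_{i}}\log$ and $\pa_{u_{j}}\log$ determines $\delta^{-(n+m)/8}\E_{\Od}(\cdots)$ up to a single multiplicative factor that may depend on $\delta$ but not on the configuration, and the content of the theorem is that this factor tends to $C_{\sigma}^{n}C_{\mu}^{m}$. To pin it down I would fix a convenient reference configuration and analyse it separately: either by merging points and invoking the known short-distance (fusion) asymptotics, or by arranging the relevant local behaviour to match the exactly solvable full-plane / half-plane two-point functions, which is precisely where the lattice constants $C_{\sigma}$ and $C_{\mu}$ enter. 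The two delicate points are (i) ensuring the integration runs along paths that stay in the admissible region, so the uniform gradient convergence applies throughout, and (ii) verifying that the normalization extracted at the reference configuration is genuinely configuration-independent, so that a single pair $C_{\sigma},C_{\mu}$ serves all of them. Combining the integrated gradient convergence with the pinned-down constant gives \eqref{eq: thm_1_intro}, with $o(1)$ uniform over the stated class of configurations.
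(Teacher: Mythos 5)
Your overall strategy---convergence of discrete logarithmic derivatives, integration, and a separate pinning of the multiplicative constant---is indeed the paper's strategy, but the paper applies it \emph{only to pure spin correlations}; extending it simultaneously to the disorder points $u_{j}$, as you propose, breaks at its core. First, the one-fermion observable you attach to the mixed configuration vanishes identically in the only nontrivial case: since $\psi_{z}=\sigma_{z^{\circ}}\mu_{z^{\bullet}}$, the numerator $\E_{\Od}\big(\psi_{z}\,\sigma_{v_{1}}\cdots\sigma_{v_{n}}\mu_{u_{1}}\cdots\mu_{u_{m}}\big)$ contains $n+1$ spin insertions, an odd number when $n$ is even, so it is zero by the global spin-flip symmetry (Remark \ref{rem: cover_auto}). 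Second, even if you repair this with a two-point observable (which does encode the discrete increment in $u_{j}$: take $z_{1}^{\bullet}=u_{j}$, $z_{2}^{\bullet}=u_{j}'$, $z_{1}^{\circ}=z_{2}^{\circ}$), the resulting spinor lives on a double cover ramified at \emph{both} the primal points $v_{i}$ and the dual points $u_{l}$, and the paper explicitly warns (remark following Proposition \ref{prop: discrete-uniqueness}) that uniqueness in the Riemann boundary value problem then \emph{fails}; in the continuum the maximum-principle proof of Proposition \ref{prop: Uniqueness_continuous} breaks because at a disorder-type puncture the function $h=\im\int f^{2}$ tends to $+\infty$ rather than $-\infty$. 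Hence the step ``identify the limit with the continuous holomorphic spinor solving the corresponding boundary value problem,'' which is the linchpin of your compactness argument, is unjustified for exactly the covers your scheme requires.

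The paper sidesteps this by a combinatorial factorization (Lemma \ref{lem: corr_to_obs_for_THM_2}): each $\mu_{u_{p}}$ is rewritten as a fermion at a corner adjacent to $u_{p}$ times a spin at $v_{n+p}:=z^{\circ}$, so the mixed correlation factors as a \emph{pure} spin correlation at $n+m$ points times a multi-point fermionic observable whose cover is ramified at primal vertices only. The $\coefA$-based log-derivative and integration machinery (Theorem \ref{thm: Conv_both_near_spin}, Corollary \ref{cor: spin_ratio}) and the probabilistic normalization (Theorem \ref{thm: spin_convergence} via Lemma \ref{lem: spin_2p}) are applied only to that spin factor, where FKG/RSW/GHS monotonicity tools are available; the observable factor converges by the Pfaffian formula and Theorem \ref{thm: Convergence_singular} (type $\sharp$ at both points, constant $1$, exponent $0$), which is precisely how $C_{\mu}=C_{\sigma}$ arises without ever normalizing a mixed correlation. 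This also undercuts your constant-pinning plan: $\E_{\Od}(\sigma\cdots\mu\cdots)$ is a sign-changing spinor, not a probability of an increasing event, so the FKG/RSW comparisons do not apply to it directly, and the discrete short-distance fusion asymptotics you would invoke at a reference configuration are not available a priori---in the paper the fusion rules are established for the continuum limits only, \emph{after} convergence has been proved.
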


When $m$ is odd, we define the correlations in both sides of (\ref{eq: thm_1_intro})
to be identically zero; when $n$ is odd, these correlations also
vanish because of spin flip symmetry; otherwise, they are (generically)
non-trivial. The constants $\Csigma,\Cmu$ are given by (\ref{eq: constants}).
The quantity $\ccor{\sigma_{v_{1}},\ldots,\sigma_{v_{n}}\mu_{u_{1}}\ldots\mu_{u_{m}}}_{\Omega}$
is a real-valued spinor on the set of $(n+m)$-tuples of distinct
points in $\Omega$, which becomes a well-defined function when multiplied
by $\prod_{i,j}(v_{i}-u_{j})^{\frac{1}{2}}.$ Moreover, it is \emph{conformally
covariant}, that is, if $\varphi:\Omega\to\hat{\Omega}$ is a conformal
map from $\Omega$ onto $\hat{\Omega}$, then 
\[
\ccor{\sigma_{v_{1}}\ldots\sigma_{v_{n}}\mu_{u_{1}}\ldots\mu_{u_{m}}}_{\Omega}=\left(\prod_{i=1}^{n}|\varphi'(v_{i})|\right)^{\frac{1}{8}}\left(\prod_{i=1}^{m}|\varphi'(u_{j})|\right)^{\frac{1}{8}}\ccor{\sigma_{\varphi(v_{1})}\ldots\sigma_{\varphi(v_{n})}\mu_{\varphi(u_{1})}\ldots\mu_{\varphi(u_{m})}}_{\hat{\Omega}}.
\]

The spinors $\ccor{\sigma_{v_{1}},\ldots,\sigma_{v_{n}}\mu_{u_{1}}\ldots\mu_{u_{m}}}_{\Omega}$
are defined as special values of solutions to certain boundary value
problems. \textcolor{black}{In the case $\Omega$ is simply-connected
or doubly connected, they can be written down explicitly in terms
of algebraic and elliptic functions respectively, see Section \ref{sec:Explicit-formulae}.}

We also consider the situation when some of the points $v_{i}$ and
$u_{j}$ are grouped into pairs and placed at one lattice step from
each other. Given an edge $e$ of $\Od$, we define the \emph{energy
observable} as 
\[
\en_{e}:=\sqrt{2}\left(\sigma_{e_{+}^{\circ}}\sigma_{e_{-}^{\circ}}-\frac{1}{\sqrt{2}}\right),
\]
where $e_{+}^{\circ}$ and $e_{-}^{\circ}$ are two vertices incident
to $e$. The (lattice-dependent) constant $\frac{1}{\sqrt{2}}$ is
the limit of $\E_{\Od}(\sigma_{e_{+}^{\circ}}\sigma_{e_{-}^{\circ}})$
for $e$ in the bulk, i. e., the thermodynamic limit, and the factor
of $\sqrt{2}$ is introduced for convenience. Alternatively, the energy
observable $\en_{e}$ can be expressed in terms of disorder variables
at neighboring faces: 
\[
\en_{e}:=\sqrt{2}\left(\frac{1}{\sqrt{2}}-\mu_{e_{+}^{\bullet}}\mu_{e_{-}^{\bullet}}\right).
\]

Another observable is obtained by placing $v_{i}$ and $u_{j}$ next
to each other. If $z$ is a \emph{corner} of the lattice, that is,
a midpoint of a segment connecting a vertex to an incident dual vertex,
we denote the former by $z^{\circ}$ and the latter by $z^{\bullet}$;
the set of all corners of $\Od$ will be denoted by $\Cgr(\Od)$.
We would like to study the asymptotics of the expression of the form
\begin{equation}
\E_{\Od}(\sigma_{v_{1}}\cdot\dots\cdot\sigma_{v_{n}}\sigma_{z_{1}^{\circ}}\cdot\dots\cdot\sigma_{z_{k}^{\circ}}\mu_{u_{1}}\cdot\dots\cdot\mu_{u_{m}}\mu_{z_{1}^{\bullet}}\cdot\dots\cdot\mu_{z_{k}^{\bullet}})\label{eq: spin_disorder_ferm}
\end{equation}
when the points $v_{1},\dots,v_{n}\in\Od$, $u_{1},\dots,u_{m}\in\Odual$,
$\z_{1},\dots,z_{k}\in\Cgr(\Od)$ are at definite distance of each
other. This asymptotics will depend on the orientation of the corners
$z_{i}$ in the lattice. Moreover, note the complicated spinor structure
of (\ref{eq: spin_disorder_ferm}): provided the marked points are
separated, it changes its sign every time a corner $z_{i}$ winds
around a vertex, face, or edge, because of the winding of $z_{i}^{\bullet}$
around $z_{i}^{\circ}$. To keep track of the necessary information,
we introduce the Dirac spinor 
\begin{equation}
\ds z:=e^{\frac{i\pi}{4}}\delta^{\frac{1}{2}}\left(\vz-\fz\right)^{-\frac{1}{2}},\label{eq: def_eta_discrete}
\end{equation}
which is the spinor on the lattice $\Cgr(\Od)$ ramified at every
face of that lattice. The (real) fermionic observable is then defined
as the formal expression 
\[
\psi_{z}^{[\eta_{z}]}=\sigma_{z^{\circ}}\mu_{z^{\bullet}},
\]
interpreted as follows. Every time an expression of the form 
\begin{equation}
\E_{\Od}(\mu_{u_{1}}\cdot\dots\cdot\mu_{u_{m}}\sigma_{v_{1}}\cdot\dots\cdot\sigma_{v_{n}}\psi_{z_{1}}^{[\eta_{z_{1}}]}\cdot\dots\cdot\psi_{z_{k}}^{[\eta_{z_{k}}]})\label{eq: FSD}
\end{equation}
is written, it should be understood as equal to (\ref{eq: spin_disorder_ferm}),
with natural sign conventions as the marked point move around in the
lattice. Note that each $\eta_{z_{i}}$ is determined by $z_{i}$
up to sign; as $z_{i}$ moves, this sign changes according to (\ref{eq: def_eta_discrete}).

Now, let $\Op(\sigma,\mu,\en,\psi)$ %
{} stand for any expression of the form 
\begin{equation}
\sigma_{v_{1}}\cdot\ldots\cdot\sigma_{v_{n}}\mu_{u_{1}}\cdot\ldots\cdot\mu_{u_{m}}\en_{e_{1}}\cdot\ldots\cdot\en_{e_{s}}\psi_{z_{1}}^{[\eta_{z_{1}}]}\cdot\ldots\cdot\psi_{z_{k}}^{[\eta_{z_{k}}]}.\label{eq: op_any}
\end{equation}
 We have the following generalization of Theorem 1: 
\begin{thm}
\label{thm: intro_2}As $\delta\to0$, one has 
\begin{equation}
\delta^{-\Delta}\E_{\Od}(\Op(\sigma,\mu,\epsilon,\psi))=C\cdot\ccor{\Op(\sigma,\mu,\epsilon,\psi)}_{\Omega}+o(1),\label{eq: thm_2_intro}
\end{equation}
where $o(1)$ is uniform in the positions of marked points $v_{1},\dots,z_{k}$
at a definite distance from $\pa\Omega$ and from each other. The
exponent $\Delta$ is given by 
\[
\Delta=\frac{n}{8}+\frac{m}{8}+s+\frac{k}{2},
\]
and the constant is $C=C_{\sigma}^{n}C_{\mu}^{m}C_{\en}^{s}C_{\psi}^{k}$,
see (\ref{eq: constants}).
\end{thm}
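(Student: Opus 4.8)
The plan is to deduce Theorem~\ref{thm: intro_2} from Theorem~\ref{thm: Intro_1} by treating the energy insertions $\en_{e_j}$ and the fermionic insertions $\psi_{z_i}^{[\eta_{z_i}]}$ as \emph{limiting cases} of the spin-disorder correlations already controlled. The key observation is that both new observables are, by their very definitions, built out of spins and disorders at colliding lattice points: the energy $\en_{e}=\sqrt 2(\sigma_{e_+^\circ}\sigma_{e_-^\circ}-\tfrac{1}{\sqrt2})$ is a (renormalized) product of two spins sitting one lattice step apart, and the fermion $\psi_{z}^{[\eta_z]}=\sigma_{z^\circ}\mu_{z^\bullet}$ is a spin-disorder pair at adjacent primal/dual points. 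Thus a correlation of the form $\Op(\sigma,\mu,\en,\psi)$ is obtained from a spin-disorder correlation with $n+2s+k$ spins and $m+k$ disorders by forcing $s+k$ of these points to merge in a controlled way.

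**First**, I would expand each energy factor using linearity of $\E_{\Od}$, writing $\E_{\Od}(\Op)$ as a sum of $2^s$ spin-disorder correlations (the $-\tfrac{1}{\sqrt2}$ terms removing a pair of spins). Each such term is governed by Theorem~\ref{thm: Intro_1}, but now with \emph{two of the spin points at distance} $\delta$, so the configuration is \emph{not} at a definite distance from each other and the uniform $o(1)$ there does not directly apply. The crux is therefore to understand the short-distance behavior of $\ccor{\sigma_{v_1},\ldots}_{\Omega}$ as two marked points collide: this is exactly the content of the \emph{fusion rules} (operator product expansions) advertised in the introduction. Granting the fusion asymptotics, the leading singular terms combine to reproduce $\ccor{\en_e \cdots}_\Omega$, with the subleading terms contributing only to the error, and the explicit renormalization constant $C_\en$ emerging from matching the lattice normalization $\sqrt2(\,\cdot\,-\tfrac1{\sqrt2})$ against the continuous two-point function of spins at short range. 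An identical argument handles each fermionic factor $\psi_{z}^{[\eta_z]}$: here one collides a spin with a disorder, and the Dirac spinor $\ds z$ of (\ref{eq: def_eta_discrete}) records precisely the phase ambiguity in the short-distance limit, so that the half-integer scaling exponent $\tfrac12$ per fermion and the constant $C_\psi$ appear naturally.

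**The exponent bookkeeping** is then routine: each of the $n$ spins and $m$ disorders contributes $\tfrac18$ as in Theorem~\ref{thm: Intro_1}, each energy contributes $1$ (it is a fused pair of spins, $\tfrac18+\tfrac18$ plus the renormalization shift produced by subtracting the thermodynamic constant, which upgrades the collision exponent to an integer $1$), and each fermion contributes $\tfrac12$ (a fused spin-disorder pair $\tfrac18+\tfrac18$ with the half-integer ramification built into $\ds z$). Summing gives $\Delta=\tfrac n8+\tfrac m8+s+\tfrac k2$ and $C=\Csigma^n\Cmu^m\Ceps^s\Cpsi^k$ exactly as stated.

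**The main obstacle** is controlling the error uniformly through the collisions. Theorem~\ref{thm: Intro_1} only provides $o(1)$ for points at definite mutual distance, whereas here $s+k$ pairs sit at distance $\delta$. One cannot simply invoke the stated theorem as a black box; instead the honest route is to prove the convergence of (\ref{eq: thm_2_intro}) on the same analytic footing as Theorem~\ref{thm: Intro_1}, by extending the fermionic-observable machinery so that it simultaneously allows \emph{merged} insertions, and then verifying that the discrete logarithmic derivatives used to establish Theorem~\ref{thm: Intro_1} remain uniformly controlled when some pairs collapse to the lattice scale. In practice this means establishing the fusion estimates \emph{with explicit error bounds uniform in $\delta$}, showing the error from neglected subleading OPE terms is genuinely $o(1)$ after multiplication by $\delta^{-\Delta}$. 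This is where the delicate discrete analysis — the Cauchy-type formula for spinor observables (Lemma~\ref{lem: Cauchy_spinors}) and the near-boundary control (Lemma~\ref{lem: Clements_clever_lemma}) — must be deployed, since these are precisely the tools that yield the required uniformity near singularities.
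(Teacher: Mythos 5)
There is a genuine gap, and it sits exactly where you placed your ``main obstacle'': the collisions you must control happen at the lattice scale itself, a regime in which the continuum fusion rules carry no information. After your $2^{s}$-term expansion of the energies, the quantity $\delta^{-1}\E_{\Od}(\en_{e}\,\Op')=\delta^{-1}\bigl(\sqrt{2}\,\E_{\Od}(\sigma_{e_{+}^{\circ}}\sigma_{e_{-}^{\circ}}\Op')-\E_{\Od}(\Op')\bigr)$ is an $O(\delta)$-relative difference of two terms, each of which Theorem \ref{thm: Intro_1} controls only up to multiplicative $o(1)$ errors; these errors swamp the difference you need. Even a hypothetical strengthening of Theorem \ref{thm: Intro_1} uniform down to mesoscopic separations, combined with the OPE $\sigma_{w_{1}}\sigma_{w_{2}}=|w_{1}-w_{2}|^{-\frac{1}{4}}(1+\tfrac{1}{2}|w_{1}-w_{2}|\en_{w_{2}}+o(|w_{1}-w_{2}|))$, says nothing about two spins at separation \emph{exactly one lattice step}: the continuum prefactor $|w_{1}-w_{2}|^{-\frac14}$ is replaced there by non-universal lattice constants, and what you would need is a microscopic statement --- the spin--spin correlation at one lattice step, inside an arbitrary correlation, to relative precision $o(\delta)$ --- which is equivalent to the energy part of the very theorem being proven. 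Your remark that ``subtracting the thermodynamic constant upgrades the collision exponent to an integer $1$'' is precisely the cancellation that has to be established, not a bookkeeping step; the same issue afflicts the fermionic insertions, where the extra $\tfrac14$ per fermion beyond $\tfrac18+\tfrac18$ has no continuum derivation.

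The paper avoids discrete fusion altogether through exact lattice identities. At criticality one has $\mu_{\gamma_{e}}\sigma_{e_{+}}\sigma_{e_{-}}=\en_{e}$ exactly, see (\ref{eq: en_spin_disorder}), so Lemma \ref{lem: corr_to_obs_for_THM_2} rewrites $\E_{\Od}(\Op(\sigma,\mu,\en,\psi))$ \emph{identically} as a pure spin correlation with $n+m$ points times the multi-point fermionic observable $F_{\double{v_{1},\dots,v_{n+m}}}(z_{1},\dots,z_{K})$, with corners placed at the energies, fermions and disorders. The discrete Pfaffian identity (Proposition \ref{prop: pfaff_discrete}) reduces this observable to two-point observables, and convergence at exactly the merged configurations is a dedicated s-holomorphicity result, not a fusion estimate: Theorem \ref{thm: Convergence_bulk_near} handles two opposite corners of an edge (the energy case, by subtracting the discrete Cauchy kernel and using (\ref{eq: P_a_close})), Theorem \ref{thm: Convergence_singular} handles a corner adjacent to a spin (the disorder case, via the Cauchy formula of Lemma \ref{lem: Cauchy_spinors}), and Theorem \ref{thm: spin_convergence} supplies the spin factor. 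The fusion rules of Section \ref{sec: ccor_fusion} are then \emph{derived} in the continuum as consequences of the convergence theorems; they are never inputs to Theorem \ref{thm: intro_2}. Your closing suggestion to extend the observable machinery to ``merged insertions'' is the right instinct, but its correct realization is via these exact identities plus convergence at fixed merged lattice configurations, rather than via OPE estimates with uniform-in-$\delta$ error bounds.
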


Let us comment a bit more on the nature of $\ccor{\Op(\sigma,\mu,\epsilon,\psi)}_{\Omega}$.
It is a real multi-valued function of the positions of the marked
points $v_{1},\dots,z_{k}$ and of $k$ complex numbers $\eta_{1},\dots,\eta_{k}$
(the upper indices of $\psi$'s). Its spinor structure is the same
as that of 
\begin{equation}
\prod_{i,j=1}^{m,n}(u_{i}-v_{j})^{\frac{1}{2}}\prod_{i,j=1}^{m,k}(u_{i}-z_{j})^{\frac{1}{2}}\prod_{i,j=1}^{n,k}(v_{i}-z_{j})^{\frac{1}{2}},\label{eq: RS_all_around_all}
\end{equation}
that is, it picks a $-1$ sign every time a spin and a disorder, a
spin and a fermion, or a disorder and a fermion, go around each other.
Although in the discrete, the Dirac spinor (\ref{eq: def_eta_discrete})
only has eight possible values, the continuous correlation function
$\ccor{\Op(\sigma,\mu,\epsilon,\psi)}_{\Omega}$ is defined for arbitrary
complex values of $\eta_{i}$, and is, in fact, real linear in each
$\eta_{i}$; in particular, it changes the sign whenever $\eta_{i}$
does. Moreover, it happens to be anti-symmetric in $z_{1},\dots,z_{k}$
and symmetric in $v_{1},\dots,v_{n}$, in $u_{1},\dots,u_{m}$, and
in $e_{1},\dots,e_{s}$. Also, it it conformally covariant: if $\varphi$
is a conformal map from $\Omega$ onto $\Omega'$, then 
\[
\ccor{\Op(\sigma,\mu,\epsilon,\psi)}_{\Omega}=\CF\cdot\ccor{\Op(\hat{\sigma},\hat{\mu},\hat{\epsilon},\hat{\psi})}_{\hat{\Omega}},
\]
where the conformal factor $\CF$ is given by 
\[
\CF=\prod_{i=1}^{m}|\varphi'(u_{i})|^{\frac{1}{8}}\prod_{i=1}^{n}|\varphi'(v_{i})|^{\frac{1}{8}}\prod_{i=1}^{s}|\varphi'(e_{i})|,
\]
and $\Op(\hat{\sigma},\hat{\mu},\hat{\epsilon},\hat{\psi})$ is obtained
from $\Op(\sigma,\mu,\epsilon,\psi)$ by replacing every $v_{1},\dots,z_{k}$
with $\varphi(v_{1}),\dots,\varphi(z_{k})$, respectively, and each
$\eta_{i}$ by $\varphi'(z_{i})^{\frac{1}{2}}\eta_{i}.$ 

The lattice-dependent constant in Theorems \ref{thm: Intro_1} and
\ref{thm: intro_2} are given explicitly by 
\begin{equation}
\Cpsi=\left(\frac{2}{\pi}\right)^{\frac{1}{2}};\quad\Csigma=\Cmu=2^{\frac{1}{6}}e^{\frac{3}{2}\zeta'(-1)},\quad\Ceps=\frac{2}{\pi};\label{eq: constants}
\end{equation}
notice that in our convention, the mesh size $\delta$ stands for
half the diagonal of a lattice face.

Theorem \ref{thm: intro_2} concerns the Ising model with standard
boundary conditions. It is also possible to handle, perhaps, more
natural \emph{plus-minus-free} boundary conditions. For simplicity,
we formulate the result for the correlation $\E_{\Od}(\Op(\sigma,\en))$,
that is, we do not consider any disorders or fermions.

Assume that the boundary conditions on $\Omega$ are given by a subdivision
of $\pa\Omega$ into three subsets $\plus$, $\minus$ and $\free$,
each comprised of finitely many non-degenerate boundary arcs; we denote
such a subdivision by $\bcond$. The boundary of $\Od$ is equipped
with a similar subdivision $\bcd$ approximating $\bcond$, and we
consider the Ising model in $\Od$ with spins conditioned to be $+1$
on $\plus$ and $-1$ on $\minus$. 
\begin{thm}
\label{thm: Intro_3}We have, as $\delta\to0$, 
\begin{equation}
\delta^{-\Delta}\E_{\Od,\bcond^{\delta}}(\Op(\sigma,\epsilon))=C\cdot\ccor{\Op(\sigma,\epsilon)}_{\Omega,\bcond}+o(1),\label{eq: Intro_thm_3}
\end{equation}
uniformly with respect to positions of marked points at a definite
distance from each other and from $\pa\Omega$. Here 
\[
\Delta=\frac{n}{8}+s,
\]
where $n$ is the number of spins and $s$ is the number of energies
in $\Op(\sigma,\en)$. Moreover, if $\Omega$ is not simply connected,
and $\bcond_{1}$ and $\bcond_{2}$ are both boundary conditions on
$\Omega$ with the same set $\free$ and with the same collection
of boundary points separating $\plus$ from \textup{$\minus$}, and
$\bcond_{1_{,}2}^{\delta}$ are boundary conditions in $\Od$ approximating
$\bcond_{1,2}$, then the ratio of the partition functions has a limit
\[
\frac{Z(\Od,\bcd_{1})}{Z(\Od,\bcd_{2})}\to\RPF(\Omega,\bcond_{1},\bcond_{2}),
\]
 where $I(\Omega,\bcond_{1},\bcond_{2})$ is conformally invariant.
\end{thm}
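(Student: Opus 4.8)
The plan is to reduce the $\plus/\minus/\free$ boundary conditions to the standard (monochromatic) ones already treated in Theorem \ref{thm: intro_2}, by trading the sign changes of the boundary spins for disorder insertions. In the low-temperature (domain-wall) representation, a configuration with $\bcond$-boundary conditions has domain walls that may terminate on $\pa\Od$ only at the points $a_1,\dots,a_{2p}$ separating $\plus$ from $\minus$, while no wall may end along a $\plus$, a $\minus$, or a $\fixed$ arc (the $\free$ arcs being transparent to walls as usual). This is exactly the constraint imposed by inserting disorders $\mu_{a_1},\dots,\mu_{a_{2p}}$ at those points in the model with the \emph{reference} standard boundary conditions $\bcond^{\delta}_{\mathrm{ref}}$, obtained from $\bcond^{\delta}$ by making each former $\plus\cup\minus$ arc monochromatic and keeping $\free$ on $\free$. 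I would first make this precise by a boundary version of the order--disorder (boundary-condition-changing-operator) formalism of Section \ref{sec: combinatorics}, obtaining
\[
\E_{\Od,\bcond^{\delta}}(\Op(\sigma,\epsilon))=\frac{\E_{\Od,\bcond^{\delta}_{\mathrm{ref}}}\!\left(\mu_{a_1}\cdots\mu_{a_{2p}}\,\Op(\sigma,\epsilon)\right)}{\E_{\Od,\bcond^{\delta}_{\mathrm{ref}}}\!\left(\mu_{a_1}\cdots\mu_{a_{2p}}\right)},
\]
the disorders sitting at the separation points pushed onto $\pa\Od$. The data distinguishing $\bcond$ from another boundary condition with the same separation points (the relative signs of fixed arcs that are not directly adjacent) is recorded in the choice of \emph{branch} of the disorder spinor, and this is where the multiply connected topology enters.

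With this identity, the second step applies the standard-boundary convergence to numerator and denominator. Since the $a_j$ lie on $\pa\Od$, this requires the extension of Theorem \ref{thm: intro_2} to disorders approaching the boundary; I would establish the corresponding boundary convergence and the (boundary-type) renormalization of each $\mu_{a_j}$. The essential point is that every boundary disorder contributes the \emph{same} renormalizing factor to the numerator and to the denominator, so these factors cancel in the quotient; after cancellation the normalization needed to extract a limit is exactly $\delta^{-\Delta}$ with $\Delta=\tfrac n8+s$ and $C=\Csigma^{\,n}\Ceps^{\,s}$, coming only from the spins and energies of $\Op$. Passing to the limit identifies
\[
\ccor{\Op(\sigma,\epsilon)}_{\Omega,\bcond}=\frac{\ccor{\mu_{a_1}\cdots\mu_{a_{2p}}\,\Op(\sigma,\epsilon)}_{\Omega,\bcond_{\mathrm{ref}}}}{\ccor{\mu_{a_1}\cdots\mu_{a_{2p}}}_{\Omega,\bcond_{\mathrm{ref}}}},
\]
with the disorders taken to the points $a_j$ along the branch prescribed by $\bcond$ (well defined wherever the denominator does not vanish). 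Uniformity of the $o(1)$ follows from that in Theorem \ref{thm: intro_2} together with uniform estimates near $\pa\Omega$, and conformal covariance is inherited from the numerator and denominator, the boundary-disorder conformal factors again cancelling in the ratio.

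The partition-function statement is the same reduction read at the level of partition functions. One has
\[
Z(\Od,\bcd)=\const\cdot Z(\Od,\bcond^{\delta}_{\mathrm{ref}})\cdot\E_{\Od,\bcond^{\delta}_{\mathrm{ref}}}(\mu_{a_1}\cdots\mu_{a_{2p}})\big|_{\bcond},
\]
where the disorder correlation is a \emph{spinor} and the subscript records the branch selected by $\bcond$; in a multiply connected $\Omega$ this branch amounts to a homology class fixing the relative signs of the fixed components, which is precisely the freedom left once $\free$ and the separation points are prescribed. For two boundary conditions $\bcond_1,\bcond_2$ sharing the same $\free$ and the same separation points, the common factors $\const$ and $Z(\Od,\bcond^{\delta}_{\mathrm{ref}})$ cancel, so
\[
\frac{Z(\Od,\bcd_1)}{Z(\Od,\bcd_2)}=\frac{\E_{\Od,\bcond^{\delta}_{\mathrm{ref}}}(\mu_{a_1}\cdots\mu_{a_{2p}})\big|_{\bcond_1}}{\E_{\Od,\bcond^{\delta}_{\mathrm{ref}}}(\mu_{a_1}\cdots\mu_{a_{2p}})\big|_{\bcond_2}}.
\]
Both terms carry the identical renormalizing factor, which cancels, and by the boundary extension of Theorem \ref{thm: Intro_1} the right-hand side converges to the ratio of the two branch values of the continuous spinor $\ccor{\mu_{a_1}\cdots\mu_{a_{2p}}}_{\Omega,\bcond_{\mathrm{ref}}}$; this limit is $\RPF(\Omega,\bcond_1,\bcond_2)$. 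Its conformal invariance is immediate, since under a conformal map the continuous spinor acquires a conformal factor that depends only on the positions $a_j$ and not on the branch, and therefore cancels between the two values.

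The main obstacle is the boundary analysis underlying the extension of Theorems \ref{thm: Intro_1}--\ref{thm: intro_2} to disorders located on $\pa\Od$, and in particular at the separation points, where the boundary condition changes type. In terms of the underlying s-holomorphic fermionic observables this means controlling their behavior, and pinning down the correct singular exponents and boundary values of the limiting solutions of the associated Riemann boundary value problems, near a boundary disorder and near a corner where $\free$ abuts a fixed arc; this is exactly where the uniform estimates for s-holomorphic functions near rough boundaries are needed. Once these local ingredients are in place, the reduction above turns Theorem \ref{thm: Intro_3} into a formal consequence of the results already established for standard boundary conditions.
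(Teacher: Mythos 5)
There is a genuine gap at the heart of your reduction: inserting the boundary disorders $\mu_{a_{1}}\cdots\mu_{a_{2p}}$ into the model with standard (monochromatic-along-components) boundary conditions produces the \emph{locally monochromatic} boundary conditions $\bcdd$ of the paper's Lemma \ref{lem: corr_to_obs_for_Thm_3} -- and nothing more. These fix the separation points and hence the relative signs of arcs \emph{within} each boundary component, but the sign assignments of distinct boundary components remain independently summed over. Your claim that the missing inter-component data ``is recorded in the choice of branch of the disorder spinor'' is false: changing the homology class of the disorder line $\gamma$ by a cycle encircling a hole flips the spins of that entire boundary component, and since the standard-boundary-conditions measure sums over both assignments of each component, this flip is a measure-preserving bijection and $\E_{\Od}(\mu_{\gamma}\,\Op)$ is \emph{unchanged} (there are no bulk spins inside the hole whose sign could be affected). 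So the ``branch'' freedom you invoke is vacuous, and your exact identity $Z(\Od,\bcd)=\const\cdot Z_{\mathrm{ref}}\cdot\E_{\mathrm{ref}}(\mu_{a_{1}}\cdots\mu_{a_{2p}})\big|_{\bcond}$ cannot hold. The paper's own simplest example exposes this concretely: for a doubly connected $\Omega$ with $\bcond_{1}=\{+,+\}$ and $\bcond_{2}=\{+,-\}$ there are \emph{no} separation points at all ($p=0$), your partition-function formula degenerates to the ratio $1/1=1$, yet Theorem \ref{thm: ann_explicit} shows $\RPF(\Omega,\bcond_{1},\bcond_{2})$ is a nontrivial conformal invariant built from $\ccor{\spIn\spOut}$.

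What is missing is exactly the paper's second reduction step, which cannot be absorbed into disorder insertions: to pass from $\bcdd$ to $\bcd$ one places an auxiliary \emph{spin} $\sigma_{v_{n+i}}$ on the fixed part of each relevant boundary component, expands the indicator of the prescribed component signs in the Fourier--Walsh basis (\ref{eq: FW_indicator}), and writes $\E_{\Od}^{\bcd}(\Op)$ as the ratio (\ref{eq: FW_correlation}) of $\alpha_{S}$-weighted sums of $\E_{\Od}^{\bcdd}(\Op\,\sigma_{S})$. This forces two ingredients absent from your proposal: convergence of correlations containing \emph{boundary} spins (Theorem \ref{thm: spin_convergence} together with Proposition \ref{prop: spin_bdry_behavior} and Lemma \ref{lem: Dobrushin}, proved by probabilistic FKG/GHS/RSW arguments, not by fermionic analysis), and the non-degeneracy $\sum_{S}\alpha_{S}^{\bcond}\corr{\sigma_{S}}{\Omega}{\tilde{\bcond}}\neq0$ (Lemma \ref{lem: bcond_well_defined}, again via RSW), which is what makes the limits (\ref{eq: Intro_thm_3}) and the partition-function ratio well defined. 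Your first step and your cancellation mechanism for boundary renormalizations are sound and do match the paper (the disorders-at-separation-points identity is Lemma \ref{lem: corr_to_obs_for_Thm_3}, and the cancelling local factors are Lemma \ref{lem: Clements_clever_lemma}, supplemented by the non-vanishing Lemma \ref{lem: bc_tilde_nonvanish} for the continuum denominator, which you should also not leave as a hedge), but as written your argument proves convergence only for $\tilde{\bcond}$-type boundary conditions and cannot reach $\bcond$ itself, nor the second claim of the theorem.
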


Note that for simply-connected domains, the second claim of the theorem
would be vacuous, since $\bcond_{1}$ and $\bcond_{2}$ could only
differ by a global spin flip. In the simplest interesting example,
$\Omega$ is a doubly connected domain, $\bcond_{1}$ is $\plus$
everywhere, and $\bcond_{2}$ is $\plus$ on one boundary component
and $\minus$ on the other. We compute explicitly $I(\Omega,\bcond_{1},\bcond_{2})$
for this case in Theorem \ref{thm: ann_explicit}.

In fact, the results and techniques of this paper allow one to extend
the scope of theorems Theorems \ref{thm: Intro_1} \textendash{} \ref{thm: Intro_3},
in particular, as follows:
\begin{itemize}
\item it is possible to include disorders and fermionic observables in Theorem
\ref{thm: Intro_3}; the only additional technicality is that now,
the boundary conditions must be prescribed on the boundary of the
double cover of $\Omega$ ramified at each $u_{1},\dots,u_{m},z_{1}^{\bullet},\dots,z_{k}^{\bullet}$.
In other words, the result will depend on the homology class of the
disorder set $\gamma$ modulo 2, and the dependence will not be limited
to a sign change. 
\item we can also add spins on the regular parts of the free boundary arcs.
The exponent $\Delta$ in the statements of Theorem \ref{thm: intro_2}
and Theorem \ref{thm: Intro_3} will pick additional $\frac{1}{2}$
per each boundary spin.
\item we also can place disorders on free boundary arcs or at regular parts
of fixed boundary arcs. The disorders at free boundary arcs have conformal
weight $0$, and the disorders at regular parts of fixed boundary
arcs have conformal weight $\frac{1}{2}$.
\end{itemize}
To lighten the presentation, we decided not to include these into
the main results;%
{} we hope that the the interested reader will be able to find out how
to adapt our methods to these cases.

\subsection{Fusion rules for continuous correlations}

Our next topic is \emph{fusion rules, }also known as \emph{operator
product expansions} (OPE), describing the asymptotic behavior of continuous
correlation functions as some of the points merge together. Although
in simply-connected and double connected domains they can, at least
in principle, be checked explicitly, it is of interest to obtain a
more robust, conceptual proof, that has an added benefit of working
in multiply-connected domains:
\begin{thm}
The continuous correlation functions obtained as limits in Theorems
\ref{thm: Intro_1} \textendash{} \ref{thm: Intro_3} satisfy the
fusion rules (or operator product expansion) as predicted by Conformal
Field theory. See Section \ref{sec: ccor_fusion} for the precises
statements.
\end{thm}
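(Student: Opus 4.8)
\section*{Proof proposal}

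The plan is to prove each fusion rule not from the explicit formulae of Section~\ref{sec:Explicit-formulae}, but from the \emph{analytic characterization} of the continuous correlations together with the \emph{uniqueness} of solutions to the associated boundary value problems. The starting observation is that each $\ccor{\Op(\sigma,\mu,\en,\psi)}_{\Omega}$ is encoded in a holomorphic spinor observable $f=f_{\Op}$ carrying prescribed singularities at the marked points --- square-root branch points at spins and disorders, simple poles at fermions, and double poles at energies --- with Riemann-type boundary conditions dictated by the type (fixed or free) of each boundary arc. The crucial structural fact is that the \emph{data} of this boundary value problem (the singularity locations, the local coefficients, and the boundary conditions) depends real-analytically on the positions of the marked points, and that the problem has a unique solution depending continuously on that data. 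Consequently $\ccor{\Op}_{\Omega}$ is regular enough to admit genuine asymptotic expansions as marked points merge, and these expansions are what the fusion rules describe.

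Concretely, to establish a rule such as $\sigma\times\sigma=1+\en+\dots$, I would fix all fields except a pair $\sigma_{v_{1}},\sigma_{v_{2}}$ and study the collision $v_{2}\to v_{1}=:v$ as $\eps:=v_{2}-v_{1}\to 0$, writing $\Op'$ for the remaining fields. First I would pin down the \emph{leading singular coefficient} by a local analysis of $f$ near $v$: the short-distance behavior is universal and, by the conformal covariance of Theorem~\ref{thm: intro_2}, is governed by the full-plane two-point function, which is explicit and fixes both the universal prefactor $|v_{1}-v_{2}|^{-\frac{1}{4}}$ and the leading (identity-channel) OPE constant. The subleading structure is then obtained by expanding the boundary-value-problem data in powers of $\eps$: since two coalescing square-root branch points locally combine into a single-valued contribution plus a field-insertion term, the $\eps$-expansion of $f$ is a power series whose coefficients are themselves holomorphic spinors, each solving a boundary value problem with strictly fewer or milder singularities at $v$.

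The crux is to \emph{identify} those expansion coefficients. Here I would argue that the $k$-th coefficient satisfies exactly the defining problem of the correlation $\ccor{X_{k}\,\Op'}_{\Omega}$, where $X_{k}\in\{1,\en,\dots\}$ is the fused primary predicted by Conformal Field Theory: it carries the same boundary conditions, the same singularities at the untouched marked points, and at $v$ precisely the singularity of $X_{k}$ (none for the identity, a double pole for $\en$). By the uniqueness statement for these problems, the coefficient must coincide with $\ccor{X_{k}\,\Op'}_{\Omega}$ up to a scalar, and that scalar is the OPE coefficient, again fixed by matching to the full plane; thus $\ccor{\sigma_{v_{1}}\sigma_{v_{2}}\,\Op'}_{\Omega}=|v_{1}-v_{2}|^{-\frac{1}{4}}\bigl(c_{1}\,\ccor{\Op'}_{\Omega}+c_{\en}\,|v_{1}-v_{2}|\,\ccor{\en_{v}\,\Op'}_{\Omega}+\dots\bigr)$. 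The identical scheme covers $\mu\times\mu=1+\en$, the mixed $\sigma\times\mu=\psi+\overline{\psi}$ (two branch points fusing into a pole), $\sigma\times\psi=\mu$, and the fermionic fusions $\psi\times\psi$ and $\en\times\en=1$; the fermionic cases are the easiest, since pole-type singularities merge without altering the underlying spinor double cover.

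The main obstacle I anticipate is precisely the merging of the two square-root \emph{branch points} in the spin and disorder fusions. As $v_{1},v_{2}$ coalesce the double cover on which the spinor observable lives degenerates, and one must show that the limiting object is genuinely single-valued near $v$ in the identity channel --- with a controlled power-series expansion --- rather than merely bounded. Making this rigorous requires a careful local analysis of $f$ on the pinching double cover: establishing the full-plane local model, bounding the remainder uniformly in the data, and verifying that the extracted coefficients satisfy the claimed boundary conditions. This is exactly where the Cauchy-type formula for spinor observables of Lemma~\ref{lem: Cauchy_spinors}, together with the uniqueness of the boundary value problems, will do the essential work.
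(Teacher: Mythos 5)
Your overall strategy is essentially the paper's: it, too, avoids explicit formulae and proves the fusion rules from the analytic characterization of the observables plus uniqueness of the Riemann boundary value problems, handling the pinching double cover by a local compensating factor. In the paper this compensator is made explicit as $\chi_{u,v}(z)=\left((z-u)/(z-v)\right)^{\frac{1}{2}}$ in Proposition \ref{prop: unelegant_but_useful}, whose normal-families-plus-uniqueness argument delivers exactly your step "the expansion coefficients solve boundary value problems with milder singularities, hence are identified by uniqueness" (Propositions \ref{prop: u_to_v-1} and \ref{prop: v_to_v}); the free-fermion rules follow from Pfaffian expansions of the definitions, as you suggest, and the general $\Op$ is reduced to few-fermion cases the same way.

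There is, however, a concrete gap in your treatment of $\sigma\times\sigma$ and $\mu\times\mu$. The correlation $\ccor{\sigma_{v_{1}}\sigma_{v_{2}}\Op'}_{\Omega}$ is \emph{not} itself the solution of a boundary value problem that you can expand in $\eps$: by (\ref{eq: def_spincorr_up_to_constant}) it is only defined through integration of the logarithmic-derivative coefficients $\coefA$ of Definition \ref{def: coefA}, up to unknown constants of integration. The fusion expansion therefore requires two extra steps your plan omits: (i) the asymptotics of $\coefA_{\double{u,v}\cdot\cvr}(u)$ and $\coefA_{\double{u,v}\cdot\cvr}(v_{q})$ as $u\to v$ (Proposition \ref{prop: coefA_vary}, extracted from the observable asymptotics by contour integration), and (ii) an integration argument (Lemma \ref{lem: spin_to_spin}) showing that the resulting constant $C_{n}$ is independent of \emph{all} marked points \textemdash{} this uses $\eta_{uv}^{2}=-\eta_{vu}^{2}$ to kill $(\pa_{u}+\pa_{v})\log\ccor{\sigma_{u}\sigma_{v}\scvr}$ and the stability of $\coefA$ at the spectator points. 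Your substitute \textemdash{} fixing the OPE constant "by matching to the full plane" \textemdash{} does not work in a general multiply connected domain: there is no explicit formula to match against, conformal covariance cannot transport a constant from $\C$ to a conformally inequivalent domain, and the normalization of $\ccor{\sigma_{v_{1}}\ldots\sigma_{v_{n}}}$ you would match against is precisely what is being determined (the coherent normalization of Proposition \ref{prop: multiplicative-normalisation} is itself \emph{proved} by this fusion analysis, so assuming it is circular). The position-independence of $C_{n}$ is the missing idea; note also that the energy channel then appears through $f_{\cvr}^{\star}(v,v)$ in the $\coefA$ expansion rather than as a spinor solving a "fused" boundary value problem. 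A secondary slip: Lemma \ref{lem: Cauchy_spinors} is a discrete tool and plays no role in the continuum argument, where coefficient extraction is done by residue computations; the only genuinely discrete detour the paper takes is for the second-order term of $\psi\times\mu$, via the local argument of Theorem \ref{thm: Conv_both_near_spin}.
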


The complete list of rules is given in Theorems \ref{thm: fusion rules-1},
\ref{thm: fusion rules-2}, \ref{thm: fusion rules-3} below; to give
one example, we prove that, as $\wo\to\wt$ away from other marked
points, one has 
\[
\CorrO{\Op\sigma_{\wone}\sigma_{\wtwo}}=|\wone-\wtwo|^{-\frac{1}{4}}\left(\CorrO{\Op}+\frac{1}{2}|\wone-\wtwo|\CorrO{\Op\en_{\wtwo}}+o(|\wone-\wtwo|)\right),
\]
where, as before, $\Op$ stands for any expression of the form (\ref{eq: op_any}). 

\subsection{Corollaries for SLE}

As direct application of the convergence results in the present paper,
we obtain new results on convergence of interfaces in the Ising model
in multiply connected domains. Let $\Omega$ be a multiply connected
domain with boundary conditions $\bcond$ given by subdivision of
the boundary into finitely many $\plus$, $\minus$, and $\free$
arcs. Consider all the boundary points $b_{1},\dots,b_{\hat{q}}$
separating a $\plus$ arc from a $\minus$ one, and add to them arbitrarily
one of the endpoints for each $\free$ boundary act that separates
a $\plus$ arc from a $\minus$ one. Let $b_{1},\dots,b_{q}$ be the
resulting collection of points. If $\pa\Omega$ is analytic near $b_{1},\dots,b_{q}$,
we associate to these data the \emph{SLE partition function} \emph{
\begin{equation}
Z(\Omega,\bcond):=\ccor{\psi_{b_{1}}\dots\psi_{b_{\hat{q}}}\psi_{b_{\hat{q}+1}}^{\flat}\dots\psi_{b_{q}}^{\flat}}_{\Omega}\cdot\left(\sum_{\bcond'}I(\Omega,\bcond',\bcond)\right)^{-1}.\label{eq: SLE_partition_function}
\end{equation}
}Here the first factor is the limit of (normalized) continuous fermion
correlation at bulk points tending to $b_{1},\dots,b_{q}$ (see Definition
\ref{def: corr_flat}); $I(\Omega,\bcond,\bcond')$ is as in Theorem
\ref{thm: Intro_3}, and the sum is over all boundary conditions $\bcond'$
that can be obtained from $\bcond$ by flipping all the spins along
one or several connected components of $\pa\Omega$. The second term
can be also written as $\sum_{S}\alpha_{S}^{\bcond}\corr{\sigma_{S}}{\Omega}{\tilde{\bcond}}$,
see Section \ref{subsec: Def_corr_cont} for definitions.
\begin{cor}
\label{cor: SLE}Let $\Omega$ be a finitely connected domain with
boundary conditions $\bcond$ as above, and such that $\hat{q}>0$,
and let $\Od,\bcond^{\delta}$ be a discrete approximation to $(\Omega,\bcond).$
Then, the interface between pluses and minuses in $\Od$ starting
at $b_{1}^{\delta}$ converges locally to $SLE_{3}$ in $\Omega$
with partition function $Z(\Omega,\bcond)$ given by (\ref{eq: SLE_partition_function}).
\end{cor}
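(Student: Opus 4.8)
The plan is to follow the standard martingale-observable strategy for proving convergence of a discrete interface to $SLE$, adapting it to the multiply connected setting via Theorems \ref{thm: Intro_1}--\ref{thm: Intro_3}. First I would set up the Loewner framework: after uniformizing $\Omega$ to a reference domain, parametrize the interface $\gamma^{\delta}$ emanating from $b_{1}^{\delta}$ by capacity and let $\sF_{t}$ be the filtration generated by $\gamma^{\delta}[0,t]$. The crucial structural input is the domain Markov property of the Ising model: conditionally on an initial arc $\gamma^{\delta}[0,t]$, the model in the slit domain $\Od\setminus\gamma^{\delta}[0,t]$ is again a critical Ising model with $\plus$/$\minus$/$\free$ boundary conditions $\bcd_{t}$ inherited from $\bcd$ together with the two sides of the slit. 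Consequently, for a fixed bulk point $z$,
\[
M_{t}^{\delta}(z):=\E_{\Od}(\psi_{z}\,\Op\mid\sF_{t})=\E_{\Od\setminus\gamma^{\delta}[0,t]}(\psi_{z}\,\Op)
\]
is an $\sF_{t}$-martingale by construction; the natural choice is the (suitably normalized) fermionic correlation whose continuous counterpart enters the definition of $Z(\Omega,\bcond)$ in (\ref{eq: SLE_partition_function}).

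Second, I would establish convergence and precompactness. By Theorem \ref{thm: intro_2}, together with the conformal covariance recorded after it, the renormalized observable $\delta^{-\Delta}M_{t}^{\delta}$ divided by the discrete partition function converges, uniformly on compact subsets of the slit domain away from the tip and the other marked points, to a continuous observable $M_{t}$ which is a conformally covariant function of the Loewner map $g_{t}$ and of the driving point $W_{t}$. Precompactness of the family $\{\gamma^{\delta}\}$ in the topology of curves modulo reparametrization is imported from the crossing (Russo--Seymour--Welsh type) estimates for the critical FK--Ising model, which verify the Kemppainen--Smirnov conditions and guarantee that every subsequential limit is almost surely a continuous Loewner curve with a well-defined continuous driving function.

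Third, I would identify the driving process. Passing to a subsequential limit and using uniform integrability, the limit $M_{t}$ is a continuous local martingale; expanding it near the tip through its known singular behaviour and applying It\^o's formula forces the driving function to solve $dW_{t}=\sqrt{3}\,dB_{t}+3\,\pa_{z}\log Z(\Omega_{t},\bcond_{t})\,dt$, where $\Omega_{t}=\Omega\setminus\gamma[0,t]$ and $Z$ is the partition function (\ref{eq: SLE_partition_function}). The diffusion coefficient $\sqrt{3}$ reflects the conformal weight $\tfrac{1}{2}$ of the fermion (equivalently $\kappa=3$), while the drift is exactly the logarithmic derivative of $Z$, i.e.\ the Girsanov tilt of chordal $SLE_{3}$ by $Z$. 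Since all subsequential limits satisfy the same SDE, the full family converges locally to $SLE_{3}$ in $\Omega$ with partition function $Z(\Omega,\bcond)$.

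The main obstacle, and the place where the multiply connected geometry genuinely enters, is the martingale property and convergence of the observable \emph{in the presence of the partition-function factor} $\bigl(\sum_{\bcond'}\RPF(\Omega,\bcond',\bcond)\bigr)^{-1}$ of (\ref{eq: SLE_partition_function}). In a multiply connected domain this factor is a genuine sum over the spin-flip sectors attached to the boundary components, and one must check both that the corresponding discrete ratio $Z(\Od\setminus\gamma^{\delta}[0,t],\bcd_{t})$ is an honest $\sF_{t}$-martingale and that it converges, which relies on the convergence of partition-function ratios furnished by Theorem \ref{thm: Intro_3}. Controlling the interaction between the fermionic correlation and this combinatorial sum---and ensuring that the drift term it produces is well-defined and continuous up to the relevant stopping time---is the technical heart of the argument; by contrast, the diffusion coefficient and the \emph{local} (rather than global) nature of the asserted convergence make the remaining steps comparatively routine.
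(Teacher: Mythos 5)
Your overall skeleton---a fermionic martingale observable, precompactness via crossing estimates, It\^o identification of the driving term, and the Radon--Nikodym tilt by the ratio of partition functions supplied by Theorem~\ref{thm: Intro_3}---matches the paper's strategy. But there is a genuine gap at precisely the step you dismiss as routine: the convergence of the observable. Your observable carries a fermion at the moving tip $b_{1}^{\delta}(t)$ of the slit and fermions at $b_{2}^{\delta},\dots,b_{q}^{\delta}$, all boundary points. The slit is a rough boundary with no regularity whatsoever, and Corollary~\ref{cor: SLE} deliberately imposes no regularity of $\pa\Omega$ near $b_{1},\dots,b_{\hat{q}}$ (this is the stated improvement over \cite{IzyurovMconn,IzyurovFree}). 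Consequently there is no fixed power $\delta^{-\Delta}$ that renormalizes these boundary fermions, and Theorem~\ref{thm: intro_2}---a statement about marked points \emph{in the bulk}---does not apply to them. The only available convergence result at such points is Lemma~\ref{lem: Clements_clever_lemma}, which gives convergence only after multiplication by unknown, geometry-dependent local factors $\normLoc{z}$. Your claim that ``$\delta^{-\Delta}M_{t}^{\delta}$ divided by the discrete partition function converges, uniformly \dots'' over all realizations of $\gamma_{t}$ therefore has no justification as stated; the uniformity over $\gamma_{t}$ is exactly what is at stake.

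The paper's resolution---and the reason the regularity hypotheses can be dropped---is to build the martingale observable as the double ratio~(\ref{eq: mart_obs}): a ratio of two correlations in $\Od\setminus\gamma_{t}$, so that the factors $\normLoc{b_{1}^{\delta}(t)}$ and $\normLoc{b_{i}^{\delta}}$ cancel within the slit domain, multiplied by the reciprocal ratio in $\Od$, so that the overall normalization is independent of $\gamma_{t}$ and of the local boundary geometry (the denominator is nonzero by Lemma~\ref{lem: bc_tilde_nonvanish}). The martingale property is then verified not abstractly ``by construction'' but by rewriting the first ratio, via the order--disorder formalism, as $\E_{\Od,\tilde{\bcond}^{\delta}}(\psi_{z}\sigma\mu\mid\gamma_{t})$, a conditional expectation of an honest random variable measurable outside a cross-cut. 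This also explains the paper's two-stage structure: the convergence is first proved for the auxiliary boundary conditions $\tilde{\bcond}$, where the observable is a pure fermionic correlation, and only afterwards tilted to $\bcond$ by the partition-function ratio and Girsanov---the one step of the argument you did identify correctly. Without the ratio construction, your proposal fails exactly at what you yourself call its ``technical heart.''
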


The statement of the last Corollary means that the driving process
of the chordal Loewner chain describing the interface after a mapping
to an analytic subdomain of the upper half-plane satisfies $b_{1}(t)=\sqrt{3}B_{t}+3\partial_{b_{1}(t)}\log(Z(g_{t}(\Omega),g_{t}(\bcond))$;
we refer the reader to \cite{IzyurovMconn} for details. The result
of Corollary \ref{cor: SLE} is known for simply connected domains,
see \cite{IzyurovFree}. In \cite{IzyurovMconn}, multiply connected
domains were considered; however, only \emph{locally monochromatic}
boundary conditions were treated, which are $\fixed$ boundary conditions
with disorder insertions. The simplest new examples covered by Corollary
\ref{cor: SLE} are doubly connected domains with Dobrushin boundary
conditions on one boundary component and $\plus$ or $\free$ on the
other one. We give explicit expression for the partition function
in those cases in Example \ref{exa: annulus-SLE-PF}.

In \cite{IzyurovMconn,IzyurovFree}, regularity assumptions were imposed
on $\pa\Omega$ near $b_{1},\dots,b_{\hat{q}}$. This is a disadvantage
from the multiple SLE point of view, since when several interfaces
are grown from $b_{1},\dots,b_{\hat{q}}$ simultaneously or interchangeably,
the regularity is lost. Here we observe that these assumptions are
superfluous, so, we do not impose them in Corollary \ref{cor: SLE}.
The idea is that it is enough to prove convergence of martingale observable
in $\Od\setminus\gamma_{t}$ under \emph{some normalization} that
does not depend on $\gamma_{t}$, and this can be achieved by constructing
a suitable ratio of observables in different domains. A similar idea
was used in \cite{Hongler_Kytola}.

\subsection{Outline of the proofs and the organization of the paper}

Section \ref{sec: combinatorics} is devoted to the combinatorics
of the Ising model in the discrete. After introducing notation, we
define the disorder variables and discuss their spinor properties.
Then, we introduce the fermionic spinor observable and establish its
basic properties, namely, the discrete holomorphicity and the boundary
conditions. We then introduce the multi-point observable and discuss
its properties - mainly, the Pfaffian identity. Finally, we express
all the (discrete) correlations featuring in Theorems \ref{thm: Intro_1}
\textendash{} \ref{thm: Intro_3} in terms of spin correlations and
fermionic observables, possibly with fermions adjacent to spins or
to the boundary.

In Section \ref{sec: ccor}, we introduce the Riemann boundary value
problem that feature in the definition of continuous fermionic observables.
We collect their properties and develop a convenient formalism to
formulate all the necessary convergence results for fermionic observables,
and to give the definitions of the correlations in the continuum. 

In Section \ref{sec: proofs_of_convergence_theorems}, we prove convergence
theorems for fermionic observables that were formulated in Section
\ref{sec: ccor}. First, using discrete holomorphicity with respect
to the position of fermions and the uniqueness in the Riemann boundary
value problem, we prove ``convergence in the bulk'', i. e., when
all marked points are at a definite distance from each other and the
boundary. Second, we prove that if a discrete holomorphic function
has a ``discrete singularity'' of a type that arises from the insertion
of spin, disorder, of another fermion, and converges uniformly on
compact subsets of a punctured neighborhood of the singularity, then
its value next to the singularity must have a particular asymptotics
governed by the expansion terms of the limit around the singularity.
We also prove corresponding results for boundary points. 

In Section \ref{sec: corr-continuum}, we give general definitions
of the continuous correlations functions, i. e., the quantities appearing
in the right-rand sides of Theorems \ref{thm: Intro_1} \textendash{}
\ref{thm: Intro_3}, and prove those theorems. To a large extend,
these theorems become corollaries of the convergence theorems for
fermionic observables, formulated in Section \ref{sec: ccor} and
proven in Section \ref{sec: proofs_of_convergence_theorems}. The
only missing ingredient is a probabilistic argument allowing one to
fix the normalization of spin correlation.

In Section \ref{sec: ccor_fusion}, we formulate and prove fusion
rules. 

In Section \ref{sec:Explicit-formulae}, we explicitly solve the Riemann
boundary value problems for the continuous fermionic observable in
the upper half-plane and in annuli, thus providing explicit expressions
for continuous correlation functions in simply- and doubly connected
domains.

In the Appendix, we prove some technical results used in the paper.
Those results are mostly contained in the literature, but not it the
form that is most suitable for our purposes, or not with the simplest
proofs. We also prove Corollary \ref{cor: SLE}. %
.

\subsection*{Acknowledgements}

K. I. is partially supported by Academy of Finland via academy project
``Critical phenomena in dimension 2'' and Center of Excellence ``Randomness
and structure''. C. H. is supported by the ERC Starting grant CONSTAMIS.

\newpage{}

\section{Combinatorics of the critical Ising model on $\protect\Z^{2}$}

\label{sec: combinatorics}

\subsection{Discrete domains, double covers and the Ising model}

\label{subsec: Ising_definitions} We start by fixing the notation
and recalling basic definitions. A \emph{discrete domain} $\Od$ of
mesh size $\delta$ is a finite connected subset of the square lattice
\[
\Cd=\Cprim:=\sqrt{2}e^{\frac{i\pi}{4}}\delta\cdot\Z^{2}.
\]
This lattice is the standard square lattice, rotated by $45^{\circ}$
and scaled, so that $\delta$ denotes half the diagonal of a lattice
face; we use the notation $\Cprim$ in rare situations when we want
to emphasize that we consider the primal lattice. By the \emph{dual
domain} $\Odual$ of $\Od=\Oprim$ we mean the set of all vertices
of the dual lattice 
\[
\Cdual:=\Cd+\delta
\]
that are incident to at least one vertex in $\Od$. By the \emph{boundary}
$\partial\Od$ of $\Od$, we mean the set of edges of $\Cdual$ that
separate a vertex $v\in\Od$ from a vertex $v'\in\Cd\setminus\Od$.
We consider each domain $\Od$ to be equipped with a partition of
$\pa\Od$ into two subsets $\fixed$ and $\free$, which we understand
as boundary conditions and omit from the notation. Throughout this
section, the domain $\Od$ and the boundary conditions will be fixed. 

Below we assume that $\pa\Od$ is a disjoint collection of simple
loops (drawn on the dual lattice), and, moreover, that no connected
component of $\Cd\setminus\Od$ can be disconnected by deleting one
edge. These assumptions are made for the sole purpose of a notational
convenience when extending domains and can be lifted by the cost of
heavier definitions.

We denote by $\free^{\bullet}\subset\Odual$ the set of all dual vertices
incident to the edges of $\free$, and by $\fixed^{\circ}$ the set
of all vertices of $\Cd\setminus\Od$ incident to the edges of $\fixed$.
By a \emph{free boundary arc}, we mean a connected component of $\free$.

The \emph{Ising model with standard boundary conditions} on $\Od$
is a random assignment of spins $\sigma_{v}=\pm1$ to the vertices
of $\Cd$, subject to the condition that $\sigma_{v}$ is constant
on each connected component of the complement $\Cd\setminus\Od$.
The probability of a spin configuration $\sigma$ is given by 
\[
\P[\sigma]=\cZ^{-1}\exp\biggl[\beta\sum_{(vv')}\sigma_{v}\sigma_{v'}\biggr],
\]
where the sum is over all pairs $v\sim v'$ of nearest-neighbor vertices
of $\Cd$ such that at least one of $v,v'$ belongs to $\Od$, and
$(vv')\cap\free=\emptyset$\textit{. }As usual, $\cZ$ stands for
the partition function of the model: $\cZ:=\sum_{\sigma}\exp\big[\beta\sum_{(vv')}\sigma_{v}\sigma_{v'}\big]$.
Throughout this paper, we will be dealing with the Ising model at
its critical point, that is, we set 
\begin{equation}
\beta:=\beta_{\mathrm{crit}}=\tfrac{1}{2}\log\big(\sqrt{2}+1\big).\label{eq: beta-crit}
\end{equation}
Note that the above probability measure is invariant under the global
spin flip $\sigma\mapsto-\sigma.$ 

A \emph{double cover} $\cvr$ of a discrete domain $\Omega^{\delta}$
is a two-to-one mapping 
\[
\cvr:\Ocvr\to\Od
\]
that locally preserves the graph structure (both edges and non-edges).
Given a double cover $\cvr$ and a point $z\in\Ocvr$, we denote by
$z^{*}$ another fiber over the same point of $\Od$. Note that a
double cover of a discrete domain is determined up to isomorphism
by the set of \textit{faces }at which it is ramified (``branching
points''). We therefore introduce a notation 
\[
\double{\uu}
\]
for the double cover ramified over faces $u_{1},\ldots,u_{m}\in\Odual$.
We do allow repetitions and use the natural convention that if a face
appears an odd (resp., even) number of times among $\uu$, then $\double{\uu}$
is ramified (resp., is not ramified) over this face. If $\cvr'=\double{u'_{1},\dots,u'_{m'}}$
and $\cvr''=\double{u''_{1},\dots,u''_{m''}}$, we introduce the notation
\[
\cvr'\cdot\cvr'':=\double{u'_{1},\dots,u'_{m'},u''_{1},\dots,u''_{m''}}.
\]

The following ``hands-on'' way to view double covers will be useful.
A subset $\gamma$ of edges of $\Odual$ is called a \textit{branch
cut} for $\cvr$ if $(\pa\gamma\mod2$) coincides with the set of
ramification points of $\cvr$. A branch cut induces an equivalence
relation on vertices of $\Ocvr$, with two vertices being ``on the
same sheet'' if some (equivalently, any) nearest-neighbor lattice
path connecting them in $\Ocvr$ intersects $\gamma$ an even number
of times. Given a branch cut, a vertex of $\Ocvr$ can be encoded
by a vertex of $\Od$ with an additional bit of information prescribing
the sheet, see Fig. \ref{fig: branch_cut}

\begin{figure}
\includegraphics[width=0.8\textwidth]{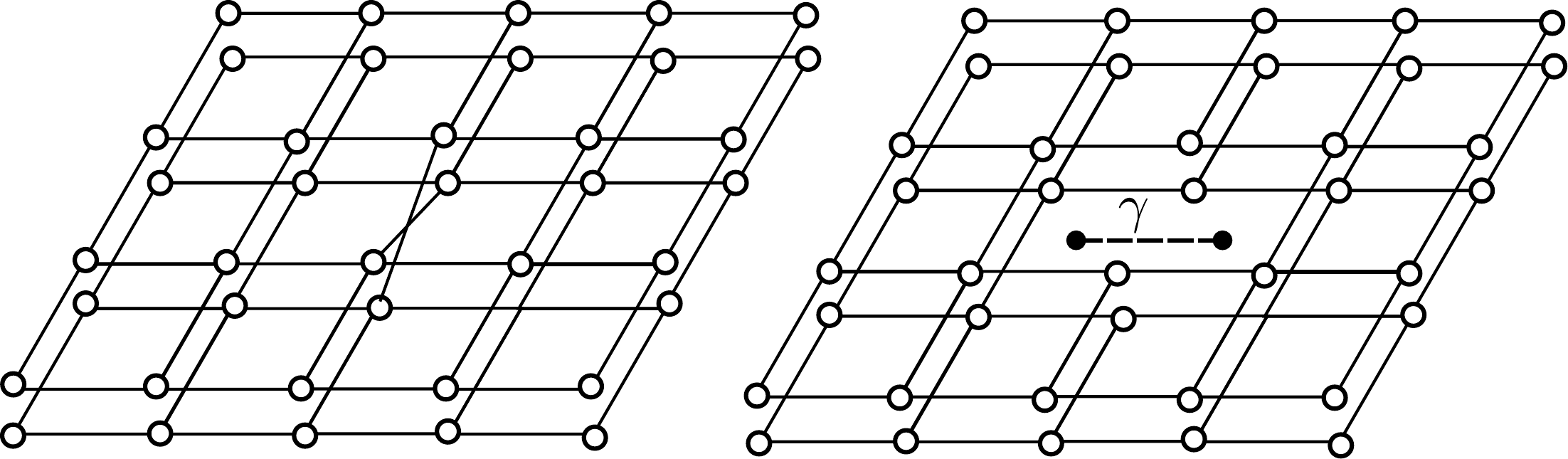}\caption{An example of a double cover of a $4\times5$ rectangle ramified over
two central faces and its decomposition into two sheets with a branch
cut.\label{fig: branch_cut}}
\end{figure}

The \emph{corner graph} $\Cgr(\Cd)$ of $\Cd$ is the square lattice\emph{
\[
\Cgr(\Cd):=\tfrac{1}{2}(\Cd+\delta)
\]
}whose vertices are corners (i. e., midpoints of segments connecting
a vertex of $\Cprim$ to an incident vertex of $\Cdual$). We denote
by $\Cgr(\Od)$ the subgraph of $\Cgr(\Cd)$ containing all the corners
incident to the vertices in $\Od$ or edges of $\fixed.$ By $\Cgr$($\Od_{\cvr}$)
we mean the double cover of $\Cgr(\Od)$ ramified at the ramification
points of $\cvr$. Given $z\in\Cgr(\Cd),$ we define $\fz\in\Cprim$
and $\vz\in\Cdual$ to be the vertex (respectively, the face) of $\Cd=\Cprim$
incident to $z$.

A \emph{$\cvr$-spinor} on $\Od$ is a function $f:\Ocvr\to\C$ such
that $f(v)=-f(v^{*})$ for all $v\in\Ocvr$. For example, 
\begin{equation}
f(v):=\sqrt{(v-u_{1})(v-u_{2})},\label{eq: x_spinor-example}
\end{equation}
where $v\in\Cd$ and $u_{1},u_{2}\in\Cdual$, is a $\double{u_{1},u_{2}}$-spinor
on $\Cd$. Another example is given by the Dirac spinor $\eta_{z}$,
see (\ref{eq: def_eta_discrete}), which is a spinor on the double
cover of $\Cgr(\Cd)$ ramified over every face of $\Cgr(\Cd)$ (i.e.,
over all vertices, faces and edges of $\Cd$).

We will often work with ``multi-valued functions'' defined on the
(Cartesian products of the) graphs introduced above. By a multi-valued
function of vertices $\mbox{\ensuremath{v_{1}},\ensuremath{\dots},\ensuremath{v_{n}\in\Oprim}}$,
faces $\mbox{\ensuremath{u_{1}},\ensuremath{\dots},\ensuremath{u_{m}\in\Odual}}$
and corners $\mbox{\ensuremath{z_{1}},\ensuremath{\dots},\ensuremath{z_{k}\in\Cgr}(\ensuremath{\Od})}$
of a discrete domain $\Od$, we mean a function defined on the universal
cover of (a subset of) the set 
\begin{equation}
(\Od)_{\circ,\bullet,\sCgr}^{n,m,k}:=(\Oprim)^{\times n}\times(\Odual)^{\times m}\times\Cgr(\Od)^{\times k},\label{eq: direct_product}
\end{equation}
endowed with the direct product graph structure (sometimes we also
use the shorthand notation $(\Od)_{\circ,\bullet}^{n,m}:=(\Od)_{\circ,\bullet,\sCgr}^{n,m,0}$
etc). In other words, to define a multi-valued function, one has to
prescribe its value at some point of (\ref{eq: direct_product}) and
define how it changes as one of its arguments moves to a neighboring
position on the corresponding graph. In fact, all our multi-valued
functions will be ``double-valued'', with two values differing only
by the sign. 
\begin{rem}
\label{rem: dbl-covers-general} It is worth noting that all the multi-valued
functions on $(\Od)_{\circ,\bullet,\sCgr}^{n,m,k}$ that we discuss
below can/should be viewed as spinors on relevant double covers of
their domains of definition. However, we want to emphasize that the
identifications of such double covers are not always trivial and include,
in particular, the passage from symmetric correlation functions (spin-disorders)
to anti-symmetric ones (fermions), see Lemma \ref{lem: anti-symmetry}.
\end{rem}

\subsection{The spin-disorder correlations }

\label{subsec: spin-disorder} We extend the definition of the Ising
model to double covers of $\Od$ as follows. Let $u_{1},\dots,u_{m}\in\Odual$
and $\cvr=\double{u_{1},\dots,u_{m}}$. 
\begin{defn}
The $\cvr$-Ising model on $\Od$ is a random $\cvr$-spinor $\sigma:\Ccvr\to\left\{ \pm1\right\} $,
subject to the condition that the spins $\sigma_{v}$ are constant
on each connected component of $\Ccvr\setminus\Ocvr$, with the probability
measure given by 
\[
\P_{\cvr}(\sigma)=\cZ_{\cvr}^{-1}\exp\biggl[\frac{\beta}{2}\sum_{(vv')}\sigma_{\v}\sigma_{\v'}\biggr],
\]
where the sum is taken over all pairs $v\sim v'$ of nearest-neighbor
vertices of $\Ccvr$ such that at least one of $v,v'$ belongs to
$\Ocvr$, and $(vv')\cap\free=\emptyset$; the additional factor $\frac{1}{2}$
is added to handle the two equal contributions $\sigma_{v}\sigma_{v'}=\sigma_{v^{*}}\sigma_{v'^{*}}$
coming from the two sheets of $\Ocvr$. As above, $\cZ_{\cvr}$ is
the partition function that makes the total mass equal to one. 
\end{defn}

\begin{rem}
Note that one recovers the original Ising model on $\Od$ if the double
cover $\cvr$ is trivial (i.e., if $\Ocvr$ is a disjoint union of
two copies of $\Od$). 
\end{rem}

Naturally, the spin correlations $\E_{\cvr}\left(\sigma_{\v_{1}}\ldots\sigma_{\v_{n}}\right)$
in the $\cvr$-Ising model on $\Ocvr$ are also $\cvr$-spinors, in
each of $\v_{q}$. It will be also useful to rewrite those correlations
in terms of the usual Ising model on $\Od$.
\begin{lem}
\label{lem: branch cut}Let $\gamma$ be any branch cut for $\cvr$,
and assume that $\v_{1},\ldots,\v_{n}$ are one the same sheet of
$\Ocvr$ with respect to $\gamma$. Then, 
\[
\E_{\cvr}\left(\sigma_{\v_{1}}\ldots\sigma_{\v_{n}}\right)\ =\ \cZ\cZ_{\cvr}^{-1}\cdot\E\left(\mu_{\gamma}\sigma_{\cvr(v_{1})}\ldots\sigma_{\cvr(v_{n})}\right),
\]
 where 
\[
\mu_{\gamma}:=\exp\biggl[-2\beta\sum_{(\v\v'):\thinspace(\v\v')\cap\gamma\neq\emptyset}\sigma_{\v}\sigma_{\v'}\biggr]
\]
 and the sum is taken over the subset of the same set of edges $(vv')$
of $\Cd$ as usual: at least one of $v,v'$ belongs to $\Od$ and
$(vv')\cap\free=\emptyset$.
\end{lem}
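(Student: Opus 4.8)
The plan is to exhibit an explicit weight-matching bijection between $\cvr$-spinor configurations on $\Ocvr$ and ordinary spin configurations on $\Od$, under which the Boltzmann weight of the $\cvr$-Ising model differs from that of the usual model by exactly the factor $\mu_{\gamma}$. Fix the branch cut $\gamma$. Since $(\pa\gamma \mod 2)$ is the ramification set of $\cvr$, the parity of the number of intersections with $\gamma$ of a nearest-neighbour path in $\Ocvr$ is a well-defined invariant, so $\gamma$ splits $\Ocvr$ globally into two sheets. By the hypothesis that $\v_{1},\dots,\v_{n}$ lie on the same sheet I may take that sheet as the reference sheet $\Od^{(0)}$; writing $v^{(0)}\in\Ocvr$ for the lift of $v\in\Od$ to $\Od^{(0)}$, a $\cvr$-spinor $\sigma$ is then determined by its restriction to $\Od^{(0)}$, since $\sigma_{v^{(1)}}=\sigma_{(v^{(0)})^{*}}=-\sigma_{v^{(0)}}$. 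This gives a bijection $\sigma\mapsto\tilde\sigma$ onto ordinary configurations via $\tilde\sigma_{\cvr(v)}:=\sigma_{v^{(0)}}$. Because $\cvr$ is ramified only over interior faces, the cover is trivial over $\Cd\setminus\Od$, and the $\cvr$-model constraint (constancy on components of $\Ccvr\setminus\Ocvr$) matches exactly the constraint that $\tilde\sigma$ be constant on components of $\Cd\setminus\Od$, so the bijection lands in the right state space (the exclusion of edges crossing $\free$ is preserved, as $\free$ pulls back trivially).

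Next I would carry out the weight computation. Each edge $(vv')$ of $\Cd$ has exactly two lifts in $\Ocvr$. If $(vv')\cap\gamma=\emptyset$ the two lifts join sheet $0$ to sheet $0$ and sheet $1$ to sheet $1$, and both contribute $\tilde\sigma_{v}\tilde\sigma_{v'}$; if $(vv')\cap\gamma\neq\emptyset$ the two lifts cross between the sheets and both contribute $-\tilde\sigma_{v}\tilde\sigma_{v'}$. After the factor $\frac{1}{2}$ in the definition of $\P_{\cvr}$, summing the two lifts of each edge yields $\beta\tilde\sigma_{v}\tilde\sigma_{v'}$ in the first case and $-\beta\tilde\sigma_{v}\tilde\sigma_{v'}$ in the second, so
\[
\frac{\beta}{2}\sum_{(vv')}\sigma_{v}\sigma_{v'}\;=\;\beta\sum_{(vv')}\tilde\sigma_{v}\tilde\sigma_{v'}\;-\;2\beta\sum_{(vv'):\,(vv')\cap\gamma\neq\emptyset}\tilde\sigma_{v}\tilde\sigma_{v'},
\]
the left sum being over the lifted edges of $\Ocvr$ and the right sums over edges of $\Cd$. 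Exponentiating identifies the $\cvr$-Boltzmann weight with $\mu_{\gamma}(\tilde\sigma)\,\exp\bigl[\beta\sum_{(vv')}\tilde\sigma_{v}\tilde\sigma_{v'}\bigr]$.

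Finally I would substitute into $\E_{\cvr}(\sigma_{\v_{1}}\dots\sigma_{\v_{n}})=\cZ_{\cvr}^{-1}\sum_{\sigma}\sigma_{\v_{1}}\cdots\sigma_{\v_{n}}\exp\bigl[\frac{\beta}{2}\sum\sigma\sigma\bigr]$. Since $\v_{1},\dots,\v_{n}$ lie on the reference sheet, $\sigma_{\v_{i}}=\tilde\sigma_{\cvr(v_{i})}$ with no extra sign, and the sum over $\cvr$-spinors becomes a sum over ordinary configurations $\tilde\sigma$ weighted by $\mu_{\gamma}\exp[\beta\sum\tilde\sigma\tilde\sigma]$; recognizing $\cZ=\sum_{\tilde\sigma}\exp[\beta\sum\tilde\sigma\tilde\sigma]$ yields the stated identity with prefactor $\cZ\cZ_{\cvr}^{-1}$. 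The one genuinely delicate point, which I would write out most carefully, is the crossing/non-crossing sign dichotomy together with the bookkeeping ensuring that the same-sheet hypothesis removes all stray signs from the insertions $\sigma_{\v_{i}}$; everything else is routine. As a by-product the same computation gives $\cZ_{\cvr}=\cZ\cdot\E(\mu_{\gamma})$, which is consistent and requires no separate argument.
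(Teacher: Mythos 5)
Your proof is correct and follows essentially the same route as the paper's: the restriction-to-a-sheet bijection between $\cvr$-spinors and ordinary configurations, the sign flip on edges crossing $\gamma$ absorbed into the factor $\mu_{\gamma}$, and the coefficient $\frac{1}{2}$ accounting for the two lifts of each edge. The paper states these steps more tersely, but your explicit edge-lift computation and the same-sheet sign bookkeeping for the insertions $\sigma_{\v_{i}}$ are exactly its content.
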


\begin{proof}
Note that, once a branch cut $\gamma$ is chosen, there is a bijection
between $\cvr$-spinors $\sigma$ and \emph{functions} $\sigma:\Cd\to\left\{ \pm1\right\} $,
given by the restriction of a spinor to a sheet. This bijection preserves
weights, except that the contribution of the edges intersecting $\gamma$
to the energy comes with the opposite sign. Therefore,
\[
\begin{aligned}\cZ_{\cvr}\cdot\E_{\cvr}\left(\sigma_{\v_{1}}\ldots\sigma_{\v_{n}}\right)\  & =\sum_{\sigma:\thinspace\cvr\operatorname-\mathrm{spinors}}\sigma_{\v_{1}}\dots\sigma_{\v_{n}}e^{\frac{\beta}{2}\sum_{(vv')}\sigma_{\v}\sigma_{\v'}}\\
 & =\sum_{\sigma:\thinspace\mathrm{functions}}\sigma_{\cvr(\v_{1})}\dots\sigma_{\cvr(v_{n})}e^{\beta\sum_{(vv')}\sigma_{\v}\sigma_{\v'}\:-\ 2\beta\sum_{(\v\v'):\thinspace(vv')\cap\gamma\neq\emptyset}\sigma_{v}\sigma_{\v'}}\\
 & =\ \cZ\cdot\E\left(\mu_{\gamma}\sigma_{\cvr(v_{1})}\ldots\sigma_{\cvr(v_{n})}\right),
\end{aligned}
\]
where the summation goes over the edges $(vv')$ of the double cover
$\Ccvr$ in the first line and the edges of $\Cd$ in the second;
this is why the additional factor $\frac{1}{2}$ (dis)appears.
\end{proof}
\begin{rem}
\label{rem: cover_auto}Every double cover has an automorphism $z\mapsto z^{*}$.
Our choice of boundary conditions ensures that all the correlations
we consider, e.g., $\E_{\cvr}\left(\sigma_{\v_{1}}\ldots\sigma_{\v_{n}}\right)$,
are invariant under this automorphism. In particular, all such correlations
vanish if $n$ is odd.
\end{rem}

So far we have defined the correlation functions
\begin{equation}
G_{\double{\uu}}(\vv):=\E_{\double{\uu}}\left(\svv\right)\label{eq: x_G-cvr-def}
\end{equation}
for each double cover $\cvr=\double{u_{1},\ldots,u_{m}}$ of $\Od$
separately. In other words, at the moment (\ref{eq: x_G-cvr-def})
is a collection of $\cvr$-spinors (in each of $v_{q}$), which are
invariant under the automorphisms $\left(v_{1},\ldots,v_{n}\right)\mapsto\left(v_{1}^{*},\ldots,v_{n}^{*}\right)$
of the corresponding double covers. We now explain how such a collection
can be combined into a \emph{single} multi-valued function of $\vv\in\Od=\Oprim$
and $\uu\in\Odual$. To start with, let $\tilde{v}_{1},\ldots,\tilde{\v}_{n}\in\Ocvr$
be any lifts of $v_{1},\ldots,v_{n}$ to the double cover $\cvr=\double{u_{1},\ldots,u_{m}}$
of $\Od$. Note that there is a canonical way to extend $G_{\cvr}\left(\tilde{v}_{1},\ldots,\tilde{\v}_{n}\right)$
along any nearest-neighbor path in $\left(\Cd\right)_{\circ,\bullet}^{n,m}$.
Namely, if a step of the path consists in moving one of the vertices
$v_{q}$ (in $\Od$), then one just lifts this move to the double
cover $\Ocvr$. If a face $u_{p}$ moves to a neighboring position
$u'_{p}$, then both $\cvr=\double{u_{1},\ldots,u_{p},\ldots,u_{m}}$
and $\cvr'=\double{u,\ldots,u'_{p},\ldots,u_{m}}$ can be also viewed
as isomorphic double covers of $\Cd\setminus\{(u_{p}u'_{p})^{\circ}\}$,
where $(u_{p}u'_{p})^{\circ}$ denotes the edge of $\Cd=\Cprim$ that
intersects $(u_{p}u'_{p})$. Through these identifications and the
isomorphism, the points $\tilde{v}_{1},\ldots,\tilde{\v}_{n}\in\Ocvr$
can be pulled to $v'_{1},\ldots,v'_{n}\in\Od_{\cvr'}$. Moreover,
the symmetry of the correlation functions under the global spin flip
ensures that $G_{\cvr'}\left(v'_{1},\ldots,v'_{n}\right)$ does not
depend on the choice of the isomorphism.

Another way to phrase this is as follows: if a branch cut $\gamma$
is chosen for $\cvr$ and the lift $\tilde{v}_{1},\ldots,\tilde{\v}_{n}$
of $v_{1},\ldots,v_{n}$ to $\Ocvr$ is prescribed by assigning a
label $1$ or $2$ (the sheet number) to each $v_{q}$, then we define
the lift $v'_{1},\ldots,v'_{n}$ to $\Od_{\cvr'}$ by taking the symmetric
difference of $\gamma$ and the edge $(u_{p}u'_{p})$ as the new branch
cut $\gamma'$, and keeping the labeling the same.
\begin{lem}
\label{lem: spinor-in-all-variables}Let $n$ be even. If $G_{\double{\uu}}\left(\vv\right)$
is extended along a closed loop in $(\Cd)_{\circ,\bullet}^{n,m}$
by the rule explained above, then its starting and final values differ
by the sign 
\begin{equation}
(-1)^{\sum_{p=1}^{m}\sum_{q=1}^{n}\wind(\v_{q}-\u_{p})},\label{eq: global sign}
\end{equation}
where $\wind(v_{q}-u_{p})$ is the winding number of the vector $v_{q}-u_{p}$
along the loop (i. e., the number of times it rotates around the origin).
\end{lem}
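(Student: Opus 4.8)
The plan is to recognize the asserted sign as the common value of two $\{\pm1\}$-valued homomorphisms on loops and to match them on a generating set. For a closed loop $L$ in the configuration graph, let $\chi(L)$ be the sign by which the extension rule multiplies the value of $G$ after traversing $L$, and let $\psi(L):=(-1)^{\sum_{p,q}\wind(v_q-u_p)}$ be the sign (\ref{eq: global sign}) claimed by the lemma. Both are multiplicative under concatenation of loops and equal $+1$ on a backtracking step; for $\chi$ this is immediate from the step-by-step nature of the rule, and for $\psi$ it follows from additivity of winding numbers. All winding numbers are well defined because a primal vertex never coincides with a dual vertex, and along any single edge of the configuration graph a moving coordinate never meets a fixed coordinate of the opposite type. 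Consequently $\chi$ and $\psi$ descend to homomorphisms $\pi_1\to\{\pm1\}$ of the configuration graph, and, the target being abelian, they factor through its first homology.

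The configuration graph is the Cartesian product of $n$ copies of $\Oprim$ and $m$ copies of $\Odual$. Adjoining all elementary squares as $2$-cells realizes the topological product of these factors, whose $H_1$ is spanned, by the K\"unneth formula, by single-factor face loops; hence $H_1$ of the graph itself is generated by these face loops together with the elementary-square boundaries. It therefore suffices to check $\chi=\psi$ on these two types of generators. If $v_q$ runs once around the face centred at a dual vertex $w$ (all else fixed), then, $G$ being a $\cvr$-spinor in $v_q$ ramified exactly over $u_1,\dots,u_m$, we get $\chi=-1$ precisely when $w\in\{u_1,\dots,u_m\}$, which is exactly when $\sum_p\wind(v_q-u_p)$ is odd; so $\chi=\psi$. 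If instead $u_p$ runs once around the face centred at a primal vertex $w$, the rule replaces the branch cut $\gamma$ by $\gamma\,\triangle\,\ell$ with $\ell$ the dual $4$-cycle traced by $u_p$; keeping sheet labels fixed, the sheet of each $v_q$ flips exactly when $v_q$ is separated by $\ell$, i.e. when $\wind(u_p-v_q)=\wind(v_q-u_p)$ is odd, so again $\chi=\psi$.

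The crux is the elementary square, in which a vertex move and a face move are interleaved. Running the branch-cut bookkeeping through the four steps $v_q\to v_q+a$, $u_p\to u_p+b$, $v_q+a\to v_q$, $u_p+b\to u_p$, the cut returns to $\gamma$ while the label of $v_q$ changes by $c_1\oplus c_3$, where $c_1$ and $c_3$ record whether the primal edge $(v_q,v_q+a)$ crosses $\gamma$, respectively $\gamma\,\triangle\,(u_p,u_p+b)$. As these cuts differ only on the dual edge $(u_p,u_p+b)$, this yields $\chi=(-1)^{[\,(v_q,v_q+a)\ \text{crosses}\ (u_p,u_p+b)\,]}$. On the other side, along the square the vector $v_q-u_p$ traverses the parallelogram $P=\{(v_q-u_p)+sa-tb:\,s,t\in[0,1]\}$, and $\psi=-1$ iff $0\in P$; but $0\in P$ says exactly that $v_q+sa=u_p+tb$ for some $s,t\in[0,1]$, i.e. that the two moving edges meet. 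Thus $\chi=\psi$ on every elementary square, and the two homomorphisms agree on all of $H_1$, which proves the statement.

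I expect the elementary-square analysis to be the delicate point: one must carefully interleave the vertex move (which follows the double cover) with the face move (which drags the cut at fixed labels), see that the net effect reduces to the single crossing of the two moving edges, and then identify that crossing with the event $0\in P$, so that the combinatorial sign and the winding sign coincide precisely in the ``adjacent'' configurations where a primal and a dual edge intersect. Pinning down the generating set of $H_1$ — that single-factor face loops and elementary squares suffice — is the other ingredient to state with care.
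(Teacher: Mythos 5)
Your proposal is correct, but it follows a genuinely different route from the paper. The paper's proof is a two-line argument: it first observes that the sign accumulated along a closed loop is determined by the loop alone (the extension rule is pure bookkeeping of branch cuts and sheet labels, never referencing the values of the spinors), and then evaluates that sign on the single explicit test family $G_{\cvr}:=\prod_{p=1}^{m}\prod_{q=1}^{n}\sqrt{v_{q}-u_{p}}$, whose monodromy along any loop is manifestly $(-1)^{\sum_{p,q}\wind(v_{q}-u_{p})}$. You instead treat both the combinatorial sign $\chi$ and the winding sign $\psi$ as homomorphisms $H_{1}\to\{\pm1\}$ of the configuration graph and match them on generators; this works because the lemma concerns loops in $(\Cd)_{\circ,\bullet}^{n,m}$ over the \emph{full} lattice, so each factor's cycle space is generated by unit-face boundaries and the Cartesian-product structure contributes only elementary squares (in a domain $\Od$ with holes this generating set would fail, so it is worth flagging that the full-plane setting is what licenses it). Your three generator checks are all sound, and your analysis of the mixed square---the cut returning to $\gamma$ while the label of $v_{q}$ flips exactly when the primal edge $(v_{q},v_{q}+a)$ crosses the dual edge $(u_{p},u_{p}+b)$, matched against $0$ lying in the parallelogram traced by $v_{q}-u_{p}$---is precisely right, including the observation that a primal and a dual edge either cross transversally (when dual to each other) or are disjoint, so no boundary degeneracies arise. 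What each approach buys: the paper's argument is far shorter but quietly requires that the explicit square-root family be consistent with the extension rule (in particular along $u_{p}$-moves, where the $n$-even/spin-flip invariance enters to make the rule well defined); your argument is longer but local and self-contained, and it isolates exactly where the nontrivial sign is created---the primal--dual edge crossing in the mixed square, the same geometric event underlying the braiding computations elsewhere in the paper. One cosmetic point: in your first generator check, ``$w\in\{u_{1},\dots,u_{m}\}$'' should be read with the paper's multiplicity-parity convention for repeated $u_{p}$'s, which your winding-parity formulation in fact already handles correctly.
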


\begin{proof}
Clearly, the sign change is entirely determined by the loop (more
precisely, by how the points $v_{1},\ldots,v_{n}$ change sheets after
tracing the loop) and hence it is independent of the spinors $G_{\cvr}$.
On the other hand, for the particular choice $G_{\cvr}:=\prod_{p=1}^{m}\prod_{q=1}^{n}\sqrt{v_{q}-u_{p}}$
the statement is obvious.
\end{proof}
It follows from Lemma \ref{lem: branch cut} that, up to the sign,
the quantity 
\begin{equation}
\E_{\Od}(\mu_{\gamma}\svv)=\cZ_{\cvr}\cZ^{-1}\cdot\E_{\double{\uu}}\left(\svv\right)\label{eq: E_sigma_mu_gamma}
\end{equation}
only depends on $\{\uu\}:=\pa\gamma\mod2$. We therefore adopt the
following notation.
\begin{defn}
We denote by
\[
\E_{\Od}(\muu\svv)
\]
the quantity (\ref{eq: E_sigma_mu_gamma}), where $\{\uu\}:=\pa\gamma\mod2$,
understood as a multi-valued function of $\uu,\vv$ according to Lemma
\ref{lem: spinor-in-all-variables}. 
\end{defn}

\subsection{The fermionic observable}

\label{subsec: fermions-discrete} We now introduce the key tool for
the study of the nearest-neighbor Ising model in 2D: the (two-point)
fermionic observable. Though it does not look as natural as spin (or
spin-disorder) correlations, it has a huge advantage of satisfying
a very simple propagation equation (e.g., see \cite[ and references therein]{ChelkakCimasoniKassel}),
which makes it amenable to the analysis in various setups. In the
context of our paper, this equation can be equivalently formulated
as the so-called \emph{s-holomorphicity} property, see Definition
\ref{def: s-hol} and Lemma \ref{lem: s-hol} below.
\begin{defn}
\label{def: obs_discrete} Let $n$ be even. We define the two-point
fermionic observable, as a multi-valued function of $n$ vertices
$\vv\in\Cd$ and two corners $z_{1},z_{2}\in\Cgr(\Od)$, by the formula
\[
F_{\Od,\double{\vv}}(z_{1,}z_{2}):=\ds{z_{1}}\ds{z_{2}}\frac{\E_{\Od}(\mu_{\vz_{1}}\mu_{\vz_{2}}\sigma_{\fz_{1}}\sigma_{\fz_{2}}\svv)}{\E_{\Od}(\svv)},
\]
 where $\eta_{z}$ is the Dirac spinor on $\Cgr(\Cd)$ given by (\ref{eq: def_eta_discrete}). 
\end{defn}

Until the end of this section, we will assume that $\Od$ and $\vv$
are fixed. We abbreviate $\cvr:=\double{\vv}$, viewed as a double
cover of $\Cgr(\Od),$ and omit $\Od$ and $\cvr$ from the notation,
thus $F(z_{1},z_{2})=F_{\Od,\double{\vv}}(z_{1},z_{2})$.
\begin{lem}
\label{lem: Obs_spinor}For each $z_{1}\in\Cgr(\Od),$ the multi-valued
function $F(z_{1},\cdot)$ defined above is a spinor on $\Cgr(\Od)$
ramified over $\vz_{1},\fz_{1},\vv$, viewed as faces of $\Cgr(\Cd)$.
\end{lem}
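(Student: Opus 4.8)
The plan is to read off the monodromy (multivaluedness) of $F(z_1,\cdot)$ directly from the factorization in Definition \ref{def: obs_discrete}, keeping $z_1$ fixed. Then $\ds{z_1}$ and the denominator $\E_{\Od}(\svv)$ are both constant in $z_2$, so the spinor structure of $F(z_1,\cdot)$ coincides with that of the product $\ds{z_2}\cdot\E_{\Od}(\mu_{\vz_1}\mu_{\vz_2}\sigma_{\fz_1}\sigma_{\fz_2}\svv)$. It therefore suffices to describe how this product changes sign as $z_2$ traverses a small loop around each face of $\Cgr(\Cd)$, of which there are three kinds: those centered at a primal vertex, at a dual vertex, and at an edge midpoint.

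First I would record the monodromy of the two factors separately. By (\ref{eq: def_eta_discrete}), $\ds{z_2}=e^{\frac{i\pi}{4}}\delta^{\frac12}(\vz_2-\fz_2)^{-\frac12}$ has the same monodromy as $(\vz_2-\fz_2)^{-\frac12}$; one checks that $\vz_2-\fz_2$ makes one full turn around the origin as $z_2$ encircles any face of $\Cgr(\Cd)$ — around a primal vertex $\fz_2$ is fixed and $\vz_2$ turns around it, around a dual vertex $\vz_2$ is fixed and $\fz_2$ turns around it, and around an edge midpoint both oscillate but $\vz_2-\fz_2$ still makes one full turn — so $\ds{z_2}$ is ramified over every face. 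For the correlation, Lemma \ref{lem: spinor-in-all-variables} (via the sign (\ref{eq: global sign})) shows its monodromy coincides with that of $\prod(u-w)^{\frac12}$ over all disorder--spin pairs, i.e.\ over $u\in\{\vz_1,\vz_2\}$ and $w\in\{\fz_1,\fz_2,\vv\}$.

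The key point is the cancellation of the factor coming from the incident pair $(\vz_2,\fz_2)$. Multiplying the two monodromies, the contribution $(\vz_2-\fz_2)^{-\frac12}$ of the Dirac spinor exactly cancels the contribution $(\vz_2-\fz_2)^{\frac12}$ of the disorder--spin pair sitting on the two ends of $z_2$, so that $F(z_1,\cdot)$ has the same spinor structure as
\[
(\vz_2-\fz_1)^{\frac12}\prod_{q=1}^{n}(\vz_2-v_q)^{\frac12}\cdot(\vz_1-\fz_2)^{\frac12}.
\]
It remains to identify the ramification of this expression. For a fixed primal vertex $w$, the function $z_2\mapsto(\vz_2-w)^{\frac12}$ changes sign precisely when $\vz_2$ turns around $w$, which by the three cases above happens only when $z_2$ encircles the face of $\Cgr(\Cd)$ centered at $w$; hence the factors $(\vz_2-\fz_1)^{\frac12}$ and $(\vz_2-v_q)^{\frac12}$ contribute ramification over $\fz_1$ and over the $v_q$, respectively. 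Symmetrically, $z_2\mapsto(\vz_1-\fz_2)^{\frac12}$ is ramified exactly over the face $\vz_1$. Collecting these, $F(z_1,\cdot)$ is a spinor ramified over $\vz_1,\fz_1,\vv$, as claimed.

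The one step that deserves care is the edge-midpoint case, where both $\fz_2$ and $\vz_2$ move. Here the single turn of $\vz_2-\fz_2$ around the origin is exactly what makes $\ds{z_2}$ \emph{and} the correlation each ramify over every edge, and it is precisely the cancellation above that removes this spurious ramification from $F(z_1,\cdot)$. I expect this to be the main (and essentially only) subtle point; a short explicit computation of the four values of $\vz_2-\fz_2$ at the corners surrounding an edge midpoint confirms the single full turn and hence the cancellation.
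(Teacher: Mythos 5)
Your proof is correct and takes essentially the same route as the paper: both invoke Lemma \ref{lem: spinor-in-all-variables} to read off the monodromy of the spin--disorder expectation and then cancel the braiding of $\vz_{2}$ around $\fz_{2}$ against the ramification of the Dirac spinor $\ds{z_{2}}$ over every face of $\Cgr(\Cd)$. Your explicit check of the edge-midpoint faces merely spells out the cancellation that the paper's two-line proof leaves implicit.
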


\begin{proof}
It follows directly from Lemma \ref{lem: spinor-in-all-variables}
that the expectation 
\[
\E(\mu_{\vz_{1}}\mu_{\vz_{2}}\sigma_{\fz_{1}}\sigma_{\fz_{2}}\svv)
\]
is a spinor in $z_{2}$ ramified over each face of $\Cgr(\Od)$ (because
of the braiding of $\vz_{2}$ around $\fz_{2}$), except at those
corresponding to $\vz_{1},\fz_{1},\vv$. Taking into account that
the spinor $\ds{z_{2}}$ is ramified at every face of the corner graph,
we get the result.
\end{proof}
\begin{defn}
\label{def: s-hol}Assume that $\cvr_{\sCgr}$ is a double cover of
$\Cgr(\Od)$, such that all ramification points of $\cvr_{\sCgr}$
are either vertices of $\Cprim$ or vertices of $\Cdual$ (but not
edges of $\Cd$). Let $e$ be an edge of $\Cd.$ We say that a $\cvr_{\sCgr}$-spinor
$F(\cdot)$ on $\Cgr(\Od)$ is \emph{s-holomorphic at $e$} if the
following two conditions are satisfied: 

\begin{itemize}
\item for each corner $z$ incident to $e$, one has 
\begin{equation}
F(z)\in\eta_{z}\R;\label{eq: phases_condition}
\end{equation}
\item if $z_{N,W,S,E}$ are the four corners of $\Cgr(\Cd_{\cvr_{\sCgr}})$
adjacent to $e$, in this cyclic order, then 
\begin{equation}
F(z_{N})+F(z_{S})=F(z_{E})+F(z_{W}).\label{eq: s-hol}
\end{equation}
\end{itemize}
We say that a $\cvr$-spinor $F$ is \emph{s-holomorphic in $\Od$}
if it is s-holomorphic at all edges of $\Od$ and at all edges of
$\Cd$ intersecting $\fixed$.
\end{defn}

\begin{rem}
\label{rem:s-hol-projections}Note that $\eta_{z_{N}}\R=i\eta_{z_{S}}\R$
and $\eta_{z_{E}}\R=i\eta_{z_{W}}\R$. Hence, assuming the condition
(\ref{eq: phases_condition}), the equation (\ref{eq: s-hol}) is
equivalent to the existence of a complex number $F(e)$ such that
\[
F(z)=\proj{\eta_{z}}{F(e)},\quad z=z_{N},z_{W},z_{S},z_{E},
\]
where $\proj{\eta}{F(e)}:=\eta\re[\eta^{-1}F(e)]$ denotes the orthogonal
projection of $F(e)$ onto the line $\eta\R$ in the complex plane.
In \cite{Smirnov_Ising,ChelkakSmirnov2,hongler_thesis,HonglerSmirnov,HonglerViklundKytolaCFT,ChelkakICMP},
the authors use this observation to extend the domain of definition
of an s-holomorphic spinor from the set of corners $\Cgr(\Od)$ to
the edges of $\Od$. Though such an extension is useful for the analysis
of the critical Ising model on less regular planar graphs, we do not
need it when working on $\Cd$.
\end{rem}

\begin{lem}
\label{lem: s-hol} For each $z_{1}\in\Cgr(\Od)$, the spinor $\bar{\eta}_{z_{1}}F(z_{1},\cdot)$
is s-holomorphic in $\Od$.
\end{lem}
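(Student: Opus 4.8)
The plan is to verify directly the two conditions of Definition \ref{def: s-hol} for the spinor $G(\cdot):=\bar\eta_{z_{1}}F(z_{1},\cdot)$. Since a corner is the midpoint of a segment joining a vertex to an incident face center, one has $|\vz_{}-\fz_{}|=\delta$, whence $|\eta_{z}|=1$ and $\bar\eta_{z_{1}}=\eta_{z_{1}}^{-1}$. Consequently
\[
G(z_{2})=\eta_{z_{2}}\cdot X(z_{2}),\qquad X(z_{2}):=\frac{\E_{\Od}(\mu_{\vz_{1}}\mu_{\vz_{2}}\sigma_{\fz_{1}}\sigma_{\fz_{2}}\svv)}{\E_{\Od}(\svv)}\in\R.
\]
The reality of $X(z_{2})$ gives the phase condition (\ref{eq: phases_condition}) at once. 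Moreover, multiplying by the constant $\bar\eta_{z_{1}}$ does not affect the spinor structure in $z_{2}$, so by Lemma \ref{lem: Obs_spinor} the function $G$ is a spinor ramified exactly over $\vz_{1},\fz_{1},\vv$; as these are all vertices of $\Cprim$ or of $\Cdual$ and never edges of $\Cd$, $G$ lives on a double cover of the type allowed in Definition \ref{def: s-hol}.

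It remains to prove the three-term identity (\ref{eq: s-hol}) at each edge $e$ of $\Od$ and each edge of $\Cd$ meeting $\fixed$. Fix such an edge $e=(vv')$, let $w,w'$ be the two faces of $\Cd$ adjacent to it, and let $z_{N},z_{W},z_{S},z_{E}$ be the four surrounding corners, so that each carries an insertion $\mu_{\cdot^{\bullet}}\sigma_{\cdot^{\circ}}$ with $\cdot^{\circ}\in\{v,v'\}$ and $\cdot^{\bullet}\in\{w,w'\}$. The heart of the proof is a local computation relating the four real numbers $X(z_{N}),\dots,X(z_{E})$. Using the order--disorder representation of Lemma \ref{lem: branch cut}, moving the disorder between the two faces $w,w'$ across $e$ (with branch cut taken along the single edge $(vv')$) amounts to inserting the factor $e^{-2\beta\sigma_{v}\sigma_{v'}}$ into the Boltzmann weight, while moving the spin between $v$ and $v'$ is governed by the same edge coupling $e^{\beta\sigma_{v}\sigma_{v'}}$. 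Expanding these factors using $\sigma_{v}\sigma_{v'}=\pm1$ and inserting the critical values $\cosh2\beta_{\mathrm{crit}}=\sqrt2$, $\sinh2\beta_{\mathrm{crit}}=1$ coming from (\ref{eq: beta-crit}) yields a real-linear relation among $X(z_{N}),\dots,X(z_{E})$; multiplied by the Dirac phases $\eta_{z_{N}},\dots,\eta_{z_{E}}$ of (\ref{eq: def_eta_discrete}) (which are spaced by $\tfrac{\pi}{4}$ around $e$) it collapses precisely to $G(z_{N})+G(z_{S})=G(z_{E})+G(z_{W})$. Equivalently, one checks that each $G(z)=\proj{\eta_{z}}{F(e)}$ for a single complex number $F(e)$, as in Remark \ref{rem:s-hol-projections}.

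The main obstacle lies entirely in this local step and is twofold. First, the four corner values must be lifted to the double cover $\cvr_{\sCgr}$ consistently before they can appear in one identity; here one invokes Lemma \ref{lem: spinor-in-all-variables} to track the $-1$'s produced by the windings of $\vz_{2}$ around $\fz_{2}$ and of the face insertions around the spins, and to confirm that the phases $\eta_{z_{N}},\dots,\eta_{z_{E}}$ absorb exactly these signs. Second, one must verify that it is precisely the critical coupling (\ref{eq: beta-crit}), through $\cosh2\beta=\sqrt2$, that makes the combinatorial coefficients match the isotropic projection structure built into $\eta_{z}$; off criticality the same computation produces an anisotropic (massive) relation rather than (\ref{eq: s-hol}). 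Finally, the edges meeting $\fixed$ are handled by the identical local computation, the monochromatic conditioning of the outer spins being compatible with the order--disorder move across $e$, whereas edges meeting $\free$ are excluded from the requirement and need no check.
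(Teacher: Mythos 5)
Your proposal is correct and takes essentially the same approach as the paper: the phase condition is read off the definition (using $|\eta_{z}|=1$ and the reality of the spin--disorder expectation ratio), exactly as in the paper's proof. The combining relation (\ref{eq: s-hol}) is likewise established in the paper by the very local computation you describe --- rotating the corner around the midpoint of $e$, appending the branch cut by $e^{\bullet}$ with the signs tracked via Lemma \ref{lem: spinor-in-all-variables}, and using $e^{-2\beta\xi}=\cosh(2\beta)-\xi\sinh(2\beta)=\sqrt{2}-\xi$ at $\beta=\beta_{\mathrm{crit}}$ so that the four $\tfrac{\pi}{4}$-spaced phase coefficients cancel --- your only slip being the phrase that the spin move is ``governed by $e^{\beta\sigma_{v}\sigma_{v'}}$'': replacing the insertion $\sigma_{v}$ by $\sigma_{v'}$ inserts the random factor $\xi=\sigma_{v}\sigma_{v'}$ itself, which your subsequent expansion with $\xi=\pm1$ nevertheless treats correctly.
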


\begin{proof}
First, note that (\ref{eq: phases_condition}) is satisfied by the
definition of $F(z_{1},z)$. To prove (\ref{eq: s-hol}), let $e$
be the edge in question and denote by $e^{\bullet}$ the corresponding
dual edge. We move the corner $z$ along the clockwise route $z_{N}\to z_{E}\to z_{S}\to z_{W}$
around the midpoint of $e$, assuming for the concreteness that $z_{N}$
shares a vertex of $\Cd=\Cprim$ with $z_{W}$ (and a vertex of $\Cdual$
with $z_{E})$, and keep track of the multi-valued functions involved
in the definition of $F_{\cvr}(z_{1},z).$ Denote $\fz_{NW}:=\fz_{N}=\fz_{W}$,
$\fz_{SE}:=\fz_{S}=\fz_{E}$ and $\vz_{SW}:=\vz_{S}=\vz_{W}$, $\vz_{NE}:=\vz_{N}=\vz_{E}$.
First, we have 
\[
\eta_{z_{E}}=e^{-\frac{i\pi}{4}}\eta_{z_{N}};\quad\eta_{z_{S}}=-i\eta_{z_{N}};\quad\eta_{z_{W}}=e^{-\frac{3i\pi}{4}}\eta_{z_{N}}.
\]
Second, write 
\[
F(z_{1},z_{N}):=\eta_{z_{1}}\eta_{z_{N}}\frac{\E_{\Od}(\mu_{\gamma}\sigma_{\fz_{NW}}\sigma_{\fz_{1}}\svv)}{\E_{\Od}(\svv)},
\]
where $\gamma$ is a branch cut for $\double{\vz_{1},\vz_{NE}}$ that
does not contain $e^{\bullet}=(\vz_{NE}\vz_{SW})$. According to the
discussion preceding Lemma \ref{lem: spinor-in-all-variables}, as
$z$ moves from $z_{E}$ to $z_{S}$, we must append $\gamma$ with
$e^{\bullet}$, and further, as $z$ moves from $z_{S}$ to $z_{W}$,
there is a sign change because $z$ crosses the new branch cut $\gamma\cup e^{\bullet}$.
Putting everything together, one gets 
\begin{multline*}
\bar{\eta}_{z_{1}}\bar{\eta}_{z_{N}}\E(\svv)\cdot\big(F(z_{1},z_{N})-F(z_{1},z_{E})+F(z_{1},z_{S})-F(z_{1},z_{W})\big)\\
=\ \E\big(\mu_{\gamma}\sigma_{\fz_{NW}}\sigma_{\fz_{1}}\svv\cdot(1-e^{-\frac{i\pi}{4}}\xi-ie^{-2\beta\xi}\xi+e^{-\frac{3i\pi}{4}}e^{-2\beta\xi})\big),
\end{multline*}
where $\xi:=\sigma_{\fz_{NW}}\sigma_{\fz_{SE}}=\pm1$ and hence $e^{-2\beta\xi}=\cosh(\beta)-\xi\sinh(2\beta)=\sqrt{2}-\xi$
since the value $\beta=\beta_{\mathrm{crit}}$ is given by (\ref{eq: beta-crit}).
Therefore, 
\[
1-e^{-\frac{i\pi}{4}}\xi-ie^{-2\beta\xi}\xi+e^{-\frac{3i\pi}{4}}e^{-2\beta\xi}=(1+i+e^{-\frac{3i\pi}{4}}\sqrt{2})+(-e^{-\frac{i\pi}{4}}-i\sqrt{2}-e^{-\frac{3i\pi}{4}})\xi=0
\]
and the s-holomorphicity condition (\ref{eq: s-hol}) follows.
\end{proof}
\begin{rem}
\label{Remark_discrte_singularity} Lemma \ref{lem: Obs_spinor} suggests
that $F(\cdot):=F(z_{1},\cdot)$ is a $\double{\fz_{1},\vz_{1}}\cdot\cvr$
- spinor on $\Cgr(\Od)$. However, it is more useful to view it rather
as a $\cvr$-spinor with a ``discrete singularity'' at the corner
$z_{1}$. To this end, we introduce a graph $\Cutz{\Ocvr}{z_{1}},$
which is $\Cgr(\Ocvr)$ cut along the segment $\gamma_{z_{1}}:=[\vz_{1}\fz_{1}]$.
This cut splits the vertex $z_{1}$of $\Ocvr$ into two vertices $z_{1}^{+}$
and $z_{1}^{-}$, each of degree two, on the left (respectively, on
the right) side of $\gamma_{z_{1}}$ as seen from $\vz_{1}$. Similarly,
the vertex $z_{1}^{*}$ on the other sheet of $\cvr$ is split into
$z_{1}^{+,*}$ and $z_{1}^{-,*}$. Each $\double{\fz_{1},\vz_{1}}\cdot\cvr$
- spinor $F(\cdot)$ can be \emph{equivalently} viewed as a $\cvr$-spinor
on $\Cutz{\Od_{\cvr}}{z_{1}}$ satisfying the condition 
\[
F\left(z_{1}^{+}\right)=-F\left(z_{1}^{-}\right)=-F\left(z_{1}^{+,*}\right)=F\left(z_{1}^{-,*}\right).
\]
If $\cvr$ is trivial, this is simply the restriction of $F$ to a
sheet defined by the branch cut $\gamma_{z_{1}}.$ Note that the s-holomorphicity
is not affected by this passage from $\Cgr(\Ocvr)$ to $\Cutz{\Ocvr}{z_{1}}$.
\end{rem}

\begin{figure}
\includegraphics[width=0.4\textwidth]{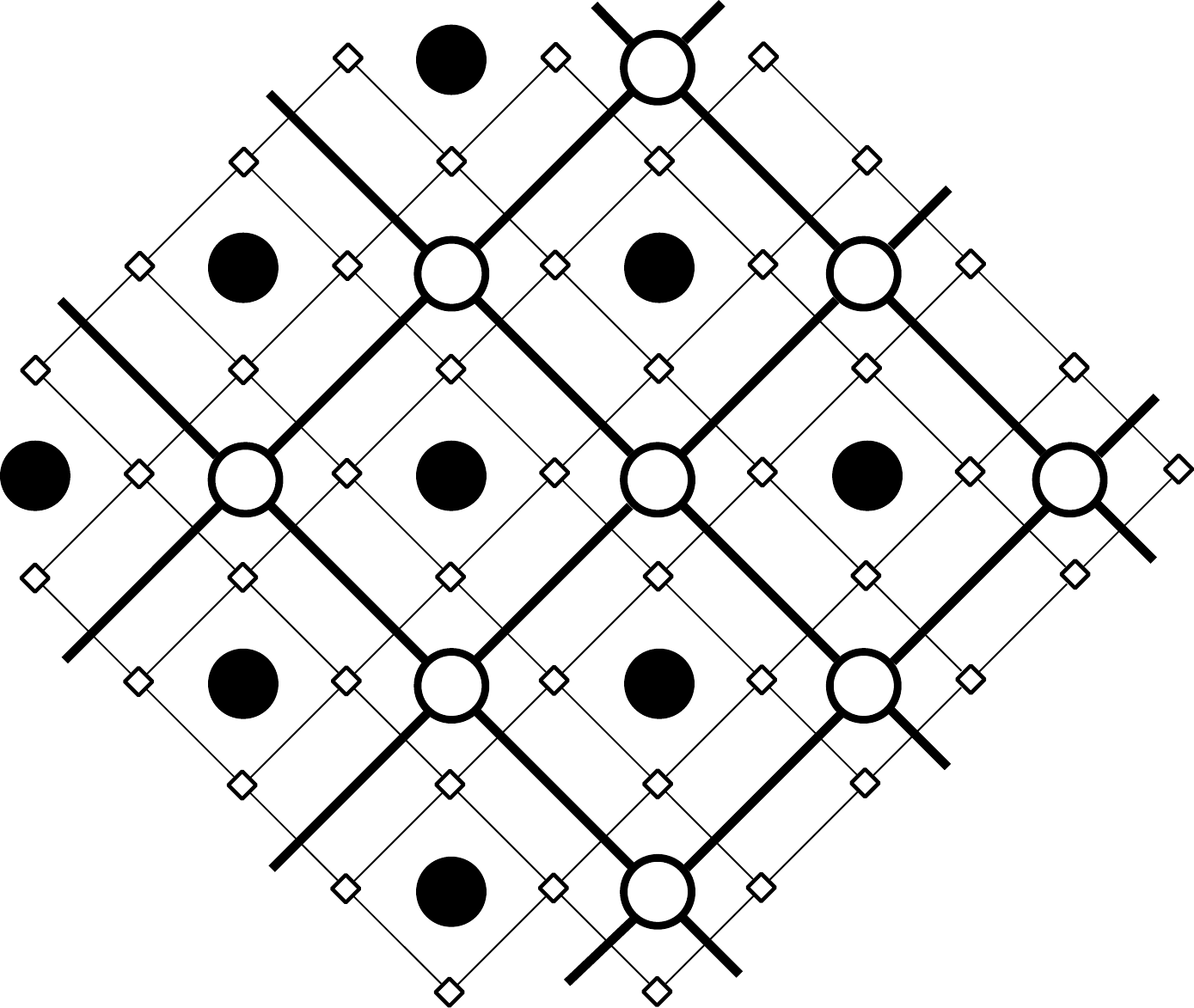}\qquad{}\includegraphics[width=0.4\textwidth]{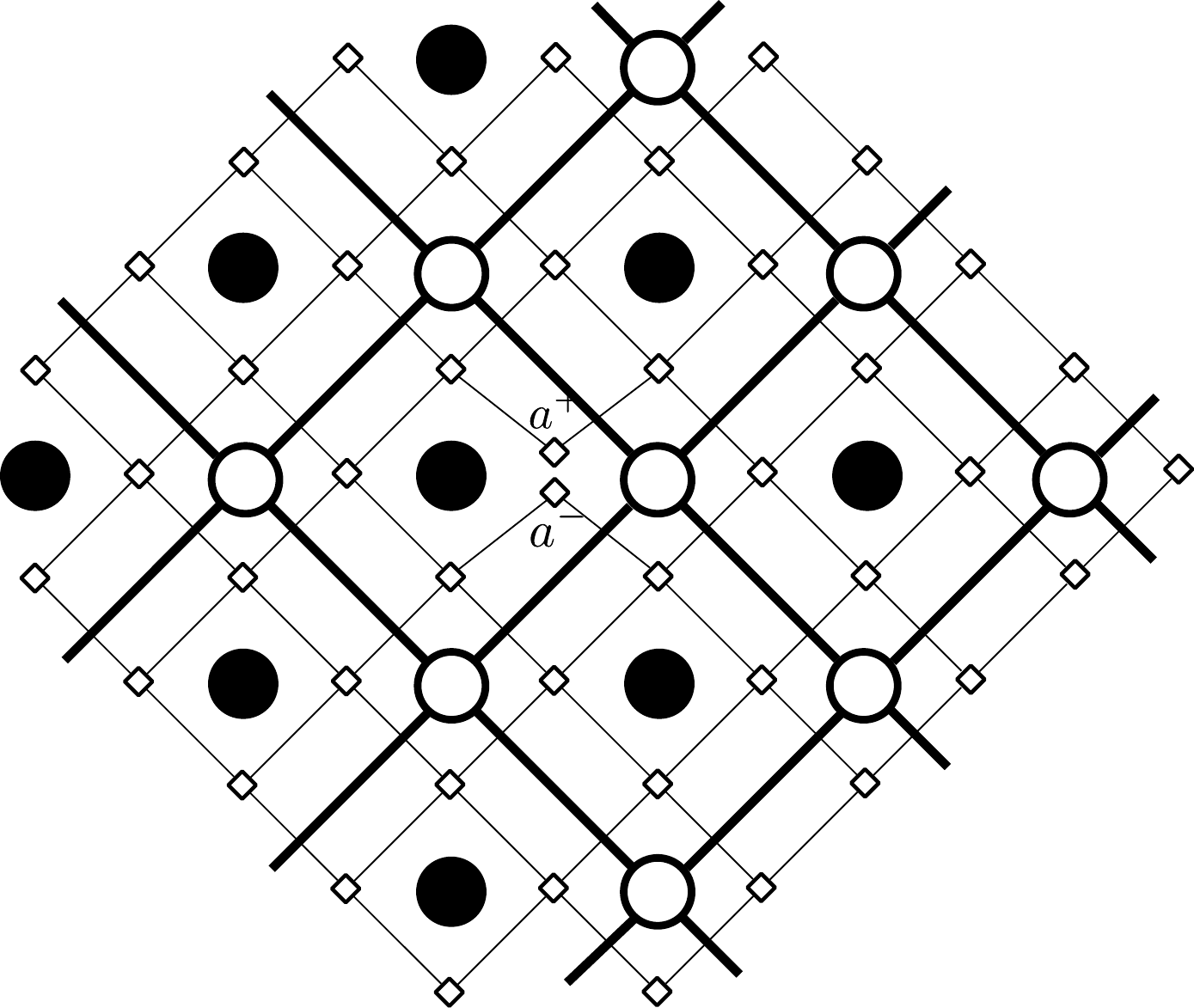}\caption{A piece of the lattice $\protect\Od$,with its dual domain $\protect\Odual$
and the corner lattice lattice $\protect\Cgr(\protect\Od)$. On the
right: the modified corner lattice $\protect\Cutz{\protect\Od}a$
with a ``discrete singularity'' at $a$. Any closed path on $\protect\Cutz{\protect\Od}a$
either encircles both $a^{\bullet}$ and $a^{\circ}$, or none; hence
$F_{\protect\Od,\protect\cvr}(a,\cdot)$ is locally single-valued
near $a$ on $\protect\Cutz{\protect\Od_{\protect\cvr}}a$ as long
as $\protect\cvr$ is not ramified at $a^{\circ}$.}
\end{figure}

In addition to the s-holomorphicity, in the setup described above,
two-point fermionic observables satisfy the following boundary conditions,
which we also call the standard ones, see Lemma \ref{lem: bc_discrete}
for the proof.
\begin{defn}
\label{def: Standard_BC} We say that a $\cvr$-spinor $F(\cdot)$
defined in $\Cgr(\Od)$ satisfies the \textit{standard boundary conditions}
if the equation 
\begin{equation}
\bar{\eta}_{z_{a}}\cdot F(z_{a})=\bar{\eta}_{z_{b}}\cdot F(z_{b}).\label{eq:bcond}
\end{equation}
holds for each pair of the corners $z_{a},z_{b}\in\Cgr(\Od_{\cvr})$
satisfying one of the following: 

\begin{enumerate}
\item $\vz_{a}=\vz_{b}\in\Odual_{\cvr}$ and $\fz_{a},\fz_{b}\not\in\Od_{\cvr}$
are on (the lift of) the same wired boundary arc; 
\item $\fz_{a}=\fz_{b}\in\Od_{\cvr}$ and $\vz_{a},\vz_{b}\in\Odual_{\cvr}$
are on (the lift of) the same free boundary arc;
\item $\vz_{a},\vz_{b}\in\Odual_{\cvr}$ are two endpoints of (the lift
of) a free boundary arc $\nu$, and $\fz_{a},\fz_{b}\not\in\Od_{\cvr}$
are on (the lifts of) two wired boundary arcs on the same boundary
component. 
\end{enumerate}
The signs of $\ds{z_{a}}$ and $\ds{z_{b}}$ in (\ref{eq:bcond})
are assumed to be related as follows: in the case (1), by rotating
$z_{a}$ towards $z_{b}$ around $z_{a}^{\bullet}=\vz_{b}$ outside
$\Od$; in the case (2) by rotating $z_{a}$ towards $z_{b}$ inside
$\Od$; and in the case (3) by the extension along the path on $\Cgr(\Od)$
running along the free boundary arc \emph{outside} $\Od$.
\end{defn}

\begin{lem}
\label{lem: bc_discrete}The observable $F(z_{1},\cdot)$ satisfies
the standard boundary conditions.
\end{lem}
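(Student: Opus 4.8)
The plan is to substitute the definition of $F$ from Definition \ref{def: obs_discrete} into the boundary identity (\ref{eq:bcond}) and reduce it to an equality of spin--disorder correlations, which can then be checked in each of the three configurations of Definition \ref{def: Standard_BC}. First I would record that for every corner $z$ the Dirac spinor (\ref{eq: def_eta_discrete}) satisfies $\bar{\eta}_{z}\eta_{z}=|\eta_{z}|^{2}=\delta\,|z^{\bullet}-z^{\circ}|^{-1}$, a strictly positive constant $c$ independent of $z$, since all corners lie at the same distance from their incident vertices. Consequently
\[
\bar{\eta}_{z}F(z_{1},z)\;=\;c\,\eta_{z_{1}}\,\frac{\E_{\Od}(\mu_{z^{\bullet}}\mu_{z_{1}^{\bullet}}\sigma_{z^{\circ}}\sigma_{z_{1}^{\circ}}\svv)}{\E_{\Od}(\svv)}\;=:\;c\,\eta_{z_{1}}\,G(z),
\]
where $G$ is a real-valued spin--disorder correlation, a spinor in $z$ whose sign transport is governed by Lemma \ref{lem: spinor-in-all-variables} (it flips exactly when the primal vertex $z^{\circ}$ winds around a disorder or the dual vertex $z^{\bullet}$ winds around a spin). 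As $c\,\eta_{z_{1}}$ is a fixed nonzero constant, the identity (\ref{eq:bcond}) is equivalent to $G(z_{a})=G(z_{b})$, the two values being compared along the transport path prescribed in Definition \ref{def: Standard_BC}. The task thus splits into checking that the two correlations coincide \emph{as numbers}, and that the prescribed rotation of $\eta_{z_{a}}$ into $\eta_{z_{b}}$ selects the matching sheet of $G$.

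For the numerical equality, each case uses one of two elementary mechanisms. In case (1), $z_{a}^{\bullet}=z_{b}^{\bullet}$ so the disorder operator is literally unchanged, while $z_{a}^{\circ},z_{b}^{\circ}$ lie on a common wired arc; by the monochromatic standard boundary condition the spins there are deterministically equal, $\sigma_{z_{a}^{\circ}}=\sigma_{z_{b}^{\circ}}$, so the random variables inside the two expectations coincide. In case (2), $z_{a}^{\circ}=z_{b}^{\circ}$, so the spins agree, and the two disorders $\mu_{z_{a}^{\bullet}},\mu_{z_{b}^{\bullet}}$ differ by moving the branch-cut endpoint across dual edges of the free arc; the corresponding primal edges are free and hence absent from the energy sum defining $\mu_{\gamma}$ in Lemma \ref{lem: branch cut}, so the two disorder operators are identical. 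Case (3) combines both effects: a branch cut joining $z_{a}^{\bullet}$ to $z_{b}^{\bullet}$ along the free arc $\nu$ crosses only free edges, so the disorders agree, while $z_{a}^{\circ},z_{b}^{\circ}$, lying on wired arcs of the same boundary component, carry equal spins by monochromaticity.

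For the sign, I would track the half-integer winding of $\eta_{z}\propto(z^{\bullet}-z^{\circ})^{-1/2}$ against the spinor monodromy of $G$ as $z$ is transported from $z_{a}$ to $z_{b}$ along the path of Definition \ref{def: Standard_BC} (rotation around the shared vertex outside $\Od$ in case (1), inside $\Od$ in case (2), along $\nu$ outside $\Od$ in case (3)). These paths are chosen so that $z^{\circ}$ never winds around $z_{1}^{\bullet}$ and $z^{\bullet}$ never winds around $z_{1}^{\circ}$ or any $v_{q}$: the transport stays near the boundary and encircles none of the other marked points. Hence the only monodromy $G$ can acquire is the self-pairing of $z^{\circ}$ with $z^{\bullet}$, which is precisely the winding compensated by the $\bar{\eta}_{z}$ prefactor, so $\bar{\eta}_{z}F(z_{1},z)$ transports without sign change and the prescribed $\eta$-rotation lands on $+G$. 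I expect this sign bookkeeping to be the main obstacle: the correlation equalities are immediate, whereas matching the three prescribed $\eta$-rotations to the local spinor monodromy of $G$ requires orienting the local picture carefully at each boundary configuration, exactly in the spirit of the rotation argument already used in the proof of Lemma \ref{lem: s-hol}.
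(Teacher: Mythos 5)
Your proposal is correct and takes essentially the same route as the paper's proof: in all three cases the spins $\sigma_{z_{a}^{\circ}}$ and $\sigma_{z_{b}^{\circ}}$ coincide deterministically (shared vertex in case (2), monochromaticity along the boundary component in cases (1) and (3)), the disorders coincide because the ramification point can be dragged along the free arc whose edges are excluded from the sum defining $\mu_{\gamma}$, and the remaining discrepancy is exactly the difference between $\eta_{z_{a}}$ and $\eta_{z_{b}}$, which is absorbed by the sign conventions of Definition \ref{def: Standard_BC}. Your additional bookkeeping via $|\eta_{z}|^{2}$ and the transport of the reduced correlation $G$ is just a more explicit rendering of the paper's one-line observation that the only difference in the values of $F$ at $z_{a}$ and $z_{b}$ comes from the Dirac spinor factors.
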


\begin{proof}
In all three cases $\sigma_{\fz_{a}}=\sigma_{\fz_{b}}$ by definition.
Moreover, the only difference in the values of $F$ at $z_{a}$ and
$z_{a}$ actually comes from the difference between $\ds{z_{a}}$
and $\ds{z_{b}}$. Indeed, in the case (1) this is obvious since $\vz_{a}=\vz_{b}$.
In the cases (2) and (3) one can drag the ramification point from
$\vz_{a}$ to $\vz_{b}$ along the corresponding free boundary arc.
The correlation will not change since the edges of the free arcs are
excluded from the expression defining $\mu_{\gamma}$.
\end{proof}
The following proposition and especially its continuous counterpart
\textemdash{} Proposition \ref{prop: Uniqueness_continuous} \textemdash{}
will be crucial for the whole analysis. They assert that standard
boundary conditions define a well-posed boundary value problem for
s-holomorphic spinors. 
\begin{prop}
\label{prop: discrete-uniqueness} Let $n$ be even, $\cvr=\double{\vv}$,
$\vv\in\Od\cup\fixed^{\circ}$. Assume that $G$ is a $\cvr$-spinor
on $\Cgr(\Ocvr)$ which is s-holomorphic everywhere in $\Od$, i.e.
at all edges of $\Od$ and all edges of $\Cd$ intersecting $\fixed$.
If $G$ satisfies the standard boundary conditions as in Definition
\ref{def: Standard_BC}, then $G=0$ everywhere in $\Ocvr$.
\end{prop}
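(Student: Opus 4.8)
The plan is to prove that the only s-holomorphic spinor satisfying the standard (homogeneous) boundary conditions is the zero spinor, via a discrete integration-by-parts / energy argument. The standard route for such uniqueness statements is to associate to any s-holomorphic spinor $G$ a real-valued primitive $H=\im\int G^2\,dz$ (the discrete analogue of the imaginary part of the integral of the square of a holomorphic function), to show that the boundary conditions force $H$ to be locally constant on each boundary component, and then to use a discrete maximum principle to conclude that $H$ is globally constant, whence $G\equiv0$.

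\medskip

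First I would recall the mechanism that makes $H$ well-defined. For an s-holomorphic function the quantity $\im(G(z)^2\,dz)$, suitably discretized on the corners adjacent to each edge, is a closed discrete form: the s-holomorphicity relation \eqref{eq: s-hol} together with the phase condition \eqref{eq: phases_condition} is exactly what guarantees that the discrete contour integral of $\im(G^2 dz)$ around any lattice face vanishes, so that $H$ can be defined on both $\Cprim$ and $\Cdual$ (it is the standard construction, going back to Smirnov, of two harmonic-type functions $H^{\bullet}$ and $H^{\circ}$ from one s-holomorphic $G$, with $H^{\bullet}\ge H^{\circ}$ across each edge). Since $G$ is a $\cvr$-spinor, one must check that $G^2$, and hence $H$, is an honest single-valued function away from the ramification faces $\vv$; this is immediate because squaring kills the sign ambiguity of a spinor. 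Near each ramification point $v_q$ the function $H$ is defined on the punctured neighborhood and one must verify it does not jump around the puncture, which again follows from $G(z^*)=-G(z)$ so that $G^2$ is monodromy-free.

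\medskip

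The heart of the argument is translating the three cases of Definition~\ref{def: Standard_BC} into statements that $H^{\bullet}$ (resp.\ $H^{\circ}$) is constant along the wired (resp.\ free) boundary arcs. In case (1), the condition $\bar\eta_{z_a}F(z_a)=\bar\eta_{z_b}F(z_b)$ for corners sharing a dual vertex on a wired arc says precisely that the increment of $H^{\circ}$ along that arc vanishes, so $H^{\circ}$ is constant on each wired arc; symmetrically case (2) gives that $H^{\bullet}$ is constant on each free arc. Case (3) is the matching condition that ties the constant values of $H^{\circ}$ on wired arcs separated by a free arc together along a single boundary component, so that in fact $H^{\circ}$ (or $H^{\bullet}$) takes a single constant value on each connected component of $\pa\Od$. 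I would then invoke the discrete maximum principle for the sub/superharmonic pair $(H^{\bullet},H^{\circ})$: $H^{\bullet}$ is subharmonic with respect to $\lapf$ and $H^{\circ}$ superharmonic with respect to $\lapv$, they are ordered $H^{\bullet}\ge H^{\circ}$ with equality only where $G=0$, and both are constant on the boundary. Comparing boundary values forces $H^{\bullet}\equiv H^{\circ}\equiv\const$ throughout $\Ocvr$, and the only way the two primitives can coincide everywhere is if all the edge-increments $\im(G^2 dz)$ vanish, i.e.\ $G(z_N)+G(z_S)=G(z_E)+G(z_W)$ degenerates to $G=0$ at every corner.

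\medskip

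The step I expect to be the main obstacle is the careful bookkeeping at the free--wired junctions and on the double cover. Matching the constants across a free arc joining two wired arcs (case (3)) and propagating them consistently around a boundary component — especially on the lift to $\Ocvr$, where the spinor sign conventions and the rotation prescriptions for the signs of $\ds{z_a},\ds{z_b}$ in Definition~\ref{def: Standard_BC} must be tracked — is where sign errors naturally creep in and where the multiply-connected topology matters. One must also handle the boundary edges of $\Cd$ intersecting $\fixed$, where s-holomorphicity is assumed, so that the maximum principle has no ``escape'' through the wired part of the boundary. Once the constancy of the boundary primitive on each component is established, the maximum-principle conclusion is routine; the genuine work is purely in the boundary combinatorics and in verifying that the spinor structure does not obstruct the definition of the single-valued primitive $H$.
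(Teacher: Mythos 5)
Your setup coincides with the paper's (the primitive $H=\im\int G^{2}$, the boundary constancy of Lemma \ref{lem: H_bc}, the sub/superharmonicity of Lemma \ref{lem: sub_super}), but the concluding step has a genuine gap: the standard boundary conditions do \emph{not} force $H$ to take a single constant value on a boundary component containing both free and wired arcs. By the definition of $H$, crossing a boundary corner $z$ from the wired vertex $z^{\circ}$ to the adjacent free dual vertex $z^{\bullet}$ produces a jump $H(z^{\bullet})-H(z^{\circ})=2\delta|G(z)|^{2}\geq0$, and case (3) of Definition \ref{def: Standard_BC} only guarantees that the two jumps at the endpoints of a free arc are \emph{equal in magnitude} --- not that they vanish. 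So the boundary data is one constant per component on $\fixed^{\circ}$ and a possibly strictly larger constant on each free arc; since $H^{\bullet}$ is subharmonic away from ramification, its maximum indeed sits on the boundary, but nothing prevents it from sitting on a free arc, and your assertion that ``comparing boundary values forces $H^{\bullet}\equiv H^{\circ}\equiv\const$'' has no support in exactly that case. This cannot be repaired by more careful bookkeeping: as the remark after Proposition \ref{prop: Uniqueness_continuous} stresses, the function $f(z)=\left((z-b_{1})(z-b_{2})\right)^{-\frac{1}{2}}$ of Remark \ref{rem: free-arcs-cond} satisfies every condition except the sign consistency $c_{2}=ic_{1}$, and its $H$ is constant on each boundary arc --- so any argument using only $H$-level information, as yours does, would ``prove'' that this nonzero spinor vanishes. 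Tellingly, your proposal never invokes the sign prescriptions for $\ds{z_{a}},\ds{z_{b}}$ in Definition \ref{def: Standard_BC}, nor the hypothesis that $\cvr$ is ramified over primal vertices only; both are essential.

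The paper's proof addresses precisely this missing scenario. Assuming $G\not\equiv0$, it takes a maximum $\umax$ of $H^{\bullet}$, rules out interior dual vertices (subharmonicity plus no dual ramification) and the wired part (via the jump inequality $H(\umax)\leq H(\hat u)$), and is left with $\umax\in\free^{\bullet}$. There the maximality of $H$ forces the sign of the real quantity $\bar{\eta}_{z}G(z)$ to be preserved at every step along the free arc; propagating this sign to the two endpoints contradicts case (3) of the boundary conditions, and for a boundary component that is entirely free it contradicts the spinor property of $G$ against the ramification of $\eta_{z}$ --- which is exactly where the non-ramification hypothesis on $\cvr$ enters. This sign-propagation argument, not the maximum principle, is the core of the proof, and it is absent from your proposal. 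A secondary problem: your plan to squeeze $H$ between the sub- and superharmonic restrictions would also need a minimum principle for $H^{\circ}$, which fails at the ramification points $\vv$; the paper sidesteps this by using only the maximum of $H^{\bullet}$.
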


\begin{proof}
The proof is postponed until Section \ref{sec: proofs_of_convergence_theorems}.
\end{proof}
\begin{rem}
We stress that in Proposition \ref{prop: discrete-uniqueness} the
double cover $\cvr$ is allowed to be ramified over \textit{vertices
of $\Od$ only}. A similar uniqueness result holds for double covers
ramified over \emph{dual vertices only}. However, if both primal and
dual vertices are allowed in $\cvr$, then the uniqueness fails: the
observable $F_{\cvr}(z_{1},\dots)$ itself gives an example of a nontrivial
s-holomorphic $\double{\vz_{1},\fz_{1}}\cdot\cvr$ - spinor satisfying
the standard boundary conditions. Similarly, the conclusion may fail
if one of the ramification points $v_{q}$ is placed inside a boundary
component with purely free boundary conditions.
\end{rem}

\subsection{Multi-point fermionic observables and the Pfaffian formula.}

\label{subsec: multi-fermions-discrete}

The multi-point fermionic observables are defined very similarly to
the two-point ones discussed above. 
\begin{defn}
Let $k$ and $n$ be even, and let, as above, $\cvr=\double{\vv}$.
We put 
\[
F_{\Od,\cvr}(z_{1,},\dots,z_{k}):=\ds{z_{1}}\ldots\ds{z_{k}}\frac{\E(\mu_{\vz_{1}}\dots\mu_{\vz_{k}}\sigma_{\fz_{1}}\dots\sigma_{\fz_{k}}\svv)}{\E(\svv)},
\]
viewed as a multi-valued function of $\vv\in\Cd$ and $z_{1},\ldots,z_{k}\in\Cgr(\Od)$.
\end{defn}

Throughout this section, we assume that $\Od$ and $\cvr:=\double{\vv}$
are fixed and omit them from notation, thus $F(z_{1,},\dots,z_{k})=F_{\Od,\cvr}(z_{1,},\dots,z_{k})$.
We first fix also $z_{1},\dots,z_{k-1}$ and discuss the properties
of $F$ as a function of its last argument. 
\begin{prop}
The quantity 
\[
\bar{\eta}_{z_{1}}\cdot\dots\cdot\bar{\eta}_{z_{k-1}}F(z_{1},\dots z_{k-1},\,\cdot\,)
\]
is a $\double{z_{1}^{\bullet},\dots,z_{k-1}^{\bullet},z_{1}^{\circ},\dots,z_{k-1}^{\circ}}\cdot\cvr$
- spinor, which is s-holomorphic in $\Od$ and satisfies the standard
boundary conditions as in Definition~\ref{def: Standard_BC}.
\end{prop}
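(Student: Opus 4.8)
The plan is to prove the three claimed properties---the spinor structure, s-holomorphicity, and the standard boundary conditions---by reducing each to the corresponding two-point statement already established in Lemmas~\ref{lem: Obs_spinor}, \ref{lem: s-hol}, and \ref{lem: bc_discrete}. The key observation is that, with $z_1,\dots,z_{k-1}$ held fixed, the multi-point observable $F(z_1,\dots,z_{k-1},\,\cdot\,)$ depends on its last argument $z_k$ only through the correlation $\E(\mu_{\vz_1}\dots\mu_{\vz_k}\sigma_{\fz_1}\dots\sigma_{\fz_k}\svv)$, which has exactly the same structural dependence on $(z_k^\bullet, z_k^\circ)$ as the two-point correlation did, except that now the disorders $\mu_{\vz_1},\dots,\mu_{\vz_{k-1}}$ and spins $\sigma_{\fz_1},\dots,\sigma_{\fz_{k-1}}$ play the role of additional fixed insertions. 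Effectively, one absorbs the fixed corners into an enlarged double cover: the faces $\vz_1,\dots,\vz_{k-1}$ become additional ramification points (through the braiding of $\vz_k$ around each $\fz_j$ and the disorder-disorder interaction), and the vertices $\fz_1,\dots,\fz_{k-1}$ become additional spin insertions. This is precisely why the relevant cover is $\double{z_1^\bullet,\dots,z_{k-1}^\bullet,z_1^\circ,\dots,z_{k-1}^\circ}\cdot\cvr$.

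First I would establish the spinor structure. By Lemma~\ref{lem: spinor-in-all-variables}, the correlation $\E(\mu_{\vz_1}\dots\mu_{\vz_k}\sigma_{\fz_1}\dots\sigma_{\fz_k}\svv)$, viewed as a function of the single corner $z_k$ (hence of the pair $(\vz_k,\fz_k)$), picks up a sign each time $\fz_k$ winds around one of the disorder faces $\vz_1,\dots,\vz_{k-1}$ or one of the branch points $\vv$, and each time $\vz_k$ winds around a spin vertex $\fz_1,\dots,\fz_{k-1}$; additionally the intrinsic braiding of $\vz_k$ around $\fz_k$ contributes ramification over \emph{every} face of $\Cgr(\Od)$. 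Multiplying by the Dirac spinor $\ds{z_k}$, which is itself ramified at every face of $\Cgr(\Cd)$, cancels this universal ramification, leaving ramification exactly over the faces $\vz_1,\dots,\vz_{k-1}$, $\fz_1,\dots,\fz_{k-1}$, and $\vv$. The prefactor $\bar\eta_{z_1}\cdots\bar\eta_{z_{k-1}}$ is constant in $z_k$ and does not affect the cover. This is the same argument as in Lemma~\ref{lem: Obs_spinor}, now with extra fixed insertions.

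For s-holomorphicity, I would repeat the local computation of Lemma~\ref{lem: s-hol} verbatim at each edge $e$ of $\Od$ (and each edge of $\Cd$ meeting $\fixed$), moving the corner $z_k$ around the midpoint of $e$ along the clockwise route $z_N\to z_E\to z_S\to z_E$. The phase condition \eqref{eq: phases_condition} holds by the definition of $F$ through $\ds{z_k}$. The crucial cancellation in \eqref{eq: s-hol} arises solely from the local flip $\xi:=\sigma_{\fz_{NW}}\sigma_{\fz_{SE}}$ together with the critical relation $e^{-2\beta\xi}=\sqrt2-\xi$ and the phase rotations of $\ds{z_k}$; none of this is altered by the presence of the additional fixed insertions $\mu_{\vz_j},\sigma_{\fz_j}$, which simply ride along inside the expectation and factor out of the identity $1-e^{-i\pi/4}\xi-ie^{-2\beta\xi}\xi+e^{-3i\pi/4}e^{-2\beta\xi}=0$. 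Likewise, the standard boundary conditions follow exactly as in Lemma~\ref{lem: bc_discrete}: the three boundary configurations force $\sigma_{\fz_a}=\sigma_{\fz_b}$, the fixed insertions are unaffected by dragging a ramification point along a free arc (whose edges are excluded from $\mu_\gamma$), and the sign conventions on $\ds{z_a},\ds{z_b}$ are the same.

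The main obstacle is not any single computation but the bookkeeping of signs: one must verify that the canonical extension rule of Lemma~\ref{lem: spinor-in-all-variables}, originally formulated for the symmetric spin-disorder correlation, produces precisely the antisymmetric fermionic spinor structure claimed, and that holding $z_1,\dots,z_{k-1}$ fixed is consistent with the multi-valued nature of $F$ in those variables. I would address this by noting that the sign-change rule is independent of the particular spinor (as in the proof of Lemma~\ref{lem: spinor-in-all-variables}), so it suffices to check it on the explicit model spinor $\prod\sqrt{\,\cdot\,}$; the identification of the resulting cover with $\double{z_1^\bullet,\dots,z_{k-1}^\bullet,z_1^\circ,\dots,z_{k-1}^\circ}\cdot\cvr$ is then immediate from the list of ramification faces found above. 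This subtlety is flagged in Remark~\ref{rem: dbl-covers-general} and is exactly the passage from symmetric to antisymmetric correlations alluded to there.
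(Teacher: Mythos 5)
Your proposal is correct and follows the same route as the paper, whose proof of this proposition is literally ``the same proofs as in the case $k=2$, see Lemma~\ref{lem: s-hol} and Lemma~\ref{lem: bc_discrete}''; you have merely spelled out why the additional fixed insertions $\mu_{\vz_j},\sigma_{\fz_j}$ do not disturb the local computations, with the spinor structure handled via Lemma~\ref{lem: spinor-in-all-variables} exactly as in Lemma~\ref{lem: Obs_spinor}. (One trivial slip: the clockwise route should read $z_N\to z_E\to z_S\to z_W$, not $z_N\to z_E\to z_S\to z_E$.)
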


\begin{proof}
The same proofs as in the case $k=2$, see Lemma \ref{lem: s-hol}
and Lemma \ref{lem: bc_discrete}.
\end{proof}
\begin{rem}
Similarly to Remark \ref{Remark_discrte_singularity}, we can view
$F(\cdot):=F(z_{1},\dots z_{k-1},\thinspace\cdot\thinspace)$ as a
$\cvr$-spinor with ``discrete singularities'' at $z_{1},\dots,z_{k-1},$
i.e., as being defined on the graph $\Cutz{\Od_{\cvr}}{z_{1},\dots,z_{k-1}}$
obtained by cutting $\Cgr(\Ocvr)$ along the segments $[z_{1}^{\bullet},z_{1}^{\circ}],\dots,[z_{k-1}^{\bullet},z_{k-1}^{\circ}]$,
subject to the condition 
\[
F\left(z_{j}^{+}\right)=-F\left(z_{j}^{-}\right)=-F\left(z_{j}^{+,*}\right)=F\left(z_{j}^{-,*}\right),\quad j=1,\dots,k-1.
\]
\end{rem}

Next, we allow \emph{all} points $z_{1},\dots,z_{k}$ to move simultaneously.
\begin{lem}
\label{lem: anti-symmetry}The restriction of the multi-valued function
\textup{$F(z_{1},\dots,z_{k})$} to the set \textup{
\begin{equation}
\offdiag k:=(\Cgr(\Od_{\cvr}))^{\times k}\setminus\cup_{i\neq j}\{z_{i}=z_{j}\;\mathrm{or}\;z_{i}=z_{j}^{*}\}\label{eq: offdiag}
\end{equation}
}of all $k$-tuples of \emph{distinct} corners of the double cover
\textup{$\Cgr(\Od_{\cvr})$} of the corner graph, equipped with the
product graph structure, is a well defined (up to a global choice
of the sign) anti-symmetric function. Moreover, it is a $\cvr$-spinor
in each $z_{j}$. 
\end{lem}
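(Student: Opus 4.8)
The plan is to determine the full monodromy of $F:=F_{\Od,\cvr}(z_1,\dots,z_k)$ on $\offdiag{k}$ and then read off the three asserted properties from it. Write $F=\eta_{z_1}\cdots\eta_{z_k}\cdot R$, where $R=R(z_1,\dots,z_k)$ is the ratio of spin--disorder correlations $\E(\mu_{z_1^\bullet}\cdots\mu_{z_k^\bullet}\sigma_{z_1^\circ}\cdots\sigma_{z_k^\circ}\svv)/\E(\svv)$. By Lemma~\ref{lem: spinor-in-all-variables} (applied with the disorders $z_1^\bullet,\dots,z_k^\bullet$ and the spins $z_1^\circ,\dots,z_k^\circ,\vv$) the continuation of $R$ along a closed loop multiplies it by $(-1)^{W}$, where $W$ is the total winding of the moving disorders around the spins; and by its definition~(\ref{eq: def_eta_discrete}) each $\eta_{z_j}$ is ramified over every face of $\Cgr(\Cd)$, i.e. it contributes a factor $-1$ precisely when the vector $z_j^\bullet-z_j^\circ$ makes a full turn.

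First I would treat the two generating loops. When $z_j$ encircles a ramification point $v_p$ of $\cvr$ once, the factor $\eta_{z_j}$ contributes $-1$, while $R$ contributes $(-1)^{1+1}=+1$ (one unit of winding from the self-pair $z_j^\bullet$ about $z_j^\circ$, one from $z_j^\bullet$ about $v_p$); hence $F$ picks up $-1$, which is exactly the $\cvr$-spinor property in $z_j$. When instead $z_i$ encircles another, distinct point $z_j$ of $\Cgr(\Od_{\cvr})$, the factor $\eta_{z_i}$ again contributes $-1$, whereas $R$ now accumulates three units of winding ($z_i^\bullet$ about $z_i^\circ$, $z_i^\bullet$ about $z_j^\circ$, and $z_i^\circ$ about $z_j^\bullet$), so $R$ contributes $(-1)^3=-1$ and the two signs cancel. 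Consequently $F$ has no monodromy along loops in $\offdiag{k}$ that avoid the $v_p$'s, so it is single valued up to a global sign on $\offdiag{k}$; together with the first loop computation this gives both the asserted well-definedness and the $\cvr$-spinor property in each argument.

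It remains to prove anti-symmetry, for which it suffices to treat an adjacent transposition $z_i\leftrightarrow z_{i+1}$; by the cancellation just established the remaining marked points are inert spectators, so the computation reduces to the case $k=2$. Since $F$ is single valued up to a global sign on $\offdiag{2}$, the ratio $F(z_2,z_1)/F(z_1,z_2)$ is an unambiguous sign, and, exactly as in the last line of the proof of Lemma~\ref{lem: spinor-in-all-variables}, this sign depends only on the monodromy determined above and not on the particular correlation; I would therefore evaluate it on a reference spinor carrying the same monodromy, e.g.
\[
\Phi(z_1,z_2)=\frac{(z_1^\bullet-z_2^\circ)^{\frac12}(z_2^\bullet-z_1^\circ)^{\frac12}}{(z_1^\bullet-z_1^\circ)^{\frac12}(z_2^\bullet-z_2^\circ)^{\frac12}},
\]
which reproduces the spinor structure~(\ref{eq: RS_all_around_all}) restricted to the two fermions. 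Continuing $\Phi$ along a short path that interchanges $z_1$ and $z_2$ and tracking the four half-integer powers then yields the value $-1$. Equivalently one argues directly: the unsigned insertion $\mu_{z_1^\bullet}\mu_{z_2^\bullet}\sigma_{z_1^\circ}\sigma_{z_2^\circ}$ is manifestly symmetric, so the entire transposition sign comes from comparing the branches prescribed for the two orderings --- the concrete realisation of the passage from symmetric (spin--disorder) to anti-symmetric (fermionic) correlations announced in Remark~\ref{rem: dbl-covers-general}.

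The main obstacle is precisely this last sign: ensuring that the phase of $\eta_{z_1}\eta_{z_2}$ combines with the branch-cut sign of $R$ to give exactly $-1$ rather than $+1$. This requires care with the lattice orientation of the corners (the direction of $z^\bullet-z^\circ$, which fixes the phase of $\eta_z$) and with which branch cut is crossed as the two disorders are dragged past the spins, in the same spirit as the sign computation crossing $\gamma\cup e^\bullet$ in the proof of Lemma~\ref{lem: s-hol}. Once this local two-corner computation is pinned down, the general statement follows by the reduction above.
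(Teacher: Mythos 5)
Your treatment of well-definedness and the spinor property is essentially the paper's argument: the paper likewise applies Lemma~\ref{lem: spinor-in-all-variables} and cancels (i) the self-winding of $z_{i}^{\bullet}-z_{i}^{\circ}$ against the Dirac factor $\eta_{z_{i}}$, (ii) the two cross-windings $\wind(z_{i}^{\bullet}-z_{j}^{\circ})$ and $\wind(z_{i}^{\circ}-z_{j}^{\bullet})$ against each other (these are equal because the segments $[z_{i}^{\circ},z_{i}^{\bullet}]$ and $[z_{j}^{\circ},z_{j}^{\bullet}]$ never cross, so the two vectors are never anti-parallel), and (iii) the windings around $v_{1},\dots,v_{n}$, which are even since each $z_{i}$ closes up on the double cover. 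Two caveats. First, the paper runs this on an \emph{arbitrary} loop, so it never needs your implicit claim that your two loop types generate the monodromy; e.g.\ loops of a single point around a hole of a multiply connected $\Od$ are of neither type (the same cancellations dispose of them, but you should say so). Second, some of your winding counts are off: along the minimal loop of $z_{i}$ around another corner $z_{j}$, the winding of $z_{i}^{\bullet}-z_{i}^{\circ}$ is \emph{even} (the loop encloses four faces of the corner lattice: $z_{j}^{\circ}$, $z_{j}^{\bullet}$ and two edge midpoints), so $\eta_{z_{i}}$ contributes $+1$, not $-1$. This slip is harmless only because the self-winding cancels against $\eta_{z_{i}}$ \emph{identically}, whatever its parity --- which is how the cancellation should be phrased.

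The genuine gap is the final transposition sign, exactly the point you flag as the main obstacle. Your reference spinor $\Phi$ does \emph{not} carry the same monodromy as $F$. Writing $F=\eta_{z_{1}}\eta_{z_{2}}R$, the monodromy model for $R$ is the product of \emph{all four} square roots $(z_{p}^{\bullet}-z_{q}^{\circ})^{\frac12}$, $p,q\in\{1,2\}$; the two self-factors $(z_{j}^{\bullet}-z_{j}^{\circ})^{\frac12}$ cancel the $\eta_{z_{j}}$ exactly, leaving the cross model $G(z_{1},z_{2})=(z_{1}^{\bullet}-z_{2}^{\circ})^{\frac12}(z_{2}^{\bullet}-z_{1}^{\circ})^{\frac12}$. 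Your $\Phi=G\cdot(z_{1}^{\bullet}-z_{1}^{\circ})^{-\frac12}(z_{2}^{\bullet}-z_{2}^{\circ})^{-\frac12}$ keeps the $\eta$-type factors but omits the compensating self square roots of $R$; consequently $\Phi$ is ramified over every face of the corner lattice in each variable, whereas $F$ is ramified only over $\vv$ (note also that~(\ref{eq: RS_all_around_all}) contains \emph{no} fermion--fermion factors, so ``the spinor structure restricted to the two fermions'' is not what $\Phi$ has). Because the monodromies differ, the sign $\Phi$ assigns to an exchange is path-dependent, and on the natural path --- the one the paper uses, namely $z_{1}^{\bullet}=z_{2}^{\bullet}=u$ with both corners rotated half-way around $u$ in four moves --- it is wrong: both cross vectors rotate by $+\pi$, so $G$ gains $-1$, but the two self-factors $(u-z_{j}^{\circ})^{-\frac12}$ in your denominator also rotate by $+\pi$ and contribute another $-1$, so $\Phi$ gains $+1$, while $F$ gains $-1$. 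The fix is to evaluate on $G$ instead (equivalently: model $R$ by all four square roots and use the exact cancellation of the self ones against $\eta_{z_{1}}\eta_{z_{2}}$); since the two cross vectors can never be anti-parallel, any admissible exchange path rotates both by $\pi$ in the same sense and $G$ yields the required $-1$ --- this is precisely the paper's four-move rotation around a common dual vertex.
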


\begin{proof}
Let $(z_{1},\dots,z_{k})=z(0)\sim z(1)\sim\dots\sim z(N)=(z_{1},\dots,z_{k})$
be a nearest-neighbor loop in the graph (\ref{eq: offdiag}). By Lemma
\ref{lem: spinor-in-all-variables}, to figure our the sign picked
up by $F$ along this loop, we have to count the winding numbers of
each $z_{i}^{\bullet}(t)$ with respect to each $z_{j}^{\circ}(t)$
or $v_{q}$ modulo two, as $t$ runs from $0$ to $N$. The contribution
from braiding of $z_{i}^{\bullet}(t)$ around $z_{i}^{\circ}(t)$
is canceled by $\eta_{z_{i}(t)}$. Since we, in particular, assume
that each $z_{i}(t)$ makes a loop in $\Cgr(\Od_{\cvr})$ (i.e., arrives
to the same sheet it started from), no sign is picked up from braiding
of $z_{i}^{\bullet}(t)$ around $v_{1},\dots,v_{n}$. Finally, since
the moving segments $(z_{i}^{\bullet}(t),z_{i}^{\circ}(t))$ and $(z_{j}^{\bullet}(t),z_{j}^{\circ}(t))$
never intersect for $i\neq j$, the winding number of $z_{i}^{\bullet}(t)-z_{j}^{\circ}(t)$
along the loop is the same as that of $z_{i}^{\circ}(t)-z_{j}^{\bullet}(t)$,
so their contributions cancel each other. Therefore, $F$ is indeed
well-defined on $\offdiag k$. The claim that $F$ is a $\cvr$-spinor
in each $z_{i}$ follows readily from similar considerations.

It remains to check the anti-symmetry of $F$ in $z_{1},\ldots,z_{k}$.
For that, it suffices to show that it picks the required sign for
\emph{some} $k$-tuple of pairwise distinct corners $z_{1},\ldots,z_{k}$
of $\Ocvr$ and \emph{some} nearest-neighbor path on (\ref{eq: offdiag})
exchanging $\cvr(z_{i})$ and $\cvr(z_{j})$ and keeping the other
points fixed. This can be easily done by choosing $z_{i}$ and $z_{j}$
so that $z_{i}^{\bullet}=z_{j}^{\bullet}$ and exchanging them in
four moves by rotating around this dual vertex. 
\end{proof}
\begin{rem}
\label{rem: withdiag} Instead of simply excluding the diagonal in
(\ref{eq: offdiag}) one can also consider a multi-dimensional analogue
$\withdiag k$ of the graphs $\Cutz{\Ocvr}{z_{1}}$ discussed in the
previous section. Formally, the \emph{vertices} of $\withdiag k$
are $k$-tuples of corners of $\Ocvr$ with an additional agreement
that if, say, $\cvr(z_{1}),\ldots,\cvr(z_{s})$ match and differ from
$\cvr(z_{s+1}),\ldots,\cvr(z_{k})$, then a linear ordering $\prec$
of these corners is prescribed: for each $i\ne j$ in $\{1,\ldots,s\},$
either $\cvr(z_{i})\prec\cvr(z_{j})$ or $\cvr(z_{j})\prec\cvr(z_{i})$.
One can view $\prec$ as an artificial counter-clockwise ordering
of $\cvr(z_{1}),\ldots,\cvr(z_{s})$, as seen from their common vertex
of $\Odual$.

In the situation described above only the four \emph{edges} corresponding
to two relevant moves $z_{\mathrm{max}}\mapsto(z_{\mathrm{max}})'$
of the maximal (with respect to $\prec$) element of $\{z_{1,}\ldots,z_{s}\}$
and two moves of the minimal element $z_{\mathrm{min}}$ are kept
in $\withdiag k$, with an agreement that, after such a move, $(z_{\mathrm{max}})'$
becomes the minimal element in the updated ordering of corners sharing
the same position on $\Od$ (and similarly for $z_{\mathrm{min}}$,
which becomes maximal at its new position). The proof of Lemma \ref{lem: anti-symmetry}
then extends to $\withdiag k$ provided that one keeps track of the
ordering $\prec$ when permuting $z_{1},\ldots,z_{n}$; we leave the
details to the interested reader.

In particular, in the simplest case $k=2$, the graph $\withdiag k$
contains \emph{two} vertices corresponding to each of the pairs $(z,z)$,
$z\in\Ocvr$, which are mapped to each other by the mapping $(z_{1},z_{2})\mapsto(z_{2},z_{1})$.
For a fixed $z_{1}=z$, these vertices can be viewed as $(z,z^{\pm})\in\{z\}\times\Cutz{\Ocvr}z$,
they become $(z^{\mp},z)\in\Cutz{\Ocvr}z\times\{z\}$ if one fixes
$z_{2}=z$; a similar consideration applies to pairs $(z,z^{*})$,
$z\in\Ocvr$.
\end{rem}

\begin{defn}
\label{def: obs_signs}We adopt the following recursive convention
to fix the global sign of $F(z_{1},\ldots,z_{k})$: for $k\ge4$ and
($z_{1},\dots z_{k-1})\in\offdiag{k-1}$, put 
\begin{equation}
F(z_{1},\ldots,z_{k-1},z_{k-1}^{\pm}):=\pm\eta_{k-1}^{2}F(z_{1},\ldots,z_{k-2})\label{eq: sign_convention}
\end{equation}
and put $F(z,z^{\pm}):=\pm\ds z^{2}$ for $k=2$ and $z\in\Ocvr$.
\end{defn}

Strictly speaking, we choose one\emph{ particular} configuration ($z_{1},\dots z_{k-1})\in\offdiag{k-1}$
and fix the sign of $F(z_{1},\dots,z_{k-1},z_{k-1}^{+})$ by (\ref{eq: sign_convention})
for this configuration. Then, it is straightforward to check that
(\ref{eq: sign_convention}) is preserved when one moves the marked
points locally, and thus the same relations holds for \emph{all} $(k-1)$-tuples
of distinct corners $z_{1},\dots,z_{k-1}\in\Ocvr$.
\begin{prop}
\label{prop: pfaff_discrete} For all $k$-tuples $(z_{1},\ldots,z_{k})\in\offdiag k$\textup{
we have} 
\begin{equation}
F(z_{1},\ldots,z_{k})=\Pf\left[F(z_{i},z_{j})\right]_{i,j=1}^{k}.\label{eq: Pfaff_discrete}
\end{equation}
\end{prop}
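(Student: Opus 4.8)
The plan is to argue by induction on the (even) number $k$ of marked corners, treating both sides of (\ref{eq: Pfaff_discrete}) as spinors in the \emph{last} variable $z_{k}$ and invoking the uniqueness statement of Proposition \ref{prop: discrete-uniqueness}. The base case $k=2$ is immediate, since the Pfaffian of the $2\times2$ antisymmetric matrix with off-diagonal entry $F(z_{1},z_{2})$ equals $F(z_{1},z_{2})$ itself. So I would assume (\ref{eq: Pfaff_discrete}) for $k-2$ and fix pairwise distinct corners $z_{1},\dots,z_{k-1}$, regarding everything as a function of $z_{k}$.

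First I would record that, as a function of $z_{k}$, the left-hand side $\bar{\eta}_{z_{1}}\cdots\bar{\eta}_{z_{k-1}}F(z_{1},\dots,z_{k-1},\cdot)$ is (by the same computation as in Lemmas \ref{lem: s-hol} and \ref{lem: bc_discrete}, applied to the last argument) an s-holomorphic $\cvr$-spinor satisfying the standard boundary conditions, with discrete singularities at $z_{1},\dots,z_{k-1}$; equivalently, it is a genuine $\cvr$-spinor on the cut graph $\Cutz{\Od_{\cvr}}{z_{1},\dots,z_{k-1}}$. For the right-hand side, expanding the Pfaffian along the last column gives
\[
\Pf\left[F(z_{i},z_{j})\right]_{i,j=1}^{k}=\sum_{j=1}^{k-1}(-1)^{k-1-j}F(z_{j},z_{k})\cdot\Pf\left[F(z_{i},z_{i'})\right]_{i,i'\in\{1,\dots,k-1\}\setminus\{j\}},
\]
and the inductive hypothesis identifies each $(k-2)\times(k-2)$ Pfaffian with the observable $F(z_{1},\dots,\widehat{z_{j}},\dots,z_{k-1})$, a constant in $z_{k}$. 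Hence the right-hand side is a linear combination, with constant coefficients, of the two-point observables $F(z_{j},\cdot)$, each of which (by Lemmas \ref{lem: s-hol} and \ref{lem: bc_discrete}) is an s-holomorphic $\cvr$-spinor on the cut graph satisfying the standard boundary conditions, with its singularity at $z_{j}$.

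The difference $D(z_{k})$ of the two sides is therefore an s-holomorphic $\cvr$-spinor on $\Cutz{\Od_{\cvr}}{z_{1},\dots,z_{k-1}}$ obeying the standard boundary conditions (the ramification over the $v_{q}$ being built into the $\cvr$-spinor structure, not a singularity), and it remains only to show that $D$ has \emph{no} singularity, i.e. that the two sides have matching jumps across each cut $[z_{j}^{\bullet}z_{j}^{\circ}]$. Granting this, $D$ descends to a globally s-holomorphic $\cvr$-spinor on $\Cgr(\Od_{\cvr})$ satisfying the standard boundary conditions, so $D\equiv0$ by Proposition \ref{prop: discrete-uniqueness} and the theorem follows. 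Near $z_{j}$ only the $j$-th term of the expansion above is singular, since the other $F(z_{i},\cdot)$ are unramified there and take equal values at the split corners; its jump is controlled by $F(z_{j},z_{j}^{\pm})=\pm\eta_{z_{j}}^{2}$ from Definition \ref{def: obs_signs}, giving $2(-1)^{k-1-j}\eta_{z_{j}}^{2}F(z_{1},\dots,\widehat{z_{j}},\dots,z_{k-1})$.

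On the left-hand side I would use the anti-symmetry of Lemma \ref{lem: anti-symmetry} to move $z_{j}$ into the penultimate slot, at the cost of the \emph{same} sign $(-1)^{k-1-j}$ (it takes $k-1-j$ transpositions), and then read off the jump directly from the recursive sign convention (\ref{eq: sign_convention}), obtaining the identical value. The hard part will be precisely this sign bookkeeping: verifying that the cofactor sign from the Pfaffian expansion, the transposition sign from anti-symmetry, and the sign in (\ref{eq: sign_convention}) cancel uniformly over $j$ and over all four split corners $z_{j}^{\pm},z_{j}^{\pm,*}$. Once that matching is established, the jumps of $D$ vanish at every $z_{j}$, so $D$ extends across all the cuts, and the uniqueness principle finishes the proof.
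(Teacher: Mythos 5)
Your proposal is correct and is essentially the paper's own proof: induction on $k$, forming the difference between $F(z_{1},\ldots,z_{k-1},\cdot)$ and the Pfaffian expansion $\sum_{j}(-1)^{j+1}F(z_{1},\ldots,\hat{z}_{j},\ldots,z_{k-1})F(z_{j},\cdot)$, matching the jumps at each $z_{j}$ via the anti-symmetry of Lemma \ref{lem: anti-symmetry} together with the convention (\ref{eq: sign_convention}), gluing $z_{j}^{+}$ to $z_{j}^{-}$, and concluding by the uniqueness of Proposition \ref{prop: discrete-uniqueness}. The sign bookkeeping you flag as the hard part works out exactly as you anticipate, since your cofactor sign $(-1)^{k-1-j}$ coincides with the transposition sign $(-1)^{j+1}$ because $k$ is even.
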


\begin{proof}
The formula is proven by induction in $k$. There is nothing to prove
if $k=2$. For $k\ge4$, consider the following $\cvr$-spinor on
$\Cutz{\Ocvr}{z_{1},\ldots,z_{k}}$: 
\[
G(\cdot)=F(z_{1},\ldots,z_{k-1},\thinspace\cdot\thinspace)-\sum_{j=1}^{k-1}(-1)^{j+1}F(z_{1},\ldots,\hat{z}_{j},\ldots,z_{k-1})\cdot F(z_{j},\thinspace\cdot\thinspace).
\]
and check that $\bar{\eta}_{z_{1}}\dots\bar{\eta}_{z_{k-1}}G$ satisfies
the conditions of Proposition \ref{prop: discrete-uniqueness}. Indeed,
each term can be viewed as an s-holomorphic $\cvr$-spinor with ``discrete
singularities'' at $z_{1},\ldots,z_{k-1}$ satisfying the standard
boundary conditions. Moreover, we have
\[
\begin{aligned}F(z_{1},\ldots,z_{k-1},z_{i}^{\pm}) & =(-1)^{i+1}F(z_{1},\ldots\hat{z}_{j},\ldots,z_{k-1},z_{i},z_{i}^{\pm})\\
 & =\pm(-1)^{i+1}\eta_{z_{i}}^{2}F(z_{1},\ldots,\hat{z}_{i},\ldots,z_{k-1})\\
 & =(-1)^{i+1}F(z_{1},\ldots,\hat{z}_{i},\ldots,z_{k-1})F(z_{i},z_{i}^{\pm}),
\end{aligned}
\]
where we used the anti-symmetry (Lemma \ref{lem: anti-symmetry})
in the first line and the definition (\ref{eq: sign_convention})
in the second and the third lines. Hence, $G(z_{i}^{-})=G(z_{i}^{+})$
for all $i=1,\ldots,n$, which means that we can glue $z_{i}^{+}$
and $z_{i}^{-}$ together, recovering a $\cvr$-spinor on the whole
graph $\Cgr(\Ocvr)$. Therefore, $G\equiv0$ due to Proposition \ref{prop: discrete-uniqueness}.
The result follows from a well-known recursive formula for the Pfaffian.
\end{proof}
\begin{rem}
Proposition \ref{prop: pfaff_discrete} can be generalized to all
$k$-tuples $(z_{1},\ldots,z_{k})\in\withdiag k$ provided that the
diagonal entries $F(z_{i},z_{i})$ are replaced by zeros. Though we
do not use this more complicated graph $\withdiag k$ in our paper,
a careful reader could notice that $\offdiag k$ might be empty or
have nasty connectivity properties for huge values of $k$ while $\withdiag k$
is well defined for all $k$. We also refer the interested reader
to \textcolor{red}{\cite{ChelkakCimasoniKassel},} where the Pfaffian
structure of fermionic observables is discussed from a combinatorial
perspective.
\end{rem}

\subsection{General correlations via spin and fermionic ones.}

\label{subsec: general_corr_in_terms_of_fermionic}In this section
we explain how to express all correlations of the form 
\[
\E(\Op(\pesm))
\]
featuring in Theorem \ref{thm: intro_2} in terms of
\begin{itemize}
\item multi-point spin correlations and
\item multi-point fermionic observables, which can be further expressed
via two-point observables by the Pfaffian formula (\ref{eq: Pfaff_discrete}).
\end{itemize}
Assume that we are given a collection of vertices $\vv$, faces $\uu$,
edges $e_{1},\dots,e_{s}$ and corners $z_{1},\dots,z_{k}$ of $\Od$.
For each $j=1,\ldots,s$ and $p=1,\ldots,m$, let
\begin{itemize}
\item $z_{k+2j-1},z_{k+2j}$ be (any pair of) two opposite corners adjacent
to the edge $e_{j}$;
\item $z_{k+2s+p}$ be an (arbitrarily chosen) corner adjacent to $u_{p}$
and $v_{n+p}:=\fz_{k+2s+p}$.
\end{itemize}
Let $N:=n+m$ and $K:=k+2s+m$, note that the correlation $\E(\Op(\pesm))$
vanishes (due to the $z\mapsto z^{*}$ symmetry) unless $K$ is even. 
\begin{lem}
\label{lem: corr_to_obs_for_THM_2}The following equality of multi-valued
functions holds: 
\begin{equation}
\eta_{z_{1}}\ldots\eta_{z_{K}}\E_{\Od}(\Op(\pesm))\ =\ F_{\double{v_{1},\dots,v_{N}}}(z_{1},\dots,z_{K})\cdot\E(\sigma_{v_{1}}\ldots\sigma_{v_{N}}),\label{eq: corr_to_obs_2}
\end{equation}
up to a global choice of the sign.
\end{lem}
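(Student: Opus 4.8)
The plan is to reduce the asserted identity to a statement purely about spin--disorder correlations, and then to check it block by block. First I would unwind the right-hand side: by the very definition of the multi-point observable $F_{\cvr}$ with $\cvr=\double{v_{1},\dots,v_{N}}$, multiplying by $\E(\sigma_{v_{1}}\cdots\sigma_{v_{N}})$ clears the denominator, giving
\[
F_{\double{v_{1},\dots,v_{N}}}(z_{1},\dots,z_{K})\cdot\E(\sigma_{v_{1}}\cdots\sigma_{v_{N}})=\ds{z_{1}}\cdots\ds{z_{K}}\,\E\big(\mu_{z_{1}^{\bullet}}\cdots\mu_{z_{K}^{\bullet}}\sigma_{z_{1}^{\circ}}\cdots\sigma_{z_{K}^{\circ}}\sigma_{v_{1}}\cdots\sigma_{v_{N}}\big).
\]
The prefactor $\ds{z_{1}}\cdots\ds{z_{K}}$ is common to both sides of (\ref{eq: corr_to_obs_2}) and cancels, so the claim becomes the spin--disorder identity $\E_{\Od}(\Op(\pesm))=\pm\,\E\big(\mu_{z_{1}^{\bullet}}\cdots\mu_{z_{K}^{\bullet}}\sigma_{z_{1}^{\circ}}\cdots\sigma_{z_{K}^{\circ}}\sigma_{v_{1}}\cdots\sigma_{v_{N}}\big)$.

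Next I would group the $K$ corners by their role and read off the corresponding block of the product. The first $k$ corners reproduce the fermions verbatim, since $\mu_{z_{i}^{\bullet}}\sigma_{z_{i}^{\circ}}=\psi_{z_{i}}^{[\eta_{z_{i}}]}$ by definition. For a disorder corner one has $z_{k+2s+p}^{\bullet}=u_{p}$ and $z_{k+2s+p}^{\circ}=v_{n+p}$, so that block contributes $\mu_{u_{p}}\sigma_{v_{n+p}}$; combined with the spin $\sigma_{v_{n+p}}$ already present in $\sigma_{v_{1}}\cdots\sigma_{v_{N}}$ (taken on the same sheet of $\cvr$), the relation $\sigma^{2}=1$ collapses it to the bare disorder $\mu_{u_{p}}$, while the spins $\sigma_{v_{1}}\cdots\sigma_{v_{n}}$ are untouched. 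For an energy edge $e_{j}$ the two opposite corners give $z_{k+2j-1}^{\circ}=(e_{j})_{+}^{\circ}$, $z_{k+2j}^{\circ}=(e_{j})_{-}^{\circ}$ and $z_{k+2j-1}^{\bullet}=(e_{j})_{+}^{\bullet}$, $z_{k+2j}^{\bullet}=(e_{j})_{-}^{\bullet}$, so the block equals $\mu_{(e_{j})_{+}^{\bullet}}\mu_{(e_{j})_{-}^{\bullet}}\sigma_{(e_{j})_{+}^{\circ}}\sigma_{(e_{j})_{-}^{\circ}}$. Choosing for this pair of disorders the minimal branch cut consisting of the single dual edge $e_{j}^{\bullet}$, Lemma \ref{lem: branch cut} (equivalently, the defining formula for $\mu_{\gamma}$) realizes this disorder pair as a genuine random variable, and at criticality the relation $e^{-2\beta_{\mathrm{crit}}\xi}=\sqrt{2}-\xi$ used in the proof of Lemma \ref{lem: s-hol} gives
\[
\mu_{(e_{j})_{+}^{\bullet}}\mu_{(e_{j})_{-}^{\bullet}}=\exp\big[-2\beta_{\mathrm{crit}}\sigma_{(e_{j})_{+}^{\circ}}\sigma_{(e_{j})_{-}^{\circ}}\big]=\sqrt{2}-\sigma_{(e_{j})_{+}^{\circ}}\sigma_{(e_{j})_{-}^{\circ}},
\]
whence $\mu_{(e_{j})_{+}^{\bullet}}\mu_{(e_{j})_{-}^{\bullet}}\sigma_{(e_{j})_{+}^{\circ}}\sigma_{(e_{j})_{-}^{\circ}}=\sqrt{2}\,\sigma_{(e_{j})_{+}^{\circ}}\sigma_{(e_{j})_{-}^{\circ}}-1=\en_{e_{j}}$. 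Assembling the four kinds of blocks reproduces exactly $\E_{\Od}(\Op(\pesm))$.

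Finally I would justify the ``up to a global sign'' clause, and this is where the main obstacle lies: the algebraic reductions above are routine, but one must verify that both sides are \emph{the same} multi-valued function. Both are $\ds{z_{1}}\cdots\ds{z_{K}}$ times a spin--disorder correlation, whose monodromy is governed by Lemma \ref{lem: spinor-in-all-variables} together with the winding rule (\ref{eq: def_eta_discrete}) for the $\ds{z}$; since after reduction these underlying correlations coincide, it suffices to fix the sign at a single base configuration, which the termwise computation does. The delicate point is to confirm that the auxiliary choices made in the reduction --- the minimal cuts $e_{j}^{\bullet}$ for the energy pairs and the identification of the two copies of $\sigma_{v_{n+p}}$ on a common sheet --- are compatible with the branch-cut conventions underlying \emph{both} the spin--disorder correlation and the antisymmetric fermionic observable. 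As flagged in Remark \ref{rem: dbl-covers-general}, this is precisely the nontrivial passage between the symmetric and anti-symmetric descriptions, and any residual discrepancy is a single overall sign, consistent with the statement.
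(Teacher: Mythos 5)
Your proof is correct and follows essentially the same route as the paper's: unwind the definition of $F_{\double{v_{1},\dots,v_{N}}}$ to clear the denominator, cancel the doubled spins $\sigma_{v_{n+p}}^{2}=1$ at the disorder corners, and choose the branch cut as a disjoint union of single dual edges $\{e_{j}^{\bullet}\}$ for the energy pairs plus a cut $\widetilde{\gamma}$ for the remaining disorders, so that the critical identity $e^{-2\beta\xi}=\sqrt{2}-\xi$ turns each block $\mu_{e_{+}^{\bullet}}\mu_{e_{-}^{\bullet}}\sigma_{e_{+}^{\circ}}\sigma_{e_{-}^{\circ}}$ into $\en_{e}$, exactly as in the paper's computation (\ref{eq: en_spin_disorder}). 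Your closing paragraph on monodromy is a slightly more explicit treatment of what the paper absorbs into the ``up to a global choice of the sign'' clause, but the substance coincides.
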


\begin{proof}
By definition, the right-hand side of (\ref{eq: corr_to_obs_2}) is
equal to 
\begin{equation}
\eta_{z_{1}}\ldots\eta_{z_{K}}\E(\mu_{\vz_{1}}\ldots\mu_{\vz_{K}}\sigma_{\fz_{1}}\dots\sigma_{\fz_{K}}\sigma_{v_{1}}\dots\sigma_{v_{N}}).\label{eq: corr_to_obs_interm}
\end{equation}
Note that the spins at $v_{n+1}=z_{k+1}^{\text{\ensuremath{\circ}}},\dots,v_{n+m}=z_{k+m}^{\circ}$
appear in this expression twice, and thus cancel out. Further, given
an edge $e$ of $\Od$, denote by $\gamma_{e}$ a subset $\{e^{\bullet}$\}
of edges of $\Odual$ consisting of the single edge $e^{\bullet}$
dual to $e$. Let $e_{\pm}\in\Od$ be the endpoints of $e$ and $\xi_{e}:=\sigma_{e_{+}}\sigma_{e_{-}}$.
For $\beta=\beta_{\mathrm{crit}},$ we have
\begin{equation}
\mu_{\gamma_{e}}\sigma_{e_{+}}\sigma_{e_{-}}=e^{-2\beta\xi_{e}}\xi_{e}=\cosh(2\beta)\cdot\xi_{e}-\sinh(2\beta)=\sqrt{2}\xi_{e}-1=\en_{e}.\label{eq: en_spin_disorder}
\end{equation}
Thus, choosing the set $\gamma$ of dual edges implicit in the definition
of the disorder variable in the right-hand side of (\ref{lem: corr_to_obs_for_THM_2})
to be a disjoint union $\gamma=\widetilde{\gamma}\sqcup\gamma_{e_{1}}\sqcup\dots\sqcup\gamma_{e_{s}}$,
where $\pa\widetilde{\gamma}=\{\vz_{1},\dots,\vz_{k},\uu\}$, one
sees that 
\[
\E(\mu_{\vz_{1}}\ldots\mu_{\vz_{K}}\sigma_{\fz_{1}}\dots\sigma_{\fz_{K}}\sigma_{v_{1}}\dots\sigma_{v_{N}})\ =\ \E(\mu_{\widetilde{\gamma}}\,\en_{e_{1}}\ldots\en_{e_{s}}\sigma_{z_{1}^{\circ}}\ldots\sigma_{z_{k}^{\circ}}\sigma_{v_{1}}\dots\sigma_{v_{n}}).
\]
Therefore, (\ref{eq: corr_to_obs_interm}) is equal to the left-hand
side of (\ref{eq: corr_to_obs_2}).
\end{proof}
Lemma \ref{lem: corr_to_obs_for_THM_2} is instrumental in the proof
of Theorem \ref{thm: intro_2}. We now formulate a similar statement
suitable for Theorem \ref{thm: Intro_3}. Let the domain $\Od$ be
equipped with boundary conditions $\bcd$, that is, with a subdivision
of its boundary into three subsets $\plus$, $\minus$ and $\free$.
For convenience, we assign the $-1$ spin to each \emph{free} boundary
arc, and denote by $b_{1},\dots,b_{q}$ the vertices of $\Cdual$
separating a plus boundary arc from a minus one, with the aforementioned
assignment made on free ones. We number $b_{1},\dots,b_{q}$ in such
a way that points at each boundary component are listed consecutively
and the index increases when one tracks this boundary component so
that $\Od$ stays at the left.

We now introduce auxiliary boundary conditions, which we denote by
$\bcdd$ and defined as follow: if $w_{1},w_{2}$ are two vertices
of $\Cd\setminus\Od$ adjacent to the same connected component of
$\pa\Od$, then we require that
\begin{itemize}
\item $\sigma_{w_{1}}=\sigma_{w_{2}}$ if an even number of points $b_{1},\dots,b_{q}$
lie on the boundary arc between $w_{1}$ and $w_{2}$ (traced so that
the domain \textbf{$\Od$} stays at the left);
\item $\sigma_{w_{1}}=-\sigma_{w_{2}}$ otherwise.
\end{itemize}
In other words, the spins are required to be \emph{locally monochromatic},
i.e., to stay locally constant along each boundary component except
the required changes at each of $b_{j}$.

Given $s$ edges $e_{1},\dots,e_{s}$, $n$ vertices $\vv$, and the
boundary conditions $\bcd$, let
\begin{itemize}
\item $z_{i}$ denote any corner adjacent both to $b_{i}=\vz_{i}\in\Odual$
and to a vertex of $\Cd\setminus\Od$,
\item $z_{q+2j-1}$ and $z_{q+2j}$ be two opposite corners adjacent to
the edge $e_{j}$, such that $\vz_{q+2j-1}-\fz_{q+2j-1}>0$ (and hence
$\vz_{q+2j}-\fz_{q+2j}<0$), 
\end{itemize}
where $i=1,\ldots,q$ and $j=1,\dots,s$. As above, we denote $K:=q+2s$. 
\begin{lem}
\label{lem: corr_to_obs_for_Thm_3}The following identity holds:
\begin{equation}
\E_{\Od}^{\bcdd}(\Op(\en,\sigma))\ =\ \frac{F_{\double{\vv}}(z_{1},\dots,z_{K})}{F(z_{1},\dots,z_{q})}\,\E_{\Od}(\svv),\label{eq: Corr_to_obs_2}
\end{equation}
where $F_{\double{\vv}}(z_{1},\dots,z_{K})$ and $F(z_{1},\dots,z_{q})$
are understood according to Definition \ref{def: obs_signs}.
\end{lem}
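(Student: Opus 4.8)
The plan is to reduce the statement to the combinatorial identity already established in Lemma~\ref{lem: corr_to_obs_for_THM_2}, adapting it to the \emph{plus-minus-free} setup. The starting observation is that the auxiliary boundary conditions $\bcdd$ are, by construction, \emph{monochromatic along boundary components except for the prescribed sign changes at each $b_j$}. These sign changes are exactly what a collection of disorder insertions at $b_1,\dots,b_q$ produces: inserting the disorder string changes the effective sign of the spin-spin coupling along a branch cut joining the $b_j$'s to each other (or to the appropriate boundary), which is precisely the locally monochromatic constraint. Thus the first step is to recognize that the Ising model with boundary conditions $\bcdd$ on $\Od$ is the same object as the \emph{standard} (wired) Ising model on $\Od$ with disorders inserted at $b_1,\dots,b_q$. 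Concretely, I would write
\[
\cZ(\Od,\bcdd)\cdot\E_{\Od}^{\bcdd}(\Op(\en,\sigma))
\;=\;\cZ(\Od)\cdot\E_{\Od}\bigl(\mu_{b_1}\cdots\mu_{b_q}\,\Op(\en,\sigma)\bigr),
\]
with the same branch-cut bookkeeping as in Lemma~\ref{lem: branch cut}, and similarly $\cZ(\Od,\bcdd)=\cZ(\Od)\cdot\E_{\Od}(\mu_{b_1}\cdots\mu_{b_q})$ by taking $\Op=1$. Dividing, the two partition functions cancel and one obtains
\[
\E_{\Od}^{\bcdd}(\Op(\en,\sigma))
\;=\;\frac{\E_{\Od}(\mu_{b_1}\cdots\mu_{b_q}\,\Op(\en,\sigma))}{\E_{\Od}(\mu_{b_1}\cdots\mu_{b_q})}.
\]

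The second step is to apply the disorder-fermion dictionary of Section~\ref{subsec: general_corr_in_terms_of_fermionic} to \emph{both} numerator and denominator. For the numerator, the operator $\Op(\en,\sigma)$ contains $s$ energies and $n$ bulk spins; together with the $q$ boundary disorders at $b_1,\dots,b_q$ this is exactly the input of Lemma~\ref{lem: corr_to_obs_for_THM_2}, with the corners $z_1,\dots,z_q$ attached to $b_1,\dots,b_q$ and the corners $z_{q+2j-1},z_{q+2j}$ attached to the energy edges $e_j$ as specified. Applying that lemma (with $N=n$, since the boundary disorders $b_i$ carry no extra bulk spin here, only $K=q+2s$) yields
\[
\eta_{z_1}\cdots\eta_{z_K}\,\E_{\Od}(\mu_{b_1}\cdots\mu_{b_q}\,\Op(\en,\sigma))
\;=\;F_{\double{\vv}}(z_1,\dots,z_K)\cdot\E_{\Od}(\svv),
\]
up to a global sign. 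For the denominator one applies the same lemma with no energies and no bulk spins, so that only the $q$ disorder-carrying corners remain:
\[
\eta_{z_1}\cdots\eta_{z_q}\,\E_{\Od}(\mu_{b_1}\cdots\mu_{b_q})
\;=\;F(z_1,\dots,z_q)\cdot\E_{\Od}(\svv),
\]
where now $\double{\vv}$ is trivial in this factor. Dividing the two displays, the spin correlation $\E_{\Od}(\svv)$ and the prefactors $\eta_{z_1}\cdots\eta_{z_q}$ cancel, leaving exactly $(\ref{eq: Corr_to_obs_2})$ after multiplying through by the surviving $\E_{\Od}(\svv)$ on the right-hand side.

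The main obstacle I expect is not the algebra but the \emph{sign and spinor bookkeeping}, which must be handled with care in two places. First, the identification of $\bcdd$ with the disorder insertion depends on the homology class of the branch cut joining the $b_j$'s, and one must check that the cut can be routed so that it realizes precisely the prescribed pattern of sign flips (even versus odd number of $b_j$'s on a boundary arc) rather than some other locally monochromatic configuration; this is why the problem fixes the cyclic ordering of $b_1,\dots,b_q$ with $\Od$ on the left. Second, the corner-orientation conventions for the energy corners --- namely the requirement $\vz_{q+2j-1}-\fz_{q+2j-1}>0$ --- must be matched to the normalization $\en_e=\sqrt2\,\xi_e-1$ coming from $(\ref{eq: en_spin_disorder})$, and the global signs produced by the two applications of Lemma~\ref{lem: corr_to_obs_for_THM_2} must be shown to be \emph{the same} so that they cancel in the ratio; since both $F_{\double{\vv}}(z_1,\dots,z_K)$ and $F(z_1,\dots,z_q)$ are fixed by the common recursive convention of Definition~\ref{def: obs_signs} using the \emph{same} corners $z_1,\dots,z_q$, this compatibility is exactly what makes the ratio well defined, and verifying it is the crux of the proof.
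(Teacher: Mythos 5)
Your proposal is correct and follows essentially the same route as the paper: the paper's proof unpacks the right-hand side via the definition of $F$ and the identity (\ref{eq: en_spin_disorder}), notes that the boundary spin insertions satisfy $\sigma_{\fz_{1}}\cdots\sigma_{\fz_{q}}=1$ (even number of $b_{i}$ per boundary component) and that the sign is $+$ under Definition \ref{def: obs_signs} with $\eta_{z_{q+2j-1}}\eta_{z_{q+2j}}=1$, and then identifies $\E(\mu_{\gamma}\en_{e_{1}}\cdots\svv)/\E(\mu_{\gamma})$ with the $\bcdd$-model via Lemma \ref{lem: branch cut} and the measure-preserving sheet-restriction bijection for the cut $\gamma\subset\partial\Od$ running along the boundary --- exactly your two steps run in the opposite direction. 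The only blemish is a slip in your denominator display, where the factor $\E_{\Od}(\svv)$ should be absent (the cover is trivial there, and the boundary spins multiply to $1$); this is harmless since your concluding sentence already accounts correctly for the surviving $\E_{\Od}(\svv)$.
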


\begin{proof}
Taking into account the definition of $F$ and the computation (\ref{eq: en_spin_disorder}),
we can write the right-hand side of (\ref{eq: Corr_to_obs_2}) as
\begin{equation}
\pm\eta_{z_{q+1}}\ldots\eta_{z_{K}}\frac{\E(\en_{e_{1}}\ldots\en_{e_{s}}\mu_{\vz_{1}}\dots\mu_{\vz_{q}}\sigma_{\fz_{1}}\ldots\sigma_{\fz_{q}}\svv)}{\E(\mu_{\vz_{1}}\dots\mu_{\vz_{q}}\sigma_{\fz_{1}}\ldots\sigma_{\fz_{q}})},\label{eq: Lem_Thm_3_intermediate}
\end{equation}
where both expectations are taken with standard boundary conditions.
Moreover, under the convention (\ref{eq: sign_convention}), the sign
in the above expression is $+$ provided that $\eta_{z_{q+2j-1}}\eta_{z_{q+2j}}=1$
for $j=1,\ldots,s$, and that the expression $\mu_{\vz_{1}}\dots\mu_{\vz_{q}}=\mu_{b_{1}}\ldots\mu_{b_{q}}=\mu_{\gamma}$
is defined using the same set $\gamma\subset\partial\Od$ of dual
edges running along the boundary of $\Od$. 

Since each boundary component of $\Od$ contains an even number of
the points $b_{1},\dots,b_{q}$, we have $\sigma_{\fz_{1}}\ldots\sigma_{\fz_{q}}=1$
under the standard boundary conditions. Due to Lemma \ref{lem: branch cut},
the expression (\ref{eq: Lem_Thm_3_intermediate}) simplifies to 
\[
\frac{\E(\mu_{\gamma}\en_{e_{1}}\ldots\en_{e_{s}}\svv)}{\E(\mu_{\gamma})}\ =\ \E_{\double{b_{1,}\ldots,b_{q}}}(\en_{e_{1}}\ldots\en_{e_{s}}\svv).
\]
It remains to notice that restricting a $\double{b_{1,}\ldots,b_{q}}$-spinor
$\sigma$ (with standard boundary conditions) to a sheet defined by
$\gamma$ is a measure preserving bijection to the set of spin configurations
with boundary conditions $\bcdd.$ 
\end{proof}
Finally, we explain how to pass from the boundary conditions $\bcdd$
to $\bcd$ by placing auxiliary spin variables on fixed boundary arcs.
For each boundary component that has a non-empty $\fixed$ part, pick
a vertex on the corresponding part of $\fixed^{\circ}$; denote these
vertices by $v_{n+1},\dots,v_{n+d}$. Given a subset $S\subset\{v_{n+1},\dots,v_{n+d}\},$
let $\sigma_{S}:=\prod_{v\in S}\sigma_{v}$. Viewing each $\sigma_{v_{n+1}},\dots,\sigma_{v_{n+d}}$
as an independent variable with values $\pm1$, it is well known that
the functions $\sigma_{S}$ form an orthonormal basis of $L^{2}(\{\pm1\}^{d})$,
called the Fourier\textendash Walsh basis. In particular, for each
prescribed values $\sfix{}=(\sfix{(1)},\ldots,\sfix{(d)})\in\{\pm1\}^{d}$,
one can easily expand the corresponding indicator function in this
basis: 
\begin{equation}
\ind(\sigma_{v_{n+1}}=\sfix{(1)},\dots,\sigma_{v_{n+d}}=\sfix{(d)})\ =\ \sum_{S}\alpha_{S}(\sfix{})\sigma_{S}.\label{eq: FW_indicator}
\end{equation}
Therefore, one has 
\begin{equation}
\E_{\Od}^{\bcd}(\Op(\en,\sigma))\ =\ \frac{\sum_{S}\alpha_{S}(\sfix{})\E_{\Od}^{\bcdd}(\Op(\en,\sigma)\sigma_{S})}{\sum_{S}\alpha_{S}(\sfix{})\E_{\Od}^{\bcdd}(\sigma_{S})},\label{eq: FW_correlation}
\end{equation}
and, indeed, the proof of Theorem \ref{thm: Intro_3} can be reduced
to the computation of the scaling limits of $\E_{\Od}^{\bcdd}(\Op(\en,\sigma)\sigma_{S})$
and $\E_{\Od}^{\bcdd}(\sigma_{S})$ for all subsets $S\subset\{v_{n+1},\dots,v_{n+d}\}$.
Note that a half of these correlations always vanish due to the parity
reasons.

\newpage{}

\section{Fermionic observables in continuum: definitions and convergence theorems}

\label{sec: ccor} In this section we introduce continuous counterparts
of fermionic observables discussed in Section \ref{subsec: fermions-discrete}
and formulate the relevant convergence results. Recall that all the
correlation functions $\E_{\Od}(\Op(\pesm))$ and $\E_{\Od}^{\bcd}(\Op(\en,\sigma))$
that we discuss in our paper can be eventually expressed via the spin
correlations and these discrete fermionic observables evaluated either
in the bulk of $\Od$ or near one of the spins, or at the boundary
of $\Od$; in the latter case points separating free and wired boundary
arcs play a special role. For the sake of the reader, we now list
the variants of the continuous fermionic observable that we progressively
introduce below:
\begin{itemize}
\item $\feta_{\Omega,\cvr}(a,z)$: Definition \ref{def: feta} (holomorphic
in $z$);
\item $f_{\Omega,\cvr}^{[\eta_{1},\eta_{2}]}(z_{1},z_{2})$: Lemma \ref{lem: antisym_cont}
(real-valued); 
\item $f_{\Omega,\cvr}(z_{1},z_{2})$ and $\fdag_{\Omega,\cvr}(z_{1},z_{2})$:
Lemma \ref{lem: spinor_hol_antyhol} (both functions are holomorphic
in $z_{2}$, $f_{\Omega,\cvr}$ is holomorphic in $z_{1}$, $\fdag_{\Omega,\cvr}$
is anti-holomorphic in $z_{1}$);
\item $f_{\Omega,\cvr}^{[\eta,\reg]}(a,v):$ Definition \ref{def: sharp}
(real-valued, $v$ is a ramification point of $\cvr$);
\item $f_{\Omega,\cvr}^{[\eta,\flat]}(a,b)$: Definition \ref{def: flat}
(real-valued, $b\in\partial\Omega$ separates $\free$ and $\fixed$);
\item $f_{\Omega,\cvr}^{[\reg]}(v,z)$ and $f_{\Omega,\cvr}^{[\flat]}(b,z)$:
Lemma \ref{lem: f_star_holom} (holomorphic in $z$; $v$ and $b$
are as above); 
\item $f_{\Omega,\cvr}^{[\reg,\any]}(v,\cdot)$ and $f_{\Omega,\cvr}^{[\flat,\any]}(b,\cdot)$,
$\any\in\{\eta,\reg,\flat\}$: Definition \ref{def: f-any-any} (real-valued).
\end{itemize}
Theorem \ref{thm: conv_bulk_bulk} claims the convergence of discrete
fermionic observables to $f_{\Omega,\cvr}^{[\eta_{1},\eta_{2}]}(z_{1},z_{2})$
when both points are in the bulk of $\Omega$. Theorem \ref{thm: Convergence_singular}
provides a unifying result on the convergence to $f_{\Omega,\cvr}^{[\triangleleft,\triangleright]}$
for all possible pairs of superscripts $\triangleleft,\triangleright\in\{\eta,\reg,\flat\}$.
At the end of the section we also introduce a quantity
\begin{itemize}
\item $\coefA_{\Omega,\cvr}(v)$: Definition \ref{def: coefA} (second coefficient
in the expansion of $f_{\Omega,\cvr}^{[\reg]}(v,z)$ as $z\to v$)
\end{itemize}
and formulate the corresponding convergence theorem (Theorem \ref{thm: Conv_both_near_spin}).
The coefficient $\coefA_{\Omega,\cvr}(v)$ is of a crucial importance
for the definition of spin correlations in continuum, this discussion
is postponed until Section \ref{sec: corr-continuum}. The proofs
of Theorems \ref{thm: conv_bulk_bulk}, \ref{thm: Convergence_singular}
and \ref{thm: Conv_both_near_spin} (as well as those of Theorem \ref{thm: Convergence_bulk_near}
and Lemma \ref{lem: Clements_clever_lemma}) are discussed in Section
\ref{sec: proofs_of_convergence_theorems}.

\subsection{Domains and their convergence}

\label{subsec: domains_convergence}Let $\Omega\subset\C$ be a (possibly,
punctured) bounded finitely connected domain equipped with boundary
conditions represented by a pair of disjoint open subsets $\fixed,\free\subset\partial\Omega$
such that $\pa\Omega\setminus(\fixed\cup\free)$ consists of a finite
number of points; the punctures (i.e., single-point boundary components)
are \emph{not} allowed to belong neither to $\fixed$ nor to $\free$. 

We say that $\Omega$ is \textit{nice} if $\pa\Omega$ consists of
finitely many disjoint analytic Jordan curves or points. By Koebe's
extension of Riemann mapping theorem, any finitely connected domain
$\Omega$ is conformally equivalent to a nice domain $\Omega'$; indeed,
$\Omega'$ can be taken to be a circular domain. 

Formally speaking, below we need to distinguish $\Omega$ and another
domain obtained by adding back to $\Omega$ all its punctures. In
practice, we believe that no confusion arises and the meaning of the
symbol $\Omega$ is always clear from the context. Still, in very
rare situations when the difference is of importance, we will use
the symbol $\Opunc:=\Omega$ for the former (punctured) and $\Onopunc$
for the latter (punctures filled) domains.

In Theorems \ref{thm: Intro_1} \textendash{} \ref{thm: Intro_3},
as well as in all the convergence results discussed below, we fix
$\Onopunc$ and a sequence of its discrete approximations $\Od$,
in the following sense.
\begin{defn}
\label{def: domain-conv}Let $\Od$ be a sequence of discrete domains,
equipped with boundary conditions $\pa\Od=\fixed\cup\free$, and let
$\Omega$ be a domain as above. We say that $\Od$ \emph{approximate}
$\Omega$ as $\delta\to0$ if the following holds: 
\begin{itemize}
\item eventually, the number of boundary components of $\Od$ is the same
as the number of boundary components in $\Onopunc$ and the numbers
of wired and free arcs of $\Od$ and those of $\Omega$ match for
each boundary component;
\item $\Od$ approximates $\Onopunc$ in the Carathéodory sense;
\item the wired and the free boundary arcs of $\Od$ approximate those of
$\Onopunc$.
\end{itemize}
\end{defn}

The only reason to introduce single-point boundary components of $\Omega$
is that we will work with double covers of planar domains. Recall
that, in the discrete setting, the corner graph $\Cgr(\Od)$ is often
equipped with a double cover, say $\cvr=\double{v_{1},\dots,v_{n}}$,
where the points $v_{1},\dots,v_{n}$ may (and typically do) belong
to $\Od.$ 

We will always assume that $\{\vv\}\cap\Od=\{\vv\}\cap\Onopunc$ (in
other words, that $v_{q}$ do not belong to possible fjords of $\Od$),
the remaining points $v_{q}\not\in\Od$ belong to the corresponding
connected components of $\C\setminus\Omega$, and that $\Opunc$=$\Onopunc\setminus\{\vv\}$.
In such a situation, $\double{\vv}$ can be viewed both as a double
cover of $\Od$ and that of $\Opunc$.

In our paper, all the convergence statements are understood in the
following sense.
\begin{defn}
\label{def: conv-in-the-bulk}Let discrete domains $\Od$ approximate
$\Omega$ in the sense of Definition \ref{def: domain-conv}. If $A$
is a quantity that depends on $\Od$ and a collection of marked points
in $\Od$ (e. g., vertices $\vv$, corners $z_{1},\dots,z_{k}$, etc),
and $B$ is an expression depending on $\Omega$ and a similar collection
of marked points, we say that 
\[
A=B+o(1)\quad\text{as }\mbox{\ensuremath{\delta\to}0}
\]
\emph{uniformly over marked points in the bulk of $\Omega$ and away
from each other} if for each compact subset $K$ of $\Omega$ and
each $\eps>0$, the supremum of $|A-B|$ over all possible positions
of marked points in $K$ and at distance at least $\eps$ from each
other tends to zero as $\delta\to0$.
\end{defn}

\subsection{Continuous counterpart of the standard boundary conditions for spinors }

In the following definition, we consider relatively open subsets $S\subset\pa\Ocvrc$
of the boundary of a double cover of $\Omega$. In applications, $S$
will be typically the whole $\pa\Ocvrc$, except, maybe, one or several
points. Through the rest of the paper, we denote 
\[
\lamb:=e^{-\frac{i\pi}{4}}.
\]
If $\Omega$ is a nice domain and $\zeta\in\pa\Omega$ (or $\zeta\in\pa\Ocvrc$,
where $\cvr$ is a double cover of $\Omega$), we denote by $\tau_{\zeta}$
the unit tangent vector to $\pa\Omega$ at $\zeta$, oriented so that
$\Omega$ is on its left, and viewed as a complex number. 
\begin{defn}
\label{def: BC_continuous}Let $\Omega$ be a nice domain, $\cvr$
its double cover, $f$ a holomorphic $\cvr$-spinor in $\Omega$,
and $S\subset\pa\Ocvrc$ a relatively open set . We say that $f$
satisfies the \textit{standard boundary conditions} on $S$ if the
following holds true:
\end{defn}

\begin{itemize}
\item If $\zeta\in S\cap\fixed$, then $f$ extends continuously to $\zeta,$
and 
\begin{equation}
\im\big[\tang[\zeta]^{\frac{1}{2}}\cdot f(\zeta)]=0.\label{eq: bc_fixed}
\end{equation}
\item If $\zeta\in S\cap\free$, then $f$ extends continuously to $\zeta,$
and 
\begin{equation}
\re\big[\tang[\zeta]^{\frac{1}{2}}\cdot f(\zeta)\big]=0.\label{eq: bc_free}
\end{equation}
\item If $\{v\}\subset S$ is a puncture of $\Onopunc$ (i.e., a single-point
boundary component of $\Omega$), then there exists $c\in\R$ such
that 
\begin{equation}
f(z)=\frac{\lambb c}{\sqrt{z-v}}+O(1)\ \ \text{as}\ \ z\to v.\label{eq: bc_branchpoint}
\end{equation}
\item If $\nu$ is a free boundary arc such that $\bar{\nu}\subset\text{Int}\,S$,
and $b_{1},b_{2}$ are the endpoints of $\nu$, following in this
order when tracing $\nu$ in the direction of $\tau$, then 
\begin{equation}
f(z)=\frac{c_{p}}{\sqrt{z-b_{p}}}+O(1)\ \ \text{as}\ \ z\to b_{p},\ p=1,2,\quad\text{and}\quad c_{2}=ic_{1}\in\R,\label{eq: bc_free_arc}
\end{equation}
where $f(z)$ as $z\to b_{p}$, $p=1,2$, denote the restriction of
$f$ to the \emph{same} sheet of $\Ocvrc$ over a simply-connected
neighborhood of $\nu$ inside $\Omega$, and the branch of $\sqrt{z-b}$
is chosen continuously for $z$ in this neighborhood and\textbf{ $b\in\nu$};
cf. Remark \ref{rem: free-arcs-cond}. 
\end{itemize}
\begin{rem}
\label{rem: free-arcs-cond} To illustrate the sign condition in (\ref{eq: bc_free_arc}),
let us first note that (\ref{eq: bc_fixed}), (\ref{eq: bc_free})
always imply that $c_{1}\in i\R$ and $c_{2}\in\R$ provided that
the asymptotics in (\ref{eq: bc_free_arc}) hold. Still, the equation
$c_{1}=-ic_{2}$ means more than $|c_{1}|=|c_{2}|$. E.g., let $[b_{1},b_{2}]\subset\R$
and consider the function $f(z):=\left((z-b_{1})(z-b_{2})\right)^{-\frac{1}{2}}$
defined in a neighborhood of the segment $[b_{1},b_{2}]$ in the upper
half-plane $\H$. If one views $(b_{1},b_{2})$ as a free arc on the
boundary of $\Omega\subset\H$, then both (\ref{eq: bc_fixed}) and
(\ref{eq: bc_free}) are satisfied near $[b_{1},b_{2}]$ but (\ref{eq: bc_free_arc})
is \emph{not}: $c_{2}=1/\sqrt{b_{2}-b_{1}}=-ic_{1}.$
\end{rem}

Note that the standard boundary conditions are real-linear, that is,
if $\cvr$-spinors $f_{1,2}$ both satisfy the standard boundary conditions
on $S$, then so does 
\[
\alpha_{1}f_{1}+\alpha_{2}f_{2}
\]
 for any $\alpha_{1,2}\in\R$. Also, the standard boundary conditions
are local, except for (\ref{eq: bc_free_arc}). Finally, if they hold
on $S$, they also hold on $S^{*}$; therefore, we may speak of the
standard boundary conditions on $S\subset\pa\Omega$ rather than on
$S\subset\pa\Ocvrc$.

A simple, but fundamental property of the standard boundary conditions
is the conformal covariance.
\begin{lem}
\label{lem: SBC_conformal}Let $\Omega,\Oother$ be two nice domains,
$\varphi:\Omega\to\Oother$ a conformal isomorphism, and $\cvrother$
denote a double cover of $\Oother$ isomorphic to $\cvr$. Let $f$
be a $\cvrother$-spinor in $\Oother$ satisfying the standard boundary
conditions on $\Sother\subset\Oother$. Then, $f(\varphi(z))\cdot\varphi'(z)^{\frac{1}{2}}$
is a $\cvr$-spinor in $\Omega$ satisfying the standard boundary
conditions on $S=\varphi^{-1}(\Sother)$. 
\end{lem}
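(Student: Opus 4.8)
The plan is to set $g(z):=f(\varphi(z))\cdot\varphi'(z)^{\frac{1}{2}}$ and to verify each of the four clauses of Definition~\ref{def: BC_continuous} in turn, using throughout that, since $\Omega$ and $\Oother$ are nice, $\varphi$ extends conformally across the analytic curves making up $\pa\Omega$ (Schwarz reflection), so that $\varphi$ is biholomorphic in a neighbourhood of each boundary point and $\varphi'$ is finite and non-vanishing there. That $g$ is a holomorphic $\cvr$-spinor is almost immediate: $f\circ\varphi$ is a holomorphic $\cvr$-spinor through the isomorphism of covers induced by $\varphi$, while $\varphi'^{\frac{1}{2}}$ is single-valued on $\Omega$ — the winding of $\varphi'$ along each boundary component vanishes because $\varphi$ preserves the rotation number of that component, so $\log\varphi'$ is single-valued — and non-vanishing, hence holomorphic. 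The substance of the lemma lies in the four boundary clauses.

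For the fixed and free conditions I would first record how the tangent transforms: since $\varphi$ preserves orientation, $\tau_{\varphi(\zeta)}=(\varphi'(\zeta)/|\varphi'(\zeta)|)\,\tau_{\zeta}$ for $\zeta$ on an analytic part of $\pa\Omega$. Hence, up to the (spinor) sign, $\tau_{\zeta}^{\frac{1}{2}}\varphi'(\zeta)^{\frac{1}{2}}=|\varphi'(\zeta)|^{\frac{1}{2}}\,\tau_{\varphi(\zeta)}^{\frac{1}{2}}$, so that $\tau_{\zeta}^{\frac{1}{2}}g(\zeta)=|\varphi'(\zeta)|^{\frac{1}{2}}\,\tau_{\varphi(\zeta)}^{\frac{1}{2}}f(\varphi(\zeta))$. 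As $|\varphi'(\zeta)|^{\frac{1}{2}}$ is real and positive, the vanishing of the imaginary part carries over verbatim from $f$ at $\varphi(\zeta)$ to $g$ at $\zeta$, giving~(\ref{eq: bc_fixed}), and likewise the vanishing of the real part gives~(\ref{eq: bc_free}).

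The puncture and the free-arc endpoints are handled by one local computation. Near a boundary point $p$ (a puncture $v$, or an endpoint $b$) with image $p'=\varphi(p)$, one expands $\varphi(z)-p'=\varphi'(p)(z-p)(1+O(z-p))$, so that $(\varphi(z)-p')^{-\frac{1}{2}}=\varphi'(p)^{-\frac{1}{2}}(z-p)^{-\frac{1}{2}}(1+O(z-p))$; multiplying the singular term of $f$ by $\varphi'(z)^{\frac{1}{2}}=\varphi'(p)^{\frac{1}{2}}(1+O(z-p))$ and using the \emph{same} branch of $\varphi'^{\frac{1}{2}}$ throughout collapses the two occurrences of $\varphi'(p)^{\pm\frac{1}{2}}$ to a sign, leaving $g(z)=c/\sqrt{z-p}+O(1)$ with $c=\pm c'$. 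For a puncture this already establishes~(\ref{eq: bc_branchpoint}): the constant $\lambb$ is carried along unchanged and $c=\pm c'\in\R$.

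The delicate point — the one I expect to require the most care — is the sign relation $c_{2}=ic_{1}$ in~(\ref{eq: bc_free_arc}), since both endpoints of the free arc $\nu$ must be described by a \emph{single} continuous branch of $\sqrt{z-b}$ over a simply-connected neighbourhood of $\nu$, and similarly on the image side. The plan here is twofold. First, because $\varphi$ is orientation-preserving it maps $\nu$ to $\nu'=\varphi(\nu)$ carrying the endpoints $b_{1},b_{2}$ to $\varphi(b_{1}),\varphi(b_{2})$ in the \emph{same} order relative to the tangent, so the roles of the two endpoints are not interchanged. Second, the branch of $\varphi'(b)^{\frac{1}{2}}$ implicit in the relation $\sqrt{\varphi(z)-\varphi(b)}=\varphi'(b)^{\frac{1}{2}}\sqrt{z-b}\,(1+o(1))$ depends continuously on $b\in\nu$, hence coincides, up to one \emph{global} sign on the connected arc $\nu$, with the restriction to $\nu$ of the holomorphic branch of $\varphi'^{\frac{1}{2}}$ used to define $g$. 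Consequently the factor $\pm1$ in $c_{p}=\pm c_{p}'$ is the same for $p=1$ and $p=2$; combined with $c_{2}'=ic_{1}'$ this yields $c_{2}=\pm c_{2}'=i(\pm c_{1}')=ic_{1}$, with $c_{2}=\pm c_{2}'\in\R$, so the normalisation of~(\ref{eq: bc_free_arc}) (and the sign content recalled in Remark~\ref{rem: free-arcs-cond}) is preserved. Since $S=\varphi^{-1}(\Sother)$, each clause is checked exactly at the points where it is required, completing the verification.
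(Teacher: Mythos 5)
Your overall route is the paper's own: show that $\varphi'(\cdot)^{\frac{1}{2}}$ is a single-valued non-vanishing function, so that $g=f(\varphi(\cdot))\,\varphi'(\cdot)^{\frac{1}{2}}$ is a holomorphic $\cvr$-spinor, deduce (\ref{eq: bc_fixed})\textendash (\ref{eq: bc_free}) from the tangent transformation rule $\tau_{\varphi(z)}=\frac{\varphi'(z)}{|\varphi'(z)|}\tau_{z}$ (the paper's (\ref{eq: tau_CC})), and obtain (\ref{eq: bc_branchpoint})\textendash (\ref{eq: bc_free_arc}) by Taylor expansion at the marked boundary points. However, your justification of the single-valuedness step is wrong as stated: it is \emph{not} true that ``the winding of $\varphi'$ along each boundary component vanishes'', nor that $\varphi$ preserves the rotation number of each component, and $\log\varphi'$ need not be single-valued. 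Concretely, the automorphism $\varphi(z)=r/z$ of the annulus $\{r<|z|<1\}$ swaps the two boundary circles (so rotation numbers $+1$ and $-1$ are interchanged), and $\varphi'(z)=-r z^{-2}$ has winding $-2$ along either circle, so $\log\varphi'$ picks up $-4\pi i$. The conclusion you need survives for a weaker reason, which is exactly the paper's argument: the windings of $\tau_{z}$ and of $\tau_{\varphi(z)}$ around any boundary component are both \emph{odd} (each equals $\pm1$), hence by (\ref{eq: tau_CC}) the winding of $\varphi'$ is their difference, which is \emph{even}, and therefore $\varphi'(\cdot)^{\frac{1}{2}}$ returns to itself. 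Replace your sentence by this parity argument and the spinor claim is restored; as written, your step would fail in precisely the multiply connected setting the lemma is designed for.

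With that one-line repair, the rest of your proof is correct and matches the paper where the paper gives details. The fixed/free clauses go through since $|\varphi'(\zeta)|^{\frac{1}{2}}>0$, and your local expansion $(\varphi(z)-p')^{-\frac{1}{2}}=\varphi'(p)^{-\frac{1}{2}}(z-p)^{-\frac{1}{2}}(1+O(z-p))$ handles the puncture case, where the residual sign is harmless because $c\in\R$ is unconstrained in sign in (\ref{eq: bc_branchpoint}). Your treatment of the endpoint constraint $c_{2}=ic_{1}$ in (\ref{eq: bc_free_arc}) \textemdash{} noting that the branch of $\varphi'(b)^{\frac{1}{2}}$ implicit in $\sqrt{\varphi(z)-\varphi(b)}$ varies continuously in $b\in\nu$ and therefore agrees with the globally chosen branch of $\varphi'^{\frac{1}{2}}$ up to a single sign on the whole connected arc, so both endpoints acquire the \emph{same} sign and the relation $c_{2}'=ic_{1}'$ transports intact \textemdash{} is a correct and welcome elaboration of what the paper compresses into ``follow by Taylor expansions''; the orientation-preservation remark ensuring the endpoints are not interchanged is also needed and correctly supplied.
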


\begin{proof}
Note that $\varphi'(\cdot)^{\frac{1}{2}}$ is a well-defined \textit{function}.
Indeed, since $\varphi'$ does not vanish, its square root is well
defined on any simply connected sub-domain of $\Omega$, whence we
only have to check that the monodromy of $\varphi'(\cdot)^{\frac{1}{2}}$
around each boundary component is trivial. For $z\in\pa\Omega$, we
have 
\begin{equation}
\tau_{\varphi(z)}=\frac{\varphi'(z)}{|\varphi'(z)|}\tau_{z}.\label{eq: tau_CC}
\end{equation}
As $z$ goes around a boundary component, the winding number of both
$\tau_{z}$ and $\tau_{\varphi(z)}$ around zero is odd, hence, the
winding number of $\varphi'(z)$ is even and $(\varphi'(z))^{\frac{1}{2}}$
does not change the sign. This shows that $f(\varphi(z))\varphi'(z)^{\frac{1}{2}}$
is indeed a $\cvr$-spinor. The boundary conditions (\ref{eq: bc_fixed})
and (\ref{eq: bc_free}) follow readily from (\ref{eq: tau_CC}),
and the boundary conditions $(\ref{eq: bc_branchpoint})$ and (\ref{eq: bc_free_arc})
follow by Taylor expansions.
\end{proof}
Lemma \ref{lem: SBC_conformal} ensures the consistency of the following
definition:
\begin{defn}
\label{def: BC_cont_general}If $\Omega$ is a finitely connected
domain and $f$ is a spinor in $\Omega$, we say that $f$ satisfies
the standard boundary conditions on $S\subset\pa\Omega$ if for some
(equivalently, for any) conformal isomorphism $\varphi$ from a nice
domain $\Oother$ to $\Omega$, the spinor $f(\varphi(\cdot))\varphi'(\cdot)^{\frac{1}{2}}$
satisfies the standard boundary conditions on $\varphi^{-1}(S)$.
\end{defn}

Given a spinor $f$ in $\Omega$, we can define $h=\im\int f^{2}dz$,
which is a harmonic function on the universal cover of $\Omega$ whose
values on any two sheets of the universal cover differ by a constant.
The standard boundary conditions imply somewhat more robust Dirichlet
boundary conditions for the function $h$, as shown in the following
Proposition. 
\begin{prop}
\label{prop: f_to_h}Suppose $f$ satisfies the standard boundary
conditions on a subset $S$ of the boundary, and let $h=\im\int f^{2}dz$.
Then, the following holds true: 

\begin{enumerate}
\item The function $h$ is bounded near each point of $S$ which is not
a puncture of $\Omega$;
\item If $\nu\subset S$ is an arc, then $h$ is constant on $\nu\cap\fixed$
and $\pa_{i\tau}h\geq0$ on $\nu\cap\fixed$;
\item If $\nu\subset S\cap\free$ is an arc, then $h$ is constant on $\nu$
and $\pa_{i\tau}h\leq0$ on $\nu$;
\item If $\{v\}\subset S$ is a puncture of $\Omega$, then there exists
a number $\beta\geq0$ such that 
\begin{equation}
h(z)=\beta\log|z-v|+O(1),\quad z\to v.\label{eq: h-near-puncture}
\end{equation}
\end{enumerate}
\end{prop}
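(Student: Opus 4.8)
The plan is to work directly with the single-valued holomorphic quadratic differential $f^{2}\,dz$ and to reduce everything to a computation on the analytic boundary of a nice domain. First I would observe that, since $f$ is a $\cvr$-spinor, one has $f(z^{*})=-f(z)$, so $f^{2}$ descends to an honest (single-valued) holomorphic function on $\Omega$, meromorphic at the punctures and the free-arc endpoints. Moreover, by the transformation rule $f(\varphi(z))\varphi'(z)^{\frac{1}{2}}$ of Lemma~\ref{lem: SBC_conformal} one has $f_{\Omega}^{2}(z)\,dz=f_{\Oother}^{2}(w)\,dw$ under $w=\varphi(z)$, so the $1$-form $f^{2}\,dz$ is conformally invariant and hence so is $h=\im\int f^{2}\,dz$ (its values on two sheets of the universal cover differ by the imaginary part of a period of $f^{2}\,dz$, which is a real constant). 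By Koebe's theorem I may therefore assume $\Omega$ is nice, all four claims being conformally invariant.

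The key computation is the identification of the boundary derivatives of $h$ in terms of $f$. Writing $F=\int f^{2}\,dz$, so that $F'=f^{2}$ and $h=\tfrac{1}{2i}(F-\bar F)$, gives $\pa_{z}h=\tfrac{1}{2i}f^{2}$; for a unit vector $\tau$ this yields $\pa_{\tau}h=2\re(\tau\,\pa_{z}h)=\im(\tau f^{2})$ and, rotating by $i$ (which points into $\Omega$ when $\tau=\tang[\zeta]$), $\pa_{i\tau}h=\re(\tau f^{2})$. Claims (2) and (3) then follow by substituting the boundary conditions. On $\fixed$, the condition (\ref{eq: bc_fixed}) says $\tang[\zeta]^{\frac{1}{2}}f(\zeta)=r\in\R$, whence $\tang[\zeta]f(\zeta)^{2}=r^{2}\ge0$; therefore $\pa_{\tau}h=\im(r^{2})=0$ (so $h$ is locally constant along $\nu\cap\fixed$) and $\pa_{i\tau}h=\re(r^{2})=r^{2}\ge0$. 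On $\free$, (\ref{eq: bc_free}) gives $\tang[\zeta]^{\frac{1}{2}}f(\zeta)=is$ with $s\in\R$, so $\tang[\zeta]f(\zeta)^{2}=-s^{2}\le0$, yielding $\pa_{\tau}h=0$ and $\pa_{i\tau}h=-s^{2}\le0$. Constancy on an entire arc follows since it is connected.

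For claims (1) and (4) I would analyse the residues of $f^{2}\,dz$. On the interior of a fixed or free arc $f$ extends continuously, so $f^{2}\,dz$ is integrable there and $h$ is bounded. Near a free-arc endpoint $b_{p}$, the condition (\ref{eq: bc_free_arc}) gives $f\sim c_{p}(z-b_{p})^{-\frac{1}{2}}$; since $c_{2}=ic_{1}\in\R$ forces both $c_{2}^{2}\ge0$ and $c_{1}^{2}=-c_{2}^{2}$ to be real, $f^{2}$ has a simple pole with \emph{real} residue $c_{p}^{2}$, so $h\sim c_{p}^{2}\arg(z-b_{p})$ stays bounded as $z\to b_{p}$ inside $\Omega$. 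Finally, at a puncture $v$, the condition (\ref{eq: bc_branchpoint}) gives $f\sim\lambb c\,(z-v)^{-\frac{1}{2}}$ with $c\in\R$; since $\lambb^{2}=e^{i\pi/2}=i$, the residue of $f^{2}\,dz$ equals $ic^{2}$, which is purely imaginary, so $\im\int f^{2}\,dz=c^{2}\log|z-v|+O(1)$ and (\ref{eq: h-near-puncture}) holds with $\beta=c^{2}\ge0$.

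The hard part will be the bookkeeping at the fixed–free junction points. These are precisely the free-arc endpoints, and the whole boundedness claim (1) there rests on the \emph{exact} phase in (\ref{eq: bc_free_arc}): one must verify that it makes the residue $c_{p}^{2}$ real, so that the potentially unbounded $\log|z-b_{p}|$ term in $h$ has coefficient $\im(c_{p}^{2})=0$ and only the bounded $\arg(z-b_{p})$ term survives. Dually, the sign statement $\beta\ge0$ in (4) and its stability under the conformal reduction both hinge on the residue of $f^{2}\,dz$ being a genuine conformal invariant — punctures map to punctures and the coefficient $\beta$ is preserved because $w-\varphi(v)=\varphi'(v)(z-v)+O((z-v)^{2})$ shifts $\log|z-v|$ only by the constant $\log|\varphi'(v)|$. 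I expect the residue/phase reality checks to be the only genuinely delicate points; the rest reduces to the one-line identity $\pa_{\tau}h=\im(\tau f^{2})$.
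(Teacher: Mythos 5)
Your proof is correct and takes essentially the same route as the paper's: the identity $\pa_{\tau}h=\im(\tau f^{2})$, $\pa_{i\tau}h=\re(\tau f^{2})$ turns (\ref{eq: bc_fixed})--(\ref{eq: bc_free}) into claims (2)--(3), the real residue $c_{p}^{2}$ of $f^{2}dz$ at free-arc endpoints gives the boundedness in (1), and the purely imaginary residue $ic^{2}$ coming from (\ref{eq: bc_branchpoint}) restates (4) with $\beta=c^{2}\geq0$. The paper's proof is just a terser statement of these same steps; your explicit phase and residue checks (and the conformal-invariance bookkeeping) are exactly the computations it leaves implicit.
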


\begin{proof}
First, note that (\ref{eq: bc_fixed}) (respectively, (\ref{eq: bc_free}))
is equivalent to $\pa_{\tau}h=0$ and $\pa_{i\tau}h\geq0$ on $S\cap\fixed$
(respectively, $\pa_{\tau}h=0$ and $\pa_{i\tau}h\leq0$ on $S\cap\free$).
Now, the asymptotics in (\ref{eq: bc_free_arc}) imply that the function
$h$ is bounded near the endpoints of free arcs and the identity (\ref{eq: bc_free_arc})
implies that the jumps of $h$ at the two endpoints of each free have
equal absolute values and opposite signs. Finally, (4) is simply a
restatement of (\ref{eq: bc_branchpoint}).
\end{proof}
\begin{rem}
\label{rem: bc_h_to_bc_f}Let $h$ be a harmonic function on a nice
domain $\Omega$ (possibly multi-valued with a single-valued gradient)
satisfying (1)\textendash (4) above. In general, $f=\sqrt{2\pa_{z}h}$
needs not even be a well-defined spinor in $\Omega$ as the gradient
of $h$ might have odd-degree zeros. However, it follows directly
form the proof that if $f=\sqrt{2\pa_{z}h}$, for some reason, \emph{does}
happen to be a $\cvr$-spinor in $\Omega$, then $f$ satisfies standard
boundary conditions on $S$, except that $c_{2}=ic_{1}$ in (\ref{eq: bc_free_arc})
should be replaced by $c_{2}=\pm ic_{1}$; cf. Remark \ref{rem: free-arcs-cond}.
The sign consistency in (\ref{eq: bc_free_arc}) and the spinor property
of $f$ have no simple formulations in terms of $h$ itself. 
\end{rem}

\begin{rem}
\label{rem: h_well_defined}Note that the conditions above imply,
in particular, that if $S$ contains every boundary component, except,
maybe, one of them, then $h$ is a well defined \emph{function} in
$\Omega$ since its additive monodromy around all but one boundary
components vanishes and the last one can be written as the sum of
all others.
\end{rem}

The following Propositions gives a crucial uniqueness result for spinors
satisfying the standard boundary conditions everywhere on $\pa\Omega$. 
\begin{prop}
\label{prop: Uniqueness_continuous}Let $\Omega$, $\cvr$, $f$ be,
respectively, a finitely connected domain, its double cover, and a
holomorphic $\cvr$-spinor in $\Omega$ satisfying the standard boundary
conditions on the \emph{entire} boundary $\pa\Omega$. Assume furthermore
that $\cvr$ is not ramified over those connected components of $\C\setminus\Omega$
whose boundary is fully contained in $\free\subset\pa\Omega$. Then,
$f\equiv0$. 
\end{prop}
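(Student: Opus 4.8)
The plan is to reduce everything to a maximum principle for the harmonic function $h=\im\int f^{2}dz$, whose interior and fixed-boundary parts are routine, and to isolate the genuinely delicate point at the free boundary, where the sign convention in (\ref{eq: bc_free_arc}) and the non-ramification hypothesis must both be used.

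First, using Lemma \ref{lem: SBC_conformal} and Definition \ref{def: BC_cont_general} I would assume without loss of generality that $\Omega$ is nice, so that $\pa\Omega$ consists of analytic Jordan curves together with the punctures. Since $f$ is a $\cvr$-spinor, $f^{2}$ is a genuine single-valued holomorphic function, $f^{2}dz$ is a well-defined holomorphic $1$-form, and $h=\im\int f^{2}dz$ is a harmonic function defined a priori on the universal cover. I would first check that $h$ is in fact single-valued: by Proposition \ref{prop: f_to_h} its boundary values are locally constant on each arc and the jumps at the two endpoints of every free arc cancel, so the additive monodromy of $h$ around each boundary component vanishes, and by Remark \ref{rem: h_well_defined} this makes $h$ a well-defined function on $\Omega$. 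Proposition \ref{prop: f_to_h} then supplies the full boundary picture: $h$ is constant on each arc with $\pa_{i\tau}h\ge0$ on $\fixed$ and $\pa_{i\tau}h\le0$ on $\free$, it is bounded near the endpoints of free arcs, and $h(z)=\beta\log|z-v|+O(1)$ with $\beta\ge0$ near each puncture $v$. Since $i\tau$ is the inward normal, these inequalities say precisely that free arcs behave like local maxima and fixed arcs like local minima of $h$, and that $h\to-\infty$ at every puncture with $\beta>0$.

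Next comes the maximum principle. As $h$ is bounded above (it only tends to $-\infty$, and only at punctures), it attains its maximum $M$ on $\overline{\Omega}$. If the maximum is attained in the interior, $h$ is constant by the strong maximum principle and we are done; it cannot be attained at a puncture (there $h\to-\infty$ when $\beta>0$, while for $\beta=0$ the singularity is removable and this reduces to the interior case); and it cannot be attained on a fixed arc, for there Hopf's lemma would force $\pa_{i\tau}h<0$ strictly, contradicting $\pa_{i\tau}h\ge0$. Hence, if $h$ is non-constant, its maximum is attained only on free arcs. (The symmetric argument places the minimum on fixed arcs or at punctures, which I do not need directly.)

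The crux, and the step I expect to be the main obstacle, is to exclude a maximum on a free arc; this is exactly where the maximum principle alone is powerless, since a free-arc maximum is entirely consistent with Proposition \ref{prop: f_to_h}, and indeed neither $h$ nor $f^{2}$ sees the crucial sign in (\ref{eq: bc_free_arc}). One must therefore argue at the level of the spinor $f$ itself. The model case is a fully free boundary component $\gamma$ over which, by hypothesis, $\cvr$ is not ramified: then $f$ is single-valued along $\gamma$, while the tangent $\tau_\zeta$ winds by $\pm2\pi$, so $\tau_\zeta^{1/2}$ is anti-periodic; consequently $W:=\tau_\zeta^{1/2}f(\zeta)$, which is purely imaginary by (\ref{eq: bc_free}), is a continuous anti-periodic $i\R$-valued function on the circle $\gamma$ and must vanish at some $\zeta_{0}\in\gamma$. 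Then $f(\zeta_{0})=0$, hence $\nabla h(\zeta_{0})=0$; but $h\equiv M$ on $\gamma$, so $\zeta_{0}$ is a boundary maximum with vanishing normal derivative, contradicting Hopf's lemma unless $h$ is constant. For a maximum sitting on a free arc of a mixed boundary component the same idea has to be run through a phase count: one tracks $\arg W$ around the component, where $W$ is real on $\fixed$, imaginary on $\free$, and blows up at each free-arc endpoint $b$; the sign relation $c_{2}=ic_{1}$ of (\ref{eq: bc_free_arc}) fixes the phase increments of $W$ at these endpoints, and combining this with the parity of the monodromy of $\cvr$ around the component (odd or even winding according to whether $\cvr$ is ramified inside it) one shows the resulting winding is incompatible with $f$ being holomorphic and non-vanishing at the maximum. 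In every case this forces $h$ to be constant, whence $f^{2}=2i\pa_{z}h\equiv0$ and therefore $f\equiv0$.
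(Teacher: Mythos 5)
Your overall architecture coincides with the paper's proof: reduce to a nice domain by conformal covariance, pass to $h=\im\int f^{2}dz$ (single-valued by Proposition \ref{prop: f_to_h} and Remark \ref{rem: h_well_defined}), locate the maximum on the boundary, and rule out the interior, the punctures and the wired arcs exactly as you do. Your treatment of a \emph{fully free} boundary component is also the paper's argument, merely run in the contrapositive: the paper observes that if $if(\zeta)\sqrt{\tau_{\zeta}}$ keeps a constant sign along the loop then $f$ must pick up a factor $-1$, i.e.\ $\cvr$ would be ramified there; you instead use the anti-periodicity of $W=\tau_{\zeta}^{1/2}f(\zeta)$ to produce a zero of $f$ on the component and then contradict Hopf's lemma. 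Same two ingredients, same conclusion.

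The gap is in the mixed-component case, which is the actual crux of the proposition. Your proposed ``phase count of $\arg W$ around the component'' is a placeholder, and the mechanism it suggests would not work: the constant-sign information is available only on the single free arc $\nu$ carrying the maximum, while on the wired arcs of the same component nothing prevents $f$ from vanishing (the maximum of $h$ sits on $\nu$, not on $\fixed$), so $\arg W$ has no controlled increment there and no global winding number can be extracted; non-vanishing of $f$ ``at the maximum'' alone does not define a winding. The paper's argument is instead purely local at the two endpoints $b_{1},b_{2}$ of $\nu$: after mapping the component to $\R$, the constant sign of $if(\zeta)\sqrt{\tau_{\zeta}}$ along $(b_{1},b_{2})$ forces $c_{2}=-ic_{1}$ in (\ref{eq: bc_free_arc}) (cf.\ Remark \ref{rem: free-arcs-cond}), whereas the standard boundary conditions demand $c_{2}=+ic_{1}$; hence $c_{1}=c_{2}=0$.

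Moreover, your sketch omits the step that actually closes the argument. The conclusion $c_{1}=c_{2}=0$ is \emph{not} by itself a contradiction --- it is perfectly consistent with all the boundary conditions and with $f$ being non-zero on the open arc $\nu$ (it only says $f$ stays bounded at the endpoints). Its role is different: since the jumps of $h$ at $b_{1}$ and $b_{2}$ are proportional to $|c_{1}|^{2}$ and $|c_{2}|^{2}$, their vanishing means $h$ attains the \emph{same} maximal value on the wired arcs adjacent to $\nu$, which reduces the situation to the fixed-boundary case already settled via $\pa_{i\tau}h\ge0$. Without this reduction, even a correct sign analysis at the endpoints of $\nu$ leaves the proof unfinished.
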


\begin{rem}
Let us stress that the sign consistency in (\ref{eq: bc_free_arc})
cannot be disposed of, as the example given in Remark \ref{rem: free-arcs-cond}
shows. The assumption on the double cover $\cvr$ not being ramified
over purely free boundary components is also essential. 
\end{rem}

\begin{proof}
By conformal covariance, we may assume that $\Omega$ is nice. Consider
$h=\im\int f^{2}dz$, which is a harmonic function in $\Omega$, well
defined up to an additive constant. It follows from Proposition \ref{prop: f_to_h}
that the function $h$ is bounded from \emph{above} on $\pa\Omega$
and the maximum principle implies that $h$ attains its maximum at
a boundary point $\zeta\in\pa\Omega$. If $\{\zeta\}$ is a puncture
(single-point boundary component) of $\Omega$, then one should have
$\beta=0$ in (\ref{eq: h-near-puncture}), which would mean that
$h$ is in fact harmonic at $z$ and hence $h\equiv\const$. Also,
$z\in\fixed$ also implies $h\equiv\const$ since $\pa_{i\tau}h\ge0$
on wired arcs, see Proposition \ref{prop: f_to_h}. Hence, we may
assume that $\zeta\in\free$. 

Let $\nu$ denote the free arc containing $\zeta$, recall that $h$
is constant on $\nu$ by Proposition \ref{prop: f_to_h}. Note that
$f$ cannot vanish on $\nu$ unless $h\equiv\const$: if \textbf{$f(\zeta)=0$}
for some $\zeta\in\nu$ but $f\not\equiv0$, then the values of $h$
in a vicinity of $\zeta$ cannot be bounded from above by $h(\zeta)$.
Therefore, the function $if(\zeta)\sqrt{\tau_{\zeta}}\in\R$ locally
does not change its sign on $\nu$. If $\nu$ comprises an entire
boundary component, then this shows that $f(\zeta)$ picks up a $-1$
factor as $\zeta$ winds around $\nu$, contradicting the assumption
that $\cvr$ is not ramified over such a boundary component. Finally,
let $\nu$ be a free arc with endpoints $b_{1,2}$. By conformal invariance,
we may assume that the boundary component containing $\nu$ is the
real line and that $\gamma=(b_{1},b_{2})$. The fact that the function
$if$ does not change sign on $(b_{1},b_{2})$ implies that one should
have $c_{2}=-ic_{1}$ in (\ref{eq: bc_free_arc}); cf. Remark \ref{rem: free-arcs-cond}.
This could only be possible if $c_{1}=c_{2}=0$, which would mean
that $h$ also attains the same maximal value on the nearby wired
arcs, the case already discussed above.
\end{proof}

\subsection{Fermionic observables in the bulk of $\Omega$}

The goal of this section is to define the continuous counterparts
of discrete two-point fermionic observables $F_{\Od,\double{\vv}}(z_{1},\z_{2})$
discussed in Section \ref{subsec: fermions-discrete}, in the situation
when both points $z_{1},z_{2}$ stay in the bulk of $\Omega$ and
at definite distance from $\vv$ and from each other. The convergence
statement in this situation is given by Theorem \ref{thm: conv_bulk_bulk}.
\begin{defn}
\label{def: feta} Let $\Omega$ be a finitely connected domain, equipped
with boundary conditions and a double cover $\cvr$, which is not
ramified over purely free boundary components of $\text{\ensuremath{\Omega}}$,
and let $a\in\Ocvrc$ and $\eta\in\C$. We define the continuous observable 

\[
\feta(a,z)=\feta_{\Omega,\cvr}(a,z)
\]
to be the unique holomorphic $\cvr$-spinor in $\Omega\setminus\{\cvr(a)\}$
that satisfies the standard boundary conditions everywhere on $\pa\Ocvrc\setminus\{a,a^{*}\}$
and the expansion

\begin{equation}
\feta_{\Omega,\cvr}(a,z)=\frac{\bar{\eta}}{\z-a}+O(1)\ \ \text{as}\ \ z\to a.\label{eq: f_residue}
\end{equation}
\end{defn}

\begin{rem}
\label{rem: cont_obs_existence_and_uniqueness}The uniqueness asserted
in this definition follows from Proposition \ref{prop: Uniqueness_continuous}:
if there were two such spinors, then their difference would be holomorphic
at $a,a^{*}$ and would satisfy the standard boundary conditions on
the entire $\pa\Ocvrc$. The existence will follow from the proof
of Theorem \ref{thm: conv_bulk_bulk} (postponed until Section \ref{sec: proofs_of_convergence_theorems}).
Note that since the standard boundary conditions are real-linear,
$f^{[\eta]}$ is a real-linear function of $\eta$, that is, if $\alpha,\beta\in\R$,
then 
\[
f^{[\alpha\eta_{1}+\beta\eta_{2}]}=\alpha f^{[\eta_{1}]}+\beta f^{[\eta_{2}]}.
\]
\end{rem}

\begin{lem}
\label{lem: Conformal-covariance-observable} If $\varphi:\Omega\to\Omega'$
is a conformal isomorphism and $\cvrother$ is isomorphic to $\cvr$,
then 
\begin{equation}
f_{\Omega,\cvr}^{[\eta]}(a,z)=f_{\Oother,\cvrother}^{[\etaother]}(\varphi(a),\varphi(z))\cdot\varphi'(z)^{\frac{1}{2}},\quad\text{where}\ \ \etaother=\eta\cdot\overline{\varphi'(a)}\vphantom{)}^{\frac{1}{2}}.\label{eq: feta-conf-cov}
\end{equation}
\end{lem}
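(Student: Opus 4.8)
The plan is to verify that the right-hand side of (\ref{eq: feta-conf-cov}), viewed as a function of $z$, satisfies exactly the three characterizing properties of $f_{\Omega,\cvr}^{[\eta]}(a,z)$ from Definition \ref{def: feta}, and then invoke the uniqueness asserted there (which rests on Proposition \ref{prop: Uniqueness_continuous}). Concretely, write $g(z):=f_{\Oother,\cvrother}^{[\etaother]}(\varphi(a),\varphi(z))\cdot\varphi'(z)^{\frac{1}{2}}$ with $\etaother=\eta\cdot\overline{\varphi'(a)}^{\frac{1}{2}}$. I must check that $g$ is a holomorphic $\cvr$-spinor in $\Omega\setminus\{\cvr(a)\}$, that it satisfies the standard boundary conditions on all of $\pa\Ocvrc\setminus\{a,a^{*}\}$, and that it has the prescribed simple pole $g(z)=\bar\eta(z-a)^{-1}+O(1)$ as $z\to a$.

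First I would handle the structural and boundary properties, which are essentially immediate from results already in hand. That $\varphi'(\cdot)^{\frac{1}{2}}$ is a genuine single-valued function on $\Omega$ (so that $g$ is a $\cvr$-spinor, inheriting the sign-change structure from $f^{[\etaother]}_{\Oother,\cvrother}$) is exactly the monodromy computation carried out in the proof of Lemma \ref{lem: SBC_conformal}; holomorphicity of $g$ away from $\cvr(a)$ is clear since $\varphi$ is a conformal isomorphism. The standard boundary conditions for $g$ on $S=\varphi^{-1}(\Sother)$, with $\Sother=\pa\Ocvrc[\Oother]\setminus\{\varphi(a),\varphi(a)^{*}\}$, follow directly from Lemma \ref{lem: SBC_conformal} applied to $f^{[\etaother]}_{\Oother,\cvrother}(\varphi(a),\cdot)$, since that spinor satisfies the standard boundary conditions on $\Sother$ by Definition \ref{def: feta}.

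The one genuinely computational point — and the step I expect to be the crux — is matching the residue. Near $a$ one has the local expansion $\varphi(z)-\varphi(a)=\varphi'(a)(z-a)+O((z-a)^{2})$, so by (\ref{eq: f_residue}) applied in $\Oother$,
\[
f_{\Oother,\cvrother}^{[\etaother]}(\varphi(a),\varphi(z))=\frac{\overline{\etaother}}{\varphi(z)-\varphi(a)}+O(1)=\frac{\overline{\etaother}}{\varphi'(a)}\cdot\frac{1}{z-a}+O(1).
\]
Multiplying by $\varphi'(z)^{\frac{1}{2}}=\varphi'(a)^{\frac{1}{2}}+O(z-a)$ and substituting $\overline{\etaother}=\bar\eta\cdot\varphi'(a)^{\frac{1}{2}}$ (the conjugate of $\etaother=\eta\,\overline{\varphi'(a)}^{\frac{1}{2}}$) gives leading coefficient $\bar\eta\,\varphi'(a)^{\frac{1}{2}}\varphi'(a)^{-1}\varphi'(a)^{\frac{1}{2}}=\bar\eta$, so $g(z)=\bar\eta(z-a)^{-1}+O(1)$ as required. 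The only thing to watch here is consistency of branches of the square roots: $\overline{\etaother}=\overline{\eta\,\overline{\varphi'(a)}^{\frac12}}=\bar\eta\,\varphi'(a)^{\frac12}$, and one must confirm that the branch of $\varphi'(a)^{\frac12}$ appearing in $\overline{\etaother}$ agrees with the value of the single-valued function $\varphi'(\cdot)^{\frac12}$ at $z=a$; since both are obtained from the same square-root determination used throughout, they coincide. Having matched the pole, the boundary conditions, and the spinor structure, uniqueness in Definition \ref{def: feta} forces $g\equiv f_{\Omega,\cvr}^{[\eta]}(a,\cdot)$, which is precisely (\ref{eq: feta-conf-cov}).
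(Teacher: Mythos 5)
Your proposal is correct and follows essentially the same route as the paper: Lemma \ref{lem: SBC_conformal} for the spinor structure and standard boundary conditions, the local expansion $\varphi(z)-\varphi(a)\sim\varphi'(a)(z-a)$ together with $\overline{\eta'}=\bar{\eta}\,\varphi'(a)^{\frac{1}{2}}$ to match the residue, and the uniqueness from Definition \ref{def: feta}. The paper's proof is just a one-line version of this argument, so your more careful treatment of the square-root branches is a faithful (if more verbose) rendering of the same proof.
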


\begin{proof}
The claim easily follows from Lemma \ref{lem: SBC_conformal}, the
uniqueness of $f^{[\eta]}$, and the following computation: 
\[
\frac{\bar{\eta}'}{\varphi(z)-\varphi(a)}\cdot\varphi'(z)^{\frac{1}{2}}\sim\frac{\bar{\eta}'}{z-a}\cdot\varphi'(a)^{-\frac{1}{2}}=\frac{\bar{\eta}}{z-a}.\qedhere
\]
\end{proof}
With the definition above, the roles of the arguments $a$ and $z$
of $f$ are apparently quite different. We now elucidate the (anti)-symmetry
between them (cf. the anti-symmetry between the arguments of the discrete
observables $F$, see Lemma \ref{lem: anti-symmetry}). For shortness,
below we omit the subscripts $\Omega,\cvr$ from the notation if no
confusion arises.
\begin{lem}
\label{lem: antisym_cont}Given $z_{1},z_{2}\in\Omega_{\cvr}$ and
\textup{$\eta_{1},\eta_{2}\in\C$}, define 
\[
f^{[\eta_{1},\eta_{2}]}(z_{1},z_{2})\text{:}=\re[\bar{\eta}_{2}f^{[\eta_{1}]}(\zz)].
\]
Then, $f^{[\eta_{1},\eta_{2}]}(z_{1,}z_{2})=-f^{[\eta_{2},\eta_{1}]}(z_{2},z_{1}).$
\end{lem}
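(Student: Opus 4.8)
The plan is to derive the identity from a bilinear reciprocity for the product of the two one-sided observables $f^{[\eta_{1}]}(z_{1},\cdot)$ and $f^{[\eta_{2}]}(z_{2},\cdot)$, read off from a contour integral whose imaginary part vanishes thanks to the standard boundary conditions. Since $f^{[\eta_{1},\eta_{2}]}(\zz)=\re[\bar{\eta}_{2}f^{[\eta_{1}]}(\zz)]$ is conformally invariant — combining its definition with Lemma~\ref{lem: Conformal-covariance-observable} and noting that $\bar{\eta}_{2}\varphi'(z_{2})^{\frac{1}{2}}$ is the conjugate of the transformed weight $\eta_{2}\overline{\varphi'(z_{2})}\vphantom{)}^{\frac{1}{2}}$ — I may first assume $\Omega$ is \emph{nice}, so that $\pa\Omega$ consists of analytic curves and contour integration is available; the general statement then follows by transporting the identity through a conformal isomorphism.

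Set $g(z):=f^{[\eta_{1}]}(z_{1},z)\,f^{[\eta_{2}]}(z_{2},z)$. As a product of two $\cvr$-spinors in $z$ it is a genuine single-valued meromorphic function on $\Omega$, holomorphic away from simple poles at $z_{1},z_{2}$ and from the punctures of $\Omega$. First I would check, exactly as in the proof of Proposition~\ref{prop: Uniqueness_continuous}, that $g(z)\,dz$ is \emph{real} along every boundary arc: on $\fixed$ the condition (\ref{eq: bc_fixed}) gives $\tau_{\zeta}^{\frac{1}{2}}f^{[\eta_{j}]}(\zeta)\in\R$ and on $\free$ the condition (\ref{eq: bc_free}) gives $\tau_{\zeta}^{\frac{1}{2}}f^{[\eta_{j}]}(\zeta)\in i\R$, so in either case $g(\zeta)\,d\zeta=\tau_{\zeta}^{-1}f^{[\eta_{1}]}(\zeta)f^{[\eta_{2}]}(\zeta)\cdot\tau_{\zeta}|d\zeta|$ is real. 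Applying Cauchy's theorem on $\Omega$ with small disks removed around $z_{1},z_{2}$ and around each puncture, and small indentations at the endpoints of the free arcs, I would obtain
\[
\int_{\mathrm{arcs}}g\,dz+\int_{\mathrm{indent}}g\,dz=2\pi i\Bigl(\res_{z_{1}}g+\res_{z_{2}}g+\sum_{v}\res_{v}g\Bigr),
\]
and the two interior residues are exactly the terms I want: from $f^{[\eta_{1}]}(z_{1},z)\sim\bar{\eta}_{1}/(z-z_{1})$ I get $\res_{z_{1}}g=\bar{\eta}_{1}f^{[\eta_{2}]}(z_{2},z_{1})$, and symmetrically $\res_{z_{2}}g=\bar{\eta}_{2}f^{[\eta_{1}]}(z_{1},z_{2})$.

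It remains to see that everything else contributes nothing to the imaginary part. The arc integral is real by the computation above. At a puncture $v$ the condition (\ref{eq: bc_branchpoint}) gives $f^{[\eta_{j}]}\sim\bar{\lambda}c_{j}(z-v)^{-\frac{1}{2}}$ with $c_{j}\in\R$, hence $\res_{v}g=\bar{\lambda}^{2}c_{1}c_{2}=ic_{1}c_{2}\in i\R$ and $2\pi i\,\res_{v}g\in\R$. The delicate step — the one I expect to require the most care — is the free-arc endpoints, where $g$ has a boundary pole of the form $P/(z-b)$ and the indenting semicircle produces a purely imaginary contribution $\mp i\pi P$ that does not vanish by itself. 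Here the sign consistency $c_{2}=ic_{1}$ in (\ref{eq: bc_free_arc}), applied to both $f^{[\eta_{1}]}$ and $f^{[\eta_{2}]}$, forces the product coefficients $P$ at the two endpoints of any given free arc to be negatives of one another, so the two semicircle contributions cancel in pairs. Granting this, taking imaginary parts of the displayed identity leaves $\re[\bar{\eta}_{1}f^{[\eta_{2}]}(z_{2},z_{1})]+\re[\bar{\eta}_{2}f^{[\eta_{1}]}(z_{1},z_{2})]=0$, which is precisely $f^{[\eta_{2},\eta_{1}]}(z_{2},z_{1})+f^{[\eta_{1},\eta_{2}]}(z_{1},z_{2})=0$.
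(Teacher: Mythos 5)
Your proof is correct and is essentially the paper's own argument: the same reduction to nice domains via Lemma~\ref{lem: Conformal-covariance-observable}, the same single-valued product $g=f^{[\eta_{1}]}(z_{1},\cdot)f^{[\eta_{2}]}(z_{2},\cdot)$, and the same accounting of boundary contributions (real integrand on $\fixed\cup\free$ by (\ref{eq: bc_fixed})--(\ref{eq: bc_free}), purely imaginary residues at punctures by (\ref{eq: bc_branchpoint}), and pairwise cancellation of the half-residues at free-arc endpoints via $c_{2}=ic_{1}$ in (\ref{eq: bc_free_arc})). The paper phrases the indentations as a principal-value integral with $-\pi i\sum_{b}\res_{b}g$, which is exactly your semicircle computation.
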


\begin{proof}
We assume that $\Omega$ is nice, the general case follows due to
the conformal covariance of both sides provided by Lemma \ref{lem: Conformal-covariance-observable}.
Let $\mathcal{V^{\reg}}$ denote the set of all single-point boundary
components (punctures) of $\Omega$, and let $\mathcal{V}^{\flat}$
be the set of all endpoints of free arcs. Note that $g(\cdot):=f^{[\eta_{1}]}(z_{1},\cdot)f^{[\eta_{2}]}(z_{2},\cdot)$
is a well-defined holomorphic \emph{function} in $\Omega\setminus\{z_{1},z_{2}\}$.
Moreover, it is continuous up to $\pa\Omega$, except, possibly, for
simple poles at $\mathcal{V^{\reg}}$ and $\mathcal{V}^{\flat}$,
see the conditions (\ref{eq: bc_branchpoint}) and (\ref{eq: bc_free_arc}).
The Cauchy integral formula yields the identity
\begin{multline}
\text{v.p.}\int_{\fixed\cup\free}g(\zeta)d\zeta-2\pi i\sum_{v\in\mathcal{V}^{\reg}}\res_{z=v}g(z)-\pi i\sum_{b\in\mathcal{\mathcal{V}^{\flat}}}\res_{\zeta=b}g(\zeta)\\
=2\pi i\left(\res_{z=z_{1}}g(z)+\res_{z=z_{2}}g(z)\right)\\
=2\pi i\left(\bar{\eta}_{1}f^{[\eta_{2}]}(z_{2},z_{1})+\bar{\eta}_{2}f^{[\eta_{1}]}(z_{1},z_{2})\right),\label{eq: antisymm_proof}
\end{multline}
where the principal integral in the first line is understood as the
limit of integrals over $\{\zeta\in\pa\Omega:\dist(\zeta,\mathcal{\mathcal{V}^{\flat}})>r\}$
as $r\to0$. We claim that the imaginary part of (\ref{eq: antisymm_proof})
vanishes. Indeed, note that if $\zeta\in\fixed$ or $\zeta\in\free$,
then we have $g(\zeta)\tau_{\zeta}\in\R$ by (\ref{eq: bc_fixed})
and (\ref{eq: bc_free}). If $v\in\mathcal{V}^{\sharp}$, then we
have $\res_{z=v}g(z)\in i\R$ by (\ref{eq: bc_branchpoint}). Finally,
if $b_{1},b_{2}\in\mathcal{V}^{\flat}$ are the two endpoints of a
free arc, then we have $\res_{\zeta=b_{1}}g(\zeta)=-\res_{\zeta=b_{2}}g(\zeta)$
by (\ref{eq: bc_free_arc}). Therefore, by taking the imaginary part
of the equation (\ref{eq: antisymm_proof}), we get 
\[
\re\left(\bar{\eta}_{1}f^{[\eta_{2}]}(z_{2},z_{1})+\bar{\eta}_{2}f^{[\eta_{1}]}(z_{1},z_{2})\right)=0,
\]
as required.
\end{proof}
We are now ready to formulate the first convergence result for fermionic
observables. Recall the setup for the convergence of discrete domains
$\Od$ to $\Omega$ discussed in Section \ref{subsec: domains_convergence}.
The proof of this theorem is postponed until Section \ref{sec: proofs_of_convergence_theorems}.
\begin{thm}
\label{thm: conv_bulk_bulk} Under the assumption that $\cvr=\double{\vv}$
is not ramified over purely free boundary components of $\Omega$,
one has, as $\delta\to0$, 
\begin{equation}
\delta^{-1}F_{\Od,\cvr}(\zz)=\Cpsi^{2}\cdot\eta_{z_{1}}\eta_{z_{2}}\cdot f_{\Omega,\cvr}^{[\eta_{z_{1}},\eta_{z_{2}}]}(\zz)+o(1).\label{eq: conv_1}
\end{equation}
uniformly in $z_{1},z_{2}$ and $\{\vv\}\cap\Omega$ in the bulk of
$\Omega$ and away from each other.
\end{thm}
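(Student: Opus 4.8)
The plan is to fix $a:=z_1$ and to study the normalized discrete spinor
\[
G_\delta(\cdot):=\delta^{-1}\overline{\eta}_{z_1}\,F_{\Od,\cvr}(z_1,\cdot),
\]
which by Lemma \ref{lem: s-hol} is s-holomorphic in $\Od$ away from a discrete singularity at the corner $z_1$, is a $\cvr$-spinor in its remaining argument, and satisfies the discrete standard boundary conditions of Definition \ref{def: Standard_BC} by Lemma \ref{lem: bc_discrete}. The proof then follows the precompactness-plus-uniqueness scheme. First I would obtain a priori bounds and equicontinuity of $\{G_\delta\}$ on compact subsets of $\Omega\setminus(\{a\}\cup\{\vv\})$, uniformly up to $\pa\Omega$. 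To handle the singularity I would subtract the explicitly constructed full-plane discrete analytic spinor reproducing the local profile $F(z,z^{\pm})=\pm\eta_z^{2}$ fixed in Definition \ref{def: obs_signs}, i.e.\ the lattice analogue of $(z-a)^{-1}$ built by the method of \cite{Dubedat}; the difference is s-holomorphic in a full neighborhood of $z_1$. The remaining bounds then follow either from the sub-/super-harmonic primitive $H_\delta$ of $\im(G_\delta^{2}\,dz)$ via the maximum principle and boundary control, or more economically from the Cauchy-type representation of Lemma \ref{lem: Cauchy_spinors}; near the interior ramification points $\{\vv\}\cap\Omega$ the spinor is forced to remain bounded, so no subtraction is needed there.

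Given precompactness, any subsequential limit $g$, taken in the topology of uniform convergence on compact subsets of $\Omega\setminus(\{a\}\cup\{\vv\})$, is a holomorphic $\cvr$-spinor: in the limit the relation (\ref{eq: s-hol}) degenerates to the Cauchy--Riemann equations, while the phase condition (\ref{eq: phases_condition}) identifies the corner values as the orthogonal projections onto $\eta_{z_2}\R$ of a single holomorphic edge-limit. This edge-limit inherits the simple pole $\overline{\eta}_{z_1}/(z-a)$ from the subtracted kernel, the precise normalization being a purely full-plane comparison of the local profile of the discrete $(z-a)^{-1}$-analogue with its far-field asymptotics --- exactly what produces the factor $\Cpsi^{2}=\tfrac{2}{\pi}$ in (\ref{eq: conv_1}). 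Finally, the discrete boundary conditions pass to the continuous ones of Definition \ref{def: BC_continuous}: the wired and free cases become (\ref{eq: bc_fixed})--(\ref{eq: bc_free}) at regular boundary points, while the behaviour at endpoints of free arcs and at punctures --- including the sign-consistent relation (\ref{eq: bc_free_arc}) and the $\lambb c/\sqrt{z-v}$ profile (\ref{eq: bc_branchpoint}) --- is extracted using the boundary modulus-of-continuity estimate of Lemma \ref{lem: Clements_clever_lemma} together with the explicit discrete $z^{-1/2}$-type spinor.

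The limit is then pinned down by uniqueness. By the previous step $\Cpsi^{-2}$ times the edge-limit is a holomorphic $\cvr$-spinor with a single simple pole $\overline{\eta}_{z_1}/(z-a)$ at $a$ and the standard boundary conditions on all of $\pa\Ocvrc\setminus\{a,a^{*}\}$; any two such spinors differ by a spinor that is holomorphic at $a,a^{*}$ and satisfies the standard conditions on the \emph{entire} boundary, hence vanishes by Proposition \ref{prop: Uniqueness_continuous}. This simultaneously \emph{constructs} $f^{[\eta_{z_1}]}_{\Omega,\cvr}(a,\cdot)$, establishing the existence asserted in Definition \ref{def: feta}, and shows that every subsequential limit coincides, so the whole family converges. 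Projecting the holomorphic edge-limit $\Cpsi^{2}f^{[\eta_{z_1}]}_{\Omega,\cvr}(z_1,\cdot)$ onto the line $\eta_{z_2}\R$ at the corner $z_2$, as dictated by (\ref{eq: phases_condition}), gives the corner value $\Cpsi^{2}\,\eta_{z_2}\re[\,\overline{\eta}_{z_2}f^{[\eta_{z_1}]}(\zz)\,]=\Cpsi^{2}\,\eta_{z_2}f^{[\eta_{z_1},\eta_{z_2}]}(\zz)$, with $f^{[\eta_{z_1},\eta_{z_2}]}$ as in Lemma \ref{lem: antisym_cont}; multiplying through by $\eta_{z_1}$ yields exactly (\ref{eq: conv_1}).

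The main obstacle I anticipate is the boundary analysis, not the interior convergence or the uniqueness step, both of which are routine given the definitions. Establishing the a priori estimates \emph{up to} $\pa\Od$ and verifying that subsequential limits satisfy the full set of continuous standard boundary conditions --- in particular the delicate sign-consistency $c_2=ic_1$ in (\ref{eq: bc_free_arc}) at the endpoints of free arcs, and the dichotomy between a genuine $(z-v)^{-1/2}$ singularity and an $O(1)$ term at punctures in (\ref{eq: bc_branchpoint}) --- requires careful use of $H_\delta$ and its Dirichlet-type behaviour, and it is precisely here that Lemma \ref{lem: Clements_clever_lemma} for rough boundaries carries the weight. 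Fixing the constant $\Cpsi^{2}$ likewise rests on a careful full-plane comparison, best carried out once at this stage.
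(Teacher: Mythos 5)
Your overall scheme (subtract the discrete Cauchy kernel at $z_{1}$, precompactness, identify the limit via Proposition \ref{prop: discrete-uniqueness}'s continuous counterpart Proposition \ref{prop: Uniqueness_continuous}, and read off $\Cpsi^{2}=\frac{2}{\pi}$ from the full-plane asymptotics (\ref{eq: asymp_zm1})) is the paper's, but two of your load-bearing steps have genuine gaps. First, you delegate the boundary analysis, ``in particular the delicate sign-consistency $c_{2}=ic_{1}$,'' to Lemma \ref{lem: Clements_clever_lemma}. This is circular: the proof of that lemma (via Lemma \ref{lem: z_2_at_bdry} and Lemma \ref{lem: rough_bdry}) explicitly relies on Theorem \ref{thm: conv_bulk_bulk} and on the (UBC) step established \emph{inside} its proof. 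It is also the wrong tool: that lemma concerns normalization at wired boundary points (type $\beta$), whereas the sign consistency in (\ref{eq: bc_free_arc}) is invisible to the primitive $H^{\delta}$ altogether, since $h=\im\int f^{2}$ only sees $|c_{1}|=|c_{2}|$ (cf.\ Remark \ref{rem: bc_h_to_bc_f} and the counterexample in Remark \ref{rem: free-arcs-cond}). The idea you are missing is the paper's comparison trick: in a simply-connected neighborhood $\hat{\Omega}$ of the free arc one exhibits an \emph{auxiliary} sequence of s-holomorphic functions $\hat{F}^{\delta}=\beta_{\delta}F_{\hat{\Omega}^{\delta}}(\hat{z}^{\delta},\cdot)$ with explicitly known, sign-consistent limit (the observable of \cite{IzyurovFree}), then forms $\check{F}:=\delta^{-1}F(z_{1}^{\delta},\cdot)-\alpha_{\delta}\hat{F}$ with $\alpha_{\delta}$ killing the jump of $\check{H}^{\delta}$ at one endpoint; case (3) of the discrete boundary conditions (Definition \ref{def: Standard_BC}) forces the jump to vanish at the other endpoint, and the limit inherits $c_{1,2}=\alpha\hat{c}_{1,2}$.

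Second, your a priori bound is asserted rather than proved. You claim the uniform bounds ``follow either from the sub-/super-harmonic primitive $H_{\delta}$ via the maximum principle and boundary control, or more economically from Lemma \ref{lem: Cauchy_spinors}''; neither works directly. The maximum principle gives nothing without an a priori bound on the boundary jumps of $H^{\delta}$, and Lemma \ref{lem: Cauchy_spinors} only recovers values next to a ramification point from values on a surrounding contour, which must first be bounded. The paper's (UBC) step is a genuine argument by contradiction: assume $M_{\eps}^{\delta}:=\max_{\Od_{\eps}}|H^{\delta}|\to\infty$, renormalize, propagate the bound past $z_{1}$ (via the subtracted $P_{z_{1}}$) and past the punctures (via $\beta_{\delta}\dmsqrt{\cdot}{v^{\delta}}$ and the spinor maximum principle, Lemma \ref{lem: Maximum_spinors} --- note your claim that ``no subtraction is needed there'' is false, since $\tilde{F}(z_{0}^{\delta})=O(\delta^{-\frac{1}{2}})$ next to $v^{\delta}$), then show the renormalized limit is holomorphic at $z_{1}$, hence $\equiv0$ by Proposition \ref{prop: Uniqueness_continuous}, contradicting the definition of $M_{\eps}^{\delta}$. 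Without this step, and without a non-circular treatment of the free-arc sign, your proof does not close.
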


In the left-hand side of (\ref{eq: conv_1}), $z_{1}$ and $z_{2}$
denote corners of the graph $\Ocvr$, while the arguments of $f^{[\eta_{z_{1}},\eta_{z_{2}}]}(\cdot,\cdot)$
are points in $\Ocvrc$. The information about the orientation of
the corners $z_{1},z_{2}$ is retained in $\eta_{z_{1}},\eta_{z_{2}}$,
which are defined up to signs. However, since $f^{[-\eta]}=-f^{[\eta]},$
the right-hand side of (\ref{eq: conv_1}) does not depend on the
choice of these signs. 

While Lemma \ref{lem: antisym_cont} already reveals the symmetry
between the two arguments of continuous fermionic observables $\feta(\cdot,\cdot)$,
it is convenient to make one more step and to pass from $\feta(\cdot,\cdot)$
to functions, which are not only holomorphic in the second variable
but also (anti-)holomorphic in the first.
\begin{lem}
\label{lem: spinor_hol_antyhol}In the setup described above, the
spinor $\feta(z_{1},z_{2})$, $z_{1},z_{2}\in\Omega$ can be written
as a linear combination
\begin{equation}
\feta(z_{1},z_{2})=\tfrac{1}{2}\left(\bar{\eta}f(z_{1},z_{2})+\eta\fdag(z_{1},z_{2})\right).\label{eq: f_eta_decomp}
\end{equation}
of two $\cvr$-spinors $f$ and $f^{\star}$. Both $f$ and $f^{\star}$
are holomorphic in their second argument and satisfy the following
anti-symmetry relations:
\begin{equation}
f(z_{1,}z_{2})=-f(z_{2},z_{1}),\qquad\fdag(z_{1},z_{2})=-\bar{\fdag(z_{2},z_{1})}.\label{eq: antysymm_f_fdag}
\end{equation}
In particular, $f$ is holomorphic in the first argument, and $\fdag$
is anti-holomorphic in the first argument. Moreover, $f$ and $\fdag$
satisfy the following asymptotic expansions as $z_{2}\to z_{1}$:
\begin{equation}
f(z_{1},z_{2})=\frac{2}{z_{2}-z_{1}}+O(z_{2}-z_{1}),\qquad\fdag(z_{1},z_{2})=\fdag(z_{1},z_{1})+O(z_{2}-z_{1}),\label{eq: f_fdag_expansion}
\end{equation}
where $f^{\star}(z_{1},z_{1})\in i\R.$
\end{lem}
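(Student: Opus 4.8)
The plan is to exploit the real-linearity of $\feta$ in $\eta$ (Remark~\ref{rem: cont_obs_existence_and_uniqueness}) to split the observable into its complex-linear and complex-antilinear parts in $\eta$. Concretely, I would set
\[
f(z_{1},z_{2}):=f^{[1]}(z_{1},z_{2})+i\,f^{[i]}(z_{1},z_{2}),\qquad \fdag(z_{1},z_{2}):=f^{[1]}(z_{1},z_{2})-i\,f^{[i]}(z_{1},z_{2}),
\]
which are $\cvr$-spinors (being $\R$-linear combinations of such) and holomorphic in $z_{2}$ since each $f^{[\eta]}$ is. Writing $\eta=\alpha+i\beta$ and using $f^{[\eta]}=\alpha f^{[1]}+\beta f^{[i]}$, a one-line computation confirms the decomposition~(\ref{eq: f_eta_decomp}). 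Substituting the residue expansion~(\ref{eq: f_residue}) with $\eta=1$ and $\eta=i$ (so $\bar\eta=1$ and $\bar\eta=-i$, respectively), the polar parts combine as $\tfrac{1}{z_{2}-z_{1}}+i\cdot\tfrac{-i}{z_{2}-z_{1}}=\tfrac{2}{z_{2}-z_{1}}$ for $f$ and cancel for $\fdag$; thus $f$ has a simple pole with residue $2$ on the diagonal while $\fdag$ has a removable singularity there.

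Next I would convert the antisymmetry of Lemma~\ref{lem: antisym_cont} into the relations~(\ref{eq: antysymm_f_fdag}). Expanding $f^{[\eta_{1},\eta_{2}]}(z_{1},z_{2})=\re[\bar\eta_{2}f^{[\eta_{1}]}(z_{1},z_{2})]$ via~(\ref{eq: f_eta_decomp}) gives $\tfrac12\re[\bar\eta_{1}\bar\eta_{2}f(z_{1},z_{2})]+\tfrac12\re[\eta_{1}\bar\eta_{2}\fdag(z_{1},z_{2})]$, and similarly for $f^{[\eta_{2},\eta_{1}]}(z_{2},z_{1})$. The identity $f^{[\eta_{1},\eta_{2}]}(z_{1},z_{2})=-f^{[\eta_{2},\eta_{1}]}(z_{2},z_{1})$ then holds for all $\eta_{1},\eta_{2}$; since both sides are $\R$-bilinear in $(\eta_{1},\eta_{2})$, it suffices to test at $\eta_{1},\eta_{2}\in\{1,i\}$. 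Matching the purely antiholomorphic part (the coefficient of $\bar\eta_{1}\bar\eta_{2}$) yields $f(z_{1},z_{2})=-f(z_{2},z_{1})$, while matching the sesquilinear part (the coefficients of $\eta_{1}\bar\eta_{2}$ and $\eta_{2}\bar\eta_{1}$) yields $\fdag(z_{1},z_{2})=-\overline{\fdag(z_{2},z_{1})}$. The appearance of the conjugate in the second relation but not the first is exactly the distinction between the complex-linear and antilinear components, and keeping these signs straight is the one step that requires genuine care. Once these relations are in hand, the holomorphicity claims are immediate: $f(z_{1},z_{2})=-f(z_{2},z_{1})$ is holomorphic in $z_{1}$ because $f$ is holomorphic in its second slot, and $\fdag(z_{1},z_{2})=-\overline{\fdag(z_{2},z_{1})}$ is antiholomorphic in $z_{1}$ for the same reason.

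Finally, for the refined expansions~(\ref{eq: f_fdag_expansion}) I would set $g(z_{1},z_{2}):=f(z_{1},z_{2})-\tfrac{2}{z_{2}-z_{1}}$. Since the subtracted term absorbs the simple pole in each variable, $g$ extends holomorphically across the diagonal in $z_{2}$; the antisymmetry of $f$ gives $g(z_{1},z_{2})=-g(z_{2},z_{1})$, so $g$ vanishes on the diagonal and hence $g(z_{1},z_{2})=O(z_{2}-z_{1})$, \emph{killing the constant term} and leaving $f(z_{1},z_{2})=\tfrac{2}{z_{2}-z_{1}}+O(z_{2}-z_{1})$. For $\fdag$ the removable singularity permits a Taylor expansion in $z_{2}$, giving $\fdag(z_{1},z_{2})=\fdag(z_{1},z_{1})+O(z_{2}-z_{1})$; evaluating $\fdag(z_{1},z_{2})=-\overline{\fdag(z_{2},z_{1})}$ on the diagonal gives $\fdag(z_{1},z_{1})=-\overline{\fdag(z_{1},z_{1})}$, i.e.\ $\fdag(z_{1},z_{1})\in i\R$. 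The main obstacle is the sign bookkeeping of the middle paragraph; the first and last steps are routine once the linear/antilinear split is set up.
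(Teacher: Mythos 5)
Your first two steps coincide with the paper's own proof: it likewise obtains the decomposition by setting $f=f^{[1]}+if^{[i]}$ and $\fdag=f^{[1]}-if^{[i]}$, and extracts the relations (\ref{eq: antysymm_f_fdag}) from Lemma \ref{lem: antisym_cont} by matching the coefficients of $\bar{\eta}_{1}\bar{\eta}_{2}$, $\eta_{1}\bar{\eta}_{2}$, $\bar{\eta}_{1}\eta_{2}$, $\eta_{1}\eta_{2}$; your residue bookkeeping (residue $1+i\cdot(-i)=2$ for $f$, cancellation for $\fdag$) and the deduction of (anti-)holomorphicity in the first argument are all correct.

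The gap is in your final paragraph. The identity $g(z_{1},z_{2})=-g(z_{2},z_{1})$ has been proved only off the diagonal, while the diagonal value $g(z_{1},z_{1})$ is \emph{defined} as the limit of $g(z_{1},z_{2})$ as $z_{2}\to z_{1}$, i.e., a limit in the second variable alone. Antisymmetry by itself only relates the two one-variable extensions: writing $a(z):=\lim_{w\to z}g(z,w)$ and $b(z):=\lim_{w\to z}g(w,z)$, the relation $g(w,z)=-g(z,w)$ gives $b(z)=-a(z)$, and ``evaluating on the diagonal'' as you do reduces to the tautology $a(z)=-(-a(z))$. To conclude $a(z)=0$ you need $a(z)=b(z)$, i.e., that the extension across the diagonal in the first variable agrees with the one in the second \textemdash{} which is joint continuity of $g$ at the diagonal, and this does \emph{not} follow from separate holomorphy plus antisymmetry alone. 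This is precisely why the paper invokes Hartogs' theorem: $g(z_{1},z_{2})=f(z_{1},z_{2})-2(z_{2}-z_{1})^{-1}$ is separately holomorphic in each variable in a full neighborhood of the diagonal (in $z_{1}$ because, by the antisymmetry you already proved, $g(z_{1},z_{2})=-g(z_{2},z_{1})$ with the pole again cancelled), hence jointly analytic and in particular jointly continuous; only then is $g(z,z)=-g(z,z)=0$ legitimate, yielding $f(z_{1},z_{2})=2(z_{2}-z_{1})^{-1}+O(z_{2}-z_{1})$. The same repair is needed for your last claim: apply Hartogs to $\fdag(\bar{z}_{1},z_{2})$, which is holomorphic in each of $\bar{z}_{1}$ and $z_{2}$, to justify passing $\fdag(z_{1},z_{2})=-\overline{\fdag(z_{2},z_{1})}$ to the diagonal and concluding $\fdag(z_{1},z_{1})\in i\R$. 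With this one-line addition your argument matches the paper's.
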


\begin{proof}
Since $f^{[\eta]}(z_{1},z_{2})$ is a real-linear function of $\eta$,
it can be written in the form (\ref{eq: f_eta_decomp}) with some
complex numbers $f(z_{1},z_{2})$ and $\fdag(z_{1},z_{2})$. Plugging
$\eta=1$ and $\eta=i$ into (\ref{eq: f_eta_decomp}), we see that
\begin{equation}
f=f^{[1]}+if^{[i]},\qquad\fdag=f^{[1]}-if^{[i]};\label{eq: f_f_dagger_back}
\end{equation}
hence both $f(z_{1},z_{2})$ and $\fdag(z_{1},z_{2})$ are holomorphic
functions of $z_{2}$. For all $\eta_{1},\eta_{2}\in\C$, it follows
from Lemma \ref{lem: antisym_cont} that
\begin{multline*}
\re\left(\bar{\eta}_{1}\bar{\eta}_{2}f(z_{1},z_{2})+\eta_{1}\bar{\eta}_{2}\fdag(z_{1},z_{2})\right)\\
=2f^{[\eta_{1},\eta_{2}]}(z_{1},z_{2})=-2f^{[\eta_{2},\eta_{1}]}(z_{2},z_{1})\\
=-\re\left(\bar{\eta}_{1}\bar{\eta}_{2}f(z_{2},z_{1})+\bar{\eta}_{1}\eta_{2}\fdag(z_{2},z_{1})\right),
\end{multline*}
which is only possible if the relations (\ref{eq: antysymm_f_fdag})
are fulfilled (indeed, both sides are linear combinations of $\bar{\eta}_{1}\bar{\eta}_{2}$,
$\eta_{1}\bar{\eta}_{2}$, $\bar{\eta}_{1}\eta_{2}$ and $\eta_{1}\eta_{2}$,
therefore all the coefficients in front of these four terms should
match). In particular, $f(z_{1},z_{2})$ is a holomorphic function
of $z_{1}$ while $\fdag(z_{1},z_{2})$ is anti-holomorphic in $z_{1}.$
Moreover, one easily sees that (\ref{eq: antysymm_f_fdag}) are fulfilled
for each fixed $z_{1}\in\Omega$. Finally, by Hartogs' theorem the
functions

\[
f(z_{1},z_{2})-2(z_{2}-z_{1})^{-1}\quad\text{and\ensuremath{\quad f^{\star}(\bar z_{1},z_{2})}}
\]
are jointly analytic as functions of two complex variables. The anti-symmetry
relations imply $f(z_{1},z_{2})-2(z_{2}-z_{1})^{-1}=O(z_{2}-z_{1})$
and $f^{\star}(z_{1},z_{1})=-\overline{f^{\star}(z_{1},z_{1})}$.
\end{proof}
\begin{rem}
\label{rem: Hartogs}As a byproduct of the proof given above we see
that the function $f^{[\eta_{1},\eta_{2}]}(z_{1},z_{2})$ is real
analytic on $\{z_{1},z_{2}\in\Ocvrc,\ \eta_{1},\eta_{2}\in\C:\cvr(z_{1})\neq\cvr(z_{2})\}$
as a function of eight real variables $\re z_{1}$, $\im z_{1}$,
$\re\eta_{1}$, $\im\eta_{1}$, $\re z_{2}$, $\im z_{2}$, $\re\eta_{2}$,
$\im\eta_{2}.$
\end{rem}

It is useful to rewrite the standard boundary conditions (see Definition
\ref{def: Standard_BC}) for the functions $\feta$ in terms of $f$
and $\fdag$. Let $\Omega$ be a nice domain and $\cvr$ its double
cover not ramified over purely free boundary components of $\pa\Omega$.
The following is fulfilled (the proofs are straightforward, e.g.,
from (\ref{eq: f_f_dagger_back})):
\begin{itemize}
\item If $\zeta\in S\cap\fixed$, then both $f(z_{1},\cdot)$ and $\fdag(z_{1},\cdot)$
extend continuously to $\zeta,$ and 
\begin{equation}
\fdag(z_{1,}\zeta)=\tau_{\zeta}\cdot\overline{f(z_{1},\zeta)}.\label{eq: bc_fixed-1}
\end{equation}
\item If $\zeta\in S\cap\free$, then both $f(z_{1},\cdot)$ and $\fdag(z_{1},\cdot)$
extend continuously to $\zeta,$ and 
\begin{equation}
\fdag(z_{1,}\zeta)=-\tau_{\zeta}\cdot\overline{f(z_{1},\zeta)}.\label{eq: bc_free-1}
\end{equation}
\item If $\{v\}$ is a puncture, then there exists a constant $c(z_{1})$
such that
\begin{equation}
f(z_{1},z_{2})=\frac{\lambb\cdot c(z_{1})}{\sqrt{z_{2}-v}}+O(1)\ \ \text{and}\ \ \fdag(z_{1},z_{2})=\frac{\lambb\cdot\overline{c(z_{1})}}{\sqrt{z_{2}-v}}+O(1)\ \ \text{as}\ \ z_{2}\to v.\label{eq: bc_branchpoint-1}
\end{equation}
\item If $\nu$ is a free boundary arc and $b_{1},b_{2}$ are the endpoints
of $\nu$, following in this order when tracing $\nu$ in the direction
of $\tau$, then there exists a constant $c(z_{1})$ such that 
\begin{equation}
\begin{array}{ccc}
f(z_{1},z_{2})={\displaystyle \vphantom{\Biggl|}\frac{-i\cdot c(z_{1})}{\sqrt{z_{2}-b_{1}}}}+O(1), & \fdag(z_{1},z_{2})={\displaystyle \frac{-i\cdot\overline{c(z_{1})}}{\sqrt{z_{2}-b_{1}}}}+O(1), & z\to b_{1},\\
f(z_{1},z_{2})={\displaystyle \frac{c(z_{1})}{\sqrt{z_{2}-b_{2}}}}+O(1), & \fdag(z_{1},z_{2})={\displaystyle \frac{\bar{c(z_{1})}}{\sqrt{z_{2}-b_{2}}}}+O(1), & z\to b_{2},
\end{array}\label{eq: bc_free_arc-1}
\end{equation}
with the same conventions on the choice of the signs of $f,\fdag$
and $(z-b_{1,2})^{-\frac{1}{2}}$ on $\text{\ensuremath{\Omega_{\cvr}} near }$$\nu$
as in (\ref{eq: bc_free}). 
\end{itemize}
\begin{rem}
(Schwarz reflection). \label{rem: Schwarz-reflection} Let $\Omega$
and $\zeta\in\pa\Omega$ be such that $\Omega\cap B_{r}(\zeta)=\H\cap B_{r}(\zeta)$
for $r$ small enough. If $\zeta\in\fixed$, it easily follows from
(\ref{eq: bc_fixed-1}), (\ref{eq: bc_free-1}) and the anti-symmetry
relations (\ref{eq: antysymm_f_fdag}) that the function $f(z_{1},z_{2})$
can be meromorphically continued in both its arguments to $z_{1},z_{2}\in B_{r}(\zeta)$
as
\begin{equation}
f(z_{1},z_{2}):=\begin{cases}
f(z_{1},z_{2}), & \im z_{1}\ge0,\ \im z_{2}\ge0;\\
\overline{\fdag(z_{1},\bar z_{2})}, & \im z_{1}\ge0,\ \im z_{2}\le0;\\
\fdag(\bar z_{1},z_{2}), & \im z_{1}\le0,\ \im z_{2}\ge0;\\
\overline{f(\bar z_{1},\bar z_{2})}, & \im z_{1}\le0,\ \im z_{2}\le0.
\end{cases}\label{eq: schwarz-f,fdag}
\end{equation}
A similar statement holds for the function $\fdag(z_{1},z_{2}):=f(\bar z_{1},z_{2})$,
$z_{1},z_{2}\in B_{r}(\zeta)$, with the relations (\ref{eq: antysymm_f_fdag})
preserved. If $w\in\free$, a similar analytic continuation in both
variables appears. Finally, if $b$ is an endpoint of a free arc,
a continuation similar to (\ref{eq: schwarz-f,fdag}) still exists
but leads to \emph{spinors} ramified over $b$ rather than to functions,
see (\ref{eq: bc_free_arc-1}).
\end{rem}

\begin{example}
\label{exa: f_half-plane_explicit} (i) If $\Omega=\H$, equipped
with wired boundary conditions (and $\cvr$ is trivial), then 
\[
f(z_{1},z_{2})=\frac{2}{z_{2}-z_{1}},\qquad f^{\star}(z_{1},z_{2})=\frac{2}{z_{2}-\bar z_{1}}.
\]
(ii) If $\Omega=\H\setminus\{v\}$, again with wired boundary conditions
on $\R$, and $\cvr=\double v,$ then
\[
f(z_{1},z_{2})=\fdag(\bar z_{1},z_{2})=\frac{1}{z_{2}-z_{1}}\left(\sqrt{\frac{(z_{1}-v)(z_{2}-\bar v)}{(z_{2}-v)(z_{1}-\bar v)}}+\sqrt{\frac{(z_{2}-v)(z_{1}-\bar v)}{(z_{1}-v)(z_{2}-\bar v)}}\right).
\]
(iii) If $\Omega=\H$ is equipped with free boundary conditions on
the interval $(b_{1},b_{2})\subset\R$ and wired ones outside this
interval (and $\cvr$ is trivial), then

\[
f(z_{1},z_{2})=\frac{1}{z_{2}-z_{1}}\left(\sqrt{\frac{(z_{1}-b_{1})(z_{2}-b_{2})}{(z_{2}-b_{1})(z_{1}-b_{2})}}+\sqrt{\frac{(z_{2}-b_{1})(z_{1}-b_{2})}{(z_{1}-b_{1})(z_{2}-b_{2})}}\right)
\]
and $\fdag(z_{1},z_{2})=-f(\bar z_{1},z_{2})$, where $\bar z_{1}=z_{1}$
on $\R\setminus[b_{1},b_{2}]$ and $\bar z_{1}=z_{1}^{*}$ (as points
on the double cover $\C_{\double{b_{1},b_{2}}}$) on the free boundary
arc $(b_{1},b_{2})$$.$ 
\end{example}

\begin{proof}
In either case, it is enough to check that $f$ and $\fdag$ satisfy
the conditions (\ref{eq: antysymm_f_fdag}), (\ref{eq: f_fdag_expansion}),
$(\ref{eq: bc_fixed-1})$\textendash (\ref{eq: bc_free_arc-1}), and
that $f(z_{1},z_{2}),\fdag(z_{1},z_{2})=O(z_{2}^{-1})$ as $z_{2}\to\infty$;
the latter asymptotics is required when mapping $\Omega=\H$ to a
nice \emph{bounded} domain $\Oother$ and using the conformal covariance
rule (\ref{eq: feta-conf-cov}). This check is straightforward and
left to the reader.
\end{proof}
We now consider the situation when $z_{1}$, $z_{2}$ are opposite
corners of an edge in the bulk of $\Omega$. This is the case required
to handle the energy density correlations, see (\ref{eq: en_spin_disorder}).
\begin{thm}
\label{thm: Convergence_bulk_near}Assume that $z_{1,2}\in\Cgr(\Ocvr)$
are such that $z_{2}=z_{1}\pm i(z_{1}^{\bullet}-z_{1}^{\circ}).$
We have 
\[
\delta^{-1}F_{\Od,\cvr}(z_{1},z_{2})=\Cpsi^{2}\cdot\eta_{z_{1}}^{2}\cdot(-\tfrac{i}{2})\fdag_{\Omega,\cvr}(z_{1},z_{1})+o(1)
\]
as $\delta\to0$, uniformly in $z_{1}$ and $\{\vv\}\cap\Omega$ in
the bulk of $\Omega$ and away from each other.
\end{thm}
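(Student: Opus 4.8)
The obstacle to a one-line deduction from Theorem~\ref{thm: conv_bulk_bulk} is that there $z_1,z_2$ must stay at a definite distance, whereas here $|z_2-z_1|\asymp\delta$; the statement is genuinely a \emph{near-diagonal} (coinciding-points) convergence and must be obtained by a local analysis at $z_1$. The plan is to regard $\bar\eta_{z_1}F_{\Od,\cvr}(z_1,\cdot)$ as an s-holomorphic $\cvr$-spinor carrying a discrete singularity at $z_1$ (Lemma~\ref{lem: s-hol}, Remark~\ref{Remark_discrte_singularity}), and to read off its value at the single adjacent corner $z_2$ from two ingredients: the limit of the observable \emph{away} from $z_1$, and the universal local profile of the singularity.

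First I would assemble the continuous input. By Lemma~\ref{lem: spinor_hol_antyhol}, equations~\eqref{eq: f_eta_decomp}--\eqref{eq: f_fdag_expansion}, the limiting observable has, as $z\to z_1$, the expansion
\[
f^{[\eta_{z_1}]}_{\Omega,\cvr}(z_1,z)=\frac{\bar\eta_{z_1}}{z-z_1}+\tfrac12\,\eta_{z_1}\,\fdag_{\Omega,\cvr}(z_1,z_1)+O(z-z_1),\qquad \fdag_{\Omega,\cvr}(z_1,z_1)\in i\R,
\]
so the entire near-diagonal behaviour of the limit is encoded in the regular coefficient $\tfrac12\eta_{z_1}\fdag_{\Omega,\cvr}(z_1,z_1)$. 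On the discrete side, the proof of Theorem~\ref{thm: conv_bulk_bulk} already delivers that $\delta^{-1}\bar\eta_{z_1}F_{\Od,\cvr}(z_1,z)$ converges to $\Cpsi^2\,\proj{\eta_z}{f^{[\eta_{z_1}]}_{\Omega,\cvr}(z_1,z)}$, as a function of $z$, uniformly on compact subsets of $\Omega\setminus\{z_1\}$ and uniformly in $\{\vv\}\cap\Omega$ in the bulk.

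The geometric input is that, for $z_2=z_1\pm i(z_1^\bullet-z_1^\circ)$, formula~\eqref{eq: def_eta_discrete} gives $\eta_{z_2}=\pm i\,\eta_{z_1}$, $|\eta_{z_1}|=1$ and $z_2-z_1=\mp i(z_1^\bullet-z_1^\circ)$, whence $\bar\eta_{z_1}\bar\eta_{z_2}(z_2-z_1)^{-1}\in i\R$. Thus the singular term $\bar\eta_{z_1}(z-z_1)^{-1}$ is orthogonal to the line $\eta_{z_2}\R$ at $z_2$, i.e.\ $\proj{\eta_{z_2}}{\bar\eta_{z_1}(z_2-z_1)^{-1}}=0$; this is the algebraic reason the leading $\delta^{-1}$-divergence cancels and $\delta^{-1}F_{\Od,\cvr}(z_1,z_2)$ stays bounded. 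Splitting $\bar\eta_{z_1}F_{\Od,\cvr}(z_1,\cdot)$ into a \emph{universal} singular part, modelled on the explicitly constructed full-plane discrete analogue of $z^{-1}$ (following \cite{Dubedat}), plus an s-holomorphic regular part, I would pin the regular part's leading coefficient to $\Cpsi^2\cdot\tfrac12\eta_{z_1}\fdag_{\Omega,\cvr}(z_1,z_1)$ via the convergence above, evaluate both parts at $z_2$, and finally multiply back by $\eta_{z_1}$ (using $F=\eta_{z_1}\cdot\bar\eta_{z_1}F$ and the relations for $\eta_{z_2}$) to reach $\delta^{-1}F_{\Od,\cvr}(z_1,z_2)=\Cpsi^2\eta_{z_1}^2(-\tfrac i2)\fdag_{\Omega,\cvr}(z_1,z_1)+o(1)$, with uniformity inherited from Theorem~\ref{thm: conv_bulk_bulk}.

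I expect the crux to be this last local step: quantifying the value of an s-holomorphic spinor at the corner \emph{immediately next} to its singularity in terms of the second (regular) coefficient of its continuous limit. The subtlety is that the naive continuous evaluation of the regular term at $z_2$ gives $\Cpsi^2\cdot\tfrac12\eta_{z_1}^2\fdag_{\Omega,\cvr}(z_1,z_1)$, which differs from the asserted answer by a universal factor $-i$: the discrete and continuous $z^{-1}$ disagree at order $\delta$, so next to the singularity this order-$\delta$ discrepancy produces an $O(1)$ effect on $\delta^{-1}F$. The correct proportionality constant is therefore not the pointwise projection but a universal number extracted from the full-plane observable by playing its asymptotic expansion against its explicit values near the singularity --- exactly the interplay emphasized in the introduction, and the same mechanism underlying Theorem~\ref{thm: Convergence_singular}. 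Once this universal constant is computed, the conclusion follows by linearity and the uniformity statements already available.
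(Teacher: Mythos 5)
Your proposal follows the paper's proof essentially step for step. The paper sets $\hat{F}(\cdot):=\bar{\eta}_{z_{1}}\delta^{-1}F_{\Od,\cvr}(z_{1},\cdot)-\dzmone{\cdot}{z_{1}}$, observes that $\hat{F}$ is s-holomorphic with \emph{no} singularity near $z_{1}$ (the discrete kernel of Lemma \ref{lem: dzm1} matches the singular values $\pm\eta_{z_1}$ exactly), identifies its limit on an annulus $B_{2\eps}(z_{1})\setminus B_{\eps}(z_{1})$ via Theorem \ref{thm: conv_bulk_bulk} and (\ref{eq: asymp_zmone}) with $\proj{\eta_{w}}{\hat{f}(w)}$, where $\hat{f}(w)=\Cpsi^{2}\bigl(f^{[\eta_{z_{1}}]}(z_{1},w)-\bar{\eta}_{z_{1}}(w-z_{1})^{-1}\bigr)$, and then uses the maximum principle to propagate this convergence to the corner $z_{2}$ adjacent to the singularity; the regular coefficient $\hat{f}(z_{1})=\Cpsi^{2}\tfrac12\eta_{z_{1}}\fdag(z_{1},z_{1})$ is exactly the one you isolate from (\ref{eq: f_eta_decomp})--(\ref{eq: f_fdag_expansion}). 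This is precisely your decomposition into the Dub\'edat-style discrete $z^{-1}$ plus a regular s-holomorphic part pinned by bulk convergence.

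The one place your account deviates is the final ``crux''. There is no additional universal constant to extract from an order-$\delta$ mismatch between the discrete and continuous kernels: the discrete kernel satisfies the \emph{exact} identity (\ref{eq: P_a_close}), $\dzmone{a\pm i(a^{\bullet}-a^{\circ})}{a}=0$, so the singular part contributes exactly zero at $z_{2}$ and $\bar{\eta}_{z_{1}}\delta^{-1}F(z_{1},z_{2})=\hat{F}(z_{2})$ holds on the nose, with no correction term at all. (Your observation that the continuous pole projects to zero along $\eta_{z_{2}}\R$ is the continuum shadow of this identity, but at distance $O(\delta)$ from the singularity the asymptotics (\ref{eq: asymp_zmone}) alone would not suffice; the exact vanishing is what is used.) Consequently the limit of $\delta^{-1}F(z_{1},z_{2})$ \emph{is} the pointwise projection $\eta_{z_{1}}\cdot\proj{\eta_{z_{2}}}{\hat{f}(z_{1})}$ of the regular coefficient --- contrary to your expectation that the pointwise projection must be corrected by a lattice constant. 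The imaginary-unit factor you noticed is pure phase bookkeeping, coming from $\eta_{z_{2}}=i\eta_{z_{1}}$ together with $\fdag(z_{1},z_{1})\in i\R$ (and from whether one factors out $\eta_{z_{1}}\eta_{z_{2}}$ or $\eta_{z_{1}}^{2}$ in stating the answer); it is not an analytic discrepancy between discrete and continuous $z^{-1}$. Your fallback mechanism --- playing the full-plane kernel's asymptotic expansion against its exact values next to the singularity --- is indeed the right general tool (it is how Theorems \ref{thm: Convergence_singular} and \ref{thm: Conv_both_near_spin} are proved), but for this theorem the only such input needed is (\ref{eq: P_a_close}).
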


\begin{figure}
\includegraphics[width=0.5\textwidth]{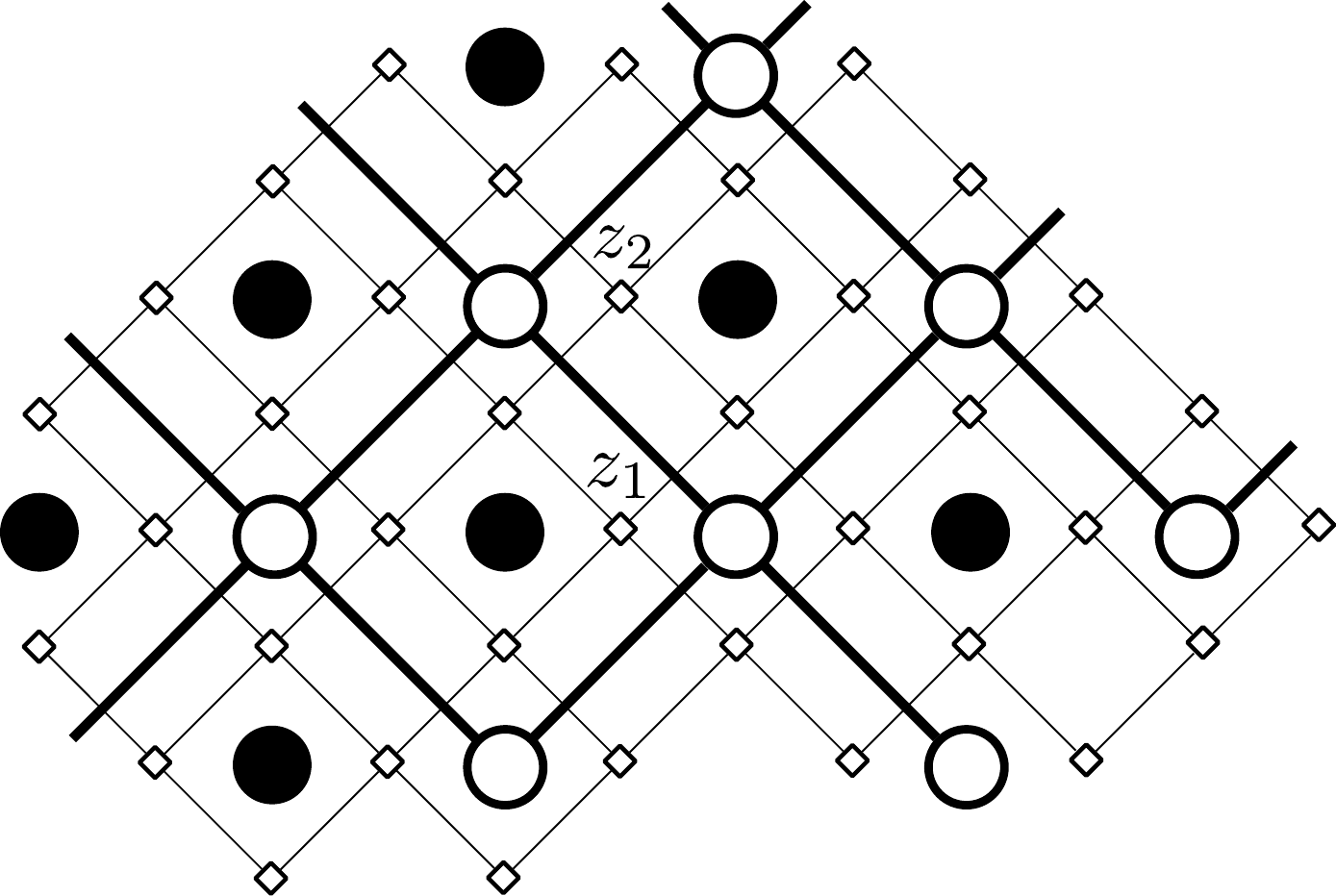}\caption{The configuration of points for Theorem \ref{thm: Convergence_bulk_near}\label{fig: energy}}
\end{figure}
Again, the proof of this theorem is postponed until Section \ref{sec: proofs_of_convergence_theorems}.
We conclude this section by the conformal covariance properties of
$f_{\Omega,\cvr}$ and $\fdag_{\Omega,\cvr}$.
\begin{lem}
\label{lem: ccov_f_fdag}Let $\varphi:\Omega\to\Oother$ be a conformal
map and $\cvrother$ be isomorphic to $\cvr$. As above, we assume
that $\cvr$ is not ramified over purely free boundary components.
Then, 
\begin{eqnarray}
f_{\Omega,\cvr}(z_{1},z_{2}) & = & f_{\Oother,\cvrother}(\varphi(z_{1}),\varphi(z_{2}))\cdot\varphi'(z_{1})^{\frac{1}{2}}\varphi'(z_{2})^{\frac{1}{2}},\label{eq: ccov_f}\\
\fdag_{\Omega,\cvr}(z_{1},z_{2}) & = & \fdag_{\Oother,\cvrother}(\varphi(z_{1}),\varphi(z_{2}))\cdot\overline{\varphi'(z_{1}}){}^{\frac{1}{2}}\varphi'(z_{2})^{\frac{1}{2}}.\label{eq: ccov_f_dag}
\end{eqnarray}
In particular, $\fdag_{\Omega,\cvr}(z,z)=\fdag_{\Oother,\cvrother}(\varphi(z),\varphi(z))\cdot|\varphi'(z)|$.
\end{lem}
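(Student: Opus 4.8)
The plan is to reduce everything to the already-established conformal covariance of $\feta$ (Lemma~\ref{lem: Conformal-covariance-observable}) together with the decomposition of $\feta$ into $f$ and $\fdag$ recorded in Lemma~\ref{lem: spinor_hol_antyhol}. The point is that $f_{\Omega,\cvr}^{[\eta]}(z_1,z_2)=\tfrac12\big(\bar\eta f_{\Omega,\cvr}(z_1,z_2)+\eta\fdag_{\Omega,\cvr}(z_1,z_2)\big)$ is a real-linear function of $\eta$, so the two covariance identities \eqref{eq: ccov_f} and \eqref{eq: ccov_f_dag} can be extracted simultaneously by writing the covariance of $\feta$ in this form and matching the coefficients of $\bar\eta$ and $\eta$.

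Concretely, I would fix $z_1,z_2$ and apply Lemma~\ref{lem: Conformal-covariance-observable} to obtain, for every $\eta\in\C$,
\[
\tfrac12\big(\bar\eta f_{\Omega,\cvr}(z_1,z_2)+\eta\fdag_{\Omega,\cvr}(z_1,z_2)\big)=f_{\Oother,\cvrother}^{[\etaother]}(\varphi(z_1),\varphi(z_2))\cdot\varphi'(z_2)^{\frac12},\qquad \etaother=\eta\cdot\overline{\varphi'(z_1)}^{\frac12}.
\]
Expanding the right-hand side via \eqref{eq: f_eta_decomp} in the target domain $\Oother$ and substituting $\overline{\etaother}=\bar\eta\,\varphi'(z_1)^{\frac12}$ and $\etaother=\eta\,\overline{\varphi'(z_1)}^{\frac12}$, the right-hand side becomes
\[
\tfrac12\Big(\bar\eta\,\varphi'(z_1)^{\frac12}\varphi'(z_2)^{\frac12}\,f_{\Oother,\cvrother}(\varphi(z_1),\varphi(z_2))+\eta\,\overline{\varphi'(z_1)}^{\frac12}\varphi'(z_2)^{\frac12}\,\fdag_{\Oother,\cvrother}(\varphi(z_1),\varphi(z_2))\Big).
\]
Since this identity holds for all $\eta\in\C$, and a real-linear expression $\bar\eta A+\eta B$ that vanishes for all $\eta$ forces $A=B=0$ (evaluate at $\eta=1$ and $\eta=i$), matching the coefficients of $\bar\eta$ and $\eta$ yields exactly \eqref{eq: ccov_f} and \eqref{eq: ccov_f_dag}. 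The final claim $\fdag_{\Omega,\cvr}(z,z)=\fdag_{\Oother,\cvrother}(\varphi(z),\varphi(z))\cdot|\varphi'(z)|$ then follows by putting $z_1=z_2=z$ in \eqref{eq: ccov_f_dag} and using $\overline{\varphi'(z)}^{\frac12}\varphi'(z)^{\frac12}=(\varphi'(z)\overline{\varphi'(z)})^{\frac12}=|\varphi'(z)|$.

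The only genuine point requiring care is the consistency of the square-root branches. I would invoke Lemma~\ref{lem: SBC_conformal}, where it is shown that $\varphi'(\cdot)^{\frac12}$ is a well-defined \emph{function} on $\Omega$ (its monodromy around each boundary component is trivial since $\varphi'$ does not vanish and $\tau_z,\tau_{\varphi(z)}$ have odd winding), so that the factors $\varphi'(z_1)^{\frac12}$, $\overline{\varphi'(z_1)}^{\frac12}$, $\varphi'(z_2)^{\frac12}$ introduce no extra sign ambiguity and preserve the $\cvr$-spinor structure in each variable. With this in hand, the branch appearing in $\etaother$ is the one fixed in Lemma~\ref{lem: Conformal-covariance-observable}, and $\overline{\etaother}=\bar\eta\,\varphi'(z_1)^{\frac12}$ follows by conjugation; no separate obstruction arises, so this is essentially a bookkeeping computation rather than a substantive difficulty.
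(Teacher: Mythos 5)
Your proof is correct and follows exactly the paper's route: the paper's own proof is the one-line observation that the lemma is a straightforward corollary of the covariance (\ref{eq: feta-conf-cov}) of $\feta$ together with the identities (\ref{eq: f_eta_decomp}) and (\ref{eq: f_f_dagger_back}), and your coefficient-matching in $\bar\eta$ and $\eta$ (equivalently, evaluating at $\eta=1$ and $\eta=i$) is precisely that argument spelled out. Your extra remark on the branch consistency of $\varphi'(\cdot)^{\frac12}$ via Lemma~\ref{lem: SBC_conformal} is a correct and welcome bit of bookkeeping that the paper leaves implicit.
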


\begin{proof}
This is a straightforward corollary of the conformal covariance (\ref{eq: feta-conf-cov})
of the function $\feta_{\Omega,\cvr}(z_{1},z_{2})$ and the identities
(\ref{eq: f_eta_decomp}), (\ref{eq: f_f_dagger_back}) relating the
functions $\feta$ and $f,\fdag$.
\end{proof}

\subsection{Fermionic observables at punctures and boundary points of $\Omega$}

We also need the convergence results for the cases when $z_{1}$ and/or
$z_{2}$ are allowed to be on the boundary of $\Omega$ (in particular,
next to its single-faced boundary component). To formulate such convergence
results concisely we introduce an additional notation.
\begin{defn}
\label{def: sharp}Let $a\in\Ocvrc$, and $\{v\}$ is a single-point
boundary component of $\Omega$, assume that $\cvr$ is ramified over
$v$. We define 

\begin{equation}
f^{[\eta,\reg]}(a,v):=\lim_{z\to v}\lamb(z-v)^{\frac{1}{2}}f^{[\eta]}(a,z).\label{eq: def-sharp}
\end{equation}
Due to ($\ref{eq: bc_branchpoint}$), this is a real quantity, well
defined up to the sign. This sign depends on the choice of the branch
of $\sqr{z-v}$ on $\Ocvrc$ locally near $v$, which we leave unfixed
for now.
\end{defn}

\begin{defn}
\label{def: flat} Assume that $\Omega$ is nice, $a\in\Ocvrc$, and
$b\in\pa\Ocvrc$ is an endpoint of a free boundary arc. We define
\begin{equation}
f_{\Omega,\cvr}^{[\eta,\flat]}(a,b):=\lim_{z\to b}\sflat\sqr{z-b}f_{\Omega,\cvr}^{[\eta]}(a,z)\label{eq: def-flat}
\end{equation}
where $\sflat=\sflat_{b}:=1$ if the boundary conditions change at
$b$ from free to wired when tracing $\pa\Omega$ in the direction
of $\tau$, and $\sflat_{b}:=i$ if, on the contrary, the boundary
conditions change from wired to free. Once again, $f^{[\eta,\flat]}$
is a real quantity whose sign depends on the choice of $\sqr{z-b}$
on $\Ocvrc$ locally near $b$. Moreover, if $b_{1}$ and $b_{2}$
are the two endpoints of the same free arc, then
\begin{equation}
f_{\Omega,\cvr}^{[\eta,\flat]}(a,b_{1})=f_{\Omega,\cvr}^{[\eta,\flat]}(a,b_{2})\label{eq: f_flat_b1=00003Df_flat_b2}
\end{equation}
provided that the branches of $(z-b_{1})^{\frac{1}{2}}$ and $(z-b_{2})^{\frac{1}{2}}$
are chosen consistently; see (\ref{eq: bc_free_arc}). 

Let
\[
\mathcal{V}=\mathcal{V}^{\reg}\cup\mathcal{V}^{\flat}:=\pa\Omega\setminus(\fixed\cup\free),
\]
where $\mathcal{\mathcal{V}^{\sharp}}$ denotes the set of single-point
components (punctures) of $\Omega$, recall that we assume that $\cvr$
branches over all of them, and $\mathcal{V}^{\flat}$ is the set of
all endpoints of free arcs. 
\end{defn}

\begin{rem}
\label{rem: Conf_covariance_sharp_flat} From Lemma \ref{lem: Conformal-covariance-observable},
we immediately deduce the conformal covariance of $f^{[\eta,\sharp]}$
and $f^{[\eta,\flat]}$. Namely, in the notation of (\ref{eq: feta-conf-cov})
we have 
\[
f_{\Omega,\cvr}^{[\eta,\any]}(a,w)=f_{\Oother,\cvrother}^{[\etaother,\any]}(\varphi(a),\varphi(w)),\quad\text{where}\quad\any\in\{\reg,\flat\}\ \ \text{and}\ \ w\in\mathcal{V}^{\any},
\]
recall that $\etaother=\eta\cdot\overline{\varphi'(a)}\vphantom{)}^{\frac{1}{2}}$.
In particular, this allows to extend Definition \ref{def: flat} to
the case when $\Omega$ is not necessarily nice by the conformal covariance.
\end{rem}

\begin{rem}
\label{rem: _flat_shar_residues} Let $c^{\reg}:=\lambda$. Definitions
\ref{def: sharp} and \ref{def: flat} are given in terms of limits
as $z\to w$. Alternatively, for $\any\in\{\reg,\flat\}$ and $w\in\mathcal{V}^{\any}$,
we could write 
\[
f^{[\eta,\any]}(a,w)=c^{\any}\,\res_{z=w}(f^{[\eta]}(a,z)(z-w)^{-\frac{1}{2}})=\frac{c^{\any}}{2\pi i}\oint_{(w)}\frac{f^{[\eta]}(a,z)}{(z-w)^{\frac{1}{2}}}dz,
\]
where the integral is taken over a small contour encircling the point
$w$; to make sense of this formula in the case $\any=\flat$, we
assume that $\Omega$ is as in Remark \ref{rem: Schwarz-reflection}
and $f^{[\eta]}(a,z)$ is analytically continued (as a spinor ramified
over $w=b$) to a neighborhood of $w$. These integral representations,
together with Remark \ref{rem: Hartogs}, ensure that the limits (\ref{eq: def-sharp})\textendash (\ref{eq: def-flat})
are actually uniform in $a$ over compact subsets of $\Ocvrc$, or
any bigger domain to which $f$ and $\fdag$ can be (anti-)analytically
continued.
\end{rem}

In the next lemma, we study the properties of $f^{[\eta,\reg]}(a,v)$,
$v\in\mathcal{V}^{\reg}$, and $f^{[\eta,\flat]}(a,b)$, $b\in\mathcal{V}^{\flat},$
as functions of their first argument.
\begin{lem}
\label{lem: f_star_holom} Let $\any\in\{\reg,\flat\}$ and $w\in\mathcal{V}^{\any}$.
Then, there exists a holomorphic spinor $f^{[\any]}(w,\cdot)$ such
that, for any $\eta\in\C$, 
\[
f^{[\any,\eta]}(w,\cdot):=-f^{[\eta,\any]}(\cdot,w)=\re[\bar{\eta}\cdot f^{[\any]}(w,\cdot)]
\]
Moreover, $f^{[\any]}(w,\cdot)$ satisfies the standard boundary conditions
on $\pa\Omega\setminus\{w\}$ and 
\begin{equation}
f^{[\any,\eta]}(w,z)=\frac{\sany}{\sqrt{z-w}}+O(1)\quad\text{as}\ \ z\to w,\label{eq: f_sharp_asymp}
\end{equation}
where $\ssharp:=\lambda$ and $\sflat\in\{1,i\}$ is as in Definition
\ref{def: flat}; we assume that $\Omega$ is nice if $\any=\flat$.
\end{lem}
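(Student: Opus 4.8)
The plan is to build $f^{[\any]}(w,\cdot)$ directly out of the bulk observable $\feta(\cdot,\cdot)$ and to harvest all of its properties from those of $f$ and $\fdag$. First, since $\feta(a,z)$ is real-linear in $\eta$ (Remark~\ref{rem: cont_obs_existence_and_uniqueness}) and the limits/residues defining $f^{[\eta,\any]}(a,w)$ in Definitions~\ref{def: sharp} and \ref{def: flat} are $\R$-linear in $\feta$, the \emph{real} quantity $f^{[\eta,\any]}(a,w)$ is a real-linear function of $\eta\in\C\cong\R^{2}$. Hence there is a unique complex number, which I denote $f^{[\any]}(w,a)$, with $-f^{[\eta,\any]}(a,w)=\re[\bar\eta\,f^{[\any]}(w,a)]$ for all $\eta$ (explicitly $f^{[\any]}(w,a)=-f^{[1,\any]}(a,w)-if^{[i,\any]}(a,w)$). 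This is exactly the real-linear representation in the statement; it remains to identify its analytic nature.

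To get holomorphicity and the spinor property I would plug the decomposition $\feta(a,z)=\tfrac12(\bar\eta f(a,z)+\eta\fdag(a,z))$ of Lemma~\ref{lem: spinor_hol_antyhol} into the integral representation of Remark~\ref{rem: _flat_shar_residues}, obtaining
\[
f^{[\eta,\any]}(a,w)=\tfrac12\bar\eta\,P(a)+\tfrac12\eta\,P^{\star}(a),\qquad P(a):=\frac{\sany}{2\pi i}\oint_{(w)}\frac{f(a,z)}{\sqrt{z-w}}\,dz,
\]
with $P^{\star}$ defined using $\fdag$ in place of $f$. Since $f(a,z)$ is holomorphic and a $\cvr$-spinor in $a$ while $\fdag(a,z)$ is anti-holomorphic in $a$ (Lemma~\ref{lem: spinor_hol_antyhol}), $P$ is a holomorphic $\cvr$-spinor and $P^{\star}$ is anti-holomorphic; the reality of $f^{[\eta,\any]}(a,w)$ forces $P^{\star}=\bar P$, whence $f^{[\eta,\any]}(a,w)=\re[\bar\eta P(a)]$ and $f^{[\any]}(w,a)=-P(a)$. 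The uniform-in-$a$ control in Remark~\ref{rem: _flat_shar_residues} legitimizes differentiating under the fixed contour and passing $a$ to $\pa\Omega$.

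For the standard boundary conditions on $\pa\Omega\setminus\{w\}$, I would transfer the conditions (\ref{eq: bc_fixed-1})--(\ref{eq: bc_free_arc-1}) to the \emph{first} argument of $f$ and $\fdag$ via the anti-symmetry relations (\ref{eq: antysymm_f_fdag}) (as is already implicit in Lemma~\ref{lem: spinor_hol_antyhol}); because the contour $(w)$ stays at positive distance from any $\zeta\in\pa\Omega\setminus\{w\}$, these real-linear conditions survive integration against the fixed kernel $(z-w)^{-1/2}$ and are inherited by $P=-f^{[\any]}(w,\cdot)$. A more robust alternative is to rerun the Cauchy/Green computation of Lemma~\ref{lem: antisym_cont} with the second factor replaced by $f^{[\any]}(w,\cdot)$: the vanishing of the imaginary part of the boundary integral yields both the reciprocity $\re[\bar\eta f^{[\any]}(w,a)]=-f^{[\eta,\any]}(a,w)$ and, read off at boundary points, the standard conditions, with the boundary terms estimated exactly as there.

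The main obstacle is the expansion (\ref{eq: f_sharp_asymp}), i.e.\ pinning the singular coefficient to be exactly $\sany$. Writing the local expansion of $f(a,z)$ at $w$ as $f(a,z)=\kappa\,c(a)(z-w)^{-1/2}+O(1)$ --- with $\kappa=\lambb$ for $\any=\reg$ by (\ref{eq: bc_branchpoint-1}), and $\kappa\in\{-i,1\}$ at the two endpoints of a free arc by (\ref{eq: bc_free_arc-1}) --- the residue computation above gives $P(a)=\sany\kappa\,c(a)=c(a)$, since in every case $\sany\kappa=1$ ($\lamb\lambb=1$ and $i\cdot(-i)=1$); thus $f^{[\any]}(w,a)=-c(a)$. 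It then remains to find the leading behavior of $c(a)$ as $a\to w$. Here I would localize: the joint singularity of $f(a,z)$ as $a,z\to w$ is governed by the diagonal pole $f(a,z)=2(z-a)^{-1}+O(z-a)$ of (\ref{eq: f_fdag_expansion}) together with the spinor ramification at $w$ (for $\any=\flat$, after the Schwarz reflection of Remark~\ref{rem: Schwarz-reflection}), and by conformal covariance (Lemma~\ref{lem: ccov_f_fdag}) it suffices to compute in a flat local coordinate. Anti-symmetry, the normalization $2(z-a)^{-1}$, and ramification force the universal local form
\[
f(a,z)=\frac{1}{z-a}\Bigl(\sqrt{\tfrac{a-w}{z-w}}+\sqrt{\tfrac{z-w}{a-w}}\Bigr)\bigl(1+o(1)\bigr),
\]
exactly as for the explicit half-plane observables of Example~\ref{exa: f_half-plane_explicit}(ii),(iii); letting $z\to w$ yields $\kappa\,c(a)=-(a-w)^{-1/2}+O(1)$, so $c(a)=-\sany(a-w)^{-1/2}+O(1)$ and $f^{[\any]}(w,a)=-c(a)=\sany(a-w)^{-1/2}+O(1)$, as claimed. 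Finally, uniqueness of $f^{[\any]}(w,\cdot)$ is automatic from its defining real-linear representation, and can also be seen from Proposition~\ref{prop: Uniqueness_continuous} applied to the difference of two candidates, which is bounded (hence regular) at $w$ and satisfies the standard conditions on all of $\pa\Omega$.
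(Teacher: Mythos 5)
The first two thirds of your proposal coincide with the paper's argument: the paper also defines $f^{[\any]}(w,z_1)$ through the contour integrals (\ref{eq: def_f_any}) (your $P$, with the same constant bookkeeping $\sany\kappa=1$), gets holomorphicity and the spinor property from Lemma \ref{lem: spinor_hol_antyhol}, and obtains the boundary conditions by combining the $f$-representation with the $\fdag$-representation of the \emph{conjugate}. One caveat on your phrasing ``these real-linear conditions survive integration against the fixed kernel'': the standard boundary conditions couple $f$ and $\fdag$, so neither of your integrals $P$, $P^{\star}$ satisfies any boundary condition by itself; you must pair $P$ with $\bar{P}=P^{\star}$, which is exactly what the paper does by adding the two lines of (\ref{eq: def_f_any}) and using (\ref{eq: antysymm_f_fdag}) to rewrite $f^{[\any]}(w,z_1)$ as the circle average $\frac{r^{1/2}}{2\pi}\int_0^{2\pi}f^{[\eta(\theta)]}(w+re^{i\theta},z_1)\,d\theta$ of observables each of which satisfies the full (real-linear) boundary conditions. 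Up to this packaging, that part of your proof is the paper's.

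The genuine gap is your proof of (\ref{eq: f_sharp_asymp}). The ``universal local form'' is \emph{not} forced by anti-symmetry, the diagonal residue $2(z-a)^{-1}$, and ramification: the natural way to force it would be to subtract the candidate kernel and Taylor-expand the remainder at $(w,w)$, but the remainder lives on the product of double covers ramified at $w$ in each variable and is not jointly analytic at the corner $(w,w)$, so anti-symmetry gives nothing there. Concretely, in Example \ref{exa: f_half-plane_explicit}(ii) with $w=v$, the function
\[
H(z_1,z_2):=(z_1-v)^{\frac12}(z_2-v)^{\frac12}f(z_1,z_2)-\frac{2\sqrt{(z_1-v)(z_2-v)}}{z_2-z_1}
\]
is anti-symmetric, regular across the diagonal, and vanishes at $z_1=z_2\ne v$, yet $H(z_1,v)=-\bigl((v-\bar{v})/(z_1-\bar{v})\bigr)^{\frac12}\to-1\neq0$ as $z_1\to v$; so no Taylor argument at the corner can pin the coefficient. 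In effect your local form, with its uniform $o(1)$, is equivalent to the boundedness of $c(a)+\sany(a-w)^{-\frac12}$, i.e.\ to the statement being proven. (Also, ``by conformal covariance it suffices to compute in a flat local coordinate'' does not localize a global boundary value problem: the error terms carry the global data.) A rigorous collision argument of your type needs a compactness input like Proposition \ref{prop: unelegant_but_useful}, but in the paper's logical order that machinery (Proposition \ref{prop: u_to_v-1} etc.) \emph{uses} (\ref{eq: f_sharp_asymp}), so you would have to rebuild it from scratch. The paper instead closes this step self-containedly by the double contour computation (\ref{eq: exchange_the_contours}): exchanging the contours of integration picks up the diagonal residue $2\sany$ from $f(z_2,z)=2(z-z_2)^{-1}+\dots$; for $k>0$ the inner integral vanishes thanks to the a priori bound $f(z_2,z_1)=O((z_1-w)^{-\frac12})$, and for $k=0$ the anti-symmetry of $f$ under the contour deformation makes the double integral equal to minus itself, so the zeroth Laurent coefficient $a_0$ of $(z_1-w)^{\frac12}f^{[\any]}(w,z_1)$ satisfies $a_0=-a_0+2\sany$, i.e.\ $a_0=\sany$. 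You should replace your local-form step by this (or an equivalent uniform-estimate) argument.
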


\begin{proof}
In the case $\any=\flat,$ we will assume that $\Omega$ is as in
Remark \ref{rem: Schwarz-reflection} and that the corresponding analytic
continuations are made; the general case follows by conformal covariance.
Recall the expansions (\ref{eq: bc_branchpoint-1}), (\ref{eq: bc_free_arc-1})
of $f(z_{1},z_{2})$ , $\fdag(z_{1},z_{2})$ as $z_{2}\to w$ and
let 
\begin{equation}
\begin{aligned}f^{[\any]}(w,z_{1}):= & -\frac{\sany}{2\pi i}\oint_{(w)}\frac{f(z_{1},z_{2})}{(z_{2}-w)^{\frac{1}{2}}}dz_{2},\\
\overline{f^{[\any]}(w,z_{1})}= & -\frac{\sany}{2\pi i}\oint_{(w)}\frac{\fdag(z_{1},z_{2})}{(z_{2}-w)^{\frac{1}{2}}}dz_{2},
\end{aligned}
\label{eq: def_f_any}
\end{equation}
the contour integrals are taken over a small contour encircling $w$
(in other words, we define $f^{[\any]}(w,z_{1})$ to be the constant
$c(z_{1})$ in the asymptotics (\ref{eq: bc_branchpoint-1}), (\ref{eq: bc_free_arc-1}),
multiplied by $-c^{\any}$). Using (\ref{eq: f_eta_decomp}), it is
easy to see that
\[
\re[\bar{\eta}\cdot f^{[\any]}(w,z_{1})]=-\frac{\sany}{2\pi i}\oint_{(w)}\frac{f^{[\eta]}(z_{1},z_{2})}{(z_{2}-w)^{\frac{1}{2}}}dz_{2}=-f^{[\eta,\any]}(z_{1},w),
\]
as required. Moreover, adding the first line of (\ref{eq: def_f_any})
and the conjugation of the second one, and using the anti-symmetry
properties (\ref{eq: antysymm_f_fdag}) of $f$ and $\fdag$, we obtain
the identity 
\[
\begin{aligned}f^{[\any]}(w,z_{1})\  & =\ \frac{1}{2}\biggl[\frac{c^{\any}}{2\pi i}\oint\frac{f(z_{2},z_{1})}{(z_{2}-w)^{\frac{1}{2}}}dz_{2}\ -\ \frac{\bar c^{\any}}{2\pi i}\oint_{(w)}\frac{\fdag(z_{2},z_{1})}{(\bar z_{2}-\bar w)^{\frac{1}{2}}}d\bar z_{2}\biggr]\\
 & =\ \frac{r^{\frac{1}{2}}}{2\pi}\int_{0}^{2\pi}f^{[\eta(\theta)]}(w+re^{i\theta},z_{1})d\theta,\qquad\eta(\theta):=\bar c^{\any}\cdot e^{-\frac{i}{2}\theta},
\end{aligned}
\]
where $z_{2}=w+re^{i\theta}$ and $r>0$ can be chosen so that $|z_{1}-w|>r$
provided that $z_{1}\ne w$. As the functions $f^{[\eta(\theta)]}(w+re^{i\theta},z_{1})$
satisfy the standard boundary conditions, so does $f^{[\any]}(w,z_{1})$
except, possibly, for the asymptotic behavior at the point $z_{1}=w$
itself. 

To prove (\ref{eq: f_sharp_asymp}), let $C_{1}$, $C_{2}$ and $C'_{1}$
be the circles of small radii $r_{1}>r_{2}>r'_{1}>0$ centered at
$w$ and note that $w$ is an isolated singularity of the holomorphic
function $(z_{1}-w)^{\frac{1}{2}}f^{[\any]}(w,z_{1}).$ The $(-k)$-th
coefficient of its Laurent expansion reads as

\begin{equation}
\begin{aligned}\frac{1}{2\pi i}\oint_{C_{1}}\frac{f^{[\any]}(w,z_{1})}{(z_{1}-w)^{\frac{1}{2}-k}}dz_{1} & =\frac{\sany}{(2\pi i)^{2}}\oint_{C_{1}}\oint_{C_{2}}\frac{f(z_{2},z_{1})dz_{2}}{(z_{1}-w)^{\frac{1}{2}-k}(z_{2}-w)^{\frac{1}{2}}}dz_{1}\\
 & =\frac{\sany}{(2\pi i)^{2}}\oint_{C_{2}}\oint_{C'_{1}}\frac{f(z_{2},z_{1})dz_{1}}{(z_{1}-w)^{\frac{1}{2}-k}(z_{2}-w)^{\frac{1}{2}}}dz_{2}+2\sany\delta_{k=0}\,,
\end{aligned}
\label{eq: exchange_the_contours}
\end{equation}
where along the exchange of the contours of integrations we have computed,
using (\ref{eq: f_fdag_expansion}), 
\[
\frac{1}{2\pi i}\oint_{C_{2}}\mathop{\res}\limits _{z=z_{2}}\frac{f(z_{2},z)}{(z-w)^{\frac{1}{2}-k}(z_{2}-w)^{\frac{1}{2}}}dz_{2}=\frac{1}{2\pi i}\oint_{C_{2}}\frac{2dz_{2}}{(z_{2}-w)^{1-k}}=2\delta_{k=0}.
\]
Since we have $f(z_{2},z_{1})=O((z_{1}-w)^{-\frac{1}{2}})$ as $z_{1}\to w$
(e.g., see (\ref{eq: bc_branchpoint-1})), the contour integral 
\[
\oint_{C'_{1}}\frac{f(z_{2},z_{1})}{(z_{1}-w)^{\frac{1}{2}-k}}dz_{1}
\]
vanishes for $k>0$, and hence so does the right-hand side of (\ref{eq: exchange_the_contours}).
Finally, for $k=0$ the anti-symmetry of $f$ and the deformation
of contours $(C_{1},C_{2})$ to $(C_{2},C'_{1})$ imply that
\[
\oint_{C_{1}}\oint_{C_{2}}\frac{f(z_{2},z_{1})dz_{2}}{(z_{1}-w)^{\frac{1}{2}}(z_{2}-w)^{\frac{1}{2}}}dz_{1}=-\oint_{C_{2}}\oint_{C'_{1}}\frac{f(z_{2},z_{1})dz_{1}}{(z_{1}-w)^{\frac{1}{2}}(z_{2}-w)^{\frac{1}{2}}}dz_{2}.
\]
Therefore, (\ref{eq: exchange_the_contours}) with $k=0$ implies
(\ref{eq: f_sharp_asymp}).
\end{proof}
We use Lemma \ref{lem: f_star_holom} to define the functions $f^{[\sharp,\sharp]}(\cdot,\cdot)$,
$f^{[\sharp,\flat]}(\cdot,\cdot)$, $f^{[\flat,\sharp]}(\cdot,\cdot)$
and $f^{[\flat,\flat]}(\cdot,\cdot)$ on the corresponding subsets
of $\mathcal{V}\times\mathcal{V}$; recall that $\text{\ensuremath{\mathcal{V}=\mathcal{V}^{\reg}\cup\mathcal{V}^{\flat}}, wher e }\mathcal{V}^{\reg}$
denotes the set of punctures of $\Omega$ (and that we assume that
the double cover $\cvr$ is ramified over all these points) while
$\mathcal{V}^{\flat}$ stands for the set of endpoints of free boundary
arcs.
\begin{defn}
\label{def: f-any-any} Let $\triangleleft,\anyother\in\{\reg,\flat\}$,
$w_{1}\in\mathcal{V}^{\triangleleft},$ $w_{2}\in\mathcal{V}^{\anyother}$
and $w_{1}\ne w_{2}.$ We define the quantities $f^{[\triangleleft,\anyother]}(w_{1},w_{2})$
as in Definitions \ref{def: sharp}, \ref{def: flat}, replacing $\feta(a,\cdot)$
everywhere by $f^{[\triangleleft]}(w_{1},\cdot)$. 
\end{defn}

\begin{lem}
The functions $f^{[\triangleleft,\anyother]}$, where $\triangleleft$
and $\anyother$ stand for any of $\sharp$ and $\flat,$ are anti-symmetric:
\[
f^{[\triangleleft,\anyother]}(w_{1},w_{2})=-f^{[\anyother,\triangleleft]}(w_{2},w_{1}),\quad w_{1}\in\mathcal{V}^{\triangleleft},\ \ w_{2}\in\mathcal{\mathcal{V}}^{\anyother},\ \ w_{1}\ne w_{2}.
\]
\end{lem}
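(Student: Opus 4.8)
The plan is to represent both $f^{[\triangleleft,\anyother]}(w_1,w_2)$ and $f^{[\anyother,\triangleleft]}(w_2,w_1)$ as one and the same iterated contour integral of the holomorphic-in-both-variables spinor $f(\cdot,\cdot)$ from Lemma \ref{lem: spinor_hol_antyhol}, and then to conclude from its anti-symmetry $f(z_1,z_2)=-f(z_2,z_1)$ in (\ref{eq: antysymm_f_fdag}) together with Fubini's theorem. This route avoids the boundary integral used in Lemma \ref{lem: antisym_cont} (and hence the nuisance of the $2\pi i$ versus $\pi i$ weights at punctures and at free-arc endpoints), since it only ever uses local coefficient extraction at the two singular points. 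By Definition \ref{def: f-any-any} together with the integral form of the $\any$-extraction recorded in Remark \ref{rem: _flat_shar_residues}, one has
\[
f^{[\triangleleft,\anyother]}(w_1,w_2)=\frac{c^{\anyother}}{2\pi i}\oint_{(w_2)}\frac{f^{[\triangleleft]}(w_1,z_2)}{(z_2-w_2)^{\frac{1}{2}}}\,dz_2,
\]
while the holomorphic spinor $f^{[\triangleleft]}(w_1,\cdot)$ is itself given by the representation (\ref{eq: def_f_any}), namely $f^{[\triangleleft]}(w_1,z_2)=-\tfrac{c^{\triangleleft}}{2\pi i}\oint_{(w_1)}(z_1-w_1)^{-\frac{1}{2}}f(z_2,z_1)\,dz_1$. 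Substituting the latter under the outer integral, I would obtain
\[
f^{[\triangleleft,\anyother]}(w_1,w_2)=-\frac{c^{\triangleleft}c^{\anyother}}{(2\pi i)^2}\oint_{(w_2)}\oint_{(w_1)}\frac{f(z_2,z_1)}{(z_1-w_1)^{\frac{1}{2}}(z_2-w_2)^{\frac{1}{2}}}\,dz_1\,dz_2 .
\]

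The second step is to run the same computation with the roles of $(w_1,\triangleleft)$ and $(w_2,\anyother)$ interchanged, which gives
\[
f^{[\anyother,\triangleleft]}(w_2,w_1)=-\frac{c^{\triangleleft}c^{\anyother}}{(2\pi i)^2}\oint_{(w_1)}\oint_{(w_2)}\frac{f(z_1,z_2)}{(z_1-w_1)^{\frac{1}{2}}(z_2-w_2)^{\frac{1}{2}}}\,dz_2\,dz_1 ,
\]
where now the kernel is $f(z_1,z_2)$. Applying the anti-symmetry $f(z_1,z_2)=-f(z_2,z_1)$ converts this kernel into the kernel of the previous display at the cost of one overall sign; since $w_1\neq w_2$ the loops $(w_1)$ and $(w_2)$ may be taken disjoint, so Fubini's theorem allows me to interchange the order of the two integrations without changing the value. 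Identifying the resulting double integrals yields $f^{[\anyother,\triangleleft]}(w_2,w_1)=-f^{[\triangleleft,\anyother]}(w_1,w_2)$, which is the asserted anti-symmetry. Observe that the prefactors enter only through the symmetric product $c^{\triangleleft}c^{\anyother}$, so the sign bookkeeping is uniform over all four choices $\triangleleft,\anyother\in\{\reg,\flat\}$ and no case distinction is needed there.

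I expect the main obstacle to lie in justifying the contour manipulations rather than in the algebra of signs. First, whenever $w\in\mathcal{V}^{\flat}$ the representations above make sense only after the Schwarz-reflection continuation of $f$ across the free arc (Remark \ref{rem: Schwarz-reflection}, where such a continuation is seen to produce a spinor ramified over $w$); I would therefore fix such continuations near $w_1$ and near $w_2$ and verify that the full small loops $(w_1),(w_2)$ stay inside the respective neighborhoods. The degenerate configuration in which $w_1$ and $w_2$ are the two endpoints of a single free arc must be treated by shrinking the loops so that they remain disjoint, the compatibility of the branch choices being supplied by (\ref{eq: bc_free_arc}). Second, to invoke Fubini I need joint analyticity of $(z_1,z_2)\mapsto f(z_2,z_1)$ on the product of the two loops; this is exactly the Hartogs-type regularity noted in Remark \ref{rem: Hartogs}, and it also legitimizes inserting the representation of $f^{[\triangleleft]}(w_1,z_2)$ under the outer $z_2$-integral, because for each $z_2$ on $(w_2)$ the inner loop $(w_1)$ can be chosen small enough not to enclose $z_2$.
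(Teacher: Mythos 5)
Your proposal is correct and follows essentially the same route as the paper: the paper's proof is precisely the double contour-integral representation (its equation (\ref{eq: f_any_any_anti}), obtained from Remark \ref{rem: _flat_shar_residues} and (\ref{eq: def_f_any})) followed by the anti-symmetry of $f(z_{1},z_{2})$ and exchange of the two disjoint loops. Your additional care about disjointness of the contours, the Schwarz-reflection continuations near $\flat$-points, and the Hartogs regularity justifying Fubini merely makes explicit what the paper leaves implicit.
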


\begin{proof}
Using Remark \ref{rem: _flat_shar_residues} and (\ref{eq: def_f_any})
we can write 
\begin{equation}
f^{[\triangleleft,\anyother]}(w_{1},w_{2})=\frac{c^{\anyother}}{2\pi i}\oint_{(w_{2})}\frac{f^{[\triangleleft]}(w_{1},z_{2})}{(z_{2}-w_{2})^{\frac{1}{2}}}dz_{2}=\frac{c^{\triangleleft}c^{\anyother}}{2\pi i}\oint_{w_{2}}\oint_{w_{1}}\frac{f(z_{1},z_{2})dz_{1}}{(z_{1}-w_{1})^{\frac{1}{2}}(z_{2}-w_{2})^{\frac{1}{2}}}dz_{2},\label{eq: f_any_any_anti}
\end{equation}
and the result follows from the anti-symmetry of $f(z_{1},z_{2})$. 
\end{proof}
In the next lemma we collect the conformal covariance properties of
$f^{[\any]}$ and $f^{[\triangleleft,\anyother]}$.
\begin{lem}
\label{lem: ccov_f_any}Let $\varphi:\Omega\to\Oother$ be a conformal
map and $\cvrother$ isomorphic to $\cvr$. Then, 
\begin{eqnarray}
f_{\Omega,\cvr}^{[\any]}(w,z) & = & f_{\Oother,\cvrother}^{[\any]}(\varphi(w),\varphi(z))\cdot\varphi'(z)^{\frac{1}{2}},\label{eq: ccov_sharp}\\
f_{\Omega,\cvr}^{[\triangleleft,\anyother]}(w_{1},w_{2}) & = & f_{\Oother,\cvrother}^{[\triangleleft,\anyother]}(\varphi(w_{1}),\varphi(w_{2})),\label{eq: ccov_sharp_sharp}
\end{eqnarray}
where each of the superscripts $\any$, $\triangleleft$ and $\anyother$
stands for either $\sharp$ or $\flat$. 
\end{lem}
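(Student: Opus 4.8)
The plan is to derive both identities from the conformal covariance results already established for the non-singular observables, namely the covariance of $\feta$ (Lemma~\ref{lem: Conformal-covariance-observable}) and of $f^{[\eta,\any]}$ (Remark~\ref{rem: Conf_covariance_sharp_flat}), combined with the elementary fact that a complex number $g$ is determined uniquely by the real-linear functional $\eta\mapsto\re[\bar\eta g]$. I would first prove \eqref{eq: ccov_sharp}, and then bootstrap to \eqref{eq: ccov_sharp_sharp} by taking the appropriate singular limit at $w_{2}$.

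To establish \eqref{eq: ccov_sharp}, recall from Lemma~\ref{lem: f_star_holom} that $f_{\Omega,\cvr}^{[\any]}(w,z)$ is characterized by $\re[\bar\eta\cdot f_{\Omega,\cvr}^{[\any]}(w,z)]=-f_{\Omega,\cvr}^{[\eta,\any]}(z,w)$ for every $\eta\in\C$. Applying the covariance of $f^{[\eta,\any]}$ from Remark~\ref{rem: Conf_covariance_sharp_flat} with first argument $a=z$, so that $\etaother=\eta\cdot\overline{\varphi'(z)}\vphantom{)}^{\frac{1}{2}}$, I would obtain $-f_{\Omega,\cvr}^{[\eta,\any]}(z,w)=-f_{\Oother,\cvrother}^{[\etaother,\any]}(\varphi(z),\varphi(w))=\re[\bar\etaother\cdot f_{\Oother,\cvrother}^{[\any]}(\varphi(w),\varphi(z))]$. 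Since $\bar\etaother=\bar\eta\cdot\varphi'(z)^{\frac{1}{2}}$ (with consistently chosen branches), the two expressions read $\re[\bar\eta\cdot f_{\Omega,\cvr}^{[\any]}(w,z)]=\re[\bar\eta\cdot\varphi'(z)^{\frac{1}{2}}f_{\Oother,\cvrother}^{[\any]}(\varphi(w),\varphi(z))]$ for all $\eta$, whence the uniqueness of the representing complex number gives \eqref{eq: ccov_sharp}.

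For \eqref{eq: ccov_sharp_sharp}, I would use that, by Definition~\ref{def: f-any-any}, $f_{\Omega,\cvr}^{[\triangleleft,\anyother]}(w_{1},w_{2})=\lim_{z\to w_{2}}c^{\anyother}(z-w_{2})^{\frac{1}{2}}f_{\Omega,\cvr}^{[\triangleleft]}(w_{1},z)$, and similarly on $\Oother$. Substituting the already proven \eqref{eq: ccov_sharp} turns the expression under the limit into $c^{\anyother}(z-w_{2})^{\frac{1}{2}}\varphi'(z)^{\frac{1}{2}}f_{\Oother,\cvrother}^{[\triangleleft]}(\varphi(w_{1}),\varphi(z))$. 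The Taylor expansion $\varphi(z)-\varphi(w_{2})=\varphi'(w_{2})(z-w_{2})(1+O(z-w_{2}))$ yields $(z-w_{2})^{\frac{1}{2}}\varphi'(z)^{\frac{1}{2}}=(\varphi(z)-\varphi(w_{2}))^{\frac{1}{2}}(1+O(z-w_{2}))$; since $f_{\Oother,\cvrother}^{[\triangleleft]}(\varphi(w_{1}),\cdot)$ blows up only like $(\varphi(z)-\varphi(w_{2}))^{-\frac{1}{2}}$ at $\varphi(w_{2})$ by \eqref{eq: f_sharp_asymp}, the product $(\varphi(z)-\varphi(w_{2}))^{\frac{1}{2}}f_{\Oother,\cvrother}^{[\triangleleft]}(\varphi(w_{1}),\varphi(z))$ stays bounded and the $O(z-w_{2})$ correction drops out in the limit. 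Passing to $z'=\varphi(z)\to\varphi(w_{2})$ then identifies the limit with $f_{\Oother,\cvrother}^{[\triangleleft,\anyother]}(\varphi(w_{1}),\varphi(w_{2}))$, proving \eqref{eq: ccov_sharp_sharp}.

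The computations are routine, so the only real points of care -- and the main (minor) obstacle -- are bookkeeping ones. First, the square roots $\varphi'(\cdot)^{\frac{1}{2}}$, $\overline{\varphi'(\cdot)}\vphantom{)}^{\frac{1}{2}}$ and $(z-w)^{\frac{1}{2}}$ must be taken with mutually consistent branches on $\Ocvrc$ and $\Oother_{\cvrother}$; as usual these objects are only defined up to sign, and the stated equalities hold once branches are matched through $\varphi$. Second, I must verify that $c^{\anyother}$ is the same on both domains: for $\anyother=\reg$ it is the universal constant $\lambda$, while for $\anyother=\flat$ one has $c^{\flat}\in\{1,i\}$ according to whether the boundary condition switches from free to wired or from wired to free at the endpoint in the direction of $\tau$ (Definition~\ref{def: flat}); since $\varphi$ preserves the orientation of $\pa\Omega$ and sends free (resp. wired) arcs to free (resp. wired) arcs, this type is preserved and $c^{\flat}$ matches. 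As an alternative, \eqref{eq: ccov_sharp_sharp} can be obtained directly from the double-contour representation \eqref{eq: f_any_any_anti} together with the covariance \eqref{eq: ccov_f} of $f$, changing variables $z_{i}\mapsto\varphi(z_{i})$ in both integrals.
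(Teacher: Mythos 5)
Your proof is correct, but it takes a genuinely different route from the paper's. The paper disposes of both identities in one line: it substitutes the conformal covariance of $f$ and $\fdag$ (Lemma \ref{lem: ccov_f_fdag}) into the contour-integral representations (\ref{eq: def_f_any}) and (\ref{eq: f_any_any_anti}) and changes variables in the contours. You instead prove (\ref{eq: ccov_sharp}) by combining the duality $\re[\bar\eta\,f^{[\any]}(w,\cdot)]=-f^{[\eta,\any]}(\cdot,w)$ from Lemma \ref{lem: f_star_holom} with the covariance of $f^{[\eta,\any]}$ from Remark \ref{rem: Conf_covariance_sharp_flat} and the elementary fact that the real-linear functional $\eta\mapsto\re[\bar\eta g]$ determines $g$; and you obtain (\ref{eq: ccov_sharp_sharp}) from the limit definition together with the Taylor expansion $(z-w_{2})\varphi'(z)=(\varphi(z)-\varphi(w_{2}))(1+O(z-w_{2}))$. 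Both routes are routine; yours trades the contour bookkeeping for branch/orientation bookkeeping, and your verification that $\sflat$ is the same on both sides (because $\varphi$ preserves the boundary orientation and maps free/wired arcs to free/wired arcs) is indeed the one non-automatic point there. Your closing alternative --- the double-contour formula (\ref{eq: f_any_any_anti}) plus (\ref{eq: ccov_f}) --- is precisely the paper's argument. Two minor corrections to your write-up: the bound $f_{\Oother,\cvrother}^{[\triangleleft]}(\varphi(w_{1}),z')=O((z'-\varphi(w_{2}))^{-\frac{1}{2}})$ near $\varphi(w_{2})$ does not come from (\ref{eq: f_sharp_asymp}), which describes the singularity at the \emph{first} argument, but from the standard boundary conditions (\ref{eq: bc_branchpoint}) and (\ref{eq: bc_free_arc}) that $f^{[\triangleleft]}(\varphi(w_{1}),\cdot)$ satisfies on $\pa\Oother\setminus\{\varphi(w_{1})\}$ by Lemma \ref{lem: f_star_holom}; and when $\any=\flat$ with $\Omega$ not nice, both the quantity and its covariance are taken as \emph{definitions} (as noted at the end of Remark \ref{rem: Conf_covariance_sharp_flat}), so in that case your argument is a consistency check rather than a substantive step --- which matches the paper's conventions.
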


\begin{proof}
Both relations readily follow from Lemma \ref{lem: ccov_f_fdag} using
(\ref{eq: def_f_any}) and (\ref{eq: f_any_any_anti}). 
\end{proof}
We are now almost ready to formulate an extension of Theorem \ref{thm: conv_bulk_bulk}
from the bulk of $\Omega$ to special boundary points $w\in\mathcal{V}$.
The last definition we need is the following.
\begin{defn}
\label{def: admissible-corners}We say that $z^{\delta}\in\Cgr(\Od_{\cvr})$
is an \emph{admissible sequence} of corners of if one of the following
holds: 
\end{defn}

\begin{itemize}
\item (type $\eta$): $z^{\delta}\to z\in\Ocvrc$ as $\delta\to0$; 
\item (type $\sharp$): for each $\delta$, $(z^{\delta})^{\circ}\in\{\vv\}$
and $z^{\delta}\to v$ as $\delta\to0$, where $\{v\}$ is a puncture
(i.e., a single-point boundary component) of $\Omega$;
\item (type $\flat$): for each $\delta$, $z^{\delta}$ is a corner separating
a free boundary arc from a wired one, meaning that $(z^{\delta})^{\circ}\in\fixed^{\circ}$
and $(z^{\delta})^{\bullet}\in\free^{\bullet}$, and $z^{\delta}\to b$,
where $b$ is an endpoint of a free arc of $\Omega_{\cvr}$. 
\end{itemize}
\begin{thm}
\label{thm: Convergence_singular} In the setup of Theorem \ref{thm: conv_bulk_bulk},
assume that two admissible sequences $z_{1}^{\delta}$ and $z_{2}^{\delta}$
of corners of $\Ocvr$ approximate two distinct points $z_{1},z_{2}\in\Omega_{\cvr}\cup\mathcal{V}$
as $\delta\to0$. Then, the following asymptotic holds: 
\begin{equation}
\delta^{-\Delta_{1}-\Delta_{2}}\cdot F_{\Od,\cvr}(z_{1}^{\delta},z_{2}^{\delta})\ =\ C_{1}\cdot C_{2}\cdot\eta_{z_{1}^{\delta}}\eta_{z_{2}^{\delta}}f_{\Omega,\cvr}^{[\triangleleft,\triangleright]}(\zz)+o(1),\label{eq: thm_Convergence _singular}
\end{equation}
where each of the indices $\triangleleft,\triangleright$ stands for
either $\eta_{z_{p}^{\delta}}$ or $\sharp$ or $\flat$, depending
on the type of the corresponding sequence $z_{p}^{\delta}$; the exponents
$\Delta_{p}$, $p=1,2$, are given by 
\[
\Delta_{p}=\begin{cases}
\frac{1}{2}, & \text{if }z_{p}^{\delta}\text{ is of type \ensuremath{\eta},}\\
0, & \text{\text{if }\ensuremath{z_{p}^{\delta}\text{ is of type }\reg}\ or \ensuremath{\flat};}
\end{cases}
\]
and the constants $C_{p}$, $p=1,2$, are given by 
\[
C_{p}=\begin{cases}
\Cpsi, & \text{if }z_{p}^{\delta}\text{ is of type \ensuremath{\eta},}\\
1, & \text{if }z_{p}^{\delta}\text{ is of type \ensuremath{\reg}\ or \ensuremath{\flat}}.
\end{cases}
\]
\end{thm}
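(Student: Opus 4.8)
The plan is to bootstrap from the bulk--bulk convergence of Theorem~\ref{thm: conv_bulk_bulk} by means of a \emph{discrete singularity analysis}, carried out one argument at a time and organized around the contour--integral representations of Lemma~\ref{lem: f_star_holom} and Remark~\ref{rem: _flat_shar_residues}. When both $z_1^{\delta}$ and $z_2^{\delta}$ are of type $\eta$ there is nothing to do: the statement is precisely Theorem~\ref{thm: conv_bulk_bulk}. The substance is therefore to replace a type-$\eta$ argument by one of type $\sharp$ or $\flat$, and the guiding observation is that, by Remark~\ref{rem: _flat_shar_residues} and~(\ref{eq: def_f_any}), the quantities $f^{[\eta,\any]}_{\Omega,\cvr}(a,w)$ are obtained from the bulk observable $f^{[\eta]}_{\Omega,\cvr}(a,\cdot)$ by integrating against $(z-w)^{-\frac{1}{2}}$ over a contour that stays at a \emph{definite} distance from $w$, i.e.\ exactly where uniform bulk convergence is available.

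First I would treat a single singular argument, say $z_1^{\delta}$ of type $\eta$ approximating a bulk point $z_1$ and $z_2^{\delta}$ of type $\sharp$ (resp.\ $\flat$) approximating a puncture $v$ (resp.\ a free-arc endpoint $b$). Fixing $z_1^{\delta}$, the spinor $z\mapsto\bar{\eta}_{z_1^{\delta}}F_{\Od,\cvr}(z_1^{\delta},z)$ is s-holomorphic in $\Od$ away from $z_1^{\delta}$ and, after the $\delta^{-1}$ rescaling, converges by Theorem~\ref{thm: conv_bulk_bulk} uniformly on compact subsets of a punctured neighborhood of $w$ to (a multiple of) $f^{[\eta_{z_1^{\delta}}]}_{\Omega,\cvr}(z_1,\cdot)$, whose expansion near $w$ carries a prescribed $(z-w)^{-\frac{1}{2}}$ term governed by $f^{[\eta_{z_1^{\delta}},\any]}(z_1,w)$ through~(\ref{eq: bc_branchpoint}),~(\ref{eq: bc_free_arc}). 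The core step is then a \emph{discrete residue extraction}: using the Cauchy-type formula of Lemma~\ref{lem: Cauchy_spinors} paired against the explicitly constructed full-plane discrete analog of $(z-v)^{-\frac{1}{2}}$, one expresses the value $F_{\Od,\cvr}(z_1^{\delta},z_2^{\delta})$ at the corner next to $w$ as a discrete contour sum of $F_{\Od,\cvr}(z_1^{\delta},\cdot)$ over corners at definite distance from $w$. Passing to the limit in this sum (legitimate by uniform convergence on the contour) reproduces the continuous integral of Remark~\ref{rem: _flat_shar_residues}, hence the coefficient $f^{[\eta_{z_1^{\delta}},\any]}(z_1,w)$; the leading coefficient in the asymptotics of the full-plane discrete function is what turns the bulk constant $\Cpsi^{2}$ into $C_1C_2=\Cpsi$ and produces the exponent $\Delta_2=0$. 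For type $\flat$ I would first apply the Schwarz reflection of Remark~\ref{rem: Schwarz-reflection} to view the observable as a spinor ramified over $b$ in a full neighborhood, so that the same extraction applies; the discrete counterpart of this reflection across the (a priori rough) free arc is supplied by Lemma~\ref{lem: Clements_clever_lemma}.

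Next I would handle the case where both arguments are singular, by iterating the single-variable result. By the previous paragraph together with the anti-symmetry of Lemma~\ref{lem: anti-symmetry} (which lets me swap the roles of the two arguments and of $f^{[\triangleleft,\anyother]}$), the rescaled spinor $z_2\mapsto\delta^{-\Delta_1}C_1^{-1}\eta_{z_1^{\delta}}^{-1}F_{\Od,\cvr}(z_1^{\delta},z_2)$ converges, uniformly on compact subsets away from $w_2$, to the holomorphic spinor $f^{[\triangleleft]}_{\Omega,\cvr}(w_1,\cdot)$ of Lemma~\ref{lem: f_star_holom}, which again has a controlled $(z_2-w_2)^{-\frac{1}{2}}$ singularity. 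Since this rescaled object remains s-holomorphic in $z_2$, I would apply the discrete residue extraction a second time, now at $w_2$, obtaining the coefficient $f^{[\triangleleft,\anyother]}(w_1,w_2)$; matching against the double contour integral~(\ref{eq: f_any_any_anti}) of Definition~\ref{def: f-any-any} yields the claimed limit, the two extractions contributing the two factors in $C_1C_2$ and the two summands in $\Delta_1+\Delta_2$.

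The main obstacle is the discrete residue extraction itself, namely the rigorous assertion that the value of a convergent s-holomorphic spinor at the corner adjacent to a branching face (or free-arc endpoint) is controlled by the $(z-w)^{-\frac{1}{2}}$-coefficient of its continuous limit, \emph{with the exact multiplicative constant}. This is exactly where the two explicit full-plane discrete analytic functions (the analogs of $z^{-1}$ and $z^{-\frac{1}{2}}$) and the Cauchy pairing of Lemma~\ref{lem: Cauchy_spinors} are indispensable: the constant $\Cpsi$ and the exponents $\Delta_p$ are read off from the leading terms in the asymptotic expansions of these functions near their singularities. A secondary difficulty is uniformity: every estimate must hold uniformly as $z_1^{\delta},z_2^{\delta}$ and $\{\vv\}\cap\Omega$ range over the admissible configurations, which forces the convergence on the auxiliary contours to be uniform as well; this is inherited from the uniformity already built into Theorem~\ref{thm: conv_bulk_bulk}.
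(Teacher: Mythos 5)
Your treatment of the type-$\sharp$ case and your iteration scheme for two singular arguments coincide with the paper's proof: it likewise pairs $F_{\Od,\cvr}(z_{1}^{\delta},\cdot)$ against the discrete analog of $(z-v)^{-\frac{1}{2}}$ via the Cauchy formula of Lemma \ref{lem: Cauchy_spinors}, passes to the limit on a macroscopic contour using Theorem \ref{thm: conv_bulk_bulk}, Remark \ref{rem: conv_f_conv_H} and the asymptotics (\ref{eq: asymp_zm12}), and then bootstraps exactly as you propose, by observing that $\bar{\eta}_{2}\delta^{-\frac{1}{2}}F(\cdot,z_{2}^{\delta})$ converges in the bulk to $C_{2}\cdot\proj{\eta_{1}}{-f^{[\any]}(z_{2},\cdot)}$ and re-running the extraction at the second singular point.

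The gap is your type-$\flat$ case. You propose to Schwarz-reflect across the free arc so that the same residue extraction applies, asserting that the discrete counterpart of this reflection is supplied by Lemma \ref{lem: Clements_clever_lemma}. That lemma does nothing of the kind: it concerns admissible sequences of type $\beta$ (corners on \emph{wired} arcs), it provides no analytic continuation of the discrete observable beyond $\pa\Od$, and its conclusion involves an unknown local normalizing factor $\normLoc{z}$ depending on the microscopic boundary geometry, which only cancels in ratios of observables. Your extraction at a $\flat$-point $b$ would require the discrete spinor to be defined and s-holomorphic on a full discrete annulus around $b$; but $b\in\pa\Omega$, the corner $z_{2}^{\delta}$ (with $(z_{2}^{\delta})^{\circ}\in\fixed^{\circ}$ and $(z_{2}^{\delta})^{\bullet}\in\free^{\bullet}$) sits on the boundary, and there is no discrete contour encircling it inside $\Cgr(\Od)$ nor any discrete reflection principle across a rough free arc anywhere in the paper, so Lemma \ref{lem: Cauchy_spinors} cannot be invoked. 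Moreover, any normalization-dependent substitute could not yield the sharp constant $C_{2}=1$ that the theorem asserts.

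The paper closes this case by a completely different mechanism, which you would need to adopt. The function $H^{\delta}=\intFsquare{F^{2}}{\cdot}$ is locally constant on free and wired arcs (Lemma \ref{lem: H_bc}); its jump at $z_{2}^{\delta}$ equals $\delta^{-1}|F(z_{1}^{\delta},z_{2}^{\delta})|^{2}$ up to an explicit factor by (\ref{eq: def_H}), and this jump converges to the jump of $h=\Cpsi^{4}\im\int(f^{[\eta_{1}]})^{2}$ at $z_{2}$, namely $\pi\Cpsi^{4}f^{[\eta_{1},\flat]}(z_{1},z_{2})^{2}$. This identifies $|F(z_{1}^{\delta},z_{2}^{\delta})|$ with the claimed limit up to sign, and the sign is then fixed by the same argument used to verify the sign condition in (\ref{eq: bc_free_arc}) in the proof of Theorem \ref{thm: conv_bulk_bulk}. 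Note this jump argument is also robust to boundary roughness precisely because it only uses values of $H^{\delta}$ inside $\Od$ and on $\pa\Od$, whereas a residue extraction at $b$ intrinsically needs data on both sides of the boundary.
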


The proof of Theorem \ref{thm: Convergence_singular} is postponed
until Section \ref{sec: proofs_of_convergence_theorems}.
\begin{rem}
(Signs). \label{rem: Signs} Recall that to define $f^{[\eta,\sharp]}(z_{1},v)$,
$v=z_{2}\in\mathcal{\mathcal{V}^{\reg}}$, it was necessary to choose
the sign of $(z-v)^{\frac{1}{2}}$ in a neighborhood of $v$ on $\Ocvrc$.
Also, $\eta_{z_{2}^{\delta}}=\lambb(\delta/2){}^{\frac{1}{2}}(z_{2}^{\delta}-(z_{2}^{\delta})^{\circ})^{-\frac{1}{2}}$
is defined up to a choice of the sign in the square root. In Theorem
\ref{thm: Convergence_singular}, these signs are assumed to be chosen
in the same way provided that $(z_{2}^{\delta})^{\circ}$ is identified
with $v$. Similarly, the definition of $f^{[\eta,\flat]}(z_{1},b)$
for $b=z_{2}\in\mathcal{V}^{\flat}$ involved a choice of the sign
of $(z-b)^{\frac{1}{2}}$ in a neighborhood of $b$ on $\Ocvrc$.
In Theorem \ref{thm: Convergence_singular}, this sign is assumed
to be the same as the sign of the square root in $\eta_{z_{2}^{\delta}}=\lamb(\delta/2)^{\frac{1}{2}}(z_{2}^{\delta}-(z_{2}^{\delta})^{\bullet})^{-\frac{1}{2}}$
provided that $(z_{2}^{\delta})^{\bullet}$ is identified with $b$
and $z_{2}^{\delta}$ is viewed as a point of $\Ocvr$. The same discussion
applies if $\triangleleft\in\{\reg,\flat\}$ .
\end{rem}

In the proof of Theorem \ref{thm: Intro_3} via Lemma \ref{lem: corr_to_obs_for_Thm_3},
we will also have to allow $z_{1}$ and/or $z_{2}$ to be on the wired
part of the boundary of $\Omega$. Without additional assumptions
on the regularity of $\pa\Omega$, it is not possible to handle the
asymptotics of $F(z_{1},z_{2})$ directly in this case; if fact, even
the right-hand side of (\ref{eq: thm_Convergence _singular}) may
not be well defined for $z_{1,2}\in\pa\Omega$ when $\pa\Omega$ is
rough. However, we are able to treat these asymptotics up to an unknown
\emph{local} normalizing factor. To formulate the corresponding result,
we extend Definition \ref{def: admissible-corners} of admissible
sequences $z^{\delta}$ by adding to it the forth type:
\begin{itemize}
\item (type $\beta$): for each $\delta$, $z^{\delta}$ is a corner on
a wired boundary arc of $\Od_{\cvr}$ and $z^{\delta}\to z$, where
$z\in\fixed\subset\pa\Omega_{\cvr}$, as $\delta\to0$.
\end{itemize}
For the lemma below, we \emph{fix} a conformal isomorphism $\varphi$
from $\Omega$ to a nice domain $\Lambda$ and let $\cvr_{\Lambda}$
be the pushforward of $\cvr$.

\begin{lem}
\label{lem: Clements_clever_lemma} Assume that $z_{2}\in\fixed\subset\pa\Omega_{\cvr},$
and let $z_{2}^{\delta}\to z_{2}$ be an admissible sequence of type
$\beta$. There exists a normalizing sequence $\normLoc{z_{2}}\in\C\setminus\{0\}$
depending only on $(\Lambda,\varphi)$, on the sequence $z_{2}^{\delta}$,
and on the boundary $\pa\Od$ locally near $z_{2}$ (but not on $\cvr$
or other marked points) such that the following extension of the asymptotics
(\ref{eq: thm_Convergence _singular}) to admissible sequences $z_{2}^{\delta}\to z_{2}$
of type $\beta$ holds:

\begin{equation}
\delta^{-\Delta_{1}}\cdot\normLoc{z_{2}}\cdot F_{\Od,\cvr}(z_{1}^{\delta},z_{2}^{\delta})\ =\ C_{1}\cdot\eta_{z_{1}^{\delta}}\eta_{z_{2}^{\delta}}\lim_{w_{2}\to z_{2}}\left(\varphi'(w_{2})^{-\frac{1}{2}}f_{\Omega,\cvr}^{[\any]}(z_{1},w_{2})\right)+o(1),\label{eq: conv_bdry_rough}
\end{equation}
as $\delta\to0$, where $C_{1},\Delta_{1},$ and $\any$ are as in
Theorem \ref{thm: Convergence_singular} depending on the type of
$z_{1}^{\delta}$. Moreover, if $z_{1}^{\delta}\to z_{1}$, $z_{2}^{\delta}\to z_{2}$
are both admissible sequences of type $\beta$ for $z_{1}\ne z_{2}$,
then
\begin{equation}
\normLoc{z_{1}}\normLoc{z_{2}}\cdot F_{\Od,\cvr}(z_{1}^{\delta},z_{2}^{\delta})\ =\ \eta_{z_{1}^{\delta}}\eta_{z_{2}^{\delta}}\lim_{w_{1,2}\to z_{1,2}}\left(\varphi'(w_{1})^{-\frac{1}{2}}\varphi'(w_{2})^{-\frac{1}{2}}f_{\Omega,\cvr}(w_{1},w_{2})\right)+o(1)\label{eq: conv_bdry_bdry}
\end{equation}
as $\delta\to0$. 
\end{lem}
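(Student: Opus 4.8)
The plan is to reduce both assertions to a single \emph{boundary comparison} statement for s-holomorphic spinors that is insensitive to the roughness of $\partial\Omega$, the unknown normalization $\normLoc{z_2}$ serving precisely to absorb the local discrete-to-continuum mismatch at the wired boundary; thus (\ref{eq: conv_bdry_rough}) extends Theorem \ref{thm: Convergence_singular} to the new type $\beta$. First I would record that, by the conformal covariance of Lemma \ref{lem: ccov_f_fdag} and Lemma \ref{lem: ccov_f_any}, the right-hand sides of (\ref{eq: conv_bdry_rough}) and (\ref{eq: conv_bdry_bdry}) are nothing but boundary values of the corresponding observables transported to the nice domain $\Lambda$: e.g. $\varphi'(w_2)^{-\frac12}f_{\Omega,\cvr}^{[\any]}(z_1,w_2)=f_{\Lambda,\cvr_{\Lambda}}^{[\any]}(\varphi(z_1),\varphi(w_2))$, which has a finite limit as $\varphi(w_2)\to\varphi(z_2)\in\partial\Lambda$ since $\Lambda$ is nice. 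Hence all the continuum quantities are well defined, and the entire difficulty is discrete: to control $F_{\Od,\cvr}(z_1^{\delta},z_2^{\delta})$ when $z_2^{\delta}$ sits on a rough wired arc, where the individual observable has no intrinsic normalization.

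Next I would fix, once and for all, a \emph{reference} s-holomorphic function $\ProbeF$ in a fixed discrete half-neighborhood of $z_2$ satisfying wired boundary conditions, chosen so that its continuum counterpart $\ProbeFc$, transported through $\varphi$ to $\Lambda$, is nonzero at $\varphi(z_2)$; the precise choice enters only through $(\Lambda,\varphi)$ and $\partial\Od$ near $z_2$, and not through $\cvr$ or the marked points. I would then \emph{define} $\normLoc{z_2}$ by the requirement that $\normLoc{z_2}\cdot\ProbeF(z_2^{\delta})$ tends to the continuum boundary value of $\ProbeFc$ at $\varphi(z_2)$, which makes $\normLoc{z_2}$ depend only on the advertised data. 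With this in hand, (\ref{eq: conv_bdry_rough}) is equivalent to the convergence of the \emph{ratio}
\[
\frac{F_{\Od,\cvr}(z_1^{\delta},z_2^{\delta})}{\ProbeF(z_2^{\delta})}\ \longrightarrow\ \frac{f_{\Lambda,\cvr_{\Lambda}}^{[\any]}(\varphi(z_1),\varphi(z_2))}{\ProbeFc(\varphi(z_2))},
\]
in which the unknown factor $\normLoc{z_2}$ has cancelled.

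The core of the argument is therefore this ratio convergence, which I would establish by a discrete boundary Harnack comparison. Fixing $z_1$ in the bulk, the spinor $\bar\eta_{z_1^{\delta}}F_{\Od,\cvr}(z_1^{\delta},\cdot)$ and the reference $\ProbeF$ are both s-holomorphic near $z_2$ and obey the \emph{same} wired boundary conditions; their harmonic integrals $h=\im\int F^2\,dz$ then carry Dirichlet data along the wired arc, with the sign constraint $\pa_{i\tau}h\geq0$ of Proposition \ref{prop: f_to_h}. Using the already-proved bulk convergence (Theorem \ref{thm: conv_bulk_bulk}) in a slightly retracted domain, together with uniform a priori estimates for s-holomorphic functions, the two harmonic functions converge on compact subsets of the retracted domain; the boundary Harnack principle for harmonic functions with a common (constant) boundary arc — robust to boundary roughness once one passes to $\Lambda$ — then forces the ratio of their normal behaviours at $z_2^{\delta}$, and hence of $F$ and $\ProbeF$, to converge to the continuum ratio. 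Since both corners coincide, the common phase $\eta_{z_2^{\delta}}$ drops out, and this comparison is exactly the mechanism by which the lattice-dependent, $\cvr$-independent constant $\normLoc{z_2}$ is isolated.

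Finally, for the two-point statement (\ref{eq: conv_bdry_bdry}) I would iterate the construction in both variables. By the antisymmetry of $F$ (Lemma \ref{lem: anti-symmetry}) and of $f$ (relation (\ref{eq: antysymm_f_fdag})), one may first apply (\ref{eq: conv_bdry_rough}) to normalize the second argument through $\normLoc{z_2}$, obtaining a quantity that is s-holomorphic in $z_1^{\delta}$ near its own rough boundary point $z_1$, and then repeat the comparison in the first variable through $\normLoc{z_1}$; the product $\normLoc{z_1}\normLoc{z_2}$ cancels both local factors and yields the stated limit $\lim_{w_{1,2}\to z_{1,2}}\varphi'(w_1)^{-\frac12}\varphi'(w_2)^{-\frac12}f_{\Omega,\cvr}(w_1,w_2)$. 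I expect the main obstacle to be precisely the ratio convergence near the rough wired arc: making the boundary Harnack comparison quantitative for s-holomorphic spinors with \emph{no} regularity of $\partial\Omega$ requires uniform control of the harmonic integrals $h=\im\int F^2\,dz$ up to the boundary (Beurling-type estimates for the discrete harmonic measure) and a careful transfer of this control through the fixed conformal map $\varphi$ onto the nice domain $\Lambda$.
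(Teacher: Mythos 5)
Your overall architecture (fix a reference function near $z_{2}$, define $\normLoc{z_{2}}$ through it, reduce \eqref{eq: conv_bdry_rough} to a comparison of the two functions at the boundary corner, then iterate in both variables for \eqref{eq: conv_bdry_bdry}) matches the paper's, but the core step --- your ``discrete boundary Harnack comparison'' --- has a genuine gap, and it is precisely the step you yourself flag as the main obstacle. The functions $h=\im\int F^{2}$ are not positive harmonic functions vanishing on a common arc: discretely they are only subharmonic on $\Od$ and superharmonic on $\Odual$ (Lemma \ref{lem: sub_super}), and the constraint $h=\const$, $\pa_{i\tau}h\ge0$ on the wired arc does not make $h$ sign-definite in a half-neighborhood, so the hypotheses of any boundary Harnack principle fail. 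Even granting a BHP, it would control at best the ratio of magnitudes $|F|^{2}$ through $h$, losing the \emph{sign} of the boundary value $\bar{\eta}_{z_{2}^{\delta}}F(z_{1}^{\delta},z_{2}^{\delta})$; and BHP yields two-sided comparability (or H\"older regularity of the ratio at a fixed scale), not convergence of the ratio to the exact continuum constant at a single lattice corner $z_{2}^{\delta}$, uniformly in $\delta$ down to scale $\delta$. Nothing in your outline supplies a mechanism for this, so the proposal asserts rather than proves the key convergence.

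The paper's mechanism (Lemma \ref{lem: rough_bdry}) is different and avoids any ratio statement. One pairs the target $G^{\delta}$ against a reference observable $\tilde{F}^{\delta}(\cdot)=\tilde{\beta}^{\delta}\bar{\eta}_{w^{\delta}}F_{\tilde{\Omega}^{\delta},\fixed}(w^{\delta},\cdot)$ whose \emph{discrete singularity sits at the boundary corner} $w^{\delta}$ itself, normalized via Lemma \ref{lem: z_2_at_bdry} so that it converges in the bulk to an explicit limit. Since the product of two s-holomorphic functions defines a closed discrete form (see \eqref{eq: H_closed_form}), the integral $\intFQ{\tilde{F}G}{\pa\hat{\Omega}^{\delta}}$ can be evaluated two ways: shrinking the contour to the two edges at $w^{\delta}$ yields \emph{exactly} $-4i\,\tilde{F}(w^{\delta,+})G(w^{\delta})(w^{\delta,\bullet}-w^{\delta,\circ})$ (the mechanism of Lemma \ref{lem: Cauchy_spinors}), while deforming it to the outer cross-cut gives an integral converging by bulk convergence together with the a priori bounds $|G^{\delta}(z)|\leq C\cdot\dist(z;\gamma_{1}^{\delta})^{-\frac{1}{4}}$ obtained from uniform boundedness of $H^{\delta}$, sub/superharmonicity and discrete Beurling estimates. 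This identifies the limit of the \emph{product} $\tilde{F}(w^{\delta,+})G(w^{\delta})$, with the correct sign, and the normalizing factor is simply read off as $\beta^{\delta}\propto\tilde{F}(w^{\delta,+})(w^{\delta,\bullet}-w^{\delta,\circ})$; no convergence of $\tilde{F}(w^{\delta})$ or of any ratio of boundary values is ever needed. Note in passing that a reference function merely s-holomorphic near $z_{2}$ with wired boundary conditions and \emph{no} discrete singularity at $z_{2}^{\delta}$, as your $\ProbeF$ is literally described, would pair to zero, since the closed form integrates trivially around a contour enclosing no singularity. Your iteration for \eqref{eq: conv_bdry_bdry} is the right idea in spirit (the paper likewise applies the rough-boundary lemma to $c\cdot\beta_{z_{1}^{\delta}}^{\delta}F(z_{1}^{\delta},\cdot)$), but it inherits the same unproven core in each variable.
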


\begin{rem}
To see that the limits in the right-hand side exist, observe that
by conformal covariance, $\varphi'(w_{2})^{-\frac{1}{2}}f_{\Omega,\cvr}^{[\any]}(z_{1},w_{2})=f_{\Lambda,\varphi(\cvr)}^{[\any']}(\varphi(z_{1}),\varphi(w_{2})),$
where $\any'=\any$ if $z_{1}^{\delta}$ is of type $\#$ or $\flat$,
and $\any'=\bar{\varphi'(z_{1})}^{-\frac{1}{2}}\eta_{z_{1}}$ for
type $\eta_{z}.$ Since $\Lambda$ is nice, the observables in $\Lambda$
are continuous up to the boundary. Same reasoning applies to (\ref{eq: conv_bdry_bdry}).
\end{rem}

\begin{rem}
If $z^{\delta}\to z$ is an admissible sequence of type $\beta$ and
the boundaries $\pa\Od$ are straight lines near $z$, then one can
also take $\Lambda=\Omega$, $\varphi=id$ and $\normLoc z:=C_{\psi}^{-1}\delta^{-\frac{1}{2}}$.
\end{rem}

The use of this lemma will be based on the fact that in the expressions
of interest, such as (\ref{eq: Corr_to_obs_2}), each of the normalizing
factors $\normLoc{z_{i}}$ will appear exactly as many times in the
numerator as in the denominator, and thus cancel out. The proof of
Lemma \ref{lem: Clements_clever_lemma} is postponed until Section
\ref{sec: proofs_of_convergence_theorems}. 
\begin{example}
To illustrate the results provided by Theorem \ref{thm: Convergence_singular}
and Lemma \ref{lem: Clements_clever_lemma}, assume that $z_{1}$
is on the straight part of the wired boundary of $\Omega$, the discrete
approximations $\Od$ also have straight boundaries near $z_{1},$and
that $v\in\mathcal{V}^{\reg}$ is a puncture (single-point boundary
component) of $\Omega$. Then, for $z_{1}^{\delta}$ on the boundary
of $\pa\Ocvr$ approximating $z_{1}$, and $z_{2}^{\delta}\sim v$,
we have 
\[
\delta^{-\frac{1}{2}}\cdot F(z_{1}^{\delta},z_{2}^{\delta})\ =\ C_{\psi}\cdot\eta_{z_{1}^{\delta}}\eta_{z_{2}^{\delta}}f^{[\eta_{z_{1}^{\delta}},\reg]}(\zz)+o(1).
\]
\end{example}

\subsection{The case of both points close to a puncture $v$ and the coefficient
$\protect\coefA_{\protect\cvr}(v)$}

Note that the results collected so far are actually enough to prove
a big portion of Theorem \ref{thm: intro_2}. Namely, due to the Pfaffian
formula (\ref{eq: Pfaff_discrete}), the first term in the right-hand
side of (\ref{eq: corr_to_obs_2}) can be expressed in terms of $F(z_{i},z_{j})$,
where the positions of $z_{1}$ and $z_{2}$ are such that the asymptotics
of $F(z_{i},z_{j})$ is covered in Theorems \ref{thm: Convergence_singular},
\ref{thm: Convergence_bulk_near}. Thus, to prove Theorem \ref{thm: intro_2},
it essentially remains to treat the spin correlations, which we do
in the next subsection. 
\begin{defn}
\label{def: coefA}If $\{v\}$ is a single-points boundary component
of $\Omega$ (i. e., $\cvr$ is ramified at $v$), we define the quantity
\[
\coefA(v)=\coefA_{\Omega,\cvr}(v)
\]
by the expansion 
\begin{equation}
f_{\Omega,\cvr}^{[\reg]}(v,z)=\frac{\lamb}{(z-v)^{\frac{1}{2}}}\left(1+2\coefA(v)(z-a)+o(z-a)\right).\label{eq: def_A}
\end{equation}
\end{defn}

\begin{thm}
\label{thm: Conv_both_near_spin}If, for some $s=1,\dots,k,$ $z_{1}^{\circ}=v_{s},$
and $z_{2}^{\bullet}=z_{1}^{\bullet}$, then one has, as $\delta\to0$,
\[
F_{\Od,\cvr}(\zz)=\eta_{z_{1}}\eta_{z_{2}}(1+\re[(z_{2}^{\circ}-z_{1}^{\circ})\cdot\mathcal{A}_{\Omega,\cvr}(v)]+o(\delta)),
\]
uniformly over positions of $v_{1},\dots,v_{n}$ at a definite distance
from the boundary and from each other. 
\end{thm}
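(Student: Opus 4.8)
The plan is to first reduce the statement to a fact about a ratio of spin correlations, and then to extract the coefficient $\coefA_{\Omega,\cvr}(v)$ from the near-singularity expansion of the discrete fermionic observable. Unfolding Definition \ref{def: obs_discrete} under the hypotheses $z_1^\circ = v_s =: v$ and $z_2^\bullet = z_1^\bullet$: the two disorders $\mu_{z_1^\bullet}\mu_{z_2^\bullet}$ sit at a common face and cancel, while $\sigma_{z_1^\circ} = \sigma_v$ merges with $\svv$ and squares to $1$. Hence
\[
F_{\Od,\cvr}(z_1,z_2) \;=\; \eta_{z_1}\eta_{z_2}\cdot R,\qquad R := \frac{\E_{\Od}\bigl(\sigma_{z_2^\circ}\prod_{i\neq s}\sigma_{v_i}\bigr)}{\E_{\Od}(\svv)},
\]
where $R$ is a ratio of (real) spin correlations. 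Since $1 + \re[(z_2^\circ - v)\coefA_{\Omega,\cvr}(v)]$ is likewise real, the theorem is equivalent to $R = 1 + \re[(z_2^\circ - v)\coefA_{\Omega,\cvr}(v)] + o(\delta)$; this is consistent in the degenerate case $z_2 = z_1$, where the sign convention $F(z,z^\pm) = \pm\eta_z^2$ of Definition \ref{def: obs_signs} forces $R = 1$ while $z_2^\circ - v = 0$.

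I would then regard $z \mapsto \bar\eta_{z_1}F_{\Od,\cvr}(z_1,z)$ as an s-holomorphic $\cvr$-spinor (Lemma \ref{lem: s-hol}) carrying a discrete singularity at the corner $z_1$. Because $z_1^\circ = v$ is a ramification point of $\cvr$, this singularity is exactly of type $\reg$ in the sense of Definition \ref{def: admissible-corners}, and Theorem \ref{thm: Convergence_singular} (with $z_1$ of type $\reg$) shows that $\delta^{-1/2}\bar\eta_{z_1}F_{\Od,\cvr}(z_1,\cdot)$ converges to $\Cpsi\, f_{\Omega,\cvr}^{[\reg]}(v,\cdot)$ uniformly on compact subsets of a punctured neighbourhood of $v$. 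The corner $z_2$, however, lies at distance $O(\delta)$ from $v$, so this definite-distance convergence cannot be applied directly at $z_2$; instead the strategy is to match $\bar\eta_{z_1}F_{\Od,\cvr}(z_1,\cdot)$, down to the lattice scale near $v$, against the \emph{full} continuous expansion
\[
f_{\Omega,\cvr}^{[\reg]}(v,z) \;=\; \frac{\lamb}{(z-v)^{1/2}}\bigl(1 + 2\coefA_{\Omega,\cvr}(v)\,(z-v) + o(z-v)\bigr)
\]
from Definition \ref{def: coefA}. The leading profile $\lamb(z-v)^{-1/2}$ is responsible for the constant $1$ in $R$, and the first correction carries $\coefA_{\Omega,\cvr}(v)$.

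To perform this matching I would follow the template of the bulk computation in Theorem \ref{thm: Convergence_bulk_near}, where moving $z_2$ to the opposite corner of an edge extracts the diagonal value $\fdag(z_1,z_1)$; here moving $z_2$ to the neighbouring corner around the face $z_1^\bullet$ at the spin extracts, instead, the \emph{next} coefficient of the singular expansion. Concretely, using the two explicitly constructed full-plane discrete analytic spinors (the analogues of $z^{-1/2}$ and $z^{-1}$) together with the Cauchy-type formula for spinor observables (Lemma \ref{lem: Cauchy_spinors}), one writes $R = (|\eta_{z_1}|^2\eta_{z_2})^{-1}\,\bar\eta_{z_1}F_{\Od,\cvr}(z_1,z_2)$ and feeds in the lattice-scale control by $\delta^{1/2}\Cpsi\, f_{\Omega,\cvr}^{[\reg]}(v,\cdot)$. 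Once the phases carried by $\eta_{z_1}$ and $\eta_{z_2}$ are cancelled against the leading $\lamb(z-v)^{-1/2}$, the constant term reproduces $1$, while the correction $2\coefA_{\Omega,\cvr}(v)(z-v)$ — after imposing the reality of $R$ and fixing the branch of $(z-v)^{1/2}$ and the signs as in Remark \ref{rem: Signs} — collapses to the real quantity $\re[(z_2^\circ - v)\coefA_{\Omega,\cvr}(v)]$.

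The hard part will be precisely this lattice-scale step. Theorem \ref{thm: Convergence_singular} only controls the observable at definite distance from $v$, so upgrading it to the $O(\delta)$ regime while keeping $o(\delta)$ precision requires the full discrete singularity analysis: a careful comparison of $\bar\eta_{z_1}F_{\Od,\cvr}(z_1,\cdot)$ with the explicit full-plane discrete analytic spinors, together with meticulous bookkeeping of the constant $\Cpsi$, the branch of $(z-v)^{1/2}$, and the conjugation structure that turns the holomorphic expansion coefficient into the real-part expression in the statement.
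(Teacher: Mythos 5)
Your setup is sound: the reduction to a ratio $R$ of spin correlations, the consistency check via the exact normalization $\bar{\eta}_{z_{1}}^{2}F(z_{1},z_{1})=1$ from Definition \ref{def: obs_signs}, and the identification of Lemma \ref{lem: Cauchy_spinors}, the explicit kernels, and the expansion (\ref{eq: def_A}) as the relevant objects all match the paper. But the step you defer as ``the hard part'' is the entire content of the proof, and the route you sketch for it does not work. You propose to match $\bar{\eta}_{z_{1}}F(z_{1},\cdot)$ against the continuous expansion at lattice distance $O(\delta)$ from $v$, following the template of Theorem \ref{thm: Convergence_bulk_near}. In that template one subtracts an explicit discrete kernel and invokes the maximum principle, which gives only additive $o(1)$ precision for the $\delta^{-\frac{1}{2}}$-normalized observable; that was enough there because the extracted quantity $\fdag(z_{1},z_{1})$ is of order one, but it is useless here, where the target $\re[(z_{2}^{\circ}-z_{1}^{\circ})\coefA_{\Omega,\cvr}(v)]$ is itself of size $O(\delta)$ (equivalently, after subtracting the leading singular kernel, the value to be read off at $z_{2}$ is of size $O(\delta^{\frac{1}{2}})$ in that normalization, far below the available $o(1)$ accuracy). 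No result in the paper, nor any strengthening of its compactness arguments, supplies the relative $o(\delta)$ lattice-scale control your plan requires; this is a genuine gap, exactly where you flag it.

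The paper's proof closes the gap by a mechanism that never controls the observable at lattice scale. Since $z_{1}^{\circ}=v_{s}$ and $z_{2}^{\bullet}=z_{1}^{\bullet}$, the spinor $\bar{\eta}_{z_{1}}F(z_{1},\cdot)$ is ramified at the \emph{dual} vertex $u=z_{1}^{\bullet}$ and at $v_{1},\dots,\hat{v}_{s},\dots,v_{n}$ (the two branchings at $v_{s}$ cancel), and both corners are adjacent to $u$; applying Lemma \ref{lem: Cauchy_spinors} twice over one \emph{macroscopic} contour $\gamma\subset B_{2\eps}(v_{s})\setminus B_{\eps}(v_{s})$ and subtracting yields
\[
\bar{\eta}_{z_{2}}\bar{\eta}_{z_{1}}F(z_{1},z_{2})-\bar{\eta}_{z_{1}}^{2}F(z_{1},z_{1})=\frac{1}{4}\delta^{-\frac{1}{2}}\intFQ{\bar{\eta}_{z_{1}}F(z_{1},z)\left(\dmsqrt z{z_{2}^{\circ}}-\dmsqrt z{z_{1}^{\circ}}\right)}{\gamma}.
\]
The decisive observation is that $\dmsqrt z{z_{2}^{\circ}}-\dmsqrt z{z_{1}^{\circ}}$ is a translate, i.e.\ a lattice finite difference, of the explicit kernel: on each sublattice it is discrete harmonic and converges, uniformly on the annulus, to $\left(\frac{2}{\pi}\right)^{\frac{1}{2}}\proj{\eta_{z}}{(z_{1}^{\circ}-z_{2}^{\circ})\pa_{z}[\lambb(z-v_{s})^{-\frac{1}{2}}]}$ up to $o(\delta)$. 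Thus the whole $O(\delta)$ smallness factors out of the kernel difference on the macroscopic contour, where the mere $o(1)$ convergence $\delta^{-\frac{1}{2}}\bar{\eta}_{z_{1}}F(z_{1},z)\to\Cpsi\proj{\eta_{z}}{f^{[\sharp]}(v_{s},z)}$ from Theorem \ref{thm: Convergence_singular} suffices, and the resulting residue integral $\oint_{\gamma}f^{[\sharp]}(v_{s},z)(z-v_{s})^{-\frac{3}{2}}dz$ extracts $\coefA_{\Omega,\cvr}(v)$ from (\ref{eq: def_A}). Taking the difference of the two corner values \emph{before} passing to the limit, so that the smallness sits in the explicit kernels rather than in the observable, is the missing idea in your proposal.
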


The relevance of this result to the spin consideration stems from
the fact that in this configuration, we have
\[
F(z_{1},z_{2})=\eta_{z_{1}}\eta_{z_{2}}\frac{\E[\sigma_{z_{1}^{\circ}}\sigma_{z_{2}^{\circ}}\mu_{z_{1}^{\bullet}}\mu_{z_{2}^{\bullet}}\sigma_{v_{1}}\dots\sigma_{v_{n}}]}{\E[\sigma_{v_{1}}\dots\sigma_{v_{n}}]}=\eta_{z_{1}}\eta_{z_{2}}\frac{\E[\sigma_{v_{1}}\dots\sigma_{v'_{s}}\dots\sigma_{v_{n}}]}{\E[\sigma_{v_{1}}\dots\sigma_{v_{n}}]},
\]
where $v_{s}'=z_{2}^{\circ}$ is a vertex adjacent or diagonally adjacent
to $v_{s}$, see Figure \ref{fig: spin_convergence}. Thus, Theorem
\ref{thm: Conv_both_near_spin} asserts that the \emph{discrete logarithmic
derivatives }of spin correlation with respect to the position of one
of the points converge to quantities governed by $\mathcal{A}_{\Omega,\cvr}(v)$.
The asymptotics of the spin correlations themselves will be recovered
by integrating, fixing the constants of integration from probabilistic
considerations, see Section \ref{sec: corr-continuum}.

\begin{figure}

\includegraphics[width=0.5\textwidth]{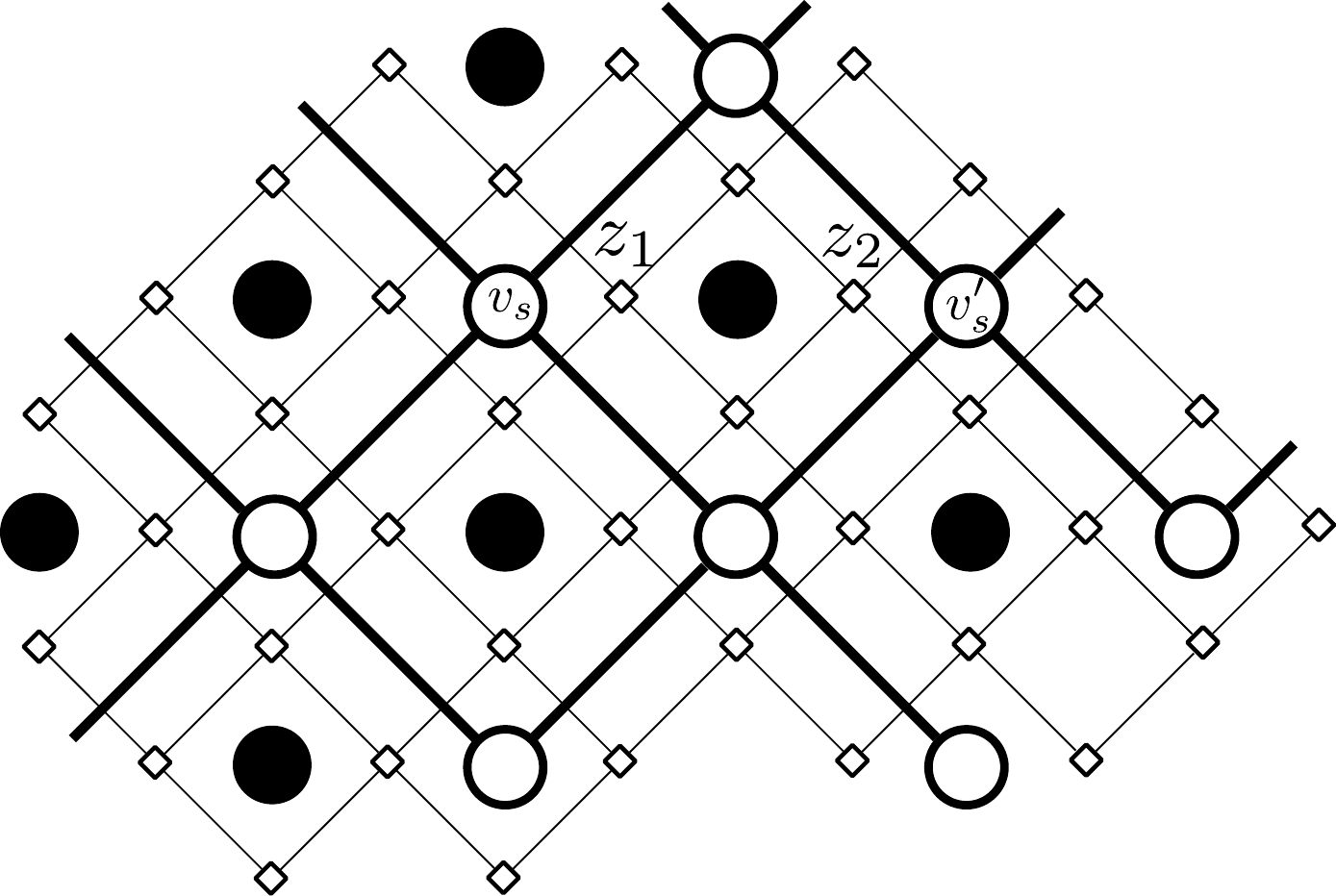}\caption{The configuration of points for Theorem \ref{thm: Conv_both_near_spin}\label{fig: spin_convergence}}

\end{figure}

\newpage{}

\section{Proofs of convergence results for fermionic observables:}

\label{sec: proofs_of_convergence_theorems}

\subsection{Preliminaries on s-holomorphicity and standard boundary conditions}

We first briefly recall a few elementary properties of s-holomorphic
functions and their relation to discrete holomorphicity. In what follows,
all the functions defined on the corner graph $\Cgr(\Od)$ will be
assumed to satisfy the phase condition (\ref{eq: phases_condition}).
The corner lattice $\Cgr:=\Cgr(\Cd)$ can be decomposed in a checkerboard
fashion into a union of two sub-lattices, 
\[
\Cgr=\wcgr\sqcup\bcgr,
\]
where $\bcgr=\{z\in\Cgr:\ds z\in\{\pm1,\pm\i\}\}$ and $\wcgr=\{z\in\Cgr:\ds z\in\{\pm\lamb,\pm\lambb\}\}.$
The discrete $\dbar$ operator acts on a function defined on $\wcgr$
and produces a function on $\bcgr$, and reads
\begin{equation}
\dbar F(z):=\frac{1}{2\delta^{2}}\cdot\sum_{w\sim z}F(w)\cdot(w-z),\label{eq: Disc_hol}
\end{equation}
where the sum is over the four vertices $w\in\wcgr$ incident to $z\in\bcgr$. 

A function $F:\wcgr\mapsto\C$ is called \emph{discrete holomorphic}
at $z\in\bcgr$ if it satisfies the condition $\dbar F(z)=0$. The
following elementary Lemma relates the s-holomorphicity and the discrete
holomorphicity. 
\begin{lem}
\label{lem: dhol_to_shol}Let $F:\wcgr\mapsto\C$ be a function satisfying
the phase condition (\ref{eq: phases_condition}), and let $K\subset\bcgr.$
Then, the following are equivalent:

\begin{enumerate}
\item $F$ is discrete holomorphic at every $z\in K$;
\item $F$ can be extended to $K$ in such a way that, for every edge $e$
incident to some $z\in K$, one can further extend $F$ to all four
corners incident to $e$ so that the s-holomorphicity condition (\ref{eq: s-hol})
holds.
\end{enumerate}
\end{lem}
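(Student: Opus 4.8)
The plan is to reduce the equivalence to a local statement at each black corner $z\in K$ and to phrase everything through the projection form of s-holomorphicity recorded in Remark~\ref{rem:s-hol-projections}. Recall that, granting the phase condition (\ref{eq: phases_condition}), the relation (\ref{eq: s-hol}) at an edge $e$ is equivalent to the existence of a single value $F(e)\in\C$ with $F(z)=\proj{\eta_{z}}{F(e)}$ for all four corners $z$ around $e$. The two corners of $e$ lying in $\wcgr$ occupy one diagonal of the corresponding face of $\Cgr$, so by Remark~\ref{rem:s-hol-projections} their phase lines $\eta\R$ are orthogonal; consequently the values of $F$ at these two white corners already reconstruct $F(e)$ as their sum, and the values at the two corners of $e$ in $\bcgr$ are then forced to equal $\proj{\eta_{z}}{F(e)}$. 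In particular, any s-holomorphic extension of $F$ to the black corners is uniquely prescribed, edge by edge, by the white data.

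First I would record the local combinatorics of $\Cgr$: its faces are centered either at a vertex of $\Cprim$, at a vertex of $\Cdual$, or at the midpoint of an edge of $\Cd$, and only the last type are the edges $e$ entering (\ref{eq: s-hol}). Each $z\in\bcgr$ is a corner of exactly two such edge-faces $e_{1},e_{2}$, which are opposite faces around $z$, and the four $\wcgr$-corners of $e_{1},e_{2}$ are precisely the four neighbours $w\sim z$ summed over in (\ref{eq: Disc_hol}). Hence defining $F$ at a black corner $z$ so that (\ref{eq: s-hol}) holds at \emph{both} incident edges amounts to the single compatibility requirement $\proj{\eta_{z}}{F(e_{1})}=\proj{\eta_{z}}{F(e_{2})}$, i.e.\ $\re[\overline{\eta_{z}}(F(e_{1})-F(e_{2}))]=0$, with $F(e_{j})$ equal to the sum of the two white corner values of $e_{j}$.

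The crux is that this compatibility requirement is exactly $\dbar F(z)=0$, and I expect this one genuine computation to be the main (if short) obstacle. Writing the four white neighbours of $z$ according to their phases — two carry $\eta=\pm\lambb$ and two carry $\eta=\pm\lamb$ — the phase condition (\ref{eq: phases_condition}) forces the difference $X$ of the two $\lambb$-neighbours (one belonging to $e_{1}$, the other to $e_{2}$) to lie on $\lambb\R$, and the difference $Y$ of the two $\lamb$-neighbours to lie on $\lamb\R$. A direct substitution of the four displacement vectors $w-z$, each of the form $\pm(\delta/\sqrt2)\lamb$ or $\pm(\delta/\sqrt2)\lambb$, into (\ref{eq: Disc_hol}) gives $\dbar F(z)=(2\sqrt2\,\delta)^{-1}(\lamb X+\lambb Y)$, while $F(e_{1})-F(e_{2})=X+Y$. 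Since $\lamb\lambb=1$ with $X\in\lambb\R$ and $Y\in\lamb\R$, both $\dbar F(z)=0$ and $\re[\overline{\eta_{z}}(X+Y)]=0$ collapse to the same single real linear equation; tracking the eight phases $\{\pm\lamb,\pm\lambb\}$ of the white neighbours against $\eta_{z}\in\{\pm1,\pm i\}$ confirms this at every black corner.

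Granting this identity, both implications are immediate. If $\dbar F(z)=0$ for every $z\in K$, then for each such $z$ the two incident edges assign $z$ the same value, so setting $F(z)$ to that common value — and the remaining black corner of each incident edge to its forced projection — makes (\ref{eq: s-hol}) hold at every edge incident to a point of $K$, which is (1)$\Rightarrow$(2). Conversely, any extension realizing (\ref{eq: s-hol}) at the two edges incident to a given $z\in K$ must, by the projection characterization, satisfy $\proj{\eta_{z}}{F(e_{1})}=F(z)=\proj{\eta_{z}}{F(e_{2})}$, which is exactly the compatibility requirement and hence forces $\dbar F(z)=0$, giving (2)$\Rightarrow$(1).
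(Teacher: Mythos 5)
Your proof is correct and follows essentially the same route as the paper's: you extend $F$ to the black corners of each edge $e$ by projecting $F(a)+F(b)$ (the sum of the two white values, which reconstructs $F(e)$ since the white phase lines are orthogonal), and you reduce everything to the compatibility equation $\proj{\eta_{z}}{F(e_{1})}=\proj{\eta_{z}}{F(e_{2})}$ at each $z\in K$, identified with $\dbar F(z)=0$. The only difference is that you carry out explicitly the computation the paper dismisses as ``straightforward'' (your $X,Y$ bookkeeping, which checks out on both black sublattices), so this is a faithful, slightly more detailed version of the same argument.
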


\begin{proof}
If $e$ is an edge and $a,b$ are the two corners of $\wcgr$ incident
to $e$, we extend $F$ by putting 
\begin{equation}
F(w):=\proj{\eta_{w}}{F(a)+F(b)},\label{eq: extension}
\end{equation}
for $w\in\bcgr$ incident to $e$. If $z\in K$ and $e,e'$ are the
two edges incident to $z$, then a straightforward computation shows
that under the phase condition (\ref{eq: phases_condition}), the
discrete holomorphicity $\dbar F(z)=0$ is equivalent to 
\begin{equation}
\proj{\eta_{z}}{F(a)+F(b)}=\proj{\eta_{z}}{F(a')+F(b')};\label{eq: equiality projections}
\end{equation}
thus, the above extension is well-defined on $K.$ By construction,
such an extension is s-holomorphic due to Remark \ref{rem:s-hol-projections}.
Conversely, if we start with an s-holomorphic function, then we have
(\ref{eq: equiality projections}), since both sides are equal to
$F(z)$, and the discrete holomorphicity follows.
\end{proof}
\begin{rem}
\label{rem: project}The following observation will be useful: if
$F:=\wcgr\to\C$ satisfies the discrete holomorphicity $\dbar F(w)=0$
at some $w\in\bcgr$, then so does the function $z\mapsto\proj{\eta_{z}}{F(z)}.$
To see this, pick some $\wcgr\ni\hat{z}\sim w$ and project the equation
$\dbar F(w)=0$ onto the line $(\hat{z}-w)\eta_{\hat{z}}\R.$
\end{rem}

In particular, the restriction of the observable $F_{\cvr}(z_{1},z_{2})$
to $\wcgr$ is discrete holomorphic, at least away from $z_{1}$.

Suppose a function $F(\cdot)$ is defined at the corners adjacent
to an edge $e$ and is s-holomorphic at $e$. Then,
\begin{multline}
|F(z_{N})|^{2}+|F(z_{S})|^{2}=|F(z_{N})+F(z_{S})|^{2}\\
=|F(z_{E})+F(z_{W})|^{2}=|F(z_{E})|^{2}+|F(z_{W})|^{2}.\label{eq: abs_F_2}
\end{multline}
Moreover, since (\ref{eq: phases_condition}) implies $F(z)^{2}=\eta_{z}{}^{2}|F(z)|^{2}$,
we can rewrite the above relation as follows: 
\begin{multline}
F(z_{N})^{2}\cdot(z_{N}^{\bullet}-z_{N}^{\circ})+F(z_{E})^{2}\cdot(z_{E}^{\circ}-z_{E}^{\bullet})\\
+F(z_{S})^{2}\cdot(z_{S}^{\bullet}-z_{S}^{\circ})+F(z_{W})^{2}\cdot(z_{W}^{\circ}-z_{W}^{\bullet})=0.\label{eq: H_closed_form}
\end{multline}
In other words, the discrete differential form $F^{2}(z)(z^{\bullet}-z^{\circ})$,
defined on the edges of $\Od\cup(\Od)^{\star}$, is closed over the
domain of s-holomorphicity of $F$. If now $F_{1,2}$ are two s-holomorphic
functions, then, plugging $F_{1}$, $F_{2}$ and $F_{1}+F_{2}$ into
(\ref{eq: H_closed_form}), we infer that the discrete differential
form $F_{1}(z)F_{2}(z)(z^{\bullet}-z^{\circ})$ is also closed. This
leads to the following definition.
\begin{defn}
\label{def: H}Given s-holomorphic functions $F_{1,2}$ on a simply-connected
subset of $\Cd$, we define the real-valued function on $\Od\cup\Odual\cup\fixed^{\circ}$
\[
H(w):=\intFsquare{F_{1}F_{2}}w
\]
to be the integral of the discrete differential form $-2i\cdot F_{1}(z)F_{2}(z)(z^{\bullet}-z^{\circ})$.
In other words, for any neighboring $\vz\in\Omega^{\delta,\star}$
and $\fz\in\Od\cup\fixed^{\circ}$, we put
\begin{equation}
H(z^{\bullet})-H(z^{\circ})=-2i\cdot F_{1}(z)F_{2}(z)(z^{\bullet}-z^{\circ}).\label{eq: def_H}
\end{equation}
\end{defn}

\begin{rem}
Note that the function $F_{1}(z)F_{2}(z)$ is purely imaginary $\wcgr(\Od)$
and real on $\bcgr(\Od)$. If it were discrete holomorphic (which
it is not), then one could define a discrete integral of $F_{1}(z)F_{2}(z)dz$
on $\Cgr(\Od)^{\star}$, purely imaginary on $\Od\cup\Odual$ and
purely real on $\Cgr(\Od)^{\star}\setminus(\Od\cup\Odual)$. Therefore,
$H$ can be indeed be viewed as a discretization of the \emph{imaginary
part} of $\int F_{1}(z)F_{2}(z)dz$. 
\end{rem}

\begin{rem}
\label{rem: conv_f_conv_H}We will be interested in the setup when
the restriction of s-holomorphic functions $F_{1,2}$ to each sub-lattice
$\Cgr^{\eta}$ converge to a corresponding projections of holomorphic
functions, i. e., $F_{1,2}(z)=\proj{\eta_{z}}{f_{1,2}(z)}+o(1)$ uniformly
over some domain. In this case, we also have 
\[
\intFsquare{F_{1}F_{2}}w=\im\left(\int^{w}f_{1}(z)f_{2}(z)dz\right)+o(1).
\]
\end{rem}

\begin{lem}
\label{lem: H_bc}If $F$ satisfies the standard boundary conditions,
then $H(\cdot)=\intFsquare{F^{2}}{\cdot}$ is constant along each
arc (i. e., connected component) of $\free^{\bullet}$, and is constant
along the vertices of $\fixed^{\circ}$ on the same boundary component.
In particular, $H$ is a single-valued function. 
\end{lem}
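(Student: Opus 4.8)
The plan is to reduce the whole statement to a single increment identity for $H$ across a corner, and then to read off the three claims from the three cases of Definition~\ref{def: Standard_BC}. First I would record that for any corner $z$ the phase condition (\ref{eq: phases_condition}) gives $F(z)^{2}=\eta_{z}^{2}|F(z)|^{2}$, while $\eta_{z}^{2}=e^{\frac{i\pi}{2}}\delta(\vz-\fz)^{-1}=i\delta(\vz-\fz)^{-1}$ and $|\eta_{z}|=1$ (since $|\vz-\fz|=\delta$). Substituting into the defining relation (\ref{eq: def_H}) of $H$ then yields
\[
H(\vz)-H(\fz)=-2i\,F(z)^{2}(\vz-\fz)=-2i\cdot i\delta\,|F(z)|^{2}=2\delta\,|F(z)|^{2}\ \ge\ 0 .
\]
This is the only computation required; everything else is combinatorial bookkeeping built on it.

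Next I would establish the two arc-constancy claims. Along a free arc, two consecutive dual vertices $\vz_{a},\vz_{b}\in\free^{\bullet}$ are joined by a free (dual) edge, and hence share a common primal vertex $v=\fz_{a}=\fz_{b}\in\Od$; this is exactly the configuration of case~(2). Applying the increment identity along the two corners $z_{a},z_{b}$ incident to $v$ gives $H(\vz_{a})=H(v)+2\delta|F(z_{a})|^{2}$ and $H(\vz_{b})=H(v)+2\delta|F(z_{b})|^{2}$, and the boundary relation $\bar{\eta}_{z_{a}}F(z_{a})=\bar{\eta}_{z_{b}}F(z_{b})$ together with $|\eta_{z_{a}}|=|\eta_{z_{b}}|=1$ forces $|F(z_{a})|=|F(z_{b})|$, whence $H(\vz_{a})=H(\vz_{b})$; chaining over the arc gives constancy of $H$ on $\free^{\bullet}$. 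Symmetrically, two consecutive outside vertices $\fz_{a},\fz_{b}\in\fixed^{\circ}$ on a wired arc share a common dual vertex $u=\vz_{a}=\vz_{b}$, which is case~(1), and the same computation now gives $H(u)-H(\fz_{a})=2\delta|F(z_{a})|^{2}$, $H(u)-H(\fz_{b})=2\delta|F(z_{b})|^{2}$ with $|F(z_{a})|=|F(z_{b})|$, i.e.\ $H(\fz_{a})=H(\fz_{b})$; chaining yields constancy of $H$ along $\fixed^{\circ}$ on each boundary component.

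For single-valuedness I would argue homologically. Since $F$ is a $\cvr$-spinor, $F^{2}$ descends to a genuine function on $\Cgr(\Od)$, so the discrete form $-2iF^{2}(z)(\vz-\fz)$ is closed throughout the region of s-holomorphicity, and the only possible periods of $H$ are around the boundary components of $\Od$. I would compute the period around a fixed component by deforming the cycle, within the s-holomorphic interior, to a boundary-hugging loop that runs through $\fixed^{\circ}$ along wired arcs and through $\free^{\bullet}$ along free arcs. Each maximal wired stretch contributes a telescoping sum that vanishes by the $\fixed^{\circ}$-constancy just proved, and each maximal free stretch likewise vanishes, so the entire period reduces to the contributions at the finitely many points separating free from wired arcs. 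At such a point the loop makes an excursion $\fz_{a}\to\vz_{a}\to\cdots\to\vz_{b}\to\fz_{b}$ onto and off a free arc $\nu$ whose two endpoints are $\vz_{a},\vz_{b}$, with net increment $2\delta(|F(z_{a})|^{2}-|F(z_{b})|^{2})$, where $z_{a},z_{b}$ join the endpoints of $\nu$ to the flanking wired vertices $\fz_{a},\fz_{b}$; this is precisely case~(3), and $\bar{\eta}_{z_{a}}F(z_{a})=\bar{\eta}_{z_{b}}F(z_{b})$ annihilates the term. Summing, the period around every boundary component is zero, so $H$ is single-valued.

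I expect this last step to be the main obstacle: one must match the orientation and sign conventions of case~(3) (the extension ``along the free boundary arc outside $\Od$'') to the actual excursion of the loop, so that the two endpoint contributions genuinely cancel rather than reinforce, and one must check that the boundary-hugging loop is homologous to the chosen cycle within the domain of s-holomorphicity. By contrast, the interior arc-constancies are routine once the nonnegative increment identity is in hand.
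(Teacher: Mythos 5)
Your proof is correct and takes essentially the same route as the paper's: the same increment identity $H(z^{\bullet})-H(z^{\circ})=2\delta|F(z)|^{2}$, combined with $|F(z_{a})|=|F(z_{b})|$ extracted from (\ref{eq:bcond}) in cases (1)--(3) of Definition \ref{def: Standard_BC}; your homological period computation is just a repackaging of the paper's direct observation that case (3) equalizes the values of $H$ on the wired arcs flanking each free arc. Two minor remarks: your second paragraph overstates what chaining case (1) alone gives (constancy per wired \emph{arc}, not per boundary component --- the full claim needs the case-(3) step you supply later), and your worry about matching the sign conventions in case (3) is moot, since the cancellation uses only the moduli $|F(z_{a})|=|F(z_{b})|$, which are insensitive to the spinor sign bookkeeping.
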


\begin{proof}
Note that $-2i\cdot F(z)^{2}(z^{\bullet}-z^{\circ})=2\delta|F(z)|^{2}.$
If $z_{a}^{\bullet}=z_{b}^{\bullet}$, then 
\[
H(z_{a}^{\circ})-H(z_{b}^{\circ})=H(z_{a}^{\circ})-H(z_{a}^{\bullet})+H(z_{b}^{\bullet})-H(z_{b}^{\circ})=2\delta|F(z_{a})|^{2}-2\delta|F(z_{b})|^{2}.
\]
Hence, by (\ref{eq:bcond}), $H$ is constant along each fixed boundary
arc. Similarly, $H$ is constant along each free boundary arc. Taking
$z_{a}$ and $z_{b}$ to be as in case (3) of Definition \ref{def: Standard_BC}
shows that for each free boundary arc $\gamma$, the values of $H$
on the fixed boundary arcs adjacent to $\gamma$ coincide.
\end{proof}
\begin{rem}
\label{rem: H_extension}Suppose $z_{a,b}$ are as in cases (1) or
(2) of Definition \ref{def: Standard_BC}, and that, moreover, $z_{a}$
and $z_{b}$ are incident to the same edge $e$. If we put $F(\hat{z}_{a}):=F(\hat{z}_{b}):=0$,
where $\hat{z}_{a}$ and $\hat{z}_{b}$ are two other corners incident
to $e$, then (\ref{eq:bcond}) guarantee that (\ref{eq: abs_F_2})
still holds. With this observation, we extend $H$ to the vertices
of $\Cd\setminus\Od$ separated from $\Od$ by $\free$ (with the
same values as on the corresponding arc of $\free^{\bullet}$), and
to dual vertices that are connected to dual edges in $(\Od)^{\star}$
by an edge separating two vertices of $\fixed$, with the same constant
as on the corresponding arc of $\fixed^{\circ}$. 
\end{rem}

Let $\lapv$ denote the standard nearest-neighbor Laplacian on vertices
of $\Od$, modified at the boundary as follows:
\[
\lapv G(\v)=\sum_{w\sim\v}c_{wv}\left(G(w)-G(\v)\right),
\]
where $c_{wv}:=1$ unless the edge $(wv)$ intersects an edge of a
\emph{free} boundary arc, in which case $c_{wv}:=2\sqrt{2}-2.$ Similarly,
let $\lapf$ denote the nearest-neighbor Laplacian on vertices of
$\dual{\Od}$, defined by 
\[
\lapf G(\u)=\sum_{w\sim\u}c_{wu}\left(G(w)-G(\u)\right),
\]
where $c_{wu}:=1$ unless the edge $(w\u)$ separates two vertices
of a fixed boundary arc, in which case $c_{wv}=2\sqrt{2}-2.$ 
\begin{lem}
\label{lem: sub_super}Assume that $F$ is an s-holomorphic $\cvr$-spinor
on $\Cgr(\Od)$ satisfying the standard boundary conditions. Then,
for $H(w)=\intFsquare{F^{2}}w$, one has 
\[
\lapv H(\v)=-2|F(z_{N})+iF(z_{E})-F(z_{S})-iF(z_{W})|^{2}
\]
\[
\lapf H(\u)=2|F(z_{N})+iF(z_{E})-F(z_{W})-iF(z_{S})|^{2}
\]
for any $v\in\Od$ and $u\in\Odual\setminus\free^{\bullet},$ provided
that $\cvr$ is not ramified at $\v$ (respectively, at $\u$). Here
$z_{N,E,S,W}$ denote the four corners in $\Cgr(\Od)$ incident to
$\v$ (respectively, to $\u$). In particular, the restriction of
$H$ to the vertices of $\Od$ (respectively, to the vertices of $\Odual$)
is subharmonic (respectively, superharmonic).
\end{lem}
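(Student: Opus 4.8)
The starting point is the first-order relation behind the definition of $H$. By the phase condition (\ref{eq: phases_condition}) we may write $F(z)=\ds z\cdot a_{z}$ with $a_{z}\in\R$ at every corner $z$, and since $\ds z^{2}=i\delta(\vz-\fz)^{-1}$ the defining identity (\ref{eq: def_H}) becomes
\[
H(\vz)-H(\fz)\ =\ -2i\,F(z)^{2}(\vz-\fz)\ =\ 2\delta\,|F(z)|^{2}\ \ge\ 0 .
\]
Thus on every primal--dual pair the value of $H$ at the face dominates its value at the vertex, which already indicates that $H$ is subharmonic on $\Od$ and superharmonic on $\Odual$. The plan is to upgrade this to the exact identities by evaluating the combinatorial Laplacian $\lapv H(\v)=\sum_{w\sim\v}c_{wv}(H(w)-H(\v))$ purely in terms of the numbers $a_{z}$ at corners near $\v$.

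First I would do a bulk vertex $\v$, where all $c_{wv}=1$. Each of the four (diagonal) primal neighbours $w$ shares with $\v$ two incident faces, and routing from $\v$ to $w$ through either of them and using the relation above expresses $H(w)-H(\v)$ as $2\delta(|F(z_{\mathrm{in}})|^{2}-|F(z_{\mathrm{out}})|^{2})$, where $z_{\mathrm{in}}$ is one of the four corners $z_{N},z_{E},z_{S},z_{W}$ incident to $\v$ and $z_{\mathrm{out}}$ is a corner of the corresponding face pointing away from $\v$; the closedness of the form $F^{2}(\vz-\fz)$ expressed by (\ref{eq: H_closed_form}), together with (\ref{eq: abs_F_2}), makes the two routings agree. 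The outer values are not part of the data, so the crux is to eliminate them. Around each edge issuing from $\v$ the four incident corners are exactly two inner ones and two outer ones, and the s-holomorphicity relation (\ref{eq: s-hol}) at that edge together with the phase condition (\ref{eq: phases_condition}) for the two outer corners forms a linear system solving each outer value in terms of the two adjacent inner ones; one finds $|F(z_{\mathrm{out}})|^{2}=(\sqrt{2}\,a_{X}-a_{Y})^{2}$ for the outer corner lying on the face that carries the inner corner $z_{X}$, where $z_{Y}$ is the other inner corner of the same edge.

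Substituting these expressions into $\lapv H(\v)=2\delta\sum_{w}(|F(z_{\mathrm{in}})|^{2}-|F(z_{\mathrm{out}})|^{2})$ and collecting the four contributions leaves a quadratic form in $a_{N},a_{E},a_{S},a_{W}$; rewriting it back through $F(z_{X})=\ds{z_{X}}a_{X}$ and the phase relations between the $\ds{z_{X}}$ recorded in Remark \ref{rem:s-hol-projections} assembles it into a non-positive expression which, with the correct sign conventions for the Dirac spinor, is exactly $-2|F(z_{N})+iF(z_{E})-F(z_{S})-iF(z_{W})|^{2}$, giving the first formula. The dual statement for $\u\in\Odual\setminus\free^{\bullet}$ is the mirror computation: $\u$ is now a local maximum of $H$ relative to the four surrounding primal vertices, the roles of inner and outer corners are interchanged, and the same elimination produces $+2|F(z_{N})+iF(z_{E})-F(z_{W})-iF(z_{S})|^{2}\ge0$. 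In both cases the hypothesis that $\cvr$ is not ramified at $\v$ (resp.\ $\u$) is precisely what makes $H$ single-valued near the vertex, so that the Laplacian is well defined.

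It remains to treat vertices adjacent to the boundary, which is where the modified weights $c=2\sqrt{2}-2$ appear. Here I would invoke Lemma \ref{lem: H_bc} and the extension of $H$ from Remark \ref{rem: H_extension}: setting the two exterior corner values of a boundary edge to zero (as the standard boundary conditions permit) and propagating the constant value of $H$ along each free, resp.\ fixed, arc, the boundary s-holomorphicity reduces to the same local elimination, but now with one outer corner replaced by its reflected or vanishing value. Carrying out this degenerate version of the linear solve is exactly what turns the naive weight $1$ into $2\sqrt{2}-2=2(\sqrt{2}-1)$ on the edges crossing a free arc for $\lapv$ (resp.\ a fixed arc for $\lapf$), a value one recognizes as $2e^{-2\beta_{\mathrm{crit}}}$, while leaving the quadratic form unchanged. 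I expect the main obstacle to be the sign bookkeeping: one must track both the Dirac spinor $\ds z$ and the sheet of the $\cvr$-spinor $F$ consistently around $\v$ (and across a branch cut, if present), so that the four inner values enter the final combination with exactly the phases $1,i,-1,-i$ --- it is this that turns the collected quadratic form into an exact modulus squared with the stated constant rather than a merely non-positive expression --- with the boundary coefficient $2\sqrt{2}-2$ being the second, more delicate point.
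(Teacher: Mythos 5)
Your computation is correct and is essentially the argument the paper itself delegates to the literature (its proof of this lemma is the one-line citation to \cite{IzyurovFree}): the elimination identity $|F(z_{\mathrm{out}})|^{2}=(\sqrt{2}\,a_{X}-a_{Y})^{2}$ checks out against the projection structure of Remark \ref{rem:s-hol-projections}, the cross-term pattern (coefficient $0$ for opposite corners, $\pm2\sqrt{2}$ for adjacent ones, with sign product $-1$ around $\v$) does assemble into the stated perfect square, and your mechanism for the boundary weight is right --- the condition (\ref{eq:bcond}) forces $a_{z_{a}}=a_{z_{b}}$ across a free edge, so the bulk edge contribution $2\delta(-2a^{2}+2\sqrt{2}a^{2})$ equals $(2\sqrt{2}-2)\cdot[H(w)-H(\v)]$ with the extension of Remark \ref{rem: H_extension}, confirming $2\sqrt{2}-2=2e^{-2\beta_{\mathrm{crit}}}$. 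One small correction: the hypothesis that $\cvr$ is unramified at $\v$ is not what makes $H$ single-valued (it always is, since $F^{2}$ is a genuine function), but rather what lets the four corner values $F(z_{N}),\dots,F(z_{W})$ be chosen coherently on one sheet, so that the sign product of the adjacent cross terms comes out $-1$ and the quadratic form is exactly $-2|F(z_{N})+iF(z_{E})-F(z_{S})-iF(z_{W})|^{2}$; at a ramification point of $\cvr$ the Laplacian of $H$ is still defined but this identity, and the sub/superharmonicity, fail.
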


\begin{proof}
This is a straightforward computation, which can be found, e. g.,
in \cite{IzyurovFree}. 
\end{proof}
We denote by $H^{\bullet}$ and $H^{\circ}$ the restrictions of $H$
to the vertices of $\Cdual$ and $\Cd$, respectively. We now have
enough tools to prove Proposition \ref{prop: discrete-uniqueness}.
\begin{proof}[Proof of Proposition \ref{prop: discrete-uniqueness}]
 We follow closely the proof of Proposition \ref{prop: Uniqueness_continuous}.
Assume that $F$ is not identically zero, in which case the corresponding
function $H$ is not a constant. Let $\umax$ be a point where $H^{\bullet}$
attains its maximal value. By Lemma \ref{lem: sub_super}, and since
$\cvr$ is not ramified at any of the dual vertices, $\umax$ cannot
be a point of $\Odual\setminus\free^{\bullet}.$ If $\umax$ is a
point connected to a point $\hat{u}\in\Odual$ by a dual edge $e^{\star}$
separating two vertices $v,\hat{v}\in\fixed^{\circ}$, see Remark
\ref{rem: H_extension}, then 
\[
H(\umax)=H(v)=H(\hat{u})-\delta|F_{\cvr}(z_{1},z)|^{2}\leq H(\hat{u}),
\]
where $z$ is such that $z^{\circ}=v$, $z^{\bullet}=\hat{u}$. This
is impossible.

The only remaining possibility is $\umax\in\free^{\bullet}$. As in
the continuous case, we will get a contradiction by looking at the
sign of $\bar{\eta}_{z}F(z)$ as $z\in\Cgr(\Od)$ travels along the
free boundary arc $\nu$ containing $\umax$. If $z_{a,b}\in\Cgr(\Od),$
$z_{a}\sim z_{b}$ are such that $z_{a}^{\circ}=z_{b}^{\circ}$ and
$z_{a}^{\bullet},z_{b}^{\bullet}\in\nu$, then it follows directly
from (\ref{eq:bcond}) that $\bar{\eta}_{z_{a}}F(z_{a})=\bar{\eta}_{z_{b}}F(z_{b}).$
Now, suppose $z_{a}\sim z_{b}$ are such that $u:=z_{a}^{\bullet}=z_{b}^{\bullet}\in\nu$,
let $(u\hat{u})$ be the dual edge separating $z_{a}^{\circ}$ from
$z_{b}^{\circ}$, and assume that $\hat{u}\notin\free^{\bullet}$.
Then, taking into account Lemma \ref{lem: H_bc}, we have $H(\hat{\u})<H(u)=H(\umax)$,
which can be rewritten as 
\begin{equation}
|F(z_{a})|^{2}>|F(z_{d})|^{2};\quad|F(z_{b})|^{2}>|F(z_{c})|^{2},\label{eq: wedges}
\end{equation}
where $z_{c,d}$ are two other corners incident to $(u\hat{u})$.
The s-holomorphicity conditions (\ref{eq: phases_condition}\textendash \ref{eq: s-hol})
imply that $F(z_{a,b,c,d})$ are orthogonal projections of the complex
number $W:=F(z_{a})+F(z_{c})=F(z_{b})+F(z_{d})$ onto the lines $\eta_{z_{a,b,c,d}}\R$,
respectively. It is now easy to check that the condition (\ref{eq: wedges})
is equivalent to $\bar{\eta}_{z_{a}}F(z_{a})/\bar{\eta}_{z_{b}}F(z_{b})>0.$ 

Now, let $z_{1}\sim z_{2}\sim\dots\sim z_{N}\in\Cgr(\Od)$ be the
shortest path on $\Cgr(\Od)$ along $\nu$ (that is, $z_{i}^{\bullet}\in\nu$
for all $i$) such that either $z_{1}^{\bullet}$,$z_{2}^{\bullet}$
are two endpoints of $\nu$, and $z_{1}^{\circ},z_{N}^{\circ}\in\fixed^{\circ}$,
or $z_{1}=z_{N}$ (in the case $\nu$ comprises an entire boundary
component). We cannot have $z_{i}^{\circ}$ separated from $z_{i+1}^{\circ}$
by an edge $(u\hat{u})$ with $\hat{u}\in\free^{\bullet}$, since
in that case, by our assumption on $\Od$, one would have $\hat{u}\in\nu$,
and the path could be shortened by ``cutting the fjord'' separated
by $(u\hat{u})$. Thus, the above considerations show that the sign
of $\bar{\eta}_{z}F_{z}$ is preserved along the path, which contradicts
either Case (3) of Definition \ref{lem: bc_discrete}, or, if $z_{1}=z_{N}$,
the fact that by our assumptions on $\cvr$, $F$ does not change
the sign along the path, while $\eta_{z}$ does.

\end{proof}

\subsection{Special s-holomorphic functions. }

In this subsection, we collect some results about special s-holomorphic
functions and spinors in the full plane $\Cd$. We will need two such
functions, which serve as discrete analogs of $z^{-1}$ and $z^{-\frac{1}{2}}$.
By now, several constructions of these functions are known, see \cite{kenyon2002laplacian,ChelkakSmirnov1,HonglerViklundKytolaCFT,ChelkakHonglerIzyurov,gheissari2019ising}.
In Appendix, we will give a construction following \cite{DubedatDimers}We
first put 
\[
\delta=1,
\]
 so that $\C^{\delta}=\C^{1}$ and $\wcgr(\Cd)=\Z^{2}+\frac{1}{2}.$
Simple scaling will then allow to handle arbitrary $\delta$.
\begin{lem}
\label{lem: dzm1}(Discrete $z^{-1}$) For each $a\in\Cgr(\C^{1})$,
there exists an s-holomorphic function $\dzmone{\cdot}a$ on $\Cgr_{\{a\}}(\C^{1})$
such that 
\[
\dzmone{a^{+}}a=\eta_{a},\quad\dzmone{a^{-}}a=-\eta_{a},
\]
and, as $z\to\infty$, one has 
\begin{equation}
\dzmone za=\frac{2}{\pi}\proj{\eta_{z}}{\bar{\eta}_{a}(z-a)^{-1}}+O(|z-a|^{-2}).\label{eq: asymp_zmone}
\end{equation}
Moreover, we have 
\begin{equation}
P_{a}(a\pm i(a^{\bullet}-a^{\circ}))=0.\label{eq: P_a_close}
\end{equation}
\end{lem}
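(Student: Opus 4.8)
The plan is to construct $\dzmone{\cdot}a$ explicitly as a contour integral of \emph{discrete exponentials} and then read off its three stated properties — the source at $a$, the decay at infinity, and the vanishing at two neighbouring corners — directly from that representation, following \cite{DubedatDimers} adapted to the $\eta$-phases of the corner graph. First I would work on the white sublattice $\wcgr(\C^1)$ only: by Lemma \ref{lem: dhol_to_shol} and Remark \ref{rem:s-hol-projections} an s-holomorphic function is determined by its discrete-holomorphic restriction to $\wcgr$ obeying the phase condition (\ref{eq: phases_condition}), so it suffices to produce a function on $\wcgr$ with $\dbar F(z)=0$ (in the sense of (\ref{eq: Disc_hol})) away from the face cut out at $a$, and then extend it by (\ref{eq: extension}). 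For each spectral parameter $\zeta$ one has a discrete exponential $e(\cdot;\zeta)$, a globally discrete-holomorphic function behaving multiplicatively like $e^{\,c\,(z-a)/\zeta}$, and I would set
\[
\dzmone za \;=\; \frac{1}{2\pi i}\oint e(z;\zeta)\,\frac{d\zeta}{\zeta-\zeta_0},
\]
with the contour and the location $\zeta_0$ (the value at which the discrete symbol degenerates) chosen so that the residue reproduces the prescribed source.

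s-holomorphicity away from $a$ is then immediate, since (\ref{eq: s-hol}) and (\ref{eq: phases_condition}) hold for each $e(\cdot;\zeta)$ and are preserved under the $\zeta$-integral; the extension (\ref{eq: extension}) upgrades this to an honest s-holomorphic function. The behaviour at $a$ comes from the single pole of the integrand: pushing the contour across $\zeta_0$ changes the branch of the function across the cut $[a^\bullet a^\circ]$, which is exactly the passage to the graph $\Cgr_{\{a\}}$, and leaves a residue that I would normalise so that $\dzmone{a^+}a=\eta_a$ and $\dzmone{a^-}a=-\eta_a$. This step pins down the overall constant in front of the integral, and I would confirm the two values by a direct residue computation at $\zeta_0$ using the explicit $e(a^\pm;\zeta)$.

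The asymptotics at infinity is the heart of the matter and the step I expect to be hardest. Writing $z-a=Re^{i\theta}$ and inserting the expansion of $e(z;\zeta)$ near $\zeta_0$ into the integral, one performs a steepest-descent analysis in $\zeta$. The saddle colliding with the pole produces the leading term, and the Gaussian factor at the saddle is precisely what generates the constant $\tfrac{2}{\pi}$ (matching $\Cpsi^2$) together with the Cauchy kernel; imposing the phase condition converts the bare angular factor into the projection $\proj{\eta_z}{\bar\eta_a(z-a)^{-1}}$. The genuine difficulty is the error bound $O(|z-a|^{-2})$ \emph{uniformly in the direction} $\theta$: one must track the saddle as $\theta$ ranges over $[0,2\pi)$, including the lattice directions where saddles nearly coincide, and show that the first subleading term is $O(R^{-2})$ rather than merely $o(R^{-1})$. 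I would obtain this from a uniform deformation to the steepest-descent paths and a careful estimate of the next term in the expansion of $\log e(z;\zeta)$, as in \cite{DubedatDimers}.

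Finally, the identity $\dzmone{a\pm i(a^\bullet-a^\circ)}a=0$ concerns the two next-nearest corners of $a$ on the same sublattice, lying in the direction orthogonal to $a^\bullet-a^\circ$. I would prove it either from a reflection symmetry of the construction across the line through $a$ in the direction $a^\bullet-a^\circ$, which forces $\dzmone{\cdot}a$ to be odd and hence to vanish at those two points, or, if no clean symmetry is available, by a direct residue evaluation of the contour integral at $z=a\pm i(a^\bullet-a^\circ)$ using the explicit values of $e(z;\zeta)$ at corners adjacent to $a$ — a finite computation once the normalisation from the previous step is fixed. Simple scaling by $\delta$ then lifts the normalisation $\delta=1$ to arbitrary mesh.
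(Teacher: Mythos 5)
Your central formula does not produce the object you need, and this is a genuine gap rather than a presentational difference. If the contour in $\frac{1}{2\pi i}\oint e(z;\zeta)\,\frac{d\zeta}{\zeta-\zeta_{0}}$ encloses only the fixed simple pole $\zeta_{0}$, the integral collapses to (a constant times) the single discrete exponential $e_{\zeta_{0}}(z)$: this is discrete holomorphic on \emph{all} of $\wcgr(\C^{1})$, has no defect at $a$, and behaves multiplicatively at infinity instead of decaying like $|z-a|^{-1}$. No choice of a fixed pole and a fixed contour can help, because the singularity of the integrand must come from the discrete exponential itself, whose poles in $\zeta$ sit at $2\sgn(\re[z-a])$ and $2i\sgn(\im[z-a])$ and hence \emph{move with $z$}. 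The paper's construction (following \cite{ChelkakSmirnov1}) is $\dzmone za=\bar{\eta}_{a}\pi^{-1}\int_{\zeta\in-\overline{(z-a)}\R_{\geq0}}e_{\zeta}(a,z)\,d\zeta$, an integral over a ray whose direction depends on $z$: since $e_{\zeta}(a,z)$ is rational in $\zeta$ with only those two poles and is $O(\zeta^{-2})$, the ray can be rotated to a common position for the corners around any fixed face (giving discrete holomorphicity there), while the residues swallowed as the ray rotates when $z$ winds around $a$ create exactly the discontinuity across the cut $[a^{\bullet}a^{\circ}]$ --- and the $z$-dependence of the direction is simultaneously what forces decay in every direction. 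This is the mechanism your ``pushing the contour across $\zeta_{0}$'' gestures at, but it cannot be realised with a stationary pole.

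The remaining steps also diverge from what is actually needed. The paper performs no steepest-descent analysis for this lemma: the asymptotics (\ref{eq: asymp_zmone}) is imported from the known expansion of the discrete Cauchy kernel in \cite{ChelkakSmirnov1}, and --- crucially --- the phase condition $\dzmone za\in\eta_{z}\R$ is \emph{exact}, not asymptotic. It follows from the identity $e_{\zeta}(a,z)=-i(-1)^{s(z)}\zeta^{-2}\,\overline{e_{4\zeta^{-1}}(a,z)}$ for $\zeta\in\R$ and the substitution $\zeta\mapsto4\zeta^{-1}$, which yields $K(a,z)=-i(-1)^{s(z)}\overline{K(a,z)}$; your plan to recover the projection structure only inside the large-$|z|$ expansion would leave the s-holomorphicity itself (which requires (\ref{eq: phases_condition}) exactly, at every corner) unproven. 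Finally, the normalisation at $a^{\pm}$ and the vanishing (\ref{eq: P_a_close}) are obtained in the paper by evaluating the kernel at the four nearest white corners, $K(a;a\pm\tfrac{1+i}{2})=\pm\lamb/\sqrt{2}$ and $K(a;a\pm\tfrac{1-i}{2})=\pm\lambb/\sqrt{2}$, and then extending by the projection rule (\ref{eq: extension}). Your reflection-symmetry alternative is shaky: the visible symmetry of the kernel is oddness under the point reflection $z\mapsto2a-z$, which only shows that the values at $a+i(a^{\bullet}-a^{\circ})$ and $a-i(a^{\bullet}-a^{\circ})$ are negatives of each other, not that they vanish. So even there you would fall back on the finite computation you mention --- which, however, requires a correct integral representation to start from.
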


\begin{lem}
\label{lem: dzm12}(Discrete $z^{-\frac{1}{2}}$) There exists an
s-holomorphic $\double 0$-spinor $\dmsqrt{\cdot}{}$ on $\Cgr(\C^{1})$
such that 
\begin{equation}
\dmsqrt z{}=\eta_{z}=\lambb(2z)^{-\frac{1}{2}}\label{eq: q-value}
\end{equation}
 for all eight corners $z\in\Cgr(\C_{[0]}^{1})$ incident to $0$,
and 
\begin{equation}
\dmsqrt z{}=\left(\frac{2}{\pi}\right)^{\frac{1}{2}}\proj{\eta_{z}}{\lambb z^{-\frac{1}{2}}}+\text{o}(|z|^{-\frac{1}{2}}),\quad z\to\infty.\label{eq: q_asymp}
\end{equation}
\end{lem}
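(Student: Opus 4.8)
The plan is to construct $Q$ as a contour integral of discrete exponential spinors in a spectral parameter, following the method of Dub\'edat already used for the discrete $z^{-1}$ of Lemma \ref{lem: dzm1}. Recall that for the operator $\dbar$ of (\ref{eq: Disc_hol}) there is a one-parameter family of \emph{discrete holomorphic exponentials} $e_t(\cdot)$, $t\in\C^\ast$, defined on $\wcgr(\C^1)$ by a multiplicative propagation rule (each unit step in a fixed lattice direction multiplies $e_t$ by an explicit M\"obius factor in $t$) and satisfying $\dbar e_t\equiv 0$ identically in $t$; as $t\to 0$ with $tz$ fixed they converge to the genuine exponential $\exp(\text{const}\cdot tz)$. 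The crucial point is that s-holomorphicity in the sense of Definition \ref{def: s-hol} is an $\R$-linear, local condition, so any superposition $z\mapsto \oint e_t(z)\,w(t)\,dt$ is automatically s-holomorphic wherever the individual $e_t$ are; one is then free to choose the weight $w(t)$ and the contour to pin down the spinor structure and the two required normalizations.

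First I would produce the ramification at $0$. Passing to the double cover of the spectral $t$-plane, I take a weight containing the factor $t^{-1/2}$ (equivalently, integrate over a keyhole contour around a branch cut of $t^{1/2}$) and set, schematically,
\[
Q(z)\ :=\ \frac{c}{2\pi i}\oint_{\Gamma} e_t(z)\,t^{-1/2}\,dt .
\]
Because the propagation factors of $e_t$ carry the position dependence, tracking $z$ once around $0$ on $\Cgr(\C^1_{[0]})$ forces the integrand to its value on the other sheet of the $t$-cover, so $Q$ is a genuine $\double 0$-spinor, while s-holomorphicity at every edge is inherited from $e_t$ as above. This is the exact analogue of the integral representation of $P_a$, the half-integer spectral weight $t^{-1/2}$ replacing the integer one responsible for (\ref{eq: asymp_zmone}).

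It then remains to verify the two stated properties and to fix $c$. For the local values (\ref{eq: q-value}) I would evaluate $Q$ at the eight corners incident to $0$ directly: there $e_t(z)$ is essentially constant in $t$, the contour integral collapses to a residue, and one checks that, with $c$ chosen appropriately, the result is exactly $\eta_z=\lambb(2z)^{-\frac12}$ on all eight lifts, with the correct sign pattern across the two sheets. For the behaviour at infinity (\ref{eq: q_asymp}) I would run a \emph{discrete steepest-descent} analysis: writing $z=Re^{i\theta}$, the phase of $e_t(z)$ has a saddle point whose location depends on $\theta$; the factor $t^{-1/2}$ evaluated at the saddle produces the power $|z|^{-1/2}$ together with the phase $\lambb z^{-1/2}$, while the Gaussian fluctuation integral around the saddle produces the numerical constant. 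Since $Q$ satisfies the phase condition (\ref{eq: phases_condition}) by construction, this is automatically recorded as $(\tfrac{2}{\pi})^{1/2}\proj{\eta_z}{\lambb z^{-1/2}}$, with error $o(|z|^{-1/2})$.

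The main obstacle is this asymptotic step: the genuine work is the discrete saddle-point estimate, carried out \emph{uniformly in the direction} $\theta=\arg z$. One must verify that the discrete exponentials approximate their continuous counterparts strongly enough near the moving saddle to legitimise the Gaussian integral and to extract the \emph{exact} constant $(2/\pi)^{1/2}$ --- the square root of the $2/\pi$ appearing in (\ref{eq: asymp_zmone}), consistent with $\Cpsi=(2/\pi)^{1/2}$ in (\ref{eq: constants}) --- and to control the error as genuinely $o(|z|^{-1/2})$ rather than merely $O(|z|^{-1/2})$. A secondary but essential point is the \emph{consistency} of the single constant $c$: the value fixed by the residue computation near $0$ must be exactly the one that yields $(2/\pi)^{1/2}$ at infinity. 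This is precisely the square-root analogue of the normalization already verified for $P_a$ in Lemma \ref{lem: dzm1}, and I would organise the construction so that the near-$0$ evaluation and the steepest-descent computation are read off from the same explicit integrand, making the consistency automatic.
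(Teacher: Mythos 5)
Your proposal follows essentially the same route as the paper's proof: there $Q$ is defined as the Dub\'edat-type spectral integral $\lamb 2^{-\frac{1}{2}}\pi^{-1}\int_{\zeta\in-\bar z\R_{\geq0}}\zeta^{-\frac{1}{2}}e_{\zeta}(0;z)\,d\zeta$, with s-holomorphicity and the $\double 0$-spinor property inherited from the discrete exponentials, the values (\ref{eq: q-value}) at the eight corners read off directly from the same integrand (using Lemma \ref{lem: dhol_to_shol} for the corners on $\bcgr$), so the consistency of the normalizing constant between $0$ and $\infty$ is automatic, exactly as you propose. The only cosmetic difference is the asymptotic step: the paper uses no interior Gaussian saddle but a Laplace-type split into the $\zeta\to0$ and $\zeta\to\infty$ regimes of $e_{\zeta}(0;z)$, each producing a $\Gamma(\frac{1}{2})$ endpoint integral, and the symmetry $\zeta\mapsto4\bar\zeta^{-1}$ (your phase-condition argument) combines the two contributions into the projection form (\ref{eq: q_asymp}).
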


We now introduce the notation for scaled and shifted versions of these
function. Note that when we shift the ramification point to be a point
of $\Cdual$, additional factor of $i$ has to be introduced to to
preserve the phase condition (\ref{eq: phases_condition}).
\begin{defn}
Denote, for $a\in\Cgr(\Cd)$ and $w\in\Cd\cup\Cdual$, 
\[
\dzmone za:=\delta^{-1}\dzmone{z\delta^{-1}}{a\delta^{-1}},\quad z\in\Cgr(\Cd)[a];
\]
\[
\dmsqrt zw:=\begin{cases}
\delta^{-\frac{1}{2}}\dmsqrt{(z-w)\delta^{-1}}{}, & w\in\Cd\\
-i\delta^{-\frac{1}{2}}\dmsqrt{(z-w)\delta^{-1}}{}, & w\in\Cdual
\end{cases},\quad z\in\Cgr(\Cd_{[w]});
\]
\end{defn}

The asymptotics (\ref{eq: asymp_zmone}), (\ref{eq: q_asymp}), translate
into the following asymptotics as $\delta\to0$, uniformly over $z$
away from $a,w$: 
\begin{equation}
\dzmone za=\frac{2}{\pi}\proj{\eta_{z}}{\bar{\eta}_{a}(z-a)^{-1}}+o(1);\label{eq: asymp_zm1}
\end{equation}
\begin{equation}
\dmsqrt zw=\begin{cases}
\left(\frac{2}{\pi}\right)^{\frac{1}{2}}\proj{\eta_{z}}{\lambb(z-w)^{-\frac{1}{2}}}+o(1), & w\in\Cd,\\
\left(\frac{2}{\pi}\right)^{\frac{1}{2}}\proj{\eta_{z}}{\lamb(z-w)^{-\frac{1}{2}}}+o(1), & w\in\Cdual
\end{cases}\label{eq: asymp_zm12}
\end{equation}

\subsection{Cauchy integral formula and the maximum principle for s-holomorphic
spinors.}

The function $\dmsqrt{\cdot}w$ allows one to recover the value of
an s-holomorphic spinor near a ramification point. Recall that if
$v\in\Od$ and $u\in\Odual$ are adjacent, then the product of two
spinors on $\Cgr(\Od)$, ramified, respectively, at $u$ and $v$,
can be seen as a \emph{function} on the modified graph $\Cutz{\Od}z,$
defined up to a global choice of signs, and whose value at $z^{\pm}$
differs by sign. 
\begin{lem}
\label{lem: Cauchy_spinors}Let $v\in\Cd$ and $u\in\Cdual$ be adjacent,
and $F$ be an s-holomorphic spinor defined in a neighborhood of $v$
and ramified at $v$. Then, for $z=\frac{v+u}{2}$, one has 
\[
F(z)=\frac{1}{4}\delta^{-\frac{1}{2}}\eta_{z}\cdot\intFQ{FQ_{u}}{\gamma},
\]
Here $\gamma$ is any simple counterclockwise contour surrounding
$u$ and $v$, and the sign of the function $FQ_{u}$ on $\Cutz{\mathrm{int}(\gamma)}z$
is related to that of $F(z)$ and $\eta_{z}$ by $F(z^{-})Q_{u}(z^{-})=F(z)\eta_{z}\delta^{-\frac{1}{2}}$. 
\end{lem}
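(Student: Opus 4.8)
The plan is to exploit the discrete Cauchy-type reproducing property built into the spinor $\dmsqrt{\cdot}{u}$, which is the discrete analog of $z^{-1/2}$ centered at $u$. The key observation is that $\dmsqrt{\cdot}{u}$ is s-holomorphic everywhere away from $u$, is ramified exactly at $u$, and is normalized so that its values at the eight corners incident to $u$ are prescribed by (\ref{eq: q-value}). The product $F(z)\dmsqrt{z}{u}$ is then a genuine (single-valued up to global sign) function on the cut graph $\Cutz{\mathrm{int}(\gamma)}{z}$, since both factors are spinors ramified at the two branch points $u$ and $v$ lying on the same lattice edge, and their product has trivial monodromy around any loop encircling both. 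First I would check that the discrete differential form $F(z)\dmsqrt{z}{u}\,(z^{\bullet}-z^{\circ})$ is closed on the region of joint s-holomorphicity, exactly as in the derivation of (\ref{eq: H_closed_form}): plugging $F_{1}=F$ and $F_{2}=\dmsqrt{\cdot}{u}$ into the bilinear version of that identity shows the form $F\cdot\dmsqrt{\cdot}{u}\,(z^{\bullet}-z^{\circ})$ is closed at every edge where both functions are s-holomorphic, i.e.\ everywhere inside $\gamma$ except near the common singularity at the edge $(uv)$.

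Next I would compute the contour integral $\intFQ{F\dmsqrt{}{u}}{\gamma}$ by contracting $\gamma$ down to a minimal contour around the edge $(uv)$. Closedness of the form away from the singularity means the integral over $\gamma$ equals the integral over this minimal loop, so the entire contribution is \emph{local} and is read off from the behavior of both spinors at the corners immediately adjacent to $u$ and $v$. Here the normalization (\ref{eq: q-value}), namely $\dmsqrt{z}{u}=\eta_{z}$ at the corners incident to $u$, together with the vanishing property analogous to (\ref{eq: P_a_close}) and the sign convention $F(z^{-})\dmsqrt{z^{-}}{u}=F(z)\eta_{z}\delta^{-\frac{1}{2}}$, is what pins down the proportionality constant. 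The hard part, and the place I would be most careful, is the bookkeeping of the spinor signs around the cut $[vu]$: as one traverses the minimal contour, $F$ picks up a sign from ramification at $v$ and $\dmsqrt{\cdot}{u}$ from ramification at $u$, and one must verify that on the cut graph $\Cutz{}{z}$ these combine coherently so that the surviving local terms at the four corners incident to the edge $(uv)$ add up (rather than cancel) and produce precisely the value $F(z)$ times the stated constant.

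Finally I would assemble the pieces: the contraction argument reduces the right-hand side to a finite sum over the corners incident to $(uv)$, the s-holomorphicity relation (\ref{eq: s-hol}) at that edge lets me express those four corner values of $F$ in terms of the single number $F(z)$ (via the projection description in Remark \ref{rem:s-hol-projections}), and the explicit values of $\dmsqrt{}{u}$ from (\ref{eq: q-value}) supply the matching coefficients. Tracking the factor $\tfrac{1}{4}\delta^{-\frac{1}{2}}\eta_{z}$ then amounts to collecting the four equal local contributions (hence the $\tfrac{1}{4}$ after the sum is normalized) and the scaling factor $\delta^{-\frac{1}{2}}$ inherited from the definition of $\dmsqrt{\cdot}{u}$ at mesh size $\delta$. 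I expect the scaling and the overall prefactor to be routine once $\delta=1$ is handled and simple scaling is invoked; the genuine obstacle remains the sign/monodromy accounting on the cut graph, which is the analytic heart of why this Cauchy-type formula holds.
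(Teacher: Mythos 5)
Your overall architecture coincides with the paper's proof: $FQ_{u}$ is a well-defined function on $\Cutz{\mathrm{int}(\gamma)}{z}$ because the two ramifications compensate, the polarization of (\ref{eq: H_closed_form}) makes the discrete form $F\,Q_{u}\,(z^{\bullet}-z^{\circ})$ closed on the cut graph, and one contracts $\gamma$. The gap is in your endgame. The segment $[vu]$ is \emph{not} an edge of $\Cd$: it is the branch cut itself, a vertex-to-dual-vertex segment whose midpoint is the single corner $z$, split by the cut into $z^{+}$ and $z^{-}$. Consequently there is no s-holomorphicity relation (\ref{eq: s-hol}) ``at the edge $(uv)$'', there are no ``four corners incident to $(uv)$'' to sum over, and the factor $\tfrac{1}{4}$ does not arise from four equal local contributions. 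The correct minimal contour is the degenerate bigon consisting of just two edges of the integration lattice: one from $z^{\bullet}$ to $z^{\circ}$ through $z^{-}$, the other from $z^{\circ}$ back to $z^{\bullet}$ through $z^{+}$. Its value is
\[
-2i\,F(z^{-})Q_{u}(z^{-})(v-u)-2i\,F(z^{+})Q_{u}(z^{+})(u-v)=-4i\,F(z^{-})Q_{u}(z^{-})(v-u)=4F(z)\bar{\eta}_{z}\delta^{\frac{1}{2}},
\]
the two terms doubling because $FQ_{u}$ flips sign across the cut while the direction of integration also reverses, and the last equality using the sign convention $F(z^{-})Q_{u}(z^{-})=F(z)\eta_{z}\delta^{-\frac{1}{2}}$ from the statement together with $\eta_{z}^{2}=i\delta\,(z^{\bullet}-z^{\circ})^{-1}$. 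The prefactor $\tfrac{1}{4}\delta^{-\frac{1}{2}}\eta_{z}$ simply inverts this.

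Two further points. The explicit values (\ref{eq: q-value}) of $Q$ near its ramification point do not enter the contour computation at all: everything the identity needs about $Q_{u}$ is packaged into the stated sign convention (those values only serve to check that this convention is consistent). And there is no ``vanishing property analogous to (\ref{eq: P_a_close})'' for $Q$ — that identity is specific to $P_{a}$ and plays no role here. Your instinct to express corner values of $F$ via the projections of Remark \ref{rem:s-hol-projections} would in any case not close the argument: at a genuine lattice edge the four corner values are projections of one complex number $F(e)$, and knowing the single projection $F(z)$ does not determine the others, so any contour larger than the bigon involves data not expressible through $F(z)$ alone.
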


\begin{proof}
Note that $FQ_{u}$ is a well defined \emph{function} on $\Cutz{\mathrm{int}(\gamma)}z$,
where $\mathrm{int}(\gamma)$ is the domain enclosed by $\gamma$.
Therefore, by the above discussion, the discrete differential form
$F(z)Q_{u}(z)(z^{\bullet}-z^{\circ})$ is closed in $\mathrm{int}(\gamma)$,
and we can contract $\gamma$ to the contour consisting of just two
edges, one going from $z^{\bullet}$ to $z^{\circ}$ through $z^{-}$,
and another going from $z^{\circ}$ to $z^{\bullet}$ through $z^{+}$.
The discrete integral over these two edges is 
\[
-2i\cdot F(z^{-})Q_{u}(z^{-})(v-u)-2i\cdot F(z^{+})Q_{u}(z^{+})(u-v)=-4i\cdot F(z^{-})Q_{u}(z^{-})(v-u)=4F(z)\bar{\eta}_{z}\delta^{\frac{1}{2}}.
\]
\end{proof}
\begin{lem}
(Maximum principle for spinors \cite{DubedatDimers}). \label{lem: Maximum_spinors}Let
$F$ be an s-holomorphic $\cvr(v)$-spinor on $\Cgr(\C^{\delta}\cap B_{R}(v))$
such that $F(z_{0})=0$ for one of the corners $z_{0}\in\Cgr(\Cd)$
adjacent to $v$. Then, $\Delta_{i\eta}F\equiv0$, where $\eta=\eta_{z_{0}}$.
In particular, 
\[
|F(z)|\leq\sup_{w\in B_{R}(v)\setminus B_{R/2}(v)}|F(w)|,\quad z\in\Cgr^{i\eta}(\Cd\cap B_{R/2}(v)).
\]
\end{lem}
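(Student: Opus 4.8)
The plan is to establish the discrete harmonicity $\Delta_{i\eta}F\equiv0$ on $\Cgr^{i\eta}(\Cd\cap B_R(v))$ and then read off the maximum-modulus bound from the discrete maximum principle. The reduction of the bound to harmonicity is routine: on the sublattice $\Cgr^{i\eta}$ of corners $z$ with $\ds z\in\{\pm i\eta\}$, the phase condition (\ref{eq: phases_condition}) forces $g:=F/(i\eta)$ to be real-valued, so $g$ is a real discrete-harmonic $\double v$-spinor; its modulus $|F|=|g|$ descends to a single-valued function on $\Cgr^{i\eta}(\Cd)$, and by convexity of $|\cdot|$ together with the mean-value property it is discrete subharmonic. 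The discrete maximum principle on $B_{R/2}(v)$ then bounds $|F|$ inside by its supremum over the discrete boundary, which lies in the annulus $B_R(v)\setminus B_{R/2}(v)$, giving exactly the stated inequality.

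The substance is the harmonicity itself. I would first establish, by a local computation from the s-holomorphicity relation (\ref{eq: s-hol}) and the phase condition (\ref{eq: phases_condition}), that the restriction of an s-holomorphic function to any monochromatic sublattice $\Cgr^{\alpha}$ is discrete harmonic for the nearest-neighbour Laplacian of that (square) sublattice, wherever the function is s-holomorphic and single-valued (cf. \cite{DubedatDimers}). Since $F$ is a $\double v$-spinor, the only vertices of $\Cgr^{i\eta}$ where single-valuedness can fail are the two lifts of the corner adjacent to $v$ of direction $-i\eta$, i.e. the corner opposite to $z_0$ across $v$. Computing $\Delta_{i\eta}F$ there requires transporting two of the four same-direction neighbours across $v$; because of the sign change of the spinor around the ramification point, this produces a defect term that is a fixed multiple of the values of $F$ at the corners adjacent to $v$ of direction $\pm\eta$ --- that is, of $F(z_0)$ and $F(z_0^*)=-F(z_0)$. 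The hypothesis $F(z_0)=0$ makes this defect vanish, so $\Delta_{i\eta}F=0$ holds also at the ramification-adjacent vertices, hence on all of $\Cgr^{i\eta}(\Cd\cap B_R(v))$.

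I expect the main obstacle to be precisely this local computation at $v$: keeping track of the branch-cut conventions for the $\double v$-spinor and the signs of $\ds z$ around $v$, so as to verify that the defect is exactly proportional to $F(z_0)$, and not to some other combination that the vanishing at $z_0$ fails to kill. Conceptually, $F(z_0)=0$ means that $F$ behaves near $v$ like the discrete analogue of $(z-v)^{\frac12}$ rather than $(z-v)^{-\frac12}$: the spinor $\dmsqrt{\cdot}{}$ of Lemma \ref{lem: dzm12}, for which $F(z_0)=\ds{z_0}\neq0$, shows that the hypothesis cannot be dropped, since $|\dmsqrt{\cdot}{}|$ attains its largest values near the ramification point. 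A clean way to organise the sign bookkeeping, which I would attempt, is to unfold $F$ by a point reflection through $v$ and exploit that $F(z_0)=0$ removes the jump across the branch cut emanating from $v$ through $z_0$, turning the two sheets into a single discrete-harmonic function on $\Cgr^{i\eta}$.
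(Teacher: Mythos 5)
Your proposal is correct and takes essentially the same route as the paper: the paper also reduces the bound to discrete harmonicity of the restriction to $\Cgr^{i\eta}$ (via the factorization $\Delta_{i\eta}=4\pa\dbar$ on $3\times3$ plackets of $\Cgr^{\eta}\cup\Cgr^{i\eta}$), and it handles the unique problematic sublattice vertex $z=v+(v-z_{0})$ precisely by the device you describe in your closing paragraph \textemdash{} since $F(z_{0})=0=-F(z_{0})$, the branch cut may be chosen to pass through $z_{0}$, so either branch of $F$ is a well-defined discrete holomorphic function on the placket. Your direct defect computation at $z$ is just a more explicit variant of this same step, and your identification of the failure locus and the counterexample $\dmsqrt{\cdot}{}$ are both consistent with the paper.
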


\begin{proof}
Let $z\in\Cgr^{i\eta}(\Cd)$, and denote by $\placket_{z}$ the $3\times3$
placket of $\Cgr^{\eta}(\Cd)\cup\Cgr^{i\eta}(\Cd)$ centered at $z$:
\[
\placket_{z}:=\{w\in\C:|\re w-\re z|\leq\delta,|\im w-\im z|\leq\delta\}.
\]
We can view the restriction of either of the two branches of $F$
to $\mathcal{S}_{z}$ as a discrete holomorphic function on this placket.
Indeed, if $z$ is not incident to $v$, then $\mathcal{S}_{z}$ does
not contain $v$. If $z\sim v$, that is, $z=v+(v-z_{0}),$ then,
since $F(z_{0})=0=-F(z_{0})$, we can choose a branch cut passing
through $z_{0}$. In either case, the discrete harmonicity at $z$
follows in the usual way from the factorization $\Delta_{i\eta}=4\pa\dbar$.
\end{proof}

\subsection{Proofs of the convergence theorems. }

The convergence  Theorems \ref{thm: conv_bulk_bulk}, \ref{thm: Convergence_singular},
\ref{thm: Convergence_bulk_near}, \ref{thm: Conv_both_near_spin}
are stated in the form of uniform estimates. However, by compactness,
they can be reformulated as follows: to prove, e. g. Theorem \ref{thm: conv_bulk_bulk},
it suffices to show that if, as $\delta_{k}\to0$, sequences $z_{1}^{\delta_{k}}\in\Cgr^{\eta_{1}}(\Ocvrc^{\delta_{k}})$
and $z_{2}^{\delta_{k}}\in\Cgr^{\eta_{2}}(\Ocvrc^{\delta_{k}})$ with
fixed $\eta_{1,2}$ converge to distinct points $z_{1,2}\in\Omega$,
then 
\begin{equation}
\bar{\eta}_{1}\delta_{k}^{-1}F(z_{1}^{\delta_{k}},z_{2}^{\delta_{k}})\to\proj{\eta_{2}}{f^{[\eta_{1}]}(z_{1},z_{2})}.\label{eq: conv_reduced}
\end{equation}
Indeed, if the conclusion of Theorem \ref{thm: conv_bulk_bulk} did
not hold, we could arrive at a contradiction to the above claim by
extracting a subsequence. In the following proofs, we will assume
that this and similar reductions have been made, and we will drop
the subscript $k$. 
\begin{proof}[Proof of Theorem \ref{thm: conv_bulk_bulk}]

Define the function $H^{\delta}(w):=\intFsquare{F^{2}(z)\,}w$ according
to Definition \ref{def: H} with $F(\cdot)=\bar{\eta_{1}}\delta^{-1}F(z_{1}^{\delta},\cdot)$,
$\eta_{1}:=\eta_{z_{1}},$ fixing the additive constant so that $H^{\delta}\equiv0$
at one of the boundary arcs (recall Lemma \ref{lem: H_bc}). We first
assume that 

\begin{itemize}[leftmargin=2cm]
\item[\textbf{(UBC)}]  $H^{\delta}(\cdot)$ is uniformly bounded on compact subsets of
$\Omega\setminus\{z_{1}\}$ and on $\pa\Od$.
\end{itemize}
Under this assumption, \cite[Theorem 3.12]{ChelkakSmirnov2} and Arzela-Ascoli's
theorem allows one to extract a convergent subsequence. Assuming such
an extraction has been made, there exists a holomorphic $\cvr$-spinor
$f$ on $\Omega$ such that 
\[
\delta^{-1}\bar{\eta}_{z_{1}^{\delta}}F(z_{1}^{\delta},z_{2}^{\delta})-\proj{\eta_{z_{2}}}{f(z_{2}^{\delta})}=o(1)
\]
uniformly over $z_{2}^{\delta}$ in compact subsets of $\Omega\setminus\{z_{1}\}$;
then also 
\[
H^{\delta}(\cdot)\to h(\cdot):=\im\int f^{2}
\]
uniformly on compact subsets of $\Omega\setminus\{z_{1}\}$. Our goal
is to deduce from Remark (\ref{rem: cont_obs_existence_and_uniqueness})
that $f(\cdot)\equiv\Cpsi^{2}f_{\Omega,\cvr}^{[\eta]}(z_{1},\cdot)$.

We start by proving that $f$ satisfies the standard boundary conditions.
Following Remark \ref{rem: bc_h_to_bc_f}, we will first check that
conditions (1)\textendash (5) of Proposition \ref{prop: f_to_h} are
satisfied. From convergence of the discrete harmonic measure to the
continuous one, we know that $h$ is a constant along each free or
fixed boundary arc, and its value is the limit the value of $H^{\delta}$
on the corresponding arc; ensuring (1). Moreover, \cite[Remark 6.3]{ChelkakSmirnov2}
shows that for every $a\in\fixed$, there is an $\varepsilon>0$ such
that $h(z)\geq h(a)$ for every $z\in B_{\varepsilon}(a)$. After
conformal mapping to a nice domain, this implies (2) of Proposition
\ref{prop: f_to_h}. The same argument with primal and dual lattices
exchanged implies (3) of Proposition \ref{prop: f_to_h}. Since $\cvr$
is only ramified at vertices and not dual vertices, $H^{\bullet}$
satisfies the maximum principle, which means that for each single-point
boundary component $v$, we have 
\[
h(z)\leq\sup_{w\in B_{\eps}(v)\setminus B_{\eps/2}(v)}h(w),
\]
for $\eps$ small enough. This proves (5) of Proposition \ref{prop: f_to_h}.
To prove (4), we notice that by (\ref{eq: def_H}), the jump in the
constant value of $H^{\delta}$ between a fixed arc $\nu_{1}$ and
a free arc $\nu_{2}$ is proportional to $|F(\hat{z})|^{2}$, where
$\hat{z}$ is such that $\hat{z}^{\bullet}\in\nu_{2}$ and $\hat{z}^{\circ}\in\nu_{1}$.
By the third case in the definition of standard boundary conditions
(Definition \ref{def: Standard_BC}), we see that these jumps at the
endpoints of any free arc are negatives of each other. 

It remains to check that the sign condition in (\ref{eq: bc_free_arc})
holds. Let $\nu$ be a free boundary arc of $\pa\Omega$, and let,
for a fixed small $\epsilon>0$, $\hat{\Omega}\subset\Omega$ be a
simply-connected domain whose boundary coincides with $\pa\Omega$
in the $\epsilon$-neighborhood of $\nu$. Let $\hat{\Omega}^{\delta}$
be discrete simply-connected approximations to $\hat{\Omega}$ whose
boundaries coincide with $\pa\Od$ in the $\epsilon$-neighborhood
of $\nu$. We claim that it suffices to find \emph{any} sequence $\hat{F}^{\delta}(\cdot)$
of s-holomorphic functions on $\hat{\Omega}^{\delta}$, satisfying
the standard boundary conditions in the $\epsilon$-neighborhood of
$\nu$, converging to a holomorpic function $\hat{f}$ uniformly on
compact subsets of $\hat{\Omega}$, such that the corresponding $\hat{H}^{\delta}$
are uniformly bounded in an $\epsilon$-neighborhood of $\nu$, and
such that $\hat{f}$ satisfies \ref{eq: bc_free_arc} with $\hat{c}_{1,2}\neq0$.
Indeed, given such a sequence, consider the linear combination $\check{F}(\cdot):=\delta^{-1}F(z_{1}^{\delta},\cdot)-\alpha_{\delta}\hat{F}(\cdot)$
together with the corresponding function $\check{H}^{\delta}$, with
$\alpha_{\delta}$ chosen in such a way that the jump in the boundary
value of $\check{H}^{\delta}$ at one of the endpoints of $\nu^{\delta}$
vanishes, where $\nu^{\delta}$ is the discrete counterpart of $\nu$.
By the third case in the Definition \ref{def: Standard_BC}, the jump
of $\check{H}^{\delta}$ also vanishes at the other endpoint of $\nu^{\delta}$.
Note that $|\alpha_{\delta}|\to|\alpha|:=|c_{1}/\hat{c}_{1}|$, and,
passing to a subsequence if needed, we may assume that $\alpha_{\delta}\to\alpha\in\R$.
But then the function 
\[
\check{h}:=\lim\check{H}^{\delta}=\im\int(f-\alpha f)^{2}
\]
has no jumps at the endpoint of $\nu$, which implies that (\ref{eq: bc_free_arc})
is satisfied with $c_{1,2}=\alpha\hat{c}_{1,2}$.

The desired sequence $\check{F}^{\delta}$ can been exhibited as follows:
impose fixed boundary conditions on $\hat{\Omega}^{\delta}\setminus\nu^{\delta}$
and put $\check{F}^{\delta}(\cdot):=\beta_{\delta}F_{\hat{\Omega}^{\delta}}(\hat{z}^{\delta},\cdot)$,
where $\hat{z}^{\delta}\in\pa\hat{\Omega}^{\delta}$ converges to
$z\in\pa\hat{\Omega}\setminus\nu$, and $\beta_{\delta}$ is an appropriate
normalization constant. It coincides with the observable in \cite[Proposition 1.1]{IzyurovFree},
whose limit was computed explicitly and does satisfy the sign condition
in (\ref{eq: bc_free_arc}). This concludes the proof that $f$ satisfies
the standard boundary conditions on $\pa\Omega\setminus z_{1}$. 

Next, we prove that $\Cpsi^{-2}f$ satisfies the asymptotics (\ref{eq: f_residue}).
We work with the restriction of $F(z_{1}^{\delta},\cdot)$ to a small
neighborhood $B_{\eps}(z_{1})$ of $z_{1}$ (on one of the sheets
of the double cover). We have $\bar{\eta}_{1}\cdot F(z_{1}^{\delta},(z_{1}^{\delta})^{\pm})=\pm\eta_{1}$.
Therefore, the difference 
\[
\hat{F}(\cdot):=\bar{\eta}_{1}\cdot\delta^{-1}F(z_{1}^{\delta},\cdot)-\dzmone{\cdot}{z_{1}^{\delta}}
\]
is an s-holomorphic function on $\Cgr(\Cd\cap B_{\eps}(z_{1}))[z_{1}^{\delta}]$
which vanishes at $(z_{1}^{\delta})^{\pm}$, and thus it can be viewed,
by identifying $(z_{1}^{\delta})^{\pm}$, as an s-holomorphic function
on $\Cd\cap B_{\eps}(z_{1})$. In particular, by maximum principle,
\[
|\hat{F}(z)|\leq\sup_{w\in B_{\eps}(z_{1})\setminus B_{\eps/2}(z_{1})}|\hat{F}(w)|,\quad z\in B_{\eps/2}(z_{1}).
\]
By passing to the limit, using asymptotics (\ref{eq: asymp_zm1}),
we see that 
\[
\left|f(z)-\left(\frac{2}{\pi}\right)\frac{\bar{\eta}_{1}}{z-z_{1}}\right|\leq\sup_{w\in B_{\eps}(z_{1})\setminus B_{\eps/2}(z_{1})}|f(w)|+\frac{2}{\eps},\quad z\in B_{\eps/2}(z_{1}),
\]
thus proving that $\frac{\pi}{2}f$ satisfies (\ref{eq: f_residue}).

Hence, we indeed have $f(\cdot)=\Cpsi^{2}f_{\Omega,\cvr}^{[\eta_{1}]}(z_{1},\cdot),$
and it remains to justify the assumption (UBC). Given a small fixed
$\eps>0$, denote $\Od_{\eps}:=\Od\setminus B_{\eps}$, where $B_{\eps}$
is the union of $\eps$-neighborhoods of $z_{1}$ and all single-point
boundary components. If (UBC) does not hold, we may assume, by extracting
a subsequence, that there is a number $\eps>0$ such that 
\begin{equation}
M_{\eps}^{\delta}:=\max_{\Od_{\eps}}|H^{\delta}|\to+\infty.\label{eq: M_delta_to_infty}
\end{equation}
The functions $\tilde{H}^{\delta}(\cdot):=\left(M_{\eps}^{\delta}\right)^{-1}H^{\delta}$
are uniformly bounded on $\Omega_{\eps}^{\delta}$, and therefore,
by \cite[Theorem 3.12]{ChelkakSmirnov2}, $\tilde{F}^{\delta}(\cdot):=\left(M_{\eps}^{\delta}\right)^{-\frac{1}{2}}\bar{\eta_{1}}\delta^{-1}F(z_{1}^{\delta},\cdot)$
are also uniformly bounded, say, on $\Od_{2\eps}\setminus\Od_{5\eps}$.
We claim that, in fact, $\tilde{H}^{\delta}$ are uniformly bounded
on $\Omega_{\hat{\eps}}^{\delta}$ for every $\hat{\eps}>0$. We will
prove the claim for $\tilde{F}^{\delta}$; then, it follows readily
for $\tilde{H}^{\delta}$ by integration. From the maximum principle
applied to $\tilde{F}(\cdot)-(M_{\eps}^{\delta})^{-\frac{1}{2}}\delta^{-1}P_{z_{1}}(\cdot),$
we see that $\tilde{F}^{\delta}(\text{\ensuremath{\cdot}})$ are uniformly
bounded on $B_{3\eps}(z_{1})\setminus B_{\hat{\eps}}(z_{1}),$ for
every $\hat{\eps}>0$. Now, let $\{v\}$ be a single-point boundary
component, $v^{\delta}$ the corresponding vertex of $\Omega^{\delta}$.
By Lemma \ref{lem: Cauchy_spinors}, if $z_{0}^{\delta}$ is one of
the corners adjacent to $v^{\delta}$, then $\tilde{F}(z_{0}^{\delta})=O(\delta^{-\frac{1}{2}}).$
Let $\beta_{\delta}\in\R$ be chosen in such a way that $\tilde{F}(z_{0}^{\delta})=\beta_{\delta}\dmsqrt{z_{0}^{\delta}}{v^{\delta}}$,
so that $\beta_{\delta}=O(1).$ Now, by the maximum principle (Lemma
(\ref{lem: Maximum_spinors})) applied to to $\tilde{F}(\cdot)-\beta_{\delta}\dmsqrt{\cdot}{v^{\delta}}$,
the restrictions of these functions to $\Cgr^{i\eta}$ are uniformly
bounded over the entire $B_{2\eps}(v)$. By (\ref{eq: asymp_zm12})
and the boundedness of $\beta_{\delta}$, we get that $\beta_{\delta}\dmsqrt{\cdot}{v^{\delta}},$
and hence $\tilde{F}(\cdot),$ are uniformly bounded on $B_{2\eps}(v)\setminus B_{\hat{\eps}}(v).$
Since we can take $z_{0}$ to be any of the four corners adjacent
to $v^{\delta}$, this yields that indeed, $\tilde{F}^{\delta}$ is
uniformly bounded on $\Od_{\hat{\eps}}$ for every $\hat{\eps}>0$.

We may therefore assume by another extraction that there exists a
holomorphic spinor $\tilde{f}$ such that $\tilde{F}(z_{2}^{\delta})-\proj{z_{2}^{\delta}}{\tilde{f}(z_{2})}=o(1)$
uniformly on compact subsets of $\Omega\setminus\{z_{1}\}.$ The same
arguments as above imply that $\tilde{f}$ satisfies the standard
boundary conditions on $\pa\Omega\setminus\{z_{1}\}$, and $\tilde{F}(\cdot)-(M_{\eps}^{\delta})^{-\frac{1}{2}}\delta^{-1}P_{z_{1}}(\cdot)$
is uniformly bounded in the neighborhood of $z_{1}$. Passing to the
limit, we see, by (\ref{eq: asymp_zm1}) and (\ref{eq: M_delta_to_infty}),
that $\tilde{f}$ extends holomorphically to $z_{1}$. But then, by
Proposition \ref{prop: Uniqueness_continuous}, $\tilde{f}\equiv0$,
and hence $h:=\lim_{\delta\to0}\tilde{H}^{\delta}=\im\int f^{2}\equiv0$.
This is a contradiction with the choice of $M_{\eps}^{\delta}$, which
proves (UBC) and thus concludes the proof.
\end{proof}

\begin{proof}[Proof of Theorem \ref{thm: Convergence_bulk_near}]
 Denote 
\[
\hat{F}(\cdot):=\bar{\eta}_{1}\delta^{-1}F(z_{1}^{\delta},\cdot)-\dzmone{\cdot}{z_{1}^{\delta}};\quad\hat{f}(w):=\Cpsi^{2}\left(f^{[\eta_{1}]}(z_{1},w)-\frac{\bar{\eta}_{1}}{w-z_{1}}\right),
\]
where, as above, $\eta_{1}=\eta_{z_{1}}$ is assumed to be fixed.
By Theorem $\ref{thm: conv_bulk_bulk}$ and (\ref{eq: asymp_zmone}),
we have 
\[
\hat{F}(w)-\proj{\eta_{w}}{\hat{f}(w)}=o(1),
\]
uniformly in $w\in B_{2\eps}(z_{1})\setminus B_{\eps}(z_{1}),$ for
every $\eps>0$. As explained above, $\hat{F}(\cdot)$ is s-holomorphic
in $B_{2\eps}(z_{1})$ for $\eps$ small enough. Therefore, the maximum
principle implies that for every fixed $\eps>0$, we have 
\[
\inf_{w\in\left(B_{2\eps}(z_{1})\setminus B_{\eps}(z_{1})\right)}\bar{\eta}_{2}\proj{\eta_{2}}{\hat{f}(w)}-\eps\leq\bar{\eta}_{2}\hat{F}(z_{2}^{\delta})\leq\sup_{w\in\left(B_{2\eps}(z_{1})\setminus B_{\eps}(z_{1})\right)}\bar{\eta}_{2}\proj{\eta_{2}}{\hat{f}(w)}+\eps,
\]
where $\eta_{2}=\eta_{z_{2}^{\delta}},$ provided that $\delta$ is
small enough. Since, by (\ref{eq: f_residue}), $\hat{f}$ extends
holomorphically to $w=z_{1}$, this implies, by letting $\eps\to0$,
that 
\[
\hat{F}(z_{2}^{\delta})\to\proj{\eta_{2}}{\hat{f}(z_{1})},\quad\delta\to0.
\]
Since, by (\ref{eq: P_a_close}), we have $\bar{\eta}_{1}\delta^{-1}F(z_{1}^{\delta},z_{2}^{\delta})=\hat{F}(z_{2}^{\delta}),$
it remains to notice that $\eta_{2}=i\eta_{1}$, and combine the decomposition
(\ref{eq: f_eta_decomp}) with the asymptotics (\ref{eq: f_fdag_expansion})
and the fact that $\fdag(z_{1},z_{1})\in i\R$.
\end{proof}

\begin{proof}[Proof of Theorem \ref{thm: Convergence_singular}]
. We give the proof case by case, first doing the case of $z_{1}^{\delta}$
in the bulk and $z_{2}^{\delta}$ adjacent to a spin, i. e., $z_{1}^{\delta}\to z\in\Ocvrc$
and $(z_{2}^{\delta})^{\circ}=v^{\delta}\in\{v_{1},\dots,v_{n}\}$.
We can write, using Lemma \ref{lem: Cauchy_spinors}, 
\[
\delta^{-\frac{1}{2}}F(z_{1}^{\delta},z_{2}^{\delta})=\frac{\delta^{-1}}{4}\intFQ{F(z_{1},z)Q_{z_{2}^{\bullet}}(z)}{\gamma},
\]
where $\gamma$ is a contour $B_{2\eps}(v)\setminus B_{\eps}(v)$
encircling $v$, for a small fixed $\eps.$ We know from Theorem \ref{thm: conv_bulk_bulk}
that $\delta^{-1}F(z_{1}^{\delta},z)=\frac{2}{\pi}\proj{\eta_{z}}{f^{[\eta_{z_{1}}]}(z_{1},z)}+o(1)$
uniformly on $B_{2\eps}(v)\setminus B_{\eps}(v);$ taking into account
the asymptotics (\ref{eq: asymp_zm12}), we apply Remark \ref{rem: conv_f_conv_H}
to get 
\begin{multline*}
\delta^{-\frac{1}{2}}F(z_{1}^{\delta},z_{2}^{\delta})\to\frac{1}{4}\left(\frac{2}{\pi}\right)^{\frac{3}{2}}\im\int_{\gamma}f^{[\eta_{z_{1}}]}(z_{1},z)\lamb(z-v)^{-\frac{1}{2}}dz\\
=\frac{1}{4}\left(\frac{2}{\pi}\right)^{\frac{3}{2}}\im\left(2\pi if^{[\eta_{z_{1}},\sharp]}(z_{1},v)\right)=\left(\frac{2}{\pi}\right)^{\frac{1}{2}}f^{[\eta_{z_{1}},\sharp]}(z_{1},v)
\end{multline*}
It is straightforward to see from the proof that the signs are consistent
with Remark~\ref{rem: Signs}.

We now consider the case when $z_{1}^{\delta}\to z\in\Omega_{\cvr}$
and $z_{2}^{\delta}$ is separating a fixed and a free boundary arc,
i. e., $(z_{2}^{\delta})^{\circ}\in\fixed$ and $(z_{2})^{\bullet}\in\free$.
Recall that the function 
\[
h(w):=\Cpsi^{4}\im\int^{w}(f^{[\eta_{1}]})^{2}
\]
is the limit of $H^{\delta}$ defined from $\bar{\eta}_{1}\delta^{-1}F^{\delta}$
by (\ref{eq: def_H}). The jump in the boundary values of $h$ at
$z_{2}$, which is given by $\pi\Cpsi^{4}f^{[\eta_{1},\flat]}(z_{1},z_{2})^{2}$,
is also the limit of the corresponding jump for $H^{\delta}$, which,
by (\ref{eq: def_H}), is given by $\delta^{-1}|F(z_{1},z_{2})|^{2}=\delta^{-1}(\bar{\eta}_{1}\bar{\eta}_{2}F(z_{1},z_{2}))^{2}$.
This gives the required result up to sign. Checking that the sign
is in accordance with Remark \ref{rem: Signs} is similar to checking
(\ref{eq: bc_free_arc}) in the proof of Theorem \ref{thm: conv_bulk_bulk};
we leave the details to the reader.

The above argument in particular shows that the s-holomorphic function
$\bar{\eta}_{2}\delta^{-\frac{1}{2}}F(\cdot,z_{2}^{\delta})$ is within
$o(1)$ from $C_{2}\cdot\proj{\eta_{1}}{-f^{[\any]}(z_{2},\cdot)},$
uniformly in the bulk and away from other marked points. Hence, exactly
the same proof applied to $\bar{\eta}_{2}\delta^{-\frac{1}{2}}F(\cdot,z_{2}^{\delta})$
and $-f^{[\any]}(z_{2},\cdot)$ in the place of $\bar{\eta}_{1}\delta^{-1}F(z_{1}^{\delta},\cdot)$
and $f^{[\eta_{1}]}(z_{1},\cdot)$ respectively, settles the remaining
cases.
\end{proof}

\begin{proof}[Proof of Theorem \ref{thm: Conv_both_near_spin}]
Note that by definition, $\bar{\eta}_{z_{1}}F(z_{1},\cdot)$ can
be viewed as an s-holomorphic spinor ramified at $u,v_{1},\dots\hat{v}_{s},\dots,v_{n}$,
where $u:=z_{1}^{\bullet}$. We can therefore write, by Lemma \ref{lem: Cauchy_spinors},
\[
\bar{\eta}_{z_{2}}\bar{\eta}_{z_{1}}F(z_{1},z_{2})-\bar{\eta}_{z_{1}}^{2}F(z_{1},z_{1})=\frac{1}{4}\delta^{-\frac{1}{2}}\intFQ{\bar{\eta}_{z_{1}}F(z_{1},z)\left(\dmsqrt z{z_{2}^{\circ}}-\dmsqrt z{z_{1}^{\circ}}\right)}{\gamma}
\]
for a contour $\gamma$ staying in the annulus $B_{2\eps}(v_{s})\setminus B_{\eps}(v_{s})$
for a small but fixed $\eps.$ We can rewrite $\dmsqrt z{z_{2}^{\circ}}-\dmsqrt z{z_{1}^{\circ}}=\dmsqrt{z+z_{1}^{\circ}-z_{2}^{\circ}}{z_{1}^{\circ}}-\dmsqrt z{z_{1}^{\circ}}.$
Now, note that for each $z$, $z+z_{1}^{\circ}-z_{2}^{\circ}\in\Cgr^{\eta_{z}}$
and $z\in\in\Cgr^{\eta_{z}}$; the restriction of $\dmsqrt z{z_{1}^{\circ}}$
to each of these four sub-lattices are locally harmonic functions
that converge uniformly on $B_{2\eps}(v_{s})\setminus B_{\eps}(v_{s})$
to a harmonic function. It is well known that this implies uniform
converges of finite difference to the derivatives; in other words,
\[
\dmsqrt z{z_{2}^{\circ}}-\dmsqrt z{z_{1}^{\circ}}=\left(\frac{2}{\pi}\right)^{\frac{1}{2}}\proj{\eta_{z}}{(z_{1}^{\circ}-z_{2}^{\circ})\pa_{z}\left[\lambb(z-z_{1}^{\circ})^{-\frac{1}{2}}\right]}+o(\delta),
\]
uniformly in $B_{2\eps}(v_{s})\setminus B_{\eps}(v_{s}).$ Consequently,
by Remark \ref{rem: conv_f_conv_H}, we get 
\begin{multline*}
\bar{\eta}_{z_{2}}\bar{\eta}_{z_{1}}F(z_{1},z_{2})-\bar{\eta}_{z_{1}}^{2}F(z_{1},z_{1})=\frac{1}{8}\frac{2}{\pi}\im\left[(z_{2}^{\circ}-z_{1}^{\circ})\int_{\gamma}\left(\lambb f^{[\sharp]}(v_{s},z)(z-v_{s})^{-\frac{3}{2}}dz\right)\right]+o(\delta)\\
=\frac{1}{8}\frac{2}{\pi}\im\left[(z_{2}^{\circ}-z_{1}^{\circ})4\pi i\coefA_{\Omega,\cvr}(v)\right]+o(\delta)=\re[(z_{2}^{\circ}-z_{1}^{\circ})\coefA_{\Omega,\cvr}(v)]+o(\delta).
\end{multline*}
It remains to recall that $\bar{\eta}_{z_{1}}^{2}F(z_{1},z_{1})=1.$
\end{proof}

\subsection{Boundary behavior of s-holomorphic functions.}

This section, leading to the proof of Lemma \ref{lem: Clements_clever_lemma},
suitable generalizes and simplifies the arguments in \cite{Hongler_Kytola}.
If $w\in\pa\Omega$ is a prime end, by a \emph{neighborhood} $\Neigh$
of $w$ we will mean a simply-connected domain whose boundary consists
of an arc $\ArcNeig\subset\pa\Omega$ and a cross-cut $\CCNeig$,
such that $w\in\ArcNeig$. If $w\in\fixed$, we will consider only
\emph{small neighborhoods} for which $\ArcNeig\subset\fixed$. We
can Carathéodory approximate such a neighborhood by a sequence $\tilde{\Omega}^{\delta}$
of simply-connected subdomains of $\Od$ in such a way that $\CCNeig^{\delta}\to\CCNeig$
and $\tilde{\gamma}^{\delta}\to\text{\ensuremath{\tilde{\gamma}}}.$ 

We start with a weaker form of (\ref{eq: conv_bdry_rough}), in which
we only consider $\{z_{1}^{\delta}\}$ of type $\eta$ and do not
insist that the normalizing factor is local:
\begin{lem}
\label{lem: z_2_at_bdry}Fix a conformal map $\varphi$ from $\Omega$
to a nice domain $\Lambda$. If $z_{2}^{\delta}$ is a sequence of
boundary corners converging to $z_{2}\in\fixed$, then there exists
a normalizing factors $\normAny$ such that 
\[
\normAny\bar{\eta}_{z_{2}}F_{\Od,\cvr}^{\delta}(z,z_{2}^{\delta})=\proj{\eta_{z}}{f(z)}+o(1)\quad\delta\to0,
\]
uniformly over $z$ in the bulk and away from other marked points,
where 
\[
f(z)=\varphi'(z)^{\frac{1}{2}}f_{\Lambda,\varphi(\cvr)}(\varphi(z),\varphi(z_{2})).
\]
\end{lem}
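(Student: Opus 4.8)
The plan is to study $G^{\delta}(\cdot):=\bar{\eta}_{z_{2}^{\delta}}F_{\Od,\cvr}(\cdot,z_{2}^{\delta})$ as a spinor in its \emph{first} argument. By the anti-symmetry of the multi-point observable (Lemma \ref{lem: anti-symmetry}), $G^{\delta}$ is an s-holomorphic $\cvr$-spinor satisfying the phase condition (\ref{eq: phases_condition}) and the standard boundary conditions everywhere on $\pa\Od$ except for a discrete singularity at the boundary corner $z_{2}^{\delta}$. Since $z_{2}\in\fixed$ lies on a possibly rough wired arc, one cannot appeal directly to Theorem \ref{thm: conv_bulk_bulk}; instead I would fix a small $\eps>0$, set $\Omega_{\eps}:=\Omega\setminus B_{\eps}(\{z_{2},\vv\})$, form the real-valued function $H^{\delta}=\intFsquare{(G^{\delta})^{2}}{\cdot}$ of Definition \ref{def: H}, and choose a \emph{real} normalizing factor $\normAny$ so that the oscillation of $\normAny^{2}H^{\delta}$ over $\overline{\Omega_{\eps}}$ equals $1$ (the phase condition makes $G^{\delta}(z)\in\eta_{z}\R$, so real scaling is the only freedom). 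Writing $\tilde{G}^{\delta}:=\normAny G^{\delta}$, the a priori estimate \cite[Theorem 3.12]{ChelkakSmirnov2} together with Lemma \ref{lem: H_bc} and Lemma \ref{lem: sub_super} gives equicontinuity and boundedness of $\tilde{G}^{\delta}$ on $\overline{\Omega_{\eps}}$; boundedness is then propagated inward through the ramification points $v_{q}$ exactly as in the proof of Theorem \ref{thm: conv_bulk_bulk}, using the Cauchy formula (Lemma \ref{lem: Cauchy_spinors}) to control $\tilde{G}^{\delta}$ at a corner adjacent to $v_{q}$ and the maximum principle for spinors (Lemma \ref{lem: Maximum_spinors}) on the rest of a small ball.

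Next I would extract a convergent subsequence $\tilde{G}^{\delta}\to g$, a holomorphic $\cvr$-spinor on $\Omega\setminus\{z_{2}\}$. The boundary-value analysis of Theorem \ref{thm: conv_bulk_bulk} (convergence of discrete harmonic measure, sub-/super-harmonicity of $H^{\bullet},H^{\circ}$, and the free-arc-endpoint argument) shows that $g$ satisfies the standard boundary conditions on $\pa\Omega\setminus\{z_{2}\}$, while the growth of $H^{\delta}$ near $z_{2}$ is at most logarithmic, so $g$ has at worst a simple pole there. The oscillation normalization guarantees that $h=\im\int g^{2}$ is non-constant on $\overline{\Omega_{\eps}}$, hence $g\not\equiv0$.

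To identify $g$ I would invoke uniqueness. Set $f(z):=\varphi'(z)^{\frac{1}{2}}f_{\Lambda,\varphi(\cvr)}(\varphi(z),\varphi(z_{2}))$. Since $\Lambda$ is nice and $\varphi(z_{2})$ lies on an analytic wired arc, Schwarz reflection (Remark \ref{rem: Schwarz-reflection}) shows that $f_{\Lambda,\varphi(\cvr)}(\cdot,\varphi(z_{2}))$ is a holomorphic $\varphi(\cvr)$-spinor satisfying the standard boundary conditions off $\varphi(z_{2})$ and carrying a simple pole there; by Lemma \ref{lem: ccov_f_fdag} the pullback $f$ has the same structure on $\Omega\setminus\{z_{2}\}$, and in particular $f\not\equiv0$. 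Because $z_{2}\in\fixed$, the admissible leading singularities at $z_{2}$ (simple poles compatible with $\im[\tau_{\zeta}^{\frac{1}{2}}g(\zeta)]=0$) form a one real-dimensional space; choosing the real constant $c$ that cancels the pole of $g-cf$ at $z_{2}$ produces a spinor satisfying the standard boundary conditions on the entire $\pa\Omega$, which vanishes by Proposition \ref{prop: Uniqueness_continuous}. Thus $g=c_{*}f$ with $c_{*}\in\R\setminus\{0\}$. Every subsequential limit is therefore a nonzero real multiple of $f$; rescaling $\normAny$ by the real constant matching $\proj{\eta_{z_{0}}}{g(z_{0})}$ to $\proj{\eta_{z_{0}}}{f(z_{0})}$ at a fixed reference corner $z_{0}$ with $f(z_{0})\neq0$ forces every subsequential limit to equal $f$, so that $\normAny\bar{\eta}_{z_{2}}F_{\Od,\cvr}(z,z_{2}^{\delta})=\proj{\eta_{z}}{f(z)}+o(1)$ along the full sequence, uniformly on compacts of $\Omega$ away from $z_{2}$ and the other marked points.

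The main obstacle is the precompactness step at the rough boundary point $z_{2}$: unlike the bulk singularity in Theorem \ref{thm: conv_bulk_bulk}, there is no explicit discrete model such as $\dzmone{\cdot}{z_{2}}$ to subtract off near a wired arc of uncontrolled geometry, and one must still rule out escape of the maximum of $\normAny^{2}H^{\delta}$ toward $z_{2}$. I expect to settle this by the same contradiction scheme that establishes the uniform boundedness (UBC) in the proof of Theorem \ref{thm: conv_bulk_bulk}: were $\normAny^{2}H^{\delta}$ unbounded on some $\Omega_{\eps'}$ with $\eps'<\eps$, a further rescaling would kill the normalized singularity at $z_{2}$ and produce a nonzero s-holomorphic limit satisfying the standard boundary conditions on all of $\pa\Omega$, contradicting Proposition \ref{prop: Uniqueness_continuous}.
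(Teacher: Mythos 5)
Your overall architecture (compactness, boundary value problem, uniqueness up to a real constant absorbed into $\normAny$) matches the paper's, but there is a genuine gap at exactly the step you flag as the main obstacle, and the fix you propose does not work. Your plan to rule out escape of the maximum of $\normAny^{2}H^{\delta}$ toward $z_{2}$ ``by the same contradiction scheme as (UBC)'' is circular: in the bulk version of that scheme, the rescaled limit is shown to extend holomorphically through the singularity \emph{because} one first subtracts the explicit discrete singular part ($\dzmone{\cdot}{z_{1}}$ resp.\ $\dmsqrt{\cdot}{v}$) and propagates boundedness through the singular point via the maximum principle. At a rough wired boundary point no such comparison function exists (as you yourself note), so after your rescaling you have no uniform control on any $\Omega_{\eps''}$ with $\eps''<\eps'$, hence no limit near $z_{2}$, hence no way to say the limit ``satisfies the standard boundary conditions on all of $\pa\Omega$'' \textemdash{} Proposition \ref{prop: Uniqueness_continuous} cannot be invoked on a domain with a hole around $z_{2}$ (the true limit $f$ itself is a nonzero spinor with standard boundary conditions off $z_{2}$). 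The paper closes this gap by a different mechanism: it anchors $H^{\delta}\equiv0$ on the boundary arc containing $z_{2}^{\delta}$, gets the lower bound $H^{\delta}\geq-M_{\eps}^{\delta}$ from the minimum principle, and for the upper bound uses the superharmonicity of $H^{\bullet}$ (Lemma \ref{lem: sub_super}) to build a monotone nearest-neighbor path from any point $w$ near $z_{2}$ along which $H^{\delta}$ does not decrease; such a path can only terminate at $z_{2}^{\bullet}$, and then a uniform lower bound on the discrete harmonic measure of the adjacent primal path together with \cite[Proposition 3.8]{ChelkakSmirnov2} transfers the largeness of $H^{\delta}(w)$ to a fixed interior point, contradicting the definition of $M_{\eps}^{\delta}$. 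None of these ingredients appear in your proposal.

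A secondary flaw is in your identification step: the assertion that ``the growth of $H^{\delta}$ near $z_{2}$ is at most logarithmic, so $g$ has at worst a simple pole'' is both unjustified and false as stated \textemdash{} the relevant singularity of $h=\im\int g^{2}$ at a wired boundary point is of Poisson-kernel type, of order $|z-z_{2}|^{-1}$, not logarithmic (a logarithmic singularity of $h$ would correspond to a square-root singularity of $g$, as at punctures). The paper pins down the singularity type without any growth estimate: since $H^{\delta}\equiv0$ on the arc through $z_{2}^{\delta}$ and $h\geq0$ near wired boundary (the \cite[Remark 6.3]{ChelkakSmirnov2} argument), the limit $h$ is a nonnegative harmonic function with vanishing boundary values near $z_{2}$, hence locally a nonnegative multiple of the Poisson kernel plus a harmonic function continuous up to the boundary with zero boundary values; this identifies $h$, and therefore $g$, uniquely up to the one real constant absorbed into $\normAny$. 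Your subsequent pole-cancellation argument via Proposition \ref{prop: Uniqueness_continuous} would be fine \emph{if} the simple-pole bound were established, but as written it rests on the incorrect growth claim; note also that your oscillation normalization, unlike the paper's anchoring of $H^{\delta}$ at zero on the $z_{2}$-arc, does not by itself produce the sign information that makes the Poisson-kernel identification work.
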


\begin{proof}
We follow closely the proof of Theorem \ref{thm: conv_bulk_bulk}.
We define the function $H^{\delta}(\cdot)$ according to Definition
\ref{def: H} with $F(\cdot)=\bar{\eta}_{z_{2}}F_{\Od,\cvr}^{\delta}(\cdot,z_{2}^{\delta})$,
fixing the additive constant so that $H^{\delta}\equiv0$ at the boundary
arc containing $z_{2}^{\delta}.$ Consider a family $\tilde{\Omega}_{\eps}$
of small neighborhoods of $z_{2}$ in $\Omega$, indexed by $\eps>0$,
and such that the corresponding cross-cuts converge to $z_{2}$ as
$\eps\to0.$ We put $\Od_{\eps}:=\Od\setminus\left(\cup B_{\eps}(v_{i})\cup\tilde{\Omega}_{\eps}^{\delta}\right)$,
where $\tilde{\Omega}_{\eps}^{\delta}$ approximates $\tilde{\Omega}_{\eps}$
as discussed above. We define 
\[
M_{\eps}^{\delta}:=\max_{\Od_{\eps}}|H^{\delta}|.
\]
We claim that $\left(M_{\eps}^{\delta}\right)^{-1}H^{\delta}$ are,
in fact, uniformly bounded on any set of the form $\Od_{\hat{\eps}}$
where $\hat{\eps}>0.$ Indeed, the boundedness near $v_{i}$ is shown
as in the proof of Theorem \ref{thm: conv_bulk_bulk}, and the boundedness
near $z_{2}$ is shown by the following argument, as in \cite{ChelkakSmirnov2}.
First, using Lemma \ref{lem: sub_super}, we have by maximum principle,
$\left(M_{\eps}^{\delta}\right)H^{\delta}(w)\geq-1$ for $w\in\Omega^{\delta,\circ}$
on the whole $\Od\setminus\left(\cup_{i}B_{\eps}(v_{i})\right)$ (recall
that $H^{\delta}\equiv0$ at the boundary arc containing $z_{2}^{\delta}$).
Also, we always have $H^{\delta}(z^{\circ})\leq H^{\delta}(z^{\bullet});$
hence it is sufficient to show that $H^{\delta}$ is bounded from
\emph{above} on $\Odual\setminus\left(\cup B_{\hat{\eps}}(v_{i})\cup\hat{\Omega}_{\hat{\eps}}^{\delta}\right)$.
By Lemma \ref{lem: sub_super}, if $w\in\tilde{\Omega}_{\eps}^{\delta,\bullet}\setminus\tilde{\Omega}_{\hat{\eps}}^{\delta,\bullet},$
then there is a point $w\in$ $w_{1}\sim w$ with $H^{\delta,\bullet}(w_{1})\geq H^{\delta,\bullet}(w).$
Iterating, we obtain a path $\gamma^{\bullet}=w\sim w_{1}\sim w_{2}\sim\dots$
with $H^{\delta}(w_{i})\geq H^{\delta}(w).$ Unless $\left(M_{\eps}^{\delta}\right)^{-1}H^{\delta}(w)\leq1,$
this path can only end at $z_{2}^{\bullet}.$ Let $\gamma^{\circ}=\{\hat{w}\in\Omega^{\delta,\circ}:\exists i:\,\hat{w}\sim w_{i}\}$
be the set of points of $\Omega^{\delta,\circ}$ adjacent to $\gamma^{\bullet}.$
The discrete $\Delta^{\circ}$-harmonic measure of $\gamma^{\circ}$
as seen from some fixed point $\tilde{w}\in\Omega\setminus\left(\cup B_{\eps}(v_{i})\cup\tilde{\Omega}_{\eps}^{\delta}\right)$
is bounded from below by a constant $c>0$ independent of $\delta$;
also, \cite[Proposition 3.8]{ChelkakSmirnov2} implies that there
are constants $C_{1,2}$ such that $H^{\delta}(\hat{w})\geq C_{1}H^{\delta}(w)-C_{2}M_{\eps}^{\delta}$
for any $\hat{w}\in\gamma^{\circ}.$ From Lemma \ref{lem: sub_super},
this means that 
\[
M_{\eps}^{\delta}\geq H^{\delta}(\tilde{w})\geq c\left(C_{1}H^{\delta}(w)-C_{2}M_{\eps}^{\delta}\right)-M_{\eps}^{\delta},
\]
which is the desired uniform upper bound on $H^{\delta}(w).$

After this, we see as in the proof of Theorem \ref{thm: conv_bulk_bulk}
that at least along sub-sequences, we have 
\[
F^{\delta}(z)=\proj{\eta_{z}}{f(z)}+o(1);\quad H^{\delta}(\cdot)=h(\cdot)+o(1),
\]
 where $h=\im\int f^{2}dz$ and $f$ satisfies the standard boundary
conditions on $\pa\Omega\setminus\{w\}.$ Moreover, we know that $h\text{\ensuremath{\geq0}}$
near $w$; hence $h$ is locally a linear combination of a Poisson
kernel and a harmonic function continuous up to the boundary (with
zero boundary values there). This identifies $h$ uniquely up to a
multiplicative constant, which can be taken care of by a choice of
the factor $\normAny$.
\end{proof}
The next key lemma allows one to ``factorize'' the value of an s-holomorphic
function near rough boundary into a factor that depends on the function
and the factor that depends on the local geometry of the boundary. 
\begin{lem}
\label{lem: rough_bdry}Let $\tilde{\Omega}$ be a simply connected
domain whose boundary is divided into two arcs $\gamma_{1,2}$, and
let $\tilde{\Omega}^{\delta},\gamma_{1,2}^{\delta}$ be their respective
approximations. Assume that $G^{\delta}$ is a sequence of s-holomorphic
functions in $\tilde{\Omega}^{\delta}$ satisfying $\fixed$ boundary
conditions on $\gamma_{1}^{\delta}$, and such that the corresponding
functions $H^{\delta}=\intFQ{G^{\delta}(z)^{2}}{}$ are uniformly
bounded in $\tilde{\Omega}^{\delta}$. Assume furthermore that 
\begin{equation}
G^{\delta}(z)=\proj{\eta_{z}}{g(z)}+o(1),\label{eq: gconv}
\end{equation}
uniformly on compact subsets of $\tilde{\Omega}$. Fix a conformal
map $\varphi$ from $\tilde{\Omega}$ to any domain, such that $\varphi(\gamma_{1})$
is smooth. Let $w\in\gamma_{1}$, and let $w^{\delta}\to w$ be a
sequence of boundary corners of $\Omega^{\delta}$. Then, there is
a sequence $\beta^{\delta}\in\C$ of normalizing factors that depends
on $\tilde{\Omega}_{\delta},w^{\delta}$, and the choice of $\varphi,$
but not on $G^{\delta}$, such that 
\begin{equation}
\beta^{\delta}G^{\delta}(w^{\delta})\to\lim_{z\to w}\varphi(z)^{-\frac{1}{2}}g(z).\label{eq: limit_rought_bdry}
\end{equation}
\end{lem}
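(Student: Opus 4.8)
The plan is to reduce the statement to the convergence of a \emph{ratio} of boundary values against a fixed reference observable, so that all information about the rough geometry of $\gamma_1^{\delta}$ near $w$ is absorbed into a single $G^{\delta}$-independent factor. The key structural observation is that, by the phase condition \eqref{eq: phases_condition}, $G^{\delta}(w^{\delta})\in\eta_{w^{\delta}}\R$ at the boundary corner $w^{\delta}$, while the $\fixed$ boundary condition on the smooth arc $\varphi(\gamma_1)$ forces $\lim_{z\to w}\varphi'(z)^{-\frac12}g(z)$ to lie on a fixed real line determined by $\tau_{\varphi(w)}$. Thus both the discrete boundary value and its conjectural limit are a single real parameter, depending real-linearly on $G^{\delta}$ and on $g$ respectively. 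It therefore suffices to fix one reference sequence with nonzero limit and to control $G^{\delta}(w^{\delta})/G^{\delta}_{\mathrm{ref}}(w^{\delta})$ for an arbitrary admissible $G^{\delta}$.

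For the reference I would fix a bulk source $z_0\in\tilde\Omega$ away from $\gamma_1$ and set $G^{\delta}_{\mathrm{ref}}(\cdot):=\bar{\eta}_{z_0}\delta^{-1}F_{\tilde\Omega^{\delta}}(z_0^{\delta},\cdot)$ with $\fixed$ boundary conditions on all of $\pa\tilde\Omega^{\delta}$. By Theorem \ref{thm: conv_bulk_bulk} applied in $\tilde\Omega$, one has $G^{\delta}_{\mathrm{ref}}\to g_{\mathrm{ref}}:=\Cpsi^{2}f^{[\eta_{z_0}]}_{\tilde\Omega}(z_0,\cdot)$ uniformly on compacts, the associated $H^{\delta}_{\mathrm{ref}}$ are uniformly bounded, and $g_{\mathrm{ref}}$ satisfies the standard boundary conditions on $\gamma_1$. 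By Lemma \ref{lem: ccov_f_fdag} and the smoothness of $\varphi(\gamma_1)$, the limit $B_{\mathrm{ref}}:=\lim_{z\to w}\varphi'(z)^{-\frac12}g_{\mathrm{ref}}(z)$ exists, and for a suitable $z_0$ it is nonzero (otherwise the boundary observable of Lemma \ref{lem: z_2_at_bdry} at $\varphi(w)$ would vanish identically). I then define $\beta^{\delta}:=B_{\mathrm{ref}}/G^{\delta}_{\mathrm{ref}}(w^{\delta})$, which depends only on $\tilde\Omega^{\delta}$, $w^{\delta}$ and $\varphi$, as required, so that $\beta^{\delta}G^{\delta}(w^{\delta})=B_{\mathrm{ref}}\cdot G^{\delta}(w^{\delta})/G^{\delta}_{\mathrm{ref}}(w^{\delta})$.

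The heart of the matter is to show $G^{\delta}(w^{\delta})/G^{\delta}_{\mathrm{ref}}(w^{\delta})\to B/B_{\mathrm{ref}}$, where $B:=\lim_{z\to w}\varphi'(z)^{-\frac12}g(z)$. First I would prove that $\lambda^{\delta}:=\bar{\eta}_{w^{\delta}}G^{\delta}(w^{\delta})/\bar{\eta}_{w^{\delta}}G^{\delta}_{\mathrm{ref}}(w^{\delta})$ is bounded. Normalizing $H^{\delta}$ to vanish on $\gamma_1^{\delta}$ (Lemma \ref{lem: H_bc}) and to be nonnegative into the domain (Lemma \ref{lem: sub_super}, Proposition \ref{prop: f_to_h}), the boundary increment of $H^{\delta}$ at $w^{\delta}$ equals $2\delta|G^{\delta}(w^{\delta})|^{2}$; by the Chelkak--Smirnov boundary estimates \cite[Prop.~3.8, Thm.~3.12]{ChelkakSmirnov2} already used in Lemma \ref{lem: z_2_at_bdry}, both this increment and its reference counterpart are comparable to the respective near-boundary values of $H^{\delta}$ through the \emph{same} harmonic-measure factor of the shared domain, which cancels in the ratio. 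Since these near-boundary values are bounded above (bounded $H^{\delta}$) and, for the reference, below (as $B_{\mathrm{ref}}\neq0$ forces a definite normal derivative), $\lambda^{\delta}$ stays bounded. Passing to a subsequence with $\lambda^{\delta}\to\lambda\in\R$, the combination $\hat G^{\delta}:=G^{\delta}-\lambda^{\delta}G^{\delta}_{\mathrm{ref}}$ is s-holomorphic with $\fixed$ conditions on $\gamma_1^{\delta}$, converges in the bulk to $\hat g:=g-\lambda g_{\mathrm{ref}}$, and satisfies $\hat G^{\delta}(w^{\delta})=0$. It then remains to deduce $B-\lambda B_{\mathrm{ref}}=0$, which gives $\beta^{\delta}G^{\delta}(w^{\delta})\to B$ along the subsequence; as $B$ is determined by the fixed bulk limit $g$, the full convergence follows.

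The main obstacle is precisely this last implication: transferring the discrete vanishing $\hat G^{\delta}(w^{\delta})=0$ at a \emph{single} corner of a possibly rough boundary into the vanishing of the continuous normal derivative of $\hat h:=\im\int\hat g^{2}$ at $w$ (equivalently $|B-\lambda B_{\mathrm{ref}}|^{2}$ up to the factor $|\varphi'(w)|$). One cannot directly invoke the continuous uniqueness of Proposition \ref{prop: Uniqueness_continuous}, nor transport the lattice to the smooth chart $\Lambda$, since the discrete s-holomorphicity lives on $\tilde\Omega^{\delta}$. Instead I would argue through $H^{\delta}$: the vanishing means $\hat H^{\delta}$ attains its boundary value $0$ at the interior vertex adjacent to $w^{\delta}$, whereas if $\hat g$ had a strictly positive normal derivative at $w$, then pushing forward to the smooth arc where boundary regularity gives $\hat h\asymp\mathrm{dist}$, the bulk convergence $\hat H^{\delta}\to\hat h$ on mesoscopic annuli together with \cite[Prop.~3.8]{ChelkakSmirnov2} would force $\hat H^{\delta}$ to exceed a definite positive multiple of the harmonic measure near $w^{\delta}$, contradicting the vanishing of its boundary increment there. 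Making this scale analysis uniform across the rough boundary is the delicate point, and it is exactly where the argument of \cite{Hongler_Kytola} is generalized and simplified.
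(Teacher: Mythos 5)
There is a genuine gap, and it sits exactly where you flag it. Your two load-bearing steps \textemdash{} the boundedness of $\lambda^{\delta}$ and, above all, the implication $\hat G^{\delta}(w^{\delta})=0\Rightarrow B-\lambda B_{\mathrm{ref}}=0$ \textemdash{} are not established, and the harmonic-measure/maximum-principle route you sketch cannot establish them. Interior lower bounds on $\hat H^{\delta}$ of harmonic-measure type place no lower bound on the increment $2\delta|\hat G^{\delta}(w^{\delta})|^{2}$ at one \emph{specific} corner of a rough boundary: $w^{\delta}$ may sit at the tip of a mesoscopic fjord where every discrete observable is uniformly small regardless of the macroscopic flux, which is the very reason a geometry-dependent factor $\beta^{\delta}$ is needed in the first place. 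So any argument comparing the single-corner increment to bulk quantities ``through the same harmonic-measure factor'' assumes what must be proved; likewise your justification of the boundedness of $\lambda^{\delta}$ (``$B_{\mathrm{ref}}\neq0$ forces a definite normal derivative'') is circular, since transferring a continuous normal derivative into a lower bound at the rough corner is precisely the hard direction. (Minor additional issues: $\beta^{\delta}:=B_{\mathrm{ref}}/G_{\mathrm{ref}}^{\delta}(w^{\delta})$ presupposes $G_{\mathrm{ref}}^{\delta}(w^{\delta})\neq0$ for every $\delta$, which is not guaranteed, and $B_{\mathrm{ref}}\neq0$ needs an actual choice of $z_{0},\eta$ rather than the one-line claim.)

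What the paper does instead, and what your proposal is missing, is an \emph{exact discrete identity} that bypasses all boundary regularity. Since the product of two s-holomorphic functions defines a closed discrete form $F_{1}F_{2}(z^{\bullet}-z^{\circ})$ (Definition \ref{def: H}), one pairs $G^{\delta}$ with a reference observable $\tilde F^{\delta}(\cdot)=\tilde\beta^{\delta}\bar\eta_{w^{\delta}}F_{\tilde\Omega^{\delta},\fixed}(w^{\delta},\cdot)$ whose source is at $w^{\delta}$ \emph{itself}, normalized via Lemma \ref{lem: z_2_at_bdry}. The standard boundary conditions kill the form along $\gamma_{1}^{\delta}$, so the contour $\pa\hat\Omega^{\delta}$ of a small neighborhood of $w^{\delta}$ contracts, as in Lemma \ref{lem: Cauchy_spinors}, to the two edges around $w^{\delta}$, giving the identity
\[
-4i\,\tilde F^{\delta}(w^{\delta,+})\,G^{\delta}(w^{\delta})\,(w^{\delta,\bullet}-w^{\delta,\circ})\ =\ \intFQ{\tilde F^{\delta}G^{\delta}}{\ArcHat_{\mathrm{out}}}.
\]
The right-hand side lives on a cross-cut in the bulk (chosen of bounded length and approaching $\gamma_{1}^{\delta}$ non-tangentially, cf.\ (\ref{eq: nontangential})), where both factors converge; the only boundary input needed is the integrable bound $|\tilde F^{\delta}G^{\delta}|\lesssim\dist(\cdot,\gamma_{1}^{\delta})^{-1/2}$ near the endpoints, which follows from the uniform boundedness of $H^{\delta}$ and \cite[Theorem 3.12]{ChelkakSmirnov2}. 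A residue computation then identifies the limit of the right-hand side with $2\pi$ times the right-hand side of (\ref{eq: limit_rought_bdry}), so $\beta^{\delta}:=-4i\tilde F^{\delta}(w^{\delta,+})(w^{\delta,\bullet}-w^{\delta,\circ})(2\pi)^{-1}$ works, with $\tilde F^{\delta}(w^{\delta,+})$ nonzero by construction. With this pairing both of your missing steps become immediate; without it (or an equivalent two-sided control at the rough boundary, which is what \cite{Hongler_Kytola} had to work hard for), your subsequential argument does not close.
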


\begin{rem}
\label{rem: limit_exists}As explained in the proof of Theorem \ref{thm: conv_bulk_bulk},
$g$ must have standard $\fixed$ boundary conditions on $\gamma_{1}.$
Therefore, $f(z)=\varphi(z)^{\frac{1}{2}}f_{\varphi(\tilde{\Omega})}(\varphi(z))$,
where $f_{\varphi(\tilde{\Omega})}$ satisfies $\fixed$ boundary
conditions on $\varphi(\gamma_{1}),$ and thus extends there continuous.
Therefore, the limit in the right-hand side of (\ref{eq: limit_rought_bdry})
exists. 
\end{rem}

\begin{proof}[Proof of Lemma \ref{lem: rough_bdry}]
. We choose a smaller neighborhood $\SOmega\subset\Neigh$ of $w$,
such that $\hat{\gamma}_{\mathrm{out}}\subset\Neigh$ is a piece-wise
smooth curve with endpoints in $\gamma_{1}$. We also choose a sequence
of neighborhoods $\SOd$ of $w^{\delta}$ in $\Od$ that Carathéodory
converges to $\SOmega$ and such that $\ArcHat_{\mathrm{out}}\to\hat{\gamma}_{\mathrm{out}}$.
We furthermore require that $\ArcHat_{\mathrm{out}}$ are of bounded
length and approach $\pa\tilde{\Omega}^{\delta}$ non-tangentially,
i. e., that there is a constant $c>0$ such that 
\begin{equation}
\text{\ensuremath{\dist}(z;\ensuremath{\pa\Od})\ensuremath{\geq}c\ensuremath{\cdot}\ensuremath{\min\{|z-b_{1}^{\delta}|,|z-b_{2}^{\delta}|\}},}\label{eq: nontangential}
\end{equation}
where $b_{1,2}^{\delta}$ are endpoints of $\hat{\gamma}_{\text{out}}^{\delta}.$

Consider a sequence of s-holomorphic functions $\tilde{F}^{\delta}(\cdot):=\tilde{\beta}^{\delta}\bar{\eta}_{w^{\delta}}F_{\tilde{\Omega}^{\delta},\fixed}(w^{\delta},\cdot)$
in $\tilde{\Omega}^{\delta}.$ The normalization constants $\tilde{\beta}^{\delta}$
are chosen as in Lemma \ref{lem: z_2_at_bdry}, so that 
\begin{equation}
\tilde{F}^{\delta}(z)=\proj{\eta_{z}}{\tilde{f}(z)}+o(1).\label{eq: rough_bdry_fconv}
\end{equation}
where $\tilde{f}(z)=\varphi(z)^{\frac{1}{2}}f_{\varphi(\tilde{\Omega})}(\varphi(w),\varphi(z)).$
Since composing the conformal map $\varphi$ with another conformal
map $\tilde{\varphi}$ changes the right-hand side of (\ref{eq: limit_rought_bdry})
by a constant factor which can be absorbed into $\beta^{\delta}$,
we may in fact assume that $\varphi(\tilde{\Omega})=\H$, $\tilde{f}(z)=\varphi(z)^{\frac{1}{2}}f_{\H}(\varphi(w),\varphi(z))=\frac{2\varphi'(z)}{\varphi(z)-\varphi(w)}.$ 

We now consider the discrete integral $\intFQ{\tilde{F}(z)G(z)}{\pa\hat{\Omega}^{\delta}}.$
Because of the standard boundary conditions, we have 
\begin{multline}
\intFQ{\tilde{F}(z)G(z)}{\pa\hat{\Omega}^{\delta}}=\intFQ{\tilde{F}(z)G(z)}{\ArcHat_{\mathrm{out}}}\\
\stackrel{\delta\to0}{\longrightarrow}\int_{\hat{\gamma}_{\mathrm{out}}}\im(\tilde{f}(z)g(z)dz.\label{eq: conv_integral_rough}
\end{multline}
Indeed, the convergence of the part of the discrete integral ``in
the bulk'' follows from Remark \ref{rem: conv_f_conv_H}, and the
parts close to the boundary are negligible because of the following
estimate: since $H^{\delta}(z)$ are uniformly bounded, Lemma \ref{lem: sub_super}
and the discrete Beurling inequality implies that $|H^{\delta}(u)|\leq C\cdot\dist(u;\gamma_{1}^{\delta})^{\frac{1}{2}}$,
and \cite[Theorem 3.12]{ChelkakSmirnov2} readily gives $|G^{\delta}(z)|\leq C\cdot\dist(z;\gamma_{1}^{\delta})^{-\frac{1}{4}}.$
Since $\tilde{H}^{\delta}=\intFQ{\tilde{F}^{2}}{}$ are also uniformly
bounded on $\ArcHat_{\mathrm{out}}$, similar estimate holds $\tilde{F}$,
combining to $|\tilde{F}\tilde{G}|\leq\dist(z;\gamma_{1}^{\delta})^{-\frac{1}{2}},$
yielding a convergent improper integral and thus justifying the convergence
in (\ref{eq: conv_integral_rough}). 

Now, as in the proof of Lemma \ref{lem: Cauchy_spinors}, we can shrink
the contour $\pa\hat{\Omega}^{\delta}$ to the contour consisting
of just two edges, one running from $w^{\delta,\bullet}$ to $w^{\delta,\circ}$
along $w^{\delta,+}$ and another running back along $w^{\delta,-};$
this yields 
\begin{multline*}
-4i\tilde{F}(w^{\delta,+})G(w^{\delta})(w^{\delta,\bullet}-w^{\delta,\circ})\to\int_{\hat{\gamma}_{\mathrm{out}}}\im(\tilde{f}(z)g(z)dz\\
=\im\int_{\varphi(\hat{\gamma}_{\mathrm{out}})}f_{\varphi(\tilde{\Omega})}(u)\frac{2}{u-\varphi(w)}du=2\pi f_{\varphi(\tilde{\Omega})}(\varphi(w))
\end{multline*}
 where $f_{\varphi(\tilde{\Omega})}$ is as in Remark \ref{rem: limit_exists}.
Hence, $\beta^{\delta}:=-4i\tilde{F}(w^{\delta,+})(w^{\delta,\bullet}-w^{\delta,\circ})(2\pi)^{-1}$
is the required normalizing factor, and the lemma is proven. 
\end{proof}
\begin{rem}
In the above proof, we used sub/superharmonicity of $H^{\delta}$
and the discrete Beurling estimates; however, what we need is in fact
$|\tilde{F}\tilde{G}|\leq\dist(z;\gamma_{1}^{\delta})^{-1+\eps}$
for some any $\eps>0$, which can be achieved by weaker tools.
\end{rem}

\begin{proof}[Proof of Lemma \ref{lem: Clements_clever_lemma}]
 For (\ref{eq: conv_bdry_rough}), we apply Lemma \ref{lem: rough_bdry}
to $G^{\delta}(\cdot):=\delta^{-\Delta_{1}}\bar{\eta}_{z_{1}^{\delta}}F(z_{1},\cdot);$
its convergence in the bulk follows from Theorem \ref{thm: Convergence_singular}
and the fact that $H^{\delta}$ is uniformly bounded (the UBC condition)
was a step in the course of the proof of Theorem \ref{thm: conv_bulk_bulk}
for the case when $z_{1}^{\delta}$ is of type $\eta$ and can be
justified similarly in other cases. For (\ref{eq: conv_bdry_bdry}),
we apply Lemma \ref{lem: rough_bdry} to $G^{\delta}(\cdot):=c\cdot\beta_{z_{1}^{\delta}}^{\delta}F(z_{1}^{\delta},\cdot)$,
with $c=\bar{\eta}_{z_{1}^{\delta}}\bar{\beta}_{z_{1}^{\delta}}^{\delta}/|\bar{\beta}_{z_{1}^{\delta}}^{\delta}|$
chosen so that the phase condition (\ref{eq: phases_condition}) is
satisfied; the fact that in this case, $H^{\delta}$ is uniformly
bounded away from $z_{1}^{\delta}$ was justified in the proof of
Lemma \ref{lem: z_2_at_bdry}.
\end{proof}
\newpage{}

\section{Correlation functions in continuum and proofs of the main theorems}

\label{sec: corr-continuum}

\subsection{Spin correlations.}

In this section, we will assume that $\cvr=\cvr(v_{1},\dots,v_{n})$,
where $\{v_{\text{i}}\}$ are the single-point boundary components.
All the statements can be modified to accommodate boundary ramification
points, but we will not need this generality.

Recall from (\ref{eq: f_sharp_asymp}) that if $v\in\{v_{1},\dots,v_{n}\}$,
then $f_{\cvr}^{[\sharp]}(v,z)\sim\lamb(z-v)^{-\frac{1}{2}}$ as $z\to v$.
The crucial role in the analysis of spin correlations is played by
the next coefficient in the expansion.

Until the end of this subsection, we will view $v_{1},\dots,v_{n}$
as points in $\Omega$. 
\begin{defn}
We define the differential form $\formL$ on the space of $n$-tuples
$(v_{1},\dots,v_{n})$ of distinct points in $\Omega$ by 
\[
\formL=\formL^{(n,\Omega)}:=\re\left(\sum_{i=1}^{n}\mathcal{A}_{\Omega,\cvr}(v_{i})dv_{i}\right),
\]
where $\cvr=\cvr(v_{1},\dots,v_{n}).$
\end{defn}

Using the definition of $F(z_{1},z_{2})$ for $z_{1},z_{2}$ as in
Theorem \ref{thm: Conv_both_near_spin}, we deduce the following Corollary. 
\begin{cor}
\label{cor: spin_ratio} Let $\gamma$ be any path in the space of
$n$-tuples of distinct points in $\Omega$ connecting $(\hat{v}_{1},\dots,\hat{v}_{n})$
to $(v_{1},\dots,v_{n})$. Then we have, as $\delta\to0$, 
\begin{equation}
\frac{\E_{\Od}(\sigma_{v_{1}}\dots\sigma_{v_{n}})}{\E_{\Od}(\sigma_{\hat{v}_{1}}\dots\sigma_{\hat{v}_{n}})}=\exp\left(\int_{\gamma}\formL\right)+o(1),\label{eq: conv_ratio}
\end{equation}
In particular, the form $\formL$ is exact, and integrates to a symmetric
function of $v_{1},\dots,v_{n}$.
\end{cor}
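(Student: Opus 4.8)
The plan is to deduce Corollary \ref{cor: spin_ratio} directly from Theorem \ref{thm: Conv_both_near_spin} by viewing the latter as a statement about \emph{discrete logarithmic derivatives} of the spin correlation and then integrating along the path $\gamma$. First I would recall the identity noted right after Theorem \ref{thm: Conv_both_near_spin}: when $z_1^\circ=v_s$ and $z_2^\bullet=z_1^\bullet$, the observable $F(z_1,z_2)$ equals $\eta_{z_1}\eta_{z_2}$ times the ratio $\E[\sigma_{v_1}\dots\sigma_{v_s'}\dots\sigma_{v_n}]/\E[\sigma_{v_1}\dots\sigma_{v_n}]$, where $v_s'=z_2^\circ$ is a neighbor of $v_s$. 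Theorem \ref{thm: Conv_both_near_spin} then says
\[
\frac{\E_{\Od}[\sigma_{v_1}\dots\sigma_{v_s'}\dots\sigma_{v_n}]}{\E_{\Od}[\sigma_{v_1}\dots\sigma_{v_n}]}=1+\re\big[(v_s'-v_s)\cdot\coefA_{\Omega,\cvr}(v_s)\big]+o(\delta),
\]
so that moving a single spin by one lattice step $v_s\mapsto v_s'$ multiplies the correlation by $1+\re[(v_s'-v_s)\coefA_{\Omega,\cvr}(v_s)]+o(\delta)$. This is exactly a discretization of the statement that $d\log\E[\sigma_{v_1}\dots\sigma_{v_n}]$ equals $\re(\sum_i \coefA_{\Omega,\cvr}(v_i)\,dv_i)=\formL$.

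The second step is to turn this one-step relation into the integrated formula \eqref{eq: conv_ratio}. I would approximate the continuous path $\gamma$ (joining $(\hat v_1,\dots,\hat v_n)$ to $(v_1,\dots,v_n)$ in the space of distinct $n$-tuples) by a nearest-neighbor lattice path on $(\Od)^{\times n}$, moving one coordinate by one lattice step at a time. Taking logarithms and telescoping, $\log\big(\E_{\Od}(\sigma_{v_1}\dots\sigma_{v_n})/\E_{\Od}(\sigma_{\hat v_1}\dots\sigma_{\hat v_n})\big)$ becomes a Riemann sum of terms $\re[(\text{step})\cdot\coefA_{\Omega,\cvr}(\cdot)]$, which converges to $\int_\gamma\formL$ as $\delta\to0$. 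Here I would invoke that $\coefA_{\Omega,\cvr}(v)$ depends continuously on the positions of the marked points (away from collisions and from $\pa\Omega$), which follows from the convergence theorems and the analytic dependence recorded in Remark \ref{rem: Hartogs}; this continuity guarantees that the Riemann sums converge to the line integral and that the $o(\delta)$ errors, of which there are $O(\delta^{-1})$ along a path of bounded length, sum to $o(1)$.

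The third step records the two structural consequences. Since the left-hand ratio in \eqref{eq: conv_ratio} manifestly depends only on the endpoints of $\gamma$ and not on $\gamma$ itself, the limit $\exp(\int_\gamma\formL)$ is path-independent; taking logarithms shows $\int_\gamma\formL$ depends only on endpoints, i.e.\ $\formL$ is exact (closed with trivial periods) on the configuration space of distinct points in $\Omega$. Its primitive is then a well-defined function of $(v_1,\dots,v_n)$, and symmetry under permuting the $v_i$ follows because the discrete correlation $\E_{\Od}(\sigma_{v_1}\dots\sigma_{v_n})$ is itself symmetric in its arguments, so the limiting primitive inherits this symmetry.

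The main obstacle I expect is the passage from the \emph{pointwise} (single-step) asymptotics of Theorem \ref{thm: Conv_both_near_spin} to the integrated statement with control over the accumulation of errors: one must ensure the $o(\delta)$ in each step is uniform over the relevant region and that the number of steps ($\sim\delta^{-1}\cdot\mathrm{length}(\gamma)$) does not destroy the estimate. This is handled by the uniformity clause in Theorem \ref{thm: Conv_both_near_spin} (valid for points at a definite distance from $\pa\Omega$ and from each other, which holds along any fixed compact path $\gamma$ avoiding collisions), together with the continuity of $v\mapsto\coefA_{\Omega,\cvr}(v)$; the exactness then emerges automatically from path-independence of the left-hand side rather than requiring a separate computation of $d\formL$.
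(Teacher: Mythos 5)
Your proposal is correct and follows essentially the same route as the paper: the paper likewise reads Theorem \ref{thm: Conv_both_near_spin} as a one-step discrete logarithmic derivative, telescopes it along a lattice path moving one spin at a time (with the uniform $o(\delta)$ errors summing to $o(1)$ over the $O(\delta^{-1})$ steps), and deduces exactness of $\formL$ from the path-independence of the left-hand side. Your added remarks on the continuity of $\coefA_{\Omega,\cvr}$ for the Riemann-sum convergence and on symmetry inherited from the discrete correlations are minor elaborations of steps the paper leaves implicit.
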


\begin{proof}
We first give the proof in the case when $v_{2}=\hat{v}_{2},\dots,v_{n}=\hat{v}_{n}.$
Choose a lattice path $v_{1}=v^{(0)}\sim v^{(1)}\sim\dots\sim v^{(p)}=\hat{v}_{1}$
which stays at a definite distance from $v_{2},\dots,v_{n}$; we take
$\gamma$ to be the segments $[v_{1};\hat{v}_{1}]$. Choose the corners
$z^{(2k)}$ and $z^{(2k+1)}$ such that $(z^{(2k)})^{\circ}=v^{(k)}$
and $(z^{(2k+1)})^{\circ}=v^{(k+1)}$ and $(z^{(2k)})^{\bullet}=(z^{(2k+1)})^{\bullet}$.
This way, $z^{(2k)}$ and $z^{(2k+1)}$ are as in Theorem \ref{thm: Conv_both_near_spin},
so that

\begin{multline*}
\frac{\E_{\Od}(\sigma_{v^{(k+1)}}\sigma_{v_{2}}\dots\sigma_{v_{n}})}{\E_{\Od}(\sigma_{v^{(k)}}\sigma_{v_{2}}\dots\sigma_{v_{n}})}=\bar{\eta}_{z^{(2k)}}\bar{\eta}_{z^{(2k+1)}}F\left(z^{(2k)},z^{(2k+1)}\right)\\
=1+\re[(v^{(k+1)}-v^{(k)})\mathcal{A}(v^{(k)})]+o(\delta)=e^{\re[(v^{(k+1)}-v^{(k)})\mathcal{A}(v^{(k)})]+o(\delta)}.
\end{multline*}
Multiplying over $k$ and passing to the limit $\delta\to0$ gives
the result.

Iterating the above argument and exchanging the roles of $v_{1},\dots,v_{n}$
if necessary, we can prove (\ref{eq: conv_ratio}) in the general
case. Since the left-hand side does not depends on the paths chosen,
we infer that $\formL$ is in fact exact.
\end{proof}
Motivated by Corollary \ref{cor: spin_ratio}, we define the continuous
spin correlation functions by 
\begin{equation}
\ccor{\sigma_{v_{1}}\dots\sigma_{v_{n}}}_{\Omega}=\exp\left(\int\formL^{(n,\Omega)}\right),\label{eq: def_spincorr_up_to_constant}
\end{equation}
which is a function of an $n$-tuple $v_{1},\dots,v_{n}$ of distinct
points defined up to a multiplicative constant. The following proposition
completes the definition of $\CorrO{\sigma_{v_{1}}\dots\sigma_{v_{n}}}$
when all spins are in the bulk.
\begin{prop}
\label{prop: multiplicative-normalisation}It is possible to choose
the constants of integration in (\ref{eq: def_spincorr_up_to_constant})
in such a way that 
\begin{equation}
\ccor{\sigma_{v_{1}}\sigma_{v_{2}}\dots\sigma_{v_{n}}}_{\Omega}\sim|v_{1}-v_{2}|^{-\frac{1}{4}}\ccor{\sigma_{v_{3}}\dots\sigma_{v_{n}}}_{\Omega}\label{eq: spins_coherent}
\end{equation}
as $v_{1}$ tends to $v_{2}\in\Omega$. With the convention $\CorrO{}=1$,
this choice is unique.
\end{prop}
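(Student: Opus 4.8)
The plan is to argue by induction on the (necessarily even) number $n$ of spins, taking $\CorrO{}=1$ as the base case. Assuming the $(n-2)$-point correlations have been normalized in every domain, I must exhibit, and then pin down, the constant of integration in (\ref{eq: def_spincorr_up_to_constant}) for the $n$-point function. Since $\log\ccor{\sigma_{v_1}\dots\sigma_{v_n}}_\Omega$ is by construction a potential for the exact form $\formL^{(n,\Omega)}=\re(\sum_i\coefA_{\Omega,\cvr}(v_i)\,dv_i)$ with $\cvr=\cvr(v_1,\dots,v_n)$, the whole statement reduces to understanding the behaviour of the coefficients $\coefA_{\Omega,\cvr}(v_i)$ as $v_1\to v_2$.

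First I would carry out the model computation that fixes the exponent. By Lemma \ref{lem: f_star_holom} the observable $f^{[\reg]}_{\Omega,\cvr}(v_1,\cdot)$ is the holomorphic $\cvr$-spinor with $f^{[\reg]}(v_1,z)=\lamb(z-v_1)^{-1/2}+O(1)$ at $v_1$ and standard boundary conditions elsewhere. Near the collision it should be compared with the explicit full-plane spinor
\[
f_{\mathrm{mod}}(z)=\frac{\lamb\,(v_1-v_2)^{1/2}}{\sqrt{(z-v_1)(z-v_2)}},
\]
which is ramified exactly at $v_1,v_2$, decays at infinity, reproduces the singularity $\lamb(z-v_1)^{-1/2}$ at $v_1$, and whose coefficient at $v_2$ is $\lambb$ times a real number, consistent with (\ref{eq: bc_branchpoint}). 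Expanding $f_{\mathrm{mod}}$ at $v_1$ and comparing with Definition \ref{def: coefA} gives $\coefA(v_1)=-\tfrac14(v_1-v_2)^{-1}$ in the model. The key analytic input, which I state as a lemma, is that this persists in $\Omega$: as $v_1\to v_2$ at a definite distance from $v_3,\dots,v_n$ and from $\pa\Omega$, one has uniformly
\[
\coefA_{\Omega,\cvr(v_1,\dots,v_n)}(v_1)=-\frac{1}{4(v_1-v_2)}+o(1),
\]
and by symmetry the same with $v_1,v_2$ interchanged; moreover $\coefA_{\Omega,\cvr(v_1,\dots,v_n)}(v_j)\to\coefA_{\Omega,\cvr(v_3,\dots,v_n)}(v_j)$ for $j\ge3$, since the ramifications at $v_1,v_2$ cancel in the limit and the defining boundary value problem degenerates continuously.

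Granting this, I would assemble the asymptotics by differentiating the candidate ratio. Writing $r:=\ccor{\sigma_{v_1}\dots\sigma_{v_n}}_\Omega/\big(|v_1-v_2|^{-1/4}\ccor{\sigma_{v_3}\dots\sigma_{v_n}}_\Omega\big)$ and using $\tfrac14\,d\log|v_1-v_2|=\tfrac14\re\big((v_1-v_2)^{-1}(dv_1-dv_2)\big)$, the lemma shows that every component of $d\log r=\formL^{(n,\Omega)}-\formL^{(n-2,\Omega)}+\tfrac14\,d\log|v_1-v_2|$ tends to $0$ as $v_1,v_2\to v_*$: the $dv_1,dv_2$ parts cancel the singular terms against the logarithm, while the $dv_j$ parts ($j\ge3$) cancel against $\formL^{(n-2,\Omega)}$. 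Hence $\log r$ converges to a limit independent of $v_*$, of $v_3,\dots,v_n$, and of the path, so it is a single constant attached to $\Omega$; choosing the constant of integration of the $n$-point function to make this limit equal to $1$ yields the required fusion asymptotic, and this choice is forced, so uniqueness follows from the base case. Consistency across different colliding pairs is guaranteed by the symmetry of $\int\formL$ recorded in Corollary \ref{cor: spin_ratio} together with the universality of the fusion constant just established.

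The main obstacle is the uniform estimate $\coefA_{\Omega,\cvr}(v_1)=-\frac{1}{4(v_1-v_2)}+o(1)$, i.e. showing that the domain-dependent correction genuinely vanishes rather than merely staying bounded. I would obtain it by a local comparison argument: set $g:=f^{[\reg]}_{\Omega,\cvr}(v_1,\cdot)-f_{\mathrm{mod}}$ on a fixed ball $B_\rho(v_*)$ with $\rho\gg|v_1-v_2|$, so that $g$ is bounded at $v_1$ while its only singularity there is the residue discrepancy $\lambb(c_2-s)(z-v_2)^{-1/2}$ at $v_2$ between the true coefficient $c_2$ and the model coefficient $s$. The sought correction equals, up to the nonzero factor $2\lamb$, the coefficient of $(z-v_1)^{1/2}$ in the expansion of $g$ at $v_1$. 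Estimating $g$ on $\pa B_\rho(v_*)$ from the bulk convergence (Theorem \ref{thm: conv_bulk_bulk}) together with the explicit scale $\rho^{-1/2}$ of $f_{\mathrm{mod}}$, and controlling the mismatch $c_2-s$ via the maximum principle for $\im\int g^2\,dz$ and the uniqueness Proposition \ref{prop: Uniqueness_continuous}, one shows that $g$ is $o(\rho^{-1/2})$ near $v_*$ and hence that its relevant Taylor coefficient is $o(|v_1-v_2|^{-1})$. This is exactly the spin--spin instance of the fusion mechanism, proved here directly so as not to rely on the general rules of Section \ref{sec: ccor_fusion}.
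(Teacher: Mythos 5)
There is a genuine gap, and it sits exactly at your declared key lemma. The claim $\coefA_{\Omega,\cvr(v_{1},\dots,v_{n})}(v_{1})=-\tfrac{1}{4(v_{1}-v_{2})}+o(1)$ is false. The correct expansion (this is Proposition \ref{prop: coefA_vary} of the paper, proved via Propositions \ref{prop: u_to_v-1} and \ref{prop: unelegant_but_useful}) is, writing $u=v_{1}$, $v=v_{2}$,
\[
\coefA_{\double{u,v}\cdot\cvr}(u)=-\frac{1}{4(u-v)}+\frac{\eta_{uv}^{2}}{4}f_{\cvr}^{\star}(u,u)+o(1),
\]
where the middle term is bounded but genuinely non-vanishing and direction-dependent: $\eta_{uv}^{2}$ is a unimodular factor rotating with the direction of approach, and $f_{\cvr}^{\star}(u,u)\in i\R$ is generically nonzero (in $\H$ with wired boundary, $f^{\star}(z,z)=2/(z-\bar{z})\neq0$). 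Your full-plane model $f_{\mathrm{mod}}$ captures only the singular $-\tfrac{1}{4}(u-v)^{-1}$ part; the influence of the domain and of the spectator points enters at order $O(1)$, not $o(1)$, so no purely local comparison on $B_{\rho}(v_{*})$ can show the correction vanishes — indeed $f_{\mathrm{mod}}$ does not satisfy the boundary conditions, so the maximum principle for $\im\int g^{2}$ has nothing to bite on. Your own sketch betrays the problem: it concludes only that the relevant Taylor coefficient is $o(|v_{1}-v_{2}|^{-1})$, which is strictly weaker than the $o(1)$ your lemma asserts and is not even enough to make $\log r$ converge (an $o(s^{-1})$ error such as $1/(s\log(1/s))$ integrates to a divergent quantity along $v_{1}\to v_{2}$).

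Consequently the step ``every component of $d\log r$ tends to $0$'' fails for the $dv_{1},dv_{2}$ components, and with it your argument that the limiting constant is independent of the collision point $v_{*}$. The architecture is repairable — this is what the paper does in Lemma \ref{lem: spin_to_spin} — but the repair needs the precise structure of the bounded term: boundedness alone gives existence of the limit of $\log r$ (Lipschitz in $v_{1}$ near the diagonal), while constancy of that limit uses the cancellation $(\pa_{u}+\pa_{v})\log\ccor{\sigma_{u}\sigma_{v}\sigma_{\cvr}}=o(1)$, which rests on $\eta_{uv}^{2}=-\eta_{vu}^{2}$ together with $f_{\cvr}^{\star}(u,u)=f_{\cvr}^{\star}(v,v)+o(1)$, and then on matching the spectator derivatives via $\coefA_{\double{u,v}\cdot\cvr}(v_{q})=\coefA_{\cvr}(v_{q})+o(1)$. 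This last spectator statement, which you assert through a soft ``continuous degeneration of the boundary value problem,'' is true but also requires proof; the paper derives it from Proposition \ref{prop: v_to_v} by a double contour integral. Once these two cancellations are in hand your induction and uniqueness argument do go through; the bounded term you dropped is exactly what produces the second-order contribution $\tfrac{i}{4}f_{\cvr}^{\star}(v,v)|u-v|$ in the sharper asymptotics (\ref{eq: lem_spin_to_spin}).
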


\begin{proof}
The proof of existence of such a normalization is postponed until
Section \ref{sec: ccor_fusion}. To prove uniqueness, just observe
that iterating (\ref{eq: spins_coherent}) for $n=2,4,\dots$ uniquely
fixes the integration constants in (\ref{eq: def_spincorr_up_to_constant})
.
\end{proof}

We now define $\CorrO{\sigma_{1}\dots\sigma_{n}}$ in the case some
spins are at the boundary. Recall that, when $\Omega$ is a simply-connected
domain, the \emph{conformal radius} $\crad_{\Omega}(v)$ of $\Omega$
as seen from $v\in\Omega$ is defined as $|\varphi'(0)|$, where $\varphi$
is any conformal map from the unit disc $\D$ to $\Omega$. When $\Omega$
is multiply connected, we define $\crad_{\Omega}(v)$ to be the minimum
of conformal radii $\crad_{\Omega'}(v)$ over simply-connected domains
$\Omega'$ obtained by placing $\Omega$ on the Riemann sphere and
filling in all but one holes in $\Omega$. 
\begin{prop}
\label{prop: spin_bdry_behavior}Assume that $w_{1},\dots,w_{d}\in\fixed$.
Then, the following limit exists:
\begin{equation}
\ccor{\sigma_{v_{1}}\dots\sigma_{v_{n}}\sigma_{w_{1}}\dots\sigma_{w_{d}}}_{\Omega}:=\lim_{v_{n+1}\to w_{1},\dots,v_{n+d}\to w_{d}}\prod_{i=1}^{d}\crad_{\Omega}(v_{n+i})^{-\frac{1}{8}}\cdot\ccor{\sigma_{v_{1}}\dots\sigma_{v_{n+d}}}_{\Omega}.\label{eq: prop_spin_bdry}
\end{equation}
 
\end{prop}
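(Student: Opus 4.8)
The plan is to read the asymptotics off the logarithmic-derivative description of spin correlations. Since $\ccor{\sigma_{v_{1}}\dots\sigma_{v_{n+d}}}_{\Omega}=\exp\int\formL$ with $\formL=\re\big(\sum_{j}\coefA_{\Omega,\cvr}(v_{j})\,dv_{j}\big)$ and $\formL$ is exact (Corollary \ref{cor: spin_ratio}), the defining integral is path independent. Hence it is enough to understand the single coefficient $\coefA_{\Omega,\cvr}(v_{n+i})$ as $v_{n+i}\to w_{i}\in\fixed$ with all other points frozen: the frozen bulk spins contribute bounded terms to $\formL$, and since $w_{1},\dots,w_{d}$ are distinct the $d$ insertions may be driven to the boundary in turn, their mutual interactions staying bounded. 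I would therefore isolate, near each $w_{i}$, a universal singular part of $\coefA$ together with a remainder that extends continuously up to $w_{i}$.

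First I would record the transformation rule: differentiating the covariance $(\ref{eq: ccov_sharp})$ of $f^{[\reg]}$ gives, for a conformal $\varphi:\Omega\to\Oother$,
\[
\coefA_{\Omega,\cvr}(v)=\varphi'(v)\,\coefA_{\Oother,\cvrother}(\varphi(v))+\tfrac18\,\frac{\varphi''(v)}{\varphi'(v)},
\]
the affine term being exactly what produces the weight-$\tfrac18$ covariance of $\ccor{\sigma\dots}$. Together with the covariance $\crad_{\Oother}(\varphi(v))=|\varphi'(v)|\,\crad_{\Omega}(v)$ of the conformal radius this reduces the claim to the case that $\Omega$ is nice and $\pa\Omega$ is analytic near each $w_{i}$. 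In that local model the universal part comes from the explicit half-plane observable: from Example \ref{exa: f_half-plane_explicit}(ii) and $(\ref{eq: def_f_any})$ one computes $f^{[\reg]}_{\H,\double v}(v,z)=\lamb\sqrt{v-\bar v}/\sqrt{(z-v)(z-\bar v)}$, and expanding at $z=v$ with $\crad_{\H}(v)=2\im v$ yields
\[
\coefA_{\H,\double v}(v)=-\tfrac14\,\pa_{v}\log\crad_{\H}(v),\qquad \re\big(\coefA_{\H,\double v}(v)\,dv\big)=-\tfrac18\,d\log\crad_{\H}(v).
\]
Thus a single spin in $\H$ obeys $\ccor{\sigma_{v}}_{\H}\asymp\crad_{\H}(v)^{-1/8}$, carrying precisely the universal conformal-weight power $\tfrac18$ of the conformal radius that the renormalization in $(\ref{eq: prop_spin_bdry})$ is built to compensate.

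The heart of the argument is to show that the remainder
\[
\rho_{i}(v):=\coefA_{\Omega,\cvr}(v)+\tfrac14\,\pa_{v}\log\crad_{\Omega}(v)
\]
extends continuously to $v=w_{i}$; note that it vanishes identically in the half-plane single-spin model above. On a neighbourhood $U$ of $w_{i}$ where $\pa\Omega$ is analytic and wired, I would reflect $f^{[\reg]}_{\Omega,\cvr}(v,\cdot)$ across $\pa\Omega\cap U$ using the wired boundary conditions (Remark \ref{rem: Schwarz-reflection}); the reflected spinor is ramified at $v$ and at its mirror image $\bar v$, and as $v\to w_{i}$ this pair collides at $w_{i}$ while all remaining ramification points stay at definite distance. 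Representing $\coefA(v)$ through the contour integral $(\ref{eq: def_f_any})$, deforming onto a contour that encircles the colliding pair at definite distance, and subtracting the explicit self-image term $-\tfrac14\pa_{v}\log\crad$, I would identify $\rho_{i}(v)$ with an integral of the reflected observable that avoids $v,\bar v$. By the joint analyticity of Remark \ref{rem: Hartogs} this depends analytically on $v$ in the bulk; the main obstacle is to prove it has a finite limit as $v\to w_{i}$, that is, that the reflected observable converges, locally away from $w_{i}$, to the observable with the spin fused onto the wired arc. This is, via reflection, an instance of the $\sigma\sigma$-fusion of Section \ref{sec: ccor_fusion}; I expect to establish it by the compactness/uniqueness scheme of the convergence theorems --- uniform boundedness of the associated $H$-function, extraction of a limit, and its identification through Proposition \ref{prop: Uniqueness_continuous}.

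Finally I would assemble the pieces. Taking (by exactness) a path along which each $v_{n+i}$ runs to $w_{i}$, the splitting $\re(\coefA_{\Omega,\cvr}(v_{n+i})\,dv_{n+i})=-\tfrac18\,d\log\crad_{\Omega}(v_{n+i})+\re(\rho_{i}\,dv_{n+i})$ shows that $\int\formL$ equals $-\tfrac18\sum_{i}\log\crad_{\Omega}(v_{n+i})$ plus a quantity converging as each $v_{n+i}\to w_{i}$ --- namely the integral of the continuous form $\re(\rho_{i}\,dv_{n+i})$ over a finite path, together with the convergent contributions of the frozen bulk spins and of the cross-interactions. Exponentiating, the logarithmic terms are precisely the conformal-radius factors removed by the renormalization in $(\ref{eq: prop_spin_bdry})$, so the limit exists. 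The only substantial step is the continuity of $\rho_{i}$ up to $w_{i}$ described above; the reduction, the conformal bookkeeping, and the final integration are routine once that input is in place.
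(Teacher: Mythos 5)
Your proposal is correct and follows essentially the same route as the paper: the paper likewise reduces by conformal covariance to a locally flat boundary, establishes the singular behavior $\coefA_{\double u\cdot\cvr}(u)=-\frac{1}{4(u-\bar{u})}+o(1)$ (your $-\tfrac14\pa_{v}\log\crad_{\Omega}(v)$ term) via Schwarz reflection and exactly the compactness/uniqueness scheme you sketch for the remainder (packaged as Propositions \ref{prop: unelegant_but_useful}, \ref{prop: _u_v_to_bdry} and \ref{prop: coefA_vary}), and then integrates $\formL$ and matches the $\crad^{-1/8}$ factors as you do. The only cosmetic difference is that the paper sends all $v_{n+q}\to w_{q}$ simultaneously using the multi-point version of the compactness lemma (Remark \ref{rem: unelegant_many_points}) rather than ``in turn''.
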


\begin{proof}
See Section \ref{sec: ccor_fusion}.
\end{proof}
\begin{rem}
\emph{A priori}, the correlation $\ccor{\sigma_{v_{1}}\dots\sigma_{v_{n}}\sigma_{w_{1}}\dots\sigma_{w_{n}}}_{\Omega}$
defined in the above Lemma may depend on the positions of points $w_{1},\dots,w_{d}$.
Theorem \ref{thm: spin_convergence} below shows that in fact, it
only depends on the parity of the number of $w_{i}$ at each boundary
component. 
\end{rem}

\subsection{Definitions of $\protect\CorrO{...}$ in continuum}

\label{subsec: Def_corr_cont}

In this section, we collect the definitions of the continuous correlations
$\CorrO{\ldots}$ and $\ccor{\ldots}_{\Omega,\bcond}$, which appear
in the statements of Theorems \ref{thm: Intro_1}\textendash \ref{thm: Intro_3}.
The definitions are motivated by convergence Theorems \ref{thm: conv_bulk_bulk},
\ref{thm: Convergence_singular}, \ref{thm: Convergence_bulk_near},
\ref{thm: spin_convergence}, that, via the Pfaffian formula \ref{eq: Pfaff_discrete}
and Lemmas \ref{lem: corr_to_obs_for_THM_2} and \ref{lem: corr_to_obs_for_Thm_3},
lead to the proofs of Theorems \ref{thm: Intro_1}\textendash \ref{thm: Intro_3}.

The correlations 
\begin{equation}
\CorrO{\psi_{z_{1}}^{\any}\ldots\psi_{z_{k}}^{\any}\en_{e_{1}}\ldots\en_{e_{s}}\mu_{u_{1}}\ldots\mu_{u_{m}}\sigma_{v_{1}}\ldots\sigma_{v_{n}}}\label{eq: anycorr}
\end{equation}
that we define will be multi-valued functions of complex numbers $\eta_{1},\dots,\eta_{k}$
and the marked points $v_{1},\dots,v_{n}\in\Omega\cup\fixed$, $u_{1},\dots u_{m}\in\Omega$,
$z_{1},\dots,z_{k}\in\Omega$, $e_{1},\dots e_{s}\in\Omega$. All
the latter are assumed to be distinct, with a possible exception discussed
in Remark \ref{rem: psi_to_psibar} below. We first define the correlations
with standard boundary conditions, in which case we assume that $n+k$
and $m+k$ are both even.

Eventually, the expression (\ref{eq: anycorr}) will turn out to be
a spinor with the same ramification structure as (\ref{eq: RS_all_around_all}),
\emph{symmetric} under transpositions of $u_{i}$ and $u_{j}$, $v_{i}$
and $v_{j}$, $s_{i}$ and $s_{j}$, and \emph{anti-symmetric} under
transpositions of $z_{i}$ and $z_{j}$. However, for technical reasons,
it is more convenient at first to treat $u_{1},\dots,u_{m},v_{1},\dots,v_{n}$
as \emph{fixed} points, and the others as living on the double cover
\[
\cvr:=\cvr(u_{1},\dots,u_{m},v_{1},\dots,v_{n}).
\]
We fix an arbitrary choice of square roots $(z-u_{1})^{\frac{1}{2}},\dots,(z-u_{m})^{\frac{1}{2}}$
locally on $\Ocvrc$ near $u_{1},\dots,u_{m}$, respectively. Once
this choice is made, it is possible to view the disorder $\mu_{u_{i}}$
as a fermion $\psi_{w}$ placed ``infinitesimally close'' to a spin
$\sigma_{u_{i}}$, cf. Lemma \ref{lem: corr_to_obs_for_THM_2}, Theorem
\ref{thm: Convergence_singular}, and the definitions of $f^{[\sharp]}$
and $f^{[\sharp,\sharp]}$. In this situation, we will write $\psi_{u_{i}}^{\sharp}\sigma_{u_{i}}$
in the place of $\mu_{u_{i}}$, and the correlation (\ref{eq: anycorr})
will be \emph{anti-symmetric }under transposition of $\mu_{u_{i}}$
and $\mu_{u_{j}}$. More generally, we will allow the symbols in (\ref{eq: anycorr})
to be written \emph{in arbitrary order}, with the convention that
$\sigma_{v}$, $\mu_{u}$ and $\en_{v}$ \emph{``commute}'' with
everything, and $\psi_{z}^{\any}$ ``\emph{anti-commute}'' with
each other. 

If $u_{i}$ are allowed to move, then the choices of $(z-u_{1})^{\frac{1}{2}},\dots,(z-u_{m})^{\frac{1}{2}}$
are no longer arbitrary, and, in fact, this leads to an additional
$-1$ sign as $u_{i}$ and $u_{j}$ exchange positions, reconciling
the two points of view. \textcolor{black}{It is possible, given a
point on the Riemann surface of (\ref{eq: RS_all_around_all}), to
describe a natural way to make a choice of $(z-u_{1})^{\frac{1}{2}},\dots,(z-u_{m})^{\frac{1}{2}}$;
this choice will depend on the ordering of $u_{1},\dots,u_{m}$.}

In this chapter, we consider the domain $\Omega$ to be fixed and
drop it from the notation, writing simply 
\[
\ccor{\ldots}:=\CorrO{\ldots}.
\]
We also denote 
\[
\sigma_{\cvr}:=\sigma_{u_{1}}\dots\sigma_{u_{m}}\sigma_{v_{1}}\ldots\sigma_{v_{n}}.
\]

\begin{defn}
\label{def: spins}(Spin correlations) If $v_{1},\dots,v_{n}\in\Omega$,
we define the quantity $\ccor{\sigma_{v_{1}}...\sigma_{v_{n}}}$ by
(\ref{eq: def_spincorr_up_to_constant}) and (\ref{prop: multiplicative-normalisation}).
More generally, if $v_{1},\dots,v_{n}\in\Omega\cup\fixed$, we define
$\ccor{\sigma_{v_{1}}...\sigma_{v_{n}}}$ by the limiting procedure
as in (\ref{eq: prop_spin_bdry}).
\end{defn}

\begin{defn}
\label{def: real_fermions}(Real fermions) Given $\eta_{1},\eta_{2}\in\C$,
we define
\[
\ccor{\psi_{z_{1}}^{\eta_{1}}\psi_{z_{2}}^{\eta_{2}}\sigma_{v_{1}}...\sigma_{v_{n}}}:=f_{\cvr(v_{1},\dots,v_{n})}^{[\eta_{2},\eta_{1}]}(z_{2},z_{1})\cdot\ccor{\sigma_{v_{1}}...\sigma_{v_{n}}},
\]
Furthermore, given $\eta_{1},\dots,\eta_{k}\in\C$ and distinct points
$z_{1},\dots,z_{k}\in\Omega$ and $v_{1},\dots,v_{n}\in\Omega\cup\fixed$,
we define
\begin{equation}
\frac{\ccor{\psi_{z_{1}}^{\eta_{1}}...\psi_{z_{k}}^{\eta_{k}}\sigma_{v_{1}}...\sigma_{v_{n}}}}{\ccor{\sigma_{v_{1}}...\sigma_{v_{n}}}}:=\Pf\left[\frac{\ccor{\psi_{z_{p}}^{\eta_{p}}\psi_{z_{q}}^{\eta_{q}}\sigma_{v_{1}}...\sigma_{v_{n}}}}{\ccor{\sigma_{v_{1}}...\sigma_{v_{n}}}}\right]_{p,q=1}^{k}=\Pf\left[f_{\cvr(v_{1},\dots,v_{n})}^{[\eta_{q},\eta_{p}]}(z_{q},z_{p})\right]_{p,q=1}^{k}.\label{eq: ccor_pfaff_expansion}
\end{equation}
\end{defn}

\begin{rem}
Note that by Lemma \ref{lem: antisym_cont}, the matrix under the
Pfaffian is indeed anti-symmetric, and hence the quantity $\ccor{\psi_{z_{1}}^{[\eta_{1}]}...\psi_{z_{k}}^{[\eta_{k}]}\sigma_{v_{1}}...\sigma_{v_{n}}}$
is anti-symmetric under permutations of $\psi$'s and symmetric under
permutations of $\sigma$'s.
\end{rem}

Recall that $\ccor{\psi_{z_{1}}^{\eta_{1}}\psi_{z_{2}}^{\eta_{2}}\sigma_{v_{1}}...\sigma_{v_{n}}}$
is real linear in each $\eta_{1},\eta_{2}.$ Hence, by the linearity
of the Pfaffian, $\ccor{\psi_{z_{1}}^{\eta_{1}}\ldots\psi_{z_{k}}^{\eta_{k}}\sigma_{v_{1}}...\sigma_{v_{n}}}$
is real-linear in each $\eta_{k}$. Therefore, we can expand 
\[
\ccor{\psi_{z_{1}}^{\eta_{1}}\ldots\psi_{z_{k}}^{\eta_{k}}\sigma_{v_{1}}...\sigma_{v_{n}}}=\sum_{S\subset\{1,\dots,k\}}a_{S}\cdot\prod_{i\in S}\eta_{i}\prod_{i\notin S}\bar{\eta}_{i}
\]
with some coefficients $a_{S}=a_{S}(z_{1},\dots,v_{n})$ that do not
depend on $\eta_{1},\dots,\eta_{k}$. More generally, if $S\subset\{1,\dots,k\}$,
we can write 
\begin{equation}
\ccor{\psi_{z_{1}}^{\eta_{1}}\ldots\psi_{z_{k}}^{\eta_{k}}\sigma_{v_{1}}...\sigma_{v_{n}}}=\sum_{S'\subset S}a_{S,S'}\cdot\prod_{i\in S'}\eta_{i}\prod_{i\in S\setminus S'}\bar{\eta}_{i}\label{eq: exp_multilinear}
\end{equation}
where $a_{S,S'}=a_{S,S'}(z_{1},\dots,v_{n},\{\eta_{i}\}_{i\notin S})$
is real-linear in each $\eta_{i}$, $i\notin S$.
\begin{defn}
\label{def: any fermions}(Holomorphic and anti-holomorphic fermions)
Under conditions of Definition \ref{def: real_fermions}, assume that
for each $i=1,\dots,k$, $\psi_{z_{i}}^{\any}$ stands for $\psi_{z_{i}}$,
$\psi_{z_{i}}^{\star}$, or $\psi_{z_{i}}^{[\eta_{i}]}$. We define
\[
\ccor{\psi_{z_{1}}^{\diamond}...\psi_{z_{k}}^{\diamond}\sigma_{v_{1}}...\sigma_{v_{n}}}:=2^{|S|}a_{S,S'},
\]
where $a_{S,S'}$ is the coefficient in the expansion (\ref{eq: exp_multilinear})
with 
\[
S=\{i:\psi_{z_{i}}^{\any}=\psi_{z_{i}}\text{ or }\psi_{z_{i}}^{\any}=\psi_{z_{i}}^{\star}\}\text{ and }S'=\{i:\psi_{z_{i}}^{\any}=\psi_{z_{i}}^{\star}\}.
\]
\end{defn}

Definition \ref{def: any fermions} can be understood as the identity
\[
\psi_{z}^{\eta}=\frac{1}{2}(\bar{\eta}\psi_{z}+\eta\psi_{z}^{\star}),
\]
which can be plugged into any continuous correlation; we also declare
the latter to be linear in the symbols appearing therein. Formally,
this means that 
\begin{equation}
\ccor{\psi_{z}^{\eta}\Op}=\frac{1}{2}\bar{\eta}\ccor{\psi_{z}\Op}+\frac{1}{2}\eta\ccor{\psi_{z}^{\star}\Op},\label{eq: psi_eta_expand}
\end{equation}
where $\Op=\Op(\psi^{\any},\sigma)$. In fact, since more general
correlations will be obtained from $\ccor{\psi_{z_{1}}^{\diamond}...\psi_{z_{k}}^{\diamond}\sigma_{v_{1}}...\sigma_{v_{n}}}$
by linear procedures, (\ref{eq: psi_eta_expand}) will hold for arbitrary
$\Op=\Op(\psi^{\any},\en,\mu,\sigma).$
\begin{rem}
\label{rem: _psi_psibar_down_to_earth}It follows from the linearity
of the Pfaffian and Lemma \ref{lem: spinor_hol_antyhol} that we could
alternatively define 
\end{rem}

\begin{equation}
\frac{\ccor{\psi_{z_{1}}^{\diamond}...\psi_{z_{k}}^{\diamond}\sigma_{v_{1}}...\sigma_{v_{n}}}}{\ccor{\sigma_{v_{1}}...\sigma_{v_{n}}}}\ =\ \Pf\biggl[\frac{\ccor{\psi_{z_{p}}^{\diamond}\psi_{z_{q}}^{\diamond}\sigma_{v_{1}}...\sigma_{v_{n}}}}{\ccor{\sigma_{v_{1}}...\sigma_{v_{n}}}}\biggr]_{p,q=1}^{k}\,,\label{eq: multipsi_pfaffian}
\end{equation}
where 
\begin{align}
\ccor{\psi_{z}\psi_{a}^{[\eta]}\sigma_{v_{1}}...\sigma_{v_{n}}} & =f^{[\eta]}(a,z)\cdot\ccor{\sigma_{v_{1}}...\sigma_{v_{n}}}\label{eq: dte_psi_psieta}\\
\ccor{\psistar z\psi_{a}^{[\eta]}\sigma_{v_{1}}...\sigma_{v_{n}}} & =\overline{f^{[\eta]}(a,z)}\cdot\ccor{\sigma_{v_{1}}...\sigma_{v_{n}}}\label{eq: dte_psistar_psieta}\\
\ccor{\psi_{z_{1}}\psi_{z_{2}}\sigma_{v_{1}}...\sigma_{v_{n}}} & =f(z_{2},z_{1})\cdot\ccor{\sigma_{v_{1}}...\sigma_{v_{n}}}\label{eq: dte_psi_psi}\\
\ccor{\psi_{z_{1}}\psistar{z_{2}}\sigma_{v_{1}}...\sigma_{v_{n}}} & =f^{\star}(z_{2},z_{1})\cdot\ccor{\sigma_{v_{1}}...\sigma_{v_{n}}}\label{eq: dte_psi_psistar}\\
\ccor{\psistar{z_{1}}\psistar{z_{2}}\sigma_{v_{1}}...\sigma_{v_{n}}} & =\overline{f(z_{2},z_{1})}\cdot\ccor{\sigma_{v_{1}}...\sigma_{v_{n}}}\label{eq: dte_psistar_psistar}
\end{align}
and otherwise the correlations are anti-symmetric in $\psi^{\any}$.
\begin{rem}
\label{rem: psi_to_psibar}Observe from (\ref{eq: ccor_pfaff_expansion})
that, as $z_{k}\to z_{i}$, one has 
\[
\ccor{\psi_{z_{1}}^{\diamond}...\psi_{z_{k-1}}^{\any}\psi_{z_{k}}\sigma_{v_{1}}...\sigma_{v_{n}}}=\alpha_{i}\ccor{\psi_{z_{i}}^{\diamond}\psi_{z_{k}}\sigma_{v_{1}}...\sigma_{v_{n}}}+O(1),
\]
where $\alpha_{i}$ does not depend on $z_{k}$. In particular, in
the case $\psi_{z_{i}}^{\diamond}=\psi_{z_{i}}^{\star}$, the left-hand
side is analytic at $z_{k}=z_{i}$. Thus, the expression $\ccor{\psi_{z_{1}}^{\diamond}...\psi_{z_{k}}^{\diamond}\sigma_{v_{1}}...\sigma_{v_{n}}}$
is well-defined also for $z_{i}=z_{j}$, provided that $\psi_{z_{i}}^{\any}=\psi_{z_{i}}$,
$\psi_{z_{j}}^{\any}=\psi_{z_{j}}^{\star}$, or vice versa.
\end{rem}

\begin{defn}
\label{def: Disorders}(Disorders) We define
\begin{equation}
\ccor{\psi_{z_{1}}^{\diamond}...\psi_{z_{k}}^{\diamond}\psi_{u}^{\sharp}\sigma_{u}\sigma_{v_{1}}...\sigma_{v_{n}}}:=\lim_{w\to u}\lamb(w-u)^{\frac{1}{2}}\ccor{\psi_{z_{1}}^{\diamond}...\psi_{z_{k}}^{\diamond}\psi_{w}\sigma_{u}\sigma_{v_{1}}...\sigma_{v_{n}}}\label{eq: def_mu_1}
\end{equation}
More generally, 
\begin{equation}
\ccor{\psi_{z_{1}}^{\diamond}...\psi_{z_{k}}^{\diamond}\psi_{u_{1}}^{\sharp}\ldots\psi_{u_{m}}^{\sharp}\sigma_{\cvr}}:=\lim_{w\to u_{m}}\lambda(w-u_{m})^{\frac{1}{2}}\ccor{\psi_{z_{1}}^{\diamond}...\psi_{z_{k}}^{\diamond}\psi_{u_{1}}^{\sharp}\ldots\psi_{u_{m}}^{\sharp}\psi_{w}\sigma_{\cvr}},\label{eq: def_mu_many}
\end{equation}
where $\cvr=\cvr[u_{1},\dots,u_{m},v_{1},\dots,v_{n}].$ We write
\[
\mu_{u}:=\psi_{u}^{\sharp}\sigma_{u},
\]
 typically using this notation when we no longer view $u$ as fixed
(see the discussion at the beginning of the section).
\end{defn}

Let us explain why the limits in Definition \ref{def: Disorders}
exist. By (\ref{eq: dte_psi_psieta}\textendash \ref{eq: dte_psi_psistar}),
(\ref{eq: def-sharp}), and the discussion in the beginning of the
proof of Lemma \ref{lem: f_star_holom}, we see that in the case $m=1$,
$k=1$, we have 
\begin{equation}
\frac{\ccor{\psi_{z_{i}}^{\diamond}\psi_{u}^{\sharp}\sigma_{u}\sigma_{v_{1}}...\sigma_{v_{n}}}}{\ccor{\sigma_{u}\sigma_{v_{1}}...\sigma_{v_{n}}}}=\lim_{w\to u}\lamb(w-u)^{\frac{1}{2}}\frac{\ccor{\psi_{z_{i}}^{\diamond}\psi_{w}\sigma_{u}...\sigma_{v_{n}}}}{\ccor{\sigma_{u}...\sigma_{v_{n}}}}=\begin{cases}
f_{\cvr}^{[\sharp,\eta_{i}]}(u,z_{i}). & \any=\eta_{i}\\
\bar{f_{\cvr}^{[\sharp]}(u,z_{i})}, & \any=\star\\
f_{\cvr}^{[\sharp]}(u,z_{i}), & \text{else}.
\end{cases}\label{eq: psi_to_mu_limits}
\end{equation}
Then, in the case $m=2$, $k=0$, we have, by definition, 
\begin{equation}
\frac{\ccor{\psi_{u_{2}}^{\sharp}\psi_{u_{1}}^{\sharp}\sigma_{u_{1}}\sigma_{u_{2}}\sigma_{v_{1}}...\sigma_{v_{n}}}}{\ccor{\sigma_{u_{1}}\sigma_{u_{2}}\sigma_{v_{1}}...\sigma_{v_{n}}}}=\lim_{w\to u}\lamb(w-u_{2})^{\frac{1}{2}}\frac{\ccor{\psi_{w}\psi_{u_{1}}^{\sharp}\sigma_{u_{1}}\sigma_{u_{2}}...\sigma_{v_{n}}}}{\ccor{\sigma_{u_{1}}\sigma_{u_{2}}\sigma_{v_{1}}...\sigma_{v_{n}}}}=f_{\cvr}^{[\sharp,\sharp]}(u_{1},u_{2}).\label{eq: mu_mu_sigma}
\end{equation}
In the general case, we can use Pfaffian representation (\ref{eq: multipsi_pfaffian})
and pass to the limits in each term of the Pfaffian; this gives the
existence of the limits in (\ref{eq: def_mu_many}) and a Pfaffian
representation 
\begin{equation}
\frac{\ccor{\psi_{z_{1}}^{\diamond}...\psi_{z_{k}}^{\diamond}\psi_{u_{1}}^{\sharp}\ldots\psi_{u_{m}}^{\sharp}\sigma_{\cvr}}}{\ccor{\sigma_{\cvr}}}=\Pf\left[\frac{\ccor{\psi_{w_{i}}^{\diamond}\psi_{w_{j}}^{\diamond}\sigma_{\cvr}}}{\ccor{\sigma_{\cvr}}}\right]{}_{w_{1,2}\in\{z_{1},\dots,z_{k},u_{1},\dots,u_{m}\}}.\label{eq: psi_mu_sigma_pfaff}
\end{equation}

\begin{rem}
Alternatively, we could write 
\begin{multline}
\ccor{\psi_{w_{1}}^{\sharp}\dots\psi_{w_{m}}^{\sharp}\sigma_{u_{1}}..\sigma_{u_{m}}\Op(\psi^{\any},\sigma)}=\\
\frac{\lamb^{m}}{(2\pi i)^{m}}\oint_{u_{m}}\dots\oint_{u_{1}}\frac{\ccor{\psi_{w_{1}}\dots\psi_{w_{m}}\sigma_{u_{1}}..\sigma_{u_{m}}\Op(\psi^{\any},\sigma)}}{(w_{1}-u_{1})^{\frac{1}{2}}\dots(w_{m}-u_{m})^{\frac{1}{2}}}dw_{1}\dots dw_{m},\label{eq: mu_resudues}
\end{multline}
where the integrals are over small contours surrounding $w_{1},\dots,w_{m}.$
Since the integrals can be exchanged, we see that, actually, the limits
in (\ref{eq: def_mu_many}) can be taken in any order.
\end{rem}

\begin{rem}
It is possible to replace the holomorphic fermion $\psi_{w}$ in Definition
\ref{def: Disorders} by anti-holomorphic one $\psi_{w}^{\star}$,
as follows: 
\begin{multline*}
\ccor{\psi_{z_{1}}^{\diamond}...\psi_{z_{k}}^{\diamond}\mu_{u_{1}}\ldots\mu_{u_{m}}\sigma_{v_{1}}\ldots\sigma_{v_{n}}}\\
=\lim_{w\to u_{m}}\lambb(\overline{w}-\overline{u}_{m})^{\frac{1}{2}}\ccor{\psi_{z_{1}}^{\diamond}...\psi_{z_{k}}^{\diamond}\psistar w\mu_{u_{1}}...\mu_{u_{m-1}}\sigma_{u_{m}}\sigma_{v_{1}}...\sigma_{v_{n}}}.
\end{multline*}
Indeed, if all $\any$ are $\eta$'s, then this follows from (\ref{eq: corr_conj}),
and the general case follows by linearity.
\end{rem}

We now extend the definition of fermionic correlation by allowing
the fermions to be placed on the boundary, with a necessary renormalization. 
\begin{defn}
\textcolor{black}{\label{def: corr_flat}Assume that $\Omega$ is
a nice domain. We extend the definition of $\ccor{\psi_{z}\Op}$ to
$z\in\fixed$ by continuity. If $b$ is an endpoint of a free arc,
we define 
\begin{equation}
\ccor{\psi_{b}^{\flat}\Op}=\lim_{z\to b}\sflat(z-b)^{\frac{1}{2}}\ccor{\psi_{z}\Op}.\label{eq: def_psi_flat}
\end{equation}
where $\sflat$ is as in definition \ref{eq: def-flat}. }
\end{defn}

\begin{rem}
The entire discussion above applies verbatim to $\flat$ instead of
$\sharp$. In particular, the Pfaffian formula \ref{eq: psi_mu_sigma_pfaff}
still holds when we allow some of $\any$ to be $\flat$; the expansion
may contain, e. g., terms like $f_{\cvr}^{[\sharp,\flat]}(u,b).$
The continuity of $\ccor{\psi_{z}\Op}$ up to $\fixed$ follows from
continuity of terms in Pfaffian expansion, which, in its turn, ultimately
follows form continuity of $f^{\any}(z_{1},\cdot)$, see (\ref{eq: bc_fixed}).
Also, Theorem (\ref{thm: ccov}) below, more specifically, the fact
that the conformal weight for $\psi^{\flat}$ is zero, allows one
to extend the definition (\ref{eq: def_psi_flat}) of $\psi^{\flat}$
to rough domains.
\end{rem}

\begin{defn}
\label{def: energy}We define 
\begin{equation}
\ccor{\psi_{z_{1}}^{\diamond}...\psi_{z_{k}}^{\diamond}\en_{e_{1}}..\en_{e_{s}}\mu_{u_{1}}...\mu_{u_{m}}\sigma_{v_{1}}...\sigma_{v_{n}}}:=\left(\frac{i}{2}\right)^{s}\ccor{\psi_{z_{1}}^{\diamond}...\psi_{z_{k}}^{\diamond}\psi_{e_{1}}\psi_{e_{1}}^{\star}\dots\psi_{e_{s}}\psi_{e_{s}}^{\star}\mu_{u_{1}}...\mu_{u_{m}}\sigma_{v_{1}}...\sigma_{v_{n}}};\label{eq: def_energy}
\end{equation}
\end{defn}

Note that the right-hand side is well defined due to Remark \ref{rem: psi_to_psibar}. 

In the next Proposition, we collect the properties of continuous correlation
functions. 
\begin{prop}
\label{prop: cont_corr_conjugation}One has 
\begin{equation}
\overline{\CorrO{\Obs{\psi^{\any},\en,\mu,\sigma}}}=\CorrO{\Obs{\bar{\psi^{\any}},\en,\mu,\sigma}},\label{eq: corr_conj}
\end{equation}
where we put $\overline{\psi^{\eta}}:=\psi^{\eta}$, $\overline{\psi}:=\psistar{\,}$,
$\overline{\psistar{\,}}:=\psi$. Furthermore, each correlation function
of the form $\CorrO{\psi_{z}\,\Obs{\psi^{\any},\en,\mu,\sigma}}$
is holomorphic in $z$ (apart from other marked points), and each
correlation function of the form $\CorrO{\psi_{z}\,\Obs{\psi^{\eta},\en,\mu,\sigma}}$
satisfies standard boundary conditions on $\pa\Omega$.
\end{prop}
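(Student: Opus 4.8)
The plan is to establish the three claims of Proposition \ref{prop: cont_corr_conjugation} in the order in which the correlations were built up in Definitions \ref{def: real_fermions}--\ref{def: energy}, reducing each assertion to the corresponding property of the two-point functions $f$, $\fdag$ and then propagating it through the Pfaffian and the limiting procedures. First I would record the base cases of the conjugation identity (\ref{eq: corr_conj}) directly from (\ref{eq: dte_psi_psieta})--(\ref{eq: dte_psistar_psistar}): since $\overline{f^{[\eta]}(a,z)}$ appears as the kernel for $\psistar{}$ and $f^{[\eta]}(a,z)$ for $\psi$, and since $\overline{f(z_2,z_1)}$ is the $\psistar{}\psistar{}$ kernel while $f(z_2,z_1)$ is the $\psi\psi$ kernel, the rule $\overline{\psi}=\psistar{}$, $\overline{\psistar{}}=\psi$, $\overline{\psi^\eta}=\psi^\eta$ matches the entries of the antisymmetric matrix exactly. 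Because the spin correlation $\ccor{\sigma_{v_1}\dots\sigma_{v_n}}$ is real (it is defined in (\ref{eq: def_spincorr_up_to_constant}) via $\exp\int\formL$ with $\formL$ real by construction), conjugating the Pfaffian (\ref{eq: multipsi_pfaffian}) entrywise and using that $\Pf$ of the conjugate matrix is the conjugate of $\Pf$ yields (\ref{eq: corr_conj}) for pure-fermion correlations.

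Next I would extend the conjugation identity to disorders and energies. For disorders this is the content of the remark following Definition \ref{def: Disorders}: the limit (\ref{eq: def_mu_many}) defining $\mu_{u}=\psi_u^\sharp\sigma_u$ uses the factor $\lamb(w-u)^{\frac12}$, whose conjugate $\lambb(\bar w-\bar u)^{\frac12}$ is precisely the factor appearing when one instead inserts $\psistar{w}$; thus $\overline{\mu_u}=\mu_u$ is consistent with $\overline{\psi}=\psistar{}$, and passing to the limit in the already-established fermionic identity gives (\ref{eq: corr_conj}) with disorders present. For energies, Definition \ref{def: energy} sets $\en_e=(\tfrac{i}{2})\psi_e\psi_e^\star$; since $\overline{(\tfrac i2)\psi_e\psi_e^\star}=(-\tfrac i2)\psistar e\psi_e=(\tfrac i2)\psi_e\psistar e$ after using anticommutativity, the energy insertion is real and commutes with conjugation, so (\ref{eq: corr_conj}) holds in full generality.

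For the holomorphicity claim I would argue that $\ccor{\psi_z\,\Op}$ is, after dividing by the spin correlation, a Pfaffian (\ref{eq: psi_mu_sigma_pfaff}) in which $z$ enters only through the entries $\ccor{\psi_z\psi_{w_j}^\diamond\sigma_\cvr}/\ccor{\sigma_\cvr}$, i.e.\ through $f(w_j,z)$, $\fdag(w_j,z)$, $f^{[\eta_j]}(w_j,z)$, or (after the disorder/boundary limits) through $f^{[\sharp]}(u_j,z)$, $f^{[\flat]}(b_j,z)$. Each of these is holomorphic in $z$ away from the other marked points: $f$ and $\fdag$ are holomorphic in their second argument by Lemma \ref{lem: spinor_hol_antyhol}, the residue/limit definitions (\ref{eq: def_f_any}) produce holomorphic spinors $f^{[\any]}(w,\cdot)$ by Lemma \ref{lem: f_star_holom}, and the $u$-limits in (\ref{eq: mu_resudues}) are uniform in $z$ over compacts (Remark \ref{rem: _flat_shar_residues}) so holomorphy is preserved. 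Since a Pfaffian is a polynomial in its entries, the whole expression is holomorphic in $z$. The standard boundary conditions for $\ccor{\psi_z\,\Op(\psi^\eta,\en,\mu,\sigma)}$ follow the same route: with all superscripts equal to $\eta$'s the kernel in the $z$-slot is a real-linear combination $\re[\bar\eta\, f^{[\any]}(w,z)]$ of the holomorphic spinors $f^{[\any]}(w,\cdot)$, each of which satisfies the standard boundary conditions on $\pa\Omega\setminus\{w\}$ by Definition \ref{def: feta}, Lemma \ref{lem: f_star_holom}, and the real-linearity of those conditions; the Pfaffian is a fixed real-linear combination (in the remaining $\eta$'s) of such spinors in $z$, and since the standard boundary conditions are real-linear and local (except the free-arc matching, which is inherited termwise from (\ref{eq: bc_free_arc})), they pass to the Pfaffian.

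The main obstacle I anticipate is bookkeeping rather than analysis: making sure the sign conventions and choices of branches of $(z-u_i)^{\frac12}$ fixed at the beginning of Section \ref{subsec: Def_corr_cont} are respected so that the antisymmetric matrix under the Pfaffian stays genuinely antisymmetric after conjugation and after the disorder limits, and that the ``commute/anticommute'' conventions are used consistently when rewriting $\overline{\en_e}$. A secondary subtlety is the free-arc boundary condition (\ref{eq: bc_free_arc}): it is the only nonlocal piece, so I would verify separately that the two endpoints' asymptotic coefficients of the Pfaffian still satisfy $c_2=ic_1$, which holds because each summand does and the relation is real-linear; this is exactly the point where one must invoke Lemma \ref{lem: f_star_holom} and the matching (\ref{eq: f_flat_b1=00003Df_flat_b2}) rather than treat the condition pointwise.
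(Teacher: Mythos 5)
Your treatment of the conjugation identity and of holomorphicity is correct and essentially identical to the paper's proof: the paper likewise disposes of $\en_{e}$ via $\overline{\en_{e}}=\en_{e}$ (the same anticommutativity computation), reduces (\ref{eq: corr_conj}) through the Pfaffian to the two-point kernels (\ref{eq: dte_psi_psieta})--(\ref{eq: dte_psistar_psistar}), (\ref{eq: psi_to_mu_limits}), (\ref{eq: mu_mu_sigma}) together with the reality of $f^{[\sharp,\sharp]}$ and of $\ccor{\scvr}$, and derives holomorphy in $z$ from the fact that $z$ enters the Pfaffian only through $f_{\cvr}(z_{i},z)$ and $f^{[\sharp]}(u_{i},z)$.

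There is, however, a genuine gap in your argument for the third claim, and it occurs exactly at the point that motivates proving (\ref{eq: corr_conj}) first. You assert that the expansion of the Pfaffian along the $\psi_{z}$ row is ``a fixed real-linear combination'' of spinors satisfying the standard boundary conditions. This is automatic only when $\Op$ contains no energies: then every cofactor is a Pfaffian of a matrix with purely real entries ($f^{[\eta_{q},\eta_{p}]}$, $f^{[\sharp,\eta]}$, $f^{[\sharp,\sharp]}$), hence real, and real-linearity of the boundary conditions finishes the argument. But the statement allows $\en_{e}=\frac{i}{2}\psi_{e}\psistar{e}$, and then the $\psi_{z}$-row expansion produces terms
\begin{equation*}
\tfrac{i}{2}\bigl(\CorrO{\psi_{z}\psi_{e_{p}}\sigma_{\cvr}}\,\beta_{p}-\CorrO{\psi_{z}\psistar{e_{p}}\sigma_{\cvr}}\,\beta_{p}^{\star}\bigr),
\end{equation*}
whose cofactors $\beta_{p},\beta_{p}^{\star}$ are themselves Pfaffians containing complex entries such as $\fdag(e_{p},e_{l})$, so they are not real, and neither $f(e_{p},\cdot)$ nor $\fdag(e_{p},\cdot)$ separately satisfies the standard boundary conditions (those conditions couple $f$ and $\fdag$ via (\ref{eq: bc_fixed-1})--(\ref{eq: bc_free-1}) and are only real-linear). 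The missing step, which is how the paper closes the argument, is to apply the already-proved conjugation identity (\ref{eq: corr_conj}) to the cofactors to conclude $\beta_{p}^{\star}=\overline{\beta_{p}}$ (and $\alpha_{p},\hat{\alpha}_{p}\in\R$), whereupon each energy pair recombines into a single real fermion, $\CorrO{\psi_{z}\,\psi_{e_{p}}^{i\beta_{p}/2}\,\sigma_{\cvr}}$, i.e.\ into $f_{\cvr}^{[\theta]}(e_{p},z)$ with $\theta=i\beta_{p}/2$, which does satisfy the standard boundary conditions; only then does real-linearity yield the claim for the full sum. Since you have (\ref{eq: corr_conj}) available, the fix is short, but as written your boundary-condition argument covers only the energy-free case.
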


\begin{rem}
This can be understood as a coupling of ``holomorphic $\psi$ and
anti-holomorphic $\psi^{\star}$ free fermions'' at the boundary
of $\Omega$ via $\psistar{\zeta}=\tau_{\zeta}\psi_{\zeta}$ on $\fixed$
and $\psistar{\zeta}=-\tau_{\zeta}\psi_{\zeta}$ on $\free$.
\end{rem}

\begin{proof}
We start with (\ref{eq: corr_conj}). Taking into account that 
\[
\bar{\en_{e}}=\bar{\frac{i}{2}\psi_{e}\psi_{e}^{\star}}=-\frac{i}{2}\psi_{e}^{\star}\psi_{e}=\frac{i}{2}\psi_{e}\psi_{e}^{\star}=\en_{e},
\]
it is enough to consider $\Op:=\Obs{\psi,\mu,\sigma}$. By (\ref{eq: psi_mu_sigma_pfaff}),
the claim can be reduced to the case to the case $k=2$, where it
is observed directly from (\ref{eq: dte_psi_psieta}\textendash \ref{eq: dte_psistar_psistar}),
(\ref{eq: psi_to_mu_limits}), (\ref{eq: mu_mu_sigma}) and the fact
that $f^{[\sharp,\sharp]}$ is real. 

The holomorphicity of $\CorrO{\psi_{z}\,\Op}$ follows directly from
the Pfaffian formula (\ref{eq: psi_mu_sigma_pfaff}), since the terms
in the Pfaffian that depend on $z$ have the form 
\[
\frac{\CorrO{\psi_{z}\psi_{z_{i}}^{\eta}\sigma_{\cvr}}}{\CorrO{\sigma_{\cvr}}}=f_{\cvr}(z_{i},z)\quad\text{or}\quad\frac{\CorrO{\psi_{z}\psi_{u_{i}}^{\sharp}\sigma_{\cvr}}}{\CorrO{\sigma_{\cvr}}}=f^{[\sharp]}(u_{i},z).
\]
For the final claim, write, by (\ref{eq: psi_mu_sigma_pfaff}), 
\begin{multline*}
\CorrO{\psi_{z}\,\Op}=\sum_{p=1}^{k}\CorrO{\psi_{z}\psi_{z_{p}}^{\eta_{p}}\sigma_{\cvr}}\alpha_{p}+\sum_{p=1}^{m}\CorrO{\psi_{z}\psi_{u_{p}}^{\sharp}\sigma_{\cvr}}\hat{\alpha}_{p}\\
+\sum_{p=1}^{s}\frac{i}{2}\left(\CorrO{\psi_{z}\psi_{e_{p}}\sigma_{\cvr}}\beta_{p}-\CorrO{\psi_{z}\psi_{e_{p}}^{\star}\sigma_{\cvr}}\beta_{p}^{\star}\right),
\end{multline*}
where $\alpha_{p},\hat{\alpha}_{p},\beta_{p},\beta_{p}^{\star}$ do
not depend on $z$, and it is easy to see from (\ref{eq: corr_conj})
that $\alpha_{i}\in\R$, $\hat{\alpha}_{i}\in\R$, and $\beta_{p}^{\star}=\bar{\beta_{p}}$.
Consequently, 
\[
\frac{i}{2}\left(\CorrO{\psi_{z}\psi_{e_{p}}\sigma_{\cvr}}\beta_{i}-\CorrO{\psi_{z}\psi_{e_{p}}^{\star}\sigma_{\cvr}}\beta_{i}^{\star}\right)=\CorrO{\psi_{z}\psi_{e_{p}}^{i\beta_{p}/2}\sigma_{\cvr}},
\]
and the result follows from the standard boundary conditions for $f_{\cvr}^{[\eta]}(z_{p},z)=\frac{\CorrO{\psi_{z}\psi_{z_{p}}^{\eta}\sigma_{\cvr}}}{\CorrO{\sigma_{\cvr}}}$
and the fact that standard boundary conditions are real linear.
\end{proof}
\begin{thm}
\label{thm: ccov}The continuous correlations, defined as above, satisfy
the following conformal covariance rule: if $\varphi:\Omega\to\hat{\Omega}$
is a conformal isomorphism, then \textup{
\begin{equation}
\ccor{\Op(\sigma,\mu,\epsilon,\psi,\psi^{\star},\psi^{\flat})}_{\Omega}=\CF\cdot\ccor{\Op(\hat{\sigma},\hat{\mu},\hat{\epsilon},\hat{\psi},\hat{\psi}^{\star},\psi^{\flat})}_{\hat{\Omega}},\label{eq: ccov}
\end{equation}
}where $\Op(\hat{\sigma},\hat{\mu},\hat{\epsilon},\hat{\psi},\hat{\psi}^{\star})$
is obtained from $\Op(\sigma,\mu,\epsilon,\psi,\psi^{\star})$ by
replacing every $v_{1},\dots,z_{k}$ with their images $\varphi(v_{1}),\dots,\varphi(z_{k})$,
respectively, and the conformal factor $\CF$ is given by 
\[
\CF=\prod_{i=1}^{m}|\varphi'(u_{i})|^{\frac{1}{8}}\prod_{\substack{1\leq i\leq n\\
v_{i}\in\Omega
}
}|\varphi'(v_{i})|^{\frac{1}{8}}\prod_{i=1}^{s}|\varphi'(e_{i})|\prod_{\substack{1\leq i\leq k\\
\psi_{z_{i}}^{\any}=\psi_{z_{i}}
}
}\varphi(z_{i}){}^{\frac{1}{2}}\prod_{\substack{1\leq i\leq k\\
\psi_{z_{i}}^{\any}=\psi_{z_{i}}^{\star}
}
}\bar{\varphi(z_{i})}{}^{\frac{1}{2}}.
\]
\end{thm}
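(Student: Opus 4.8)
The plan is to reduce the general correlation to two kinds of constituents whose conformal covariance is already available, and then to supply the one genuinely new ingredient, namely the covariance of the pure spin correlation $\ccor{\sigma_{\cvr}}$.

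First I would observe that, by Definitions \ref{def: real_fermions}, \ref{def: any fermions}, \ref{def: Disorders} and \ref{def: energy} together with the Pfaffian representation (\ref{eq: psi_mu_sigma_pfaff}), every correlation in (\ref{eq: ccov}), after division by $\ccor{\sigma_{\cvr}}$, is a Pfaffian whose entries are the two-point objects $f$, $\fdag$, $f^{[\reg]}$, $f^{[\flat]}$, $f^{[\reg,\reg]}$, $f^{[\reg,\flat]}$, $f^{[\flat,\flat]}$, the disorder and boundary-fermion insertions being incorporated as the $\reg$- and $\flat$-limits (\ref{eq: def_mu_many}), (\ref{eq: def_psi_flat}). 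The conformal covariance of each entry is exactly Lemmas \ref{lem: ccov_f_fdag} and \ref{lem: ccov_f_any} (and Remark \ref{rem: Conf_covariance_sharp_flat}): each entry picks up a product $c_{p}c_{q}$ of per-index factors, with $c_{p}=\varphi'(z_{p})^{1/2}$ for a holomorphic fermion $\psi_{z_{p}}$, $c_{p}=\overline{\varphi'(z_{p})}{}^{1/2}$ for an anti-holomorphic $\psi^{\star}_{z_{p}}$, and $c_{p}=1$ for a disorder index $u_{p}$ or a free-arc endpoint (the maps $f^{[\reg]}$, $f^{[\flat]}$ carrying no Jacobian in their first slot and $f^{[\reg,\reg]}$ etc.\ being fully invariant). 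Since $\Pf[c_{p}c_{q}M_{pq}]=(\prod_{p}c_{p})\Pf[M_{pq}]$ (congruence by a diagonal matrix), the Pfaffian transforms by $\prod_{p}c_{p}$; in particular each energy $e_{i}$ contributes $\varphi'(e_{i})^{1/2}\overline{\varphi'(e_{i})}{}^{1/2}=|\varphi'(e_{i})|$ through its two coincident indices $\psi_{e_{i}},\psi^{\star}_{e_{i}}$ (well defined by Remark \ref{rem: psi_to_psibar}), and each disorder contributes $1$.

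The principal task is then the covariance of $\ccor{\sigma_{\cvr}}$, $\cvr=\cvr(u_{1},\dots,u_{m},v_{1},\dots,v_{n})$, which must produce $\prod_{i}|\varphi'(u_{i})|^{1/8}\prod_{v_{i}\in\Omega}|\varphi'(v_{i})|^{1/8}$ (the $u_{i}$-factor arising from the spin $\sigma_{u_{i}}$ inside $\sigma_{\cvr}$). By Definition \ref{def: spins} this is $\exp\int\formL$ with $\formL=\re\sum_{i}\coefA_{\Omega,\cvr}(v_{i})\,dv_{i}$, so I must transform $\coefA$. Substituting $w=\varphi(z)$ into $f^{[\reg]}_{\Omega,\cvr}(v,z)=f^{[\reg]}_{\Oother,\cvrother}(\varphi(v),\varphi(z))\,\varphi'(z)^{1/2}$ and Taylor-expanding both sides to order $(z-v)^{1/2}$ against the defining expansion (\ref{eq: def_A}) gives the transformation rule
\[
\coefA_{\Omega,\cvr}(v)=\varphi'(v)\,\coefA_{\Oother,\cvrother}(\varphi(v))+\frac{\varphi''(v)}{8\,\varphi'(v)}.
\]
Using $\varphi'(v_{i})\,dv_{i}=d(\varphi(v_{i}))$ and $\tfrac{\varphi''(v_{i})}{8\varphi'(v_{i})}\,dv_{i}=\tfrac18\,d\log\varphi'(v_{i})$, the one-form pulls back as $\formL^{(\Omega)}=\varphi^{*}\formL^{(\Oother)}+\tfrac18\,d\log\prod_{i}|\varphi'(v_{i})|$, so that $\ccor{\sigma_{\cvr}}_{\Omega}=\const\cdot\prod_{i}|\varphi'(v_{i})|^{1/8}\,\ccor{\sigma_{\varphi(\cvr)}}_{\Oother}$. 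To fix $\const=1$ I would induct on $n$ (base case $\ccor{}=1$): letting $v_{1}\to v_{2}$ and applying the normalization (\ref{eq: spins_coherent}) of Proposition \ref{prop: multiplicative-normalisation} on both sides, the factor $|\varphi'(v_{1})|^{1/8}|\varphi'(v_{2})|^{1/8}$ combines with $|\varphi(v_{1})-\varphi(v_{2})|^{-1/4}\sim|\varphi'(v_{2})|^{-1/4}|v_{1}-v_{2}|^{-1/4}$ so that the leading $|v_{1}-v_{2}|^{-1/4}$ terms agree, forcing $\const$ to equal the constant for $n-2$ spins, hence $1$. Spins placed on $\fixed$ are handled by passing to the limit in Proposition \ref{prop: spin_bdry_behavior} and using $\crad_{\Omega}(v)=\crad_{\Oother}(\varphi(v))/|\varphi'(v)|$; combined with the bulk rule this gives precisely the behaviour recorded by the restriction $v_{i}\in\Omega$ in $\CF$.

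Assembling the two pieces reproduces $\CF$ exactly, completing the proof of (\ref{eq: ccov}). I expect the main obstacle to be the spin computation: both the appearance of the anomalous Schwarzian-type term $\varphi''/(8\varphi')$ in the transformation of $\coefA$, and the fixing of the multiplicative constant, the latter resting on the fusion normalization of Proposition \ref{prop: multiplicative-normalisation} (proved only in Section \ref{sec: ccor_fusion}) and on the consistency of the conformal-radius renormalization for spins approaching $\fixed$.
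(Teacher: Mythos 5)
Your proposal is correct and follows essentially the same route as the paper's proof: derive the anomalous transformation $\coefA_{\Omega,\cvr}(v)=\varphi'(v)\coefA_{\hat{\Omega},\hat{\cvr}}(\varphi(v))+\tfrac{1}{8}\varphi''(v)/\varphi'(v)$ from (\ref{eq: ccov_sharp}) and (\ref{eq: def_A}), integrate and fix the constant by induction via (\ref{eq: spins_coherent}), then handle the general correlation through the Pfaffian expansion (\ref{eq: psi_mu_sigma_pfaff}) with $\en_{e}=\frac{i}{2}\psi_{e}\psi_{e}^{\star}$ and the entrywise covariances of Lemmas \ref{lem: ccov_f_fdag} and \ref{lem: ccov_f_any}. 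Your treatment of boundary spins via Proposition \ref{prop: spin_bdry_behavior} and the conformal-radius identity $\crad_{\hat{\Omega}}(\varphi(v))=|\varphi'(v)|\,\crad_{\Omega}(v)$ correctly supplies a detail the paper only records in a remark (weight zero for spins on $\fixed$).
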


\begin{rem}
Recall that the observable $\Op$ in (\ref{eq: ccov}) may contain
spins on the $\fixed$ parts of the boundary, which have conformal
weight zero and do not enter the covariance factor $\CF$. 
\end{rem}

\begin{rem}
\label{rem: psi_eta_conf_cov}It is possible to include also $\psi^{\eta}$
into the formula (\ref{eq: ccov}). The conformal covariance rule
for those is somewhat more contrived: $\hat{\psi}_{z_{i}}^{\eta_{i}}:=\psi_{\varphi(z_{i})}^{\hat{\eta}_{i}}$,
where $\hat{\eta}_{i}=\bar{\varphi'(z_{i})}^{\frac{1}{2}}\eta_{i}.$
\end{rem}

\begin{proof}[Proof of Theorem \ref{thm: ccov}]
 We start with spin correlations. Plugging the expansion (\ref{eq: def_A})
into both sides of (\ref{eq: ccov_sharp}) with $\any=\sharp$, and
equating the coefficients at $(z-v)^{\frac{1}{2}},$ one gets 
\[
\coefA_{\Omega,\cvr}(v)=\coefA_{\hat{\Omega},\hat{\cvr}}(\varphi(v))\cdot\varphi'(v)+\frac{1}{8}\frac{\varphi''(v)}{\varphi'(v)}.
\]
Integrating the real part of the form $\coefA_{\Omega,\cvr}(v)dv$
and taking the exponential, as in (\ref{eq: def_spincorr_up_to_constant}),
we infer that 
\[
\frac{\ccor{\sigma_{v_{1}}\dots\sigma_{v_{n}}}_{\Omega}}{\ccor{\sigma_{v'_{1}}\dots\sigma_{v'_{n}}}_{\Omega}}=\frac{\prod_{i=1}^{n}|\varphi'(v_{i})|^{\frac{1}{8}}\ccor{\sigma_{\varphi(v_{1})}\dots\sigma_{\varphi(v_{n})}}_{\hat{\Omega}}}{\prod_{i=1}^{n}|\varphi'(v_{i}')|^{\frac{1}{8}}\ccor{\sigma_{\varphi(v'_{1})}\dots\sigma_{\varphi(v'_{n})}}_{\hat{\Omega}}},
\]
for any two $n$-tuples of distinct points $(v_{1},\dots,v_{n})$
and $(v_{1}',\dots,v_{n}').$ Sending $v_{1}'\to v_{2}'$ and taking
into account the asymptotics (\ref{eq: spins_coherent}), we get,
by induction in $n$,
\[
\ccor{\sigma_{v_{1}}\dots\sigma_{v_{n}}}_{\Omega}=\prod_{i=1}^{n}|\varphi'(v_{i})|^{\frac{1}{8}}\ccor{\sigma_{\varphi(v_{1})}\dots\sigma_{\varphi(v_{n})}}_{\hat{\Omega}}.
\]

Now, for the general correlation, we substitute $\en_{e}=\frac{i}{2}\psi_{e}\psi_{e}^{\star}$
(see Definition \ref{def: energy}), and then use the Pfaffian expansion
(\ref{eq: psi_mu_sigma_pfaff}); it is sufficient to prove the result
for the correlations arising in this expansion. Each term in the Pfaffian
expansion takes one of the forms (\ref{eq: mu_mu_sigma}), (\ref{eq: psi_to_mu_limits}),
$f_{\Omega,\cvr}(w_{i},w_{j})$, $\fdag_{\Omega,\cvr}(w_{i},w_{j})$
or $\bar{f_{\Omega,\cvr}(w_{i},w_{j})}$ (with $w_{i,j}\in\{z_{1},\dots,z_{k},e_{1},\dots,e_{s}\}$);
recall (\ref{eq: dte_psi_psi}\textendash \ref{eq: dte_psistar_psistar}).
Therefore, the result follows from Lemma \ref{lem: ccov_f_any}.
\end{proof}

We now move to the definition of $\ccor{\cdot}_{\Omega,\bcond}$ for
other boundary conditions. We start with the auxiliary boundary conditions
$\tilde{\bcond}$, as in Lemma \ref{lem: corr_to_obs_for_Thm_3}.
These boundary conditions are specified, in addition to the open sets
$\free$ and $\fixed,$ by a collection of points $b_{1},\dots,b_{q}\in\pa\Omega$
such that for each $p$, either $b_{p}\in\fixed$ or $b_{p}\in\pa\Omega\setminus(\fixed\cup\free)$
(i. e., $b_{p}$ is an endpoint of a free arc), and such that each
connected component of $\pa\Omega$ carries an even number of $b_{p}$. 
\begin{lem}
\textcolor{black}{\label{lem: bc_tilde_nonvanish}Assume that $\Omega$
is a nice domain, $b_{1},\dots,b_{q}$ are as above, and numbered
in such a way that for some $k$, $b_{1},\dots,b_{k}\in\fixed$ and
$b_{k+1},\dots,b_{q}$ are endpoints of free arcs. Then, 
\[
\ccor{\psi_{b_{1}}\dots\psi_{b_{k}}\psi_{b_{k+1}}^{\flat}\dots\psi_{b_{q}}^{\flat}}\neq0.
\]
}
\end{lem}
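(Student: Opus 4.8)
The plan is to reduce the statement to the \emph{non-degeneracy} of the $q\times q$ matrix of two-point correlations, and to prove this by combining the positivity of Ising partition functions with the uniqueness Proposition~\ref{prop: Uniqueness_continuous}. Since no spins are inserted the relevant double cover is trivial, and the Pfaffian formula (\ref{eq: multipsi_pfaffian}) gives
\[
\ccor{\psi_{b_{1}}\dots\psi_{b_{k}}\psi_{b_{k+1}}^{\flat}\dots\psi_{b_{q}}^{\flat}}=\Pf\left[M\right],\qquad M=\left[m_{ij}\right]_{i,j=1}^{q},
\]
where $m_{ij}$ is the boundary two-point function ($f(b_{j},b_{i})$, $f^{[\flat]}(b_{j},b_{i})$, or $f^{[\flat,\flat]}(b_{i},b_{j})$ according to the types of $b_{i},b_{j}$). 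The standard boundary conditions (\ref{eq: bc_fixed})--(\ref{eq: bc_free}) force each $m_{ij}\in\R$, so $M$ is a real antisymmetric matrix and $\Pf[M]\in\R$; the claim is exactly that $M$ is invertible.

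Next I would record the discrete source of positivity. Approximating the $b_{i}$ by admissible corners $z_{i}^{\delta}$ (type $\beta$ for $i\le k$, type $\flat$ for $i>k$), the computation in the proof of Lemma~\ref{lem: corr_to_obs_for_Thm_3} together with (\ref{eq: E_sigma_mu_gamma}) shows, for trivial cover and empty spin content,
\[
F_{\Od}(z_{1}^{\delta},\dots,z_{q}^{\delta})=\eta_{z_{1}^{\delta}}\cdots\eta_{z_{q}^{\delta}}\cdot\E_{\Od}(\mu_{b_{1}}\cdots\mu_{b_{q}})=\eta_{z_{1}^{\delta}}\cdots\eta_{z_{q}^{\delta}}\cdot\frac{\cZ_{\double{b_{1},\dots,b_{q}}}}{\cZ},
\]
where $\sigma_{\fz_{1}}\cdots\sigma_{\fz_{q}}\equiv1$ under the standard boundary conditions because each boundary component carries an even number of the $b_{i}$. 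As a ratio of partition functions this is \emph{strictly positive}, so $F_{\Od}$ is a positive real multiple of $\prod_{i}\eta_{z_{i}^{\delta}}$. Passing to the limit through Theorem~\ref{thm: Convergence_singular} (type-$\flat$ points) and Lemma~\ref{lem: Clements_clever_lemma} (type-$\beta$ points), the renormalized left-hand side converges to a fixed nonzero constant times $\prod_{i}\eta_{z_{i}^{\delta}}\cdot\Pf[M]$; hence $\Pf[M]$ is obtained as the limit of the positive quantities $\big(\prod_{i\le k}\normLoc{z_{i}}\big)\cZ_{\double{b_{1},\dots,b_{q}}}/\cZ$. This pins down the sign of $\Pf[M]$ and reduces the lemma to showing the limit does not degenerate to $0$.

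To rule out degeneration I would argue in the continuum. Suppose $\Pf[M]=0$ and pick $0\neq c\in\ker M$; form $g=\sum_{i}c_{i}\hat f_{i}$, where $\hat f_{i}=f(b_{i},\cdot)$ for $i\le k$ and $\hat f_{i}=f^{[\flat]}(b_{i},\cdot)$ for $i>k$ satisfy the standard boundary conditions on $\pa\Omega\setminus\{b_{i}\}$ by Lemma~\ref{lem: f_star_holom}. The relation $Mc=0$ kills the \emph{dual} singular datum of $g$ at each $b_{i}$: at a free-arc endpoint this removes the unbalanced $1/\sqrt{z-b_{i}}$ term, restoring the standard boundary conditions there via the arc-coupling in (\ref{eq: bc_free_arc}). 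Thus when $k=0$, $g$ satisfies standard boundary conditions everywhere, so $g\equiv0$ by Proposition~\ref{prop: Uniqueness_continuous}, and linear independence of the $\hat f_{i}$ forces $c=0$, a contradiction. The main obstacle is $k>0$: at a \emph{fixed} boundary point $\hat f_{i}$ has a simple pole, and $Mc=0$ removes only its regular part (equivalently $\res_{b_{i}}g^{2}=0$), leaving a genuine pole; then $\im\int g^{2}$ carries a positive Poisson-kernel singularity at $b_{i}$ and is unbounded, so the maximum-principle argument behind Proposition~\ref{prop: Uniqueness_continuous} fails. I expect to resolve this either (i) by Schwarz reflection across $\fixed$ (Remark~\ref{rem: Schwarz-reflection}), which converts these boundary poles into interior poles with vanishing constant term and residue, after which a uniqueness/residue count on the reflected configuration forces $g\equiv0$; or, more robustly, (ii) by retaining the discrete positivity and supplying a crude power-law lower bound $\E_{\Od}(\mu_{b_{1}}\cdots\mu_{b_{q}})\ge c\,\delta^{\Delta}$ of the order matching the renormalization, which forces the real limit $\Pf[M]$ to be strictly positive. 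Route (ii) is the safer backbone, the fixed-boundary poles being the genuinely hard point.
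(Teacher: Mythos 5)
There is a genuine gap, and in fact it appears already in the case you treat as settled. Your reduction to the non-degeneracy of the antisymmetric matrix $M$ and the discrete-positivity observation (which pins the sign of $\Pf[M]$ and shows $\Pf[M]\geq0$ after normalization) are fine, but the continuum argument that is supposed to rule out $\Pf[M]=0$ does not close. For $k=0$: the condition $Mc=0$ does \emph{not} restore the standard boundary conditions for $g=\sum_i c_i f^{[\flat]}(b_i,\cdot)$. Since $m_{ii}=0$, the equation $(Mc)_i=0$ only kills the coefficient at $b_i$ of the \emph{other} terms $\sum_{j\ne i}c_jf^{[\flat]}(b_j,\cdot)$; what remains at $b_i$ is exactly the singularity $c_i\,\sflat\,(z-b_i)^{-\frac{1}{2}}$ of the diagonal term, which carries the \emph{flipped} arc-coupling (each $f^{[\flat]}(b_i,\cdot)$ satisfies (\ref{eq: bc_free_arc}) with the sign reversed at the arc containing $b_i$ \textemdash{} this is precisely what the computation (\ref{eq: bemol_bemol}) in the paper quantifies). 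Spinors with flipped sign data are exactly the non-uniqueness examples of Remark \ref{rem: free-arcs-cond}, so Proposition \ref{prop: Uniqueness_continuous} gives no contradiction: $g\ne0$ with this data is perfectly consistent (e.g.\ $g=f^{[\flat]}(b_1,\cdot)$ itself, for the kernel candidate $c=(1,0)$ in the $q=2$ case). For $k>0$ the situation is as you suspected but worse for route (i): after $Mc=0$ the Schwarz-reflected $g$ retains genuine simple poles at the $b_i\in\fixed$ with vanishing residue of $g^2$, so $h=\im\int g^2$ acquires \emph{positive} Poisson-kernel singularities ($h\to+\infty$ at each pole with $c_i\neq0$); the maximum principle underlying Proposition \ref{prop: Uniqueness_continuous} is then simply inapplicable, and no residue count alone forces $g\equiv0$. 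Route (ii) begs the question: the lower bound $\E_{\Od}(\mu_{b_1}\cdots\mu_{b_q})\gtrsim\prod_i\normLoc{z_i}^{-1}$ (note the normalizations for type-$\beta$ points are the non-explicit factors of Lemma \ref{lem: Clements_clever_lemma}, not clean powers of $\delta$) is, after Theorem \ref{thm: Convergence_singular}, \emph{equivalent} to the strict positivity you are trying to prove, and no proof of it is offered.

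The missing idea is a topological parity argument, which is how the paper proceeds. One studies the single function $f(z)=\tau_{b_2}^{\frac12}\cdots\tau_{b_k}^{\frac12}\ccor{\psi_z\psi_{b_2}\dots\psi_{b_k}\psi_{b_{k+1}}^{\flat}\dots\psi_{b_q}^{\flat}}$ directly (its value at $b_1$ is the desired quantity, up to a nonzero factor) and its primitive $h=\im\int f^2$, which is single-valued because the jumps at the two endpoints of each free arc cancel. On the boundary component $\nu$ where $h\cap\fixed$ attains its global minimum, $\tau_z^{\frac12}f(z)=(2\pa_{i\tau}h(z))^{\frac12}$ cannot vanish on $\nu\cap\fixed$, so the number of sign changes of $\tau_z^{\frac12}f(z)$ along $\nu$ equals the number of marked points on $\nu$; but since $\tau_z^{\frac12}$ picks up a factor $-1$ around $\nu$ while $f$ does not, this number must be \emph{odd}, contradicting the evenness of the number of marked points per component \textemdash{} unless $b_1\in\nu$ and $f(b_1)\ne0$. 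This parity obstruction is exactly what substitutes for the failed maximum principle at the poles, and it is absent from your proposal.
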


\begin{proof}
\textcolor{black}{We follow the argument in \cite[Lemma 3.1]{IzyurovMconn},
where the claim was proven in the absence of free arcs. Assume without
loss of generality that $\Omega$ is circular. We start with the case
$k>0,$ and consider the function 
\[
f(z):=\tau_{b_{2}}^{\frac{1}{2}}\dots\tau_{b_{k}}^{\frac{1}{2}}\ccor{\psi_{z}\psi_{b_{2}}\dots\psi_{b_{k}}\psi_{b_{k+1}}^{\flat}\dots\psi_{b_{q}}^{\flat}},
\]
which, according to Remark \ref{rem: Schwarz-reflection}, can be
continued analytically in the punctured neighborhoods of $b_{2},\dots,b_{k}$
and, as a spinor, to the neighborhoods of $b_{k+1},\dots,b_{k}$.
Note that $f$ satisfies the standard boundary conditions on $\pa\Omega\setminus\{b_{2},\dots,b_{q}\}$.
We can expand $f(z)$ by the Pfaffian formula:
\begin{equation}
f(z)=\sum_{p=2}^{q}(-1)^{p}\ccor{\psi_{z}\psi_{z_{p}}^{\any}}\Pf[\psi_{\alpha}^{\any}\psi_{\beta}^{\any}]_{\substack{2\leq\alpha,\beta\leq q\\
\alpha,\beta\neq p
}
},\label{eq: pfaff_for_nvlemma}
\end{equation}
where $\psi_{\alpha}^{\any}$ denotes $\tau_{bz}^{\frac{1}{2}}\psi_{b_{\alpha}}$
for $\alpha\leq k$, and $\psi_{b_{\alpha}}^{\flat}$ for $\alpha>k$.
Therefore, for $2\leq p\leq k$, one has
\begin{equation}
f(z)=\frac{\tau_{p}^{\frac{1}{2}}c_{p}}{z-b_{p}}+o(1)\quad z\to b_{p},\label{eq: f_exp_nonvanish_lemma}
\end{equation}
where $c_{p}\in\R$; indeed, the $p$-th term in (\ref{eq: pfaff_for_nvlemma})
satisfies this expansion by (\ref{eq: f_fdag_expansion}), while the
other terms are continuous at $z=b_{p}$ and combine to a Pfaffian
with $p$-th row proportional to the first one. }

\textcolor{black}{Moreover, we claim that if $\nu\subset\pa\Omega$
is a free arc, then $f(z)$ satisfies (\ref{eq: bc_free_arc}) if
both (or none) of the endpoints of $\nu$ belong to $\{b_{k+1},\dots,b_{q}\},$
and $f$ satisfies (\ref{eq: bc_free_arc}) with the sign flipped
in the right-hand side if exactly one of the endpoints of $\nu$ belongs
to $\{b_{k+1},\dots,b_{q}\}$. }

Indeed, let $a,b$ be two endpoints of a free arc $\nu$, such that
$\sflat_{a}=i$ and $\sflat_{b}=1$, and let the square roots $(z-a)^{\frac{1}{2}}$
and $(z-b)^{-\frac{1}{2}}$ be chosen as in (\ref{eq: bc_free_arc}).
It was shown in (\ref{eq: f_sharp_asymp}) that $\lim_{z\to b}(z-b)^{\frac{1}{2}}\ccor{\psi_{z}\psi_{b}^{\flat}}=\sflat_{b}=1;$
in fact, applying the calculation in (\ref{eq: exchange_the_contours})
to $\ccor{\psi_{z}\psi_{b}^{\flat}\Op}$ instead of $\ccor{\psi_{z}\psi_{b}^{\flat}},$
this readily extends to 
\[
\lim_{z\to b}(z-b)^{\frac{1}{2}}\ccor{\psi_{z}\psi_{b}^{\flat}\Op}=\ccor{\Op}.
\]
In a similar spirit, we compute 
\begin{multline}
\ccor{\psi_{a}^{\flat}\psi_{b}^{\flat}\Op}=\frac{i\cdot1}{(2\pi i)^{2}}\oint_{a}\oint_{b}\frac{\ccor{\psi_{\zeta}\psi_{z}\Op}dzd\zeta}{(\zeta-a)^{\frac{1}{2}}(z-b)^{\frac{1}{2}}}=\frac{1}{(2\pi i)^{2}}\int_{\zeta\in C_{R}}\int_{z\in C_{r}}\frac{\ccor{\psi_{\zeta}\psi_{z}\Op}dzd\zeta}{(\zeta-a)^{\frac{1}{2}}(z-a)^{\frac{1}{2}}}\\
=\frac{1}{(2\pi i)^{2}}\int_{\zeta\in C_{r}}\int_{z\in C_{R}}\frac{\ccor{\psi_{\zeta}\psi_{z}\Op}}{(\zeta-a)^{\frac{1}{2}}(z-a)^{\frac{1}{2}}}+2\ccor{\Op}=\ccor{\Op},\label{eq: bemol_bemol}
\end{multline}
where $C_{R}$ and $C_{r}$ are circles of small radii $R>r$ around
$a$; in the second identity, we have used (\ref{eq: f_f_dagger_back})
and the standard boundary conditions (\ref{eq: bc_free_arc}). That
is to say, 
\[
i\lim_{z\to z}(z-a)^{\frac{1}{2}}\ccor{\psi_{z}\psi_{b}^{\flat}}=\ccor{\Op}=\lim_{z\to b}(z-b)^{\frac{1}{2}}\ccor{\psi_{z}\psi_{b}^{\flat}\Op},
\]
which proves the above claim in the case $b\in\{b_{k+1},\dots,b_{q}\},$
$a\notin\{b_{k+1},\dots,b_{q}\}$. The symmetric case is similar,
and the case $a,b\in\{b_{k+1},\dots,b_{q}\}$ follows readily from
(\ref{eq: bemol_bemol}).

\textcolor{black}{We define the function 
\[
h(z):=\im\int^{z}f^{2}
\]
and note that it satisfies the conditions (1), (2) and (3) of Proposition
\ref{prop: f_to_h}. Moreover, it has the expansion $h(z)=-\im[\tau_{p}c_{p}^{2}(z-b_{p})^{-1}]+O(1)$
as $z\to b_{p}$, and, as follows from the discussion above, the jumps
of $h$ at the endpoints of any free arc have the same magnitude and
opposite signs. Hence, for each connected component $\nu$ of $\pa\Omega$,
$h$ is constant on $\nu\cap\fixed\setminus\{b_{2},\dots,b_{k}\};$
in particular, it follows that $h$ is a single-valued harmonic function. }

Consider the connected component $\nu$ of $\pa\Omega$ with the smallest
value of $h$ on $\nu\cap\fixed$, which must be also a global minimum
of $h$. It is easy to see by induction that $f$ is not identically
zero; hence $h$ is not constant. Therefore, $(\tau_{z})^{\frac{1}{2}}f(z)=(2\pa_{i\tau}h(z))^{\frac{1}{2}}$
does not vanish on $\nu\cap\fixed$, in particular, the coefficients
$c_{p}$ in (\ref{eq: f_exp_nonvanish_lemma}) are non-zero for $z_{p}\in\nu\cap\fixed$.
Hence, (\ref{eq: f_exp_nonvanish_lemma}) together with the discussion
of the asymptotics of $f$ at the endpoints of the free arcs imply
that the number of sign changes of $(\tau_{z})^{\frac{1}{2}}f(z)$
along $\nu$ is equal to the number of the marked points $b_{2},\dots,b_{q}$
that belong to $\nu.$ Since $(\tau_{z})^{\frac{1}{2}}$ picks a $-1$
sign as $z$ goes around $\nu$ and $f$ does not, this number must
be odd. Since each connected components carries an even number of
points $b_{1},\dots,b_{q}$, we conclude that $z_{1}\in\nu\cap\fixed$,
and so $f(z_{1})\neq0$, as required. 

When $k=0$, we define $f(z):=\ccor{\psi_{z}\psi_{b_{2}}^{\flat}\dots\psi_{b_{q}}^{\flat}}$
an apply the same argument; the conclusion follows from the fact that
since $\pa_{i\tau}h(z)\leq0$ for $z\in\free$, the jumps of $h$
at the endpoints of any free sub-arc of $\nu$ must be non-zero. 
\end{proof}
\begin{defn}
\label{def: bcondtilde}(Correlations with auxiliary boundary conditions
$\tilde{\bcond}$) We define the correlation of $\Op=\Op(\en,\sigma)$
with boundary conditions $\tilde{\bcond}$ to be the limit 
\begin{equation}
\corr{\Op}{\Omega}{\tilde{\bcond}}:=\lim_{z_{1}\to b_{1},\dots,z_{q}\to b_{q}}\frac{\ccor{\psi_{z_{1}}\dots\psi_{z_{q}}\Op}_{\Omega}}{\ccor{\psi_{z_{1}}\dots\psi_{z_{q}}}_{\Omega}}.\label{eq: bcondtilde_def}
\end{equation}
\end{defn}

Note that the limit exists due to the following considerations. First,
we fix a conformal isomorphis $\varphi$ to a nice domain $\Lambda$;
multiplying the numerator and denomenator by $\prod\varphi'(z_{i})^{\frac{1}{2}}$
and using Theorem \ref{thm: ccov}, we see that the right-hand side
is equal to 
\[
\lim_{z_{1}\to\varphi(b_{1}),\dots,z_{q}\to\varphi(b_{q})}\frac{\ccor{\psi_{z_{1}}\dots\psi_{z_{q}}\Op}_{\Lambda}}{\ccor{\psi_{z_{1}}\dots\psi_{z_{q}}}_{\Lambda}}
\]
 (up to a conformal factor coming from $\Op$ that does not depend
on $z_{1},\dots,z_{q}$). Further, we can multiply the numerator and
denomenator by $c^{\flat}(z_{p}-\varphi(b_{p}))^{\frac{1}{2}}$ for
those $p$ for which $b_{p}$ an endpoint of a free arc. After that,
limits of the numerator and denomenator exist because of the existence
of the limit in Definition \ref{def: corr_flat} and the fact that
$\psi_{z}$ are continuous up to nice parts of $\fixed$ boundary.
Finally, we use Lemma \ref{lem: bc_tilde_nonvanish} to ensure that
the limit of the denominator is non-zero.

The definition of correlations with plus-minus-free boundary conditions
$\bcond$ is naturally motivated by the discussion after Lemma \ref{lem: corr_to_obs_for_Thm_3}.
Given a subdivision of the boundary into three disjoint open subsets
$\fixed,\plus,\minus$, such that $\pa\Omega\setminus\left(\fixed\cup\plus\cup\minus\right)$
is finite, let $b_{1},\dots,b_{2q}$ be the boundary points separating
``plus'' arcs from ``minus'' or ``free'' arcs, and let $\tilde{\bcond}$
denote the corresponding auxiliary boundary conditions. For each boundary
component that has a non-empty $\fixed$ part, choose arbitrary point
on that fixed part, and denote those points by $v_{n+1},\dots,v_{n+d}$.
\begin{lem}
\label{lem: bcond_well_defined}For any boundary conditions $\bcond$
as above, one has 
\[
\sum_{S}\alpha_{S}^{\bcond}\corr{\sigma_{S}}{\Omega}{\tilde{\bcond}}\neq0,
\]
where the sum is over $S\subset\{n+1,\dots,n+d\}$, $\sigma_{S}=\prod_{i\in S}\sigma_{v_{i}}$,
and $\alpha_{S}^{\bcond}$ are defined by (\ref{eq: FW_indicator})
as the Fourier\textendash Walsh coefficients of the expansion of the
corresponding indicator function.
\end{lem}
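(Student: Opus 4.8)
The plan is to identify the left-hand side with the limit of a genuine discrete probability, and then to recognise that limit as the reciprocal of a finite sum of partition-function ratios, one of whose terms equals $1$ identically. First I would use the Fourier--Walsh expansion (\ref{eq: FW_indicator}): if $\sfix{}=(\sfix{(1)},\dots,\sfix{(d)})$ denotes the sign pattern prescribed by $\bcond$ at the marked points $v_{n+1},\dots,v_{n+d}$, then $\sum_{S}\alpha_{S}^{\bcond}\sigma_{S}=\ind(\sigma_{v_{n+1}}=\sfix{(1)},\dots,\sigma_{v_{n+d}}=\sfix{(d)})$, so that (abbreviating $\vec{\sigma}=(\sigma_{v_{n+1}},\dots,\sigma_{v_{n+d}})$)
\[
\sum_{S}\alpha_{S}^{\bcond}\,\E_{\Od}^{\bcdd}(\sigma_{S})\ =\ \P_{\Od}^{\bcdd}\bigl(\sigma_{v_{n+1}}=\sfix{(1)},\dots,\sigma_{v_{n+d}}=\sfix{(d)}\bigr).
\]
Passing to the limit in this finite sum, using the convergence $\E_{\Od}^{\bcdd}(\sigma_{S})\to\corr{\sigma_{S}}{\Omega}{\tilde{\bcond}}$ provided by the convergence results underlying Theorem \ref{thm: Intro_3} (cf. Lemma \ref{lem: corr_to_obs_for_Thm_3}; no renormalisation is needed, since these boundary-spin correlations are bounded by $1$ in absolute value), shows that $\sum_{S}\alpha_{S}^{\bcond}\corr{\sigma_{S}}{\Omega}{\tilde{\bcond}}$ equals $\lim_{\delta\to0}\P_{\Od}^{\bcdd}(\vec{\sigma}=\sfix{})$.

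Second, I would evaluate this probability via partition functions. Conditioning the locally monochromatic measure on the overall signs of the boundary components being a pattern $\vec{t}\in\{\pm1\}^{d}$ yields exactly the plus/minus/free boundary conditions $\bcond_{\vec{t}}$ obtained from $\bcond=\bcond_{\sfix{}}$ by flipping the spins along the components on which $\vec{t}$ and $\sfix{}$ disagree. Hence
\[
\P_{\Od}^{\bcdd}(\vec{\sigma}=\sfix{})\ =\ \frac{Z(\Od,\bcd)}{\sum_{\vec{t}}Z(\Od,\bcond_{\vec{t}}^{\delta})}\ =\ \Bigl(\sum_{\bcond'}\tfrac{Z(\Od,(\bcond')^{\delta})}{Z(\Od,\bcd)}\Bigr)^{-1},
\]
where $\bcond'$ runs over the $2^{d}$ boundary conditions obtained from $\bcond$ by flipping one or several components. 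Each ratio is positive and, by the partition-function part of Theorem \ref{thm: Intro_3}, converges to the finite number $I(\Omega,\bcond',\bcond)\geq0$; therefore $\lim_{\delta\to0}\P_{\Od}^{\bcdd}(\vec{\sigma}=\sfix{})=(\sum_{\bcond'}I(\Omega,\bcond',\bcond))^{-1}$.

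Finally, the decisive observation is that the term $\bcond'=\bcond$ contributes $I(\Omega,\bcond,\bcond)=1$ to the sum (and, should the indexing convention exclude it, the flip of all components contributes $1$ as well, by the global spin-flip symmetry $Z(\Od,(\bcond')^{\delta})=Z(\Od,\bcd)$), while every remaining term is nonnegative. Thus $\sum_{\bcond'}I(\Omega,\bcond',\bcond)\geq1$, its reciprocal lies in $(0,1]$, and we conclude
\[
\sum_{S}\alpha_{S}^{\bcond}\corr{\sigma_{S}}{\Omega}{\tilde{\bcond}}\ =\ \Bigl(\sum_{\bcond'}I(\Omega,\bcond',\bcond)\Bigr)^{-1}\in(0,1],
\]
which is in particular nonzero (this also proves the identity announced after (\ref{eq: SLE_partition_function})). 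I expect the only delicate point --- the main obstacle --- to be the combinatorial bookkeeping that matches the conditioned measure with the partition functions $Z(\Od,\bcond_{\vec{t}}^{\delta})$ and the index sets $\{\vec{t}\}$ and $\{\bcond'\}$; once this is set up the positivity is immediate and, notably, requires no quantitative crossing or correlation-decay estimates.
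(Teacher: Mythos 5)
Your first two steps are sound and in fact coincide with the paper's own starting point: the Fourier--Walsh identity together with the decomposition $Z(\Od,\tilde{\bcond}^{\delta})=\sum_{\vec{t}}Z(\Od,\bcond_{\vec{t}}^{\delta})$ identifies the sum in the lemma with $\lim_{\delta\to0}Z(\Od,\bcd)/Z(\Od,\tilde{\bcond}^{\delta})=\lim_{\delta\to0}\P_{\Od}^{\bcdd}\bigl((\sigma_{v_{n+1}},\dots,\sigma_{v_{n+d}})=\sfix{}\bigr)$, exactly as in the paper (and the convergence of the boundary-spin correlations with $\tilde{\bcond}$ conditions is indeed available independently of the lemma). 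The gap is in your decisive step: you invoke the partition-function part of Theorem \ref{thm: Intro_3} to assert that each ratio $Z(\Od,(\bcond')^{\delta})/Z(\Od,\bcd)$ converges to a finite limit $I(\Omega,\bcond',\bcond)$. But in the paper that very claim is \emph{derived from} Lemma \ref{lem: bcond_well_defined}: its proof writes the ratio as $\bigl(\sum_{S}\alpha_{S}^{\bcond'}\corr{\sigma_{S}}{\Omega}{\tilde{\bcond}}+o(1)\bigr)\big/\bigl(\sum_{S}\alpha_{S}^{\bcond}\corr{\sigma_{S}}{\Omega}{\tilde{\bcond}}+o(1)\bigr)$ and needs the non-vanishing of the denominator --- precisely the lemma --- in order to divide. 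So your argument is circular. What is genuinely at stake is whether some flipped sector dominates, i.e.\ whether $Z(\Od,(\bcond')^{\delta})/Z(\Od,\bcd)\to\infty$ for some $\bcond'\neq\bcond$: if it did, your reciprocal sum would tend to $0$ and the lemma would be false. Nothing soft rules this out, and your closing remark that no quantitative crossing estimates are needed is the red flag: off criticality (e.g.\ at low temperature) flipping the sign of one boundary component forces a macroscopic interface, the corresponding ratio blows up exponentially, and the probability of the disfavoured pattern does vanish in the limit --- so the positivity is a genuinely critical, quantitative fact.

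The paper closes exactly this gap by proving a lower bound \emph{uniform in $\delta$}, before any limit is taken: via the Edwards--Sokal coupling it compares the locally monochromatic measure with the critical FK measure $\P_{\hat{\bcond}^{\delta}}$, and then combines the FKG inequality with the RSW estimate (which produces, with probability at least $c(\Omega)>0$, dual-open circuits separating every pair of boundary components, so that conditionally on the event $\{\tilde{\bcond}^{\delta}\ \text{is possible}\}$ each admissible colouring of the $\plus$ and $\minus$ arcs has probability at least $2^{-N}$) to obtain $Z(\Od,\bcd)/Z(\Od,\tilde{\bcond}^{\delta})\geq c(\Omega)\,2^{-N}$ for all small $\delta$, where $N$ is the number of $\plus$ and $\minus$ arcs. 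Your reduction can be kept as is; what must replace the appeal to Theorem \ref{thm: Intro_3} is this RSW/FKG bound, or any other argument giving a $\delta$-uniform upper bound on the ratios $Z(\Od,(\bcond')^{\delta})/Z(\Od,\bcd)$.
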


\begin{proof}
\textcolor{black}{To prove this Lemma, we use that the quantity of
interest is the limit of ratios of partition functions:
\[
\sum_{S}\alpha_{S}^{\bcond}\corr{\sigma_{S}}{\Omega}{\tilde{\bcond}}=\lim_{\delta\to0}\frac{Z(\Od,\bcond^{\delta})}{Z(\Od,\tilde{\bcond}^{\delta})}.
\]
To bound this from below uniformly in $\delta$, we use the RSW inequality.
Consider the critical $q=2$ FK random cluster model, with boundary
conditions that are free along $\free$ and wired along each $\plus$
or $\minus$ boundary arc, with different arcs not wired with each
other. By Edwards\textendash Sokal coupling, after coloring the clusters
of the model uniformly at random into two colors, labeled as $\pm1$,
one obtains an instance of the Ising model with spins conditioned
to be the same along each $\plus$ or $\minus$ boundary arcs, and
with free boundary conditions along $\free$. Denote by $Z(\Od,\hat{\bcond}^{\delta})$
the partition function and by $\P_{\hat{\bcond}^{\delta}}(\cdot)$
the probability in this model. Then, we have 
\[
\frac{Z(\Od,\bcond^{\delta})}{Z(\Od,\tilde{\bcond}^{\delta})}=\frac{Z(\Od,\bcond^{\delta})}{Z(\Od,\hat{\bcond}^{\delta})}\cdot\frac{Z(\Omega^{\delta},\hat{\bcond}^{\delta})}{Z(\Omega^{\delta},\tilde{\bcond}^{\delta})}=\frac{\P_{\hat{\bcond}^{\delta}}(\bcond^{\delta})}{\P_{\hat{\bcond}^{\delta}}(\tilde{\bcond}^{\delta})}\geq\frac{\P_{\hat{\bcond}^{\delta}}(\bcond^{\delta})}{\P_{FK}(\tilde{\bcond}^{\delta}\text{ is possible})},
\]
where by $\bcond^{\delta}$ we denote the event that all the arcs
receive the spins prescribed by $\bcond^{\delta},$ and by $\tilde{\bcond}^{\delta}$
we denote the event that all the arcs receive spins compatible with
$\tilde{\bcond^{\delta}}.$ The FK event $\{\tilde{\bcond}^{\delta}\text{ is possible\}}$
is the event that no two $\fixed$ arcs separated by an odd number
of marked points $b_{1},\dots,b_{q}$ are in the same cluster. We
note that this event is }\textcolor{black}{\emph{decreasing. }}\textcolor{black}{Let
$E$ denote the event that for each pair of boundary components, there
is a circuit of open edges in the }\textcolor{black}{\emph{dual}}\textcolor{black}{{}
configuration that separates those arcs. This is also a }\textcolor{black}{\emph{decreasing}}\textcolor{black}{{}
event; moreover, by RSW inequality, we have the lower bound $\P_{FK}(E)\geq c(\Omega)>0.$
Applying FKG inequality, we get 
\[
\frac{Z(\Od,\bcond^{\delta})}{Z(\Od,\tilde{\bcond}^{\delta})}\geq\frac{\P_{\hat{\bcond}^{\delta}}(\bcond^{\delta})\P_{FK}(E)}{\P_{FK}(\tilde{\bcond}^{\delta}\text{ is possible})\P_{FK}(E)}\geq c(\Omega)\frac{\P_{\hat{\bcond}^{\delta}}(\bcond^{\delta})}{\P_{FK}(E\cap\{\tilde{\bcond}^{\delta}\text{ is possible}\})}\geq c(\Omega)2^{-N},
\]
where $N$ is the total number of $\plus$ and $\minus$ boundary
arcs. Indeed, on the event $E$, there are no clusters connecting
arcs on different boundary components, therefore, if $\tilde{\bcond}^{\delta}$
is possible, then every coloring of $\plus$ and $\minus$ arcs compatible
with $\tilde{\bcond}^{\delta}$ is realized with probability $2^{-\hat{N}}\geq2^{-N}$,
where $\hat{N}$ is the number of clusters adjacent to $\plus$ and
$\minus$ arcs.}
\end{proof}
This lemma allows one to define the continuous correlations as follows.
\begin{defn}
\label{def: bcond_corr_continuous}We define the correlation of $\Op(\en,\sigma)=\en_{e_{1}}\dots\en_{e_{s}}\sigma_{v_{1}}\dots\sigma_{v_{n}}$
with boundary conditions $\bcond$ by the formula 
\[
\corr{\Op(\sigma,\en)}{\Omega}{\bcond}=\frac{\sum_{S}\alpha_{S}^{\bcond}\corr{\Op(\sigma,\en)\sigma_{S}}{\Omega}{\tilde{\bcond}}}{\sum_{S}\alpha_{S}^{\bcond}\corr{\sigma_{S}}{\Omega}{\tilde{\bcond}}},
\]
 where the sums are over $S\subset\{n+1,\dots,n+d\}$, $\sigma_{S}=\prod_{i\in S}\sigma_{v_{i}}$,
and $\alpha_{S}^{\bcond}$ are defined by (\ref{eq: FW_indicator})
as the Fourier\textendash Walsh coefficients of the expansion of the
corresponding indicator function.
\end{defn}

\begin{cor}
The conformal covariance rule (\ref{eq: ccov}) holds in the same
form for $\corr{\Op(\en,\sigma)}{\Omega}{\tilde{\bcond}}$ and $\corr{\Op(\en,\sigma)}{\Omega}{\bcond}$.
\end{cor}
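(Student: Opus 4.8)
The plan is to derive the covariance of both $\corr{\Op(\en,\sigma)}{\Omega}{\tilde{\bcond}}$ and $\corr{\Op(\en,\sigma)}{\Omega}{\bcond}$ directly from the bulk rule of Theorem \ref{thm: ccov}, by unwinding the defining ratios (Definitions \ref{def: bcondtilde} and \ref{def: bcond_corr_continuous}) and checking that every auxiliary insertion contributes only conformal factors that either cancel between numerator and denominator or vanish (weight $0$).

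First I would treat $\tilde{\bcond}$. Fix a conformal isomorphism $\varphi:\Omega\to\hat{\Omega}$ and let $\hat{\tilde{\bcond}}$ be the pushforward boundary data, with marked points $\varphi(b_{1}),\dots,\varphi(b_{q})$. For bulk points $z_{1},\dots,z_{q}$, apply Theorem \ref{thm: ccov} separately to the numerator $\ccor{\psi_{z_{1}}\dots\psi_{z_{q}}\Op}_{\Omega}$ and the denominator $\ccor{\psi_{z_{1}}\dots\psi_{z_{q}}}_{\Omega}$ appearing in (\ref{eq: bcondtilde_def}). Each holomorphic fermion $\psi_{z_{p}}$ carries conformal weight $\tfrac{1}{2}$ and hence contributes the \emph{same} factor $\varphi'(z_{p})^{\frac{1}{2}}$ to both the numerator and the denominator; these cancel in the ratio, leaving exactly the factor $\CF$ of $\Op(\en,\sigma)$. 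Passing to the limit $z_{p}\to b_{p}$ (equivalently $\varphi(z_{p})\to\varphi(b_{p})$), which exists by the discussion following Definition \ref{def: bcondtilde}, yields
\[
\corr{\Op(\en,\sigma)}{\Omega}{\tilde{\bcond}}=\CF\cdot\corr{\Op(\hat{\sigma},\hat{\en})}{\hat{\Omega}}{\hat{\tilde{\bcond}}},
\]
which is precisely (\ref{eq: ccov}) for the auxiliary boundary conditions.

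Second I would treat $\bcond$, using Definition \ref{def: bcond_corr_continuous} together with two observations: the Fourier--Walsh coefficients $\alpha_{S}^{\bcond}$ are purely combinatorial and do not depend on the domain, and the auxiliary spins $\sigma_{v_{n+1}},\dots,\sigma_{v_{n+d}}$ sit on the wired boundary, where boundary spins have conformal weight $0$ (the Remark after Theorem \ref{thm: ccov}). Choosing the auxiliary points in $\hat{\Omega}$ to be the images $\varphi(v_{n+i})$, which again lie on the wired part of $\pa\hat{\Omega}$, the covariance just established for $\tilde{\bcond}$ gives $\corr{\Op(\en,\sigma)\sigma_{S}}{\Omega}{\tilde{\bcond}}=\CF\cdot\corr{\Op(\hat{\sigma},\hat{\en})\hat{\sigma}_{S}}{\hat{\Omega}}{\hat{\tilde{\bcond}}}$ for every $S$ (the weight-$0$ auxiliary spins add nothing to $\CF$), while $\corr{\sigma_{S}}{\Omega}{\tilde{\bcond}}=\corr{\hat{\sigma}_{S}}{\hat{\Omega}}{\hat{\tilde{\bcond}}}$ with trivial conformal factor. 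Since the $\alpha_{S}^{\bcond}$ are identical on both sides, $\CF$ factors out of the numerator sum and the denominator sum is unchanged, so $\CF$ pulls out of the whole quotient; Lemma \ref{lem: bcond_well_defined} guarantees that this denominator is nonzero, so the resulting expression is well defined, and we obtain (\ref{eq: ccov}) for $\bcond$.

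The only point requiring care is the bookkeeping of conformal weights of the auxiliary insertions: that the $\tilde{\bcond}$-fermions have weight $\tfrac{1}{2}$ and cancel between numerator and denominator, and that the auxiliary wired-boundary spins have weight $0$ and therefore leave the factor $\CF$ of $\Op(\en,\sigma)$ untouched. Everything else reduces to the real-linear (Fourier--Walsh) structure of the definitions and to the interchange of the covariance rule with the limits defining the boundary correlations, both of which are already justified in the construction above.
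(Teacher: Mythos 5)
Your proposal is correct and follows exactly the route the paper takes: the paper's proof is the one-line remark that the claim is ``straightforward from Theorem \ref{thm: ccov} and the definitions,'' and your argument is precisely that unwinding --- applying (\ref{eq: ccov}) to numerator and denominator of (\ref{eq: bcondtilde_def}) so the weight-$\frac{1}{2}$ fermion factors cancel, then using the domain-independence of the Fourier--Walsh coefficients $\alpha_{S}^{\bcond}$ and the weight-$0$ boundary spins in Definition \ref{def: bcond_corr_continuous}, with Lemma \ref{lem: bcond_well_defined} guaranteeing the denominator is nonzero. Your bookkeeping of the auxiliary insertions is exactly the content the paper leaves implicit, so nothing is missing.
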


\begin{proof}
Straightforward from Theorem \ref{thm: ccov} and the definitions.
\end{proof}

\subsection{Proofs of the main theorems}

Throughout this section, we will use a shorthand notation 
\[
A\appe B
\]
for the two-sided estimate $(1-\eps)A\leq B\leq(1+\eps)A,$ where
$A,B>0$. We start by collecting the properties of one- and two-point
spin correlations that are used in the proof of Theorem \ref{thm: spin_convergence}.
These results are already contained in \cite{ChelkakHonglerIzyurov}.
To address the minor issue of the regularity of the boundary near
the endpoints of $\gamma$ in Lemma \ref{lem: Dobrushin}, and for
the sake of self-containedness, we include their proofs.
\begin{lem}
\label{lem: spin_2p} Let $D\subset\C$ be a disc, and let $D^{\delta}$
be a sequence of discrete domains approximating $D$. Then, for every
$\eps>0$, one can find a small number $r>0$ such that if $v_{1},v_{2}$
are at distance at most $r$ from the center of $D$ but at least
$r/2$ from each other, and $\re v_{1}=\re v_{2}$, then 
\begin{equation}
\delta^{-\frac{1}{4}}\E_{D^{\delta},\text{free}}(\sigma_{v_{1}}\sigma_{v_{2}})\appe\Csigma^{2}|v_{1}-v_{2}|^{-\frac{1}{4}};\label{eq: two_point_free}
\end{equation}
\begin{equation}
\delta^{-\frac{1}{4}}\E_{D^{\delta},+}(\sigma_{v_{1}}\sigma_{v_{2}})\appe\Csigma^{2}|v_{1}-v_{2}|^{-\frac{1}{4}},\label{eq: two_point_fixed}
\end{equation}
provided that $\delta$ is small enough. 
\end{lem}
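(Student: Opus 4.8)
\emph{Plan.} The plan is to sandwich both disc correlations between the full-plane two-point function by monotonicity, and then to exploit the fact that at short distances the singular part of the two-point function is universal, so that the boundary --- and in particular the choice of free versus plus --- contributes only a lower-order correction. The sharp two-sided \emph{multiplicative} form of the estimate will then follow from a uniform comparison based on RSW estimates and the uniqueness of the critical infinite-volume Gibbs measure. This recovers a result already contained in \cite{ChelkakHonglerIzyurov}; I include the argument both for self-containedness and because only the quantitative two-sided version is needed below.

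First I would record the full-plane input. By the classical exact computation of the critical two-point function on $\Z^{2}$ --- which is in fact how the constant $\Csigma$ in (\ref{eq: constants}) is fixed --- one has
\[
\delta^{-\frac14}\E_{\Cd}(\sigma_{v_{1}}\sigma_{v_{2}})=\Csigma^{2}|v_{1}-v_{2}|^{-\frac14}(1+o(1)),
\]
uniformly over pairs with $|v_{1}-v_{2}|/\delta\to\infty$; by the scale and rotation invariance of the infinite-volume limit the constant is independent of the direction of $v_{1}-v_{2}$, so the hypothesis $\re v_{1}=\re v_{2}$ is only a convenient way to fix a lattice direction and is inessential. In particular, for every $\eps>0$ there is $r_{0}>0$ with $\delta^{-1/4}\E_{\Cd}(\sigma_{v_{1}}\sigma_{v_{2}})\appe\Csigma^{2}|v_{1}-v_{2}|^{-1/4}$ whenever $r/2\le|v_{1}-v_{2}|\le r\le r_{0}$ and $\delta$ is small.

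Next I would use monotonicity. Griffiths' second inequality gives that the free-boundary correlation increases, and the plus-boundary correlation decreases, to the unique critical infinite-volume correlation as the domain exhausts the plane; hence
\[
\E_{D^{\delta},\text{free}}(\sigma_{v_{1}}\sigma_{v_{2}})\ \le\ \E_{\Cd}(\sigma_{v_{1}}\sigma_{v_{2}})\ \le\ \E_{D^{\delta},+}(\sigma_{v_{1}}\sigma_{v_{2}}).
\]
It therefore remains to bound $\E_{D^{\delta},+}$ from above by $(1+\eps)\E_{\Cd}$ and $\E_{D^{\delta},\text{free}}$ from below by $(1-\eps)\E_{\Cd}$. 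Here I would write $\E_{D^{\delta},\iota}(\sigma_{v_{1}}\sigma_{v_{2}})=\E_{D^{\delta},\iota}(\sigma_{v_{1}})\E_{D^{\delta},\iota}(\sigma_{v_{2}})+\mathrm{Cov}_{D^{\delta},\iota}(\sigma_{v_{1}},\sigma_{v_{2}})$ and note that the product of magnetizations is of order $\delta^{1/4}$ with a \emph{bounded} prefactor (each magnetization scaling as $\delta^{1/8}$ with a bounded limit for $v_{i}$ near the centre), whereas the singular part is of order $\delta^{1/4}r^{-1/4}\to\infty$ as $r\to0$; thus the renormalized two-point function is dominated by its connected, boundary-insensitive part.

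The hard part will be the final, uniform comparison, and this is where the main obstacle lies: turning the soft monotonicity sandwich into a sharp two-sided multiplicative estimate uniform in both $\delta$ and the positions of the points requires quantitatively decoupling the universal short-distance behaviour from the far-away boundary. I would fix $\rho$ with $r\ll\rho\ll1$ and consider, in the FK/Edwards--Sokal representation, the event that a closed dual circuit surrounds $v_{1},v_{2}$ inside the annulus $\{\rho\le|z-w_{0}|\le2\rho\}$ around the centre $w_{0}$ of $D$. On this event the spins inside are screened from the boundary, and RSW provides a lower bound $c(\rho)>0$, uniform in $\delta$, for its probability both in $D^{\delta}$ and in $\Cd$; combining this screening with FKG and the uniqueness of the critical Gibbs state should give $\E_{D^{\delta},+}(\sigma_{v_{1}}\sigma_{v_{2}})/\E_{\Cd}(\sigma_{v_{1}}\sigma_{v_{2}})\to1$ and $\E_{D^{\delta},\text{free}}(\sigma_{v_{1}}\sigma_{v_{2}})/\E_{\Cd}(\sigma_{v_{1}}\sigma_{v_{2}})\to1$, uniformly over admissible $v_{1},v_{2}$, as $\delta\to0$ and then $\rho\to0$. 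Together with the first two steps this yields (\ref{eq: two_point_free}) and (\ref{eq: two_point_fixed}).
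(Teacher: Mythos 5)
Your overall architecture matches the paper's proof of this lemma: Wu's diagonal full-plane asymptotics fixing $\Csigma$, the FKG/Griffiths sandwich $\E_{D^{\delta},\text{free}}\le\E_{\Cd}\le\E_{D^{\delta},+}$, and then an FK/RSW screening argument to show the two disc quantities are multiplicatively close. The gap is in your final step, and it is quantitative but fatal as written: you screen with a \emph{single} dual circuit in the fixed-modulus annulus $\{\rho\le|z-w_{0}|\le2\rho\}$, for which RSW gives only a probability bounded below by some $c>0$ \emph{bounded away from} $1$ (uniformly in $\delta$ and, by approximate scale invariance, essentially independent of $\rho$ — so letting $\rho\to0$ gains nothing). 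A screening event of probability merely $c>0$ yields comparability of $\E_{D^{\delta},+}$ and $\E_{D^{\delta},\text{free}}$ only up to multiplicative constants, not the two-sided $(1\pm\eps)$ estimate the lemma asserts. To get ratio $1+O(\eps)$ you must make the screening event have probability $\ge1-\eps$, which forces annuli of modulus growing with $1/\eps$: this is exactly why the paper introduces \emph{two} widely separated scales $r\ll\hat r\ll R$ (points within $r$ of the center, an open circuit in $A_{r,\hat r}$, a dual circuit in $A_{\hat r,R}$) and why the radius $r$ in the statement depends on $\eps$. Your plan never builds this $\eps$-dependent scale separation in.

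There is also a sign problem hidden in "combining this screening with FKG": the dual-circuit event is \emph{decreasing} while $\{v_{1}\leftrightsquigarrow v_{2}\}$ is \emph{increasing}, so plain FKG correlates them the wrong way and cannot lower-bound the probability of their intersection. The paper circumvents this by conditioning: the event $\{v_{1}\leftrightsquigarrow v_{2}\}\cap E^{\text{open}}_{r,\hat r}$ is measurable with respect to edges within distance $\hat r$, and given those the conditional probability of the outer dual circuit is bounded below by its probability in the annulus with wired boundary (monotonicity applied to a decreasing event); then, conditionally on the \emph{outermost} dual circuit $\gamma$, the configuration inside is free FK, giving $\P_{D^{\delta},\text{wired}}(v_{1}\leftrightsquigarrow v_{2}\text{ and }E^{\text{closed}})\le\P_{D^{\delta},\text{free}}(v_{1}\leftrightsquigarrow v_{2})$. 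You would need both of these spatial-Markov/monotonicity steps to make your screening rigorous; "uniqueness of the critical Gibbs state" does not substitute for them. Two minor points: your remark that $\re v_{1}=\re v_{2}$ is "inessential by rotation invariance" imports rotational invariance of the full-plane limit, a strictly stronger input than the diagonal correlation result the paper actually cites — the hypothesis is there precisely to avoid this; and your decomposition into $\E(\sigma_{v_{1}})\E(\sigma_{v_{2}})+\mathrm{Cov}$ is idle (for free boundary conditions $\E(\sigma_{v})=0$ by symmetry, and the step is never used downstream), so it can simply be deleted.
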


\begin{proof}
Note that by a celebrated result of T. T. Wu and by our choice of
$\Csigma$, we do have 
\begin{equation}
\delta^{-\frac{1}{4}}\E_{\C^{\delta}}(\sigma_{v_{1}}\sigma_{v_{2}})=\Csigma^{2}|v_{1}-v_{2}|^{-\frac{1}{4}}+o(1)\label{eq: full_plane}
\end{equation}
 for the ``diagonal'' (i. e., for $\re v_{1}=\re v_{2}$) spin-spin
correlation in the full plane \cite{mccoy2014two,chelkak_Hongler_Mahfouf},
provided that $|v_{1}-v_{2}|\delta^{-1}\to\infty$. Also, by FKG inequality,
we have 
\begin{equation}
\E_{D^{\delta},\text{free}}(\sigma_{v_{1}}\sigma_{v_{2}})\leq\E_{\C^{\delta}}(\sigma_{v_{1}}\sigma_{v_{2}})\leq\E_{D^{\delta},+}(\sigma_{v_{1}}\sigma_{v_{2}}).\label{eq: FKG}
\end{equation}
Consider the critical $q=2$ FK random cluster model in $D^{\delta}$.
Then, 
\[
\E_{D^{\delta},\text{free}}(\sigma_{v_{1}}\sigma_{v_{2}})=\P_{D^{\delta},\text{free}}^{\FK}(v_{1}\leftrightsquigarrow v_{2})\text{ and }\E_{D^{\delta},+}(\sigma_{v_{1}}\sigma_{v_{2}})=\P_{D^{\delta},\text{wired}}^{\FK}(v_{1}\leftrightsquigarrow v_{2}).
\]
Let $R$ denote the radius of $D$. Given $0<r<\hat{r}\leq R$, let
$A_{r,\hat{r}}$ by the annulus of radii $\hat{r},r$ concentric with
$D$, and let $E_{r,\hat{r}}^{\text{open}}$ (respectively, $E_{r,\hat{r}}^{\text{closed}}$)
be the event that there is an open (respecvively, dual-open) circuit
$\gamma$ in $A_{r,\hat{r}}$ separating its boundary components.
The event $\{v_{1}\leftrightsquigarrow v_{2}\text{ and }E_{r,\hat{r}}^{\text{open}}\}$
is determined by the state of edges at distance $\leq\hat{r}$ from
the center of $D$. Therefore, conditioning on these states and using
FKG, we obtain

\begin{align*}
\P_{D^{\delta},\text{wired}}^{\FK}(v_{1}\leftrightsquigarrow v_{2}\text{ and }E_{\hat{r},R}^{\text{closed}}) & \geq\P_{D^{\delta},\text{wired}}^{\FK}(v_{1}\leftrightsquigarrow v_{2}\text{ and }E_{\hat{r},R}^{\text{closed}}\text{ and }E_{r,\hat{r}}^{\text{open}})\\
 & \geq\P_{A_{\hat{r},R},\text{wired}}^{\FK}(E_{\hat{r},R}^{\text{closed}})\P_{D^{\delta},\text{wired}}^{\FK}(v_{1}\leftrightsquigarrow v_{2}\text{ and }E_{r,\hat{r}}^{\text{open}})\\
 & \geq\P_{A_{\hat{r},R},\text{wired}}^{\FK}(E_{\hat{r},R}^{\text{closed}})\P_{D^{\delta},\text{wired}}^{\FK}(E_{r,\hat{r}}^{\text{open}})\P_{D^{\delta},\text{wired}}^{\FK}(v_{1}\leftrightsquigarrow v_{2}).
\end{align*}
By RSW inequality for the $q=2$ FK model \cite{RSW_Ising}, for any
$\eps>0$, we can find an $\hat{r}>r>0$ such that $\P_{D^{\delta},\text{wired}}^{\FK}(E_{r,\hat{r}}^{\text{open}})\P_{A_{\hat{r},R},\text{free}}^{\FK}(E_{\hat{r},R}^{\text{closed}})\geq(1-\eps).$
On the other hand, conditionally on the outermost circuit $\gamma$
realizing $E_{\hat{r},R}^{\text{closed}}$, the model in the interior
$D_{\gamma}^{\delta}$ of $\gamma$ is the critical $q=2$ FK model
with free boundary conditions. Therefore, by monotonicity, we have
\[
\P_{D^{\delta},\text{wired}}^{\FK}(v_{1}\leftrightsquigarrow v_{2}\text{ and }E_{\hat{r},R}^{\text{closed}})=\E_{D^{\delta},\text{wired }}^{\FK}\left[\ind_{E_{\hat{r},R}^{\text{closed}}}\P_{D_{\gamma}^{\delta},\text{free}}^{\FK}(v_{1}\leftrightsquigarrow v_{2})\right]\leq\P_{D^{\delta},\text{free}}^{\FK}(v_{1}\leftrightsquigarrow v_{2}).
\]
We conclude that $(1-\eps)\E_{D^{\delta},+}(\sigma_{v_{1}}\sigma_{v_{2}})\leq\E_{D^{\delta},\text{free}}(\sigma_{v_{1}}\sigma_{v_{2}})$
provided that $v_{1},v_{2}\in D_{r}^{\delta}$ and $r$ is small enough.
Combining this with (\ref{eq: full_plane}) and (\ref{eq: FKG}),
we get the result.
\end{proof}
\begin{lem}
\label{lem: Dobrushin}If discrete domains $(\Od,\bcond^{\delta})$
approximate a simply-connected domain $(\Omega,\bcond)$ with Dobrushin
boundary conditions, that is, $\plus$ on a boundary arc $\gamma^{\delta}$approximating
an arc $\gamma\subset\pa\Omega$ and $\minus$ on $\pa\Od\setminus\gamma^{\delta}$,
then 
\begin{equation}
\delta^{-\frac{1}{8}}\E_{\Od,\bcond^{\delta}}(\sigma_{v})=-C_{\sigma}\cdot\cos(\pi\text{hm}_{D_{i}}(v,\gamma_{i}))\cdot\crad_{D_{1}}^{-\frac{1}{8}}(v)+o(1),\label{eq: one_spin_Dobrushin}
\end{equation}
uniformly in $v$ in the bulk.
\end{lem}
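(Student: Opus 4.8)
The plan is to run the logarithmic-derivative strategy behind Corollary~\ref{cor: spin_ratio}, now with Dobrushin instead of free or plus boundary conditions, and to pin down the resulting global multiplicative constant by a near-boundary comparison with the full-plane two-point function underlying Lemma~\ref{lem: spin_2p}. First I would set up the governing observable. For $v'\sim v$ adjacent, the ratio $\E_{\Od,\bcond^{\delta}}(\sigma_{v'})/\E_{\Od,\bcond^{\delta}}(\sigma_{v})$ is, exactly as in Theorem~\ref{thm: Conv_both_near_spin}, a two-point fermionic observable $\bar{\eta}_{z_1}\bar{\eta}_{z_2}F_{\Od,\cvr}(z_1,z_2)$ at corners adjacent to $v$, with $\cvr=\cvr(v)$, but computed in the model with $\plus$ on $\gamma$ and $\minus$ on its complement. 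Along both arcs the spins are fixed, so the observable obeys the wired standard boundary conditions of Definition~\ref{def: Standard_BC}, except that the prescribed sign flips across the two endpoints $a,b$ of $\gamma$; equivalently one passes to the auxiliary conditions $\tilde{\bcond}$ of Lemma~\ref{lem: corr_to_obs_for_Thm_3}, with fermion insertions at $a,b$. The continuous counterpart is the holomorphic spinor $f^{[\reg]}_{\Omega,\cvr}(v,\cdot)$ solving the corresponding Riemann boundary value problem, whose unique solvability follows from Proposition~\ref{prop: Uniqueness_continuous}, so that the coefficient $\coefA_{\Omega,\cvr}(v)$ of Definition~\ref{def: coefA} is well defined.

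Rerunning the proofs of Theorems~\ref{thm: Conv_both_near_spin} and \ref{thm: Convergence_singular} then gives, uniformly in the bulk, convergence of the discrete logarithmic derivative to the real $1$-form $\re[\coefA_{\Omega,\cvr}(v)\,dv]$, whence, as in Corollary~\ref{cor: spin_ratio},
\[
\frac{\E_{\Od,\bcond^{\delta}}(\sigma_{v})}{\E_{\Od,\bcond^{\delta}}(\sigma_{\hat v})}\ \longrightarrow\ \exp\Bigl(\int_{\gamma}\re[\coefA_{\Omega,\cvr}\,dv]\Bigr).
\]
Solving the boundary value problem explicitly in $\Omega=\H$ with $\plus$ on $(0,\infty)$ and $\minus$ on $(-\infty,0)$, reading off $\coefA_{\H,\cvr}(v)$ and integrating the exact form yields, after exponentiation, the magnitude $\crad_{\Omega}^{-\frac{1}{8}}(v)\,|\cos(\pi\,\mathrm{hm}_{\Omega}(v,\gamma))|$; the harmonic measure and the conformal radius are the only conformal invariants that survive. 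The overall sign (positive near $\gamma$, negative near its complement) is fixed by continuity, since $\E_{\Od,\bcond^{\delta}}(\sigma_v)$ cannot vanish in the bulk, and matches $-\cos(\pi\,\mathrm{hm})$. The only subtlety is that $\pa\Omega$ need not be smooth at $a,b$; this is harmless because $v$ stays in the bulk, $\mathrm{hm}$ and $\crad$ are defined irrespective of regularity, and at the level of the observable the unknown local factors at $a,b$ cancel in every ratio exactly as in Lemma~\ref{lem: Clements_clever_lemma}.

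It remains to fix the global constant $K$ so that the limit equals $-C_\sigma\cos(\pi\,\mathrm{hm})\,\crad^{-\frac{1}{8}}$. Since $\coefA$ is conformally covariant (Lemma~\ref{lem: ccov_f_any}), the limit is conformally covariant of weight $\frac{1}{8}$, which is precisely the transformation law of $\crad^{-\frac{1}{8}}$ combined with the conformal invariance of $\mathrm{hm}$; hence $K$ is universal and may be computed in a single domain. I would compute it in $\H$ by letting $v$ approach a point of the $\plus$ arc, at Euclidean distance $t=\dist(v,\pa\H)\to0$. There $\mathrm{hm}\to1$ and the formula predicts $K\,\crad_{\H}^{-\frac{1}{8}}(v)=K\,(2t)^{-\frac{1}{8}}$, while by the Edwards--Sokal/FK representation $\E_{\H,+}(\sigma_v)=\P^{\FK}_{\mathrm{wired}}(v\leftrightsquigarrow\pa\H)$. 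Reflecting across $\R$ and gluing open and dual circuits by the RSW and FKG inequalities, sharpened to $(1\pm\eps)$ multiplicative accuracy by separation of scales exactly as in the proof of Lemma~\ref{lem: spin_2p}, compares this with the full-plane two-point function $\E_{\C^{\delta}}(\sigma_v\sigma_{\bar v})=\P^{\FK}_{\C}(v\leftrightsquigarrow\bar v)$ and yields
\[
\bigl[\delta^{-\frac{1}{8}}\E_{\H,+}(\sigma_v)\bigr]^{2}\ \appe\ \delta^{-\frac{1}{4}}\E_{\C^{\delta}}(\sigma_v\sigma_{\bar v})\ \appe\ \Csigma^{2}(2t)^{-\frac{1}{4}},
\]
using Wu's asymptotics (\ref{eq: full_plane}). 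Letting $\eps\to0$ forces $K=C_\sigma$, hence $-C_\sigma$ globally.

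The main obstacle is precisely this constant transfer: Steps one and two are essentially rereadings of convergence results already established, but matching the \emph{exact} constant $C_\sigma$ — defined through the bulk two-point function — against the boundary one-point normalization requires the reflection together with the RSW/FKG separation-of-scales gluing to hold uniformly in $\delta$ at $(1\pm\eps)$ precision, and a careful tracking of the $(2t)^{-\frac{1}{8}}$ prefactor. The complementary accounting at the endpoints of $\gamma$, where $\pa\Omega$ may be rough, is the other place demanding care, and is dispatched by the local-normalization mechanism of Lemmas~\ref{lem: rough_bdry} and \ref{lem: Clements_clever_lemma}.
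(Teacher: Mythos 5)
Your proposal breaks at two specific points. First, the sign-fixing and integration step: you assert that $\E_{\Od,\bcond^{\delta}}(\sigma_{v})$ cannot vanish in the bulk, but with genuine Dobrushin conditions ($\plus$ on $\gamma$, $\minus$ on the complement) the magnetization changes sign: in the limit it vanishes along the nodal curve $\{\mathrm{hm}(v,\gamma)=\tfrac{1}{2}\}$, and at finite $\delta$ it is $o(\delta^{\frac{1}{8}})$ near that curve. Consequently the two-point observable normalized by $\E_{\bcd}(\sigma_{v})$ is not uniformly controlled there, the convergence of discrete logarithmic derivatives cannot hold uniformly in the bulk, and integrating $\re[\coefA\,dv]$ recovers at best $\log|\E_{\bcd}(\sigma_{v})|$ on the complement of the nodal set, with no mechanism for matching the two components or the sign across it. This is precisely why the paper never runs the log-derivative machinery with Dobrushin boundary conditions: all observables stay normalized by correlations with \emph{standard} (locally monochromatic) conditions, which are strictly positive by FKG, and the Dobrushin one-point function is instead written as $\E_{\Od,\bcd}(\sigma_{v})=\E_{\Od,+}(\sigma_{v})\cdot F_{[v,\hat{v}]}(z_{1},z_{2})/F(z_{1},z_{2})$, with $\hat{v}$ an auxiliary boundary spin on $\gamma^{\delta}$ and $z_{1,2}$ corners at the two endpoints of $\gamma^{\delta}$; the vanishing factor $-\cos(\pi\,\mathrm{hm})$ then appears as a harmless ratio of observables, computed via Lemma \ref{lem: Clements_clever_lemma} and the explicit formula of Example \ref{exa: f_half-plane_explicit}(ii), rather than as a normalization.

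Second, the constant transfer. The comparison $\bigl[\delta^{-\frac{1}{8}}\E_{\H,+}(\sigma_{v})\bigr]^{2}\appe\delta^{-\frac{1}{4}}\E_{\C^{\delta}}(\sigma_{v}\sigma_{\bar{v}})$ cannot be ``sharpened to $(1\pm\eps)$ by separation of scales exactly as in Lemma \ref{lem: spin_2p}'', because your configuration has no separation of scales: $v$ is at distance $t$ from $\R$ and $\bar{v}$ at distance $2t$ from $v$, so all events live at the single scale $t$, whereas Lemma \ref{lem: spin_2p} compares boundary conditions for one and the same two-point event across an annulus of scales $r\ll R$ in which RSW circuits decouple. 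At matching scales FKG/RSW give only constant-factor comparability: $\{v\leftrightsquigarrow\bar{v}\}$ forces the upper and lower half-plane connections to meet at the same location on $\R$, while the square of the one-arm probability does not, and the exact amplitude identity $\ccor{\sigma_{v}}_{\H,+}^{2}=\ccor{\sigma_{v}\sigma_{\bar{v}}}_{\C}$ is an \emph{output} of the convergence theory, not something derivable by soft gluing. The paper fixes the constant by an entirely different device: the squaring trick
\[
\bigl(\delta^{-\frac{1}{8}}\E_{\Od,+}(\sigma_{v})\bigr)^{2}=\frac{\E_{\Od,+}(\sigma_{v})}{\E_{\Od,+}(\sigma_{\hat{v}})}\cdot\delta^{-\frac{1}{4}}\E_{\Od,+}(\sigma_{v}\sigma_{\hat{v}})\cdot\frac{\E_{\Od,+}(\sigma_{v})\E_{\Od,+}(\sigma_{\hat{v}})}{\E_{\Od,+}(\sigma_{v}\sigma_{\hat{v}})},
\]
in which the middle factor converges by the already-established $d=0$ case of Theorem \ref{thm: spin_convergence}, the last factor is squeezed to $1$ as $\hat{v}\to\pa\Omega$ by the FKG and GHS inequalities ($1\geq\cdot\geq1-\E_{\mathrm{free}}/\E_{+}$) together with the factorization $\ccor{\sigma_{v}\sigma_{\hat{v}}}\sim\ccor{\sigma_{v}}_{+}\ccor{\sigma_{\hat{v}}}_{+}$, and the square root is resolved by positivity of magnetization --- which is available for $\plus$ conditions but, as noted above, fails for Dobrushin ones.
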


\begin{proof}
We first briefly outline the proof in the case $\gamma^{\delta}=\pa\Od$,
i. e., $\plus$ boundary conditions. From the case $d=0$ of Theorem
\ref{thm: spin_convergence} below, we know that
\begin{align*}
\delta^{-\frac{1}{4}}\E_{\Od,+} & (\sigma_{v_{1}}\sigma_{v_{2}})=\Csigma^{2}\ccor{\sigma_{v_{1}}\sigma_{v_{2}}}+o(1),\\
\delta^{-\frac{1}{4}}\E_{\Od,\text{free}} & (\sigma_{v_{1}}\sigma_{v_{2}})=\Csigma^{2}\ccor{\sigma_{v_{1}}\sigma_{v_{2}}}_{\text{free}}+o(1).
\end{align*}
(There is no circular reasoning, for the present Lemma is only used
in the $d>0$ case.). We then write, as $\delta\to0$, 
\begin{align*}
\left(\delta^{-\frac{1}{8}}\E_{\Od,+}(\sigma_{v})\right)^{2} & =\frac{\E_{\Od,+}(\sigma_{v})}{\E_{\Od,+}(\sigma_{\hat{v}})}\cdot\delta^{-\frac{1}{4}}\E_{\Od,+}(\sigma_{v}\sigma_{\hat{v}})\cdot\frac{\E_{\Od,+}(\sigma_{v})\E_{\Od,+}(\sigma_{\hat{v}})}{\E_{\Od,+}(\sigma_{v}\sigma_{\hat{v}})}\\
 & =\left(\Csigma^{2}\ccor{\sigma_{v}\sigma_{\hat{v}}}\frac{\ccor{\sigma_{v}}_{+}}{\ccor{\sigma_{\hat{v}}}_{+}}+o(1)\right)\cdot\frac{\E_{\Od,+}(\sigma_{v})\E_{\Od,+}(\sigma_{\hat{v}})}{\E_{\Od,+}(\sigma_{v}\sigma_{\hat{v}})}
\end{align*}
uniformly over $v$ and the auxiliary point $\hat{v}$ in the bulk.
Note that, as $\hat{v}\to\pa\Omega,$ we have $\ccor{\sigma_{v}\sigma_{\hat{v}}}\sim\ccor{\sigma_{v}}_{+}\ccor{\sigma_{\hat{v}}}_{+}.$
The FKG and the GHS inequalities give 
\[
1\geq\frac{\E_{\Od,+}(\sigma_{v})\E_{\Od,+}(\sigma_{\hat{v}})}{\E_{\Od,+}(\sigma_{v}\sigma_{\hat{v}})}\geq1+\frac{\E_{\Od,+}(\sigma_{v})\E_{\Od,+}(\sigma_{\hat{v}})-\E_{\Od,+}(\sigma_{v}\sigma_{\hat{v}})}{\E_{\Od,+}(\sigma_{v}\sigma_{\hat{v}})}\geq1-\frac{\E_{\Od,\text{free}}(\sigma_{v}\sigma_{\hat{v}})}{\E_{\Od,+}(\sigma_{v}\sigma_{\hat{v}})}.
\]
The scaling limit of the latter expression tends to $1$ as $\hat{v}\to\pa\Omega$.
Summarizing, for every $\eps>0$, we can choose $\hat{v}\in\Omega$
close to the boundary so that the above estimates give 
\[
\left(\delta^{-\frac{1}{8}}\E_{\Od,+}(\sigma_{v})\right)^{2}\approx_{\eps}\Csigma^{2}\ccor{\sigma_{v}}_{+}^{2},
\]
provided that $\delta$ is small enough. Together with the positivity
of magnetization, this proves the result for $\plus$ boundary conditions.

When neither $\gamma$ nor $\pa\Omega\setminus\text{\ensuremath{\gamma}}$
are empty, we invoke the representation (\ref{eq: FW_correlation})
and Lemma \ref{lem: corr_to_obs_for_THM_2}. In this case, we have
$n=1$ and $d=1$, we rename $v_{1},v_{2}:=v,\hat{v}$, where the
spin at $\hat{v}$ is placed on $\gamma^{\delta}$. We simply have
$\ind(\sigma_{\hat{v}}=1)=(\sigma_{\hat{v}}+1)/2,$ so that 
\[
\E_{\Od,\bcd}(\sigma_{v})=\E_{\Od,\tilde{\bcond}^{\delta}}(\sigma_{v}\sigma_{\hat{v}})=\E_{\Od}(\sigma_{v}\sigma_{\hat{v}})\frac{F_{[v,\hat{v}]}(z_{1},z_{2})}{F(z_{1},z_{2})}=\E_{\Od,+}(v)\cdot\frac{F_{[v,\hat{v}]}(z_{1},z_{2})}{F(z_{1},z_{2})},
\]
where $z_{1,2}$ are corners outside of $\Od$ adjacent to two different
endpoints of $\gamma^{\delta}$. We already know the scaling limit
of the first factor, and it remains to apply Lemma \ref{lem: Clements_clever_lemma}
to the numerator and denominator: if $\varphi$ is a conformal map
from $\Omega$ to $\H$, then 
\[
\frac{F_{[v,\hat{v}]}(z_{1},z_{2})}{F(z_{1},z_{2})}=\frac{\normLoc{z_{1}}\normLoc{z_{2}}F_{[v,\hat{v}]}(z_{1},z_{2})}{\normLoc{z_{1}}\normLoc{z_{2}}F(z_{1},z_{2})}=\frac{f_{\H,\cvr(v)}(\varphi(z_{1}),\varphi(z_{2}))}{f_{\H}(\varphi(z_{1}),\varphi(z_{2}))}+o(1),
\]
where $f_{\H}(z_{1},z_{2})=2/(z_{2}-z_{1})\neq0$, and $f_{\H,\cvr(v)}(z_{1},z_{2})$
are given in Example \ref{exa: f_half-plane_explicit}. We leave it
to the reader to check that the resulting expression can be succinctly
represented as (\ref{eq: one_spin_Dobrushin}).
\end{proof}
\begin{thm}
\label{thm: spin_convergence}One has, as $\delta\to0$, 
\[
\delta^{-\frac{n}{8}}\E_{\Od}(\sigma_{v_{1}}\dots\sigma_{v_{n+d}})=C_{\sigma}^{n}\cdot\CorrO{\sigma_{v_{1}}\ldots\sigma_{v_{n+d}}}+o(1)
\]
uniformly over $v_{1},\dots,v_{n}$ in the bulk and away from each
other. As in the previous chapter, the spins $v_{n+1},\dots,v_{n+d}$
are on the boundary.
\end{thm}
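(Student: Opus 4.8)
The plan is to deduce the statement from the already-established convergence of \emph{ratios} of spin correlations (Corollary \ref{cor: spin_ratio}) by fixing the overall multiplicative constant at a single, conveniently chosen configuration; the probabilistic input of Lemma \ref{lem: spin_2p} enters only at this last step. Write $g_\delta^{(n)}(v_1,\dots,v_n):=\delta^{-n/8}\E_{\Od}(\sigma_{v_1}\cdots\sigma_{v_n})$. By Corollary \ref{cor: spin_ratio}, for any fixed reference $n$-tuple $\hat v=(\hat v_1,\dots,\hat v_n)$ of bulk points,
\[
g_\delta^{(n)}(v_1,\dots,v_n)=g_\delta^{(n)}(\hat v)\cdot\frac{\CorrO{\sigma_{v_1}\ldots\sigma_{v_n}}}{\CorrO{\sigma_{\hat v_1}\ldots\sigma_{\hat v_n}}}+o(1),
\]
uniformly over target points at a definite distance from one another and from $\partial\Omega$. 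Hence it suffices to prove $g_\delta^{(n)}(\hat v)\to\Csigma^{n}\CorrO{\sigma_{\hat v_1}\ldots\sigma_{\hat v_n}}$ for one such $\hat v$ (which may depend on an accuracy parameter). I would treat the bulk case ($d=0$) by induction on the even integer $n$, the case $n=0$ being trivial.

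For the inductive step, fix $\varepsilon>0$ and choose the reference so that $\hat v_1,\hat v_2$ lie on a common vertical line (so the diagonal hypothesis of Lemma \ref{lem: spin_2p} applies) at a small distance $r$ from a point $v_\ast$, with $\hat v_3,\dots,\hat v_n$ fixed in the bulk and away from $v_\ast$. Splitting off the collision,
\[
g_\delta^{(n)}(\hat v)=\left[\delta^{-\frac14}\frac{\E_{\Od}(\sigma_{\hat v_1}\sigma_{\hat v_2}\sigma_{\hat v_3}\cdots\sigma_{\hat v_n})}{\E_{\Od}(\sigma_{\hat v_3}\cdots\sigma_{\hat v_n})}\right]\cdot g_\delta^{(n-2)}(\hat v_3,\dots,\hat v_n).
\]
The second factor tends to $\Csigma^{n-2}\CorrO{\sigma_{\hat v_3}\ldots\sigma_{\hat v_n}}$ by the induction hypothesis. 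The bracketed factor I claim satisfies, for $r$ small enough and then $\delta$ small enough, the two-sided estimate
\[
\delta^{-\frac14}\frac{\E_{\Od}(\sigma_{\hat v_1}\sigma_{\hat v_2}\sigma_{\hat v_3}\cdots\sigma_{\hat v_n})}{\E_{\Od}(\sigma_{\hat v_3}\cdots\sigma_{\hat v_n})}\appe\Csigma^{2}|\hat v_1-\hat v_2|^{-\frac14}.
\]
Granting this and combining with the continuum normalisation (\ref{eq: spins_coherent}) of Proposition \ref{prop: multiplicative-normalisation}, which gives $\CorrO{\sigma_{\hat v_1}\ldots\sigma_{\hat v_n}}\appe|\hat v_1-\hat v_2|^{-1/4}\CorrO{\sigma_{\hat v_3}\ldots\sigma_{\hat v_n}}$ for small $r$, one obtains $g_\delta^{(n)}(\hat v)\appe\Csigma^{n}\CorrO{\sigma_{\hat v_1}\ldots\sigma_{\hat v_n}}$; letting first $\delta\to0$ and then $\varepsilon\to0$ fixes the constant and closes the induction.

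The main obstacle — and the sole genuinely probabilistic ingredient — is the bracketed localisation estimate. I would prove it by a standard RSW-plus-FKG sandwiching, in the spirit of Lemma \ref{lem: spin_2p}: the ratio is the expectation of $\sigma_{\hat v_1}\sigma_{\hat v_2}$ in the Ising measure tilted by the far-away spins $\sigma_{\hat v_3}\cdots\sigma_{\hat v_n}$. Using the RSW inequality for the critical $q=2$ FK model, with probability close to $1$ there is a dual-open circuit in an annulus $A_{r,\hat r}$ around $v_\ast$ separating $\{\hat v_1,\hat v_2\}$ from $\hat v_3,\dots,\hat v_n$ and from $\partial\Od$; conditioning on the outermost such circuit and invoking the domain Markov property together with Griffiths/FKG monotonicity sandwiches the tilted two-point function between its free and wired counterparts in a small disc around $v_\ast$. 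Both of these are controlled by Lemma \ref{lem: spin_2p}, giving $\delta^{1/4}\Csigma^{2}|\hat v_1-\hat v_2|^{-1/4}(1+o(1))$ as $r\to0$, which is exactly the claim. (For $n=2$ the background spins are absent and the argument is simply the localisation of the two-point function in $\Omega$ to a disc.)

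Finally, for the boundary case $d>0$ I would bring the boundary spins in from the bulk. Placing auxiliary bulk points $v_{n+i}\to w_i\in\fixed$ and applying a boundary analogue of the localisation estimate (again RSW plus FKG, now across a half-annulus) together with the one-point asymptotics of Lemma \ref{lem: Dobrushin} — which along a pure fixed arc yields $\delta^{-1/8}\E_{\Od}(\sigma_{v_{n+i}})\to\Csigma\,\crad_\Omega(v_{n+i})^{-1/8}$ — converts the factor $\delta^{-(n+d)/8}$ of the all-bulk correlation into $\delta^{-n/8}$ while exactly reproducing the conformal-radius renormalisation $\prod_i\crad_\Omega(v_{n+i})^{-1/8}$ of the continuum definition (\ref{eq: prop_spin_bdry}). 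Matching the two upgrades the bulk result to the stated boundary version with the same constant $\Csigma^{n}$, uniformly in the remaining bulk points; no circularity arises, since Lemma \ref{lem: Dobrushin} rests only on the $d=0$ case proved above.
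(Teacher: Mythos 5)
Your proposal is correct in substance and shares the paper's overall architecture — Corollary \ref{cor: spin_ratio} reduces everything to fixing the constant at one well-chosen configuration, Lemma \ref{lem: spin_2p} supplies the probabilistic input, and the $d>0$ step (auxiliary interior points, RSW circuits around the boundary spins, conditional factorization into plus-magnetizations in cut-out domains, Lemma \ref{lem: Dobrushin}, and matching the $\crad^{-1/8}$ factors against (\ref{eq: prop_spin_bdry})) is essentially identical to the paper's, including your non-circularity remark, which the paper makes verbatim. Where you genuinely diverge is the $d=0$ case. The paper does \emph{not} induct: it places all $n$ spins in pairwise-close diagonal pairs inside $n/2$ disjoint discs simultaneously, and then two one-line global inequalities suffice — the FKG/Griffiths product lower bound $\E_{\Od}(\sigma_{v_{1}}\cdots\sigma_{v_{n}})\geq\prod_{i}\E_{D_{i},\mathrm{free}}(\sigma_{v_{2i-1}}\sigma_{v_{2i}})$ and the FK upper bound obtained by wiring all the disc boundaries, each term then controlled by Lemma \ref{lem: spin_2p} and compared with (\ref{eq: _ccor_versus_twopoint}). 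Your inductive peeling of a single pair buys a more modular statement but costs a new lemma: the localisation of the ``background-tilted'' two-point function $\delta^{-\frac14}\E_{\Od}(\sigma_{\hat v_1}\sigma_{\hat v_2}\sigma_{\hat v_3}\cdots)/\E_{\Od}(\sigma_{\hat v_3}\cdots)$, which the paper's all-pairs decomposition is specifically designed to avoid, since the simple wiring upper bound does not factorize when unpaired spins remain outside the disc.

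One point in your sketch conceals real work and should be flagged. In the FK representation, the pairing events $A_{n}$, $A_{n-2}$ are \emph{increasing}, whereas your separating \emph{dual}-open circuit is a \emph{decreasing} event; a bare FKG application then yields $\P(A_{n}\cap E^{*})\leq\P(A_{n})\P(E^{*})$, the wrong direction for the step $\P(A_{n}\cap E^{*})\geq(1-\eps)\P(A_{n})$ that your conditioning argument requires. This is not fatal — it is exactly the difficulty handled in the proof of Lemma \ref{lem: spin_2p} by conditioning on the interior edge configuration and dominating the conditional annulus measure by the wired one (so that the decreasing circuit event retains probability $\geq1-\eps$ by RSW), or alternatively one can run your induction with \emph{open} circuits around the pair, as the paper does around the boundary points in its $d>0$ argument. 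Also, cosmetically: what the boundary step actually needs from Lemma \ref{lem: Dobrushin} is the plus-boundary-condition one-point asymptotics $\delta^{-\frac18}\E_{D_{i}^{\delta},+}[\sigma_{\hat v_{i}}]\approx\Csigma\,\crad_{D_{i}}^{-\frac18}(\hat v_{i})$ in the (random) cut-out domains, sandwiched between deterministic domains, rather than the one-point function in $\Od$ itself; your formulation elides this but the mechanism you describe is the right one.
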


\begin{proof}
The strategy of the proof is as follows. Observe that Corollary \ref{cor: spin_ratio}
already gives the result with $\delta^{-\frac{n}{8}}$ replaced by
an unknown normalizing factor 
\[
\rho(\delta)=\rho(\Od,n,v_{n+1},\dots,v_{n+d}).
\]
Thus, it suffices to prove that $\delta^{-\frac{n}{8}}\E_{\Od}(\sigma_{v_{1}}\dots\sigma_{v_{n+d}})\approx C_{\sigma}^{n}\cdot\CorrO{\sigma_{v_{1}}\ldots\sigma_{v_{n+d}}}$
for \emph{some} positions of the inner spins $v_{1},\dots,v_{n}$.
If $d=0$, we will take these spins pairwise close to each other,
and use FKG inequality to show that the correlation $\E_{\Od}(\sigma_{v_{1}}\dots\sigma_{v_{n+d}})$
splits into a product of two-point correlations in simple domains
that we control well. If $d\neq0$, we will start with the correlation
of $d+n$ spins in the bulk, and then send $d$ of them to the boundary
and show that an analog of (\ref{eq: prop_spin_bdry}) holds in the
discrete. 

We start with the case $d=0$, in which case $n$ is even. Let $D_{1},\dots,D_{n/2}$
be discs in $\Omega$ with non-intersecting closures. Assume that
$v_{1},v_{2}\in D_{1},\dots,v_{n-1},v_{n}\in D_{n/2}.$ By FKG inequality
and monotonicity in the boundary conditions, we have
\begin{multline}
\E_{\Od}(\sigma_{v_{1}}\dots\sigma_{v_{n}})\geq\E_{\Od}(\sigma_{v_{1}}\sigma_{v_{2}})\ldots\E_{\Od}(\sigma_{v_{n-1}}\sigma_{v_{n}})\\
\geq\E_{D_{1},\text{free}}(\sigma_{v_{1}}\sigma_{v_{2}})\ldots\E_{D_{n/2},\text{free}}(\sigma_{v_{n-1}}\sigma_{v_{n}}),\label{eq: FKG_lower_bound}
\end{multline}
were $"\text{free}"$ stands for free boundary conditions. On the
other hand, passing to the FK model and wiring the boundaries of $D_{1},\dots,D_{n/2}$,
we obtain that 
\begin{equation}
\E_{\Od}(\sigma_{v_{1}}\dots\sigma_{v_{n}})\leq\E_{D_{1},+}(\sigma_{v_{1}}\sigma_{v_{2}})\ldots\E_{D_{n/2},+}(\sigma_{v_{2n-1}}\sigma_{v_{2n}}),\label{eq: FKG_Upper_bound}
\end{equation}
where the expectations in the right-hand side are with fixed boundary
conditions. By Proposition \ref{prop: multiplicative-normalisation},
for each $\eps>0$, there exists a number $r>0$ such that if, for
every $i=1,\dots,n/2$, both $v_{2i-1}$ and $v_{2i}$ are at distance
at most $r$ from the center of $D_{i}$, then 
\begin{equation}
\ccor{\sigma_{v_{1}}\dots\sigma_{v_{n}}}_{\Omega}\appe\prod_{i=1}^{n/2}|v_{2i-1}-v_{2i}|^{\frac{1}{4}}.\label{eq: _ccor_versus_twopoint}
\end{equation}
Combining (\ref{eq: FKG_lower_bound}\textendash \ref{eq: FKG_Upper_bound}),
(\ref{eq: two_point_free}\textendash \ref{eq: two_point_fixed})
and (\ref{eq: _ccor_versus_twopoint}), we see that if, for each $i$,
$v_{2i-1},v_{2i}$ are at distance at most $r$ from the center of
$D_{i}$, but at least $r/2$ from each other, and $\re v_{2i-1}=\re v_{2i},$
and $\delta$ is small enough, then 
\[
\delta^{-\frac{n}{8}}\E_{\Od}(\sigma_{v_{1}}\dots\sigma_{v_{n}})\approx_{\theta(\eps)}\Csigma^{n}\cdot\CorrO{\sigma_{v_{1}}\ldots\sigma_{v_{n}}},
\]
where $\theta(\eps)\to0$ as $\eps\to0.$

Combining this result with (\ref{eq: conv_ratio}) and the definition
(\ref{eq: def_spincorr_up_to_constant}) of $\ccor{\sigma_{v_{1}}\dots\sigma_{v_{n}}}_{\Omega}$,
we infer that 
\[
\delta^{-\frac{n}{8}}\E_{\Od}(\sigma_{v_{1}}\dots\sigma_{v_{n}})\approx_{2\theta(\eps)}\Csigma^{n}\cdot\CorrO{\sigma_{v_{1}}\ldots\sigma_{v_{n}}},
\]
provided that $v_{1},\dots,v_{n}$ are at a definite distance from
each other and from the boundary, and $\delta$ is small enough. Since
$\eps$ is arbitrarily small, this concludes the proof in the case
$d=0$.

Now we turn to the case $d\neq0$. Note that the correlations do not
depend on the positions of $v_{n+1},\dots,v_{n+d}$ within the fixed
part of their boundary component. We fix those positions, and fix
small radii $R>r>0$ (to be specified later) such that $\pa\Omega\cap B_{R}(v_{i})$
is a subset of the fixed arc containing $v_{i},$ $i=n+1,\dots,n+d,$
and that this arc crosses the annulus $B_{R}(v_{i})\setminus B_{r}(v_{i}).$
Denote by $\P_{\FK}$ the probability measure on the $q=2$ critical
FK percolation in $\Od$ with all $\fixed$ arcs belonging to the
same boundary component wired together, and free boundary conditions
on $\free.$ We have, by Edwards\textendash Sokal coupling, $\E_{\Od}(\sigma_{v_{1}}\dots\sigma_{v_{n+d}})=\P_{FK}(A),$
where $A$ is the event that the points $v_{1},\dots,v_{n+d}$ are
connected into clusters so that no cluster contains an odd number
of them. 

We introduce auxiliary \emph{interior} points $\hat{v}_{n+1}\in B_{r}(v_{n+1})\cap\Od$,
..., $\hat{v}_{n+d}\in B_{r}(v_{d})\cap\Od$, and write $\E_{\Od}(\sigma_{v_{1}}\ldots\sigma_{v_{n}}\sigma_{\hat{v}_{n+1}}\dots\sigma_{\hat{v}_{n+d}})=\P_{FK}(\hat{A}),$
where $\hat{A}$ is similar to $A$ with $v_{n+1},\dots,v_{n+d}$
replaced with $\hat{v}_{n+1},\dots,\hat{v}_{n+d}.$ We also denote
by $\eventCut$ the event that in the FK percolation, there is an
open path in $B_{R}(v_{i})\setminus B_{r}(v_{i})$ which is a cross-cut
separating $v_{i}$ from $\pa B_{R}(v_{i});$ we denote by $\gamma_{i}^{\delta}$
the ``outermost'' such path. Note that by RSW inequality \cite{RSW_Ising},
for every $\eps>0$, we can choose $r,R$ so that $\P(\eventCut)>1-\eps$
for all $\delta$ small enough. 

We now observe that $A,\hat{A}$ and $\eventCut$ are increasing events,
therefore, by FKG, we have 
\begin{gather*}
(1-\eps)\P(A)\leq\P(A)\P(\eventCut)\leq\P(A\cap\eventCut)\leq\P(A);\\
(1-\eps)\P(\hat{A})\leq\P(\hat{A})\P(\eventCut)\leq\P(\hat{A}\cap\eventCut)\leq\P(\hat{A}).\\
(1-\eps)\P(A\cap\hat{A})\leq\P(A\cap\hat{A})\P(\eventCut)\leq\P(A\cap\hat{A}\cap\eventCut)\leq\P(A\cap\hat{A})
\end{gather*}
Clearly, the event $\hat{A}\cap\eventCut$ implies $A$, therefore,
$\P(A|\hat{A})\geq(1-\eps).$ In view of the above inequalities, we
also have 
\[
\P(\hat{A}|A)\appe\P(\hat{A}|A\cap\eventCut).
\]
However, given $A\cap\eventCut$, the event $\hat{A}$ is simply the
event that that for each $i$, $\hat{v}_{i}$ is connected to the
boundary of the simply-connected domain $D_{i}^{\delta}$ cut out
by the cross-cut $\gamma_{i}^{\delta}.$ Conditionally on $D_{i}^{\delta},$
the configurations inside $D_{i}^{\delta}$ are that of the FK percolation
with wired boundary conditions, independent of each other. Therefore,
\[
\P(\hat{A}|A\cap\eventCut)=\prod_{i=n+1}^{n+d}\E\left[\left.\E_{D_{i}^{\delta},+}[\sigma_{\hat{v}_{i}}]\right|A\cap\eventCut\right].
\]
We infer that for any $\eps>0,$ we have 
\[
\P(A)=\P(\hat{A})\cdot\frac{\P(A|\hat{A})}{\P(\hat{A}|A)}\approx_{\theta(\eps)}\P(\hat{A})\cdot\left(\prod_{i=n+1}^{n+d}\E_{FK}\left[\left.\E_{D_{i}^{\delta},+}[\sigma_{\hat{v}_{i}}]\right|A\cap\eventCut\right]\right)^{-1},
\]
 where $\theta(\eps)\to0$ as $\eps\to0.$ We therefore have 
\begin{multline*}
\delta^{-\frac{n}{8}}\cdot\E_{\Od}[\sigma_{v_{1}}\ldots\sigma_{v_{n}}\sigma_{v_{n+1}}\dots\sigma_{v_{n+d}}]\\
\approx_{\eps}\delta^{-\frac{n+d}{8}}\E[\sigma_{v_{1}}\ldots\sigma_{v_{n}}\sigma_{\hat{v}_{1}}\dots\sigma_{\hat{v}_{d}}]\left(\prod_{i=n+1}^{n+d}\E_{FK}\left[\delta^{-\frac{1}{8}}\left.\E_{D_{i}^{\delta},+}[\sigma_{\hat{v}_{i}}]\right|A\cap\eventCut\right]\right)^{-1}\\
\approx_{2\eps}\Csigma^{n+d}\CorrO{\sigma_{v_{1}}\ldots\sigma_{v_{n}}\sigma_{\hat{v}_{1}}\dots\sigma_{\hat{v}_{d}}}\left(\prod_{i=n+1}^{n+d}\E_{FK}\left[\delta^{-\frac{1}{8}}\left.\E_{D_{i}^{\delta},+}[\sigma_{\hat{v}_{i}}]\right|A\cap\eventCut\right]\right)^{-1}.
\end{multline*}
provided that $\delta$ is small enough. (Here ``small enough''
is uniform in the positions of $v_{1},\dots,v_{n}$ away from each
other and the boundary, but it may depend on $\eps$ and $\hat{v}_{n+1},\dots,\hat{v}_{n+d}$.)
To handle the last term, we note that by monotonicity, we can write
\[
\E_{\check{D}_{i}^{\delta},+}[\sigma_{\hat{v}_{i}}]\leq\E_{D_{i}^{\delta},+}[\sigma_{\hat{v}_{i}}]\leq\E_{\hat{D}_{i}^{\delta},+}[\sigma_{\hat{v}_{i}}],
\]
where $\hat{D}_{i}^{\delta}$ and $\check{D}_{i}^{\delta}$ are approximations
to fixed (non-random) neighborhoods $\check{D}_{i}\subset\hat{D}_{i}$
of $v_{i}$. Therefore, by Lemma \ref{lem: Dobrushin}, we have for
any $\eps>0,$ 
\[
(1-\eps)\Csigma\crad_{\check{D}_{i}}^{-\frac{1}{8}}(\hat{v}_{i})\leq\delta^{-\frac{1}{8}}\E_{D_{i}^{\delta},+}[\sigma_{\hat{v}_{i}}]\leq(1+\eps)\Csigma\crad_{\hat{D}_{i}}^{-\frac{1}{8}}(\hat{v}_{i}),
\]
provided that $\delta$ is small enough. Now, as $\hat{v}_{i}\to v_{i},$
we have $\crad_{\check{D}_{i}}(\hat{v}_{i})\sim\crad_{\hat{D}_{i}}(\hat{v}_{i})\sim\crad_{\Omega}(\hat{v}_{i}).$
Hence, for any $\eps>0$, we can choose $\hat{v}_{i}$ in such a way
that 
\[
\prod_{i=n+1}^{n+d}\E\left[\delta^{-\frac{1}{8}}\left.\E_{D_{i}^{\delta},+}[\sigma_{\hat{v}_{i}}]\right|A\cap\eventCut\right]\approx_{\eps}\Csigma^{d}\prod_{i=n+1}^{n+d}\crad_{\Omega}^{-\frac{1}{8}}(\hat{v}_{i})
\]
provided that $\delta$ is small enough. We conclude that for any
$\eps>0,$ we have 
\begin{multline*}
\delta^{-\frac{n}{8}}\cdot\E_{\Od}[\sigma_{v_{1}}\ldots\sigma_{v_{n}}\sigma_{v_{n+1}}\dots\sigma_{v_{n+d}}]\approx_{\eps}\Csigma^{n}\CorrO{\sigma_{v_{1}}\ldots\sigma_{v_{n}}\sigma_{\hat{v}_{1}}\dots\sigma_{\hat{v}_{d}}}\prod_{i=n+1}^{n+d}\crad_{\Omega}^{\frac{1}{8}}(\hat{v}_{i})
\end{multline*}
provided that $\hat{v}_{i}$ are close enough to $v_{i}$ and $\delta$
is small enough (depending on $\hat{v}_{i}$). Together with the asymptotics
(\ref{prop: multiplicative-normalisation}), this implies that for
every $\eps>0$, one has 
\[
\delta^{-\frac{n}{8}}\cdot\E_{\Od}[\sigma_{v_{1}}\dots\sigma_{v_{n+d}}]\appe\Csigma^{n}\CorrO{\sigma_{v_{1}}\dots\sigma_{v_{n+d}}}
\]
provided that $\delta$ is small enough. This completes the proof.
\end{proof}

\begin{proof}[Proof of Theorem \ref{thm: intro_2}]
 We can rewrite the left-hand side of (\ref{eq: thm_2_intro}) according
to Lemma \ref{lem: corr_to_obs_for_THM_2}:
\[
\delta^{-\frac{n+m}{8}-s-\frac{k}{2}}\E(\Op(\sigma,\mu,\psi,\en))=\delta^{-\frac{n+m}{8}}\E(\sigma_{v_{1}}\dots\sigma_{u_{m}})\cdot\delta^{-\frac{k}{2}-s}F_{[v_{1}\dots v_{n+m}]}(z_{1},\dots,z_{2N}).
\]
By Theorem \ref{thm: spin_convergence}, the first factors is within
$o(1)$ from $\Csigma^{n+m}\ccor{\sigma_{v_{1}}\dots\sigma_{v_{n}}\sigma_{u_{1}}\dots\sigma_{u_{m}}}$,
uniformly in the marked points in the bulk and away from each other.
To treat the second factor, invoke Proposition \ref{prop: pfaff_discrete}
to write 
\[
\delta^{-\frac{k}{2}-\frac{2s}{2}}F_{[v_{1}\dots v_{n+m}]}(z_{1},\dots,z_{2N})=\Pf[\delta^{-\Delta_{p}-\Delta_{q}}F_{[v_{1}\dots v_{n+m}]}(z_{p},z_{q})]=:\Pf[F_{pq}],
\]
where $\Delta_{p}=\frac{1}{2}$ for $1\leq p\leq k+2s$ , and $\Delta_{p}=0$
for $k+2s<p\leq2N$. By Theorems \ref{thm: Convergence_singular}
and \ref{thm: Convergence_bulk_near}, each term in the Pfaffian converges
to the corresponding continuous limit; concretely, we have, for $p<q$,
\[
F_{pq}=\begin{cases}
-C_{p}C_{q}\frac{i}{2}\fdag(z_{p},z_{q})+o(1), & (p,q)=(k+2l-1;k+2l),\quad1\leq l\leq s\\
\eta_{z_{p}}\eta_{z_{q}}C_{p}C_{q}f^{[\any_{p},\any_{q}]}(z_{p},z_{q})+o(1), & \text{else, }
\end{cases}
\]
where 
\[
\any_{p}=\begin{cases}
\eta_{p}, & p\leq k+2s,\\
\sharp, & p>k+2s,
\end{cases}\quad\text{and}\quad C_{p}=\begin{cases}
\Cpsi, & p\leq k+2s,\\
1, & p>k+2s.
\end{cases}
\]
Taking the constants out of the Pfaffian, and rewriting everything
in terms of correlations rather than observables (see (\ref{eq: dte_psi_psieta}\textendash \ref{eq: dte_psistar_psistar})
and (\ref{eq: psi_to_mu_limits})), we get that 
\[
\Pf[F_{pq}]=\Cpsi^{k+2s}\eta_{z_{1}}\dots\eta_{z_{k+2s}}\Pf[f_{pq}]+o(1),
\]
where, for $p<q$, one has, with the notation $M:=k+2s$ and $\sigma_{\cvr}:=\sigma_{v_{1}}\dots\sigma_{v_{n}}\sigma_{u_{1}}\dots\sigma_{u_{m}},$
\[
\ccor{\sigma_{\cvr}}\cdot f_{pq}=\begin{cases}
\frac{i}{2}\bar{\eta}_{z_{k+2l-1}}\bar{\eta}_{z_{k+2l}}\ccor{\psistar{e_{l}}\psi_{e_{l}}\sigma_{\cvr}}, & (p,q)=(k+2l-1;k+2l),\quad1\leq l\leq s;\\
\ccor{\psi_{\hat{z}_{p}}^{\any}\psi_{\hat{z}_{q}}^{\any}\sigma_{\cvr}}, & \text{else},
\end{cases}
\]
where $\hat{z}_{p}=z_{p}$ if $p\le k$, $\hat{z}_{p}=e_{l}$ if $p\in\{k+2l-1;k+2l\}$
for some $0<l\leq s$, and $\hat{z}_{p}=u_{l}$ for $p=k+2s+l$, $ $

Note that if $s=0$, then we are already done, since the matrix $f_{pq}$
is exactly the one featuring in the Pfaffian expansion (\ref{eq: psi_mu_sigma_pfaff}).
Assuming $s>0,$ note that $\eta_{z_{k+1}}=\lambb$ and $\eta_{z_{k+2}}=-\lamb$
and hence $\eta_{z_{k+1}}\eta_{z_{k+2}}=-1$. Hence, we can write
\[
\ccor{\Op_{1}\psi_{e_{1}}^{\eta_{z_{k+1}}}\Op_{2}}=\frac{\lambda}{2}\left(\ccor{\Op_{1}\psi_{e_{1}}\Op_{2}}+i\ccor{\Op_{1}\psistar{e_{1}}\Op_{2}}\right);
\]
 
\[
\ccor{\Op_{1}\psi_{e_{1}}^{\eta_{z_{k+2}}}\Op_{2}}=-\frac{\lambb}{2}\left(\ccor{\Op_{1}\psi_{e_{1}}\Op_{2}}-i\ccor{\Op_{1}\psistar{e_{1}}\Op_{2}}\right);
\]
and also 
\[
\bar{\eta}_{z_{k+1}}\bar{\eta}_{z_{k+2}}\frac{i}{2}\ccor{\psistar{e_{1}}\psi_{e_{1}}\sigma_{\cvr}}=-\frac{\lambb\lamb}{4}\left(0+i\ccor{\psistar{e_{1}}\psi_{e_{1}}\sigma_{\cvr}}-i\ccor{\psi_{e_{1}}\psistar{e_{1}}\sigma_{\cvr}}+0\right).
\]
Therefore, by linearity, we can decompose $\Pf f_{pq}$ into a sum
of four Pfaffians, which have the same form as $\hat{F}_{pq}$, but
with $\psi_{\hat{z}_{k+1}}^{\eta_{k+1}}$ (respectively, $\psi_{\hat{z}_{k+2}}^{\eta_{z_{k+2}}}$)
replaced everywhere with $\frac{\lamb}{2}\psi_{e_{1}}$ or $\frac{\lamb}{2}i\psistar{e_{1}}$,
(respectively, with $-\frac{\lambb}{2}\psi_{e_{1}}$ or $\frac{\lambb}{2}i\psistar{e_{1}}$).
Note that two of those Pfaffians have proportional $(k+1)$-th and
$(k+2)$-th rows and thus vanish, and two others, by anti-symmetry,
are equal to each other; their sum equals the Pfaffian of the matrix
$\hat{F}_{pq}$ with $\psi_{\hat{z}_{k+1}}^{\eta_{z_{k+1}}}$ (respectively,
$\psi_{\hat{z}_{k+2}}^{\eta_{z_{k+2}}}$) replaced everywhere by $\lamb2^{-\frac{1}{2}}i\psistar{e_{1}}$
(respectively, by $-\lambb2^{-\frac{1}{2}}\psi_{e_{1}}$). Applying
a similar transformation to other egdes $e_{2},\dots,e_{s}$, we recover
the Pfaffian (\ref{eq: psi_mu_sigma_pfaff}) with $\frac{i}{2}\psistar{e_{i}}\psi_{e_{i}}$
inserted for each $i=1,\dots,s$, in agreement with Definition \ref{def: energy}.
Taking into account that $\Ceps=\Cpsi^{2}$ and $\Csigma=\Cmu$, this
completes the proof.
\end{proof}

\begin{proof}[Proof of Theorem \ref{thm: Intro_3}]
We start by proving the convergence of correlations with boundary
conditions $\tilde{\bcond}$. To this end, we invoke the decomposition
(\ref{eq: Lem_Thm_3_intermediate}), so that we can write 
\begin{equation}
\delta^{-\frac{n}{8}-s}\E_{\tilde{\bcond}^{\delta}}(\Op(\sigma,\en))=\delta^{-\frac{n}{8}}\E(\sigma_{v_{1}}\dots\sigma_{v_{n}})\cdot\frac{\delta^{-s}F_{[v_{1}\dots v_{n}]}(z_{1},\dots,z_{N})}{F(z_{1},\dots,z_{q})}.\label{eq: thm_3_proof}
\end{equation}
By Theorem \ref{thm: spin_convergence}, the first term is equal to
$\Csigma^{n}\ccor{\sigma_{v_{1}}\dots\sigma_{v_{n}}}+o(1),$ uniformly
over $v_{i}$ in the bulk and away from each other. We re-number the
corners the corners $z_{1},\dots,z_{q}$ so that \textcolor{black}{that
for some $\hat{q}$, $b_{1},\dots,b_{\hat{q}}\in\fixed$ and $b_{\hat{q}+1},\dots,b_{q}$
are endpoints of free arcs (this will introduce the same sign in the
numerator and in the denominator in (\ref{eq: thm_3_proof})).} 

Fix a conformal isomorphism $\varphi$ from $\Omega$ to a nice domain
$\Lambda$; recall that this entails fixing the normalizing factors
$\normLoc{z_{i}}$ in Lemma \ref{lem: Clements_clever_lemma}. By
the Pfaffian formula (\ref{eq: Pfaff_discrete}) and the linearity
of the Pfaffian, we can write 
\[
\frac{\delta^{-s}F_{[v_{1}\dots v_{n}]}(z_{1},\dots,z_{N})}{F(z_{1},\dots,z_{q})}=\frac{\Pf[\hat{F}_{ij}]_{1\leq i,j\leq N}}{\Pf[F_{ij}]_{1\leq i,j\leq q}},
\]
where 
\begin{align*}
\hat{F}_{ij}= & \normPf i\normPf jF_{[v_{1}\dots v_{n}]}(z_{i},z_{j}),\\
F_{ij}= & \normPf i\normPf jF(z_{i},z_{j}),
\end{align*}
and the normalizing factors $\normPf i$ are given by 
\[
\normPf i=\begin{cases}
\normLoc{z_{i}}, & i\leq\hat{q};\\
1, & \hat{q}<i\leq q;\\
\delta^{-\frac{1}{2}} & q<i\leq N,
\end{cases}
\]
Now we can apply Theorems \ref{eq: thm_Convergence _singular}, \ref{thm: Convergence_bulk_near}
and Lemma \ref{lem: Clements_clever_lemma} to each term in the numerator
and in the denominator. We get 
\begin{equation}
\frac{\delta^{-s}F_{[v_{1}\dots v_{n}]}(z_{1},\dots,z_{N})}{F(z_{1},\dots,z_{q})}=\frac{\lim_{\hat{z}_{1}\to z_{1},\dots\hat{z}_{\hat{q}}\to z_{q}}\Pf[\hat{f}_{ij}]_{1\leq i,j\leq N}+o(1)}{\lim_{\hat{z}_{1}\to z_{1},\dots\hat{z}_{\hat{q}}\to z_{q}}\Pf[f_{ij}]_{1\leq i,j\leq q}+o(1)},\label{eq: thm_3_proof_apply_conv}
\end{equation}
where $f_{ij}^{\Lambda}=\eta_{z_{i}}\eta_{z_{j}}\ccor{\Op_{i}\Op_{j}}_{\Omega}$,
and 
\[
\hat{f}_{ij}=\begin{cases}
\i f_{\Omega,\cvr}^{\star}(z_{i},z_{i})=|\varphi'(z_{i})|\frac{\i}{2}\frac{\ccor{\psistar{z_{i}}\psi_{z_{i}}\sigma_{\cvr}}_{\Omega}}{\ccor{\sigma_{\cvr}}_{\Omega}}, & (i;j)=(2k-1;2k)\text{ for }2k-1>q;\\
\eta_{z_{i}}\eta_{z_{j}}\ccor{\Op_{i}\Op_{j}\sigma_{\cvr}}_{\Omega}/\ccor{\sigma_{\cvr}}_{\Omega}, & \text{ else,}
\end{cases}
\]
where $\Op_{i}=\varphi'(\hat{z}_{i})^{-\frac{1}{2}}\psi(\hat{z}_{i})$
for $i\leq\hat{q}$, $\Op_{i}=\psi^{\flat}(z_{i})$ for $\hat{q}+1\leq i\leq q$
and $\Op_{i}=\psi_{z_{i}}^{\eta_{z_{i}}}$, $i>q.$ By Pfaffian expansion,
Lemma \ref{lem: bc_tilde_nonvanish}, and conformal covariance, we
have 
\begin{align*}
\Pf[f_{ij}]_{1\leq i,j\leq q}= & \ccor{\psi_{\varphi(z_{1})}\dots\psi_{\varphi(z_{\hat{q}})}\psi_{\varphi(z_{\hat{q}+1})}^{\flat}\dots\psi_{\varphi(z_{q})}^{\flat}}_{\Lambda}\neq0,
\end{align*}
 hence the right-hand side of (\ref{eq: thm_3_proof_apply_conv})
is in fact equal to 
\[
\lim_{\hat{z}_{1}\to z_{1},\dots\hat{z}_{\hat{q}}\to z_{q}}\frac{\Pf[\hat{f}_{ij}]_{1\leq i,j\leq N}}{\Pf[f_{ij}]_{1\leq i,j\leq q}}+o(1).
\]
It remains to identify this expression with $\ccor{\en_{e_{1}}\dots\en_{e_{k}}\sigma_{v_{1}}\dots\sigma_{v_{n}}}_{\Omega,\tilde{\bcond}}/\ccor{\sigma_{v_{1}}\dots\sigma_{v_{n}}}_{\Omega,}$.
To this end, we use the linearity of the Pfaffian as in the proof
of Theorem \ref{thm: intro_2}, can replace in $\Pf[\hat{f}_{ij}]_{1\leq i,j\leq N}$,
for each $l=1,\dots,s$, every instance of $\eta_{z_{2k-1}}\psi_{z_{q+2l-1}}^{\eta_{z_{q+2l-1}}}$
and $\eta_{z_{2k}}\psi_{z_{q+2l}}^{\eta_{z_{q+2l}}}$ with $\lambb2^{-\frac{1}{2}}\psistar{e_{l}}$
and $\lambb2^{-\frac{1}{2}}\psi_{e_{l}},$ respectively. After this
transformation, using the Pfaffian formula (\ref{eq: multipsi_pfaffian})
we get 
\[
\Pf[\hat{f}_{ij}]_{1\leq i,j\leq N}=\prod_{i=1}^{\hat{q}}\varphi'(\hat{z}_{i})^{-\frac{1}{2}}\left(\frac{i}{2}\right)^{s}\frac{\ccor{\psi_{\hat{z}_{1}}\dots\psi_{\hat{z}_{\hat{q}}}\psi_{\hat{z}_{\hat{q}+1}}^{\flat}\dots\psi_{\hat{z}_{q}}^{\flat}\psistar{e_{1}}\psi_{e_{1}}\dots\psistar{e_{s}}\psi_{e_{s}}\sigma_{\cvr}}_{\Omega}}{\ccor{\sigma_{\cvr}}}.
\]
Similarly, by Pfaffian formula (\ref{eq: multipsi_pfaffian}) and
the linearity of the Pfaffian, we have 
\[
\Pf[f_{ij}]_{1\leq i,j\leq q}=\prod_{i=1}^{\hat{q}}\varphi'(\hat{z}_{i})^{-\frac{1}{2}}\ccor{\psi_{\hat{z}_{1}}\dots\psi_{\hat{z}_{\hat{q}}}\psi_{\varphi(z_{\hat{q}+1})}^{\flat}\dots\psi_{\varphi(z_{q})}^{\flat}}_{\Omega}.
\]
 It remains to recall Definitions \ref{def: energy} and \ref{def: bcond_corr_continuous}
to conclude the proof.

To extend the result to the boundary conditions $\bcond$, we use
(\ref{eq: FW_correlation}) and notice that by the above result, we
have 
\begin{multline*}
\E_{\Od,\bcd}(\Op(\sigma,\en))=\frac{\sum_{S}\alpha_{S}\E_{\Od,\bcdd}(\Op(\sigma,\en)\sigma_{S})}{\sum_{S}\alpha_{S}\E_{\Od,\bcdd}(\sigma_{S})}=\frac{\sum_{S}\alpha_{S}\ccor{\Op(\sigma,\en)\sigma_{S}}_{\Omega,\tilde{\bcond}}+o(1)}{\sum_{S}\alpha_{S}\ccor{\sigma_{S}}_{\Omega,\tilde{\bcond}}+o(1)}\\
=\ccor{\Op(\sigma,\en)}_{\Omega,\tilde{\bcond}}+o(1),
\end{multline*}
where we used Lemma \ref{lem: bcond_well_defined} to ensure that
the sum in the denominator does not vanish, and Definition \ref{def: bcond_corr_continuous}
to conclude.

Similarly, the ratios of partition functions can be rewritten as 
\[
\frac{Z(\bcond_{1})}{Z(\bcond_{2})}=\frac{\sum_{S}\alpha_{S}^{\bcond_{1}}\E_{\tilde{\bcond}}(\sigma_{S})}{\sum_{S}\alpha_{S}^{\bcond_{2}}\E_{\tilde{\bcond}}(\sigma_{S})}=\frac{\sum_{S}\alpha_{S}^{\bcond_{1}}\ccor{\sigma_{S}}_{\tilde{\bcond}}+o(1)}{\sum_{S}\alpha_{S}^{\bcond_{2}}\ccor{\sigma_{S}}_{\tilde{\bcond}}+o(1)}=\frac{\sum_{S}\alpha_{S}^{\bcond_{1}}\ccor{\sigma_{S}}_{\tilde{\bcond}}}{\sum_{S}\alpha_{S}^{\bcond_{2}}\ccor{\sigma_{S}}_{\tilde{\bcond}}}+o(1).
\]
\end{proof}
\newpage{}

\section{Fusion rules}

\label{sec: ccor_fusion}

\subsection{Statements of the fusion rules}

\label{subsec: fusion-statements}

Our next topic is \emph{fusion rules, }also known as \emph{operator
product expansions} (OPE), for the (continuous) correlation functions
$\ccor{\Op}_{\Omega}.$ These are asymptotic expansions of the correlation
functions as some of the marked points tend to each other. To formulate
the results concisely, we will adopt a short-hand notation and write
the expansions in terms of the fields, or symbols,\emph{ $\psi_{z}^{\any}$,
$\sigma_{v}$, $\mu_{u}$, $\en_{e}$ }etc. The meaning of these expansions
is that they hold \emph{inside correlations} with any other fields,
with a postulated linearity of correlations: 
\[
\CorrO{(\alpha\Op_{1}+\beta\Op_{2})\Op}\stackrel{\text{def}}{=}\alpha\CorrO{\Op_{1}\Op}+\beta\CorrO{\Op_{2}\Op}.
\]
For example, the fusion rule 
\begin{equation}
\sigma_{\wone}\sigma_{\wtwo}=|\wone-\wtwo|^{-\frac{1}{4}}\left(1+\tfrac{1}{2}\en_{\wtwo}\cdot|\wone-\wtwo|+o(|\wone-\wtwo|)\right),\quad\wone\to\wtwo,\label{eq: sigma-sigma-OPE-example}
\end{equation}
by definition, means that 
\[
\CorrO{\Op\sigma_{\wone}\sigma_{\wtwo}}=|\wone-\wtwo|^{-\frac{1}{4}}\left(\CorrO{\Op}+\tfrac{1}{2}\CorrO{\Op\en_{\wtwo}}\cdot|\wone-\wtwo|+o(|\wone-\wtwo|)\right),
\]
where $\Omega$ is an arbitrary finitely connected domain, and $\Op=\Op(\psi^{\any},\en,\mu,\sigma)$,
as before, stands for \emph{any} expression of the form 
\[
\psi_{z_{1}}^{\any}\ldots\psi_{z_{k}}^{\any}\en_{e_{1}}\ldots\en_{e_{s}}\mu_{u_{1}}\ldots\mu_{u_{m}}\sigma_{v_{1}}\ldots\sigma_{v_{n}},
\]
where the points $z_{1}\ldots,z_{k},e_{1},\ldots,e_{s},\ldots,u_{1},\ldots,u_{m},v_{1},\ldots,v_{n}$
are distinct from each other and from $\wt$. Formally, to translate
a fusion rule for fields into a statement about correlations, we consider
the correlations of the two sides of the rule with an arbitrary $\Op$
and expand everything by linearity, taking the scalars and the operators
of derivation out of the correlations. Note that the factors $|\wo-\wt|^{-\frac{1}{4}}$
and $|\wo-\wt|$ in (\ref{eq: sigma-sigma-OPE-example}) can be easily
reconstructed from the scaling exponents $\frac{1}{8},0,1$ of the
fields $\sigma,1,\en$. This allows to shorten (\ref{eq: sigma-sigma-OPE-example})
even further to 
\[
\sigma_{\wo}\sigma_{\wt}=1+\tfrac{1}{2}\en_{\wt}+\ldots\,,\quad\wo\to\wt,
\]
the notation commonly used in the physics literature. The goal of
this section is to prove the fusion rules listed in Table~\ref{tab: Fusion-rules(OPE)},
where the latter notation is used for the shortness.

\begin{table}
\begin{tabular}{|c||c||c||c||c||c|}
\hline 
 & $\psi_{\wtwo}$  & $\psistar{\wtwo}$  & $\en_{\wtwo}$  & $\mu_{\wtwo}$  & $\sigma_{\wtwo}$\tabularnewline
\hline 
\hline 
$\psi_{\wone}$  & $2+\ldots$  & $-2i\en_{\wtwo}+\ldots$  & $i\psistar{\wtwo}+\ldots$  & $\lamb(\sigma_{\wtwo}+4\pa_{\wtwo}\sigma_{\wtwo}+\ldots)$  & $\lambb(\mu_{\wtwo}+4\pa_{\wtwo}\mu_{\wtwo}+\ldots)$\tabularnewline
\hline 
\hline 
$\psistar{\wone}$  & $2i\en_{\wtwo}+\ldots$  & $2+\ldots$  & $-i\psi_{\wtwo}+\ldots$  & $\lambb(\sigma_{\wtwo}+4\dbar_{\wtwo}\sigma_{\wtwo}+\ldots)$  & $\lambda(\mu_{\wtwo}+4\dbar_{\wtwo}\mu_{\wtwo}+\ldots)$\tabularnewline
\hline 
\hline 
$\en_{\wone}$  &  &  & $1+\ldots$  & $-\tfrac{1}{2}\mu_{\wtwo}+\ldots$  & $\tfrac{1}{2}\sigma_{\wtwo}+\ldots$\tabularnewline
\hline 
\hline 
$\mu_{\wone}$  &  &  &  & $1-\tfrac{1}{2}\en_{\wtwo}+\ldots$  & $\psi_{w}^{\eta_{\wone\wtwo}}+\ldots$\tabularnewline
\hline 
\hline 
$\sigma_{\wone}$  &  &  &  &  & $1+\tfrac{1}{2}\en_{\wtwo}+\ldots$\tabularnewline
\hline 
\hline 
 &  &  &  &  & \tabularnewline
\hline 
\end{tabular}

\bigskip{}

\caption{Fusion rules (OPE) for the fields $\psi,\protect\psistar{},\protect\en,\mu,\sigma$.
The empty cells should be filled symmetrically (recall that $\protect\en,\mu,\sigma$
``commute'' with each other and with ``anti-commuting'' fermions
$\psi,\protect\psistar{\ }$). We refer the reader to Theorems~\ref{thm: fusion rules-1}\textendash \ref{thm: fusion rules-3}
for the exact statements of these asymptotics as $\protect\wone\to\protect\wtwo$.\label{tab: Fusion-rules(OPE)}}
\end{table}

\medskip{}

We split these statements into three separate theorems as follows. 
\begin{thm}
\label{thm: fusion rules-1}The following asymptotics hold, as $\wone\to\wtwo$
: 
\begin{align}
\psi_{\wone}\psi_{\wtwo} & \ =\ 2(\wone-\wtwo)^{-1}+O(\wone-\wtwo),\label{eq: fuse_psi_psi}\\
\psi_{\wone}\psistar{\wtwo} & \ =\ -2i\en_{\wtwo}+O(\wone-\wtwo),\label{eq: fuse_psi_psistar}\\
\psi_{\wone}\en_{\wtwo} & \ =\ i\psistar{\wtwo}\cdot(\wone-\wtwo)^{-1}+O(1),\label{eq: fuse_psi_en}\\
\en_{\wone}\en_{\wtwo} & \ =\ |\wone-\wtwo|^{-2}+O(1).\label{eq: fuse_en_en}
\end{align}
The similar rules for $\psistar{\wtwo}\psistar{\wone}$, $\psistar{\wtwo}\psi_{\wone}$
and $\psistar{\wtwo}\en$ can be obtained from (\ref{eq: fuse_psi_psi})\textendash (\ref{eq: fuse_psi_en})
by the formal complex conjugation of the right-hand sides (which includes
replacing $\overline{\psistar{\wtwo}}$ by $\psi_{\wtwo}$ in~(\ref{eq: fuse_psi_en})). 
\end{thm}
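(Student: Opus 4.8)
The plan is to reduce every rule to two ingredients: the short-distance expansions of the two-point functions from Lemma~\ref{lem: spinor_hol_antyhol}, namely $f(z_{1},z_{2})=2(z_{2}-z_{1})^{-1}+O(z_{2}-z_{1})$ and $\fdag(z_{1},z_{2})=\fdag(z_{1},z_{1})+O(z_{2}-z_{1})$, and the Pfaffian representation (\ref{eq: psi_mu_sigma_pfaff})--(\ref{eq: multipsi_pfaffian}) together with the definition $\en_{w}=\tfrac{i}{2}\psi_{w}\psistar{w}$ (Definition~\ref{def: energy}). By the postulated linearity it suffices to prove each identity after pairing with an arbitrary $\Op=\Op(\psi^{\any},\en,\mu,\sigma)$ whose marked points stay at a definite distance from $\wtwo$; since the collision is local and away from the ramification points of $\cvr$, I would fix a sheet and treat all the observables as genuine holomorphic functions of $\wone$ near $\wtwo$.

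For (\ref{eq: fuse_psi_psi}) I would set $\Phi(\wone,\wtwo):=\CorrO{\Op\,\psi_{\wone}\psi_{\wtwo}}$ and expand the Pfaffian (\ref{eq: multipsi_pfaffian}) along the row of $\psi_{\wone}$. Using (\ref{eq: dte_psi_psi}), the only entry singular in $\wone$ is $f(\wtwo,\wone)$, all others being holomorphic at $\wone=\wtwo$; hence $\Phi(\wone,\wtwo)=\tfrac{2}{\wone-\wtwo}\CorrO{\Op}+R(\wone)$ with $R$ holomorphic at $\wtwo$. Because the $\psi$'s anti-commute, $\Phi$ is antisymmetric in $\wone\leftrightarrow\wtwo$, and so is the polar part, so $R$ is antisymmetric; by Hartogs' theorem (exactly as in the proof of Lemma~\ref{lem: spinor_hol_antyhol}) $R$ extends jointly holomorphically across the diagonal, and an antisymmetric holomorphic function vanishes there, giving $R=O(\wone-\wtwo)$. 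For (\ref{eq: fuse_psi_psistar}) every entry of the Pfaffian for $\CorrO{\Op\,\psi_{\wone}\psistar{\wtwo}}$ is holomorphic in $\wone$ near $\wtwo$ (in particular $\fdag(\wtwo,\wone)$, see (\ref{eq: dte_psi_psistar})), so the Pfaffian converges as $\wone\to\wtwo$, with error $O(\wone-\wtwo)$, to the one carrying $\psi_{\wtwo}\psistar{\wtwo}$, i.e. to $\CorrO{\Op\,\psi_{\wtwo}\psistar{\wtwo}}=\CorrO{\Op\,(-2i\en_{\wtwo})}$ by Definition~\ref{def: energy}.

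For (\ref{eq: fuse_psi_en}) I would substitute $\en_{\wtwo}=\tfrac{i}{2}\psi_{\wtwo}\psistar{\wtwo}$ and again expand $\CorrO{\Op\,\psi_{\wone}\psi_{\wtwo}\psistar{\wtwo}}$ along the $\psi_{\wone}$ row: the unique pole in $\wone$ is produced by the pairing $\psi_{\wone}$--$\psi_{\wtwo}$, with residue $\tfrac{2}{\wone-\wtwo}\CorrO{\Op\,\psistar{\wtwo}}$, the remainder being holomorphic in $\wone$; multiplying by $\tfrac{i}{2}$ yields $i\,\CorrO{\Op\,\psistar{\wtwo}}(\wone-\wtwo)^{-1}+O(1)$. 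The delicate case is (\ref{eq: fuse_en_en}): writing $\en_{\wone}\en_{\wtwo}=-\tfrac14\psi_{\wone}\psistar{\wone}\psi_{\wtwo}\psistar{\wtwo}$ and expanding along the $\psi_{\wone}$ row gives $\CorrO{\Op\,\psi_{\wone}\psistar{\wone}\psi_{\wtwo}\psistar{\wtwo}}=\pm f(\wtwo,\wone)\,\CorrO{\Op\,\psistar{\wone}\psistar{\wtwo}}+(\text{holomorphic in }\wone)$, up to the Pfaffian matching sign. A naive count leaves an apparent simple pole; the resolution is that the complementary factor, by the complex conjugate of (\ref{eq: fuse_psi_psi}) (obtained from (\ref{eq: corr_conj})), satisfies $\CorrO{\Op\,\psistar{\wone}\psistar{\wtwo}}=\tfrac{2}{\overline{\wone-\wtwo}}\CorrO{\Op}+O(\overline{\wone-\wtwo})$, so that the cross terms $f(\wtwo,\wone)\cdot O(\overline{\wone-\wtwo})$ and $O(\wone-\wtwo)\cdot\overline{f(\wtwo,\wone)}$ are $O(1)$, while the product of the two poles gives $\tfrac{4}{|\wone-\wtwo|^{2}}\CorrO{\Op}$; the matching sign $-1$ of the pairing $\{(\psi_{\wone},\psi_{\wtwo}),(\psistar{\wone},\psistar{\wtwo})\}$ combines with the prefactor $-\tfrac14$ to produce $+|\wone-\wtwo|^{-2}\CorrO{\Op}+O(1)$.

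Finally, the rules with $\psistar{\wone}$ in place of $\psi_{\wone}$ follow by applying the conjugation identity (\ref{eq: corr_conj}) of Proposition~\ref{prop: cont_corr_conjugation} to the statements just proved. The main technical burden throughout is the careful bookkeeping of Pfaffian cofactor signs, and in (\ref{eq: fuse_en_en}) the cancellation of the apparent single poles via the subleading $O(\wone-\wtwo)$ and $O(\overline{\wone-\wtwo})$ behavior of the complementary sub-Pfaffian; I expect this last point to be the principal obstacle, the rest being a routine, if sign-sensitive, reorganization of Pfaffians.
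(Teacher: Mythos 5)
Your treatment of (\ref{eq: fuse_psi_psi}), (\ref{eq: fuse_psi_psistar}) and (\ref{eq: fuse_psi_en}) \textemdash{} reduction to a correlation with arbitrary $\Op$, row expansion of the Pfaffian (\ref{eq: multipsi_pfaffian}) along $\psi_{\wone}$, the two-point expansions (\ref{eq: f_fdag_expansion}), and the upgrade of $O(1)$ to $O(\wone-\wtwo)$ via holomorphicity in $\wone$ plus anti-symmetry \textemdash{} is correct and is essentially the paper's own argument, as is the passage to the conjugated rules via (\ref{eq: corr_conj}).

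The genuine gap is in (\ref{eq: fuse_en_en}). After you extract the $\psi_{\wone}$\textendash $\psi_{\wtwo}$ pairing, the remainder of the row expansion is \emph{not} holomorphic in $\wone$, and is not even $O(1)$ term by term: whenever $\psi_{\wone}$ is paired with a fermion $\psi_{z_{l}}^{\any}$ of $\Op$, the complementary Pfaffian still contains \emph{both} $\psistar{\wone}$ and $\psistar{\wtwo}$, and its entry $\ccor{\psistar{\wone}\psistar{\wtwo}\scvr}/\ccor{\scvr}=\overline{f(\wtwo,\wone)}\sim2(\bar{\wone}-\bar{\wtwo})^{-1}$ blows up, so each such term is of order $|\wone-\wtwo|^{-1}$ with a generically non-zero coefficient. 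These anti-holomorphic simple poles cancel only in the aggregate: up to sign and normalization, their total coefficient is $\sum_{l}\pm\,\ccor{\psi_{\wtwo}\psi_{z_{l}}^{\any}\scvr}\ccor{\Op_{\hat{l}}\,\psi_{\wtwo}\scvr}$, where $\Op_{\hat{l}}$ is $\Op$ with $\psi_{z_{l}}^{\any}$ removed \textemdash{} i.e., the expansion of a Pfaffian with two identical rows, equivalently the diagonal evaluation of the $O(\wone-\wtwo)$ error in (\ref{eq: fuse_psi_psi}). Your proposal supplies no argument for this cancellation: the ``apparent simple pole'' you do resolve is only the harmless cross term $f(\wtwo,\wone)\cdot O(\bar{\wone}-\bar{\wtwo})$ inside the extracted product. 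The paper sidesteps the issue by a symmetric inclusion\textendash exclusion over the \emph{two} singular entries $\ccor{\psi_{\wone}\psi_{\wtwo}\scvr}$ and $\ccor{\psistar{\wone}\psistar{\wtwo}\scvr}$: it writes the correlation as (two-point $\psi\psi$) times $\ccor{\psistar{\wone}\psistar{\wtwo}\Op(\psi^{\any})\scvr}$, plus (two-point $\psi^{\star}\psi^{\star}$) times $\ccor{\psi_{\wone}\psi_{\wtwo}\Op(\psi^{\any})\scvr}$, minus the doubly counted product of the two two-point functions times $\ccor{\Op(\psi^{\any})\scvr}$, plus a remainder consisting of matchings avoiding both singular entries, which is manifestly $O(1)$; applying the already proven rule (\ref{eq: fuse_psi_psi}) \emph{with general $\Op$} \textemdash{} whose refined $O(\wone-\wtwo)$ error is exactly what encodes your missing cancellation \textemdash{} to both complementary factors then yields $|\wone-\wtwo|^{-2}\ccor{\Op}+O(1)$. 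Your argument can be repaired along these lines, but as written the key cancellation is asserted rather than proven.
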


\begin{thm}
\label{thm: fusion rules-2} The following asymptotics hold, as $\wone\to\wtwo$:
\begin{align}
\psi_{\wone}\mu_{\wt} & \ =\ \lamb(\wone-\wtwo)^{-\frac{1}{2}}\left(\sigma_{\wt}+4\partial_{\wt}\sigma_{\wt}\cdot(\wo-\wt)+O((\wone-\wt)^{2})\right),\label{eq: fuse_psi_mu}\\
\psi_{\wone}\sigma_{\wtwo} & \ =\ \lambb(\wone-\wtwo)^{-\frac{1}{2}}\left(\mu_{\wtwo}+4\partial_{\wt}\mu_{\wt}\cdot(\wo-\wt)+O((\wone-\wtwo)^{2})\right),\label{eq: fuse_psi_sigma}\\
\en_{\wone}\mu_{\wtwo} & \ =\ -\tfrac{1}{2}\mu_{\wt}\cdot|\wo-\wt|^{-1}+O(1),\label{eq: fuse_en_mu}\\
\en_{\wone}\sigma_{\wtwo} & \ =\ \tfrac{1}{2}\sigma_{\wt}\cdot|\wo-\wt|^{-1}+O(1).\label{eq: fuse_en_sigma}
\end{align}
The similar rules for $\psistar{\wtwo}\mu_{\wone}$ and $\psistar{\wtwo}\sigma_{\wone}$
can be obtained from (\ref{eq: fuse_psi_mu})\textendash (\ref{eq: fuse_psi_sigma})
by the formal complex conjugation of the right-hand sides (which includes
replacing $\partial_{\wt}$ by $\dbar_{\wt}$). 
\end{thm}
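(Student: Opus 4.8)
\textbf{Proof proposal for Theorem \ref{thm: fusion rules-2}.}

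The plan is to reduce every fusion rule to the corresponding short-distance behavior of the continuous fermionic observables established in Section \ref{sec: ccor}, exploiting the Pfaffian structure of all correlations. Throughout, I would work \emph{inside a correlation with an arbitrary} $\Op=\Op(\psi^{\diamond},\en,\mu,\sigma)$ whose marked points stay away from $\wt$, so that the statements become asymptotic identities for the scalar quantities $\ccor{\Op\,\psi_{\wone}\mu_{\wt}}$, etc. The key observation is that by Definition \ref{def: Disorders}, each disorder $\mu_{\wt}=\psi_{\wt}^{\sharp}\sigma_{\wt}$ is extracted as a limit $\lim_{a\to\wt}\lamb(a-\wt)^{\frac12}\psi_{a}$ acting on $\sigma_{\wt}$, and by Definition \ref{def: energy} each energy $\en_{\wone}=\tfrac{i}{2}\psi_{\wone}\psi_{\wone}^{\star}$; thus all four rules are consequences of the behavior of the two-point building blocks $f_{\cvr}^{[\eta]}(a,z)$, $f(z_1,z_2)$, $\fdag(z_1,z_2)$ and their $\sharp$-residues as two arguments collide.

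First I would treat (\ref{eq: fuse_psi_sigma}), the genuinely new rule. Using Remark \ref{rem: psi_to_psibar} and the Pfaffian formula (\ref{eq: psi_mu_sigma_pfaff}), the only term in $\ccor{\Op\,\psi_{\wone}\sigma_{\wtwo}}$ that becomes singular as $\wone\to\wtwo$ is the one pairing $\psi_{\wone}$ with the new spin $\sigma_{\wtwo}$; the point $\wtwo$ now becomes a ramification point of the enlarged cover $\cvr\cdot\double{\wtwo}$. Concretely, $\ccor{\Op\,\psi_{\wone}\sigma_{\wtwo}}/\ccor{\Op\,\sigma_{\wtwo}}$ is governed by the spinor $f_{\cvr\cdot\double{\wtwo}}^{[\sharp]}(\wtwo,\wone)$, whose leading asymptotics is exactly (\ref{eq: f_sharp_asymp}): $f^{[\sharp]}(\wtwo,\wone)=\lambda(\wone-\wtwo)^{-\frac12}+O((\wone-\wtwo)^{\frac12})$. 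Matching this against the definition $\mu_{\wtwo}=\psi^{\sharp}_{\wtwo}\sigma_{\wtwo}$ reproduces the leading term $\lambb(\wone-\wtwo)^{-\frac12}\mu_{\wtwo}$, and the subleading term is precisely the coefficient $\coefA_{\Omega,\cvr\cdot\double{\wtwo}}(\wtwo)$ from Definition \ref{def: coefA}, i.e.\ the $\partial_{\wtwo}\mu_{\wtwo}$ contribution; one reads off the constant $4$ from expansion (\ref{eq: def_A}) combined with the logarithmic-derivative identity in Theorem \ref{thm: Conv_both_near_spin}. The rule (\ref{eq: fuse_psi_mu}) follows by the same computation with the roles of a spin and a disorder interchanged (i.e.\ applying the $\sharp$-extraction to $\wone$ first), using the anti-symmetry $f^{[\sharp,\sharp]}(w_1,w_2)=-f^{[\sharp,\sharp]}(w_2,w_1)$ to fix signs and the conjugate phase $\lamb$ versus $\lambb$.

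For the energy rules (\ref{eq: fuse_en_mu})--(\ref{eq: fuse_en_sigma}) I would substitute $\en_{\wone}=\tfrac{i}{2}\psi_{\wone}\psi_{\wone}^{\star}$ and expand by the Pfaffian. As $\wone\to\wtwo$, the dominant contribution comes from the term in which \emph{both} $\psi_{\wone}$ and $\psi_{\wone}^{\star}$ pair with the singular point $\wtwo$; using (\ref{eq: psi_to_mu_limits}) this contributes $\tfrac{i}{2}f^{[\sharp]}(\wtwo,\wone)\cdot\overline{f^{[\sharp]}(\wtwo,\wone)}=\tfrac{i}{2}|f^{[\sharp]}(\wtwo,\wone)|^{2}$, and by (\ref{eq: f_sharp_asymp}) this equals $\tfrac{i}{2}|\wone-\wtwo|^{-1}+O(1)$; tracking the factors $i/2$, $\lamb\lambb=1$ and the sign from $\tfrac{i}{2}\psi\psi^{\star}$ produces the stated coefficients $\pm\tfrac12$ (the opposite signs for $\mu$ versus $\sigma$ arising from the extra $\sharp$-residue in the disorder case). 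This is the same mechanism by which the diagonal $\fdag(z,z)\in i\R$ entered Theorem \ref{thm: Convergence_bulk_near}.

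The main obstacle I anticipate is not the leading singular terms but the \emph{identification of the subleading regular terms} as genuine correlation-function insertions --- in particular proving that the order-$(\wone-\wtwo)$ coefficient in (\ref{eq: fuse_psi_sigma}) is $4\,\ccor{\Op\,\partial_{\wt}\mu_{\wt}}$ rather than some other second-order data of the observable. This requires showing that $\partial_{\wt}$ of the whole correlation commutes with the $\sharp$-extraction limit and that the coefficient $\coefA_{\Omega,\cvr\cdot\double{\wtwo}}(\wtwo)$ is exactly the logarithmic derivative of the spin correlation in $\wtwo$, which is the content of Theorem \ref{thm: Conv_both_near_spin} and the definition (\ref{eq: def_spincorr_up_to_constant}) of $\ccor{\sigma\cdots}$ via the form $\formL$. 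I would handle this by differentiating the integral representation (\ref{eq: mu_resudues}) of the disorder insertion under the contour integral, using Remark \ref{rem: Hartogs} to justify that $f^{[\eta]}(a,z)$ is jointly real-analytic and hence that the Taylor expansion in $(\wone-\wtwo)$ can be performed term by term, with the contour picking out successive Laurent coefficients exactly as in the proof of Lemma \ref{lem: f_star_holom}. A secondary bookkeeping difficulty is the consistent choice of the square-root branches and the resulting $\lamb$ versus $\lambb$ prefactors; these I would fix once and for all using the sign conventions recorded in Remark \ref{rem: Signs} and verify compatibility with the complex-conjugation symmetry (\ref{eq: corr_conj}), which immediately yields the $\psistar{}$ versions by the formal conjugation asserted in the theorem.
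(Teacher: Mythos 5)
Your reduction of the leading terms to Definition \ref{def: Disorders} and to (\ref{eq: f_sharp_asymp}) is fine, and you correctly flag the identification of the subleading coefficients as the main obstacle; but your proposed resolution of that obstacle does not work, and this is a genuine gap. The coefficient $\coefA$ of Definition \ref{def: coefA} and the form $\formL$ identify the second expansion coefficient with $2\partial\log$ of the correlation \emph{only for pure spin correlations} $\Op=\Op(\sigma)$ --- indeed, for such $\Op$ this identity is the very \emph{definition} (\ref{eq: def_spincorr_up_to_constant}) of $\ccor{\sigma_{v_{1}}\dots\sigma_{v_{n}}}$. In the presence of a general $\Op$ containing fermions, energies and disorders, the statement that the order-$(\wone-\wtwo)$ coefficient of $\ccor{\psi_{\wone}\mu_{\wt}\Op}$ equals $4\partial_{\wt}\ccor{\sigma_{\wt}\Op}$ is a new identity that must be proved, and your continuous argument (Hartogs real-analyticity, term-by-term Taylor expansion, contours extracting Laurent coefficients as in Lemma \ref{lem: f_star_holom}) only shows that this coefficient \emph{exists} and depends analytically on the data; it cannot identify it with the $\wt$-derivative, because varying $\wt$ moves a \emph{branch point} of the Riemann boundary value problem, which is not encoded in Laurent coefficients of the observable on a fixed cover. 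The paper closes exactly this gap for (\ref{eq: fuse_psi_mu}) by going back to the lattice: it re-runs the entirely local argument of Theorem \ref{thm: Conv_both_near_spin} for the discrete observable with the extra insertions $\Op$, obtaining a discrete logarithmic derivative of $\E[\sigma_{w}\Op]$ governed by the second expansion coefficient, and then sums and passes to the limit. Moreover, you conflate matters further for (\ref{eq: fuse_psi_sigma}): its subleading term is $4\partial_{\wt}\mu_{\wt}$, a derivative of a \emph{disorder} insertion, which is not $\coefA_{\Omega,\cvr\cdot\double{\wtwo}}(\wtwo)$ (a spin-derivative object) under any definition in the paper. The paper derives (\ref{eq: fuse_psi_sigma}) \emph{from} (\ref{eq: fuse_psi_mu}) by a double-contour exchange in which the satellite integral $\oint_{(\wt)}\oint_{(\wo)}$ vanishes; your plan reverses this logical order, proving (\ref{eq: fuse_psi_sigma}) first and getting (\ref{eq: fuse_psi_mu}) ``with roles interchanged,'' which leaves both subleading identifications unproven.

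Your treatment of the energy rules also contains a concrete error. Expanding $\ccor{\en_{\wone}\mu_{\wtwo}\Op}=\frac{i}{2}\ccor{\psi_{\wone}\psistar{\wone}\psi_{\wtwo}^{\sharp}\sigma_{\wtwo}\Op}$ by the Pfaffian, the quantity $\frac{i}{2}f^{[\sharp]}(\wtwo,\wone)\overline{f^{[\sharp]}(\wtwo,\wone)}$ is not a Pfaffian term: it would pair the single disorder slot $\psi_{\wtwo}^{\sharp}$ with \emph{both} $\psi_{\wone}$ and $\psistar{\wone}$. In the actual expansion several terms are singular at the same order $|\wone-\wtwo|^{-1}$ --- the diagonal term $\frac{i}{2}\fdag_{\double{\wtwo}\cdot\cvr}(\wone,\wone)\cdot\Pf(\mathrm{rest})$, but also cross terms in which $\psi_{\wone}$ (or $\psistar{\wone}$) pairs with $\psi_{\wtwo}^{\sharp}$ while the other pairs with a fermion of $\Op$, since those latter entries are themselves spinors ramified at $\wtwo$ and blow up like $(\wone-\wtwo)^{-\frac{1}{2}}$. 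So your ``single dominant term'' computation would get the coefficient wrong or miss compensations. The paper avoids this bookkeeping entirely: it writes $\ccor{\en_{\wone}\mu_{\wtwo}\Op}$ as a single contour integral whose integrand $(z-\wtwo)^{\frac{1}{2}}\ccor{\psi_{z}\psistar{\wone}\mu_{\wtwo}\Op}$ is holomorphic at $z=\wone$, moves the contour away from the collapsing points, and then applies the already-established rules $\psistar{\wone}\mu_{\wtwo}=\lambb\sigma_{\wtwo}+\ldots$ and $\psi_{z}\sigma_{\wtwo}=\lambb\mu_{\wtwo}+\ldots$ sequentially under the integral, which packages all the singular terms at once and makes the remainder manifestly $O(1)$. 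If you want to salvage your route, you must either carry out the full Pfaffian cancellation analysis or adopt the paper's sequential contour mechanism; and for the derivative terms you need either the discrete local argument or an analogue of the branch-point variation results (cf. Propositions \ref{prop: u_to_v-1} and \ref{prop: v_to_v}) extended to observables with arbitrary insertions.
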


Let 
\begin{equation}
\eta_{uv}=\lambb|u-v|^{\frac{1}{2}}(u-v)^{-\frac{1}{2}},\label{eq: eta_uv}
\end{equation}
compare with the discrete notation (\ref{eq: def_eta_discrete}).

\begin{thm} \label{thm: fusion rules-3} The following asymptotics
hold, as $\wone\to\wtwo$: 
\begin{align}
\mu_{\wone}\sigma_{\wtwo} & \ =\ |\wone-\wtwo|^{\frac{1}{4}}(\psi_{\wtwo}^{\eta_{\wone w}}+O(|\wone-\wtwo|)),\label{eq: fuse_mu_sigma}\\
\sigma_{\wone}\sigma_{\wtwo} & \ =\ |\wone-\wtwo|^{-\frac{1}{4}}\left(1+\tfrac{1}{2}\en_{w}\cdot|\wone-\wtwo|+o(|\wone-\wtwo|)\right),\label{eq: fuse_sigma_sigma}\\
\mu_{\wone}\mu_{\wtwo} & \ =\ |\wone-\wtwo|^{-\frac{1}{4}}\left(1-\tfrac{1}{2}\en_{\wtwo}\cdot|\wone-\wtwo|+o(|\wone-\wtwo|)\right).\label{eq: fuse_mu_mu}
\end{align}
\end{thm}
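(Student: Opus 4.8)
The plan is to prove the three fusion rules in Theorem~\ref{thm: fusion rules-3} by the same strategy used throughout the paper: reduce each statement to an asymptotic assertion about the continuous fermionic observables $f_{\cvr}^{[\any]}$ and the coefficient $\coefA_{\cvr}(v)$, and then extract the expansions from the defining boundary value problems, using the analytic characterization (Definition~\ref{def: feta} and Proposition~\ref{prop: Uniqueness_continuous}) rather than any explicit formula. The key tool will be that as two ramification points of $\cvr$ collide, the double cover degenerates: $\cvr(\wone,\wtwo,v_1,\dots,v_n)$ converges to the \emph{trivial} cover over the merged point together with $\cvr(v_1,\dots,v_n)$, so the spinors $f_{\cvr}^{[\sharp]}(\wone,\cdot)$ develop a controlled local behavior near $\wtwo$.

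First I would establish (\ref{eq: fuse_mu_sigma}), which is the most informative rule and underlies the normalization in Proposition~\ref{prop: multiplicative-normalisation}. Using Definition~\ref{def: Disorders}, the correlation $\ccor{\mu_{\wone}\sigma_{\wtwo}\Op}$ is read off from $f_{\cvr}^{[\sharp]}(\wone,z)$ where $\cvr$ is ramified at both $\wone$ and $\wtwo$. When $\wone\to\wtwo$, I would analyze the local structure of this spinor near the two nearby branch points: the dominant contribution comes from the factor $\bigl((z-\wone)(z-\wtwo)\bigr)^{-\frac12}$, which on the double cover degenerates to a genuine \emph{meromorphic} object, and its regular part near $\wtwo$ is exactly what defines a holomorphic fermion $\psi_{\wtwo}^{\eta_{\wone\wtwo}}$ with the spinor phase $\eta_{\wone\wtwo}$ from~(\ref{eq: eta_uv}). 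The power $|\wone-\wtwo|^{\frac14}$ is forced by conformal covariance (the weights $\frac18+\frac18$ of $\mu,\sigma$ against the weight $\frac12$ of $\psi$) and by matching with the local model on $\H$ from Example~\ref{exa: f_half-plane_explicit}(ii). I would make this precise by writing $f_{\cvr}^{[\sharp]}(\wone,z)$ against the explicit half-plane spinor, subtracting it, and invoking the uniqueness Proposition~\ref{prop: Uniqueness_continuous} to control the error.

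Next I would treat (\ref{eq: fuse_sigma_sigma}) and (\ref{eq: fuse_mu_mu}) together, since they are related by the order–disorder duality $\sigma\leftrightarrow\mu$ that merely flips a sign in the energy term. For the spin rule, the leading $|\wone-\wtwo|^{-\frac14}$ follows directly from the normalization Proposition~\ref{prop: multiplicative-normalisation} (which is \emph{defined} so that (\ref{eq: spins_coherent}) holds), so the content is the \emph{next} order term $\frac12\en_{\wtwo}$. I would obtain this by differentiating the relation $\formL=\re(\sum_i\coefA_{\cvr}(v_i)\,dv_i)$ from Definition~\ref{def: coefA}: the logarithmic derivative of $\ccor{\sigma_{\wone}\sigma_{\wtwo}\Op}$ with respect to $\wone$ is governed by $\coefA_{\cvr}(\wone)$, whose expansion as $\wone\to\wtwo$ I would compute from the collision of branch points in $f_{\cvr}^{[\sharp]}(\wone,z)$. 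Matching the subleading coefficient against $\ccor{\en_{\wtwo}\Op}=\frac{i}{2}\ccor{\psi_{\wtwo}\psistar{\wtwo}\Op}$ (Definition~\ref{def: energy}) produces the factor $\frac12$. The disorder rule (\ref{eq: fuse_mu_mu}) follows by the identical computation with the opposite sign, consistent with $\en_e=\sqrt2(\frac{1}{\sqrt2}-\mu\mu)$ versus $\en_e=\sqrt2(\sigma\sigma-\frac{1}{\sqrt2})$.

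The main obstacle I anticipate is the precise control of the \emph{subleading} coefficient $\coefA_{\cvr}(\wone)$ as the two ramification points merge, i.e. extracting the $\frac12\en_{\wtwo}$ term with the correct constant rather than merely the leading singular behavior. This requires understanding how the second Taylor coefficient in the expansion~(\ref{eq: def_A}) of $f_{\cvr}^{[\sharp]}(v,z)$ behaves under the degeneration of the double cover, which is a genuine two-scale analysis: one must separate the universal local contribution (captured by the half-plane or full-plane model) from the domain-dependent regular part, and show that only the former contributes to the fusion constant while the latter is absorbed into $\ccor{\en_{\wtwo}\Op}$. I expect to handle this by a contour-integral representation of $\coefA$ analogous to (\ref{eq: exchange_the_contours}) in Lemma~\ref{lem: f_star_holom}, deforming contours so as to isolate the residue at the colliding points, and by invoking Hartogs-type joint analyticity (Remark~\ref{rem: Hartogs}) to justify that the error terms are of the claimed order uniformly in $\Op$.
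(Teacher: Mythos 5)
Your treatment of (\ref{eq: fuse_mu_sigma}), and of (\ref{eq: fuse_sigma_sigma}) in the case where $\Op$ consists of spins only, is essentially the paper's route: the collision asymptotics of $f^{[\sharp]}_{\double{\wone,\wtwo}\cdot\cvr}(\wone,\cdot)$ (Proposition~\ref{prop: u_to_v-1}), the resulting expansion $\coefA_{\double{\wone,\wtwo}\cdot\cvr}(\wone)=-\tfrac{1}{4(\wone-\wtwo)}+\tfrac{1}{4}\eta_{\wone\wtwo}^{2}f^{\star}_{\cvr}(\wone,\wone)+o(1)$ obtained by contour-integral arguments (Proposition~\ref{prop: coefA_vary}), and integration of the log-derivative (Lemma~\ref{lem: spin_to_spin}), with $\tfrac{i}{2}f^{\star}_{\cvr}(\wtwo,\wtwo)=\ccor{\en_{\wtwo}\scvr}/\ccor{\scvr}$ producing the constant $\tfrac12$. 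Two caveats on this part: the normalization of Proposition~\ref{prop: multiplicative-normalisation} is not available in advance --- in the paper its existence is \emph{proved} by exactly this analysis, through the position-independence of the constant $C_n$ in Lemma~\ref{lem: spin_to_spin} --- so your argument must be arranged to establish it rather than assume it; and ``subtract the explicit local model and invoke uniqueness'' requires a compactness mechanism (the paper's Proposition~\ref{prop: unelegant_but_useful}, which rules out blow-up of the spinor family via the maximum principle for $h=\im\int f^{2}$): uniqueness identifies a sub-sequential limit but does not by itself give convergence or uniform error bounds, and Hartogs-type analyticity (Remark~\ref{rem: Hartogs}) only covers the non-degenerate regime.

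Two steps, however, contain genuine gaps. First, for (\ref{eq: fuse_sigma_sigma}) with general $\Op=\Op(\psi^{\any})\scvr$, your scheme controls only the factor $\ccor{\sigma_{\wone}\sigma_{\wtwo}\scvr}$; the full correlation is this factor times a Pfaffian whose entries are two-point observables on the cover $\double{\wone,\wtwo}\cdot\cvr$, and those entries themselves shift at order $|\wone-\wtwo|$. Matching the subleading coefficient as you describe would yield $\ccor{\en_{\wtwo}\scvr}/\ccor{\scvr}$ --- the energy inserted among the spins only --- instead of $\ccor{\en_{\wtwo}\Op}$. The missing ingredient is the perturbation statement (Proposition~\ref{prop: v_to_v}) $f^{[\eta]}_{\double{\wone,\wtwo}\cdot\cvr}(a,z)=f^{[\eta]}_{\cvr}(a,z)+|\wone-\wtwo|\,f^{[\theta]}_{\cvr}(\wone,z)+o(|\wone-\wtwo|)$ with $\theta=-\tfrac{i}{2}f^{[\eta]}_{\cvr}(a,\wone)$, together with the Pfaffian bookkeeping showing that these first-order corrections assemble into the $\psi_{\wtwo}\psistar{\wtwo}$-inserted Pfaffian one term short, the missing term being exactly compensated by the expansion of the spin factor. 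Second, deriving (\ref{eq: fuse_mu_mu}) ``by the identical computation with the opposite sign'' via order--disorder duality has no referent in this framework: in continuum $\mu_{u}$ is \emph{defined} as $\psi^{\sharp}_{u}\sigma_{u}$, i.e.\ by a contour integral of a fermion against a spin correlation, the standard boundary conditions are not invariant under $\sigma\leftrightarrow\mu$ (a Kramers--Wannier swap would also exchange wired and free arcs), and no continuum duality statement is proved anywhere in the paper. The paper instead obtains (\ref{eq: fuse_mu_mu}) from (\ref{eq: fuse_sigma_sigma}) by a genuinely separate computation: $\ccor{\mu_{\wone}\mu_{\wtwo}\Op}$ is written as a double contour integral of $\ccor{\psi_{\widehat{z}}\psi_{z}\sigma_{\wone}\sigma_{\wtwo}\Op}$ against an explicit algebraic kernel, which is anti-symmetrized (the satellite term vanishing by (\ref{eq: fuse_psi_psi})), expanded to first order in $\wone-\wtwo$, and evaluated by residues using the free-fermion rules (\ref{eq: fuse_psi_psi})--(\ref{eq: fuse_psi_en}); the relative minus sign in front of $\en_{\wtwo}$ emerges only at the end of this computation and must be produced, not postulated from the discrete identity (\ref{eq: en_spin_disorder}).
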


\begin{rem} The proofs given below ensure that, for a fixed number
of $\psi,\en,\mu,\sigma$ involved into $\Op$, all estimates in Theorems~\ref{thm: fusion rules-1}\textendash \ref{thm: fusion rules-3}
are uniform provided that $w$ and all marked points in $\Op$ are
in the bulk of $\Omega$ and away from each other. \end{rem}

Theorem~\ref{thm: fusion rules-1} covers the ``free fermion''
sector of the theory and is a straightforward corollary of our definitions
of the correlations functions, see Section~\ref{subsec: fusion-proofs-1}
for details. Theorem~\ref{thm: fusion rules-2} relates fermions
with disorders and spins, its proof is discussed in Section~\ref{subsec: fusion-proofs-2}.
Note that the first term in~(\ref{eq: fuse_psi_mu}) is essentially
Definition~\ref{def: Disorders} of correlations containing~$\mu_{u}$.
The second term in~(\ref{eq: fuse_psi_mu}), in absence of~$\Op$,
was used in~Section~\ref{sec: ccor} to define the spin correlations
$\CorrO{\sigma_{v_{1}}\ldots\sigma_{v_{n}}}$; see~(\ref{eq: def_spincorr_up_to_constant}).
To prove~(\ref{eq: fuse_psi_mu}) in presence of~$\Op$ we also
refer to discrete observables while the rest of Theorem~\ref{thm: fusion rules-2}
follows from standard considerations via the so-called satellite integral
computation. The most important statements (fusion of spins and/or
disorders leads back to the ``free fermion'' sector) are collected
in Theorem~\ref{thm: fusion rules-3}. In sharp contrast with Theorems~\ref{thm: fusion rules-1}
and~\ref{thm: fusion rules-2}, we do not have holomorphic (or anti-holomorphic)
fields in the left-hand side, thus more involved arguments should
be used. Note that we prove Theorem~\ref{thm: fusion rules-3} directly
``in continuum'' (see Sections~\ref{subsec: asymptotics-as-spins-collide}\textendash \ref{subsec: fusion-proofs-3}),
in particular justifying the existence of the coherent normalization
of spin correlations claimed in Proposition~\ref{prop: multiplicative-normalisation}.

\subsection{Proof of Theorem~\ref{thm: fusion rules-1}}

\label{subsec: fusion-proofs-1}

We start with general remarks about proofs of the fusion rules given
by Theorems~\ref{thm: fusion rules-1}\textendash \ref{thm: fusion rules-3}.
Below, we assume the domain $\Omega$ to be fixed and drop it from
the notation, so that $\ccor{\ldots}$ stands for $\CorrO{\ldots}$.
Further, in order to prove any of the asymptotics (\ref{eq: fuse_psi_psi})\textendash (\ref{eq: fuse_mu_sigma})
in full generality, it suffices to check that it holds in the correlation
with arbitrary $\Op=\Op(\psi,\psi^{\star},\sigma),$ rather than $\Op(\psi^{\any},\en,\mu,\sigma)$.
Indeed, to include $\psi^{\eta}$ and $\en$, just recall that $\psi_{z}^{\eta}=\bar{\eta}\psi_{z}+\eta\psi_{z}^{\star}$
and $\en_{e}=\frac{i}{2}\psi_{e}\psi_{e}^{\star}$, and to include
$\mu$, use the contour integral representation (\ref{eq: mu_resudues}).

We will denote 
\[
\sigma_{\cvr}:=\sigma_{v_{1}}\ldots\sigma_{v_{n}};\quad\Op(\psi^{\any}):=\psi_{z_{1}}^{\any}\dots\psi_{z_{k}}^{\any}.
\]
To prove a fusion rule for $\Op:=\Op(\psi^{\any})\scvr,$ we will
typically use the Pfaffian expansion~(\ref{eq: ccor_pfaff_expansion})
to reduce the problem to the case of $\Op$ comprised of $n$ spins
and zero, one, or two fermions. Note also that, since $\CorrO{\psi_{\wone}\Op}$
is holomorphic in $\wone$, it suffices to prove~%
\mbox{%
(\ref{eq: fuse_psi_psi})\textendash (\ref{eq: fuse_psi_en})%
} and~%
\mbox{%
(\ref{eq: fuse_psi_mu})\textendash (\ref{eq: fuse_psi_sigma})%
} with $O(\cdot)$ replaced with $o(\cdot)$ of the last term of the
expansion, e.g., o(1) in (\ref{eq: fuse_psi_psi}). Finally, note
that the fusion rules with~$\psistar{\wone}$ instead of $\psi_{\wone}$
can be obtained from the original ones using Proposition~\ref{prop: cont_corr_conjugation}.
\begin{proof}[Proof of (\ref{eq: fuse_psi_psi})]
 Taking into account the above remarks and the anti-symmetry of $\ccor{\psi_{\wone}\psi_{\wtwo}\Op}$
in $\wone,\wtwo$, it suffices to prove (\ref{eq: fuse_psi_psi})
with the error term in the form $O(1)$. We write $\ccor{\psi_{\wone}\psi_{\wtwo}\Op}$
as a $(k+2)\times(k+2)$ Pfaffian by (\ref{eq: multipsi_pfaffian}),
and expand the Pfaffian in the row corresponding to $\wone$. This
yields 
\[
\ccor{\psi_{\wone}\psi_{\wtwo}\Op}=\frac{\ccor{\psi_{\wone}\psi_{\wtwo}\scvr}}{\ccor{\scvr}}\cdot\ccor{\Op}+\sum_{l=1}^{k}\alpha_{l}(w)\cdot\ccor{\psi_{\wone}\psi_{z_{l}}^{\any}\scvr},
\]
where $\alpha_{l}(w)$ is a fermion\textendash spin correlation not
involving $\psi_{\wone}$. Hence, the sum in the right-hand side is
holomorphic at $\wone=\wtwo,$ and we conclude that 
\[
\ccor{\psi_{\wone}\psi_{\wtwo}\Op}=f_{\varpi}(\wt,\wo)\cdot\ccor{\Op}+O(1)=2(\wo-\wt)^{-1}\cdot\ccor{\Op}+O(1),
\]
see (\ref{eq: dte_psi_psi}) and (\ref{eq: f_fdag_expansion}). 
\end{proof}
\begin{proof}[Proof of (\ref{eq: fuse_psi_psistar})]
 This is just the definition of $\en_{w}$; see Definition~\ref{def: energy}
and Remark~\ref{rem: psi_to_psibar}. 
\end{proof}
\begin{proof}[Proof of (\ref{eq: fuse_psi_en})]
 Write $\ccor{\psi_{\wone}\en_{\wtwo}\Op}=\frac{i}{2}\ccor{\psi_{\wone}\psi_{\wtwo}\psi_{w}^{\star}\Op}$
and observe that the argument in the proof of (\ref{eq: fuse_psi_psi})
extends verbatim, since $\ccor{\psi_{\wone}\psi_{w}^{\star}\Op}$
is holomorphic at $\wone=\wtwo$. 
\end{proof}
\begin{proof}[Proof of (\ref{eq: fuse_en_en})]
 Using the definition $\en_{\wo}=\frac{i}{2}\psi_{\wo}\psistar{\wo}$
and $\en_{w}=\frac{i}{2}\psi_{w}\psistar w$, we can write 
\[
\frac{\ccor{\en_{\wo}\en_{\wt}\Op(\psi^{\any})\scvr}}{\ccor{\scvr}}=
-\frac{1}{4}\Pf\biggl[\frac{\ccor{\psi^{\triangleleft}\psi^{\triangleright}\scvr}}{\ccor{\scvr}}\biggr]_{\psi^{\triangleleft},\psi^{\triangleright}\in\{\psi_{\wo},\psistar{\wo},\psi_{w},\psistar w,\psi_{z_{1}}^{\any},\dots,\psi_{z_{k}}^{\any}\}}\,.
\]
The entries of this Pfaffian have finite limits as $\wo\to\wt$, except
$\frac{\ccor{\psi_{\wo}\psi_{\wt}\scvr}}{\ccor{\scvr}}$ and $\frac{\ccor{\psi_{\wo}^{\star}\psi_{\wt}^{\star}\scvr}}{\ccor{\scvr}}$.
Expanding the Pfaffian we obtain 
\begin{multline*}
\frac{\ccor{\en_{\wo}\en_{\wt}\Op(\psi^{\any})\scvr}}{\ccor{\scvr}}=\frac{1}{4}\frac{\ccor{\psi_{\wo}\psi_{\wt}\scvr}}{\ccor{\scvr}}\frac{\ccor{\psi_{\wo}^{\star}\psi_{\wt}^{\star}\Op(\psi^{\any})\scvr}}{\ccor{\scvr}}\\
+\frac{1}{4}\frac{\ccor{\psi_{\wo}^{\star}\psi_{\wt}^{\star}\scvr}}{\ccor{\scvr}}\frac{\ccor{\psi_{\wo}\psi_{\wt}\Op(\psi^{\any})\scvr}}{\ccor{\scvr}}\\
-\frac{1}{4}\frac{\ccor{\psi_{\wo}\psi_{\wt}\scvr}}{\ccor{\scvr}}\frac{\ccor{\psi_{\wo}^{\star}\psi_{\wt}^{\star}\scvr}}{\ccor{\scvr}}\frac{\ccor{\Op(\psi^{\any})\scvr}}{\ccor{\scvr}}+O(1).
\end{multline*}
The desired asymptotics readily follows from (\ref{eq: fuse_psi_psi})
and the similar rule for~$\psistar{\wo}\text{\ensuremath{\psistar w}}$.
\end{proof}

\subsection{Proof of Theorem~\ref{thm: fusion rules-2}}

\label{subsec: fusion-proofs-2}
\begin{proof}[Proof of (\ref{eq: fuse_psi_mu})]
 Note, taking into account (\ref{eq: psi_to_mu_limits}), that the
first term of the asymptotics has been established in a particular
case $\Op=\Op(\sigma)$ in Lemma \ref{lem: f_star_holom}, see (\ref{eq: f_sharp_asymp}),
and the second term of the asymptotics, in this case, is simply the
\emph{definition }of the spin correlation, see (\ref{eq: def_A})
and (\ref{eq: def_spincorr_up_to_constant}).

In the general case, the first term of the asymptotics can be derived
in the same way as in Lemma \ref{lem: f_star_holom} using $(\ref{eq: fuse_psi_psi})$
and the fact that in general, $\ccor{\psi_{\hat{w}}\mu_{u}\Op}$ is
obtained from $\ccor{\psi_{\hat{w}}\psi_{w}\sigma_{u}\Op}$ by the
same procedure as $f_{\Omega,\cvr}^{[\sharp]}(u,\hat{w})$ is obtained
from $f_{\Omega,\cvr}(w,\hat{w}).$ The fastest way to derive the
second term is via a discrete argument. To wit, it is clear by linearity
that we may assume that the operator $\Op$ contains only $\psi^{\eta},\sigma,\mu,\en$
with $\eta\in\{1,i,\lamb,\lambb\}$. In this case, the correlation
$\ccor{\psi_{\hat{w}}\mu_{u}\Op}$ can be obtained as a scaling limit
of the corresponding discrete correlation, 
\[
G^{\delta}(\hat{w})=(\Cmu C_{\Op})^{-1}\delta^{-1-\Delta}\E_{\Omega^{\delta}}\left[\eta_{\hat{w}}\psi_{\hat{w}}^{[\eta_{\hat{w}}]}\mu_{u}\Op\right]=\Cpsi\proj{\eta}{\ccor{\psi_{\hat{w}}\mu_{u}\Op}}+o(1).
\]
Locally near $u$, $G^{\delta}$ is a s-holomorphic spinor in $\hat{w}$
ramified at $u$. If $\hat{w}_{1,2}$ are two corners such that $\hat{w}_{1,2}^{\bullet}=u,$
then, repeating the argument as in the proof of Theorem \ref{thm: Conv_both_near_spin},
which is entirely local, yields \label{eq: fuse_spin_spin_general}
\begin{multline*}
(\Cmu C_{\Op})^{-1}\delta^{-\Delta+\frac{1}{2}}\left(\E_{\Omega^{\delta}}\left[\sigma_{w_{2}^{\circ}}\Op\right]-\E_{\Omega^{\delta}}\left[\sigma_{w_{1}^{\circ}}\Op\right]\right)=\delta^{\frac{1}{2}}\left(\bar{\eta}_{\hat{w}_{1}}G^{\delta}(\hat{w}_{2})-\bar{\eta}_{\hat{w}_{1}}G^{\delta}(\hat{w}_{1})\right)\\
=\re\left((\hat{w}_{2}^{\circ}-\hat{w}_{1}^{\circ})\hat{\mathcal{A}}_{u,\Op}\right)+o(\delta),
\end{multline*}
where $\hat{\mathcal{A}}_{u,\Op}$ is defined by the expansion 
\[
\ccor{\psi_{\hat{w}}\mu_{u}\Op}=\lamb(\wone-\wtwo)^{-\frac{1}{2}}\left(\ccor{\sigma_{u}\Op}+2\hat{\mathcal{A}}_{u,\Op}\cdot(\wo-\wt)+O((\wone-\wt)^{2})\right).
\]
 Summing the above identity over a path from $\hat{w}_{2}$ to $\hat{w}_{1}$
and passing to the limit (taking into account $\Cmu=\Csigma$) yields
\[
\ccor{\sigma_{w_{2}}\Op}-\ccor{\sigma_{w_{1}}\Op}=\re\left(\int_{w_{1}}^{w_{2}}\hat{\mathcal{A}}_{w,\Op}dw\right),
\]
that is, $d\ccor{\sigma_{w}\Op}=\frac{1}{2}\hat{\mathcal{A}}_{w,\Op}dw+\frac{1}{2}\bar{\hat{\mathcal{A}}_{w,\Op}}d\bar w,$
or $\hat{\mathcal{A}}_{w,\Op}=2\pa_{w}\ccor{\sigma_{w}\Op},$ as required.
\end{proof}
\begin{proof}[Proof of (\ref{eq: fuse_psi_sigma})]
 The first term of the asymptotics~(\ref{eq: fuse_psi_sigma}) is
just the definition of the disorder variables $\mu_{\wt}$ in continuum;
see Definition~\ref{def: Disorders}. In order to derive the second
term, note that the first term of~(\ref{eq: fuse_psi_mu}) gives
\begin{multline*}
\oint_{(\wt)}\frac{\ccor{\psi_{\wo}\sigma_{\wt}\Op}}{(\wo-\wt)^{\frac{3}{2}}}d\wo\ =\ \frac{\lambb}{2\pi i}\oint_{(\wt,z)}\oint_{(\wt)}\frac{\ccor{\psi_{\wo}\psi_{z}\mu_{\wt}\Op}dz}{(z-\wt)^{\frac{1}{2}}(\wo-\wt)^{\frac{3}{2}}}d\wo\\
\ =\ \frac{\lambb}{2\pi i}\oint_{(\wt,\wo)}\oint_{(\wt)}\frac{\ccor{\psi_{\wo}\psi_{z}\mu_{\wt}\Op}d\wo}{(z-\wt)^{\frac{1}{2}}(\wo-\wt)^{\frac{3}{2}}}dz\ -\ \frac{\lambb}{2\pi i}\oint_{(\wt)}\oint_{(\wo)}\frac{\ccor{\psi_{\wo}\psi_{z}\mu_{\wt}\Op}dz}{(z-\wt)^{\frac{1}{2}}(\wo-\wt)^{\frac{3}{2}}}d\wo
\end{multline*}
and that the satellite integral vanishes: 
\[
\frac{1}{2\pi i}\oint_{(\wt)}\oint_{(\wo)}\frac{\ccor{\psi_{\wo}\psi_{z}\mu_{\wt}\Op}dz}{(z-\wt)^{\frac{1}{2}}(\wo-\wt)^{\frac{3}{2}}}d\wo\ =\ -2\oint_{(\wt)}\frac{\ccor{\mu_{\wt}\Op}}{(\wo-\wt)^{2}}d\wo\ =\ 0.
\]
Using the second term of the asymptotics~(\ref{eq: fuse_psi_mu})
to compute the integral over $\wo$ (and the anti-commutativity $\psi_{\wo}\psi_{z}=-\psi_{z}\psi_{\wo}$),
we obtain the identity 
\begin{align*}
\oint_{(\wt)}\frac{\ccor{\psi_{\wo}\sigma_{\wt}\Op}}{(\wo-\wt)^{\frac{3}{2}}}d\wo & \ =\-4\oint_{(\wt)}\frac{\partial_{\wt}\ccor{\psi_{z}\sigma_{\wt}\Op}}{(z-\wt)^{\frac{1}{2}}}dz\\
 & \ =\ -4\cdot\biggl(\partial_{\wt}\oint_{(\wt)}\frac{\ccor{\psi_{z}\sigma_{\wt}\Op}}{(z-\wt)^{\frac{1}{2}}}dz-\frac{1}{2}\oint_{(\wt)}\frac{\ccor{\psi_{z}\sigma_{\wt}\Op}}{(z-\wt)^{\frac{3}{2}}}dz\biggr).
\end{align*}
Therefore, we have 
\begin{multline*}
\oint_{(\wt)}\frac{\ccor{\psi_{\wo}\sigma_{\wt}\Op}}{(\wo-\wt)^{\frac{3}{2}}}d\wo\ =\%-\oint_{(\wt,\wo)}\oint_{(\wt)}\frac{\ccor{\psi_{z}\psi_{\wo}\sigma_{\wt}\Op}d\wo}{(z-\wt)^{\frac{1}{2}}(\wo-\wt)^{\frac{3}{2}}}dz\ \\
=\ 4\partial_{\wt}\oint_{(\wt)}\frac{\ccor{\psi_{z}\sigma_{\wt}\Op}}{(z-\wt)^{\frac{1}{2}}}dz\ =\ 2\pi\cdot4\lambb\partial_{\wt}\ccor{\mu_{\wt}\Op},
\end{multline*}
which gives the second term in~(\ref{eq: fuse_psi_sigma}).
\end{proof}
\begin{proof}[Proofs of (\ref{eq: fuse_en_mu}) and (\ref{eq: fuse_en_sigma})]
 Due to the definition of $\en_{\wt}$ we have 
\[
\ccor{\en_{\wo}\mu_{\wt}\Op}=\frac{i}{2}\ccor{\psi_{\wo}\psi_{\wo}^{\star}\mu_{\wt}\Op}=\frac{1}{4\pi}\oint_{(\wo)}\frac{(z-\wt)^{\frac{1}{2}}\ccor{\psi_{z}\psistar{\wo}\mu_{\wt}\Op}}{(\wo-\wt)^{\frac{1}{2}}(z-\wo)}dz.
\]
Moreover, the contour of integration can be moved away from the collapsing
points $\wo,\wt$ since the function $(z-\wt)^{\frac{1}{2}}\ccor{\psi_{z}\psistar{\wo}\mu_{\wt}\Op}$
is holomorphic at $z=w$. Using the fusion rules $\psistar{\wo}\mu_{\wt}=\lambb\sigma_{\wt}+\ldots$
and $\psi_{\wo}\sigma_{\wt}=\lambb\mu_{\wt}+\ldots$ (see~(\ref{eq: fuse_psi_mu})
and~(\ref{eq: fuse_psi_sigma}), respectively) we conclude that 
\begin{align*}
\ccor{\en_{\wo}\sigma_{\wt}\Op} & =\frac{1}{4\pi}(\wo-\wt)^{-\frac{1}{2}}\oint_{(\wt)}(z-\wt)^{-\frac{1}{2}}(1+O(\wo-\wt))\ccor{\psi_{z}\psistar{\wo}\mu_{\wt}\Op}dz\\
 & =\frac{\lambb}{4\pi}|\wo-\wt|^{-1}\cdot\biggl(\oint_{(w)}(z-w)^{-\frac{1}{2}}\ccor{\psi_{z}\sigma_{\wt}\Op}dz+O(\wo-\wt)\biggr)\\
 & =-\frac{1}{2}|\wo-\wt|^{-1}\cdot\left(\ccor{\mu_{\wt}\Op}+O(\wo-\wt)\right)
\end{align*}
since $i\lambb{}^{2}=-1$. The fusion rule (\ref{eq: fuse_en_sigma})
can be obtained by the same arguments. 
\end{proof}

\subsection{Asymptotic expansions of continuous fermionic observables}

\label{subsec: asymptotics-as-spins-collide}

We now move to the analysis of the continuous fermionic observables
$f_{\double{\vv}}^{[\eta]}(a,z)$ as some of the branching points
$v_{1},\ldots,v_{n}$ collide or approach $\fixed\subset\partial\Omega$.
This analysis is the crucial ingredient of the proof of Theorem~\ref{thm: fusion rules-3}
as well as of the proofs of Proposition~\ref{prop: multiplicative-normalisation}
and Proposition~\ref{prop: spin_bdry_behavior}, which we used to
fix the multiplicative normalization of the correlations $\ccor{\sigma_{v_{1}}\ldots\sigma_{v_{n}}}_{\Omega}$
in a coherent way.

Throughout this section, we assume $\Omega$ to be a \emph{circular}
domain; all the results extend to general domains by conformal covariance,
see Lemma~\ref{lem: Conformal-covariance-observable}. This assumption
allows one to extend the observables by Schwarz reflection to a neighborhood
of any compact subset of $\Omega\cup\fixed$, cf. Remark~\ref{rem: Schwarz-reflection};
we will always assume that such an extension has been made. We will
rely upon Proposition~\ref{prop: unelegant_but_useful}. Informally
speaking, this proposition claims that, in order to find a limit of
a sequence of holomorphic spinors, it suffices to analyze their behavior
at, or near, singularities. \begin{prop} \label{prop: unelegant_but_useful}Let
$w\in\Omega\cup\fixed$. Let $f_{\ell}(\cdot)$, $\ell=1,2,\ldots,$
be a sequence of holomorphic $\cvr$-spinors in $\Ocvrc\setminus B_{r_{\ell}}(w)$
and $r_{\ell}\to0$ as $\ell\to\infty$. Assume that $f_{\ell}$ satisfy
the standard boundary conditions outside $B_{r_{\ell}}(w)$. Given
a small enough $R>0$, assume that $f_{\ell}$ admit the following
representations near $w$: 
\begin{equation}
f_{\ell}(z)=\chi_{\ell}(z)\cdot(\rho_{\ell}(z)+g_{\ell}(z)),\qquad z\in B_{R}(w)\setminus B_{r_{\ell}}(w),\label{eq: spinor_decomp}
\end{equation}
where $\chi_{\ell}$ and $\rho_{\ell}$ are holomorphic functions
in $B_{R}(w)\setminus\{w\}$ such that, as $\ell\to\infty$, 
\[
\chi_{\ell}(z)\to1,\quad\rho_{\ell}(z)\to\rho(z),\qquad z\in B_{R}(w)\setminus\{w\},
\]
uniformly on compact subsets, while $g_{\ell}$ does not have singularities
at $w$ and thus can be extended to holomorphic functions in $B_{R}(w)$.
Then, 
\[
f_{\ell}(z)\to f(z),\qquad z\in(\Ocvrc\cup\fixed)\setminus\{w\},
\]
uniformly on compact subsets, where~$f(\cdot)$ satisfies the standard
boundary conditions on $\pa\Ocvrc\setminus\{w\}$ and the asymptotics
$f(z)=\rho(z)+O(1)$ as $z\to w$. \end{prop}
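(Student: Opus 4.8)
The plan is to extract a convergent subsequence of $\{f_\ell\}$ on compact subsets of $(\Ocvrc\cup\fixed)\setminus\{w\}$, identify the limit as a spinor satisfying the claimed properties, and then upgrade subsequential convergence to full convergence via the uniqueness furnished by Proposition~\ref{prop: Uniqueness_continuous}. The structure closely follows the compactness argument in the proof of Theorem~\ref{thm: conv_bulk_bulk}. First I would fix a small $R>0$ as in the hypothesis and work with $h_\ell:=\im\int f_\ell^2\,dz$, the harmonic functions associated to the spinors via Proposition~\ref{prop: f_to_h}. The key preliminary is a \emph{uniform boundedness} statement for $h_\ell$ (and hence, by the Chelkak--Smirnov-type regularity results used earlier, for $f_\ell$) on compact subsets of $\Omega\setminus\{w\}$ and up to $\fixed$. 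Since the $f_\ell$ satisfy the standard boundary conditions outside $B_{r_\ell}(w)$, Proposition~\ref{prop: f_to_h} gives that each $h_\ell$ is bounded from above on $\pa\Omega$ away from $w$; the decomposition~(\ref{eq: spinor_decomp}) controls the behavior near $w$, since $\chi_\ell\to1$ and $\rho_\ell\to\rho$ locally uniformly on $B_R(w)\setminus\{w\}$ pin down the singular part, while the $g_\ell$ are holomorphic across $w$.

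The main obstacle, and the heart of the argument, is precisely this uniform boundedness: a priori the normalization could blow up, exactly as in the (UBC) step in the proof of Theorem~\ref{thm: conv_bulk_bulk}. I would handle it by the same renormalization-and-contradiction scheme. Suppose $M_\ell:=\max_{K}|h_\ell|\to\infty$ on some compact $K\subset(\Omega\cup\fixed)\setminus\{w\}$; consider $\tilde f_\ell:=M_\ell^{-1/2}f_\ell$ and the corresponding $\tilde h_\ell=M_\ell^{-1}h_\ell$. On $K$ these are uniformly bounded by construction, and the decomposition~(\ref{eq: spinor_decomp}) shows that the rescaled singular part $M_\ell^{-1/2}\rho_\ell\to0$, so $\tilde f_\ell$ has a \emph{vanishing} singularity at $w$ in the limit. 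Extracting a convergent subsequence $\tilde f_\ell\to\tilde f$ (using the maximum-principle and Schwarz-reflection tools of the preceding section, together with the circularity assumption to continue across $\fixed$), the limit $\tilde f$ is a holomorphic $\cvr$-spinor satisfying the standard boundary conditions on the \emph{entire} $\pa\Ocvrc$ and holomorphic at $w$. By Proposition~\ref{prop: Uniqueness_continuous}, $\tilde f\equiv0$, hence $\tilde h_\ell\to0$ uniformly on $K$, contradicting $\max_K|\tilde h_\ell|=1$. This establishes (UBC).

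Granting uniform boundedness, I would extract a subsequence along which $f_\ell\to f$ locally uniformly on $(\Ocvrc\cup\fixed)\setminus\{w\}$, with $h_\ell\to h:=\im\int f^2\,dz$. The limit $f$ is a holomorphic $\cvr$-spinor; passing to the limit in the standard boundary conditions (which are preserved under locally uniform convergence, using the local characterization of~(\ref{eq: bc_fixed})--(\ref{eq: bc_free}) and the jump-consistency argument from the free-arc analysis in the proof of Theorem~\ref{thm: conv_bulk_bulk} for~(\ref{eq: bc_free_arc})) shows that $f$ satisfies the standard boundary conditions on $\pa\Ocvrc\setminus\{w\}$. Finally, the asymptotics $f(z)=\rho(z)+O(1)$ as $z\to w$ follows by passing to the limit in~(\ref{eq: spinor_decomp}): since $\chi_\ell\to1$ and $\rho_\ell\to\rho$ locally uniformly away from $w$, while the $g_\ell$ remain a locally bounded family of functions holomorphic on all of $B_R(w)$ (bounded by the uniform control on $f_\ell$ together with the explicit singular part), a normal-families argument extracts $g_\ell\to g$ holomorphic in $B_R(w)$, giving $f=\rho+g$ with $g$ regular at $w$.

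It remains to remove the passage to a subsequence. Any two subsequential limits $f,f'$ are holomorphic $\cvr$-spinors satisfying the standard boundary conditions on $\pa\Ocvrc\setminus\{w\}$ and both have the \emph{same} singular part $\rho$ at $w$; hence $f-f'$ is holomorphic at $w$ and satisfies the standard boundary conditions on the entire $\pa\Ocvrc$, so $f=f'$ by Proposition~\ref{prop: Uniqueness_continuous}. Since every subsequence has a further subsequence converging to the same limit $f$, the full sequence converges, completing the proof.
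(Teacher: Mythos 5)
Your proposal is correct and takes essentially the same route as the paper's proof: the same dichotomy between a uniformly bounded case (extract a subsequence, identify the singular part via the maximum principle for the regular parts $g_{\ell}$, and conclude by the uniqueness of Proposition~\ref{prop: Uniqueness_continuous}) and a blow-up case handled by rescaling $f_{\ell}$ by the square root of the supremum of $|h_{\ell}|$, where the rescaled subsequential limit is bounded at $w$, hence vanishes by uniqueness, contradicting the normalization of $h_{\ell}$. The only cosmetic difference is that the paper takes the supremum defining $M_{\ell}$ over all of $\Omega\setminus B_{r}(w)$ rather than over a single compact set $K$, which makes the propagation of the uniform bound to all compact subsets of $(\Omega\cup\fixed)\setminus\{w\}$ (again via the maximum principle applied to $g_{\ell}$) slightly cleaner, but your version is repairable by the same device.
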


In the applications, we will take either $\chi_{\ell}=1$ or $\chi_{\ell}=\chi_{u^{(\ell)},v^{(\ell)}}$,
where 
\begin{equation}
\chi_{u,v}(z):=\left(\frac{z-u}{z-v}\right)^{\frac{1}{2}}\label{eq: chi_u_v_def}
\end{equation}
and the sequences $u^{(\ell)},v^{(\ell)}$ both tend to $w$ as $l\to\infty$.
In the latter case note that the branchings over $u^{(\ell)}$ and
$v^{(\ell)}$ compensate each other outside a very small vicinity
of~$w$, which allows us to view $\chi_{\ell}$ as a \emph{function}
in the annulus $B_{R}(w)\setminus B_{r_{\ell}}(w)$.

\smallskip{}

Before giving a proof of Proposition~\ref{prop: unelegant_but_useful},
we illustrate it by the following example.
\begin{example}
\label{exa: diff_obs_by_a} For a sequence $a_{\ell}\to a\in\Omega$
such that $\nu_{\ell}:=(a_{\ell}-a)/|a_{\ell}-a|\to\nu$, put 
\[
f_{\ell}(z):=\frac{f^{[1]}(a_{\ell},z)-f^{[1]}(a,z)}{|a_{\ell}-a|},\qquad\chi_{\ell}(z):=1,\qquad\rho_{\ell}(z):=\frac{\nu_{\ell}}{(z-a_{\ell})(z-a)}.
\]
Proposition \ref{prop: unelegant_but_useful} shows that $f_{\ell}$
converge to a spinor $f$ satisfying the standard boundary conditions
and such that $f(z)=\nu(z-a)^{-2}+O(1)$ as $z\to a$. Iterating this
procedure, we see that for any polynomial $P$ there exists a $\cvr$-spinor
$f_{\cvr}^{[\overline{P}]}(a,\cdot)$ satisfying the standard boundary
conditions and such that 
\[
f_{\cvr}^{[\overline{P}]}(a,z)=(z-a)^{-1}P((z-a)^{-1})+O(1),\quad z\to a.
\]
Note that $f_{\cvr}^{[\overline{P}]}$ is uniquely determined by these
conditions due to Proposition~\ref{prop: Uniqueness_continuous}. 
\end{example}

\begin{proof}[Proof of Proposition~\ref{prop: unelegant_but_useful}]
 Assume first that the sequence $f_{\ell}$ eventually remains uniformly
bounded on each compact subset of $(\Ocvrc\cup\fixed)\setminus\{w\}.$
Then, we can extract a convergent subsequence, denote the sub-sequential
limit of $f_{\ell}$ by $f$. This limit satisfies the standard boundary
conditions (except at the point~$w$ in the case $w\in\fixed$).
From the convergence of $f_{\ell}$, $\chi_{\ell}$ and $\rho_{\ell}$,
we deduce that $g_{\ell}$ also converges uniformly on compact subsets
of $B_{R}\setminus\{w\}$. Since $g_{\ell}$ satisfies the maximum
principle, we must therefore have $f(z)=\rho(z)+O(1)$ as~$z\to w$.
The uniqueness in the boundary value problem shows that the convergence
holds without extraction.

If spinors $f_{\ell}$ are not uniformly bounded on compact subsets
of $\Omega\cup\fixed\setminus\{w\}$, define $h_{\ell}(z):=\im\int_{z_{0}}^{z}f_{\ell}^{2}$,
where the point $z_{0}\in\Omega$ is chosen arbitrarily. Proposition
\ref{prop: f_to_h} and Remark \ref{rem: h_well_defined} guarantee
that $h_{\ell}$ is a well defined harmonic function with locally
constant Dirichlet boundary conditions on $\fixed$; we can also extend
it to a neighborhood of any compact set $K\subset(\Omega\cup\fixed)\setminus\{B_{r_{\ell}}(w)\}$.
Note that, for some $r$ such that $0<r<R/4$, we have 
\[
M_{\ell}:=(\sup\nolimits _{\Omega\setminus B_{r}(w)}|h_{\ell}|)^{\frac{1}{2}}\to+\infty\quad\text{as}\ \ \ell\to\infty
\]
(otherwise, $h_{\ell}$, and hence $f_{\ell}$, would be uniformly
bounded in a neighborhood of any compact subset of $\Omega\cup\fixed\setminus\{w\}$).
Then, the re-scaled spinors $M_{\ell}^{-1}f_{\ell}$ are uniformly
bounded on compact subsets of $(\Omega\cup\fixed)\setminus B_{R/2}(w)$.
Using the decomposition (\ref{eq: spinor_decomp}) and the fact that
$g_{\ell}$ satisfies the maximum principle, we infer that these re-scaled
spinors remain uniformly bounded on any compact subset of $(\Omega\cup\fixed)\setminus\{w\}$.
Moreover, each sub-sequential limit of the sequence~$M_{\ell}^{-1}f_{\ell}$
is \emph{bounded at $w$} since $M_{\ell}^{-1}\chi_{\ell}\rho_{\ell}\to0$.
As these sub-sequential limits also satisfies the standard boundary
conditions, they must vanish, which means that $M_{\ell}^{-1}f_{\ell}\to0$
as $\ell\to\infty$, uniformly on compact subsets of $(\Omega\cup\fixed)\setminus\{w\}$.
Therefore, the functions $M_{\ell}^{-2}h_{\ell}=\im\int(M_{\ell}^{-1}f_{\ell})^{2}$
also converge to zero as $\ell\to\infty$, which contradicts to the
definition of $M_{\ell}$. 
\end{proof}
\begin{rem}
\label{rem: unelegant_many_points} In the proof of Proposition~\ref{prop: spin_bdry_behavior}
(see Section~\ref{subsec: proof-normalization} below) we will also
need the following straightforward generalization of Proposition~\ref{prop: unelegant_but_useful}.
Let $w_{1},\dots,w_{p}$ be distinct points of $\Omega\cup\fixed$,
and assume that $\cvr$-spinors $f_{\ell}(\cdot)$ are holomorphic
and satisfy the standard boundary conditions outside $\cup_{q=1}^{p}B_{r_{\ell}}(w_{q}),$
where $r_{\ell}\to0$. Assume that for each $q=1,\dots,p$, we can
write 
\[
f_{\ell}(z)=\chi_{\ell}^{(q)}(z)(\rho_{\ell}^{(q)}(z)+g_{\ell}^{(q)}(z)),\quad z\in B_{R}(w_{q})\setminus B_{r_{\ell}}(w_{q}),
\]
where $\chi^{(q)}$, $\rho^{(q)}$ and $g^{(q)}$ are as above. Then,
$f_{\ell}\to f$ as $\ell\to\infty$, uniformly on compact subsets
of $(\Omega\cup\fixed)\setminus\{w_{1},\dots,w_{p}\}$, where $f$
satisfies the standard boundary conditions on $\pa\Omega\setminus\{w_{1},\dots,w_{p}\}$
and the asymptotics expansions $f(z)=\rho^{(q)}(z)+O(1)$ near each
of the points $w_{q}$, $q=1,\ldots,p$. (Note that, as usual, these
conditions define $f$ uniquely.)

The proof of Proposition~\ref{prop: unelegant_but_useful} extends
verbatim, except that $h_{\ell}$ does not have to be a single-valued
function anymore. To overcome this complication, we can consider the
restrictions of $h_{\ell}$ to a simply-connected sheet of the universal
cover of $\Omega$, and define $M_{\ell}$ to be the maximum of this
restriction. Actually, in the proof of Proposition~\ref{prop: spin_bdry_behavior}
we do know that $h_{\ell}$ is a single-valued function from the setup
under consideration. 
\end{rem}

We now give several applications of Proposition~\ref{prop: unelegant_but_useful}
needed below. The first one, Proposition~\ref{prop: u_to_v-1}, is
the key ingredient of the proof of Proposition~\ref{prop: multiplicative-normalisation}
and the fusion rule~\eqref{eq: fuse_mu_sigma}.

\begin{prop} \label{prop: u_to_v-1} Let $v\in\Omega$ and $\cvr=\double{\vv}$.
The following asymptotics hold: 
\begin{equation}
f_{\double{u,v}\cdot\cvr}^{[\sharp]}(u,z)=|u-v|^{\frac{1}{2}}f_{\cvr}^{[\eta_{uv}]}(u,z)+o(|u-v|^{\frac{1}{2}})\quad\text{as}\ \ u\to v,\label{eq: f_u_to_v}
\end{equation}
where $\eta_{uv}$ is defined by~(\ref{eq: eta_uv}). The asymptotics
are uniform for $v$ lying in the bulk of $\Omega$ (in particular,
away from $v_{1},\ldots,v_{n}$), and for $z$ away from $v,v_{1},\ldots,v_{n}$.
\end{prop}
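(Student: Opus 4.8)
The strategy is to apply Proposition~\ref{prop: unelegant_but_useful} to the sequence of spinors indexed by $u\to v$, taking $w=v$ as the single collision point. The natural candidate for the limit is $|u-v|^{\frac{1}{2}}f_{\cvr}^{[\eta_{uv}]}(v,z)$; the task is to verify that, after dividing by $|u-v|^{\frac{1}{2}}$, the renormalized observables $f_{\ell}(z):=|u^{(\ell)}-v|^{-\frac{1}{2}}f_{\double{u^{(\ell)},v}\cdot\cvr}^{[\sharp]}(u^{(\ell)},z)$ fit the hypotheses of the Proposition with limiting singular part $\rho(z)=\lim_{\ell}\rho_\ell(z)$ matching the singularity of $f_{\cvr}^{[\eta_{uv}]}(v,z)$ at $v$. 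Away from $v$ the double cover $\double{u^{(\ell)},v}$ is, up to the factor $\chi_{u^{(\ell)},v}$ of~(\ref{eq: chi_u_v_def}), the same as $\cvr$ (the two branchings at $u^{(\ell)}$ and $v$ cancel), so I would take $\chi_{\ell}=\chi_{u^{(\ell)},v}$, which tends to $1$ uniformly on compact subsets of $\Omega\setminus\{v\}$; this is exactly the scenario flagged after the statement of the Proposition.

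\textbf{Key steps.} First I would recall from Definition~\ref{def: sharp} and Lemma~\ref{lem: f_star_holom} (the asymptotics~(\ref{eq: f_sharp_asymp})) that $f_{\double{u,v}\cdot\cvr}^{[\sharp]}(u,z)=\lamb(z-u)^{-\frac{1}{2}}+O(1)$ as $z\to u$, and write down the local behavior of $f_{\cvr}^{[\eta_{uv}]}(v,z)$, namely $f_{\cvr}^{[\eta]}(v,z)=\overline{\eta}(z-v)^{-1}+O(1)$ from~(\ref{eq: f_residue}). Second, near $v$ I would factor out $\chi_{u,v}(z)=((z-u)/(z-v))^{\frac{1}{2}}$ and set $\rho_\ell(z)$ equal to the principal singular part of $|u-v|^{-\frac{1}{2}}\chi_{u,v}(z)^{-1}f^{[\sharp]}(u,z)$ at $z=v$; the core computation is to check that as $u\to v$ along $u-v=|u-v|\cdot\overline{\eta_{uv}}^{\,2}\lambda^{-2}$ (so that the phase encoded in $\eta_{uv}$ is reproduced), the leading coefficient $\lamb(z-u)^{-\frac{1}{2}}$ of $f^{[\sharp]}$, once stripped of $\chi_{u,v}$ and divided by $|u-v|^{\frac{1}{2}}$, converges to $\overline{\eta_{uv}}(z-v)^{-1}$, matching the residue of $f_{\cvr}^{[\eta_{uv}]}(v,z)$. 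This is the step where the precise definition~(\ref{eq: eta_uv}) of $\eta_{uv}$ and the sign conventions for the square roots on $\Ocvrc$ (Remark~\ref{rem: Signs}) must be tracked carefully. Third, I would verify that $\rho_\ell\to\rho$ and $\chi_\ell\to 1$ uniformly on compact subsets of the punctured disc, so that Proposition~\ref{prop: unelegant_but_useful} yields a limiting spinor $f$ with the standard boundary conditions on $\pa\Ocvrc\setminus\{v\}$ and $f(z)=\rho(z)+O(1)=\overline{\eta_{uv}}(z-v)^{-1}+O(1)$; by the uniqueness Proposition~\ref{prop: Uniqueness_continuous}, $f=f_{\cvr}^{[\eta_{uv}]}(v,\cdot)$, which is the claim. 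Finally, uniformity for $v$ in the bulk and $z$ away from the marked points is inherited from the uniform-on-compacts convergence in Proposition~\ref{prop: unelegant_but_useful}, together with a standard compactness/normal-families argument.

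\textbf{Main obstacle.} The principal difficulty is the bookkeeping of branches and signs in the matching computation: $f^{[\sharp]}$ lives on the double cover $\double{u,v}\cdot\cvr$ ramified at both $u$ and $v$, whereas the target $f^{[\eta_{uv}]}(v,\cdot)$ lives on $\cvr$ ramified only at $v$, and the bridge between them is the multivalued factor $\chi_{u,v}$. One must choose the branch of $(z-u)^{-\frac{1}{2}}$ near $v$, the branch of $\chi_{u,v}$ in the annulus, and the branch of $(z-v)^{-\frac{1}{2}}$ implicit in $\eta_{uv}$ all consistently, and confirm that the emergent phase is precisely $\overline{\eta_{uv}}=\lamb|u-v|^{\frac{1}{2}}(u-v)^{-\frac{1}{2}}$ rather than its negative. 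A secondary technical point is ensuring the decomposition~(\ref{eq: spinor_decomp}) is legitimate, i.e.\ that $g_\ell(z):=\chi_\ell(z)^{-1}f_\ell(z)-\rho_\ell(z)$ genuinely extends holomorphically across $v$ (no residual singularity survives once the explicit singular part is subtracted); this follows from the $O(1)$ remainder in~(\ref{eq: f_sharp_asymp}), but the error term must be shown to be $o(|u-v|^{\frac{1}{2}})$ after renormalization, which is where the stated error bound in~(\ref{eq: f_u_to_v}) ultimately comes from.
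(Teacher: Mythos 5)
Your proposal follows essentially the same route as the paper's proof: renormalize by $|u-v|^{-\frac{1}{2}}$, factor out $\chi_{u,v}$, subtract the explicit singular part (which the paper takes as $\rho_{u,v}(z)=\bar{\eta}_{uv}(z-u)^{-1}$, with the simple pole at $u$ rather than at $v$, so that $g_{u,v}$ is regular at $u$ by (\ref{eq: f_sharp_asymp}) and at $v$ by (\ref{eq: bc_branchpoint})), then apply Proposition~\ref{prop: unelegant_but_useful} along sequences with $\eta_{u^{(\ell)}v^{(\ell)}}\to\eta$ and conclude uniformity by compactness. Your residue computation identifying the phase $\bar{\eta}_{uv}$ is exactly the matching step in the paper, so the proposal is correct.
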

\begin{proof}
Let 
\[
f_{u,v}(z):=|u-v|{}^{-\frac{1}{2}}f_{\double{u,v}\cdot\cvr}^{[\sharp]}(u,z),\qquad\rho_{u,v}(z):=\frac{\bar{\eta_{uv}}}{z-u},
\]
and 
\[
g_{u,v}(z):=(\chi_{u,v}(z))^{-1}f_{u,v}(z)-\rho_{u,v}(z),
\]
where the function~$\chi_{(u,v)}(z)$ is given by~(\ref{eq: chi_u_v_def}).
The asymptotics (\ref{eq: bc_branchpoint}) and (\ref{eq: f_sharp_asymp})
imply that $g_{u,v}(\cdot)$ does not have singularities neither at
$v$ nor at $u$, respectively. If $u^{(\ell)},v^{(\ell)}$ are two
sequences converging to the same limit in the bulk of $\Omega$ and
away from $v_{1},\ldots,v_{n}$, in such a way that $\eta_{u^{(\ell)}v^{(\ell)}}\to\eta$
as $\ell\to\infty$, then Proposition \ref{prop: unelegant_but_useful}
implies the convergence %
\mbox{%
$f_{u^{(\ell)},v^{(\ell)}}\to f_{\cvr}^{[\eta]}(u,\cdot)$%
}. The claim follows from a standard compactness argument.
\end{proof}
\begin{rem}
\label{rem: f_u_to_v_higher_terms}It is also possible to compute
higher terms in the asymptotics~(\ref{eq: f_u_to_v}) and, in particular,
to justify that the error term~$o(|u-v|)$ can be replaced by $O(|u-v|^{2})$.
Namely, the next term in~(\ref{eq: f_u_to_v}) can be explicitly
written as 
\[
f_{\double{u,v}\cdot\cvr}^{[\sharp]}(u,z)=|u-v|^{\frac{1}{2}}f_{\cvr}^{[\eta_{uv}]}(u,z)+|u-v|^{\frac{3}{2}}\hat{f}_{\cvr}^{[\eta_{uv}]}(u,z)+o(|u-v|^{\frac{3}{2}}),
\]
where $\hat{f}_{\cvr}^{[\eta]}(u,z)$ denotes the unique $\cvr$-spinor
satisfying the standard boundary conditions and such that 
\[
\hat{f}_{\cvr}^{[\eta]}(u,z)=-\frac{i\bar{\eta}^{2}}{2(z-u)}f_{\cvr}^{[\eta]}(u,z)+O((z-u)^{\frac{1}{2}}).
\]
as $z\to u$. To prove this refined asymptotic expansion, take 
\begin{align*}
f_{u,v}(z) & :=|u-v|^{-1}(|u-v|^{-\frac{1}{2}}f_{\double{u,v}\cdot\cvr}^{[\sharp]}(u,z)-f_{\cvr}^{[\eta_{uv}]}(u,z)),\\
g_{u,v}(z) & :=|u-v|^{-1}((\chi_{u,v}(z))^{-1}|u-v|^{-\frac{1}{2}}f_{\double{u,v}\cdot\cvr}^{[\sharp]}(u,z)-f_{\cvr}^{[\eta_{uv}]}(u,z))
\end{align*}
(note that the functions $g_{u,v}$ do not have singularities neither
at $v$ nor at $u$), and 
\begin{align*}
\rho_{u,v}(z) & :=(\chi_{u,v}(z))^{-1}f_{u,v}(z)-g_{u,v}(z)=|u-v|^{-1}(1-(\chi_{u,v}(z))^{-1})f_{\cvr}^{[\eta_{uv}]}(u,z).
\end{align*}
Applying Proposition~\ref{prop: unelegant_but_useful} as above we
obtain the desired result since 
\begin{equation}
|u-v|^{-1}(1-(\chi_{u,v}(z))^{-1})=-\frac{i\bar{\eta}_{uv}^{2}}{2(z-u)}+o(1),\quad u\to v,\label{eq: chi_expansion}
\end{equation}
uniformly on compact subsets of $\C\setminus\{v\}$. 
\end{rem}

The following Proposition~\ref{prop: v_to_v} is crucial for both
the proof of Proposition~\ref{prop: multiplicative-normalisation}
and the proof of the fusion rule~(\ref{eq: fuse_sigma_sigma}) in
Theorem~\ref{thm: fusion rules-3}.
\begin{prop}
\label{prop: v_to_v} Let $a,v\in\Omega$, $a\ne v$, and $\cvr=\double{\vv}$.
We have 
\[
f_{\double{u,v}\cdot\cvr}^{[\eta]}(a,z)=f_{\cvr}^{[\eta]}(a,z)+|u-v|f_{\cvr}^{[\theta]}(u,z)+o(|u-v|)\quad\text{as}\ \ u\to v,
\]
where $\theta:=-\frac{i}{2}f_{\cvr}^{[\eta]}(a,u)$. The asymptotics
are uniform for $v$ lying in the bulk of~$\Omega$ (in particular,
away from $v_{1},\ldots,v_{n}$), and for $a,z$ away from $v,v_{1},\ldots,v_{n}$.
\end{prop}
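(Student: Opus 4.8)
The statement is a two-term asymptotic expansion of the observable $f_{\double{u,v}\cdot\cvr}^{[\eta]}(a,z)$ as the branching point $u$ merges with $v$, where the double cover is no longer ramified at $v$ in the limit (since $\double{u,v}\cdot\cvr \to \cvr$ as $u,v$ coalesce, the branchings over $u$ and $v$ cancelling each other). The natural tool is Proposition~\ref{prop: unelegant_but_useful}, specialized to the case $\chi_\ell = \chi_{u^{(\ell)},v^{(\ell)}}$ of~(\ref{eq: chi_u_v_def}), exactly as in the proofs of Proposition~\ref{prop: u_to_v-1} and Remark~\ref{rem: f_u_to_v_higher_terms}. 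The plan is to define the rescaled difference quotient, identify its singular part near $w:=v$ (the point of collision), and read off the limit from the uniqueness built into Proposition~\ref{prop: unelegant_but_useful}.

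\textbf{Key steps.} First I would work at the level of sequences $u^{(\ell)}, v^{(\ell)} \to w$ in the bulk, with $a,z$ staying away from $w, v_1,\dots,v_n$; the uniform statement then follows by the standard compactness argument used at the end of the proof of Proposition~\ref{prop: u_to_v-1}. Set
\[
f_\ell(z) := |u-v|^{-1}\bigl(f_{\double{u,v}\cdot\cvr}^{[\eta]}(a,z) - f_{\cvr}^{[\eta]}(a,z)\bigr),
\]
which is a $\cvr$-spinor away from a shrinking neighborhood of $w$ and satisfies the standard boundary conditions there. The crux is to determine the singular behavior of $f_\ell$ near $w$. Away from $w$, the factor $\chi_{u,v}$ converts the $\double{u,v}\cdot\cvr$-spinor $f_{\double{u,v}\cdot\cvr}^{[\eta]}(a,\cdot)$ into a genuine $\cvr$-spinor; I would use the expansion~(\ref{eq: chi_expansion}),
\[
|u-v|^{-1}\bigl(1 - (\chi_{u,v}(z))^{-1}\bigr) = -\frac{i\bar{\eta}_{uv}^2}{2(z-u)} + o(1),
\]
together with the local value $f_{\cvr}^{[\eta]}(a,u)$ of the unperturbed observable at the collision point. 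The outcome is that the singular part $\rho_\ell$ of $\chi_\ell^{-1} f_\ell$ near $w$ is governed by a simple pole with residue proportional to $f_{\cvr}^{[\eta]}(a,u)$; tracking the phase $\bar\eta_{uv}^2$ converts this into a $\cvr$-spinor with a $(z-u)^{-1}$ singularity, i.e.\ precisely a term of the form $f_{\cvr}^{[\theta]}(u,\cdot)$ with $\theta = -\tfrac{i}{2} f_{\cvr}^{[\eta]}(a,u)$, using the defining residue~(\ref{eq: f_residue}) of $f^{[\theta]}$. Applying Proposition~\ref{prop: unelegant_but_useful} with this $\chi_\ell$, $\rho_\ell$ and the regular remainder $g_\ell$ then yields convergence of $f_\ell$ to the unique $\cvr$-spinor with the prescribed singularity, which is $f_{\cvr}^{[\theta]}(u,\cdot)$.

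\textbf{Main obstacle.} The delicate point is the bookkeeping of the spinor structure and signs when passing from the $\double{u,v}\cdot\cvr$ double cover to $\cvr$ via $\chi_{u,v}$: one must verify that the residue coefficient arising from~(\ref{eq: chi_expansion}) combines with $f_{\cvr}^{[\eta]}(a,u)$ to give exactly the complex number $\theta = -\tfrac{i}{2} f_{\cvr}^{[\eta]}(a,u)$ with the correct phase convention for $\eta_{uv}$, rather than merely the correct modulus; this is the same subtlety flagged in Remark~\ref{rem: free-arcs-cond} and handled in Remark~\ref{rem: f_u_to_v_higher_terms}. I would pin down the sign by comparing the local asymptotics at $z \to u$ on both sides, using that $\chi_{u,v}(z)^{-1} \to 1$ away from $w$ while contributing the pole factor near $w$, and invoking uniqueness (Proposition~\ref{prop: Uniqueness_continuous}) to conclude that no additional regular-at-$w$ spinor can appear. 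The remaining verification that $g_\ell$ extends holomorphically across $w$ (so that Proposition~\ref{prop: unelegant_but_useful} applies) is routine given the asymptotics~(\ref{eq: f_residue}) and~(\ref{eq: bc_branchpoint}).
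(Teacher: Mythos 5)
Your overall strategy is the paper's (the same difference quotient $f_{u,v}(z)=|u-v|^{-1}\bigl(f_{\double{u,v}\cdot\cvr}^{[\eta]}(a,z)-f_{\cvr}^{[\eta]}(a,z)\bigr)$ fed into Proposition~\ref{prop: unelegant_but_useful} with $\chi_{\ell}=\chi_{u,v}$), but your identification of the singular part is incomplete, and this is not a sign-bookkeeping issue: it changes the answer. Near the collision point, $(\chi_{u,v})^{-1}f_{u,v}$ has \emph{two} singular contributions: (i) the term $|u-v|^{-1}\bigl(1-(\chi_{u,v}(z))^{-1}\bigr)f_{\cvr}^{[\eta]}(a,z)$ that you describe via (\ref{eq: chi_expansion}); and (ii) a simple pole at $u$ produced by the branch singularity of the perturbed observable itself: since $f_{\double{u,v}\cdot\cvr}^{[\eta]}(a,z)\sim\lambb\,f_{\double{u,v}\cdot\cvr}^{[\eta,\sharp]}(a,u)\,(z-u)^{-\frac{1}{2}}$ while $(\chi_{u,v}(z))^{-1}\sim(u-v)^{\frac{1}{2}}(z-u)^{-\frac{1}{2}}$, their product has a pole whose coefficient, after the $|u-v|^{-1}$ normalization, is of order one. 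This is why the paper subtracts $\frac{\lambb(u-v)^{1/2}}{z-u}f_{\double{u,v}\cdot\cvr}^{[\eta,\sharp]}(a,u)$ inside $g_{u,v}$ and evaluates this coefficient by Proposition~\ref{prop: u_to_v-1} combined with the anti-symmetry $f^{[\eta,\sharp]}(a,u)=-f^{[\sharp,\eta]}(u,a)$, see (\ref{eq: f_uv_usedintheproof}). Without this subtraction your $g_{\ell}$ is \emph{not} regular at $u$, so Proposition~\ref{prop: unelegant_but_useful} does not apply as stated; your claim that regularity of $g_{\ell}$ is routine from (\ref{eq: bc_branchpoint}) is backwards, since (\ref{eq: bc_branchpoint}) is precisely what tells you a singularity is present there with an a priori unknown coefficient.

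More importantly, contribution (i) alone yields the residue $-\frac{i}{2}\bar{\eta}_{uv}^{2}f_{\cvr}^{[\eta]}(a,u)$, which depends on the direction of approach through $\eta_{uv}$ — exactly what your phrase ``tracking the phase $\bar{\eta}_{uv}^{2}$'' suggests you would obtain. No phase-tracking converts this into a direction-independent limit. The paper's residue computation combines both contributions,
\[
\res_{z=u}\rho(z)\;=\;i\bar{\eta}_{uv}f_{\cvr}^{[\eta,\eta_{uv}]}(a,u)-\tfrac{i}{2}\bar{\eta}_{uv}^{2}f_{\cvr}^{[\eta]}(a,u)\;=\;\tfrac{i}{2}\overline{f_{\cvr}^{[\eta]}(a,u)}\;=\;\overline{\theta},
\]
and it is only through the cancellation of the $\bar{\eta}_{uv}^{2}$-parts of (i) and (ii) that the $\eta_{uv}$-dependence drops out and the limit exists and equals $f_{\cvr}^{[\theta]}(u,\cdot)$ with $\theta=-\frac{i}{2}f_{\cvr}^{[\eta]}(a,u)$. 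Note also that your statement that the residue is ``proportional to $f_{\cvr}^{[\eta]}(a,u)$'' cannot be correct: the true residue is proportional to its complex \emph{conjugate}. The concrete fix is to include the pole subtraction in $g_{u,v}$, evaluate its coefficient to leading order via Proposition~\ref{prop: u_to_v-1}, and carry out the residue cancellation above.
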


\begin{proof}
Define 
\begin{align*}
f_{u,v}(z) & :=|u-v|^{-1}\big(f_{\double{u,v}\cdot\cvr}^{[\eta]}(a,z)-f_{\cvr}^{[\eta]}(a,z)\big),\\
g_{u,v}(z) & :=|u-v|^{-1}\biggl((\chi_{u,v}(z))^{-1}f_{\double{u,v}\cdot\cvr}^{[\eta]}(a,z)-\frac{\lambb(u-v)^{\frac{1}{2}}}{z-u}f_{\double{u,v}\cdot\cvr}^{[\eta,\sharp]}(a,u)-f_{\cvr}^{[\eta]}(a,z)\biggr),
\end{align*}
note that the function $g_{u,v}$ indeed does not have singularities
neither at $v$ nor at $u$ (the latter follows from Definition~\ref{def: flat}).
Therefore, to apply Proposition~\ref{prop: unelegant_but_useful}
as above, it is enough to compute the limit, as $u\to v$, of the
functions 
\begin{align*}
\rho_{u,v}(z):= & (\chi_{u,v}(z))^{-1}f_{u,v}(z)-g_{u,v}(z)\\
= & |u-v|^{-1}\biggl(\frac{\lambb(u-v)^{\frac{1}{2}}}{z-u}f_{\double{u,v}\cdot\cvr}^{[\eta,\sharp]}(a,u)+(1-(\chi_{u,v}(z))^{-1})f_{\cvr}^{[\eta]}(a,z))\biggr).
\end{align*}
Due to Proposition \ref{prop: u_to_v-1} we have 
\begin{equation}
f_{\double{u,v}\cdot\cvr}^{[\eta,\sharp]}(a,u)=-f_{\double{u,v}\cdot\cvr}^{[\sharp,\eta]}(u,a)=-|u-v|^{\frac{1}{2}}(f_{\cvr}^{[\eta_{uv},\eta]}(u,a)+o(1)),\label{eq: f_uv_usedintheproof}
\end{equation}
where $\eta_{uv}$ is given by (\ref{eq: eta_uv}). Taking into account
(\ref{eq: chi_expansion}), we conclude that, as $u\to v$, 
\[
\rho_{u,v}(z)=\rho(z)+o(1),\quad\rho(z):=\frac{1}{z-u}\cdot\bigl(-i\bar{\eta}_{uv}f_{\cvr}^{[\eta_{uv},\eta]}(u,a)-\tfrac{i}{2}\bar{\eta}_{uv}^{2}f_{\cvr}^{[\eta]}(a,z)\bigr)
\]
uniformly on compact subsets of $B_{R}(v)\setminus\{v\}$. Finally,
observe that the function $\rho(z)$ has a simple pole at $u$ and
\begin{align*}
\res_{z=u}\rho(z) & \ =\ i\bar{\eta}_{uv}f_{\cvr}^{[\eta,\eta_{uv}]}(a,u)-\tfrac{i}{2}\bar{\eta}_{uv}^{2}f_{\cvr}^{[\eta]}(a,u)\\
 & \ =\ \tfrac{i}{2}\bigl(\bar{\eta}_{uv}\big(\bar{\eta}_{uv}f_{\cvr}^{[\eta]}(a,u)+\eta_{uv}\bar f{}_{\cvr}^{[\eta]}(a,u)\big)-\bar{\eta}_{uv}^{2}f_{\cvr}^{[\eta]}(a,u)\bigl)\ =\ \tfrac{i}{2}\bar f{}_{\cvr}^{[\eta]}(a,u)\ =\ \overline{\theta}.
\end{align*}
Proposition~\ref{prop: unelegant_but_useful} yields $f_{u,v}(z)=f_{\cvr}^{[\theta]}(v,z)+o(1)=f_{\cvr}^{[\theta]}(u,z)+o(1)$.
The claim follows. 
\end{proof}
We also derive similar asymptotics as one of the points $v_{1},\ldots,v_{n}$
approaches the wired part of $\partial\Omega$, this is crucial for
the proof of Proposition~\ref{prop: spin_bdry_behavior}. Let $w\in\fixed$
and $r>0$ be small enough so that $\Omega\cap B_{r}(w)=\H\cap B_{r}(w)$
as in Remark \ref{rem: Schwarz-reflection}. The similar asymptotics
in the general setup follows from this particular case by conformal
covariance. \begin{prop} \label{prop: _u_v_to_bdry} Let $\cvr=\double{\vv}$.
Under the above conditions, we have the following asymptotic expansions
as $u\to w\in\fixed\subset\R$: 
\begin{align}
f_{\double u\cdot\cvr}^{[\sharp]}(u,z) & =(2\im u){}^{\frac{1}{2}}f_{\double w\cdot\cvr}^{[1]}(w,z)+o((\im u){}^{\frac{1}{2}}),\label{eq: f_u_to_bdry}\\
f_{\double u\cdot\cvr}^{[\eta]}(a,z) & =f_{\double w\cdot\cvr}^{[\eta]}(a,z)+(2\im u)f_{\double w\cdot\cvr}^{[\theta]}(w,z)+o(\im u),\label{eq: f_v_to_bdry}
\end{align}
uniformly for $z$ (and $a$) away from $w,v_{1},\ldots,v_{n}$, and
for $w$ in compact subsets of $\fixed$, where $\theta:=-\frac{i}{2}f_{\cvr}^{[\eta]}(a,w)$.
More generally, the asymptotics (\ref{eq: f_u_to_bdry}) can be generalized
to the situation when each of the points $v_{q}$, $q=1,\ldots,n$,
either stays put or tends to a point $w_{q}\in\fixed$, where all
$w_{q}$ are distinct from $w$ and from each other. \end{prop}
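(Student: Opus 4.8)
The plan is to prove Proposition~\ref{prop: _u_v_to_bdry} by the same strategy used for Propositions~\ref{prop: u_to_v-1} and~\ref{prop: v_to_v}, namely by exhibiting a decomposition of the rescaled spinors near the collapsing singularity and invoking Proposition~\ref{prop: unelegant_but_useful}. The crucial structural fact to exploit is the Schwarz reflection described in Remark~\ref{rem: Schwarz-reflection}: since $\Omega\cap B_r(w)=\H\cap B_r(w)$ and $w\in\fixed$, the point $u\in\H$ approaching $w\in\R$ has its mirror image $\bar u$ tending to $w$ from below, and the double cover $\double{u}\cdot\cvr$ reflects to a cover branched at both $u$ and $\bar u$. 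Thus, after reflection, the situation near $w$ looks exactly like two branch points $u$ and $\bar u$ colliding at $w$ along a symmetric configuration. The key observation is that $|u-\bar u|=2\im u$, which explains the normalizing powers $(2\im u)^{1/2}$ and $(2\im u)$ appearing in~(\ref{eq: f_u_to_bdry}) and~(\ref{eq: f_v_to_bdry}); they are precisely the $|u-v|^{1/2}$ and $|u-v|$ of Propositions~\ref{prop: u_to_v-1} and~\ref{prop: v_to_v} with $v=\bar u$.

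First I would make the reflection explicit: extend the spinor $f^{[\sharp]}_{\double{u}\cdot\cvr}(u,\cdot)$ (respectively $f^{[\eta]}_{\double{u}\cdot\cvr}(a,\cdot)$) to a neighborhood of $w$ using the formulas of Remark~\ref{rem: Schwarz-reflection}. Under $\fixed$ boundary conditions the reflection relations~(\ref{eq: bc_fixed-1}) identify the continued object with a spinor on $\C_{\double{u,\bar u}\cdot\cvr}$ (with $\cvr$ reflected to be symmetric about $\R$), and one checks that the branch point at $\bar u$ carries a $\sharp$-type singularity conjugate to the one at $u$. Then I would set $f_{u}(z):=(2\im u)^{-1/2}f^{[\sharp]}_{\double{u}\cdot\cvr}(u,z)$, take $\chi_u=\chi_{u,\bar u}$ as in~(\ref{eq: chi_u_v_def}), and let $\rho_u(z):=\bar\eta_{u\bar u}/(z-u)$ with $\eta_{u\bar u}$ given by~(\ref{eq: eta_uv}); the remainder $g_u=\chi_u^{-1}f_u-\rho_u$ has removable singularities at $u$ and at $\bar u$ by the asymptotics~(\ref{eq: f_sharp_asymp}) and~(\ref{eq: bc_branchpoint}). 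As $u\to w$ we have $\eta_{u\bar u}\to\lambb|{-2i\im u}|^{1/2}(-2i\im u)^{-1/2}$, which evaluates to the real value giving the index $[1]$ in~(\ref{eq: f_u_to_bdry}); applying Proposition~\ref{prop: unelegant_but_useful} (or rather its reflected version, where the limiting spinor is forced to be real on $\fixed$ near $w$, hence the superscript $1$) yields the claim. For~(\ref{eq: f_v_to_bdry}), I would mirror the proof of Proposition~\ref{prop: v_to_v} verbatim, subtracting off $f^{[\eta]}_{\double{w}\cdot\cvr}(a,z)$, dividing by $2\im u$, and computing the residue of the limiting $\rho$; the same algebra that produced $\theta=-\tfrac{i}{2}f^{[\eta]}_\cvr(a,u)$ there produces $\theta=-\tfrac{i}{2}f^{[\eta]}_\cvr(a,w)$ here, using~(\ref{eq: f_uv_usedintheproof}) with $v=\bar u$ and $f^{[\eta,\sharp]}$ evaluated via reflection.

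The generalization to several points $v_q$ each tending to distinct boundary points $w_q$ (or staying put) is handled by Remark~\ref{rem: unelegant_many_points}: I would apply the multi-point version of Proposition~\ref{prop: unelegant_but_useful}, supplying at each $w_q$ the local decomposition just constructed (the reflected $\chi_{u_q,\bar u_q}$ and $\rho_{u_q}$) and, at points that remain fixed, the trivial decomposition $\chi=1$. Since the singularities are at mutually separated locations, the hypotheses of the multi-point statement are met and the conclusion follows with the stated uniformity.

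The main obstacle I anticipate is bookkeeping the signs and the branch choices under reflection. The quantities $f^{[\sharp]}$, $\eta_{u\bar u}$, and the reflected spinor on $\double{u,\bar u}\cdot\cvr$ are each defined only up to sign (cf. Definition~\ref{def: sharp} and Remark~\ref{rem: Signs}), and the relations~(\ref{eq: schwarz-f,fdag}) must be applied consistently so that the limiting object genuinely satisfies $\fixed$ boundary conditions at $w$ with the correct index $[1]$ rather than $[i]$. Verifying that $\eta_{u\bar u}$ degenerates to a \emph{real} multiple (so that the superscript is $1$, matching the reality of $f$ on $\fixed$ predicted by~(\ref{eq: bc_fixed})) is the delicate point, and it is exactly where the $\fixed$ (as opposed to $\free$) condition enters; the reflection formula~(\ref{eq: bc_fixed-1}), not~(\ref{eq: bc_free-1}), is what forces this. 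Everything else is a transcription of the already-completed arguments for the bulk collision.
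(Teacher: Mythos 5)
Your proposal is correct and takes essentially the same route as the paper's proof: Schwarz reflection across $\fixed$ near $w$, the substitution $v:=\bar u$ (so that $|u-\bar u|=2\im u$) into the arguments of Propositions \ref{prop: u_to_v-1} and \ref{prop: v_to_v} (with the already-proved (\ref{eq: f_u_to_bdry}) supplying the analogue of (\ref{eq: f_uv_usedintheproof})), the observation that $\eta_{u\bar u}$ is real \textemdash{} the paper simply notes $\eta_{u\bar u}=1$ and $f_{\double w\cdot\cvr}^{[1]}(w,\cdot)=f_{\double w\cdot\cvr}(w,\cdot)$ on $\fixed\subset\R$ \textemdash{} and Remark \ref{rem: unelegant_many_points} for the multi-point claim. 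One small slip: $u-\bar u=2i\,\im u$ rather than $-2i\,\im u$, and with this sign $\eta_{u\bar u}=\lambb\cdot i^{-\frac{1}{2}}=\pm1$, confirming (up to the inherent sign ambiguity) the real value and index $[1]$ you correctly asserted.
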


\begin{rem} Recall that $\cvr(w)$, where $w\in\fixed$, denotes
the double cover of~$\Omega$ that branches over the \emph{boundary
component} to which $w$ belongs. In particular, the functions $f_{\double w\cdot\cvr}^{[\theta]}(w,\cdot)$,
being extended into a punctured neighborhood of $w$ by the Schwarz
reflection, do not branch over the point $w$ itself. \end{rem}

\begin{proof} Since $f_{\double u\cdot\cvr}^{[\sharp]}(u,z)\in\R$
and $f_{\double u\cdot\cvr}^{[\eta]}(a,z)\in\R$ for $z\in\R\cap\fixed$,
we can continue these spinors analytically to $B_{r}(w)\setminus\{u,\bar u\}$.
To prove (\ref{eq: f_u_to_bdry}), we follow the proof of Proposition
\ref{prop: u_to_v-1} substituting $v:=\bar u$; note that $\eta_{u\bar u}=1$
and that $f_{\double w\cdot\cvr}^{[1]}(w,z)=f_{\double w\cdot\cvr}(w,z)$
for $w\in\fixed\subset\R$, see Remark~\ref{rem: Schwarz-reflection}.
Similarly, to prove (\ref{eq: f_v_to_bdry}), we follow the proof
of Proposition~\ref{prop: v_to_v} with $v:=\bar u$; the only change
being that we use (\ref{eq: f_u_to_bdry}) instead of Proposition~\ref{prop: u_to_v-1}
to derive the proper analogue of (\ref{eq: f_uv_usedintheproof}).

To prove the final claim, i.e., to justify the asymptotics~(\ref{eq: f_u_to_bdry})
as $u\to w$ and (some of) $v_{q}\to w_{q}$, we use Remark \ref{rem: unelegant_many_points}.
The singularities near the points $w_{q}$ can be handled as in the
proof of (\ref{eq: f_v_to_bdry}); here we only need to know that
the leading term of~(\ref{eq: f_u_to_bdry}) satisfies the standard
boundary conditions everywhere in a vicinity of $w_{q}$ and remains
bounded at~$w_{q}$. Hence, the limit of $(2\im u)^{-\frac{1}{2}}f_{\double u\cdot\cvr}^{[\sharp]}(u,z)$
satisfies standard boundary conditions everywhere except at the point
$w$. The singularity near $w$ can be handled exactly in the same
way as in the proof of (\ref{eq: f_u_to_bdry}). The further details
are left to the reader. \end{proof}

\subsection{Proof of Proposition \ref{prop: multiplicative-normalisation} and
Proposition~\ref{prop: spin_bdry_behavior}.}

\label{subsec: proof-normalization}

Recall that we used Proposition~\ref{prop: multiplicative-normalisation}
and~\ref{prop: spin_bdry_behavior} in Section~\ref{subsec: Def_corr_cont}
to fix (and to justify the existence of) the coherent multiplicative
normalization of the spin correlations~$\CorrO{\sigma_{v_{1}}\ldots\sigma_{v_{n}}}$.
The proofs given below are based upon the results of Section~\ref{subsec: asymptotics-as-spins-collide}.
We start with a relatively straightforward corollary of Propositions~\ref{prop: u_to_v-1}\textendash \ref{prop: _u_v_to_bdry}
on the asymptotics of the relevant coefficients $\coefA$ (see Definition~\ref{def: coefA})
when some of the points $v_{1},\ldots,v_{n}$ collide or approach
$\fixed\subset\partial\Omega$.

\begin{prop} \label{prop: coefA_vary} Let $v\in\Omega$ and $\cvr:=\double{\vv}$.
The coefficients $\coefA$ satisfy the following asymptotic expansions
as $u\to v$: 
\begin{align}
\coefA_{\double{u,v}\cdot\cvr}(u) & \ =\ -\frac{1}{4(u-v)}+\frac{\eta_{uv}^{2}}{4}f_{\cvr}^{\star}(u,u)+o(1),\label{eq: A_u_to_v}\\
\coefA_{\double{u,v}\cdot\cvr}(v_{q}) & \ =\ \coefA_{\cvr}(v_{q})+o(1),\quad q=1,\dots,n,\label{eq: A_v_to_v}
\end{align}
uniformly over $u,v$ in the bulk of~$\Omega$ and away from $v_{1},\dots,v_{n}$.
Similarly, in the setup of Proposition \ref{prop: _u_v_to_bdry} we
have 
\begin{equation}
\coefA_{\double u\cdot\cvr}(u)=-\frac{1}{4(u-\bar u)}+o(1),\label{eq: A_u_to_bdry}
\end{equation}
as $u\to w\in\fixed\subset\R$ and each of the points $v_{q}$, $q=1,\ldots,n$,
either stays put or tends to a point $w_{q}\in\fixed$, where all
$w_{q}$ are distinct from $w$ and from each other. \end{prop}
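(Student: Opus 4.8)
The plan is to derive the asymptotic expansions of $\coefA$ directly from the expansions of the fermionic observables $f^{[\sharp]}$, $f^{[\eta]}$ established in Propositions \ref{prop: u_to_v-1}, \ref{prop: v_to_v}, and \ref{prop: _u_v_to_bdry}, by reading off the appropriate Taylor coefficients in the local expansion near the relevant branching point. Recall from Definition \ref{def: coefA} that $\coefA_{\Omega,\cvr}(v)$ is precisely the normalized second coefficient in the expansion of $f_{\Omega,\cvr}^{[\reg]}(v,z)$ as $z\to v$, namely $f^{[\reg]}(v,z)=\lamb(z-v)^{-\frac12}(1+2\coefA(v)(z-v)+o(z-v))$. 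So in each case the task reduces to expanding the relevant spinor to second order in $(z-u)$ (or $(z-v_q)$) and extracting the ratio of the two leading coefficients.

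For \eqref{eq: A_u_to_v}, I would take the expansion of $f^{[\sharp]}_{\double{u,v}\cdot\cvr}(u,z)$ near $z=u$. The spinor $f^{[\sharp]}(u,z)$ branches over both $u$ and $v$, which are colliding, so the key is to factor out the function $\chi_{u,v}(z)$ from \eqref{eq: chi_u_v_def} and use the refined asymptotics from Remark \ref{rem: f_u_to_v_higher_terms} (or re-run the argument of Proposition \ref{prop: u_to_v-1} to one more order). The pole term $-\tfrac{1}{4(u-v)}$ should come out of the $(\chi_{u,v})^{-1}$ prefactor: from \eqref{eq: chi_expansion} one has $\chi_{u,v}(z)^{-1}=1-\tfrac{i\bar\eta_{uv}^2}{2}|u-v|(z-u)^{-1}+\dots$, and the collision of the second branch point $v$ contributes a singular-in-$(u-v)$ shift to the second coefficient. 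The regular contribution $\tfrac{\eta_{uv}^2}{4}f^\star_\cvr(u,u)$ should then come from the next term in the expansion of $f^{[\eta_{uv}]}_\cvr(u,z)$ near $z=u$, since by \eqref{eq: f_eta_decomp} and \eqref{eq: f_fdag_expansion} the function $f^{[\eta_{uv}]}(u,z)=\tfrac12(\bar\eta_{uv}f(u,z)+\eta_{uv}f^\star(u,z))$ has the regular part governed by $f^\star(u,u)$. I would expand carefully and collect the $(z-u)^{1/2}$ coefficient against the $(z-u)^{-1/2}$ coefficient.

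For \eqref{eq: A_v_to_v} and \eqref{eq: A_u_to_bdry}, the arguments are structurally simpler. For \eqref{eq: A_v_to_v}, one evaluates $\coefA$ at a \emph{fixed} point $v_q$ not involved in the collision; since $f^{[\sharp]}_{\double{u,v}\cdot\cvr}(v_q,\cdot)$ converges uniformly on a neighborhood of $v_q$ to $f^{[\sharp]}_\cvr(v_q,\cdot)$ by the $u\to v$ version of Proposition \ref{prop: v_to_v} (applied with $a=v_q$, noting the correction term is $O(|u-v|)$ and analytic near $v_q$), the local expansion coefficients converge and hence $\coefA_{\double{u,v}\cdot\cvr}(v_q)\to\coefA_\cvr(v_q)$. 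For \eqref{eq: A_u_to_bdry}, I would use \eqref{eq: f_u_to_bdry} and its refinement from Proposition \ref{prop: _u_v_to_bdry}, substituting $v:=\bar u$ exactly as in the proof of that proposition; the pole term $-\tfrac{1}{4(u-\bar u)}$ arises from the same $\chi_{u,\bar u}$ mechanism (with $\eta_{u\bar u}=1$), while the regular term drops to $o(1)$ because the boundary-reflected spinor $f_{\double w\cdot\cvr}(w,\cdot)$ remains bounded and contributes no extra leading singularity.

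The main obstacle I anticipate is the bookkeeping in \eqref{eq: A_u_to_v}: correctly tracking the $(z-u)^{1/2}$-coefficient through the product $\chi_{u,v}(z)\cdot(\rho_{u,v}(z)+g_{u,v}(z))$ requires expanding $\chi_{u,v}$ to second order in $(z-u)$ \emph{while} the two branch points $u,v$ are at distance $|u-v|\to0$, so the two small parameters $(z-u)$ and $(u-v)$ interact. One must be careful that the $o(1)$ error is genuinely uniform and that the regular part $f^\star_\cvr(u,u)$ is isolated cleanly from the singular pole, rather than contaminated by cross-terms coming from the collision. I would organize this by working on the double cover with $\chi_{u,v}$ factored out so that the remaining factor is a genuine function admitting an ordinary Taylor expansion, then multiply back and match coefficients; the anti-symmetry relation \eqref{eq: antysymm_f_fdag} and the reality of $f^\star(u,u)$ (from \eqref{eq: f_fdag_expansion}) will be needed to recognize the final form.
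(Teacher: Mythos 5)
Your strategy is viable, but it is genuinely different in mechanism from the paper's, and the difference matters. The paper never performs a local Taylor expansion at the collision point: it represents the coefficient as a contour integral, using $(z-v)^{\frac12}(u-v)^{-\frac12}=1+\tfrac12(z-u)/(u-v)+O((z-u)^2)$ to obtain $\coefA_{\double{u,v}\cdot\cvr}(u)+\tfrac{1}{4(u-v)}=\tfrac{\lambb}{4\pi i}\oint_{(u)}(u-v)^{-\frac12}(z-v)^{\frac12}(z-u)^{-\frac32}f_{\double{u,v}\cdot\cvr}^{[\sharp]}(u,z)\,dz$, then moves the contour to a \emph{definite} distance from $u,v$ (the integrand has no pole at $v$) and applies only the leading-order Proposition \ref{prop: u_to_v-1} on that far contour; the residue computation via (\ref{eq: f_eta_decomp}) and (\ref{eq: f_fdag_expansion}) then delivers $\tfrac{\eta_{uv}^{2}}{4}f_{\cvr}^{\star}(u,u)$. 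Your plan instead matches coefficients in the product $\chi_{u,v}\cdot(\rho_{u,v}+g_{u,v})$ at $z=u$, which does work and gives $2\coefA_{\double{u,v}\cdot\cvr}(u)=\eta_{uv}\,g_{u,v}(u)-\tfrac{1}{2(u-v)}$ with $g_{u,v}(u)\to\tfrac{\eta_{uv}}{2}f_{\cvr}^{\star}(u,u)$. But the input you cite is off: Remark \ref{rem: f_u_to_v_higher_terms} refines the expansion in powers of $|u-v|$ with $z$ held \emph{away} from the collision, not in powers of $(z-u)$ near it. What your route actually requires is uniform convergence of the regularized part $g_{u,v}$ on a full neighborhood of the collision point, including the point itself; this is available, but only from the maximum-principle step buried inside the proof of Proposition \ref{prop: unelegant_but_useful}, not from the stated propositions. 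The paper's contour deformation buys precisely the avoidance of any control at the collision point — this is why it needs nothing beyond the first-order asymptotics of Proposition \ref{prop: u_to_v-1}.

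Two further soft spots need repair. First, for (\ref{eq: A_v_to_v}) you cannot literally apply Proposition \ref{prop: v_to_v} ``with $a=v_q$'': there $a$ is a bulk point carrying the simple-pole ($\eta$-type) normalization and lies away from the ramification points, whereas $v_q$ is a ramification point. One must first express $f^{[\sharp]}(v_q,\cdot)$ through the two-variable observable $f(\zeta,\cdot)$ by a contour integral in $\zeta$ around $v_q$ (as in (\ref{eq: def_f_any})) and apply Proposition \ref{prop: v_to_v} with $\zeta$ on that contour — this is exactly the paper's double contour integral, and your parenthetical about the $O(|u-v|)$ correction then does the rest. Second, in the boundary case (\ref{eq: A_u_to_bdry}) your justification proves too little: boundedness of the reflected spinor only shows the regular contribution is $O(1)$, while the statement asserts it is $o(1)$. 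The constant term vanishes for a structural reason, not a boundedness one: after Schwarz reflection with $\eta_{u\bar u}=1$, the limiting object is $f_{\double w\cdot\cvr}^{[1]}(w,\cdot)=f_{\double w\cdot\cvr}(w,\cdot)$ (Remark \ref{rem: Schwarz-reflection}), whose diagonal expansion $f(w,z)=2(z-w)^{-1}+O(z-w)$ has \emph{no} constant term by the anti-symmetry encoded in (\ref{eq: f_fdag_expansion}). With that one-line fix, and the contour-integral detour in the second point, your argument closes.
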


\begin{proof} Note that $(z-v)^{\frac{1}{2}}/(u-v)^{\frac{1}{2}}=1+\tfrac{1}{2}(z-u)/(u-v)+O((z-u)^{2})$
as $z\to u$. Therefore, the asymptotic expansion (\ref{eq: def_A})
of the spinor~$f_{\double{u,v}\cdot\cvr}^{[\sharp]}(u,z)$ near $u$
yields 
\[
\mathop{\res}\limits _{z=u}\frac{(z-v)^{\frac{1}{2}}f_{\double{u,v}\cdot\cvr}^{[\sharp]}(u,z)}{(u-v)^{\frac{1}{2}}(z-u)^{\frac{3}{2}}}\ =\ \lambda\cdot\biggl(2\coefA_{\double{u,v}\cdot\cvr}(u)+\frac{1}{2(u-v)}\biggr)
\]
and thus 
\[
\coefA_{\double{u,v}\cdot\cvr}(u)+\frac{1}{4(u-v)}\ =\ \frac{\lambb}{4\pi i}\oint_{(u)}\frac{(z-v)^{\frac{1}{2}}f_{\double{u,v}\cdot\cvr}^{[\sharp]}(u,z)}{(u-v)^{\frac{1}{2}}(z-u)^{\frac{3}{2}}}dz.
\]
Since the integrand does not have a pole at $v$, the contour of integration
can be moved at a definite distance from $u$ and $v$. Applying Proposition~\ref{prop: u_to_v-1}
for $z$ on this contour and using the fact that $(z-v)/(z-u)=1+o(1)$
as $u\to v$, we see that 
\begin{align*}
\coefA_{\double{u,v}\cdot\cvr}(u)+\frac{1}{4(u-v)} & \ =\ \frac{\eta_{uv}}{4\pi i}\oint_{(u,v)}\frac{(z-v)^{\frac{1}{2}}(f_{\cvr}^{[\eta_{uv}]}(u,z)+o(1))}{(z-u)^{\frac{3}{2}}}dz\\
 & \ =\ \frac{\eta_{uv}}{4\pi i}\oint_{(u,v)}\frac{f_{\cvr}^{[\eta_{uv}]}(u,z)}{z-u}dz+o(1)\ =\ \frac{\eta_{uv}^{2}}{4}f_{\cvr}^{\star}(u,u)+o(1),
\end{align*}
the last identity follows from~(\ref{eq: f_eta_decomp}) and~(\ref{eq: f_fdag_expansion}).
This proves~(\ref{eq: A_u_to_v}).


\smallskip{}

To prove~(\ref{eq: A_v_to_v}), note that 
\begin{align*}
\coefA_{\double{u,v}\cdot\cvr}(v_{q})-\coefA_{\cvr}(v_{q})\  & =\ \frac{\lambb}{4\pi i}\oint_{(v_{q})}\frac{f_{\double{u,v}\cdot\cvr}^{[\sharp]}(v_{q},z)-f_{\cvr}^{[\sharp]}(v_{q},z)}{(z-v_{q})^{\frac{3}{2}}}dz\\
 & =\ -\frac{1}{8\pi^{2}}\oint_{(v_{q},\zeta)}\oint_{(v_{q})}\frac{(f_{\double{u,v}\cdot\cvr}(\zeta,z)-f_{\cvr}(\zeta,z))d\zeta}{(\zeta-v_{q})^{\frac{1}{2}}(z-v_{q})^{\frac{3}{2}}}dz
\end{align*}
where the contours of integration can be moved at a definite distance
from $v_{q}$ and from each other. Due to Proposition~\ref{prop: v_to_v}
(and (\ref{eq: f_f_dagger_back})), the right-hand side vanishes as
$u\to v$.

\smallskip{}

To prove~(\ref{eq: A_u_to_bdry}), recall that both spinors $f_{\double u\cdot\cvr}^{[\sharp]}(u,z)$
and $f_{\double w\cdot\cvr}^{[1]}(w,z)=f_{\double w\cdot\cvr}(w,z)$
satisfy the standard boundary conditions and thus can be analytically
continued (as functions of $z$) into a vicinity of the point $w\in\fixed\subset\R$
using the Schwarz reflection principle; see Remark~\ref{rem: Schwarz-reflection}.
(Note that along this procedure, $f_{\double u\cdot\cvr}^{[\sharp]}(u,\cdot)$
gains another branching point at $v:=\bar u$.) Repeating the proof
of~(\ref{eq: A_u_to_v}) with the asymptotics~(\ref{eq: f_u_to_bdry})
instead of~(\ref{eq: f_u_to_v}) we obtain 
\[
\coefA_{\double u\cdot\cvr}(u)+\frac{1}{4(u-\bar u)}\ =\ \frac{1}{4\pi i}\oint_{(w)}\frac{f_{\double w\cdot\cvr}(w,z)}{z-w}dz+o(1)\ =\ o(1),
\]
the last identity follows from~(\ref{eq: f_fdag_expansion}) and
Remark~\ref{rem: Schwarz-reflection}. \end{proof}
\begin{proof}[Proof of Proposition~\ref{prop: multiplicative-normalisation}]
 Note that we actually need to prove the following statement: for
each even $n$, there exists a constant $C_{n}$ such that for each
$v,v_{1},\ldots,v_{n}\in\Omega$ one has 
\begin{equation}
\log\CorrO{\sigma_{u}\sigma_{v}\sigma_{\cvr}}-\log\CorrO{\sigma_{\cvr}}=-\tfrac{1}{4}\log|u-v|+C_{n}+o(1)\label{eq: x_spin_to_spin}
\end{equation}
as $u\to v$, where $\sigma_{\cvr}=\sigma_{v_{1}}\ldots\sigma_{v_{n}}$
and the functions $\log\CorrO{\sigma_{u}\sigma_{v}\sigma_{\cvr}}$
and $\log\CorrO{\sigma_{\cvr}}$ are defined by~(\ref{eq: def_spincorr_up_to_constant})
up to unknown integration constants. Indeed, these asymptotics allows
one to choose the integration constants in~(\ref{eq: def_spincorr_up_to_constant})
recursively so that (\ref{eq: x_spin_to_spin}) hold with $C_{n}=0$
for all $n$. With this choice, we have (\ref{eq: spins_coherent})
for all $n$. Actually, Lemma~\ref{lem: spin_to_spin} given below
provides even sharper asymptotics~(\ref{eq: lem_spin_to_spin}),
thus concluding the proof. 
\end{proof}
\begin{lem}
\label{lem: spin_to_spin} Let $n$ be even, $v,v_{1},\dots,v_{n}\in\Omega$
and $\sigma_{\cvr}:=\sigma_{v_{1}}\ldots\sigma_{v_{n}}$. Then, the
following asymptotics hold as $u\to v$: 
\begin{equation}
\log\CorrO{\sigma_{u}\sigma_{v}\sigma_{\cvr}}-\log\CorrO{\sigma_{\cvr}}=-\tfrac{1}{4}\log|u-v|+C_{n}+\tfrac{i}{4}f_{\cvr}^{\star}(v,v)|u-v|+o(|u-v|)\label{eq: lem_spin_to_spin}
\end{equation}
uniformly over $v,v_{1},\ldots,v_{n}$ in the bulk of $\Omega$ and
away from each other, where $\log\CorrO{\sigma_{u}\sigma_{v}\sigma_{\cvr}}$
and $\log\CorrO{\sigma_{\cvr}}$ are defined by~(\ref{eq: def_spincorr_up_to_constant})
up to unknown integration constants and the constant $C_{n}$ in the
right-hand side of~(\ref{eq: lem_spin_to_spin}) is independent of
$u,v,v_{1},\dots,v_{n}$. 
\end{lem}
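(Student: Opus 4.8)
The plan is to integrate the exact form $\formL$ through the collision and read off the expansion from Proposition~\ref{prop: coefA_vary}. Recall from Corollary~\ref{cor: spin_ratio} that $\formL^{(n+2,\Omega)}$ is exact, so $\log\CorrO{\sigma_{u}\sigma_{v}\sigma_{\cvr}}$ is a genuine function of $(u,v,v_{1},\dots,v_{n})$, defined up to a single additive constant, with $d\log\CorrO{\sigma_{u}\sigma_{v}\sigma_{\cvr}}=\re[\coefA_{\double{u,v}\cdot\cvr}(u)\,du+\coefA_{\double{u,v}\cdot\cvr}(v)\,dv+\sum_{q}\coefA_{\double{u,v}\cdot\cvr}(v_{q})\,dv_{q}]$, cf.~(\ref{eq: def_spincorr_up_to_constant}). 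First I would introduce the auxiliary function $\Psi(u,v):=\log\CorrO{\sigma_{u}\sigma_{v}\sigma_{\cvr}}-\log\CorrO{\sigma_{\cvr}}+\tfrac14\log|u-v|$ and, freezing $v_{1},\dots,v_{n}$, compute $d\Psi=\re[P\,du]+\re[Q\,dv]$, where $P:=\coefA_{\double{u,v}\cdot\cvr}(u)+\tfrac{1}{4(u-v)}$ and $Q:=\coefA_{\double{u,v}\cdot\cvr}(v)-\tfrac{1}{4(u-v)}$ (using $\tfrac14\,d\log|u-v|=\re[\tfrac{du-dv}{4(u-v)}]$).

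The key input is Proposition~\ref{prop: coefA_vary}: by~(\ref{eq: A_u_to_v}) one has $P=\tfrac14\eta_{uv}^{2}f_{\cvr}^{\star}(u,u)+o(1)$, and applying the same estimate with $u$ and $v$ interchanged (the cover $\double{u,v}\cdot\cvr$ is symmetric in $u,v$) gives $Q=\tfrac14\eta_{vu}^{2}f_{\cvr}^{\star}(v,v)+o(1)$. In particular $P,Q$ stay bounded as $u\to v$, so $\Psi$ is Lipschitz up to the diagonal. Using $\lambb^{2}=i$ I record $\eta_{uv}^{2}=i\,\overline{(u-v)}/|u-v|$ and $\eta_{vu}^{2}=-\eta_{uv}^{2}$; cf.~(\ref{eq: eta_uv}). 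Integrating $d\Psi$ along the straight segment from $v$ to $u$ with $v$ frozen, a direct computation (using $f_{\cvr}^{\star}(z,z)\in i\R$ from~(\ref{eq: f_fdag_expansion}) and continuity of $z\mapsto f_{\cvr}^{\star}(z,z)$) shows that the limit $C:=\lim_{u\to v}\Psi(u,v)$ exists, is independent of the direction of approach, and satisfies $\Psi(u,v)=C+\tfrac{i}{4}f_{\cvr}^{\star}(v,v)|u-v|+o(|u-v|)$, which is exactly~(\ref{eq: lem_spin_to_spin}) with $C_{n}:=C$.

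It then remains to show that $C=C(v;v_{1},\dots,v_{n})=\Psi(v,v)$ depends neither on the spectators $v_{q}$ nor on the collision point $v$. For the spectators I would move one $v_q$ along a path and write the corresponding increment of $\Psi(u,v)$ as $\int\re[(\coefA_{\double{u,v}\cdot\cvr}(w)-\coefA_{\cvr}(w))\,dw]$, with $w$ the moving point and the cover updated accordingly; by~(\ref{eq: A_v_to_v}) the integrand is $o(1)$ uniformly as $u\to v$, so the increment vanishes in the limit. The $v$-dependence is the delicate point and the main obstacle: one cannot simply differentiate $\Psi(v,v)$, since $P,Q$ have no limit on the diagonal (the factor $\eta_{uv}^{2}$ is direction-dependent). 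The resolution is to compare $\Psi(v',v')$ with $\Psi(v,v)$ along the two-segment path $(v,v)\rightsquigarrow(v',v)\rightsquigarrow(v',v')$, integrating $\re[P\,du]$ on the first leg and $\re[Q\,dv'']$ on the second. Each leg contributes $\pm\tfrac{i}{4}f_{\cvr}^{\star}(v,v)|v'-v|+o(|v'-v|)$, and because $\eta_{vu}^{2}=-\eta_{uv}^{2}$ the two contributions cancel, giving $C(v')-C(v)=o(|v'-v|)$; as $\Omega$ is connected, $C$ is constant in $v$. Hence $C_{n}$ depends only on $n$ (and $\Omega$).

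Finally, the uniformity asserted in Proposition~\ref{prop: coefA_vary} propagates through all of these fixed-length path integrals, yielding the uniform statement of the lemma; specializing to $n$ spectators this also supplies~(\ref{eq: x_spin_to_spin}) and thereby completes the proof of Proposition~\ref{prop: multiplicative-normalisation}.
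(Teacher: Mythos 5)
Your proof is correct and follows essentially the same route as the paper: integrate the asymptotics (\ref{eq: A_u_to_v}) of $\coefA_{\double{u,v}\cdot\cvr}(u)$ radially to obtain the expansion with a constant $C(v,\cvr)$, use the antisymmetry $\eta_{vu}^{2}=-\eta_{uv}^{2}$ together with continuity of $f_{\cvr}^{\star}(z,z)$ to kill the $v$-dependence, and use (\ref{eq: A_v_to_v}) to kill the dependence on the spectators $v_{1},\dots,v_{n}$. Your two-leg path $(v,v)\rightsquigarrow(v',v)\rightsquigarrow(v',v')$ is just an explicit integrated version of the paper's observation that $(\pa_{u}+\pa_{v})\log\CorrO{\sigma_{u}\sigma_{v}\sigma_{\cvr}}=o(1)$, so the cancellation mechanism is identical.
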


\begin{proof}
First, consider the correlation~$\CorrO{\sigma_{u}\sigma_{v}\sigma_{\varpi}}$
as a function of $u$ only. Definition~(\ref{eq: def_spincorr_up_to_constant})
and the asymptotics (\ref{eq: A_u_to_v}) imply that, as $u\to v$,
\begin{align*}
\log\CorrO{\sigma_{u}\sigma_{v}\sigma_{\varpi}}\  & =\ \int^{u}\re[\coefA_{\double{u,v}\cdot\cvr}(u)du]\\
 & =\ \int^{u}\re\biggl[\biggl(-\frac{1}{4(u-v)}+\frac{i}{4}\frac{\bar{u-v}}{|u-v|}f_{\cvr}^{\star}(v,v)+o(1)\biggr)du\biggr]\\
 & =\ -\frac{1}{4}\log|u-v|+C(v,\cvr)+\frac{i}{4}|u-v|f_{\cvr}^{\star}(v,v)+o(1),
\end{align*}
uniformly over $v$ in the bulk and away from other marked points,
where $C(v,\cvr)$ is a real constant. Allowing now both~$u$ and~$v$
to move and using~(\ref{eq: A_u_to_v}) again we see that 
\[
(\pa_{u}+\pa_{v})\log\CorrO{\sigma_{u}\sigma_{v}\sigma_{\varpi}}\ =\ \coefA_{\double{u,v}\cdot\cvr}(u)+\coefA_{\double{u,v}\cdot\cvr}(v)%
\ =\ o(1)\quad\text{as}\ \ u\to v,
\]
since $\eta_{uv}^{2}=-\eta_{vu}^{2}$ and $f_{\cvr}^{\star}(u,u)=f_{\cvr}^{\star}(v,v)+o(1)$.
Therefore, 
$C(v,\cvr)=:C(\cvr)$ in fact does not depend on $v$. Due to (\ref{eq: A_v_to_v})
we also know that, for each $q=1,\dots,n$, one has 
\[
\pa_{v_{q}}\log\CorrO{\sigma_{u}\sigma_{v}\sigma_{\varpi}}\ =\ \coefA_{\double{u,v}\cdot\cvr}(v_{q})\ =\ \coefA_{\cvr}(v_{q})+o(1)\ =\ \pa_{v_{q}}\log\CorrO{\sigma_{\varpi}}+o(1)
\]
as $u\to v$, which yields $C(\cvr)=\log\CorrO{\sigma_{\cvr}}+C_{n}$,
with $C_{n}$ independent of $u,v,v_{1},\dots,v_{n}$. 
\end{proof}
\begin{proof}[Proof of Proposition \ref{prop: spin_bdry_behavior}]
 By conformal covariance, we may assume that 
\[
\Omega\cap B_{r}(w_{q})=(\re w_{q}+\H)\cap B_{r}(w_{q})
\]
for $r>0$ small enough. Write $x_{q}+iy_{q}:=v_{n+q}-w_{q}$ for
$q=1,\ldots,d$. It follows from (\ref{eq: A_u_to_bdry}) that, as
$v_{n+1}\to w_{1}$,...,$v_{n+d}\to w_{d}$, we have 
\[
\re\biggl[\sum_{q=1}^{d}\coefA_{\cvr}(v_{n+q})dv_{n+q}\biggr]=\re\biggl[\sum_{q=1}^{d}\frac{i\cdot dv_{n+q}}{8y_{q}}\biggr]+o(1)=-\sum_{q=1}^{d}\frac{dy_{q}}{8y_{q}}+o(1).
\]
Viewing $v_{1},\dots,v_{n}$ as fixed parameters and $v_{n+1},\dots,v_{n+d}$
as variables, we obtain 
\[
\log\CorrO{v_{1}\ldots v_{n+d}}=-\int\biggl[\sum_{q=1}^{d}\frac{dy_{q}}{8y_{q}}+o(1)\biggr]=-\frac{1}{8}\sum_{q=1}^{d}\log y_{q}+C+o(1),
\]
where the constant $C$ might depend on $v_{1},\ldots,v_{n}$. Since
$\crad_{\Omega}(v_{n+q})=\log(2y_{q})+o(1)$ as $v_{n+q}\to w_{q}$,
this justifies the existence of the limit in~(\ref{eq: prop_spin_bdry}). 
\end{proof}

\subsection{Proof of Theorem~\ref{thm: fusion rules-3}}

\label{subsec: fusion-proofs-3}

In this section we move back to the fusion rules and prove the less
trivial part of them \textendash{} Theorem~\ref{thm: fusion rules-3}.
Actually, most of the work was already done above and we can now benefit
from the results of Section~\ref{subsec: asymptotics-as-spins-collide}
and Lemma~\ref{lem: spin_to_spin}.
\begin{proof}[Proof of (\ref{eq: fuse_mu_sigma})]
 According to the discussion given in Section~\ref{subsec: fusion-proofs-1},
it is enough to consider the case $\Op=\Op(\psi^{\any})\sigma_{\cvr}$.
The Pfaffian formula for fermions (which extends to $\mu_{\wo}=\psi_{\wo}^{\sharp}\sigma_{\wo}$,
see Definition~\ref{def: Disorders}) allows us to write 
\begin{equation}
\frac{\ccor{\mu_{\wo}\sigma_{\wt}\Op(\psi^{\any})\scvr}}{\ccor{\sigma_{\wo}\sigma_{\wt}\scvr}}\ =\ \sum_{\widehat{p}=1}^{k}(-1)^{\widehat{p}+1}\frac{\ccor{\mu_{\wo}\sigma_{\wt}\psi_{z_{\widehat{p}}}^{\any}\scvr}}{\ccor{\sigma_{\wo}\sigma_{\wt}\scvr}}\Pf\left[\frac{\ccor{\sigma_{\wo}\sigma_{\wt}\psi_{z_{p}}^{\any}\psi_{z_{q}}^{\any}\scvr}}{\ccor{\sigma_{\wo}\sigma_{\wt}\scvr}}\right]_{p,q\neq\widehat{p}}.\label{eq: mu_sigma_expansion}
\end{equation}
It follows from Proposition \ref{prop: v_to_v} that 
\[
\frac{\ccor{\sigma_{\wo}\sigma_{\wt}\psi_{z_{p}}^{\any}\psi_{z_{q}}^{\any}\scvr}}{\ccor{\sigma_{\wo}\sigma_{\wt}\scvr}}=\frac{\ccor{\psi_{z_{p}}^{\any}\psi_{z_{q}}^{\any}\scvr}}{\ccor{\scvr}}+O(|\wo-\wt|)
\]
and from Proposition \ref{prop: u_to_v-1} and Remark \ref{rem: f_u_to_v_higher_terms}
(note that these asymptotics easily extends from $\psi_{z_{\widehat{p}}}$
to $\psistar{z_{\widehat{p}}}$ and $\psi_{z_{\widehat{p}}}^{\any}$
by taking the complex conjugation and expanding in $\eta$) that 
\[
\frac{\ccor{\mu_{\wo}\sigma_{\wt}\psi_{z_{\widehat{p}}}^{\any}\scvr}}{\ccor{\sigma_{\wo}\sigma_{\wt}\scvr}}=|\wo-\wt|^{\frac{1}{2}}\biggl(\frac{\ccor{\psi_{\wt}^{\eta_{\wo\wt}}\psi_{z_{\widehat{p}}}^{\any}\scvr}}{\ccor{\scvr}}+O(|\wo-\wt|)\biggr).
\]
Plugging these asymptotics into (\ref{eq: mu_sigma_expansion})and
wrapping the sum back into a Pfaffian gives 
\[
\frac{\ccor{\mu_{\wo}\sigma_{\wt}\Op(\psi^{\any})\scvr}}{\ccor{\sigma_{\wo}\sigma_{\wt}\scvr}}\ =\ |\wo-\wt|^{\frac{1}{2}}\biggl(\frac{\ccor{\psi_{\wt}^{\eta_{\wo\wt}}\Op(\psi^{\any})\scvr}}{\ccor{\scvr}}+O(|\wo-\wt|)\biggr).
\]
Due to Lemma~\ref{lem: spin_to_spin} we also know that $\ccor{\sigma_{\wt}\sigma_{\wo}\scvr}=|\wo-\wt|^{-\frac{1}{4}}(\ccor{\scvr}+O(|\wo-\wt|))$.
Thus, 
\[
\ccor{\mu_{\wo}\sigma_{\wt}\Op(\psi^{\any})\scvr}\ =\ |\wo-\wt|^{\frac{1}{4}}\big(\ccor{\psi_{\wt}^{\eta_{\wo\wt}}\Op(\psi^{\any})\scvr}+O(|\wo-\wt|)\big).\qedhere
\]
\end{proof}
\smallskip{}

\begin{proof}[Proof of (\ref{eq: fuse_sigma_sigma})]
 First, note that Lemma~\ref{lem: spin_to_spin} provides a proof
of~(\ref{eq: fuse_sigma_sigma}) in the case $\Op=\scvr$. Indeed,
the normalization~(\ref{eq: spins_coherent}) of $\CorrO{\Op(\sigma)}$
was chosen so that all the constants $C_{n}$ in the asymptotics~(\ref{eq: lem_spin_to_spin})
are equal to zero. Exponentiating, we see that 
\[
\frac{\ccor{\sigma_{\wo}\sigma_{\wt}\scvr}}{\ccor{\scvr}}\ =\ |\wo-\wt|^{-\frac{1}{4}}\exp\left[\frac{1}{2}\frac{\ccor{\en_{\wt}\scvr}}{\ccor{\scvr}}|\wo-\wt|+o(|\wo-\wt|)\right]\quad\text{as}\ \ \wo\to\wt,
\]
which, by the Taylor expansion of the exponential, is the same as
(\ref{eq: fuse_sigma_sigma}).

We now handle the general case~$\Op=\Op(\psi^{\any})\scvr$. To this
end, we apply Proposition~\ref{prop: v_to_v} with $a=z_{1}$ and~$u=\wt$.
As $f_{\cvr}^{[\theta]}(\wt,z)=\frac{1}{2}(\bar{\theta}f_{\cvr}(\wt,z)+\theta f_{\cvr}^{\star}(\wt,z))$
and $\theta=-\frac{i}{2}f_{\cvr}^{[\eta]}(z_{1},\wt)$, the conclusion
can be written as follows: 
\begin{multline*}
\frac{\ccor{\sigma_{\wo}\sigma_{\wt}\psi_{z_{2}}\psi_{z_{1}}^{\eta}\scvr}}{\ccor{\sigma_{\wo}\sigma_{\wt}\scvr}}\ =\ \frac{\ccor{\psi_{z_{2}}\psi_{z_{1}}^{\eta}\scvr}}{\ccor{\scvr}}\\
+\frac{i|\wo-\wt|}{4}\left(\frac{\ccor{\psistar w\psi_{z_{1}}^{\eta}\scvr}}{\ccor{\scvr}}\cdot\frac{\ccor{\psi_{z_{2}}\psi_{w}\scvr}}{\ccor{\scvr}}\right.-\left.\frac{\ccor{\psi_{w}\psi_{z_{1}}^{\eta}\scvr}}{\ccor{\scvr}}\cdot\frac{\ccor{\psi_{z_{2}}\psistar w\scvr}}{\ccor{\scvr}}\right)+o(|\wo-\wt|).
\end{multline*}
By taking complex conjugates and expanding in $\eta$, the same identity
holds true with $\psi_{z_{2}}$ replaced by $\psistar{z_{2}}$ and/or
$\psi_{z_{1}}^{\eta}$ replaced by any of $\psi_{z_{1}}$, $\psistar{z_{1}}$.
Therefore, 
\begin{multline*}
\frac{\ccor{\sigma_{\wo}\sigma_{\wt}\Op(\psi^{\any})\scvr}}{\ccor{\sigma_{\wo}\sigma_{\wt}\scvr}}\ =\ \Pf\left[\frac{\ccor{\sigma_{\wo}\sigma_{\wt}\psi_{z_{p}}^{\any}\psi_{z_{q}}^{\any}\scvr}}{\ccor{\sigma_{\wo}\sigma_{\wt}\scvr}}\right]\ =\ \\
Pf\left[\frac{\ccor{\psi_{z_{p}}^{\any}\psi_{z_{q}}^{\any}\scvr}}{\ccor{\scvr}}\right.\\
\left.+\frac{i|\wo-\wt|}{4}\left(\frac{\ccor{\psi_{w}\psi_{z_{q}}^{\any}\scvr}}{\ccor{\scvr}}\frac{\ccor{\psistar w\psi_{z_{p}}^{\any}\scvr}}{\ccor{\scvr}}-\frac{\ccor{\psi_{w}\psi_{z_{p}}^{\any}\scvr}}{\ccor{\scvr}}\frac{\ccor{\psistar w\psi_{z_{q}}^{\any}\scvr}}{\ccor{\scvr}}\right)+o(|\wo-\wt|)\right].
\end{multline*}
This Pfaffian can be further expanded as follows: 
\begin{multline*}
\frac{\ccor{\sigma_{\wo}\sigma_{\wt}\Op(\psi^{\any})\scvr}}{\ccor{\sigma_{\wo}\sigma_{\wt}\scvr}}\ =\ \Pf\biggl[\frac{\ccor{\psi_{z_{p}}^{\any}\psi_{z_{q}}^{\any}\scvr}}{\ccor{\scvr}}\biggr]\\
+\ \frac{i|\wo-\wt|}{4}\!\!\sum_{1\le\widehat{p}<\widehat{q}\le k}(-1)^{\widehat{p}+\widehat{q}+1}\biggl(\frac{\ccor{\psi_{w}\psi_{z_{\widehat{q}}}^{\any}\scvr}}{\ccor{\scvr}}\frac{\ccor{\psistar w\psi_{z_{\widehat{p}}}^{\any}\scvr}}{\ccor{\scvr}}\\
-\frac{\ccor{\psi_{w}\psi_{z_{\widehat{p}}}^{\any}\scvr}}{\ccor{\scvr}}\frac{\ccor{\psistar w\psi_{z_{\widehat{q}}}^{\any}\scvr}}{\ccor{\scvr}}\biggr)\cdot\Pf\biggl[\frac{\ccor{\psi_{z_{\widehat{p}}}^{\any}\psi_{z_{\hat{q}}}^{\any}\scvr}}{\ccor{\scvr}}\biggr]_{p,q\not\in\{\widehat{p},\widehat{q}\}}\!\!+\ o(|\wo-\wt|)\,.
\end{multline*}
The sum is one term short from being the expansion of the Pfaffian
\[
\Pf\biggl[\frac{\ccor{\psi_{\alpha}\psi_{\beta}\scvr}}{\ccor{\scvr}}\biggr]_{\psi_{\alpha},\psi_{\beta}\in\{\psi_{w},\psistar w,\psi_{z_{1}}^{\any},\dots,\psi_{z_{k}}^{\any}\}}=\ \frac{\ccor{\psi_{w},\psistar w\Op(\psi^{\any})\scvr}}{\ccor{\scvr}}\ =\ -2i\frac{\ccor{\en_{w}\Op(\psi^{\any})\scvr}}{\ccor{\scvr}}
\]
in the rows corresponding to $\psi_{w},\psistar w$; the missing term
is 
\[
\frac{\ccor{\psi_{w}\psistar w\scvr}}{\ccor{\scvr}}\cdot\Pf\left[\frac{\ccor{\psi_{z_{p}}^{\any}\psi_{z_{q}}^{\any}\scvr}}{\ccor{\scvr}}\right]=-2i\frac{\ccor{\en_{w}\scvr}}{\ccor{\scvr}}\cdot\frac{\ccor{\Op(\psi^{\any})\scvr}}{\ccor{\scvr}}.
\]
Therefore, we have 
\begin{multline*}
\frac{\ccor{\sigma_{\wo}\sigma_{\wt}\Op(\psi^{\any})\scvr}}{\ccor{\sigma_{\wo}\sigma_{\wt}\scvr}}\ =\ \frac{\ccor{\Op(\psi^{\any})\scvr}}{\ccor{\scvr}}\left(1-\frac{1}{2}|\wo-\wt|\frac{\ccor{\en_{w}\scvr}}{\ccor{\scvr}}\right)\\
+\frac{1}{2}|\wo-\wt|\frac{\ccor{\en_{w}\Op(\psi^{\any})\scvr}}{\ccor{\scvr}}+o(|\wo-\wt|).
\end{multline*}
Since we already know that ${\ccor{\sigma_{\wo}\sigma_{\wt}\scvr}}=|\wo-\wt|^{-\frac{1}{4}}({\ccor{\scvr}}+\frac{1}{2}|\wo-\wt|{\ccor{\en_{w}\scvr}}+o(|\wo-\wt|))$,
we finally get the same result for ${\ccor{\sigma_{\wo}\sigma_{\wt}\Op(\psi^{\any})\scvr}}$. 
\end{proof}
\global\long\def\zo{\widehat{z}}
 \global\long\def\zt{z}

\begin{proof}[Proof of (\ref{eq: fuse_mu_mu})]
 This is a straightforward computation based upon the fusion rule~(\ref{eq: fuse_sigma_sigma})
discussed above and the definition of the correlation $\ccor{\mu_{\wo}\mu_{\wt}\Op}$.
Note that 
\[
\ccor{\mu_{\wo}\mu_{\wt}\Ob}=\frac{\lamb}{2\pi i}\oint_{(\wt)}\frac{(z-\wo)^{\frac{1}{2}}\ccor{\mu_{\wo}\psi_{z}\sigma_{\wt}\Ob}}{(\wt-\wo)^{\frac{1}{2}}(z-\wt)^{\frac{1}{2}}}dz.
\]
Since the integrand does not have a pole at $z=\wo$, we can move
the contour of integration to a definite distance from $\wo,\wt$.
Treating $\mu_{\wo}$ similarly, we obtain the formula 
\[
\ccor{\mu_{\wo}\mu_{\wt}\Ob}=\frac{1}{(2\pi i)^{2}}\oint_{(\wo,\wt,\zo)}\oint_{(\wo,\wt)}G_{\wo,\wt}(\zo,\zt)\ccor{\psi_{\zo}\psi_{\zt}\sigma_{\wo}\sigma_{\wt}\Ob}d\zo d\zt,
\]
where 
\[
G_{\wo,\wt}(\zo,\zt):=(\wt-\wo)^{-1}\frac{(\zt-\wo)^{\frac{1}{2}}(\zo-\wt)^{\frac{1}{2}}}{(\zt-\wt)^{\frac{1}{2}}(\zo-\wo)^{\frac{1}{2}}}\,.
\]
This expression is already suitable for the asymptotic analysis but
it is convenient to anti-symmetrize in $\zo,\zt$. It follows from
(\ref{eq: fuse_psi_psi}) that 
\[
\res_{\zo=\zt}G_{\wo,\wt}(\zo,\zt)\ccor{\psi_{\zo}\psi_{\zt}\sigma_{\wo}\sigma_{\wt}\Ob}\ =\ 2(\wt-\wo)^{-1}\ccor{\sigma_{\wo}\sigma_{\wt}\Ob}
\]
does not depend on~$\zt$. Therefore, the satellite integral vanishes
and one can exchange the contours of integration over $\zo$ and $\zt$.
Together with the anti-symmetry of~$\ccor{\psi_{\zo}\psi_{\zt}\sigma_{\wo}\sigma_{\wt}\Ob}$
in $\zo$, $\zt$, this observation implies that

\[
\ccor{\mu_{\wo}\mu_{\wt}\Ob}=\frac{1}{2(2\pi i)^{2}}\oint_{(\wo,\wt,\zo)}\oint_{(\wo,\wt)}(G_{\wo,\wt}(\zo,\zt)-G_{\wo,\wt}(\zt,\zo))\ccor{\psi_{\zo}\psi_{\zt}\sigma_{\wo}\sigma_{\wt}\Ob}d\zo d\zt.
\]
A simple computation gives the following asymptotics, as $\wo\to\wt$:
\[
G_{\wo,\wt}(\zo,\zt)-G_{\wo,\wt}(\zt,\zo)=\frac{1}{\zt-\wt}-\frac{1}{\zo-\wt}+\frac{\wo-\wt}{2(\zt-\wt)^{2}}-\frac{\wo-\wt}{2(\zo-\wt)^{2}}+O((\wo-\wt)^{2}).
\]
Using the fusion rule~(\ref{eq: fuse_sigma_sigma}), we arrive at
the expansion 
\begin{align*}
\ccor{\mu_{\wo}\mu_{\wt}\Ob}\  & =\ \frac{|\wo-\wt|^{-\frac{1}{4}}}{2(2\pi i)^{2}}\left[\,\oint_{(\wt,\zo)}\oint_{(\wt)}\left(\frac{1}{\zt-\wt}-\frac{1}{\zo-\wt}\right)\ccor{\psi_{\zo}\psi_{\zt}\Ob}d\zo d\zt\right.\\
 & +\ \frac{1}{2}(\wo-\wt)\oint_{(\wt,\zo)}\oint_{(\wt)}\left(\frac{1}{(\zt-\wt)^{2}}-\frac{1}{(\zo-\wt)^{2}}\right)\ccor{\psi_{z}\psi_{\zeta}\Ob}d\zo d\zt\\
 & +\ \left.\frac{1}{2}|\wo-\wt|\oint_{(\wt,\zo)}\oint_{(\wt)}\left(\frac{1}{\zt-\wt}-\frac{1}{\zo-\wt}\right)\ccor{\psi_{\zo}\psi_{\zt}\en_{w}\Ob}d\zo d\zt+o(|\wo-\wt|)\right].
\end{align*}
It remains to compute all the integrals using the fusion rules~(\ref{eq: fuse_psi_psi})\textendash (\ref{eq: fuse_psi_en}):
\begin{align*}
\oint_{(\wt,\zo)}\oint_{(\wt)}\frac{\ccor{\psi_{\zo}\psi_{\zt}\Ob}}{\zt-\wt}d\zo d\zt\  & =\ \oint_{(\wt)}0\,d\zt\ =\ 0\,,\\
\oint_{(\wt,\zo)}\oint_{(\wt)}\frac{\ccor{\psi_{\zo}\psi_{\zt}\Ob}}{\zo-\wt}d\zo d\zt\  & =\ 2\pi i\oint_{(\wt)}\ccor{\psi_{\wt}\psi_{\zt}\Ob}d\zt\ =\ -2(2\pi i)^{2}\ccor{\Ob}\,,\\
\oint_{(\wt,\zo)}\oint_{(\wt)}\frac{\ccor{\psi_{\zo}\psi_{\zt}\Ob}}{(\zt-\wt)^{2}}d\zo d\zt\  & =\ \oint_{(\wt)}0\,d\zt\ =\ 0\,,\\
\oint_{(\wt,\zo)}\oint_{(\wt)}\frac{\ccor{\psi_{\zo}\psi_{\zt}\Ob}}{(\zo-\wt)^{2}}d\zo d\zt\  & =\ 2\pi i\oint_{(\wt)}\frac{-2\ccor{\Ob}}{(\zo-\wt)^{2}}d\zo\ =\ 0\,,\\
\oint_{(\wt,\zo)}\oint_{(\wt)}\frac{\ccor{\psi_{\zo}\psi_{\zt}\en_{\wt}\Ob}}{\zt-\wt}d\zo d\zt\  & =\ 2\pi i\oint_{(w)}\frac{i\ccor{\psistar{\wt}\psi_{\zt}\Op}}{\zt-\wt}d\zt\ =\ -2(2\pi i)^{2}\ccor{\en_{\wt}\Op}\,,\\
\oint_{(\wt,\zo)}\oint_{(\wt)}\frac{\ccor{\psi_{\zo}\psi_{\zt}\en_{\wt}\Ob}}{\zo-\wt}d\zo d\zt\  & =\ 2\pi i\oint_{(\wt)}\frac{(-2\ccor{\en_{\wt}\Ob}+i\ccor{\psi_{\zo}\psistar w\Ob})}{\zo-\wt}d\zo\\
 & =\ (2\pi i)^{2}(-2\ccor{\en_{\wt}\Ob}+2\ccor{\en_{\wt}\Ob})\ =\ 0\,.
\end{align*}
Putting everything together gives the result. 
\end{proof}
\newpage{}

\section{Explicit formulae for continuous correlations}

\label{sec:Explicit-formulae}

\subsection{Simply connected domains.}

In this subsection, we give explicit formulae for the continuous correlation
functions 
\[
\ccor{\Op(\psi,\mu,\sigma,\en)}_{\Omega}\text{ or }\ccor{\Op(\sigma,\en)}_{\Omega,\bcond}
\]
in the case of a simply connected $\Omega$. By Theorem \ref{thm: ccov},
it suffices to give explicit formulae in the case $\Omega=\H$, the
upper half-plane. Moreover, it suffices to compute all the correlations
of the form 
\begin{equation}
\ccor{\sigma_{v_{1}}\dots\sigma_{v_{n}}}_{\H}\text{ and }\ccor{\sigma_{v_{1}}\dots\sigma_{v_{n}}\mu_{u_{1}}\dots\mu_{u_{m}}}_{\H};\label{eq: ccor_enough_to_compute}
\end{equation}
the rest, in principle, can be obtained therefrom by a straightforward
algorithmic procedure; in fact, even the $m=2$ case is enough. Indeed,
once we know explicit formulae for the correlations in (\ref{eq: ccor_enough_to_compute}),
the fusion rule (\ref{eq: fuse_mu_sigma}) yields an explicit formula
for 
\[
\ccor{\psi_{z}^{\eta_{1}}\psi_{w}^{\eta_{2}}\sigma_{v_{1}}\dots\sigma_{v_{n-2}}}_{\H};
\]
by Definition \ref{def: any fermions} and Proposition \ref{prop: cont_corr_conjugation},
this leads to an explicit formula for $\ccor{\Op(\psi,\sigma)}_{\H};$
by Definitions \ref{eq: def_mu_many} and \ref{def: energy}, this
extends to $\ccor{\Op(\psi,\mu,\sigma,\en)}_{\H}$, and, finally,
Definitions \ref{def: bcond_corr_continuous} and \ref{def: bcondtilde}
allow to treat $\ccor{\Op(\sigma,\en)}_{\H,\bcond}$.

We note that although, in principle, the procedure is straightforward,
in practice it involves tedious computations, and sometimes different
ways to compute the same correlation lead to different formulae that
are not obviously equivalent. For example, the multi-point fermion
correlation can be obtained by fusing spins and disorders in pairs
starting from (\ref{eq: spindisord}) below; when obtained this way,
it is not immediately obvious that the resulting formula has Pfaffian
structure.

Note that the domain $\H$, as usual, is endowed with a subdivision
of $\pa\H=\R$ into two subsets $\fixed$ and $\free$, implicit in
our notation. Let $-\infty<b_{1}<b_{2}<\dots<b_{2q-1}<b_{2q}<+\infty$
be the endpoints of the free arcs. (We assume that $\infty$ belongs
to $\fixed$; the case of $\fixed=\emptyset$ can be treated similarly
or by a limiting procedure $\free=(b_{1},b_{2}),$ $b_{1}\to-\infty$,
$b_{2}\to+\infty$.) Let $u_{1},\dots,u_{m},v_{1},\dots,v_{n}$ be
the points in the bulk. We intorduce the following notation: denote
by $N=n+q+m$ the total number of marked points, and for $1\leq i\leq N$,
put 
\begin{equation}
(a_{i},\hat{a}_{i}):=\begin{cases}
(b_{2i-1},b_{2i}), & 1\leq i\leq q;\\
(v_{i-q},\bar v_{i-q}), & q<i\leq q+n\\
(u_{i-q-n},\bar u_{i-q-n}), & q+n<i\leq N
\end{cases}.\label{eq: def_a_i}
\end{equation}
We denote by $\chi_{ij}$ the cross-ratio
\[
\chi_{ij}=\frac{(a_{i}-a_{j})(\hat{a}_{i}-\hat{a}_{j})}{(a_{i}-\hat{a}_{j})(\hat{a}_{i}-a_{j})},\quad1\leq i\leq N.
\]
Note that $\chi_{ij}>0$ if $i,j\leq q$ or $i,j\ge q+1;$ otherwise,
$|\chi_{ij}|=1$.
\begin{thm}
\label{thm: simply_conn_explicit}Let $n$ be even. In the above notation,
we have 
\begin{equation}
\ccor{\sigma_{v_{1}}\dots\sigma_{v_{n}}}_{\H}=\prod_{i=1}^{n}(\im v_{i})^{-\frac{1}{8}}\left(K^{n}\cdot\frac{\sum_{s\in\{\pm1\}^{n+q}}\prod_{1\leq i<j\leq q+n}\chi_{ij}^{\frac{s_{i}s_{j}}{4}}}{\sum_{s\in\{\pm1\}^{q}}\prod_{1\leq i<j\leq q}\chi_{ij}^{\frac{s_{i}s_{j}}{4}}}\right)^{\frac{1}{2}};\label{eq: spins_explicit}
\end{equation}
\begin{multline}
\ccor{\sigma_{v_{1}}\dots\sigma_{v_{n}}\mu_{u_{1}},\dots,\mu_{u_{m}}}_{\H}\\
=\prod_{i=1}^{n}(\im v_{i})^{-\frac{1}{8}}\prod_{i=1}^{m}(\im u_{i})^{-\frac{1}{8}}\left(-K^{n+m}\cdot\frac{\sum_{s\in\{\pm1\}^{N}}\disSign(s)\prod_{1\leq i<j\leq N}\chi_{ij}^{\frac{s_{i}s_{j}}{4}}}{\sum_{s\in\{\pm1\}^{q}}\prod_{1\leq i<j\leq q}\chi_{ij}^{\frac{s_{i}s_{j}}{4}}}\right)^{\frac{1}{2}},\label{eq: spindisord}
\end{multline}
where $K=2^{-\frac{3}{4}}$ and $\disSign(s)=\left(\prod_{p=q+n+1}^{N}s_{p}\right)$.
The same formulae hold for odd $n$ with plus boundary conditions
on the fixed boundary arcs.
\end{thm}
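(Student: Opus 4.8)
The strategy is to reduce the statement to explicit solutions of the Riemann boundary value problems defining the continuous fermionic observables $f_{\H,\cvr}^{[\eta]}$ in the upper half-plane, and then to integrate the form $\formL$ to recover spin correlations. By the conformal covariance established in Theorem~\ref{thm: ccov} and the discussion at the start of this subsection, it suffices to treat $\Omega=\H$ and to prove the two formulae~(\ref{eq: spins_explicit}) and~(\ref{eq: spindisord}). The key analytic input is the explicit diagonalization of the observables: for a double cover $\cvr=\double{u_1,\dots,u_m,v_1,\dots,v_n}$ in $\H$, the holomorphic spinor $f_{\H,\cvr}(z_1,z_2)$ satisfying the standard boundary conditions can be written down using products of square roots $\prod_i\bigl((z-a_i)(z-\hat a_i)\bigr)^{\pm\frac12}$ (compare Example~\ref{exa: f_half-plane_explicit}), where the Schwarz reflection of Remark~\ref{rem: Schwarz-reflection} identifies $\hat a_i=\bar a_i$ for bulk points and the free-arc endpoints $b_{2i-1},b_{2i}$ supply the remaining ramification structure on $\R$.

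First I would record the ansatz for $f_{\H,\cvr}^{[\sharp]}(v,z)$ as a finite sum over sign choices $s\in\{\pm1\}^{N}$, with each summand a branch of the multivalued function $\prod_{i<j}\chi_{ij}^{s_is_j/4}$ times an elementary algebraic factor; the standard boundary conditions~(\ref{eq: bc_fixed})\textendash(\ref{eq: bc_free_arc}) and Proposition~\ref{prop: Uniqueness_continuous} force this to be \emph{the} observable up to normalization. Extracting the coefficient $\coefA_{\H,\cvr}(v)$ from the expansion~(\ref{eq: def_A}) reduces to a logarithmic-derivative computation: one checks that $2\re[\coefA_{\H,\cvr}(v_i)\,dv_i]$ is the total differential of $\log$ of the ratio of theta-like sums appearing in~(\ref{eq: spins_explicit}). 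Integrating this form as in Definition~(\ref{eq: def_spincorr_up_to_constant}) and fixing the multiplicative constant via the coherence normalization of Proposition~\ref{prop: multiplicative-normalisation} then yields~(\ref{eq: spins_explicit}), with the constant $K=2^{-3/4}$ pinned down by matching against the fusion asymptotics~(\ref{eq: fuse_sigma_sigma}) as two spins collide.

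For~(\ref{eq: spindisord}), I would exploit the mechanism explained at the beginning of the section: the disorder correlation arises by adjoining disorder points $u_1,\dots,u_m$ to the double cover and using that each $\mu_u=\psi_u^\sharp\sigma_u$ contributes a ramification point together with the sign factor $\disSign(s)=\prod_{p>q+n}s_p$ that encodes the antisymmetry of the fermionic sector (Lemma~\ref{lem: anti-symmetry}). Concretely, the same sum-over-signs structure governs both~(\ref{eq: spins_explicit}) and~(\ref{eq: spindisord}), the only difference being the insertion of $\disSign(s)$ and the enlarged index range $N=n+q+m$; the overall sign $-K^{n+m}$ and the prefactors $(\im u_i)^{-1/8}$ are again fixed by the conformal weight $\tfrac18$ of the disorder field and by the fusion rule~(\ref{eq: fuse_mu_sigma}) relating $\mu_w\sigma_v$ back to a fermion.

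\textbf{Main obstacle.} The hard part will be verifying that the proposed sum-over-signs ansatz genuinely satisfies the standard boundary conditions \emph{including the delicate sign consistency} $c_2=ic_1$ of~(\ref{eq: bc_free_arc}) at every free-arc endpoint, rather than merely the weaker $|c_1|=|c_2|$ warned about in Remark~\ref{rem: free-arcs-cond}; this is precisely where a naive algebraic guess can fail, and it is what makes the denominator $\sum_{s\in\{\pm1\}^q}\prod_{i<j\le q}\chi_{ij}^{s_is_j/4}$ (a partition function over the free boundary data) appear. Once the ansatz is shown to obey all four boundary conditions and the correct residue at $v$, uniqueness does the rest; the remaining labor is the bookkeeping that turns the logarithmic derivative of the theta-sum into $\coefA$, which is routine but lengthy.
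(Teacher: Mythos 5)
Your overall scheme---reduce to $\H$ by conformal covariance, solve the boundary value problem for the observable explicitly, read off the coefficient $\coefA$ as a logarithmic derivative, integrate via (\ref{eq: def_spincorr_up_to_constant}), and pin the constants using (\ref{eq: spins_coherent}) and the fusion rules---is indeed the paper's scheme. But your central step is underspecified in a way that matters: you posit the observable $f^{[\sharp]}_{\H,\cvr}(v_1,\cdot)$ itself as a sum over $s\in\{\pm1\}^{N}$ of branches of $\prod_{i<j}\chi_{ij}^{s_is_j/4}$, yet the $\chi_{ij}$ depend only on the marked points, so such a sum cannot by itself be the $z$-dependent spinor; the needed $z$-dependent structure is never written down, and verifying all boundary conditions for a guess of that shape is exactly the hard work you defer. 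The paper instead takes the rational ansatz $f^{[\sharp]}(v_1,z)=Q(z)\prod_{i=1}^{q+n}\bigl((z-a_i)(z-\hat a_i)\bigr)^{-\frac12}$ with $Q$ a real polynomial of degree less than $q+n$: this automatically satisfies (\ref{eq: bc_fixed})--(\ref{eq: bc_free}) and the decay at infinity, and the remaining conditions---including the sign consistency in (\ref{eq: bc_free_arc}) that you rightly flag as delicate---become the explicit linear system (\ref{eq: spin-lin}), solved by Cramer's rule as in the appendix of \cite{ChelkakHonglerIzyurov}. The sums over signs then arise as determinants of this system, i.e.\ at the level of the correlation functions, not of the observable.

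The genuine gap is in your treatment of (\ref{eq: spindisord}): you assert that ``uniqueness does the rest,'' but Proposition \ref{prop: Uniqueness_continuous} applies only when the punctures carry spin-type data. At the disorder points the relevant conditions read $\re\bigl(\lim_{z\to u_j}\lamb(z-u_j)^{\frac12}f(z)\bigr)=0$ rather than the imaginary-part condition, and for such data the boundary value problem can be degenerate, so the linear system need not be invertible and uniqueness fails in general---the paper explicitly warns of this. It circumvents the problem by computing the determinant of the system in closed form (up to a non-vanishing prefactor it is the numerator of (\ref{eq: spindisord})), checking non-vanishing when the $u_i$ are pairwise close via (\ref{eq: fuse_mu_mu}), and extending to general positions by real-analyticity of both sides, which follows from the Pfaffian representation (\ref{eq: psi_mu_sigma_pfaff}) together with Cramer's rule. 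Without this step, or a substitute for it, your argument cannot identify the observable---hence $\coefA$---in the disorder case, and (\ref{eq: spindisord}) remains unproven; your normalization-by-fusion endgame, which matches the paper's (induction sending $u_m\to u_{m-1}$ with (\ref{eq: fuse_mu_mu})), only kicks in once that identification is secured.
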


\begin{proof}[Proof of (\ref{eq: spins_explicit})]
We have to check that the logarithmic partial derivatives of the
expression in (\ref{eq: spins_explicit}) with respect to $\re v_{1}$
and $\im v_{1}$ match the real and imaginary part of the coefficient
$\coefA_{\H,v_{1},\dots,v_{n}},$ as defined in (\ref{eq: def_A}).
To compute the coefficient $\coefA_{\H,v_{1},\dots,v_{n}}$, we need
to start with the observable $f_{\H,\cvr}^{\sharp}(v_{1},z),$ where
$\cvr=\cvr(v_{1},\dots,v_{n}).$ Recall that it is the unique holomorphic
$\cvr$-spinor that is real on the real line, satisfies $f_{\H,\cvr}^{\sharp}(v_{1},z)=O(\frac{1}{z})$
at infinity, the conditions (\ref{eq: bc_fixed}\textendash \ref{eq: bc_free},
\ref{eq: bc_free_arc}), the conditions (\ref{eq: bc_branchpoint})
at $v_{2},\dots,v_{n}$, and the condition (\ref{eq: f_sharp_asymp})
at $v_{1}$. We will look for $f_{\H,\cvr}^{\sharp}(v_{1},z)$ in
the form 
\[
f_{\H,\cvr}^{\sharp}(v_{1},z)=\frac{Q(z)}{\prod_{i=1}^{q+n}\sqrt{(z-a_{i})(z-\hat{a}_{i})}},
\]
where $Q$ is a polynomial of degree less than $q+n$. Note that for
any $Q$ with real coefficients, this expression readily satisfies
(\ref{eq: bc_fixed}\textendash \ref{eq: bc_free}) and the condition
at infinity; the conditions (\ref{eq: bc_free_arc}), (\ref{eq: bc_branchpoint})
at $v_{2},\dots,v_{n}$, and (\ref{eq: f_sharp_asymp}) thus give
$q+n$ linear conditions for $q+n$ coefficients. Note that under
(\ref{eq: bc_free_arc}), $2\pi i\sum_{z\in\H}\res_{z}(f_{\H,\cvr}^{\sharp}(v_{1},\cdot))^{2}$
must be real, and under (\ref{eq: bc_branchpoint}), we have $2\pi i\res_{v_{j}}(f_{\H,\cvr}^{\sharp}(v_{1},\cdot))^{2}\in\R,$
$j>2$, so that also $2\pi i\res_{v_{1}}(f_{\H,\cvr}^{\sharp}(v_{1},\cdot))^{2}\in\R.$
This means that the conditions (\ref{eq: f_sharp_asymp}) can be replaced
with $\im(\lim_{z\to v_{1}}\lambda(z-v_{1})^{\frac{1}{2}}f_{\H,\cvr}^{\sharp}(v_{1},\cdot))=1.$
The condition (\ref{eq: bc_branchpoint}) can be written as $\im\lim(\lambda(z-v_{i})^{\frac{1}{2}}f_{\H,\cvr}^{\sharp}(v_{1},z))=0,$
and the condition (\ref{eq: bc_free_arc}) reads $\lim_{z\to b_{2i-1}}(z-b_{2i-1})^{\frac{1}{2}}f_{\H,\cvr}^{\sharp}(v_{1},z)=\lim_{z\to b_{2i}}(z-b_{2i})^{\frac{1}{2}}f_{\H,\cvr}^{\sharp}(v_{1},z).$
Recalling (\ref{eq: def_a_i}), the resulting linear system can be
written as 
\begin{equation}
\frac{Q(a_{i})}{\prod_{j\ne i}\sqrt{(a_{i}-a_{j})(a_{i}-\hat{a}_{j})}}+\frac{Q(\hat{a}_{i})}{\prod_{j\ne i}\sqrt{(\hat{a}_{i}-a_{j})(\hat{a}_{i}-\hat{a}_{j})}}=\sqrt{2\im v_{1}}\cdot\ind_{i=q+1},\,1\leq i\leq q+n.\label{eq: spin-lin}
\end{equation}
This is formally the same system as the one arising when $q=0,$ i.
e., for spin correlations with $+$ boundary condition. We refer the
reader to \cite[ Appendix]{ChelkakHonglerIzyurov}, where this system
has been solved by Cramer's rule, and it was shown that the resulting
coefficient $\coefA_{\H,v_{1},\dots,v_{n}}$ indeed matches the logarithmic
derivative of (\ref{eq: spins_explicit}). (Note that the denominator
in (\ref{eq: spins_explicit}) does not depend on $v_{1}$.)

It remains to check that the expression in (\ref{eq: spins_explicit})
satisfies the normalization (\ref{eq: spins_coherent}). This is straightforward
from the facts that as $v_{1}\to v_{2}$, we have $\chi_{q+1,q+2}\sim|v_{1}-v_{2}|^{2}/(2\im v_{1}\cdot2\im v_{2})$
and $\chi_{q+1,j}^{s_{q+1}s_{j}/4}\chi_{q+2,j}^{s_{q+2}s_{j}/4}=(\chi_{q+1,j}\chi_{q+2,j}^{-1})^{\pm s_{j}/4}\to1$
whenever $s_{q+1}=-s_{q+2}$ and $j\notin\{q+1,q+2\}$. 
\end{proof}
\begin{proof}[Proof of (\ref{eq: spindisord})]
 In this proof, we put $\cvr=\cvr(v_{1},\dots,v_{n},u_{1},\dots,u_{m})$.
We note that in the case $m=2$, the desired correlation could be
in principle read off the previous computation. Indeed, by definition
\[
\frac{\ccor{\sigma_{v_{1}}\dots\sigma_{v_{n}}\mu_{u_{1}}\mu_{u_{2}}}_{\H}}{\ccor{\sigma_{v_{1}}\dots\sigma_{v_{n}}\sigma_{u_{1}}\sigma_{u_{2}}}_{\H}}=\lim_{z\to u_{1}}\lambda(z-u_{2})^{\frac{1}{2}}f_{\Omega,\cvr}^{\sharp}(u_{2},z),
\]
so that the left-hand side is given by an expression with determinants
obtained by solving (\ref{eq: spin-lin}) by Cramer's rule; since
the system is always well posed by Proposition \ref{prop: Uniqueness_continuous},
we never divide by zero. We were unable to find a direct proof that
this expression matches the ratio of (\ref{eq: spindisord}) to (\ref{eq: spins_explicit}).
Note, however, that the existence of such an expression implies that
$\ccor{\sigma_{v_{1}}\dots\sigma_{v_{n}}\mu_{u_{1}}\mu_{u_{2}}}_{\H}$
is a \emph{real-analytic }function of $\re u_{1},\im u_{1},\re u_{2},\im u_{2}$;
so is the right-hand side of (\ref{eq: spindisord}). Hence, to prove
(\ref{eq: spindisord}) for $m=2$, it is sufficient to consider the
case of $u_{1}$ and $u_{2}$ close to each other. In particular,
because of the fusion rule (\ref{eq: fuse_mu_mu}), we may assume
that $\ccor{\sigma_{v_{1}}\dots\sigma_{v_{n}}\mu_{u_{1}}\mu_{u_{2}}}_{\H}\neq0$.
It is easy to check that the right-hand side of (\ref{eq: spindisord})
also satisfies the analog of (\ref{eq: fuse_mu_mu}) in the leading
order; hence, we may assume that both sides of (\ref{eq: spindisord})
are non-zero.

In the general case, we know by (\ref{eq: psi_mu_sigma_pfaff}) that
the left-hand side of (\ref{eq: spindisord}) can be expressed as
a Pfaffian of the correlations involvig two disorders. This means
that it is also a real analytic function of the parameters, and, by
the same argument, it suffices to consider the case where all $u$'s
are pairwise close to each other, in particular, as above, we may
assume that both sides of (\ref{eq: spindisord}) are non-zero.

We recall from (\ref{eq: fuse_psi_sigma}) that $\pa_{u_{m}}\log\ccor{\sigma_{v_{1}}\dots\mu_{u_{m}}}=\frac{1}{2}\coefA,$
where $\coefA$ is given by the expansion as $z\to u_{m}$,
\[
f(z):=\frac{\ccor{\sigma_{v_{1}}\dots\sigma_{v_{n}}\mu_{u_{1}}\dots\mu_{u_{m-1}}\sigma_{u_{m}}\psi_{z}}_{\H}}{\ccor{\sigma_{v_{1}}\dots\sigma_{v_{n}}\mu_{u_{1}}\dots\mu_{u_{m}}}_{\H}}=\bar{\lambda}(z-u_{m})^{-\frac{1}{2}}(1+2\coefA(z-u_{m})+o(z-u_{m})),
\]
where we drop the dependence of $f$ and $\coefA$ on their parameters
$v_{1},\dots,u_{m}.$ To finish the proof, we compare $\coefA$ with
the logarithmic derivative of the right-hand side of (\ref{eq: spindisord})
with respect to $a_{m}$, adapting the computation from \cite[Appendix]{ChelkakHonglerIzyurov}.
With that in hand, it only remains to match the normalization, which
is easily done by induction by sending $u_{m}\to u_{m-1}$ and using
(\ref{eq: fuse_mu_mu}). 

First, note that by by Proposition \ref{prop: cont_corr_conjugation},
$f(z)$ is holomorphic in the upper half-plane and satisfies the standard
boundary conditions. Hence, $f^{2}(z)$ can be extended to a meromorphic
function in $\C$; by (\ref{eq: bc_free_arc}), (\ref{eq: fuse_psi_mu}\textendash \ref{eq: fuse_psi_sigma}),
its poles at simple and located at $a_{1},\dots,a_{N}.$ Taking into
account that $f(z)=O(z^{-1})$ at $\infty,$ we conclude that $f$
must have the form 
\begin{equation}
f(z)=\frac{Q(z)}{\prod_{i=1}^{N}\sqrt{(z-a_{i})(z-\hat{a}_{i})}},\label{eq: spin-disorder-f}
\end{equation}
with $Q$ a polynomial with real coefficients of degree less than
$N$. We now write down the linear system for the coefficient of $Q$
that arises from (\ref{eq: fuse_psi_mu}\textendash \ref{eq: fuse_psi_sigma}):
\begin{align}
\re(\lim_{z\to u_{m}}\lambda(z-u_{m})^{\frac{1}{2}}f(z)) & =1;\label{eq: sp_dis_aux_1}\\
\re(\lim_{z\to\u_{j}}\lambda(z-u_{j})^{\frac{1}{2}}f(z)) & =0,\quad j=1,\dots m-1,\label{eq: sp_dis_aux_2}\\
\im(\lim_{z\to v_{j}}\lambda(z-v_{j})^{\frac{1}{2}}f(z)) & =0,\quad j=1,\dots n,\label{eq: spin-dis-aux-3}
\end{align}
which can be rewritten as 
\begin{align*}
\frac{Q(a_{i})}{\prod_{j\ne i}\sqrt{(a_{i}-a_{j})(a_{i}-\hat{a}_{j})}}+\frac{Q(\hat{a}_{i})}{\prod_{j\ne i}\sqrt{(\hat{a}_{i}-a_{j})(\hat{a}_{i}-\hat{a}_{j})}}= & 0, & i=1,\dots,q+n;\\
\frac{Q(a_{i})}{\prod_{j\ne i}\sqrt{(a_{i}-a_{j})(a_{i}-\hat{a}_{j})}}-\frac{Q(\hat{a}_{i})}{\prod_{j\ne i}\sqrt{(\hat{a}_{i}-a_{j})(\hat{a}_{i}-\hat{a}_{j})}}= & 0, & q+n+1\leq i<N;\\
\frac{Q(a_{i})}{\prod_{j\ne i}\sqrt{(a_{i}-a_{j})(a_{i}-\hat{a}_{j})}}-\frac{Q(\hat{a}_{i})}{\prod_{j\ne i}\sqrt{(\hat{a}_{i}-a_{j})(\hat{a}_{i}-\hat{a}_{j})}}= & 1, & i=N.
\end{align*}
If we now reorder the marked points by moving $(a_{N},\hat{a}_{N})$
to the beginning of the list, we get the same linear system as (A4)
in \cite{ChelkakHonglerIzyurov}, with the only difference that the
size of the system being is now $N\times N$, and there are extra
``minus'' signs in front of $D_{jj}$ with $j=1$ and $q+n+2\leq j\leq N$.
It is solved by the same computation; the only new aspect is that
it is \emph{not true in general }that the system is non-degenerate;
this is because Proposition \ref{prop: Uniqueness_continuous} only
holds if the behavior at punctures is as for the \emph{spin} insertions.
So, we compute the determinant of the system, encoding a subset $S\subset\{1,\dots,N\}$
by $\mu\in\{\pm1\}^{N},$ and denoting $\disSign'(\mu)=\mu_{q+n+2}\dots\mu_{N}$: 

\begin{multline*}
\det(D+C)=\sum_{S\subset\{1,\dots,N\}}\det C_{S}\det D_{S^{c}}\\
=\prod_{k=1}^{N}\frac{1}{\hat{a}_{k}-a_{k}}\sum_{\mu\in\{\pm1\}^{N}}\prod_{\substack{k\neq m\\
\mu_{k}=\mu_{m}=1
}
}\chi_{km}^{\frac{1}{2}}(\mu_{1}\disSign'(\mu))\prod_{\mu_{k}=-1}\prod_{j\neq k}(\chi_{kj}\chi_{jk})^{\frac{1}{4}}\\
=\prod_{k=1}^{N}\frac{1}{\hat{a}_{k}-a_{k}}\prod_{k\neq m}\chi_{km}^{\frac{3}{8}}\sum_{\mu\in\{\pm1\}^{N}}\prod_{k\neq m}(\mu_{1}\disSign'(\mu))\chi_{km}^{\frac{\mu_{k}\mu_{m}}{8}}.
\end{multline*}
The pre-factor does not vanish, and the sum, after re-ordering the
points back, is the the numerator of the fraction in the right-hand
side of (\ref{eq: spindisord}). Therefore, it does not vanish when
$u_{i}$ are pairwise close enough to each other, and as explained
above, it suffices to consider this case.

The rest of the computation is as in in \cite{ChelkakHonglerIzyurov};
rather than repeating it, we will only trace how the signs affect
it. First, we would now have 
\[
\det A_{[11]}=\prod_{k=2}^{N}\frac{1}{\hat{a}_{k}-a_{k}}\prod_{\substack{k,m=2\\
k\neq m
}
}^{N}\chi_{km}^{\frac{3}{8}}\prod_{k=2}^{N}\chi_{1k}^{\frac{1}{4}}\cdot\sum_{\substack{\mu\in\{\pm1\}^{N}\\
\mu_{1}=-1
}
}\disSign'(\mu)\prod_{k,m=1}\chi_{km}^{\frac{\mu_{k}\mu_{m}}{8}}.
\]
with the extra factor of $\disSign'(\mu)$ coming from the new $-1$
factors in $\det D_{\bar S}$. Similarly,
\begin{multline*}
\det(\tilde{D}+\tilde{C})=\prod_{k=2}^{N}\frac{1}{\hat{a}_{k}-a_{k}}\prod_{\substack{k,m=2\\
k\neq m
}
}^{N}\chi_{km}^{\frac{3}{8}}\prod_{k=2}^{N}\chi_{1k}^{\frac{1}{4}}\\
\times\sum_{\substack{\mu\in\{\pm1\}^{N}\\
\mu_{1}=-1
}
}\left(\disSign'(\mu)\prod_{k,m=1}\chi_{km}^{\frac{\mu_{k}\mu_{m}}{8}}\cdot\sum_{s=2}^{N}\frac{1+\mu_{s}}{2}(-\pa_{a_{1}}\log\chi_{1s})\right).
\end{multline*}
Since we have 
\[
\sum_{\substack{\mu\in\{\pm1\}^{N}\\
\mu_{1}=-1
}
}\disSign'(\mu)\prod_{k,m=1}\chi_{km}^{\frac{\mu_{k}\mu_{m}}{8}}=\frac{1}{2}\sum_{\substack{\mu\in\{\pm1\}^{N}}
}\mu_{1}\disSign'(\mu)\prod_{k,m=1}\chi_{km}^{\frac{\mu_{k}\mu_{m}}{8}},
\]
and similarly for the sum in $\det(\tilde{D}+\tilde{C}),$ we arrive
at 
\[
\coefA=-\frac{1}{4(a_{1}-\hat{a}_{1})}+\pa_{a_{1}}\log\left(\sum_{\substack{\mu\in\{\pm1\}^{N}}
}\mu_{1}\disSign'(\mu)\prod_{k,m=1}\chi_{km}^{\frac{\mu_{k}\mu_{m}}{8}}\right),
\]
which after reordering back $1\longleftrightarrow N$ is the desired
conclusion. 
\end{proof}
\begin{rem}
After having proven (\ref{eq: spindisord}), we can use it to write
down a formula for $\ccor{\sigma_{v_{1}}\dots\mu_{u_{m-1}}\sigma_{u_{m}}\psi_{z}}_{\H},$
or equivalently, for the polynomial $Q$ in (\ref{eq: spin-disorder-f}),
which we illustrate in the case of homogeneous fixed or plus boundary
conditions. Expressing $Q(z)=\sum_{i=1}^{N}\alpha_{i}Q_{i}(z),$ where
$Q_{i}(z)=\prod_{j\neq i}\frac{z-a_{j}}{a_{i}-a_{j}},$ we can find
$\alpha_{i}$ from fusion rules (\ref{eq: fuse_psi_mu}\textendash \ref{eq: fuse_psi_sigma}):
\[
\alpha_{i}=\frac{\ccor{\sigma_{a_{1}}\dots\Op(a_{i})\dots\mu_{a_{N-1}}\sigma_{a_{N}}}}{\ccor{\sigma_{a_{1}}\dots\mu_{a_{N}}}}\sqrt{2\im a_{i}}\prod_{j\neq i}\sqrt{(a_{j}-a_{i})(a_{j}-\bar a_{i})},
\]
where $\Op(a_{i})=i\mu_{a_{i},}$ if $i\leq n$ or $i=N,$ and $\Op(a_{i})=\sigma_{a_{i}}$
else. 
\end{rem}

\begin{rem}
For the fixed or plus boundary conditions, the right-hand side of
(\ref{eq: spindisord}) is the square root of $\ccor{:\cos\frac{\phi_{v_{1}}}{\sqrt{2}}:\ldots:\cos\frac{\phi_{v_{n}}}{\sqrt{2}}::\sin\frac{\phi_{u_{1}}}{\sqrt{2}}:\ldots:\sin\frac{\phi_{u_{m}}}{\sqrt{2}}:},$
where $\phi$ is the Gaussian free field in the upper half-plane,
i.e., the centered Gaussian field with covariance $\E(\phi(z_{1})\phi(z_{2}))=\log\left|z_{1}-\bar z_{2}\right|-\log|z_{1}-z_{2}|$
and the ``normal ordered'' exponentials (and by linearity, sines
and cosines) are defined as limits of $:e^{\alpha\phi}:=\lim_{\eps\to0}\eps^{\frac{\alpha^{2}}{2}}e^{\alpha\phi_{\eps}}$
with $\phi_{\eps}$ circle averages of $\phi,$ see e.g. \cite{kang2011gaussian,junnila2020imaginary}.
This recovers the bosonization prescriptions well-known from the physics
literature, see e.g. \cite[Section 12.3]{Yellow_book} which also
have been explained combinatorially by Dubédat \cite{dubedat2011exact}
who proved the convergence end the explicit formulae (\ref{eq: spindisord})
in the full plane case. 
\end{rem}

\subsection{Doubly connected domains. }

In this subsections, we compute one-point spin and energy functions
in doubly-connected domains, with homogeneous ($\plus,\minus,$ $\free,$
or $\fixed$) boundary conditions on each arc. Denote 
\[
\Ann_{p}:=\{z\in\C:e^{-p}<|z|<1\},
\]
the annulus of modulus $p$. Every doubly connected domain can be
conformally mapped to $\Ann_{p}$ for some $p>0$. Throughout this
section, we fix $p$, consider the correlations in $\Ann_{p}$ and
omit it from notation.

We will denote by $\ccor{\Op}_{\text{wired},\text{free}}$ the correlation
of $\Op$ in $\Ann_{p}$ with fixed boundary condition on $|z|=1$
and free boundary conditions on $|z|=e^{-p}.$ The notation $\ccor{\Op}:=\ccor{\Op}_{\tfixed,\tfixed}$,
$\ccor{\Op}_{+,-}$,$\ccor{\Op}_{+,+}$ $\ccor{\Op}_{+,\tfree}$,
$\ccor{\Op}_{\tfree,\tfree}$ is understood similarly. We denote by
$\spIn$ and $\spOut$ the spins on $|z|=e^{-p}$ and $|z|=1$ respectively,
which are meaningful observables whenever the corresponding boundary
arc belongs to $\fixed$.
\begin{thm}
\label{thm: ann_explicit}We have the following explicit formulae
for the scaling limits of magnetization in the annulus $\Ann_{p}$:
\begin{equation}
\ccor{\sigma_{v}}_{\tfree,+}=\left(2|v|\frac{\wp'(\log|v|)}{\wp''(p)}\right)^{-\frac{1}{8}};\quad\ccor{\sigma_{v}}_{\tfixed,+}=\left(2|v|\frac{\wp'(\log|v|+i\pi)}{\wp''(p+i\pi)}\right)^{-\frac{1}{8}};\label{eq: magn_free_plus}
\end{equation}
\begin{equation}
\ccor{\spIn\spOut}=\wp''(i\pi)^{-\frac{1}{8}}\wp''(p+i\pi)^{\frac{1}{8}};\label{eq: magn_in_out}
\end{equation}
\begin{equation}
\ccor{\sigma_{v}}_{+,\pm}=\frac{\left(2|v|\frac{\wp'(\log|v|+i\pi)}{\wp''(p+i\pi)}\right)^{-\frac{1}{8}}\pm\left(2|v|\frac{\wp'(-p-\log|v|+i\pi)}{\wp''(p+i\pi)}\right)^{-\frac{1}{8}}}{1\pm\wp''(i\pi)^{-\frac{1}{8}}\wp''(p+i\pi)^{\frac{1}{8}}}\label{eq: magn_pm}
\end{equation}
Here $\wp(\cdot):=\wp(\cdot|p,\pi i)$ is the Weierstrass elliptic
function with half-periods $p$ and $\pi i$.
\end{thm}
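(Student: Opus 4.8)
The plan is to reduce everything, via the conformal covariance of Theorem~\ref{thm: ccov}, to the model annulus $\Ann_{p}$ and then to solve explicitly the Riemann boundary value problems defining the continuous fermionic observables. Recall that a one-point spin function is recovered from the observable $f^{[\sharp]}_{\Ann_{p},\cvr}(v,\cdot)$ through the coefficient $\coefA_{\Ann_{p},\cvr}(v)$ (Definition~\ref{def: coefA}) and the integration formula $\log\ccor{\sigma_{v}}=\re\int\coefA_{\Ann_{p},\cvr}(v)\,dv$, cf.~(\ref{eq: def_spincorr_up_to_constant}); here the double cover $\cvr=\cvr(v)$ branches over $v$, and the boundary conditions enter through the standard boundary conditions imposed on $f^{[\sharp]}$ (Lemma~\ref{lem: f_star_holom}). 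Thus the whole computation hinges on writing $f^{[\sharp]}_{\Ann_{p},\cvr}(v,\cdot)$, equivalently $(f^{[\sharp]})^{2}$, in closed form.

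First I would pass to the logarithmic coordinate $w=\log z$, which maps $\Ann_{p}$ onto the cylinder $\{-p<\re w<0\}/2\pi i\Z$, and use Schwarz reflection (Remark~\ref{rem: Schwarz-reflection}) across both boundary circles to continue the observable. The composition of the two reflections is the translation $w\mapsto w+2p$, so together with the $2\pi i$-periodicity the squared observable becomes a doubly periodic function on the torus $\C/(2p\Z+2\pi i\Z)$, i.e.\ an elliptic function with half-periods $p$ and $\pi i$ --- which is exactly why $\wp(\cdot\,|\,p,\pi i)$ governs the answer. Crucially, the Schwarz reflection across a \emph{fixed} arc uses $\fdag_{\zeta}=\tau_{\zeta}\overline{f_{\zeta}}$ while across a \emph{free} arc it uses $\fdag_{\zeta}=-\tau_{\zeta}\overline{f_{\zeta}}$ (see (\ref{eq: bc_fixed-1})--(\ref{eq: bc_free-1})); the extra sign shifts the relevant argument by the half-period $\pi i$, which accounts for the systematic appearance of $\log|v|$ versus $\log|v|+i\pi$ (and of $\wp''(p)$, $\wp''(p+i\pi)$, $\wp''(i\pi)$) in the $\free,+$, $\fixed,+$ and boundary cases. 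The function $(f^{[\sharp]})^{2}$ has a double pole at $w=\log v$ coming from the $(z-v)^{-1}$ singularity in (\ref{eq: f_sharp_asymp}) and is otherwise holomorphic with the zero structure forced by the boundary conditions; matching poles, periods and zeros identifies it with an explicit affine combination of $\wp(w-\log v)$ and its translates.

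Once $(f^{[\sharp]})^{2}$ is identified, the coefficient $\coefA_{\Ann_{p},\cvr}(v)$ is read off from its Laurent expansion at $w=\log v$ (yielding an expression in $\wp'$), and $\log\ccor{\sigma_{v}}=\re\int\coefA\,dv$ is integrated; by rotational invariance the primitive depends only on $|v|=e^{\re w}$, and one checks directly that $\pa_{v}$ of the logarithm of the claimed right-hand sides of (\ref{eq: magn_free_plus}) equals $\tfrac12\coefA$. The remaining multiplicative constants are fixed by the two available normalizations: the bulk fusion rule $\sigma_{\wone}\sigma_{\wtwo}=|\wone-\wtwo|^{-1/4}(1+\ldots)$ of Proposition~\ref{prop: multiplicative-normalisation} and (\ref{eq: fuse_sigma_sigma}), and the boundary normalization by conformal radius of Proposition~\ref{prop: spin_bdry_behavior}; the latter produces the constants $\wp''$ at the half-periods. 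For the boundary--boundary correlation (\ref{eq: magn_in_out}) I would apply Proposition~\ref{prop: spin_bdry_behavior} directly to the two-point function $\ccor{\sigma_{v_{1}}\sigma_{v_{2}}}$, sending $v_{1},v_{2}$ to the two circles and dividing by the respective conformal radii of $\Ann_{p}$; this requires the two-point observable on the cover $\cvr(v_{1},v_{2})$, obtained by the same elliptic analysis. Finally, the $\plus,\pm$ formula (\ref{eq: magn_pm}) is assembled from these building blocks: the Fourier--Walsh reduction of Definition~\ref{def: bcond_corr_continuous} expresses $\ccor{\sigma_{v}}_{+,\pm}$ as $(\ccor{\sigma_{v}\spOut}\pm\ccor{\sigma_{v}\spIn})/(1\pm\ccor{\spIn\spOut})$ in the monochromatic auxiliary conditions $\tilde\bcond$, where the two numerator terms coincide with the $\fixed,+$ magnetization (\ref{eq: magn_free_plus}) and its inner--outer reflection $|v|\mapsto e^{-p}/|v|$ (hence the argument $-p-\log|v|+i\pi$), and the denominator is exactly $1\pm\ccor{\spIn\spOut}$.

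\emph{Main obstacle.} The hard part is not the structural elliptic-function ansatz but the bookkeeping that pins it down exactly: correctly matching the zero structure and monodromy of $(f^{[\sharp]})^{2}$ to the prescribed double cover and to the fixed/free Schwarz-reflection signs, and then fixing the two multiplicative normalizations (bulk fusion and conformal radius) simultaneously so that the $\wp''$-at-half-period constants emerge with the right exponents. This is the same delicate determinant/normalization computation carried out for the half-plane in the proof of Theorem~\ref{thm: simply_conn_explicit} (following \cite{ChelkakHonglerIzyurov}), now transplanted to the elliptic setting, and it is where all sign and branch choices on the double cover must be tracked with care.
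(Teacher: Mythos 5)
Your proposal is correct and follows essentially the same route as the paper's proof: reduce (\ref{eq: magn_in_out})--(\ref{eq: magn_pm}) to boundary-spin correlations via the Fourier--Walsh indicators, solve the Riemann boundary value problem in the logarithmic coordinate by double Schwarz reflection onto the torus with half-periods $p,\pi i$ (the paper gets $\hat f^{-2}=\beta\wp+\gamma$, i.e.\ $\hat f=(-i\wp'(w_0)/(\wp-\wp(w_0)))^{1/2}$), read off $\coefA=-\tfrac18\,\wp''/\wp'$ from the Laurent expansion, integrate, and fix the constants by the conformal-radius normalization of Proposition~\ref{prop: spin_bdry_behavior}. One bookkeeping correction: the relevant observables, e.g.\ $\ccor{\psi_{z}\spIn\mu_{v}}/\ccor{\spIn\sigma_{v}}$, are spinors ramified at the \emph{inner boundary component as well as at} $v$ (the insertion of $\spIn$ forces this) rather than on a cover branched at $v$ alone, and in the paper the $\log|v|$ versus $\log|v|+i\pi$ dichotomy is pinned down not by the reflection sign per se but by locating the forced zero of the observable at $\mp v/|v|$ on the outer circle (the minimum resp.\ maximum of $\pa_{i\tau}G(v,\cdot)$, according to the sign $\pa_{i\tau}h\ge 0$ on wired versus $\pa_{i\tau}h\le 0$ on free arcs), which a careful execution of your zero-matching step would reproduce.
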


\begin{proof}
We first remark that (\ref{eq: magn_in_out}\textendash \ref{eq: magn_pm})
are easy consequences of (\ref{eq: magn_free_plus}). Indeed, by definition,
\begin{align*}
\ccor{\sigma_{v}}_{\tfree,+} & =\ccor{\sigma_{v}\spIn}_{\tfree,\tfixed},\\
\ccor{\sigma_{v}}_{\tfixed,+} & =\ccor{\sigma_{v}\spIn}.
\end{align*}
Also, by normalization convention (\ref{eq: prop_spin_bdry}), we
have 
\[
\ccor{\spIn\spOut}=\lim_{|v|\to1}\crad_{\Ann_{p}}(v)^{\frac{1}{8}}\ccor{\spIn\sigma_{v}}=\lim_{|v|\to1}(2(1-|v|))^{\frac{1}{8}}\ccor{\spIn\sigma_{v}}.
\]
Since 
\begin{align}
\ind_{\spIn=+1,\spOut=-1} & =\frac{1}{4}(1+\spIn-\spOut-\spIn\spOut),\label{eq: FW_ind_ann_1}\\
\ind_{\spIn=+1,\spOut=+1} & =\frac{1}{4}(1+\spIn+\spOut+\spIn\spOut),\label{eq: FW_ind_ann_2}
\end{align}
we readily obtain 
\begin{align*}
\ccor{\sigma_{v}}_{+,-} & =\frac{\ccor{\sigma_{v}\spIn}-\ccor{\sigma_{v}\spOut}}{1-\ccor{\spIn\spOut}};\\
\ccor{\sigma_{v}}_{+,+} & =\frac{\ccor{\sigma_{v}\spIn}+\ccor{\sigma_{v}\spOut}}{1+\ccor{\spIn\spOut}}.
\end{align*}
Therefore, it suffices to compute $\ccor{\sigma_{v}\spIn}=\ccor{\sigma_{v}\spIn}_{wired,wired}$
and $\ccor{\sigma_{v}\spIn}_{\tfree,\tfixed}.$ To this end, consider
the functions 
\[
f_{1}(z):=\frac{\ccor{\psi_{z}\spIn\mu_{v}}}{\ccor{\spIn\sigma_{v}}};\quad f_{2}(z):=\frac{\ccor{\psi_{z}\spIn\mu_{v}}_{\tfree,\tfixed}}{\ccor{\spIn\sigma_{v}}_{\tfree,\tfixed}}.
\]
By Proposition \ref{prop: cont_corr_conjugation}, $f_{1,2}$ are
holomorphic spinors ramified at the inner component and at $v$ in
$\Ann_{p}$ and satisfying the standard boundary conditions. Moreover,
they satisfy the expansion: 
\[
f_{1,2}(z)=\frac{\lamb}{(z-v)^{\frac{1}{2}}}(1+2\coefA_{1,2}(v)(z-v)+o(z-v)),\quad z\to v,
\]
where 
\[
\re[\coefA_{1}(v)dv]=d\log\ccor{\sigma_{v}\spIn};\quad\re[\coefA_{2}(v)dv]=d\log\ccor{\sigma_{v}\spIn}_{\tfree,\tfixed}.
\]
This follows from the definition of $\ccor{\sigma_{v}\sigma_{w}}$
and the fact that extraction of the coefficient $\coefA$ commutes
with the limit $|w|\to e^{-p}$ (cf. Proposition \ref{prop: coefA_vary}).
Therefore, in order to compute $\ccor{\sigma_{v}\spIn}$ and $\ccor{\sigma_{v}\spIn}_{\tfree,\tfixed}$,
it suffices to compute $f_{1,2},$ extract the coefficients $\coefA_{1,2},$
and integrate.

Consider $h_{1,2}:=\im\int f_{1,2}^{2}.$ It follows from Proposition
\ref{prop: f_to_h} that $h_{1,2}$ are single-valued harmonic functions
in $\Ann_{p}$ that are constant on each boundary component and satisfy
the asymptotics $h_{1,2}(z)\sim-\log|z-v|$ as $z\to v$. In other
words, 
\begin{equation}
h_{1,2}(z)=2\pi G(v,z)+\alpha_{1,2}(v)\log|z|,\label{eq: Green_plus_log}
\end{equation}
where $G(v,z)$ is the Green's function in $\Ann_{p}$ for zero-Dirichlet
boundary conditions. 

We claim that $f_{1,2}(z)(\tau_{z})^{\frac{1}{2}}=(\pa_{i\tau}h_{1,2}(z))^{\frac{1}{2}}$
must vanish for some $z$ with $|z|=1$. Indeed, if it did not, then
the real quantity $f_{1}(z)(\tau_{z})^{\frac{1}{2}}$ (respectively,
$if_{2}(z)(\tau_{z})^{\frac{1}{2}}$) could not change its sign along
$|z|=1$, and hence $f_{1}(z)$ (respectively, $f_{2}(z)$) would
be ramified along the outer boundary of the annulus, which it is not.
Since we know from Proposition \ref{prop: f_to_h} that $\pa_{i\tau}h_{1}(z)\geq0$
and $\pa_{i\tau}h_{2}(z)\leq0$ for $|z|=1$, these quantities vanish
at the point of minimum (respectively, maximum) of $\pa_{i\tau}G(v,\z)$
on the circle $|z|=1$. It is easy to see that these are, respectively,
$z=-v/|v|$ and $z=v/|v|.$ Hence, $f_{1}(-v/|v|)=f_{2}(v/|v|)=0.$

Assume without loss of generality that $v\in(e^{-p};1)\subset\R$.
Let $\varphi:z\mapsto e^{z}$ be the covering map of $\Ann_{p}$ by
the vertical strip $\S_{p}:=\{z:-p<\re z<0\}$, and denote $\hat{f}_{1,2}(z)=\varphi'(z)^{\frac{1}{2}}f_{1,2}(\mp\varphi(z)).$
Note that $\hat{f}_{1,2}^{2}(\cdot)$ are meromorphic functions in
$\S_{p}$ with the following properties:

\begin{enumerate}
\item they are $2\pi i$-periodic;
\item one has $\hat{f}_{1,2}^{2}(z)\in i\R$ for $\re z=0$ or $\re z=-p$;
\item $\hat{f}_{1,2}^{2}(\cdot)$ has a double zero at $0$;
\item $\hat{f}_{1}^{2}(\cdot)$ has a simple pole with residue $-i$ at
$\log|v|+i\pi$
\item $\hat{f}_{2}^{2}(\cdot)$ has a simple pole with residue $-i$ at
$\log|v|;$
\end{enumerate}
The second property above allows one to Schwarz reflect $\hat{f}_{1,2}(z)$
across $\re z=0,\re z=\pm p,\re z=\pm2p,\dots$ Since two such reflections
amount to a shift, this yields a doubly periodic function with periods
$2\pi i,2p$ with two simple poles per fundamental domain, and hence
also two zeros per fundamental domain, counted with multiplicity.
Therefore, from condition (3) above, we have $\hat{f}_{1,2}^{-2}=\beta_{1,2}\wp(z|p,i\pi)+\gamma_{1,2}.$
The constants $\beta_{1,2}$ and $\gamma_{1,2}$ are determined from
(4\textendash 5) above, yielding
\[
\hat{f}_{1,2}(z)=\left(\frac{-i\wp'(w_{1,2}|p,i\pi)}{\wp(z|p,i\pi)-\wp(w_{1,2}|p,i\pi)}\right)^{\frac{1}{2}};
\]
where $w_{1}=\log|v|+i\pi$ and $w_{2}=\log|v|.$ The corresponding
coefficients $\hat{\coefA}_{1,2}$ of the expansion of $\hat{f}_{1,2}$
at $w_{1,2}$ read 
\[
\hat{\coefA}_{1,2}(w_{1,2})=-\frac{1}{8}\cdot\frac{\wp''(w_{1,2})}{\wp'(w_{1,2})}=\left.(\log\wp'(w)^{-\frac{1}{8}})'\right|_{w=w_{1,2}}.
\]
Note that $\wp(z|p,\pi i)\in\R$ for $z\in\R$ or $z\in\R+i\pi.$
Therefore, the functions 
\[
\ccor{\sigma_{w}\spIn}_{\S_{p}}:=2^{-\frac{1}{8}}\frac{\wp'(\re w+i\pi)^{-\frac{1}{8}}}{\wp''(-p+i\pi)^{-\frac{1}{8}}}\quad\text{and }\quad\ccor{\sigma_{w}\spIn}_{\S_{p},\tfixed,\tfree}:=2^{-\frac{1}{8}}\frac{\wp'(\re w)^{-\frac{1}{8}}}{\wp''(-p)^{\frac{1}{8}}}
\]
have the logarithmic derivative given by $\re[\hat{\coefA}_{1,2}dw]$
and satisfy the asymptotics $\ccor{\sigma_{w}\spIn}_{\S_{p}}\sim(2(\re w+p))^{-\frac{1}{8}},$
$\ccor{\sigma_{w}\spIn}_{\S_{p},\tfixed,\tfree}\sim(2(\re w+p))^{-\frac{1}{8}}$
as $\re w\to-p$. Using conformal transformation properties of the
coefficient $\coefA$ (as in the proof of Theorem \ref{thm: ccov}),
and the fact that $\wp$ is even, we get 
\[
\ccor{\sigma_{v}\spIn}_{\Ann_{p}}=|v|^{-\frac{1}{8}}\ccor{\sigma_{\log v}\spIn}_{\S_{p}}=\left(2|v|\frac{\wp'(\log|v|+i\pi)}{\wp''(p+i\pi)}\right)^{-\frac{1}{8}};
\]
\[
\ccor{\sigma_{w}\spIn}_{\Ann_{p},\tfixed,\tfree}=|v|^{-\frac{1}{8}}\ccor{\sigma_{\log v}\spIn}_{\S_{p},\tfixed,\tfree}=\left(2|v|\frac{\wp'(\log|v|)}{\wp''(p)}\right)^{-\frac{1}{8}},
\]
as claimed.
\end{proof}
\begin{thm}
\label{thm: ann_explicit_fermions}We have the following explicit
formulae for fermionic correlations in the annulus $\Ann_{p}$: 
\begin{align}
\ccor{\psi_{z}\psi_{w}} & =(2/\sqrt{zw})\cdot ds(\log(z/w)); & \ccor{\psi_{z}\psi_{w}^{\star}} & =(2i/\sqrt{z\bar w})\cdot ds(\log(z\bar w));\label{eq: ann_ferm_fixed}\\
\ccor{\psi_{z}\psi_{w}}_{\tfixed,\tfree} & =(2/\sqrt{zw})\cdot cs(\log(z/w)); & \ccor{\psi_{z}\psi_{w}^{\star}}_{\tfixed,\tfree} & =(2i/\sqrt{z\bar w})\cdot cs(\log(z\bar w));\label{eq: ann_ferm_fixed_free}\\
\frac{\ccor{\psi_{z}\psi_{w}\spIn\spOut}}{\ccor{\spIn\spOut}} & =(2/\sqrt{zw})\cdot ns(\log(z/w)); & \frac{\ccor{\psi_{z}\psi_{w}^{\star}\spIn\spOut}}{\ccor{\spIn\spOut}} & =(2i/\sqrt{z\bar w})\cdot ns(\log(z\bar w)),\label{eq: ann_ferm_spins}
\end{align}
where $ns,ds,cs$ are the Jacobian elliptic functions with half-periods
$2K=2p$ and $2K':=2\pi i$.
\end{thm}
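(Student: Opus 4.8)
The plan is to follow the strategy of Theorem~\ref{thm: ann_explicit}, but to analyze the fermionic spinor directly rather than through the harmonic function $h=\im\int f^2$, since the relevant singularity is now a simple pole. Fix $w\in\Ann_p$ and, for each of the three boundary setups, set
\[
f(z):=\frac{\ccor{\psi_z\psi_w\Op}}{\ccor{\Op}},
\]
where $\Op$ is empty in cases (\ref{eq: ann_ferm_fixed}), (\ref{eq: ann_ferm_fixed_free}) and $\Op=\spIn\spOut$ in case (\ref{eq: ann_ferm_spins}). By Proposition~\ref{prop: cont_corr_conjugation}, $f$ is a holomorphic $\cvr$-spinor in $z$ satisfying the standard boundary conditions on $\pa\Ann_p$, where $\cvr$ is trivial in the first two cases and ramified over both boundary components in the third; by (\ref{eq: dte_psi_psi}) and (\ref{eq: f_fdag_expansion}) it has the expansion $f(z)=2(z-w)^{-1}+O(z-w)$ as $z\to w$. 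By Proposition~\ref{prop: Uniqueness_continuous} these data determine $f$ uniquely, so it suffices to exhibit holomorphic spinors with the correct boundary behaviour and pole and to match them with the claimed expressions.

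First I would push the problem to the strip $\S_p=\{-p<\re t<0\}$ through the covering map $\varphi(t)=e^t$, exactly as in the proof of Theorem~\ref{thm: ann_explicit}, setting $\hat f(t):=\varphi'(t)^{\frac{1}{2}}f(\varphi(t))=e^{t/2}f(e^t)$. The crucial observation is that $\varphi'(t)^{\frac{1}{2}}=e^{t/2}$ is itself $2\pi i$-\emph{anti}periodic, so the monodromy of $\hat f$ under $t\mapsto t+2\pi i$ is $-1$ when $f$ is single-valued around the hole (cases ds, cs) and $+1$ when $f$ is a spinor ramified over the boundary components (case ns). On the two sides $\re t=0$ and $\re t=-p$ the standard boundary conditions become conditions of the form $\re[e^{\pm i\pi/4}\hat f]=0$ or $\im[e^{\pm i\pi/4}\hat f]=0$, which permit Schwarz reflection of $\hat f$ across each line; composing the two reflections yields the translation $t\mapsto t-2p$, so $\hat f$ extends to a meromorphic function on $\C$ that is (anti)periodic with respect to the lattice generated by $2p$ and $2\pi i$. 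A wired arc and a free arc contribute opposite signs to the reflection, so the monodromy under $t\mapsto t+2p$ is $-1$ for wired--wired (ds, ns) and $+1$ for wired--free (cs).

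With a single simple pole of residue $2/\sqrt w$ per fundamental domain at $t=\log w$ (read off from $f(z)\sim 2(z-w)^{-1}$ together with the factor $e^{t/2}$) and the two monodromy signs just described, $\hat f$ is pinned down to a constant multiple of exactly one of the Jacobi functions $\mathrm{ds},\mathrm{cs},\mathrm{ns}$ with half-periods $2K=2p$, $2K'=2\pi i$: these three share the pole lattice but are distinguished precisely by whether $u\mapsto u+2K$ and $u\mapsto u+2K'$ act by $+1$ or $-1$. Undoing the conformal factor, $f(z)=z^{-\frac{1}{2}}\hat f(\log z)$, restores the prefactor $2/\sqrt{zw}$ and gives the $\psi\psi$ formulae. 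The companion $\psi\psi^\star$ formulae follow from the $\psi\psi$ ones by the Schwarz reflection $w\mapsto w^*$ of Remark~\ref{rem: Schwarz-reflection} (equivalently by the conjugation rule of Proposition~\ref{prop: cont_corr_conjugation}), which sends $f(w,z)$ to $f^\star(w,z)=\ccor{\psi_z\psi_w^\star}/\ccor{\Op}$; this replaces the argument $\log(z/w)$ by $\log(z\bar w)$ and produces the factor $i$ and the denominator $\sqrt{z\bar w}$ from the $\lambda=e^{-i\pi/4}$ phase conventions, as one sees already in Example~\ref{exa: f_half-plane_explicit}.

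The main obstacle will be the careful sign bookkeeping in the two reflections and around the hole, since an error in any one of the four monodromy signs swaps one Jacobi function for another. The spin-insertion case (\ref{eq: ann_ferm_spins}) is the most delicate, because $f$ is a genuine spinor ramified over both boundary circles, so the reflections must be performed on the double cover and one must confirm that the ramification flips the $2\pi i$-monodromy of $\hat f$ to $+1$ (producing $\mathrm{ns}$, periodic in $2K'$, rather than $\mathrm{ds}$). As in the proof of Theorem~\ref{thm: ann_explicit}, locating the zero that the boundary conditions force on one of the circles provides an independent check that the correct function has been selected. Once the four signs are fixed, the remaining verification --- that the chosen Jacobi function has the prescribed pole, zeros and reality on the boundary lines --- is routine from the standard theory of $\mathrm{ns},\mathrm{ds},\mathrm{cs}$.
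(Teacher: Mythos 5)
Your proposal founders on its very first step: the function $f(z):=\ccor{\psi_{z}\psi_{w}\Op}/\ccor{\Op}$, i.e.\ the purely holomorphic\textendash holomorphic correlation, does \emph{not} satisfy the standard boundary conditions of Definition \ref{def: BC_continuous}, so Proposition \ref{prop: Uniqueness_continuous} cannot be applied to it, and the claimed characterization ``standard boundary conditions plus pole $2(z-w)^{-1}$ determine $f$ uniquely'' has no valid basis. Proposition \ref{prop: cont_corr_conjugation} guarantees standard boundary conditions in $z$ only when the remaining insertions are of type $\psi^{\eta}$ (together with $\en,\mu,\sigma$), not a bare holomorphic $\psi_{w}$; the object that solves a well-posed boundary value problem is $f^{[\eta]}(w,\cdot)=\tfrac12(\bar\eta f(w,\cdot)+\eta\fdag(w,\cdot))$. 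A concrete counterexample is Example \ref{exa: f_half-plane_explicit}(i): in $\H$ with wired boundary, $f(w,z)=2/(z-w)$ visibly fails $\im[\tau_{\zeta}^{\frac12}f(w,\zeta)]=0$ on $\R$. The same error propagates into the heart of your argument: on $\fixed$ and $\free$ the boundary conditions are \emph{couplings} between $f$ and $\fdag$, namely $\fdag(w,\zeta)=\pm\tau_{\zeta}\overline{f(w,\zeta)}$, see (\ref{eq: bc_fixed-1})\textendash (\ref{eq: bc_free-1}) and Remark \ref{rem: Schwarz-reflection}. So your asserted reality conditions $\re[e^{\pm i\pi/4}\hat f]=0$ or $\im[e^{\pm i\pi/4}\hat f]=0$ on the lines $\re t=0$, $\re t=-p$ are false, a single Schwarz reflection continues $\hat f$ not into $\pm\hat f$ but into (the conjugate of) the $\star$-observable, and your four monodromy signs \textendash{} the entire mechanism by which you select $ds$, $cs$, or $ns$ \textendash{} are computed from premises that do not hold. (They happen to land on the correct answers, but that is because the theorem is true, not because the derivation supports them.) A genuinely repaired version of your route would either run the whole reflection/monodromy bookkeeping for $f^{[\eta]}$, or reflect through the coupled pair $(f,\fdag)$ and observe that only the \emph{double} reflection is a statement about $\hat f$ itself, checking additionally that the reflected strip carries the continuation of $\fdag(w,\cdot)$, which is pole-free by (\ref{eq: f_fdag_expansion}), so that there is exactly one simple pole of residue $2/\sqrt{w}$ per $(2p,2\pi i)$-fundamental domain.

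For comparison, the paper avoids all of this by a verification rather than a derivation: it writes down, for $u=ds,cs,ns$, the explicit combination $\bar\eta\,\varphi'(z)^{\frac12}\varphi'(w)^{\frac12}u(\varphi(z)-\varphi(w))+\eta\, i\,\varphi'(z)^{\frac12}\overline{\varphi'(w)}^{\frac12}u(\varphi(z)+\overline{\varphi(w)})$ with $\varphi=\log$, checks directly \textendash{} using $u(\bar t)=\overline{u(t)}$, $u(-t)=-u(t)$ and the (anti)periodicities of Remark \ref{rem: def_Jacobi} \textendash{} that this combined object lies in $\lambda\R$ or $\bar\lambda\R$ on the appropriate boundary lines (i.e.\ satisfies (\ref{eq: bc_fixed}) and (\ref{eq: bc_free})) and has the expansion $\bar\eta/(z-w)+O(1)$, and then applies Proposition \ref{prop: Uniqueness_continuous} to $f^{[\eta]}$; since $\eta$ is arbitrary, both the $\psi\psi$ and the $\psi\psi^{\star}$ formulae drop out simultaneously via (\ref{eq: psi_eta_expand}), with no separate reflection-in-$w$ argument needed for the $\star$ columns. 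Your instinct that the holomorphic and anti-holomorphic parts are tied by $\log(z/w)$ versus $\log(z\bar w)$ is sound, but in the paper's framework that tie is built into $f^{[\eta]}$ from the start rather than recovered afterwards.
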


\begin{rem}
\label{rem: def_Jacobi}The Jacobian elliptic functions $ns,ds,cs$
are the unique meromorphic function on $\C$ whose poles are simple
poles at $2\pi\Z+(2pi)\Z$, with $\res_{z=0}ns(z)=\res_{z=0}ds(z)=\res_{z=0}cs(z)=1$,
and satisfying the (anti-)periodicity conditions 
\begin{align*}
ns(z+2p) & =-ns(z); & ns(z+2\pi i) & =ns(z);\\
ds(z+2p) & =-ds(z); & ds(z+2\pi i) & =-ds(z);\\
cs(z+2p) & =cs(z); & cs(z+2\pi i) & =-cs(z).
\end{align*}
\end{rem}

\begin{proof}[Proof of Theorem \ref{thm: ann_explicit_fermions}.]
 The proof is based on the fact that if we replace $\psi_{w}$ by
$\psi_{w}^{\eta}$ in the left-hand side of (\ref{eq: ann_ferm_fixed}\textendash \ref{eq: ann_ferm_spins}),
then, in view of Proposition \ref{prop: Uniqueness_continuous}, the
resulting correlation functions (respectively, spinor for (\ref{eq: ann_ferm_spins}))
are uniquely characterized by the standard boundary conditions and
the asymptotic expansion 
\begin{equation}
\bar{\eta}/(z-w)+O(1),\quad z\to w.\label{eq: Ann_z__to_w}
\end{equation}

We note that $\varphi(z):=\log z$ is the inverse of the conformal
covering map from the strip $\S_{p}=\{-p<\re z<0\}$, and $1/\sqrt{zw}=\varphi'(z)^{\frac{1}{2}}\varphi'(w)^{\frac{1}{2}},$
1/$\sqrt{z\bar w}=\varphi'(z)^{\frac{1}{2}}\bar{\varphi'(w)}^{\frac{1}{2}}.$
Note also that for $u=cs,ds,$ or $ns$, we have $u(\bar z)=\bar{u(z)}$
and $u(-z)=-u(z)$. We can therefore write, for $\eta\in\C$ and $z\in i\R,$
\[
\bar{\eta}\cdot u(z-w)+\eta\cdot i\cdot u(z+\bar w)=\lambda(\bar{\eta}\bar{\lambda}\cdot u(z-w)+\eta\lambda\cdot\bar{u(z-w)})\in\lambda\R,\quad u=cs,ds,ns,
\]
that is to say, these functions satisfies the standard boundary condition
(\ref{eq: bc_fixed}) on the right vertical boundary of $\S_{p}$.
Similarly, for $z\in-p+i\R,$ we have $\bar z=-z-2p$, and therefore
\[
\bar{\eta}\cdot u(z-w)+\eta\cdot i\cdot u(z+\bar w)=\begin{cases}
\lambda(\bar{\eta}\bar{\lambda}\cdot u(z-w)-\eta\lambda\cdot\bar{u(z-w)})\in\bar{\lambda}\R, & u=ns,ds;\\
\lambda(\bar{\eta}\bar{\lambda}\cdot u(z-w)+\eta\lambda\cdot\bar{u(z-w)})\in\lambda\R, & u=cs,
\end{cases}
\]
so that these functions satisfy (\ref{eq: bc_fixed}) (respectively,
(\ref{eq: bc_free})) on the left vertical boundary of $\S_{p}$.
Therefore, 
\[
\bar{\eta}\varphi'(\cdot)^{\frac{1}{2}}\varphi'(w)^{\frac{1}{2}}u(\varphi(\cdot)-\varphi(w))+\eta i(\varphi'(\cdot)^{\frac{1}{2}}\bar{\varphi'(w)}^{\frac{1}{2}}u(\varphi(\cdot)+\varphi(\bar w)))
\]
is a function (for $u=ds,cs$) or a spinor (for $u=ns$) in $\Ann_{p}$
satisfying the standard boundary conditions and the expansion (\ref{eq: Ann_z__to_w})
as $z\to w$. Therefore, by Proposition \ref{prop: Uniqueness_continuous},
for $u=ds,cs,ns$, it coincides with $\ccor{\psi_{z}\psi_{w}^{\eta}}_{\tfixed,\tfixed},\ccor{\psi_{z}\psi_{w}^{\eta}}_{\tfixed,\tfree},$
and $\ccor{\psi_{z}\psi_{w}^{\eta}\spIn\spOut}/\ccor{\spIn\spOut},$
respectively. Since $\eta$ is an arbitrary complex number, and in
view of (\ref{eq: psi_eta_expand}), this concludes the proof.
\end{proof}
Of course, Theorem \ref{thm: ann_explicit_fermions}, combined with
the Pfaffian formula (\ref{eq: multipsi_pfaffian}), Definition \ref{def: bcond_corr_continuous}
of the energy observable, and (\ref{eq: magn_in_out}), readily gives
the energy correlations in the annulus with $+/\free$, $+/+$ and
$+/-$ boundary conditions. As an example, let us write down the one-point
functions:
\begin{example}
One has 
\begin{align*}
\ccor{\en_{e}}_{\Ann_{p},+/\tfree} & =-cs(2\log(|e|))/|e|;\\
\ccor{\en_{e}}_{\Ann_{p},+/\pm} & =\frac{-ds(2\log(|e|))\mp ns(2\log(|e|))\cdot\ccor{\spIn\spOut}_{\Ann_{p}}}{1\pm\ccor{\spIn\spOut}_{\Ann_{p}}},
\end{align*}
where $\ccor{\spIn\spOut}_{\Ann_{p}}$ is given by (\ref{eq: magn_in_out}).
\end{example}

\begin{proof}
The first identity follows directly from Definition \ref{def: energy}
and (\ref{eq: ann_ferm_fixed_free}), while the second one is obtained
by writing $\ccor{\en_{e}}_{\Ann_{p},+/\pm}=\ccor{\en_{e}\ind_{\spIn=1,\spOut=\pm1}}/\ccor{\ind_{\spIn=1,\spOut=\pm1}}$
and using (\ref{eq: FW_ind_ann_1}\textendash \ref{eq: FW_ind_ann_2})
and (\ref{eq: ann_ferm_fixed}), (\ref{eq: ann_ferm_spins}).
\end{proof}
\begin{example}
\label{exa: annulus-SLE-PF} The SLE partition function in the annulus
$\Ann_{p}$ with Dobrushin boundary condition on the outer part of
the boundary , with marked points $b_{1,2}\in\{z:|z|=1\}$ and free
or plus boundary conditions on the outer part of the boundary are
given by, respectively, 
\[
Z_{+/-}^{\text{free}}(b_{1,}b_{2},p)=\left|(2/\sqrt{b_{1}b_{2}})\cdot cs(\log(b_{1}/b_{2})|2\pi,2pi)\right|.
\]
\begin{multline*}
Z_{+/-}^{+}(b_{1,}b_{2},p)=\frac{1}{2}\left|(2/\sqrt{b_{1}b_{2}})\cdot ns(\log(b_{1}/b_{2})|2\pi,2pi)\wp''(i\pi)^{-\frac{1}{8}}\wp''(p+i\pi)^{\frac{1}{8}}\right.\\
\left.+(2/\sqrt{b_{1}b_{2}})\cdot ds(\log(b_{1}/b_{2})|2\pi,2pi)\right|
\end{multline*}
Indeed, these are special cases of Corollary \ref{cor: SLE} obtained
using Theorem \ref{thm: ann_explicit_fermions}.
\end{example}

\section{Appendix}

\subsection{Construction of special discrete holomorphic functions}

In this section, we prove Lemmas \ref{lem: dzm1}, \ref{lem: dzm12}.
We use a systematic approach with discrete exponentials; thanks to
this approach, the results readily extend to isoradial graphs. 

On the square lattice, the discrete exponentials are just discrete
holomorphic functions on $\wcgr(\C^{1})$ of the form
\[
e(x+iy)=p^{x}q^{y}.
\]
This function satisfies (\ref{eq: Disc_hol}) if and only if 
\[
p-q+i(pq-1)=0.
\]
It is convenient to parametrize the solutions to this equation as
follows: 
\[
p(\zeta)=\frac{1+\frac{\zeta}{2}}{1-\frac{\zeta}{2}};\quad q(\zeta)=\frac{1+i\frac{\zeta}{2}}{1-i\frac{\zeta}{2}};\quad\zeta\in\C.
\]
We thus denote 
\[
e_{\zeta}(z):=p(\zeta)^{x}q(\zeta)^{y},\quad z=x+iy.
\]

We will also use a slightly different normalization of the discrete
exponentials, corresponding to ``shifting the origin'' to a vertex
$v\in\C^{1}$ or a corner $w\in\bcgr(\C^{1})$. Namely, we put 
\[
e_{\zeta}(v;z):=\frac{e_{\zeta}(z)}{e_{\zeta}(\hat{z})}\cdot\frac{1}{1-\frac{\zeta}{2}},
\]
where $\hat{z}=v+\frac{1}{2}\in\wcgr(\C^{1})$, and 

\[
e_{\zeta}(w;z):=\frac{e_{\zeta}(z)}{e_{\zeta}(\hat{z})}\cdot\frac{1}{(1-\frac{\zeta}{2})(1-i\frac{\zeta}{2})},
\]
where $\hat{z}=a+\frac{1}{2}+i\frac{1}{2}\in\wcgr(\C^{1})$ is the
``upper-right'' neighbor of $a\in\bcgr(\Cd)$.

We note that 
\begin{equation}
e_{\zeta}(0;z)=2\cdot(-1)^{\hat{s}(z)}\cdot\zeta^{-1}\cdot\bar{e_{4\bar{\zeta}^{-1}}(0;z)},\label{eq: exp_symmetry}
\end{equation}
where $\hat{s}(z)=\re z+\frac{1}{2}+\im z.$ We will need the following
asymptotic $e_{\zeta}(0;z)$ as $z\to\infty$:
\begin{equation}
e_{\zeta}(0\text{;}z)=e^{\zeta z+O(\zeta^{3}z)+O(\text{\ensuremath{\zeta}}^{2})};\quad e_{\zeta}(0\text{;}z)=2(-1)^{\hat{s}(z)}\zeta^{-1}e^{4\zeta^{-1}\bar z+O(\zeta^{-3}z)+O(\zeta^{-2})},\label{eq: asymp_e_v_z}
\end{equation}
useful when $\zeta$ is small and large respectively.
\begin{proof}[Proof of Lemma \ref{lem: dzm1}]
Label the lattice so that $a\in\bcgr(\C^{1})$ and put, for $z\in\wcgr(\C^{1}),$
\begin{equation}
\dzmone za:=\bar{\eta}_{a}K(a,z):=\bar{\eta}_{a}\pi^{-1}\int_{\zeta\in-\bar{(z-a)}\R_{\geq0}}e_{\zeta}(a,z)d\zeta.\label{eq: def_K}
\end{equation}
Here $K(a,z)$ is the discrete Cauchy kernel (see \cite[Theorem 2.21]{ChelkakSmirnov1});
that is, it satisfies $\dbar_{z}K(a,w)=\delta(w-a)$ and the asymptotics
\[
K(a,z)=\frac{2}{\pi}\eta_{z}\eta_{a}\re[\bar{\eta}_{z}\bar{\eta}_{a}(z-a)^{-1}]+O(|z-a|^{-2}).
\]
where $\eta_{z}\eta_{a}=\bar{\tau}$ in the notation of \cite{ChelkakSmirnov1}.
Specifically, if we denote 
\[
s(z):=\re(z-a)+\im(z-a)-1\in\Z,\quad z\in\wcgr(\C^{1}),
\]
then we have 
\begin{equation}
\eta_{z}\eta_{a}=\begin{cases}
\pm\lamb, & s(z)\text{ is even;}\\
\pm\lambb, & s(z)\text{ is odd.}
\end{cases}\label{eq: eta_a_eta_z}
\end{equation}

Note that $e_{\zeta}(a,z)$ is a rational function of $\zeta$ with
poles at $2\sgn(\re[z-a])$ and $2i\sgn(\im[z-a])$, satisfying $e_{\zeta}(a,z)=O(\zeta^{-2})$
as $\zeta\to\infty$. Hence, we can always rotate the integration
line to be either $\R_{\geq0}$ or $\R_{\leq0}$. For $\zeta\in\R$,
we have 
\[
e_{\zeta}(a,z)=-i\cdot(-1)^{s(z)}\cdot\zeta^{-2}\cdot\bar{e_{4\zeta^{-1}}(a,z)}.
\]
Therefore, the change of variable $\zeta\mapsto4\zeta^{-1}$ in (\ref{eq: def_K})
implies 
\[
K(a,z)=-i(-1)^{s(z)}\bar{K(a,z)},
\]
which together with (\ref{eq: eta_a_eta_z}) gives $\dzmone za\in\eta_{z}\R$,
$z\in\wcgr(\C^{1})$.

By Lemma \ref{lem: dhol_to_shol}, we can extend $P_{a}$ to $\Cgr(\C^{1})\setminus\{a\}$
so that it is s-holomorphic at all edges $e$ except for the two edges
$e_{\pm}$ incident to $a$; moreover, there are two (different) extensions
$\dzmone{a_{+}}a$ and $\dzmone{a_{-}}a$ of $\dzmone{\cdot}a$ to
$a$ that make the resulting function s-holomorphic at $e_{+}$ and
$e_{-}$, respectively. From the explicit values 
\[
K\left(a;a\pm\frac{1+i}{2}\right)=\pm\frac{\lamb}{\sqrt{2}};\quad K\left(a;a\pm\frac{1-i}{2}\right)=\pm\frac{\lambb}{\sqrt{2}}
\]
we readily get $\dzmone{a_{+}}a=\eta_{a}$ and $\dzmone{a_{-}}a=-\eta_{a}$,
as well as (\ref{eq: P_a_close}).
\end{proof}

\begin{proof}[Proof of Lemma \ref{lem: dzm12}]
We label the lattices so that $\frac{1}{2}\in\wcgr(\C^{1})$. Following
\cite{Dubedat}, we put, for $z\in\wcgr(\C^{1})$, 
\begin{equation}
\dmsqrt z{}:=\lamb2^{-\frac{1}{2}}\pi^{-1}\int_{\zeta\in-\bar z\R_{\geq0}}\zeta^{-\frac{1}{2}}e_{\zeta}(0;z)d\zeta.\label{eq: def_Q}
\end{equation}
This is clearly an $[0]$-spinor, and for each $w\in\bcgr(\C^{1})$,
we can rotate the line of integration in (\ref{eq: def_Q}) to be
the same for every $z\sim w$; thus, $\dmsqrt{\cdot}{}$ is discrete
holomorphic. If $\re z<0$ (respectively, $\re z>0$), we can rotate
the line of integration to be $\R_{\geq0}$ (respectively, $\R_{\leq0}$),
and then (\ref{eq: exp_symmetry}) and the change of variable $\zeta\mapsto4\zeta^{-1}$
implies that 
\[
\lambb\dmsqrt z{}=\lamb(-1)^{\hat{s}(z)}\bar{Q(z)}.
\]
From this, it follows that $\dmsqrt z{}\in\eta_{z}\R$ for all $z\in\wcgr(\C^{1})$.
The asymptotics (\ref{eq: q_asymp}) is computed by a version of Laplace's
method: letting $I(r,R):=\{\zeta\in-\bar z\R_{\geq0}:r\leq|z|\leq R\},$
we can write 
\begin{multline*}
\int_{\zeta\in-\bar z\R_{\geq0}}\zeta^{-\frac{1}{2}}e_{\zeta}(0;z)d\zeta\\
=\int_{I(0;|z|^{-\frac{1}{2}})}\zeta^{-\frac{1}{2}}e^{\zeta z+O(\zeta^{3}|z|)}d\zeta+2(-1)^{\hat{s}(z)}\int_{I(|z|^{\frac{1}{2}},\infty)}\zeta^{-\frac{3}{2}}e^{4\zeta^{-1}\bar z+O(\zeta^{-3}z)}d\zeta\\
+O(e^{-c|z|^{\frac{1}{4}}}).
\end{multline*}
Taking $r:=|z|$, by the change of variable $u:=-z\zeta$ in the first
integral and $\zeta:=-4\bar zu$ in the second one, we obtain
\begin{multline*}
\int_{\zeta\in-\bar z\R_{\geq0}}\zeta^{-\frac{1}{2}}e_{\zeta}(0;z)d\zeta=(-z)^{-\frac{1}{2}}\int_{0}^{\infty}u^{-\frac{1}{2}}e^{-|u|}du+(-1)^{\hat{s}(z)}(-\bar z)^{-\frac{1}{2}}\int_{0}^{\infty}u^{-\frac{3}{2}}e^{-|u|^{-1}}du+o(1)\\
=i\Gamma\left(\frac{1}{2}\right)\left(z^{-\frac{1}{2}}-(-1)^{\hat{s}(z)}\bar z^{-\frac{1}{2}}\right)+o(|z|^{-\frac{1}{2}})=i\pi^{\frac{1}{2}}\left(z^{-\frac{1}{2}}-(-1)^{\hat{s}(z)}\bar z^{-\frac{1}{2}}\right)+o(|z|^{-\frac{1}{2}}).
\end{multline*}
Note that for $z\in\wcgr(\Cd)$, we have 
\[
z^{-\frac{1}{2}}-(-1)^{\hat{s}(z)}\bar z^{-\frac{1}{2}}=2\proj{\lamb\eta_{z}}{z^{-\frac{1}{2}}}=2\lamb\eta_{z}\re\left[\bar{\eta}_{z}\lambb z^{-\frac{1}{2}}\right]=2\lamb\proj{\eta_{z}}{\lambb z{}^{-\frac{1}{2}}},
\]
hence the asymptotics (\ref{eq: asymp_zm12}) follows. The values
of $\dmsqrt z{}$ at $z$ near $0$ can be easily computed from the
definition: 
\[
\dmsqrt{\frac{1}{2}}{}=\lambb;\quad\dmsqrt{\frac{1}{2}\pm i}{}=\pm(\sqrt{2}-1)\lamb;
\]

\[
\dmsqrt{-\frac{1}{2}}{}=\lamb;\quad\dmsqrt{-\frac{1}{2}\pm i}{}=\pm(\sqrt{2}-1)\lambb;
\]
here all the values are given on the same sheet of $\Cd_{[0]}$ defined
by the cut along $-i\R_{\geq0}.$ We now can now extend $\dmsqrt{\cdot}{}$,
to an $s$-holomorphic function on the whole $\Cgr(\Cd_{[0]})$ by
Lemma \ref{lem: dhol_to_shol}; the values at $\frac{i}{2}$ (on the
same sheet as above) is then given by $\dmsqrt{\frac{i}{2}}{}=\proj 1{\dmsqrt{\frac{1}{2}}{}+\dmsqrt{\frac{1}{2}+i}{}}=\re[\lambb-\lamb+\sqrt{2}\lamb]=1,$
and the value at $-\frac{i}{2}$ is handled similarly.
\end{proof}

\subsection{Proof of Corollary \ref{cor: SLE}.}

Along with the boundary conditions $\bcond$, we consider the corresponding
boundary conditions $\tilde{\bcond}$, as introduced before Lemma
\ref{lem: corr_to_obs_for_Thm_3}, and first discuss how to prove
Corollary \ref{cor: SLE} for those boundary conditions. This setup
is similar to that in \cite{IzyurovMconn}, apart from two differences: 
\begin{enumerate}
\item we do allow free boundary arcs;
\item we do not impose any boundary regularity assumptions near the marked
points $b_{2},\dots,b_{\hat{q}}$.
\end{enumerate}
Denote by $b_{1}^{\delta},\dots,b_{q}^{\delta}\in\pa\Od$ the points
separating different boundary conditions in $\Od$ and approximating
$b_{1},\dots,b_{q}$ (see the discussion before Corollary \ref{cor: SLE});
more precisely we take $b_{i}^{\delta}$ to be \emph{corners} outside
of the domain, incident to the dual vertices where the boundary condition
change. Let $\beta$ be a cross-cut in $\Omega$ separating $b_{1}$
from other marked points and other boundary components, let $\beta^{\delta}$
be a sequence of cross-cuts in $\Od$ approximating $\beta$. Let
$\gamma_{t}$ be the first $t$ steps of an interface growing from
a boundary point $b_{1}^{\delta}\in\partial\Od$, and let $T^{\delta}$
be first time that $\gamma_{t}$ hits $\beta^{\delta}$. Denote by
$b_{1}^{\delta}(t)$ one of the \emph{corners} incident to the tip
of $\gamma_{t}$ and to the last edge in $\gamma_{t}$, and lying
\emph{outside} of $\Od\setminus\gamma_{t}$. (This corner lives the
extension of $\Od\setminus\gamma_{t}$ by one layer of faces along
the boundary, placed on a Riemann surface if necessary, cf. the discussion
in the beginning of Section \ref{subsec: Ising_definitions}.) For
$t<T^{\delta}$ and $z$ a corner in $\Od$ separated from $b_{1}^{\delta}$
by $\beta^{\delta}$ , we introduce the following observable:
\begin{equation}
M_{t}^{\delta}(z):=\frac{\E_{\Od\setminus\gamma_{t}}(\psi_{b_{1}^{\delta}(t)}\psi_{z})}{\E_{\Od\setminus\gamma_{t}}(\psi_{b_{1}^{\delta}(t)}\psi_{b_{2}^{\delta}}\dots\psi_{b_{q}^{\delta}})}\cdot\frac{\E_{\Od}(\psi_{b_{1}^{\delta}}\psi_{b_{2}^{\delta}}\dots\psi_{b_{q}^{\delta}})}{\E_{\Od}(\psi_{b_{1}^{\delta}}\psi_{z})}.\label{eq: mart_obs}
\end{equation}
Note that the correlations are with \emph{standard boundary conditions,
}with the same set $\free$ as $\bcond^{\delta}$. We make the following
observations:
\begin{enumerate}
\item Up to $t=T^{\delta},$ $M_{t}^{\delta}(z)$ is a martingale with respect
to the filtration generated by $\gamma_{t}$ under the Ising probability
measure with boundary conditions $\tilde{\bcond}^{\delta}$;
\item As $\delta\to0$, we have $M_{t}^{\delta}(z)=M_{t}(z)+o(1)$, uniformly
over $z$ in compact subsets of $\Omega$ separate from $b_{1}$ by
$\beta$, over all $t$, and over all possible realizations of $\gamma_{t}$,
where 
\[
M_{t}(z)=\lim_{w_{i}\to(b_{i}^{\delta})^{\bullet},w\to(b_{1}^{\delta}(t))^{\bullet}}\frac{\ccor{\psi_{w}\psi_{z}}_{\Omega^{\delta}\setminus\gamma_{t}}}{\ccor{\psi_{w}\psi_{w_{2}}\dots\psi_{w_{q}}}_{\Omega^{\delta}\setminus\gamma_{t}}}\cdot\frac{\ccor{\psi_{w_{1}}\dots\psi_{w_{q}}}_{\Omega^{\delta}}}{\ccor{\psi_{w_{1}}\psi_{z}}_{\Omega^{\delta}}}.
\]
\end{enumerate}
To check the first statement, notice that we have an even number of
points $b_{q}^{\delta}$ on each boundary component. Therefore, under
the standard boundary conditions, 
\[
\frac{\E_{\Od\setminus\gamma_{t}}(\psi_{b_{1}^{\delta}(t)}\psi_{z})}{\E_{\Od\setminus\gamma_{t}}(\psi_{b_{1}^{\delta}(t)}\psi_{b_{2}^{\delta}}\dots\psi_{b_{q}^{\delta}})}=\eta\frac{\E_{\Od\setminus\gamma_{t}}((\psi_{z}\sigma\mu)\cdot\mu_{b_{1}^{\delta}(t)^{\bullet}}\mu_{b_{2}^{\delta,\bullet}}\dots\mu_{b_{q}^{\delta,\bullet}})}{\E_{\Od\setminus\gamma_{t}}(\mu_{b_{1}^{\delta}(t)^{\bullet}}\mu_{b_{2}^{\delta,\bullet}}\dots\mu_{b_{q}^{\delta,\bullet}})}=\E_{\Od\setminus\gamma_{t},\tilde{\bcond}^{\delta}}(\psi_{z}\sigma\mu),
\]
where $\eta=\eta_{b_{2}^{\delta}}^{-1}\dots\eta_{b_{q}^{\delta}}^{-1}$,
$\sigma$ is the spin on the boundary component adjacent to $b_{1}$
and $\mu:=\mu_{b_{2}^{\delta,\bullet}}\dots\mu_{b_{q}^{\delta,\bullet}}$.
Since the random variable $\psi_{z}\sigma\mu$ can be measured outside
of $\beta^{\delta}$, we have 
\[
\E_{\Od\setminus\gamma_{t},\tilde{\bcond}^{\delta}}(\psi_{z}\sigma\mu)=\E_{\Od,\tilde{\bcond}^{\delta}}(\psi_{z}\sigma\mu|\gamma_{t}),
\]
which is the Lévy martingale.

The second statement above can be proven by first replacing the points
$b_{1}^{\delta}(t),b_{1}^{\delta},\dots,b_{q}^{\delta}$ by bulk points,
invoking convergence of correlations in the bulk, and then repeatedly
applying Lemma \ref{lem: Clements_clever_lemma}, as in the end of
the proof of Theorem \ref{thm: Intro_3}.

Let $\varphi:\Omega\to\Lambda$ be a conformal map from $\Omega$
to a nice domain $\Lambda\subset\H$ such that the boundary component
containing $b_{1}$ is mapped to $\R$. We can pick a sequence $\varphi^{\delta}$
be conformal maps $\Omega^{\delta}\to\text{\ensuremath{\Lambda^{\delta}}}$to
nice domains converging to $\varphi$ on compact subsets of $\Omega$.
We notice that by conformal covariance, and Carathéodory continuity
of continuous correlations (which follows from the fact that these
correlations are limits of discrete correlations under Carathéodory
convergence), we have 
\begin{align}
M_{t}(z) & =\lim_{w_{i}\to\hat{b}_{i},w\to\varphi(b_{1}^{\delta}(t))}\frac{\ccor{\psi_{w}\psi_{\varphi(z)}}_{\Lambda\setminus\gamma_{t}}}{\ccor{\psi_{w}\psi_{w_{2}}\dots\psi_{w_{q}}}_{\Lambda\setminus\gamma_{t}}}\cdot\frac{\ccor{\psi_{w_{1}}\dots\psi_{w_{q}}}_{\Lambda}}{\ccor{\psi_{w_{1}}\psi_{\varphi(z)}}_{\Lambda}}+o(1)\nonumber \\
 & =\lim_{w\to\varphi(b_{1}^{\delta}(t))}\frac{\ccor{\psi_{w}\psi_{\varphi(z)}}_{\Lambda\setminus\gamma_{t}}}{\ccor{\psi_{w}\psi_{\hat{b}_{2}}\dots\psi_{\hat{b}_{q}}^{\flat}}_{\Lambda\setminus\gamma_{t}}}\cdot\frac{\ccor{\psi_{\hat{b}_{1}}\dots\psi_{\hat{b}_{q}}^{\flat}}_{\Lambda}}{\ccor{\psi_{\hat{b}_{1}}\psi_{\varphi(z)}}_{\Lambda}}+o(1)\nonumber \\
 & =\frac{g_{t}'(\varphi(z))^{\frac{1}{2}}\ccor{\psi_{W_{t}}\psi_{g_{t}(\varphi(z))}}_{\Lambda_{t}}}{\ccor{\psi_{W_{t}}\psi_{\hat{b}_{2}(t)}\dots\psi_{\hat{b}_{q}(t)}^{\flat}}_{\Lambda_{t}}}\cdot\frac{\ccor{\psi_{\hat{b}_{1}}\dots\psi_{\hat{b}_{q}}^{\flat}}_{\Lambda}}{\ccor{\psi_{\hat{b}_{1}}\psi_{\varphi(z)}}_{\Lambda}}+o(1);\label{eq: M_t_final}
\end{align}
note that the denominator does not vanish due to Lemma \ref{lem: bc_tilde_nonvanish}.
From this point on, the same proof as in \cite{IzyurovMconn} readily
applies and shows that the interface $\gamma_{t}$ converges to the
chordal Loewner chain driven by the solution of SDE 
\[
dW_{t}=\sqrt{3}dB_{t}+3\partial\log\ccor{\psi_{W_{t}}\psi_{\hat{b}_{2}(t)}\dots\psi_{\hat{b}_{q}(t)}^{\flat}}_{\Lambda_{t}}dt,
\]
where $\partial$ denotes the partial derivative in the first argument.
Indeed, the setup in \cite{IzyurovMconn} is exactly as above, the
only missing ingredients are the asymptotics $\ccor{\psi_{W_{t}}\psi_{w}}=(w-W_{t})^{-1}+o(1)$
as $w\to W_{t}$, the non-vanishing of $\ccor{\psi_{W_{t}}\psi_{\hat{b}_{2}(t)}\dots\psi_{\hat{b}_{q}(t)}^{\flat}}$
(which follows from Lemma \ref{lem: bc_tilde_nonvanish}), and the
asymptotic expansion $\partial_{W_{t}}^{(n)}\ccor{\psi_{W_{t}}\psi_{w}}=n!(w-W_{t})^{-(n+1)}+o(1)$
as $w\to W_{t}$, which is readily available from Example \ref{exa: diff_obs_by_a}.

To pass from the boundary conditions $\tilde{\bcond}$ to the boundary
conditions $\bcond,$ notice the Radon\textendash Nikodym derivative
of the latter measure on the interfaces with respect to the former
is given by the ratio of the partition functions $Z(\Omega^{\delta}\setminus\gamma_{t},\bcond^{\delta})/Z(\Od\setminus\gamma_{t},\tilde{\bcond}^{\delta}).$
By Theorem \ref{thm: Intro_3}, this ratio converges to a conformally
invariant limit given by the second factor in (\ref{eq: SLE_partition_function}).
From here, there are two ways to conclude: one is to invoke Girsanov's
transform, and another one is to observe that multiplying the martingale
observable (\ref{eq: mart_obs}) by the Radon\textendash Nikodym derivative
yields a martingale with respect to the other measure. Then, the same
proof as above applies, with the only difference that (\ref{eq: M_t_final})
gets multipled by a factor independent of $z$.

\bibliographystyle{amsplain}
\bibliography{Mixedcorr}

\end{document}